\documentclass[11pt,letterpaper]{article}
\usepackage[margin=1in]{geometry}
\usepackage[T1]{fontenc}
\usepackage[utf8]{inputenc}
\usepackage{lmodern}
\usepackage{amsmath,amssymb,amsthm,mathtools}
\usepackage{enumitem,booktabs,array,tabularx,longtable}
\usepackage{xcolor}
\usepackage[hidelinks]{hyperref}
\allowdisplaybreaks
\newif\ifstoc \stocfalse
\hypersetup{bookmarksdepth=3}

\newtheorem{theorem}{Theorem}[section]
\newtheorem{lemma}[theorem]{Lemma}
\newtheorem{proposition}[theorem]{Proposition}
\newtheorem{corollary}[theorem]{Corollary}
\newtheorem{claim}[theorem]{Claim}
\newtheorem{fact}[theorem]{Fact}

\theoremstyle{definition}
\newtheorem{definition}[theorem]{Definition}

\newtheorem{convention}[theorem]{Convention}
\newtheorem{example}[theorem]{Example}

\theoremstyle{remark}
\newtheorem{remark}[theorem]{Remark}

\newcommand{\F}{\mathbb{F}}
\newcommand{\RS}{\mathrm{RS}}
\newcommand{\ev}{\mathrm{ev}}
\newcommand{\spn}{\mathrm{span}}
\newcommand{\poly}{\mathrm{poly}}
\newcommand{\polylog}{\mathrm{polylog}}
\newcommand{\NTIME}{\mathrm{NTIME}}
\newcommand{\Sim}{\mathsf{Sim}}
\newcommand{\View}{\mathrm{View}}
\newcolumntype{Y}{>{\raggedright\arraybackslash}X}
\providecommand{\arxivid}[1]{arXiv:~\texttt{\detokenize{#1}}}

\title{Zero Knowledge for Free up to Linear Query Bounds:\\
Quasilinear-Size Constant-Query Zero-Knowledge PCPs}
\author{Inhyuk Park\\ Seoul National University \and Chenglin Fan\\ Seoul National University}
\date{}

\begin{document}
\maketitle

\begin{abstract}
We show that for all polynomials $T$ and $b$, every relation in $\NTIME(T)$ has a PCP of length
$(T+b)\cdot(\log(T+b))^{O(1)}$ that is perfect zero knowledge against every adaptive, possibly malicious verifier
making at most $b$ queries. The PCP is binary, with $O(1)$ non-adaptive queries, perfect completeness and soundness
error $\frac12$. This improves the length $\tilde O(T+b^2)$ of Zeilberger, O'Connor, Gur and Spooner (2026). For
$b\le T$ the length is $T\cdot\polylog T$, the form of the shortest known constant-query PCPs without zero
knowledge. Unless $\mathrm{NP}\subseteq\mathrm{BPP}$, a PCP for an NP-complete relation cannot be zero knowledge, with a
polynomial-time simulator, against a verifier that reads all of it. So the dependence on $b$ is optimal up to the
polylogarithmic factor: zero knowledge is free up to linear query bounds. Two-round zero-knowledge interactive oracle
proofs with polylogarithmically many queries achieved length $\tilde O(T+b)$ against $b$ queries (Ben-Sasson, Chiesa,
Gabizon and Virza; Ben-Sasson, Chiesa, Forbes, Gabizon, Riabzev and Spooner; 2016--17); we obtain it for PCPs. Our
universal, black-box, straight-line simulator is exact in expected polynomial time, or in strict polynomial time with
a failure symbol (Goldreich's formulation), and has negligible statistical error when truncated to strict polynomial
time.

Gur, O'Connor and Spooner (STOC~2025) proved a zero-knowledge PCP theorem and asked, as their Open Problem~1, for
quasilinear length. This paper answers the question
in its strong reading, for every polynomial query bound, with the simulators above in place of the strict
polynomial-time perfect simulator of their definition.
The proofs were developed with extensive AI assistance and have been checked by the authors.

The construction masks the Reed--Solomon proof of proximity of Ben-Sasson and Sudan at every node of its recursion,
with same-radix bivariate lifts of both the masked word and the mask. The proof rests on a locator lemma: at every
node, whatever $k$ of its proof symbols reveal about its input is determined by $O(k)$ input values, selected
monotonically. We prove it with the coprime-order argument of Bafna and Vyas (2026), which combines the divisibility
argument of Polishchuk and Spielman with the Corvaja--Zannier bound; our treatment of domain points is new. Constant
query complexity comes from Dinur's gap amplification, which preserves zero knowledge, as Zeilberger, O'Connor, Gur
and Spooner observed. The query count and the polylogarithmic exponent are absolute but very large.
\end{abstract}

\thispagestyle{empty}
\newpage
\ifstoc\setcounter{tocdepth}{1}\fi
\tableofcontents
\newpage

\begingroup
\section{Introduction}\label{sec:intro}

\subsection{Short zero-knowledge PCPs}\label{sec:intro-problem}

The PCP theorem \cite{AS98,ALMSS98} says that every NP statement has a proof that a verifier can check by
reading a constant number of its bits. Zero-knowledge PCPs (ZK-PCPs) also ask that the proof reveal
nothing beyond the truth of the statement. A verifier that reads the whole proof learns it, so zero knowledge
can hold only against verifiers that read a bounded number of symbols. Kilian, Petrank and Tardos
\cite{KPT97} identified the two natural regimes: polynomial-length proofs that are zero knowledge against a fixed
polynomial number $b$ of queries (the \emph{query bound}), and exponential-length proofs that are zero knowledge
against every efficient verifier. Their constructions have polylogarithmic or
adaptive honest verifiers, statistical rather than perfect zero knowledge, or a polynomial gap between the honest
query complexity and the query bound.
So do the later ones of Ishai and Weiss \cite{IW14}, Ishai, Weiss and Yang \cite{IWY16}, and Hazay,
Venkitasubramaniam and Weiss \cite{HVW22}.

Gur, O'Connor and Spooner removed these restrictions. Following their perfect zero-knowledge PCPs for $\#\mathrm P$
\cite{GOS24}, they proved a \emph{zero-knowledge PCP theorem} \cite{GOS25}: for every query bound $q^*\le2^{\poly(n)}$ (their
notation for $b$), every language in NP has PCPs of length $\poly(n,q^*)$ that a non-adaptive verifier checks by
reading $O(1)$ bits and that are perfect zero knowledge against every adaptive verifier reading at most
$q^*$ bits. Their open problems begin with
proof length:

\begin{quote}\small
``\emph{Open Problem~1 (Short ZK-PCPs).} Do there exist $O(1)$-query ZK-PCPs for NP with proof
length $\tilde O(n)$? We remark that our algebraic zero-knowledge techniques are based on Reed--Muller
arithmetisation and the sumcheck protocol, whereas (non-ZK) constructions of quasilinear length PCPs are based on
Reed--Solomon arithmetisation and combinatorial gap amplification, hence new ideas are necessary for such
strengthening of our theorem.'' \hfill\cite[\S1.2]{GOS25}
\end{quote}

\noindent O'Connor's thesis \cite[\S6.2]{OCo26} states the problem with the parameters made explicit: ``Do there
exist $O(1)$-query ZK-PCPs for NP over the boolean alphabet with proof length $\tilde O(n)$, with constant
soundness error and query bound $\Omega(n^c)$ for some constant $0<c<1$.'' What we call the \emph{strong reading} of the
question, matching the zero-knowledge PCP theorem it would strengthen, asks for this for \emph{every} polynomial
query bound: binary $O(1)$-query PCPs for $\NTIME(T)$ of length $\tilde O(T+b)$ that are zero knowledge against
$b$ queries for every polynomial $b$. Here $T$ plays the role of $n$: the shortest known constant-query PCPs without zero knowledge
have length $T\cdot\polylog T$ for $\NTIME(T)$ \cite{BS08,Din07}, and the additive $b$ is forced
(Proposition~\ref{prop:length-floor}).

Two recent works approach the question from different directions. Zeilberger, O'Connor, Gur and Spooner
\cite{ZOGS26} construct, for all polynomials $T,b$, $O(1)$-query PCPs for $\NTIME(T)$ of length
$\tilde O(T+b^2)$ that are perfect zero knowledge against $b$ queries. In their words, this gives zero knowledge
``up to $b(n)\approx\sqrt{T(n)}$ essentially `for free', in an asymptotic sense''. In O'Connor's formulation this
answers the question for $c\le\frac12$, and since that formulation asks for \emph{some} constant $c$, \cite{ZOGS26}
already meets it literally. What distinguishes our result is the range $\frac12<c\le1$ and beyond, that is, the
linear dependence on $b$. Bafna and Vyas \cite{BV26} construct \emph{private} PCPs from product expansion of tensor
Reed--Solomon codes on multiplicative grids of coprime orders. For $k=2$ unconditionally, and for larger $k$ under
their product-expansion conjecture, they obtain PCPs of quasilinear length $N$ over an alphabet of polynomial size,
with $O(kN^{1/k})$ queries. These PCPs are private against sets of positions of small measure under the verifier's
query distribution, not against a bounded number of queries \cite[Def.~8.2, Cor.~8.3]{BV26}. Bafna and Vyas
also give a perfect zero-knowledge PCP of length at most $N$ against $\Omega_k(N/\log n)$ queries, with no
efficient simulator \cite[Cor.~8.6]{BV26}; Section~\ref{sec:related} states both guarantees precisely.

\subsection{Our results}\label{sec:intro-results}

Our main theorem gives, for every polynomial query bound $b$, binary $O(1)$-query PCPs for $\NTIME(T)$ of
length $(T+b)\cdot\polylog(T+b)$ that are perfect zero knowledge against $b$ queries.
Section~\ref{sec:defs} defines the zero-knowledge notions, and Section~\ref{sec:main} proves the theorem as
Theorem~\ref{main:thm:main}. The query bound counts \emph{distinct} queries; running times count each call to
the verifier as one step.

\begin{theorem}[main theorem]\label{thm:intro-main}
For all time-constructible polynomials $T(n)\ge n$ and $b(n)\ge1$ (including $b\gg T$), every relation
$R\in\NTIME(T)$ has a PCP with the following properties.
\begin{enumerate}[label=(\roman*),nosep]
\item \emph{Proof and verifier.} The alphabet is $\{0,1\}$. The verifier runs in time $\poly(|x|)$, is non-adaptive,
makes $Q_{\rm fin}$ queries for an
absolute constant $Q_{\rm fin}$, and tosses $O(\log(T+b))$ fair coins.
\item \emph{Completeness and soundness.} Completeness is perfect and the soundness error is at most $\frac12$.
\item \emph{Length.} The proof has length at most $c_{\rm len}(T+b)\cdot(\ln(T+b))^{C}$ for an absolute constant $C$
and a constant $c_{\rm len}$ that may depend on $R$, $T$ and $b$ but not on the input.
\item \emph{Prover.} Given a witness, the honest prover runs in expected polynomial time and its output law is
exact (Las Vegas).
\item \emph{Zero knowledge.} There is one universal, black-box, straight-line simulator $\Sim$ such that for every
adaptive, possibly malicious and computationally unbounded verifier $V^*$ that makes at most $b$ distinct queries,
and every $(x,w)\in R$, the output $\Sim^{V^*}(x)$ is distributed identically to the view of $V^*$ on the honest
proof. $\Sim$ never sees $w$ and runs in expected time $p(|x|)\cdot(1+Q^*)+O(|\rho|)$, where $p$ is a polynomial,
$Q^*$ is the number of queries of $V^*$ counted with multiplicity and $|\rho|$ is the number of coins $V^*$ tosses
(the term $O(|\rho|)$ pays for tossing $V^*$'s coins, which are part of its view, and is added to every time bound in
this item; it is polynomial when $V^*$ tosses polynomially many coins). Run on fair coins, with a pool of a fixed
polynomial number of field elements drawn up front by bounded rejection sampling, $\Sim$ runs in strict polynomial
time and outputs a failure symbol $\bot$ with probability at most $\frac12$, being exact conditioned on not failing
(perfect zero knowledge in the form of \cite[Def.~4.3.1]{Gol01}); and for every
polynomial $k$, truncating $\Sim$ gives a strict polynomial-time simulator whose output is within statistical distance
$2^{-k(|x|)}$ of the view.
\end{enumerate}
\end{theorem}

\paragraph{Zero knowledge for free up to linear query bounds.} For $b\le T$ the length in
Theorem~\ref{thm:intro-main} is $T\cdot\polylog T$, the form of the shortest known constant-query PCPs without zero
knowledge \cite{BS08,Din07}. For $b>T$ it is $b\cdot\polylog b$, and the dependence on $b$ is optimal up to the
polylogarithmic factor. Indeed, by Proposition~\ref{prop:length-floor}, unless $\mathrm{NP}\subseteq\mathrm{BPP}$, a PCP for an
NP-complete relation that is zero knowledge against $b$ queries, for a polynomial $b$ and with a simulator polynomial
in the input length, has length greater than $b$ on infinitely many input lengths. The constants $Q_{\rm fin}$ and
$C$ are absolute and very large: with the constants of Dinur's analysis \cite{Din07}, as repaired in Section~\ref{sec:thmD}, the query count is at least $10^{13}$
and $C$ is galactic (Section~\ref{sec:intro-limits}). We make no claim about how $C$ compares with the exponent of
the non-zero-knowledge constructions \cite{BS08,Din07}, whose $\tilde O$ also hides the exponents of \cite{Din07}.

\begin{corollary}[Open Problem~1 of \cite{GOS25}]\label{cor:intro-op1}
For every polynomial $q^*(n)$, every relation $R$ such that $|w|\le T(|x|)$ and membership of $(x,w)$ is decided in
time $T(|x|)$ with $T(n)=n\cdot\polylog n$ (for instance Circuit-SAT and 3SAT, with $n$ the instance size) has a binary PCP with $O(1)$
non-adaptive queries, perfect completeness, soundness error $\frac12$ and length $(n+q^*)\cdot\polylog(n+q^*)$, which
is perfect zero knowledge against every adaptive verifier making at most $q^*$ queries, with the simulators of
Theorem~\ref{thm:intro-main}(v). In particular, for every constant $0<c\le1$ there are such PCPs of length
$\tilde O(n)$ with query bound $n^c$, which is O'Connor's formulation \cite{OCo26} with the constant $c=1$ allowed.
\end{corollary}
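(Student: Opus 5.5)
The plan is to derive Corollary~\ref{cor:intro-op1} directly from Theorem~\ref{thm:intro-main}, instantiated with $T(n)=n\cdot\polylog n$ and $b=q^*$.

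\textbf{Membership.} First, check that $R\in\NTIME(T)$ in the sense of the theorem. Witnesses have length at most $T(|x|)$, and membership is decided in time $T(|x|)$, so a nondeterministic machine can guess $w$ and run the decider in time $O(T(|x|))$. If the theorem requires $T$ to be a time-constructible polynomial with $T(n)\ge n$, pass to the polynomial upper bound $T'(n)=\max(T(n),n)$ rounded to a time-constructible function of the form $c\,n(\log n)^{k}$. I would remark that the proof of Theorem~\ref{main:thm:main} uses $T$ only through an upper bound on the running time, so any time-constructible $T'\ge T$ works, and the length bound is then stated with $T'$. For Circuit-SAT and 3SAT with $n$ the instance size, checking an assignment takes quasilinear time on a RAM or multitape machine. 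This is the standard fact used to get $T\cdot\polylog T$ PCPs for these problems \cite{BS08}.

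\textbf{Instantiation.} Applying Theorem~\ref{thm:intro-main} with $b=q^*$ gives items (i), (ii) and (v) verbatim: a binary alphabet, $O(1)$ non-adaptive queries, perfect completeness, soundness error $\frac12$, and perfect zero knowledge against adaptive verifiers making at most $q^*$ queries, with the stated simulators. The theorem's query bound counts distinct queries, which is at least as strong as bounding the total number of queries.

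\textbf{Length.} Item (iii) gives length $c_{\rm len}(T+q^*)(\ln(T+q^*))^{C}$. Since $T(n)=n\cdot\polylog n$, we have $T+q^*\le (n+q^*)\cdot\polylog n$. Because $q^*$ is a polynomial, $\ln(T+q^*)=O(\log(n+q^*))$. Combining these, the length is $(n+q^*)\cdot\polylog(n+q^*)$.

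\textbf{The ``in particular'' clause.} Take $q^*=n^c$ with $0<c\le1$. Then $n+q^*\le 2n$, so the length is $\tilde O(n)$ and the query bound is $\Omega(n^c)$, which is O'Connor's formulation including $c=1$.

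The only point needing any care is the time-constructibility and rounding of $T$ in the membership step; everything else is substitution into the theorem.
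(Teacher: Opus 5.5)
\textbf{Gap: the theorem does not apply to $T(n)=n\cdot\polylog n$.} The step you flag is where the argument fails, and the issue is polynomiality, not rounding. Theorem~\ref{thm:intro-main} (proved as Theorem~\ref{main:thm:main}) is stated only for time-constructible \emph{polynomials} $T(n)\ge n$, and $n\cdot\polylog n$ is not one. Your fix is inconsistent as written. A function $c\,n(\log n)^k$ is not a polynomial. A genuine polynomial upper bound such as $T'(n)=n^2$ makes item~(iii) give length $(n^2+q^*)\polylog(n^2+q^*)$, which is not $(n+q^*)\polylog(n+q^*)$. What is left is your assertion that the proof of Theorem~\ref{main:thm:main} works for any time-constructible $T'\ge T$. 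That is a claim about the internals of the proof, not something its statement gives you. It involves the quasilinear Cook--Levin step run in time $\poly(T')$, the thresholds of Step~3, and the constant in $N_{\rm pts}=O(T\log^2T)$, and you have not checked any of these. It also needs at least $T'\le\poly(n)$, since the construction runs in time $\poly(T')$. So the corollary does not follow by substitution into the theorem.

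\textbf{How the paper closes it.} The paper avoids reopening the proof by padding (Remark~\ref{rem:intro-quasilinear}). Padding turns quasilinear time into linear time, so the theorem applies as a black box.
\begin{itemize}
\item With $T(n)\le n\lceil\log_2n\rceil^{c'}$, put $N(n):=n\lceil\log_2n\rceil^{c'}+1$ and $R':=\{(x\#1^{N(|x|)-|x|-1},w):(x,w)\in R\}$.
\item Since $N$ is computable from $|x|$ in time $O(N)$, $R'\in\NTIME(T')$ with $T'(N)=c_0N$, a time-constructible polynomial.
\item Replace $q^*$ by $\hat q$, the sum of the absolute values of its monomials. This is a nondecreasing time-constructible polynomial with $q^*\le\hat q=O(q^*)$.
\item Apply the theorem to $R'$ with $b(N):=\hat q(N)\ge\hat q(n)\ge q^*(n)$.
\item The prover and verifier for $x$ pad $x$ and run those for $R'$, so completeness, soundness and zero knowledge transfer verbatim.
\item The length is $(N+\hat q(N))\polylog(N+\hat q(N))=(n+q^*)\polylog(n+q^*)$, since $N\le n\polylog n$ and $\hat q(N)\le\hat q(n)\polylog n$.
\end{itemize}

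\textbf{Two smaller points.} Even in your direct version, $b=q^*$ must be a time-constructible polynomial with $b\ge1$, which an arbitrary polynomial $q^*$ need not be; hence the replacement by $\hat q$. For non-integral $c<1$, $n^c$ is not a polynomial. For the last clause take $q^*(n)=n$, which dominates $n^c$ and still gives length $\tilde O(n)$.
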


\begin{remark}[quasilinear $T$, by padding]\label{rem:intro-quasilinear}
Theorem~\ref{thm:intro-main} is stated, as proved in Section~\ref{sec:main}, for polynomial $T$; Corollary~\ref{cor:intro-op1}
follows from it by padding. Let $T(n)\le n\lceil\log_2 n\rceil^{c'}$, put
$N(n):=n\lceil\log_2n\rceil^{c'}+1$ and $R':=\{(x\#1^{N(|x|)-|x|-1},w):(x,w)\in R\}$. Since $N$ is computable from $|x|$ in
time $O(N)$, $R'\in\NTIME(T')$ with $T'(N)=c_0N$ for a constant $c_0$, a time-constructible polynomial. Let $\hat q$ be a
nondecreasing time-constructible polynomial with $\hat q\ge q^*$ and $\hat q=O(q^*)$ (the sum of the absolute values of
the monomials of $q^*$) and apply Theorem~\ref{thm:intro-main} to $R'$ with $b(N):=\hat q(N)\ge q^*(n)$. The verifier and
prover for $x$ pad $x$ and run those for $R'$; completeness, soundness and zero knowledge transfer verbatim, because the
proof for $x$ is the proof for the padded instance and $x\in L(R)$ iff its padding lies in $L(R')$. The length is
$(N+\hat q(N))\polylog(N+\hat q(N))=(n+q^*(n))\polylog(n+q^*(n))$, since $N\le n\polylog n$ and
$\hat q(N)\le\hat q(n)\polylog n$.
\end{remark}

\noindent Theorem~\ref{thm:intro-main} answers Open Problem~1 of \cite{GOS25} in its strong reading, for every polynomial query bound, with the
simulators of item~(v). \cite{GOS25} ask for a perfect simulator in strict
polynomial time. Our simulator meets three standard relaxations: perfect in expected
polynomial time \cite[\S4.3]{Gol01}, perfect in strict polynomial time with a failure symbol \cite[Def.~4.3.1]{Gol01},
and statistical with negligible error in strict polynomial time. For our construction, no strict polynomial-time
simulator on fair coins is perfect without a failure symbol: the honest prover samples uniform elements of a field
of odd characteristic, so some view probabilities are not dyadic (Remark~\ref{main:rem:strict}).
Section~\ref{sec:defs-gos} compares the definitions.

\paragraph{The locator lemma.} Our new technical result, the locator lemma (Theorem~\ref{kl:thm:KL}), concerns
the linear algebra of masked Reed--Solomon lifts. Section~\ref{sec:overview-node} gives the setting. A \emph{node} of
our construction (in what we call the \emph{same-radix configuration}) consists of an evaluation domain $D$, a union
of at most eight cosets of a multiplicative group $\mu_m$ with $m=m_1m_2$, $\gcd(m_1,m_2)=1$, $m_1<m_2\le2m_1$, in a
field of characteristic $p>m$; an input $g\in\RS[D,m]$; a uniformly random mask $r\in\RS[D,m]$; and proof symbols.
These are $y=g+r$ and $r$ on $D$, and the radix-$m_1$ bivariate lifts of $y$ and $r$ on grids $A_y\times B$ and
$A_r\times B$, where $A_y,A_r$ are unions of seven cosets of $\mu_{m_1}$ each, disjoint from each other and from $D$,
and $B=\{x^{m_1}:x\in D\}$. The \emph{row}~$\beta$ holds the lift symbols at $(z,\beta)$ and the domain symbols at
the $x$ with $x^{m_1}=\beta$. The \emph{leak} of a set $V$ of proof symbols is the space of linear functionals of $g$
through which the distribution of the symbols in $V$ depends on $g$.

\begin{theorem}[locator lemma, informal; see Theorems~\ref{main:thm:KL} and~\ref{kl:thm:KL}]\label{thm:intro-key}
There are absolute constants $c_*=10^{-13}$ and $a\approx2.0\cdot10^{26}$ such that at every node the
following holds. For a set $V$ of proof symbols, let $W(V)\subseteq D$ consist of the domain points read by $V$, the
fibres $\{x\in D:x^{m_1}=\beta\}$ of the rows $\beta\in B$ that carry at least $m_1+1-\lfloor c_*m_1\rfloor$ points of
$V$, and, if $|V|\ge(\lfloor c_*m_1\rfloor+1)^2$, one whole coset of $\mu_m$ in $D$. Then $W$ is monotone and
computable in polynomial time, $|W(V)|\le a|V|$, and the leak of $V$ is spanned by evaluations of $g$ at points of
$W(V)$; that is, the distribution of the symbols in $V$ depends only on $g|_{W(V)}$.
\end{theorem}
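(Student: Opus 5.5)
We reduce the locator lemma to a statement in linear algebra and then handle the rows one at a time, using the coprime-order divisibility argument of \cite{BV26}. Each proof symbol in $V$ is an $\F$-linear function of the pair $(g,r)$, and $r$ is uniform on $\RS[D,m]$. The distribution of the symbols in $V$ therefore depends on $g$ only through the functionals $\lambda$ on $g$ that arise as $\lambda=\ell_g$ for some linear combination $\ell=\ell_g+\ell_r$ of the symbols in $V$ whose $r$-part $\ell_r$ vanishes on $\RS[D,m]$. The symbols are $y|_D$, $r|_D$, and the lifts $\hat y$ on $A_y\times B$ and $\hat r$ on $A_r\times B$. Since $y=g+r$, such an $\ell$ consists of a combination $\ell_y$ of $y$-symbols and a combination $\ell'_r$ of $r$-symbols with $\ell_y+\ell'_r\equiv0$ on the code. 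The leak is then the set of functionals $\ell_y$ restricted to $g$. Our goal is to show that every such $\ell_y$, viewed as a functional on $\RS[D,m]$, agrees with a combination of evaluations at points of $W(V)$.

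We would treat rows separately. A lift symbol at $(z,\beta)$ is the interpolant in $z$ of the $m_1$ values of the word on the fibre over $\beta$. Consequently, any functional supported on row $\beta$ is a functional of the word restricted to that fibre. We split each row into two cases. In a \emph{heavy} row, $V$ carries at least $m_1+1-\lfloor c_*m_1\rfloor$ points, and $W$ contains the whole fibre, so the row's contribution is trivially spanned by evaluations on $W$. In a \emph{light} row, $V$ carries fewer points. There the $y$-lift symbols on $A_y$ and the $r$-lift symbols on $A_r$ involve disjoint sets of $z$-coordinates, and together with the domain symbols they number fewer than $m_1+1-c_*m_1$. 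The cancellation $\ell_y+\ell'_r\equiv0$ restricted to the fibre must then come from global low-degree constraints rather than from a single row.

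This global step is the heart of the argument, and we expect it to be the main obstacle. Summing over the light rows, a nonzero cancelling combination yields a dual codeword of $\RS[D,m]$ supported on a set of size about $|V|$ spread over many rows with few points each, or one that is very unevenly distributed. Here we invoke the Polishchuk--Spielman divisibility argument in the coprime form of \cite{BV26}. Write the dual relation as a bivariate polynomial identity in $(x^{m_1},x^{m_2})$ coordinates using $\mu_m\cong\mu_{m_1}\times\mu_{m_2}$. A low-degree polynomial vanishing on most of a light row forces divisibility by the row's vanishing polynomial. The Corvaja--Zannier bound then limits how many cosets can support such a relation unless $|V|\gtrsim(c_*m_1)^2$. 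Above that threshold, $W$ contains a whole coset of $\mu_m$, on which $g$ is determined by its degree $<m$ values, so the entire leak is spanned trivially. Below the threshold, the remaining cancellations are forced to involve only the domain points already in $V$.

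The domain points of $D$ need separate treatment, since $D$ is a union of up to eight cosets and the lift grids are disjoint from $D$. To handle them, we would show that a domain point read in $V$ contributes only its own evaluation, which lies in $W$ by definition. The last step is bookkeeping. $W$ is monotone because each of its three defining clauses is monotone in $V$. $W$ is polynomial-time computable by inspection. For the size bound, each heavy row costs $m_1$ points but consumes at least about $(1-c_*)m_1$ points of $V$. The extra coset costs $m\le2m_1^2$, which is $O(|V|/c_*^2)$. With $c_*=10^{-13}$ this gives $a\approx 2\cdot 10^{26}$.
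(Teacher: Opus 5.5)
\textbf{There is a genuine gap.} Your linear-algebra reduction to $\Lambda(V)=\spn\Phi\cap\spn\Psi$, the large-view branch and the size bookkeeping are fine and match the paper. The step you call the heart of the argument, however, is not carried out, and the sketch points the wrong way in three places.

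\emph{Coordinates.} The identification $\mu_m\cong\mu_{m_1}\times\mu_{m_2}$ covers only points of $D$. A lift symbol at $(a,\beta)$ with $a\in A$, $A\cap D=\emptyset$, is a Cauchy combination of the whole fibre $F_\beta$, not a point of that grid, so there is no support on which a polynomial could vanish. The divisibility argument needs every symbol to be a point evaluation of one polynomial of bidegree $<(m_1,m_2)$. That is the radix-$m_1$ tensor form $f=\sum_tX^tf_t(X^{m_1})\mapsto\sum_tX^t\otimes f_t$: a lift read is the point $(a,\beta)\in A\times B$ and a domain read is $(x,x^{m_1})$. A dependency is then a matrix $M=M_1+M_2$ on $S\times B$, with $S=A\sqcup D'$, columns of $M_1$ in $\RS[S,m_1]^\perp$ and rows of $M_2$ in $\RS[B,m_2]^\perp$. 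The leak is spanned by evaluations on the fibres of the rows where $M_1\ne0$.

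\emph{The threshold $(\tau+1)^2$.} It does not come from Corvaja--Zannier. It is the dimension count that gives one nonzero $E$ of bidegree $\le(\tau,\tau)$ through all points of $V$, serving every dependency at once. Corvaja--Zannier enters only to show that $E$ divides the interpolating polynomial $P$. For a mixed irreducible factor of $E$ of bidegree $(a,b')$, the zero and pole count of $P$ on the curve would force at least $(m_1+1-2\tau)b'+(m_2+1-2\tau)a$ incidences with the grid and with the view's domain points. Corvaja--Zannier on each of the $14\cdot8$ coset pairs gives fewer.

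\emph{The heavy-row threshold $m_1+1-\tau$.} This needs a step you do not have. Write $P=EQ$ and subtract $Q$ from the $C_1$-component. Then $M_1(\cdot,\beta)=0$ whenever $E(X,\beta)\not\equiv0$, and $M_2$ is supported on the at most $\tau$ columns $s$ with $(X-s)\mid E$. A nonzero column of $M_1$ is an MDS dual word of weight $\ge m_1+1$, so its row carries at least $m_1+1-\tau$ points of $V$. Global leaks do pass through light rows, but they always land on fibres of heavy rows. Your conclusion that the remaining cancellations "involve only the domain points already in $V$" is not what happens.

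\textbf{Domain points.} These cannot be dismissed as you propose. Only shared reads (both $y(x)$ and $r(x)$ in $V$) split off as $\ev_x$, by the modular law. A one-sided read $y(x)$ is the functional of the point $(x,x^{m_1})$ and enters dependencies with lift reads like any other point, so its contribution to the leak is not just its own evaluation. In the paper each read abscissa becomes a column of $S$ carrying that single point. One normalises the decomposition so that the $C_2$-part vanishes on these columns, and then divides out $\prod_{x\in D'}(X-x)$. The zero count then loses only the view's own domain points on the curve, at most $\min(a+m_1b',8m_2a)$, instead of all of $D'\times B$. Taking all of $D$ as columns would make the number of column cosets grow with $m_2$ and destroy the incidence bound. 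Without this domain-column refinement the incidence comparison fails for views containing domain reads.
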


\noindent In words: each domain point read by $V$ costs one input value; a heavy row (at least
$m_1+1-\lfloor c_*m_1\rfloor$ points of $V$) costs its fibre of $m_1$ points, at most $\frac{16}{15}$ per point of
$V$ on it; and the whole coset, charged only when $|V|\ge(\lfloor c_*m_1\rfloor+1)^2$, costs at most $2/c_*^2$ per
symbol of $V$, which is almost all of $a$. No condition on the choice of the cosets is needed beyond the disjointness of $A_y$, $A_r$
and $D$: the lemma holds for every such choice with $m<p$ (Section~\ref{sec:overview-leak} shows why $A_y$ and $A_r$
must be disjoint). Its proof, sketched in Section~\ref{sec:overview-key} and given in Section~\ref{sec:keylemma}, runs the divisibility argument of Polishchuk and
Spielman \cite{PS94}, in the version of Bafna and Vyas \cite[\S4.4]{BV26} for multiplicative groups of coprime
orders, on grids that are unions of cosets. Like theirs, it goes through the Corvaja--Zannier bound, which we
use directly \cite[Thm.~2]{CZ13}. A new normalisation handles the domain points of the view. Over a whole recursion on a root
domain of size $s$, the lemma gives our masked Reed--Solomon proof of proximity a \emph{lazy sampler}, which samples
the proof exactly, query by query, and makes at most $(\ln s)^{\log_2a}$ input reads per query, amortised
(Section~\ref{sec:overview-tree}).

\paragraph{Zero-knowledge-preserving gap amplification.} For $O(1)$ queries we use Dinur's gap
amplification \cite{Din07}. That it is local, and hence preserves zero knowledge by the local-computability lemma
\cite[Lemma~3.2]{GOS25}, was observed and sketched by Zeilberger, O'Connor, Gur and Spooner
\cite[App.~A, Lemma~A.1]{ZOGS26}. Section~\ref{sec:thmD} states and proves it in full, repairing six points in the
printed proofs of \cite{Din07} (listed at its start), and claims no new technique. In
Theorem~\ref{thm:intro-D}, $Q_D$, $E_C$, $E_L$ are constants of Dinur's amplification
(Section~\ref{main:sec:D}) and $a$, $q$, $s$ describe the given PCP.

\begin{theorem}[zero-knowledge-preserving gap amplification, informal; see Theorem~\ref{thmD:thm}]\label{thm:intro-D}
Let $(P_0,V_0)$ be a non-adaptive PCP with alphabet $\{0,1\}^{a}$, $q$ queries, residual decision circuits of size at
most $s$, perfect completeness and rejection probability at least $\gamma$ on NO instances. Then there is a binary PCP
$(P',V')$ with $Q_D$ non-adaptive queries, perfect completeness, soundness error at most $\frac12$ and length
$|U|\cdot(s/\gamma)^{O(E_C)}$, where $|U|$ is the number of tests of $V_0$, such that $P'(x,w)=f_x(P_0(x,w))$ for a
deterministic, efficiently computable map $f_x$, depending only on $x$, each of whose output bits depends on at most
$L_D=q\cdot(s/\gamma)^{O(E_L)}$ symbols of its input. Consequently, if $(P_0,V_0)$ is perfect (or statistical) zero
knowledge against $b$ queries, then $(P',V')$ is perfect (or statistical) zero knowledge against $\lfloor b/L_D\rfloor$
queries, with a simulator that is straight-line if the original one is and, when $b$ is polynomial in $|x|$, inherits its running-time class.
\end{theorem}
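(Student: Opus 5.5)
The plan is to run Dinur's gap amplification on the given PCP and track locality at every stage. Then I would derive zero knowledge from locality through the local-computability lemma \cite[Lemma~3.2]{GOS25}.

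\textbf{Step 1: constraint graph.} First I turn $(P_0,V_0)$ into a constraint graph. The vertices are the proof symbols together with one vertex per test $u\in U$. The alphabet of the test vertex for $u$ is the set of accepting local views of $u$, so each such label is a deterministic function of the $q$ symbols $u$ reads. Edges join $u$ to each symbol it reads and enforce consistency. The unsatisfiability gap is then $\Omega(\gamma/q)$. I then binarize the alphabet and regularize with expander replacement and an added expander; neither step changes the labels, since they only copy them onto cloud vertices. Since the honest labelling is determined by the proof, each label depends on at most $q$ symbols of $P_0(x,w)$.

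\textbf{Step 2: amplification and composition.} Next I apply $O(\log(s/\gamma))$ rounds of Dinur's powering followed by alphabet reduction through an assignment tester (composition). In the honest prover, powering sets each new label to the tuple of old labels in a ball of radius $t=O(1)$. Its locality therefore multiplies by $d^{t}$, where $d$ is the degree. In the honest prover, the assignment tester outputs, for each constraint, a proof that is a deterministic function of the labels in that constraint. Its locality therefore multiplies by a constant (size$\times 2$). After $k=O(\log(s/\gamma))$ rounds the locality is $q\cdot 2^{O(k)}=q(s/\gamma)^{O(E_L)}$, and the size grows as $|U|(s/\gamma)^{O(E_C)}$. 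I finish with binary encoding and the standard reduction of a gap constant-degree graph to a $Q_D$-query binary PCP, repeated a constant number of times to get soundness $\frac12$. Both steps are local. Completeness is perfect at every step, because honest labels satisfy all constraints by construction. Soundness is Dinur's theorem, and I would quote it as repaired in Section~\ref{sec:thmD}.

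\textbf{Step 3: zero knowledge.} Given $V'^*$ making at most $\lfloor b/L_D\rfloor$ queries, the simulator runs $V'^*$ straight-line. On each queried bit $i$ it reads, through the original simulator $\Sim_0$ acting as a verifier for $P_0$, the at most $L_D$ symbols of $P_0(x,w)$ that bit $i$ depends on. It then answers with $f_x$ applied to those symbols. This uses at most $b$ queries to $P_0$, so it is exact (or statistically close) by the guarantee of $\Sim_0$. The running time adds $\poly(L_D,|x|)$ per query, which preserves expected, strict, or failure-symbol polynomial time when $b$ is polynomial.

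The main obstacle is ensuring that the honest prover for every composition step is a \emph{deterministic} local map. Standard assignment testers (e.g.\ Hadamard/long-code based ones) have deterministic honest proofs. Randomized provers would break the claim that $P'=f_x(P_0)$. I would also check carefully that the preprocessing and the final test-splitting do not introduce global dependencies. The second concern is bookkeeping the exponents $E_C$ and $E_L$ through Dinur's analysis, which needs the repairs listed in Section~\ref{sec:thmD}.
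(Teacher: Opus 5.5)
\textbf{Gap: Step~1 never reaches Dinur's fixed alphabet.} Your Steps~2 and~3 follow the paper's route: an honest round map built from cloud copies, powered labels and explicit tester labels, with locality multiplied by a constant per round, then zero knowledge by local computability. Step~1, however, does not produce a constraint graph on which Dinur's theorem with \emph{fixed} constants applies.

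The graph you build has test vertices whose alphabet is the set of accepting local views, of size up to $2^{aq}$, and symbol vertices over $\{0,1\}^a$. In the application $a=O(\log\ell)$ and $q=\polylog\ell$, so this alphabet is superpolynomial. But $C_{\mathcal D}$, $\alpha_{\mathcal D}$ and the per-round locality $c_{\rm it}$ are constants only for a constant alphabet. The rounds iterate with one set of constants only if every graph is on the same fixed alphabet, which for Dinur is $\Sigma_0=\{0,1\}^3$.

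``Binarize the alphabet'' does not close this gap, under any reading:
\begin{itemize}
\item Read literally, it is impossible. A constraint graph over $\{0,1\}$ is a 2-SAT instance, so unless $\mathrm{P}=\mathrm{NP}$ it cannot separate satisfiable YES instances from unsatisfiable NO instances of an NP-hard problem.
\item Composing at this stage with Dinur's long-code tester costs size doubly exponential in the $\Theta(aq)$ bits of a label.
\item A Hadamard-based tester costs size exponential in $s$.
\end{itemize}
Either tester destroys the length $|U|(s/\gamma)^{O(E_C)}$.

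\textbf{What the missing step needs, and how the paper does it.} You need a front end that:
\begin{itemize}
\item costs $O(s)$ per test;
\item lands on a constant alphabet;
\item keeps perfect completeness;
\item loses only a $\mathrm{poly}(s)$ factor in the gap;
\item has an honest labelling that is a deterministic, $x$-only, total, $q$-local function of $\pi_0$.
\end{itemize}
The paper uses the Tseitin encoding (Section~\ref{thmD:sec:T}). Each residual circuit $D_u$ becomes at most $4s+1$ three-literal clauses: the gate clauses plus a unit clause on the output gate, padded to $m^*$. The graph has proof-bit vertices for the bits of $B(x)$, gate vertices and clause vertices, all over $\Sigma_0$, with each clause vertex joined to its three variables. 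This gives $\mathrm{UNSAT}(G_T)\ge\hat\gamma/(3m^*)$.

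That loss is exactly where $s$ enters. It forces $k=\lceil\log_2(3m^*/\hat\gamma)\rceil$ rounds, hence $(s/\gamma)^{O(E_L)}$ in $L_D$ and $(s/\gamma)^{O(E_C)}$ in the length. In your plan the initial gap is $\Omega(\gamma/q)$, and $s$ enters only through the unspecified binarization. So your round count $O(\log(s/\gamma))$ is not derived from anything.

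\textbf{Smaller points.}
\begin{itemize}
\item \emph{Step~3 mixes up roles.} The object to build is a verifier $A$ for $P_0$. It runs $V'^*$ and answers each query $\alpha$ by reading the positions $S_x(\alpha)$ of $\pi_0$ and evaluating $f_x(\alpha)$. The new simulator runs $\Sim_0^{A}$ and reads off $V'^*$'s view.
\item \emph{Straight-line.} The new simulator is straight-line only if $\Sim_0$ is, since a rewinding $\Sim_0$ reruns $V'^*$. It is not straight-line unconditionally, as you assert.
\item \emph{Totality.} In the statistical case $f_x$ must be defined on every proof string, because $\Sim_0$ may return answers outside the honest support. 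So in your Step~1 a test vertex's label must also be defined when its local view rejects.
\item \emph{Deterministic tester labels.} You rightly flag this as the obstacle, but ``standard testers have deterministic honest proofs'' glosses over it. Dinur's composition picks the tester labels existentially. His folding as printed depends on the choice of representatives, and for $\psi\equiv0$ no folded string exists. The paper fixes a canonical folding with representatives $\{g\le\psi:g(a^*)=0\}$ and an always-rejecting gadget for unsatisfiable constraints. Only then does it write the honest labels explicitly.
\end{itemize}
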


\subsection{Comparison with previous work}\label{sec:intro-compare}

Table~\ref{tab:compare} compares PCPs with zero knowledge or privacy, including the recent preprints
\cite{ZOGS26,BV26}. Ours is not the first quasilinear zero-knowledge PCP, and the $\tilde O(T)$ term of its length
matches that of \cite{ZOGS26}.

\begin{table}[ht]
\centering\small
\begin{tabularx}{\linewidth}{>{\raggedright\arraybackslash}p{2.3cm}>{\raggedright\arraybackslash}p{1.6cm}>{\raggedright\arraybackslash}p{1.7cm}>{\raggedright\arraybackslash}X>{\raggedright\arraybackslash}p{4.8cm}}
\toprule
work & alphabet & queries & length & privacy / zero knowledge\\\midrule
\cite{BS08,Din07} & binary & $O(1)$ & $T\polylog T$ & none\\
\cite{GOS25}, Thm.~2 & binary & $O(1)$ & $\poly(n,b)$ & perfect ZK against $b$ queries, efficient simulator\\
\cite{ZOGS26} & binary & $O(1)$ & $\tilde O(T+b^2)$ & perfect ZK against $b$ queries, efficient simulator\\
\cite{BV26}, Cor.~8.6 & $\F_q$, $q=\poly(n)$ & $O(kN^{1/k})$ & $N=n\log n\,e^{O(k\log k)}$ & perfect ZK against $\Omega_k(N/\log n)$ queries; no efficient simulator; $k>2$ conditional\\
\textbf{this paper} & binary & $O(1)$ & $(T+b)\polylog(T+b)$ & perfect ZK against $b$ queries; simulator in expected polynomial time (strict: failure symbol or statistical ZK)\\
\bottomrule
\end{tabularx}
\caption{PCPs with zero knowledge or privacy for $\NTIME(T)$ (for \cite{GOS25} and \cite{BV26}: NP, with $n$ the
instance size); all are constant-query except \cite{BV26}. The query bound is $b$.}\label{tab:compare}
\end{table}

\noindent In the \emph{interactive} setting the linear dependence on $b$ is known: \cite{BCGV16} (for NP) and
\cite[Thms.~1.1 and~8.1]{BCFGRS17} (for $\NTIME(T)$) give two-round interactive oracle proofs of length
$\tilde O(T+b)$ with polylogarithmically many non-adaptive queries that are perfect zero knowledge against $b$
queries. Their verifier's challenge randomises the input of the Reed--Solomon proof of proximity.
\cite{ZOGS26} remove the interaction with length $\tilde O(T+b^2)$. We remove it and keep the linear dependence on~$b$:
the per-node masking and the locator lemma play the challenge's role inside a single proof. Compared with \cite{BV26}, our
PCP is binary with $O(1)$ queries and has an efficient simulator. Their private PCP \cite[Cor.~8.3]{BV26} (not the
one in the table) also hides unsatisfying assignments, which zero knowledge does not require, but only from sets of
positions of small verifier-induced measure.

Compared with \cite{GOS25}, whose verifier is non-adaptive like ours, our length is quasilinear and additive in the
query bound, but we treat polynomial query bounds only, with a Las Vegas honest prover
(Section~\ref{sec:defs-gos}(g)--(h)). Our simulator's running time differs from that of \cite{GOS25,ZOGS26} because
of the sampling model. The obstruction of Section~\ref{sec:defs-gos}(e) to a strict polynomial-time perfect simulator
on fair coins without a failure symbol applies to any construction whose honest proof is a non-constant function of
uniform elements of a field of odd order, such as ours: our proof of the locator lemma needs characteristic $p>m$,
which excludes characteristic two.
Given an oracle for uniform field elements, which the fair-coin model of \cite{GOS25} lacks, our simulator is exact
and runs in strict polynomial time (Definition~\ref{main:def:PPTF}; Step~4 of the proof of Theorem~\ref{main:thm:main}).

\subsection{Why new ideas were needed}\label{sec:intro-why}

The remark of \cite{GOS25} quoted above locates the difficulty: the known quasilinear constant-query PCPs
\cite{BS08,Din07} are built from Reed--Solomon proofs of proximity (RS PCPPs), which recursively test bivariate
lifts of a univariate polynomial along rows and columns. The zero-knowledge techniques of \cite{GOS24,GOS25} are
tied instead to Reed--Muller codes and the sumcheck protocol.

\paragraph{The simulation paradigm and the square root.} The approach of \cite{BCGV16,BCFGRS17}, which
\cite{ZOGS26} carry over to PCPs, masks the witness encoding by a random codeword of a $t$-wise independent code. It
reduces zero knowledge to the existence of a lazy sampler for the RS PCPP that, after $k$ queries, has made
at most $\bar\kappa k$ input reads. Zero knowledge then holds against about $t/\bar\kappa$ queries, so
taking $t\approx\bar\kappa b$ gives length about $T+\bar\kappa b$. For the PCPP of \cite{BS08} on a domain of size
$s$, a single row or column of a lift is a whole Reed--Solomon codeword of length about $\sqrt s$. The analysis of
\cite[Thm.~7.18]{ZOGS26} through locally simulatable sheaf codes gives a $\sqrt s$-local simulator,
that is, $\bar\kappa\approx\sqrt s$. Since the domain has more than $t\approx\sqrt s\,b$ points, $s$ is at least
about $b^2$, hence the~$b^2$. Call a simulation of a view \emph{point-charged} if it learns about the input only by
reading values $g(x)$ at domain points $x$.
For point-charged simulation the square root recurs in our setting under several natural departures from the
hypotheses of our locator lemma (Appendix~\ref{sec:negatives}): a single lift on the standard layout of
\cite{BS08}, shared columns, degrees not divisible by the radix, and unbounded or free column sets for coprime
radices. In each case, simulating some views costs $\Theta(\sqrt s)$ input reads per symbol read on balanced splits,
$m_1\approx m_2$. These are limits of point-charged simulation, not lower bounds for zero-knowledge PCPs.

\paragraph{What we do instead.} We keep the RS PCPP of \cite{BS08} but mask it \emph{at every node of its
recursion}: each node holds a fresh random mask $r$ of its input $g$, the masked word $y=g+r$, and lifts of both $y$
and $r$. The leak of a view is then the intersection $\spn\Phi\cap\spn\Psi$ of the spans of the functionals read on the
two sides (Lemma~\ref{main:lem:DL}; cf.\ \cite[Lemma~4.3, Claim~7.7]{BCFGRS17}), and the locator lemma bounds it by $O(1)$ input reads per symbol read. Composed
over the $O(\log\log s)$ levels of the recursion, this gives $\bar\kappa=\polylog s$ instead of $\sqrt s$, so the
length is $(T+b)\polylog(T+b)$ already for the polylogarithmic-query PCP. Zero-knowledge-preserving gap amplification
then reduces the query count to $O(1)$ at the cost of another polylogarithmic factor in the length. Our proof needs
two lifts of the \emph{same} radix: both then live in one tensor product of Reed--Solomon codes, so
every leak comes from a dual codeword of a product code, to which the divisibility argument of Polishchuk and Spielman \cite{PS94} applies.

\paragraph{Standard composition of \cite{BV26} certifies only about $N^{1-1/k}$ queries.} The zero-knowledge PCP of
\cite[Cor.~8.6]{BV26} is zero knowledge against $\Omega_k(N/\log n)$ queries, nearly linear in its length, but its
verifier reads $O(kN^{1/k})$ symbols over a large alphabet. Reducing this to $O(1)$ binary queries by proof
composition \cite{BGHSV06} attaches to each outer test an inner proof that depends on all $\Theta(N^{1/k})$ symbols
the test reads. The local-computability lemma \cite[Lemma~3.2]{GOS25} then divides the query bound by
$\Theta(N^{1/k})$, certifying zero knowledge only against about $N^{1-1/k}$ queries.

\subsection{Limitations and open questions}\label{sec:intro-limits}

\begin{itemize}[nosep]
\item \emph{Constants.} The query count $Q_{\rm fin}=3Q_D$ and the exponent $C$ are absolute and given by explicit
formulas in the constants of \cite{Din07}, whose values are known only through lower bounds (the multiplicative length
constant and the size thresholds are not computed). Here $Q_D=6\lceil1/\alpha_{\mathcal D}\rceil$, where
$\alpha_{\mathcal D}\in(0,1)$ is the gap ceiling of Dinur's amplification theorem \cite[Thm.~1.5]{Din07}; with the
analysis of \cite{Din07} as repaired in Section~\ref{sec:thmD}, $Q_D$ is at least $10^{13}$ (Section~\ref{main:sec:D}). The exponent $C$ is
dominated by a multiple of $E_C+E_L$, where $E_C$ and $E_L$ are the size and locality exponents of Dinur's
amplification (galactic, since $E_C$ is doubly exponential in Dinur's powering parameter). The constant
$a\approx2.0\cdot10^{26}$ of our locator lemma contributes only the additive term $\log_2a\approx87.4$
(Appendix~\ref{app:params}). The result is a feasibility result. The exponents of \cite{ZOGS26} have the same
origin; our gain is in the dependence on~$b$ (Section~\ref{prm:sec-read}).
\item \emph{Asymptotics.} The locator lemma is non-vacuous only at nodes with $m_1\ge1/c_*=10^{13}$. Below that size
the locator takes a whole coset, which is still within the ratio $a$, so the main theorem holds at every size, with
very large constants. For domain sizes up to about $2^{128}$ the recursion has depth at most~$2$
(Remark~\ref{main:rem:regime}). Every polylogarithmic factor is polylogarithmic only asymptotically.
\item \emph{Simulator.} The perfect simulator runs in expected polynomial time (or in strict polynomial time with a
failure symbol). With a negligible instead of a constant failure probability, our strict polynomial-time simulator has negligible statistical error;
for this construction, no strict polynomial-time simulator on fair coins is perfect without a failure symbol
(Section~\ref{sec:defs-gos}(e)).
\end{itemize}

\paragraph{Open questions.} (i)~Can the constants be made moderate? Most of $C$ comes from Dinur's gap amplification,
which might be replaced by a zero-knowledge-preserving constant-query step with small constants
(Section~\ref{prm:sec-read}). (ii)~Can a PCP with the parameters of Theorem~\ref{thm:intro-main} have a strict
polynomial-time perfect simulator on fair coins without a failure symbol, as the definitions of \cite{GOS25} require? Ours cannot
(Section~\ref{sec:defs-gos}(e)). (iii)~Does Theorem~\ref{thm:intro-main} extend to superpolynomial query bounds,
which the class of \cite[Def.~2.12]{GOS25} includes (Section~\ref{sec:defs-gos}(g))? Section~\ref{neg:place}
lists questions about point-charged simulation.

\subsection{Organisation}\label{sec:intro-org}

Section~\ref{sec:overview} outlines the proof. Section~\ref{sec:defs} gives definitions, compares them with those of
\cite{GOS25} and proves the length floor (Proposition~\ref{prop:length-floor}). Section~\ref{sec:main} describes
the construction and proves Theorem~\ref{thm:intro-main} from the locator lemma
(Theorem~\ref{main:thm:KL}) and zero-knowledge-preserving gap amplification (Theorem~\ref{main:thm:D}).
Sections~\ref{sec:keylemma} and~\ref{sec:thmD} prove them as Theorems~\ref{kl:thm:KL} and~\ref{thmD:thm}. Section~\ref{sec:related} discusses related work and lists every ingredient taken from prior work
(Table~\ref{tab:credits}). Appendix~\ref{app:params} lists the parameters and constants, and
Appendix~\ref{sec:negatives} shows that natural weakenings of the hypotheses of the locator lemma fail for
point-charged simulation. An AI methodology statement follows the appendices.

\section{Technical overview}\label{sec:overview}

The proof has three layers. An outer \emph{shell} reduces zero knowledge for $\NTIME(T)$ to a Reed--Solomon PCP of
proximity with a lazy sampler that makes few amortised input reads per query (proved in
Section~\ref{main:sec:shell}). The RS PCPP, masked at every node of its recursion, has such a sampler by the locator lemma
(Sections~\ref{main:sec:nodes}--\ref{main:sec:zk} and~\ref{sec:keylemma}). Zero-knowledge-preserving gap amplification
then reduces the query count to a constant and divides the zero-knowledge query bound only by its polylogarithmic
locality (Section~\ref{sec:thmD}). Only the
second layer is new in substance; the first is the method of \cite{BCGV16,BCFGRS17} in the PCP form of \cite{ZOGS26},
and the third makes rigorous an observation of \cite{ZOGS26}.

\subsection{The shell: from $\NTIME(T)$ to masked Reed--Solomon proximity}\label{sec:overview-shell}

The quasilinear Cook--Levin theorem \cite{Coo88} turns an instance $x$ of $R\in\NTIME(T)$ into a 3CNF formula $\Phi_x$
of size $O(T\log T)$, normalised so that every variable occurs at most four times \cite{Tov84}. We arithmetise it
over a prime field $\F_q$: $\Phi_x$ is satisfiable if and only if some table $A:\mu_h\to\F$ satisfies
$\mathcal Q\big(z,(A(\chi_\psi z))_{\psi\in\Psi}\big)=0$ for all $z\in\mu_h$. Here $\mathcal Q$ has degree at most $3$
in the table values, $|\Psi|=\eta=O(\log T)$, and the multipliers $\chi_\psi\in\mu_h$ implement the variable copies and
a Bene\v{s} routing network \cite{Ben64,Wak68} that brings every literal to its clause. We use a routine
multiplicative variant of the routing-network arithmetisation of \cite{BCGT13r} (additive setting), which goes
back to \cite{BFLS91,PS94}.
The prover randomises the interpolant $A_0$ of a satisfying table as $w_0=A_0+Z_H\cdot R$, where $Z_H=X^h-1$ vanishes on
$H=\mu_h$ and $R$ is uniformly random of degree $<t$, and writes $w_0$ on a domain $\hat H$ of size $s$ disjoint from
$\mu_h$. Then $w_0$ is $t$-wise independent on $\hat H$ \cite{BCGV16}, lies in $\RS[\hat H,h+t]$, and
$w_1:=\mathcal Q(X,(w_0(\chi_\psi X))_\psi)$ is divisible by $Z_H$. The proof is $w_0$, $w_1$ and RS PCPPs for the
proximity of $w_0$ and $w_1/Z_H$ to Reed--Solomon codes of the right degrees; the verifier also checks the identity
defining $w_1$ at a random point. Soundness is the usual distance argument.

Zero knowledge follows the paradigm of \cite{BCGV16,BCFGRS17}, in the PCP form of \cite[\S8]{ZOGS26} and with the
straight-line simulators of \cite{VW26}. Suppose the RS PCPP has a \emph{lazy sampler}: an algorithm that answers proof
queries one at a time, with exactly the law of the honest proof for its input, and that has made at most
$\bar\kappa k$ input reads after $k$ queries. A malicious verifier making $b_A$ queries then causes at most
$\eta\bar\kappa b_A$ reads of $w_0$; by $t$-wise independence, the simulator answers them with fresh uniform field
elements if $\eta\bar\kappa b_A\le t$. So the shell is perfect zero knowledge against
$b_A=\lfloor t/(\eta\bar\kappa)\rfloor$ queries, and taking $t\approx b\cdot\eta\bar\kappa\cdot\polylog(T+b)$ adds
$t\cdot\polylog(T+b)$ to the length (Section~\ref{sec:overview-params}). Everything therefore hinges on $\bar\kappa$,
the amortised number of input reads per query.

\subsection{The RS PCPP, masked at every node}\label{sec:overview-node}

For a factorisation $m=m_1m_2$, the RS PCPP of \cite{BS08} tests that $f:D\to\F$ is close to $\RS[D,m]$ by
writing on a grid a bivariate polynomial of bidegree $<(m_1,m_2)$ with value $f(x)$ at $(x,x^{m_1})$ and recursively testing that its rows
and columns are low-degree. We use the multiplicative version: $D$ is a union of eight cosets of $\mu_m$;
$m_1$ and $m_2$ are coprime with $m_1<m_2<2m_1$; and rows and columns are again unions of eight cosets of
$\mu_{m_1}$ and $\mu_{m_2}$. Each level takes roughly a square root of the degree, so the recursion has depth
$O(\log\log|D|)$. The deterministic canonical rule of Definition~\ref{main:def:can} fixes all cosets;
Section~\ref{main:sec:family} builds the domain family.
In this unmasked PCPP, charging one off-domain lift symbol takes at least $m_1=\Theta(\sqrt{|D|})$ input reads
(Section~\ref{sec:overview-leak}), so we \emph{mask at every node}. A node with input $g\in\RS[D,m]$ samples a
uniform mask $r\in\RS[D,m]$ and writes $y=g+r$ and $r$ on $D$, together with the radix-$m_1$ lifts
\[
Q_f(X,Y)=\sum_{i<m_1,\ j<m_2}\hat f_{i+m_1j}\,X^iY^j\qquad(f\in\{y,r\}),\qquad Q_f(x,x^{m_1})=f(x),
\]
of \emph{both} $y$ and $r$, on grids $A_y\times B$ and $A_r\times B$. Here $\hat f_k$ is the coefficient of $X^k$ in
$f$; $A_y,A_r$ are unions of seven cosets of $\mu_{m_1}$ each, disjoint from each other and from $D$; and
$B=\{x^{m_1}:x\in D\}$ is a union of eight cosets of $\mu_{m_2}$. This is an instance of the same-radix configuration
of Section~\ref{sec:intro-results}; from now on every \emph{node} has this form. The verifier checks $g=y-r$ at a
random point of $D$ and recurses into random rows and columns of both lifts: rows of degree $<m_1$ on $F_\beta\cup A_y$ or
$F_\beta\cup A_r$, where $F_\beta=\{x\in D:x^{m_1}=\beta\}$, and columns of degree $<m_2$ on $B$. Every
child input is a codeword for every value of the mask, and every child uses fresh, independent masks. Soundness at
each level comes from the bivariate test of Polishchuk and Spielman in the form of \cite[Lemma~6.13]{BS08}.

\subsection{Leaks and locators}\label{sec:overview-leak}

Fix a node and a set $V$ of its own proof symbols. The $y$-side symbols are linear functionals $\Phi$ of the
coefficient vector $\hat y=\hat g+\hat r$, and the $r$-side symbols are linear functionals $\Psi$ of $\hat r$.
Since $\hat r$ is uniform, the joint law of the answers depends on $\hat g$ only through the functionals in
\[
\Lambda(V)=\spn\Phi\cap\spn\Psi,
\]
the \emph{leak} of $V$ (Lemma~\ref{main:lem:DL}; cf.\ \cite[Lemma~4.3, Claim~7.7]{BCFGRS17}). A \emph{locator} is a map $V\mapsto W(V)\subseteq D$ with
$\Lambda(V)\subseteq\spn\{\ev_x:x\in W(V)\}$; then the law of the answers to $V$ depends only on $g|_{W(V)}$. We need
locators that are \emph{monotone} ($V\subseteq V'\Rightarrow W(V)\subseteq W(V')$, so that a lazy sampler can commit
to input reads as queries arrive), computable in polynomial time, and \emph{small}: $|W(V)|\le a|V|$ for an absolute
constant $a$. The locator lemma (Theorem~\ref{thm:intro-key}) provides such locators.

\paragraph{What leaks: small views.} The costs below are for point-charged simulation (Section~\ref{sec:intro-why}), not lower
bounds for zero-knowledge PCPs (Appendix~\ref{sec:negatives}). (1)~\emph{An unmasked lift
of $g$.} A symbol $Q_g(z,\beta)$ with $z\notin D$ combines the $m_1$ values of $g$ on the fibre $F_\beta$ with nonzero
coefficients (Lemma~\ref{neg:lem:CR}). This functional is not spanned by fewer than $m_1=\Theta(\sqrt{|D|})$
evaluations on $D$ (Proposition~\ref{neg:prop:SM}), so an exact sampler makes at least $m_1$ input reads for one
query (Lemma~\ref{neg:lem:samp}). (2)~\emph{A shared column $z\in(A_y\cap A_r)\setminus D$.} The symbols
$Q_y(z,\beta)$ and $Q_r(z,\beta)$ give their difference $Q_g(z,\beta)$, again at least $m_1$ input reads for two
queries (Proposition~\ref{neg:prop:SM}); this is why $A_y$ and $A_r$ are disjoint. (3)~\emph{Rows are cheap.} A leak
carried by one row $\beta$ lies in $\spn\{\ev_x:x\in F_\beta\}$ (Lemma~\ref{kl:lem:cheap}); a view with about $m_1$ points
on the row can pay for $F_\beta$. Leaks that no single row produces do occur
(Proposition~\ref{neg:prop:SATSR}); for views of fewer than $(\lfloor c_*m_1\rfloor+1)^2$ symbols the locator lemma still charges them to the domain points read and
the fibres of heavy rows (Theorem~\ref{thm:intro-key}).

\subsection{From node locators to a lazy sampler}\label{sec:overview-tree}

A child's input is a line of its parent's proof, so the input values a child's locator asks for are parent
symbols, which the parent's locator charges in turn. Write $\mathrm{Sym}(v)$ for the symbols of a node $v$ itself,
$\pi_w$ for the subtree proof of a child $w$, and $\iota_w$ for the map placing $w$'s input in $\mathrm{Sym}(v)$. Charging
bottom-up gives, for a set $S$ of positions in the subtree proof of $v$,
\[
W^T_v(S)=W_v\Big(S\cap\mathrm{Sym}(v)\ \cup\ \bigcup_{w\text{ child of }v}\iota_w\big(W^T_w(S\cap\pi_w)\big)\Big).
\]
Every child's input is a codeword for every value of $v$'s mask, and child masks are independent, so induction on
the height shows that $W^T_v(S)$ is a dependence set for $S$: two codeword inputs that agree on it induce the same law
on the answers to $S$ (Lemma~\ref{main:lem:C}). It is monotone, and each of the at most $\log_2\ln s$ levels multiplies its size by at most $a$, so $|W^T(S)|\le(\ln s)^{\log_2a}|S|$. The lazy sampler answers a
new query $\alpha$ by reading the input on the new points of $W=W^T(S\cup\{\alpha\})$, choosing a codeword $f^*$
matching the reads, sampling all masks uniformly subject to the earlier answers, and answering $\alpha$ from the
resulting proof. Exactness follows from the dependence property; after $k$ queries the sampler has made at most
$\bar\kappa k$ input reads, for \emph{every} oracle, with $\bar\kappa=(\ln s)^{\log_2a}$.

\subsection{Proof idea for the locator lemma}\label{sec:overview-key}

\emph{Plan.} We pass one curve $E=0$ of low bidegree through all points of a small view $V$; an incidence bound
shows that only the row and column factors of $E$ matter, so the leak lives on rows with nearly $m_1$ points of $V$. Write $n=m_1$ (not the input
length) and $n'=m_2$. The reshaping $f\mapsto Q_f$ identifies $\RS[D,m]$ with the tensor code
$\F[X]_{<n}\otimes\F[Y]_{<n'}$. So every proof symbol is an evaluation of a bivariate polynomial of bidegree
$<(n,n')$: a lift symbol at a grid point $(z,\beta)\in A\times B$, where $A:=A_y\sqcup A_r$, and a domain symbol $f(x)$ at the point
$(x,x^n)$, which lies on the row $\beta=x^n$. Let $S=A\sqcup D'$, where $D'$ is the set of domain abscissae read by $V$.
A linear dependency among the functionals of $V$ is a matrix $M\in C_1\otimes\F^B+\F^S\otimes C_2$ supported on $V$,
where $C_1=\RS[S,n]^\perp$ and $C_2=\RS[B,n']^\perp$; the leak is the $y$-side part of its component in
$C_1\otimes\F^B$ (Section~\ref{sec:keylemma}).

Let $\tau=\lfloor c_*n\rfloor$. If $|V|<(\tau+1)^2$, there is a nonzero polynomial $E$ of bidegree at most $(\tau,\tau)$
vanishing at all points of $V$; a single $E$ serves every dependency $M$, since $M$ is supported in $V$. The
divisibility argument of Polishchuk and Spielman \cite[Lemma~8, Thm.~9]{PS94}, in the version of
Bafna and Vyas \cite[\S4.4]{BV26} for multiplicative groups of coprime orders, writes the components of $M$ through
$E$. For every irreducible factor $F$ of $E$ that is neither a row nor a column factor, it then compares the zeros and
poles of a rational function on the curve $F=0$. If such an $F$ survives, the curve must pass through at least $(n+1-2\tau)e'+(n'+1-2\tau)e$ points
of the lift grid or of the view's domain points, where $(e,e')$ is the bidegree of $F$.
Line counts alone give no contradiction: for $e=e'$, the $8n'$ rows and the $14n$ lift columns allow up to $8n'e$ and
$14ne$ points, both more than this. On the
other hand, such a curve carries only $O(m^{1/3}(ee')^{2/3}+ee')$ grid points when $m<p$, by the bound of Corvaja and
Zannier \cite[Thm.~2]{CZ13} on rational points of curves on products of cosets, applied to each of the $14\cdot8$
pairs of a column coset and a row coset. For $c_*$ small enough, this contradicts the lower bound. So the $C_1\otimes\F^B$ component is carried by row factors of $E$.
Each nonzero row of it is a dual Reed--Solomon word of weight at least $n+1$; it differs from $M$ only on the at most
$\tau$ columns on which $E$ vanishes identically; so the row carries at least $n+1-\tau$ points of $V$. These are the heavy rows of
Theorem~\ref{thm:intro-key}. Their fibres $F_\beta$ have size $n$, so they cost at most $\frac{16}{15}|V|$; the domain
points read by $V$ cost at most $|V|$; and when $|V|\ge(\tau+1)^2$ one coset of $\mu_m$ (a basis of the dual of the
code) costs $m<2n^2<(2/c_*^2)|V|$. Hence $a=1+\frac{16}{15}+2/c_*^2\approx2.0\cdot10^{26}$.

One step is, as far as we know, new: the treatment of domain points. Since $D$ is a union of $8n'$ cosets of $\mu_n$,
taking all of $D$ as columns would make the number of column cosets grow with $n'$ and destroy the incidence bound. Only the read
abscissae $D'$ are columns, each with one point $(x,x^n)$. We normalise the decomposition on them and divide
out $\prod_{x\in D'}(X-x)$; this moves the loss in the zero count from all of $D'\times B$ to the view's own domain
points on the curve (at most $\min(e+ne',\,8n'e)$), which a case analysis absorbs. A single polynomial $E$
through $V$ yields one monotone set of rows for every dependency, with explicit thresholds; this is an
alternative to the known route from product expansion to a monotone locator through collective product expansion and closure
\cite[Lemma~5.13, Prop.~5.14]{DLV24}, \cite[Lemmas~17--18]{KP25}, \cite[Def.~4.7, Lemma~4.8]{BV26}.

\subsection{From polylogarithmic queries to a constant}\label{sec:overview-D}

In the notation of Theorem~\ref{thm:intro-D}, the shell with the masked RS PCPP is a non-adaptive PCP with
$q=\polylog(T+b)$ queries over an alphabet of
$O(\log(T+b))$ bits, residual circuits of size $s=\polylog(T+b)$, rejection probability at least
$\gamma=1/\polylog(T+b)$ on NO instances, and perfect zero knowledge against $b_A$ queries. Dinur's gap amplification \cite{Din07} turns it into a binary PCP with $O(1)$ queries
and constant soundness error. The amplification is \emph{local}: each round (preprocessing, powering by walks of length
$t_{\mathcal D}$ on a $d_{\mathcal D}$-regular expander, and composition with the long-code assignment tester)
computes every new label from at most $c_{\rm it}=2d_{\mathcal D}^{\lceil t_{\mathcal D}/2\rceil}$ old labels,
deterministically and without the witness. A Tseitin front end \cite{Tse68} keeps every round on Dinur's alphabet.
After $k=O(\log(s/\gamma))$ rounds every output bit depends on at most $L_D=q\,c_{\rm it}^k=\polylog(T+b)$ symbols of
the original proof, and the local-computability lemma \cite[Lemma~3.2]{GOS25} (in a cache-and-cap form that keeps the
simulator straight-line and exact) gives zero knowledge against $\lfloor b_A/L_D\rfloor$ queries
(Theorem~\ref{thm:intro-D}). Zeilberger, O'Connor, Gur and Spooner observed that the amplification is local and
hence preserves zero knowledge \cite[App.~A, Lemma~A.1]{ZOGS26}; Section~\ref{sec:thmD} proves this in full.

\subsection{Parameters}\label{sec:overview-params}

The division by $L_D$ is paid for in advance, at a polylogarithmic cost in length: choosing the independence
parameter $t=\lceil b\cdot\Xi\rceil$ with an explicit polylogarithmic $\Xi\ge2L_D\eta\bar\kappa$ gives $b_A\ge bL_D$,
and hence zero knowledge against $b$ queries after amplification. The length is
$O((T\log^2T+b\,\Xi)\polylog(T+b))$ before amplification and at most $c_{\rm len}(T+b)(\ln(T+b))^{C}$ after it, with
$C$ dominated by a multiple of $E_C+E_L$ (Section~\ref{sec:main}). The query count is $3Q_D$: drawing a test with fair coins at
most halves the rejection probability, so the final verifier runs three independent tests (Lemma~\ref{main:lem:CC}). The perfect simulator samples uniform field elements by rejection
sampling on fair coins, in expected polynomial time; truncating the rejection loops gives the strict polynomial-time
statistical simulator (Section~\ref{sec:defs-gos}). Appendix~\ref{app:params} lists all constants.

\section{Definitions}\label{sec:defs}

This section fixes the notions used in Theorem~\ref{thm:intro-main} (Sections~\ref{sec:defs-notation}--\ref{sec:defs-zk}),
compares them with those of \cite{GOS25} (Section~\ref{sec:defs-gos}), proves the length floor behind the optimality
claim (Proposition~\ref{prop:length-floor}) and recalls the local-computability lemma used in Section~\ref{sec:thmD}.

\subsection{Notation}\label{sec:defs-notation}

For a prime power $r$, $\F_r$ is the field with $r$ elements; for $m\mid r-1$, $\mu_m\le\F_r^\times$ is the
subgroup of order $m$, and a $\mu_m$-coset is a set $z\mu_m$. For a finite $D\subseteq\F$ and $u\le|D|$,
$\RS[D,u]=\{p|_D:p\in\F[X],\ \deg p<u\}$. For $x\in\F$, $\ev_x$ is the evaluation functional on coefficient vectors.
Logarithms are to base~2 unless written $\ln$. $\NTIME(T)$ refers to nondeterministic multitape Turing machines; a
function $T$ is time-constructible if $T(n)$ can be computed from $1^n$ in time $O(T(n))$. A relation $R$ defines the
language $L(R)=\{x:\exists w,\ (x,w)\in R\}$, and $R\in\NTIME(T)$ means that membership of $(x,w)$ is decided in time
$T(|x|)$ with $|w|\le T(|x|)$.

Some symbols are local to a section. The letter $n$ is the radix $m_1$ in Sections~\ref{sec:overview-key}
and~\ref{sec:keylemma}, and the order of the root group in Section~\ref{sec:main}, where the input length is written
$n_{\rm in}$. In Sections~\ref{main:sec:nodes}--\ref{main:sec:zk} and~\ref{sec:keylemma}, $b$ is a row of a node, not
the query bound. In Section~\ref{sec:thmD}, $a$, $q$ and $s$ are the alphabet width, query count and circuit size of a
PCP. Appendix~\ref{sec:negatives} fixes its own notation.

\subsection{PCPs}\label{sec:defs-pcp}

\begin{definition}[PCP]\label{def:pcp}
A \emph{PCP} for a relation $R$ over an alphabet $\Sigma$ is a pair $(P,V)$ of a randomised prover $P$, which on
$(x,w)\in R$ outputs a proof $\pi\in\Sigma^{\ell(|x|)}$, and a probabilistic verifier $V$ that runs in time
$\poly(|x|)$ with oracle access to a proof. It has
\emph{perfect completeness} if $V^\pi(x)$ accepts with probability~$1$ for every $(x,w)\in R$ and every $\pi$ in the
support of $P(x,w)$, and \emph{soundness error} $\varepsilon$ if $\Pr[V^{\pi^*}(x)\text{ accepts}]\le\varepsilon$ for every
$x\notin L(R)$ and every $\pi^*\in\Sigma^{\ell(|x|)}$. Its \emph{length} is $\ell$; its \emph{query complexity} is the
maximum number of proof symbols $V$ reads; its \emph{randomness} is the number of fair coins $V$ tosses. $V$ is
\emph{non-adaptive} if it chooses all its query positions from $x$ and its coins before reading the proof. We use the
equivalent description of a non-adaptive verifier by a multiset $U$ of \emph{tests}, each a tuple of positions with a
Boolean decision circuit, of which $V$ draws one uniformly; a final step converts the multiset into fair coins
(Section~\ref{sec:main}).
\end{definition}

\noindent All PCPs in this paper have non-adaptive honest verifiers. For $\NTIME(T)$ we go through the quasilinear
Cook--Levin reduction; the resulting formula's satisfying assignment (the tableau) is a deterministic function of
$(x,w)$, so a simulator that works for every tableau works for $R$.

\subsection{Views and zero knowledge}\label{sec:defs-zk}

\begin{definition}[verifier strategies and views]\label{def:view}
A \emph{verifier strategy} $V^*$ is an oracle algorithm with its own coins $\rho$; it may be adaptive, malicious and
computationally unbounded. It is \emph{$b$-bounded} if on every execution, with every oracle, it makes at most $b$
\emph{distinct} queries. For $(x,w)\in R$, $\View_{V^*,P}(x,w)$ consists of $\rho$ and the sequence of query--answer
pairs of $V^*$ on input $x$ with oracle $\pi\leftarrow P(x,w)$ (\cite[Def.~2.9]{GOS25}).
\end{definition}

\begin{definition}[zero knowledge]\label{def:zk}
A PCP $(P,V)$ for $R$ is \emph{black-box perfect zero knowledge with query bound $b$} if there is an algorithm
$\Sim$ with oracle access to $V^*$ such that for every $b$-bounded $V^*$ and every $(x,w)\in R$,
$\Sim^{V^*}(x)$ is distributed identically to $\View_{V^*,P}(x,w)$ (\cite[Def.~2.11]{GOS25}). It is
\emph{$\varepsilon$-statistical} if the statistical distance is at most $\varepsilon(|x|)$. $\Sim$ is \emph{universal}
(one simulator for all $V^*$) and never sees $w$. It is \emph{straight-line} if it runs $V^*$ at most once, answering each
query when it is made, and outputs the view of that run or a failure symbol (a fixed value that is not a view, such
as $\bot$). Running time counts each call to $V^*$ as one step. $\Sim$ also tosses $V^*$'s coins $\rho$,
which are part of its output, at one step per coin, and no other step depends on $|\rho|$. Every simulator running
time in this paper, expected or strict, bounds the other steps and carries these $|\rho|$ steps additively, where
$|\rho|$ is the number of coins $V^*$ tosses on the run; this term is polynomial whenever $V^*$ tosses polynomially many
coins, in particular when $V^*$ runs in polynomial time.
\end{definition}

\begin{convention}[query counting]\label{conv:counting}
Zero knowledge is required against verifiers bounded in \emph{distinct} queries; this is the stronger requirement,
since a verifier that repeats queries learns nothing new. Simulator \emph{running time} is measured as a function of
$|x|$ and of the number $Q^*$ of queries of $V^*$ counted \emph{with multiplicity}: a simulator ``runs in time
$p(|x|)\cdot(1+Q^*)$''. Equivalently, one may give the simulator a memoised interface to $V^*$ in which a call returns
the next query not asked before; then $Q^*\le b$. Against verifiers that make at most $b$ queries counted with
multiplicity, as in \cite{GOS25}, either form means polynomial time.
\end{convention}

\noindent We list the simulator notions that appear in the literature; (Z1) implies each of the others, which are
incomparable in general:
\begin{enumerate}[label=(Z\arabic*),nosep]
\item perfect, with one universal simulator in strict probabilistic polynomial time (\cite[Def.~2.11]{GOS25});
\item perfect, with a simulator for each $V^*$ in strict probabilistic polynomial time (\cite[Def.~2.10]{GOS25});
\item perfect, with one universal simulator in \emph{expected} polynomial time, the standard relaxation for perfect
zero knowledge (see \cite[\S4.3]{Gol01}, which also discusses simulators that may output a failure symbol);
\item statistical with negligible error, with one universal simulator in strict probabilistic polynomial time (the
notion of \cite{KPT97,IW14}).
\end{enumerate}
Theorem~\ref{thm:intro-main} gives (Z3) and (Z4), with a straight-line black-box simulator in both cases. In
Section~\ref{sec:main} these are the classes (EP) and ($\varepsilon$-SP) with negligible $\varepsilon$
(Definition~\ref{main:def:ZK}).

\subsection{Comparison with the definitions of \cite{GOS25}}\label{sec:defs-gos}

We list every point at which Theorem~\ref{thm:intro-main} could differ from the definitions of
\cite[Defs.~2.1 and~2.9--2.12]{GOS25}.
\begin{enumerate}[label=(\alph*),nosep]
\item \emph{Views} (Def.~2.9): identical (coins and answers; we also record the queries, which are determined by them).
\item \emph{Adaptivity and maliciousness} (Defs.~2.10, 2.11): identical; neither definition restricts the verifier's
running time.
\item \emph{Universality and black-box access} (Def.~2.11): identical; our simulator is moreover straight-line.
\item \emph{Query bound}: \cite{GOS25} speak of ``$q^*$-query-bounded'' verifiers; we bound distinct queries, which
covers every verifier bounded with multiplicity (Convention~\ref{conv:counting}).
\item \emph{Running time}: \cite{GOS25} ask for a PPT simulator. Like theirs \cite[Constructions~3.3--3.4]{GOS25},
ours writes out $V^*$'s coins, so its time carries the additive $O(|\rho|)$ of Definition~\ref{def:zk}, which is
polynomial whenever $V^*$ tosses polynomially many coins. Up to that term, ours is (Z3), expected polynomial time with perfect
simulation; for this construction also strict polynomial time with a failure symbol $\bot$ of probability at
most $\frac12$ and exact output conditioned on not failing (\cite[Def.~4.3.1]{Gol01}; the field elements the simulator
needs are drawn as an up-front pool of a fixed polynomial size, each by a bounded number of rejection trials, and the
simulator then runs on the pool and its other fair coins, item (F-b) of Section~\ref{sec:main}; sampling on
demand would
\emph{not} give this, because the number of draws can depend on the adaptive verifier's path and conditioning on
success would reweight paths);
or (Z4), strict polynomial time with statistical error $2^{-k}$ for any polynomial $k$. It is \emph{neither} (Z1)
\emph{nor} (Z2): for our construction no strict-PPT simulator on fair coins is perfect without a failure symbol as
soon as some bit of the final proof is non-constant
under the honest prover, and such a bit exists (Remark~\ref{main:rem:strict}). Indeed, the honest prover's randomness
consists of independent uniform elements of the prime field $\F_q$, $q$ odd (Section~\ref{main:sec:conv}, item (F-e)), and the proof
is a deterministic function of them, so for a deterministic $V^*$ (one that tosses no coins) every view probability
lies in $q^{-K}\mathbb Z$ for some $K$; a non-constant bit
has probability $j/q^{K}$ with $0<j<q^{K}$, which is not of the form $i/2^{r}$ because $q$ is odd; but a strict-PPT
simulator with $r$ fair coins outputs every view with probability in $2^{-r}\mathbb Z$. The deterministic verifier
that reads that one bit separates them. (Our proof of the locator lemma
needs a field whose characteristic exceeds the group orders it uses, so the construction cannot be moved to
characteristic two as it stands.)
\item \emph{Alphabet, queries, randomness} (Def.~2.12): binary alphabet, $O(1)$ non-adaptive queries, randomness
$O(\log(T+b))$, matching the class $\mathrm{PZK\text{-}PCP}_{b,\{0,1\}}[\log(T+b),1]$ of \cite[Def.~2.12]{GOS25} up to
the running-time point~(e).
\item \emph{Range of query bounds}: \cite[Def.~2.12]{GOS25} intersects over all $q^*\le2^{\poly(n)}$; we treat polynomial
$b$ only, as does the summary of their result that precedes Open Problem~1 (``for any polynomial $q^*$'',
\cite[\S1.2]{GOS25}).
\item \emph{Honest prover}: \cite{GOS25} call a PCP efficient if the prover runs in polynomial time given a witness; ours
runs in expected polynomial time with an exact output law (Las Vegas), because it samples uniform field elements by
rejection sampling on fair coins.
\item \emph{Verifier} (Def.~2.1): \cite{GOS25} ask for a PPT verifier; ours runs in strict polynomial time
(Theorem~\ref{main:thm:main}(1), by Theorem~\ref{thmD:thm}\ref{thmD:it:a} and Lemma~\ref{main:lem:CC}).
\end{enumerate}

\subsection{The query bound cannot exceed the length}\label{sec:defs-floor}

The following observation is folklore (compare \cite{GO94} as used in \cite[App.~A]{BCFGRS17}, and \cite{IMSX15}); we
include its short proof because it shows that the dependence on $b$ in Theorem~\ref{thm:intro-main} is optimal up to the
polylogarithmic factor.

\begin{proposition}[length floor]\label{prop:length-floor}
Let $(P,V)$ be a PCP for a relation $R$ with a polynomial-time verifier, perfect completeness and soundness error at most
$\frac12$, of length $\ell(n)$ computable from $1^n$ in time $\poly(n)$, which is black-box perfect zero knowledge with
query bound $b(n)\le\poly(n)$ via a simulator running in expected time $p(|x|)\cdot(1+Q^*)$ for a polynomial $p$
(Convention~\ref{conv:counting}; or $\frac1{10}$-statistical zero knowledge via a simulator running in strict time
$p(|x|)\cdot(1+Q^*)$). If
$\ell(n)\le b(n)$ for all but finitely many $n$, then $L(R)\in\mathrm{BPP}$. In particular, if $R$ is NP-complete and
$\mathrm{NP}\not\subseteq\mathrm{BPP}$, then $\ell(n)>b(n)$ for infinitely many $n$.
\end{proposition}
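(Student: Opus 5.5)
The plan is a BPP decider for $L(R)$ built from the simulator and a verifier that reads the whole proof. When $\ell(n)\le b(n)$, consider the deterministic verifier strategy $V^*_{\rm all}$ that on input $x$ computes $\ell=\ell(|x|)$ from $1^{|x|}$ and queries positions $1,\dots,\ell$ in order. It makes $\ell\le b$ distinct queries, so it is $b$-bounded, and it runs in polynomial time. For $(x,w)\in R$, its view is the full proof $\pi\leftarrow P(x,w)$. By zero knowledge, $\Sim^{V^*_{\rm all}}(x)$ outputs a string distributed exactly as an honest proof (or within statistical distance $\frac1{10}$ of one). Here $Q^*=\ell\le b$, so the simulator's running time is $p(|x|)(1+b)$ plus nothing for coins, which is polynomial (expected time in the perfect case).

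The decider $M$ on input $x$ does the following:
\begin{enumerate}
\item Run $\Sim^{V^*_{\rm all}}(x)$ for $10\,p(|x|)(1+b(|x|))$ steps. In the statistical case, run it to completion, which is already a strict polynomial bound.
\item If the simulator has not halted, or its output is not a string in $\Sigma^{\ell}$ (for example $\bot$ or malformed), reject.
\item Otherwise let $\tilde\pi$ be the output, run the honest verifier $V^{\tilde\pi}(x)$, and accept iff it accepts.
\end{enumerate}
Black-box access is harmless, since $M$ implements $V^*_{\rm all}$ itself.

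On YES instances, in the perfect case, Markov's inequality shows the simulator halts within the budget with probability at least $\frac{9}{10}$. Conditioned on halting, its output lies in the support of $P(x,w)$: the full output is distributed as the view, and any output event of positive probability on the halting branch has positive probability overall. So by perfect completeness $V$ accepts, and $\Pr[M\text{ accepts}]\ge\frac9{10}$. In the statistical case, the output is $\frac1{10}$-close to an honest proof, which $V$ always accepts, so $\Pr[M\text{ accepts}]\ge\frac9{10}$.

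On NO instances, every string $\tilde\pi\in\Sigma^\ell$ is rejected by $V$ with probability at least $\frac12$, and $M$ also rejects in the other cases. Hence $\Pr[M\text{ accepts}]\le\frac12$, regardless of what the simulator does. Note that zero knowledge gives no guarantee off $L(R)$, and we need none here.

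This gives a gap of $\frac9{10}$ versus $\frac12$, which standard repetition amplifies. Hence $L(R)\in\mathrm{BPP}$, with the finitely many exceptional lengths handled by a lookup table. The contrapositive gives the NP-complete statement: if $\ell(n)\le b(n)$ held for all but finitely many $n$, then $\mathrm{NP}\subseteq\mathrm{BPP}$.

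The only delicate step is justifying truncation in the expected-time perfect case. Conditioning on halting early could in principle bias the output toward proofs the honest prover never produces. The support argument above settles this: we only need the output to be accepted, not correctly distributed, and perfect completeness holds for every $\pi$ in the support. I would also check that $V^*_{\rm all}$ remains $b$-bounded with every oracle, which is immediate because its queries do not depend on the answers.
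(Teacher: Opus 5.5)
Your proposal is correct and follows the paper's proof essentially step for step. The shared steps are the deterministic whole-proof verifier $V^*_{\rm all}$, Markov truncation of the expected-time simulator with the support argument for YES instances, running $V$ on the simulated proof, and soundness covering every string on NO instances. One small repair: set the cut-off at $10\,p(|x|)(1+\ell(|x|))$, as the paper does. Only $\ell$ is assumed computable from $1^n$ in polynomial time, while $b$ is merely assumed polynomially bounded, so a budget stated in terms of $b(|x|)$ need not be computable; a fixed polynomial upper bound on $b$ would also work.
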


\begin{proof}
Let $n_0$ be such that $\ell(n)\le b(n)$ for all $n\ge n_0$; inputs shorter than $n_0$ are finitely many and are
decided by table look-up. For $|x|\ge n_0$ let $V^*_{\rm all}$ be the deterministic verifier that queries positions
$1,\dots,\ell(|x|)$ in order; it is $b$-bounded because $\ell\le b$. Its view is the whole proof. It makes
$Q^*=\ell(|x|)$ queries and tosses no coins, so $t(|x|):=p(|x|)\cdot(1+\ell(|x|))$ bounds the expected running time of
$\Sim^{V^*_{\rm all}}(x)$ (in the statistical case its strict running time; skip the cut-off below); $t$ is a
polynomial, as $\ell\le b\le\poly(n)$, and is computable in polynomial time, as $\ell$ is. Consider the
algorithm $\mathcal A$ that on input $x$ runs $\Sim^{V^*_{\rm all}}(x)$ for at most $10\,t(|x|)$ steps, rejects if it has not
halted or if its output is not a string $\tilde\pi\in\Sigma^{\ell(|x|)}$, and otherwise runs $V^{\tilde\pi}(x)$ with fresh
coins and outputs its decision. $\mathcal A$ runs in polynomial time: it computes $\ell(|x|)$ and $t(|x|)$, answers
each call of $\Sim$ to $V^*_{\rm all}$ with the next position, and $V$ runs in polynomial time.

If $x\in L(R)$, fix $w$ with $(x,w)\in R$. The output of the untruncated simulator is distributed as $P(x,w)$, every proof
in whose support $V$ accepts with probability~$1$; by Markov's inequality the cut-off fires with probability at most
$\frac1{10}$; so $\mathcal A$ accepts with probability at least $\frac9{10}$ (at least $\frac9{10}$ in the statistical case
too, since the accepting event has probability~$1$ under $P(x,w)$ and the simulated proof is $\frac1{10}$-close). If
$x\notin L(R)$, then for \emph{every} string $\tilde\pi$ the verifier accepts with probability at most $\frac12$, so
$\mathcal A$ accepts with probability at most $\frac12$, whatever the simulator outputs on $x$. Standard amplification gives
$L(R)\in\mathrm{BPP}$. The last sentence is the contrapositive for NP-complete $R$.
\end{proof}

\noindent Theorem~\ref{thm:intro-main} has $\ell\le(T+b)\polylog(T+b)$; by Proposition~\ref{prop:length-floor} no
construction meeting its hypotheses (in particular with polynomial $b$, as here) can have $\ell\le b$ on all large
inputs for NP-complete $R$ (unless $\mathrm{NP}\subseteq\mathrm{BPP}$),
so for $b\ge T$ the length is optimal up to the polylogarithmic factor.

\subsection{Local computability}\label{sec:defs-local}

The step from polylogarithmic to constant queries uses the local-computability lemma of \cite{GOS25}: if every symbol of
a proof $\pi$ is computed, by a fixed efficient deterministic procedure that depends on $x$ but not on the witness, from
at most $L$ symbols of a proof $\pi_0$ that is zero knowledge with query bound $b$, then $\pi$ is zero knowledge with
query bound $\lfloor b/L\rfloor$ \cite[Def.~3.1, Lemma~3.2]{GOS25}. Section~\ref{sec:thmD} uses it in a cache-and-cap form
that keeps the simulator exact and straight-line and states the two hypotheses that the lemma uses implicitly (the map
does not depend on the witness, and it is defined on every proof string).

\endgroup

\begingroup
\providecommand{\F}{\mathbb{F}}
\providecommand{\RS}{\mathrm{RS}}
\providecommand{\ev}{\mathrm{ev}}
\providecommand{\Sym}{\mathrm{Sym}}
\providecommand{\spn}{\mathrm{span}}
\providecommand{\PPTF}{\mathrm{PPT}^{\F}}
\providecommand{\Hh}{\hat H}
\providecommand{\Wa}{\mathcal W}
\providecommand{\rcan}{\mathfrak r^{\mathrm{can}}}
\providecommand{\NN}{\mathsf N}
\providecommand{\LL}{\mathsf L}
\providecommand{\aKL}{a_*}

\section{The construction and the main theorem}\label{sec:main}\label{main:sec}

This section describes the construction and proves our main theorem. Prior work is credited at each use.

\subsection{Roadmap and credits}\label{main:sec:statement}

Theorem~\ref{main:thm:main} at the end of this section is the precise form of Theorem~\ref{thm:intro-main}. Its proof
rests on two results proved later: the locator lemma at recursion nodes (Theorem~\ref{main:thm:KL}, proved in
Section~\ref{sec:keylemma}) and zero-knowledge-preserving gap amplification (Theorem~\ref{main:thm:D}, proved in
Section~\ref{sec:thmD}). Section~\ref{main:sec:conv} fixes conventions and simulator classes,
Section~\ref{main:sec:family} builds the domain family, Section~\ref{main:sec:nodes} defines nodes and the recursion,
and Section~\ref{main:sec:KL} shows that the locator lemma applies at every recursion node.
Section~\ref{main:sec:pcpp} defines the masked RS PCPP and proves its soundness, and Section~\ref{main:sec:zk} gives
its lazy sampler (Theorem~\ref{main:thm:Z}). Section~\ref{main:sec:shell} builds the shell (Theorem~\ref{main:thm:A}),
Section~\ref{main:sec:D} checks the hypotheses of amplification for it, and Section~\ref{main:sec:main} assembles the
main theorem. Sections~\ref{sec:intro-results} and~\ref{sec:defs-gos} discuss the reading of Open Problem~1 of
\cite{GOS25} that Theorem~\ref{main:thm:main} meets and the zero-knowledge notions it achieves.

\paragraph{Prior work used in the construction.} Every ingredient that is not ours is listed, with the place where we
use it, in Table~\ref{tab:credits}; we credit it again at each use. \emph{What is new here:} the overall assembly, the
domain family of Section~\ref{main:sec:family}, the canonical choice of cosets (Definition~\ref{main:def:can}), the
same-radix configuration in the recursion, Lemmas~\ref{main:lem:N} and~\ref{main:lem:V} (every node and every view met
in the recursion satisfies the hypotheses of the locator lemma), and the explicit parameters. The new ingredient of
the locator lemma is its treatment of domain points (the domain-column refinement of Section~\ref{kl:sec:div}).

\subsection{Conventions}\label{main:sec:conv}

\paragraph{Fields, cosets, codes.} Throughout, $q$ is an odd prime and $\F=\F_q$; put $\lambda_q=\lceil\log_2q\rceil$.
In this section $n$ denotes the order of the root group (Lemma~\ref{main:lem:SM2}); the input length is written
$n_{\rm in}$.
For $m\mid q-1$ let $\mu_m=\{z\in\F^\times:z^m=1\}$. A \emph{$\mu_m$-coset} is a set $z\mu_m$. For finite
$D\subseteq\F^\times$ and $u\le|D|$, $\RS[D,u]=\{p|_D:p\in\F[X],\ \deg p<u\}$; $\delta(\cdot,\cdot)$ is
relative Hamming distance and $\Delta(\cdot,\cdot)$ the same for pairs. For $\hat g\in\F^m$ and $x\in\F$,
$\ev_x(\hat g)=\sum_j\hat g_jx^j$.

\begin{lemma}[subgroups]\label{main:lem:mu}
If $m\mid q-1$ then $|\mu_m|=m$; for $z,z'\in\F^\times$, $z\mu_m=z'\mu_m$ iff $z^m=z'^m$; and if
$\gcd(m_1,m_2)=1$ then $\mu_{m_1}\cap\mu_{m_2}=\{1\}$ and $\mu_{m_1}\mu_{m_2}=\mu_{m_1m_2}$.
\end{lemma}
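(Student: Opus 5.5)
The three claims reduce to the fact that $\F^\times=\F_q^\times$ is cyclic of order $q-1$. I would fix a generator $\omega$ of $\F^\times$ and write every nonzero element as a power of it. Everything then becomes arithmetic in $\mathbb Z/(q-1)$.

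\emph{Order of $\mu_m$.} If $m\mid q-1$, the equation $\omega^{k m}=1$ holds if and only if $(q-1)\mid km$, that is, if and only if $\frac{q-1}{m}\mid k$. So $\mu_m=\{\omega^{j(q-1)/m}:0\le j<m\}$, which has exactly $m$ elements. Alternatively, $X^m-1$ has at most $m$ roots in $\F$, and the $m$ listed elements are distinct roots.

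\emph{Coset criterion.} The $m$-th power map $z\mapsto z^m$ is a homomorphism $\F^\times\to\F^\times$ with kernel $\mu_m$. Two elements lie in the same coset of the kernel iff they have the same image, i.e.
\[
z\mu_m=z'\mu_m\iff z/z'\in\mu_m\iff (z/z')^m=1\iff z^m=z'^m .
\]

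\emph{Coprime orders.} Let $\gcd(m_1,m_2)=1$ with $m_1,m_2\mid q-1$; then $m_1m_2\mid q-1$ as well, so $\mu_{m_1m_2}$ is defined and has order $m_1m_2$.
\begin{itemize}
\item Take $z\in\mu_{m_1}\cap\mu_{m_2}$. By Lagrange, its order divides both $m_1$ and $m_2$, hence divides $1$, so $z=1$.
\item Every element $z_1z_2$ with $z_i\in\mu_{m_i}$ satisfies $(z_1z_2)^{m_1m_2}=1$. Hence $\mu_{m_1}\mu_{m_2}\subseteq\mu_{m_1m_2}$.
\item The multiplication map $\mu_{m_1}\times\mu_{m_2}\to\mu_{m_1m_2}$ is injective: $z_1z_2=z_1'z_2'$ gives $z_1/z_1'=z_2'/z_2\in\mu_{m_1}\cap\mu_{m_2}=\{1\}$.
\item So the image has $m_1m_2=|\mu_{m_1m_2}|$ elements and equals $\mu_{m_1m_2}$.
\end{itemize}

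None of these steps is an obstacle. The only point to state explicitly is that coprimality gives $m_1m_2\mid q-1$, which is implicit in the lemma's hypotheses when $\mu_{m_1m_2}$ is written.
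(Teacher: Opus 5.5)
Your proof is correct and follows the paper's argument: the paper's one-line proof also reduces everything to the cyclicity of $\F^\times$ of order $q-1$ and notes that the remaining claims hold in every cyclic group. You spell out those cyclic-group facts in full, and your remark that $m_1m_2\mid q-1$ is needed for $\mu_{m_1m_2}$ to be defined (it follows from coprimality) is a correct reading of the implicit hypothesis.
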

\begin{proof}
$\F^\times$ is cyclic of order $q-1$; the rest holds in every cyclic group.
\end{proof}

\paragraph{PCPs, coin multisets, binary reading.} A non-adaptive verifier is a finite multiset $U$ of
\emph{tests}; a test is a tuple of proof positions with a \emph{residual predicate}, a Boolean circuit on
the binary encodings of the symbols read; the verifier draws a test uniformly from $U$. \emph{Queries} is
the maximal tuple length; \emph{perfect completeness} means every proof in the support of the honest
prover passes every test; \emph{gap} $\gamma$ means every proof of a NO instance fails at least a
$\gamma$ fraction of $U$. A PCPP for a code $C$ (input $f$, proof $\pi$) has \emph{strong soundness with
function $\rho$} if $\Pr[\text{reject}]\ge\rho(\delta(f,C))$ for every $f,\pi$. A symbol of $\F$ is stored
as $\lambda_q$ bits; $\mathrm{dec}$ maps a $\lambda_q$-bit block to its value if it is $<q$ and to $0$ otherwise. All
soundness statements are made for $\mathrm{dec}(\pi)$, so they hold for every binary string.

\begin{lemma}[coin convention]\label{main:lem:CC}
Let a PCP have coin multiset $U$, perfect completeness and soundness error $\le\frac12$. Let
$R_U=\lceil\log_2|U|\rceil+1$; let $V'$ draw $j\in\{0,1\}^{R_U}$ uniformly and run test $u_{(j\bmod|U|)}$
(a fixed enumeration), and let $V''$ run $V'$ three times independently. Then $V''$ uses $3R_U$ fair
coins and three times the queries, has perfect completeness and soundness error $\le(\frac34)^3=\frac{27}{64}$;
the honest prover and the proof distribution are unchanged, so every zero-knowledge property is unchanged.
\end{lemma}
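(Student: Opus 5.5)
The plan is to analyse one run of $V'$ by comparing the uniform distribution on $\{0,1\}^{R_U}$, pushed forward by $j\mapsto j\bmod|U|$, with the uniform distribution on $U$. Then we take the product of three independent runs. Throughout, set $N=|U|$ and $M=2^{R_U}$. By the choice $R_U=\lceil\log_2N\rceil+1$ we have $2N\le M<4N$.

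\emph{Weights of the tests.} Each index $i\in\{0,\dots,N-1\}$ has either $\lfloor M/N\rfloor$ or $\lceil M/N\rceil$ preimages under $j\mapsto j\bmod N$. So test $u_i$ is drawn with probability $w_i/M$, where $w_i\le\lceil M/N\rceil$.

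\emph{Completeness.} Every test that $V'$ can run is some $u_i\in U$. Every proof in the support of the honest prover passes every test, so $V'$ accepts with probability $1$. Hence $V''$, which accepts iff all three runs accept, also has perfect completeness.

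\emph{Soundness of one run.} Fix a NO instance and a proof. Let $F\subseteq\{0,\dots,N-1\}$ be the set of indices of failing tests; soundness error $\le\frac12$ gives $|F|\ge N/2$. The rejection probability of $V'$ is
\[
\sum_{i\in F}\frac{w_i}{M}\ \ge\ |F|\cdot\frac{\lfloor M/N\rfloor}{M}.
\]
We want to show this is at least $\frac14$, i.e.\ that the acceptance probability is at most $\frac34$. Write $M=kN+r$ with $k=\lfloor M/N\rfloor\in\{2,3\}$ and $0\le r<N$. The worst case puts the $r$ heavier indices outside $F$. That worst case gives rejection probability at least $k|F|/M\ge kN/(2(kN+r))>k/(2(k+1))\ge\frac13\ge\frac14$. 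This step needs the most care, since it is the only place where the non-uniformity of the modular reduction enters. The key fact making it work is that $M\ge2N$ forces every index to have at least two preimages, and two preimages out of fewer than $(k+1)N$ coins suffices.

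\emph{Three runs and the rest.} The three runs use independent coins. Against a fixed proof string, acceptance of $V''$ is the intersection of three independent events, each of probability $\le\frac34$, so the soundness error is $\le(\frac34)^3=\frac{27}{64}$. The coin count is $3R_U$ and the query count is three times that of one test, both by construction. Finally, the prover is not modified, so the proof distribution is literally the same. Zero knowledge is a property of the views of arbitrary $b$-bounded $V^*$ on the honest proof distribution, and it does not refer to the honest verifier. Therefore every zero-knowledge statement transfers verbatim.
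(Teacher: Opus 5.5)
Your proof is correct and follows the paper's argument. Both use $2^{R_U}\ge2|U|$ to lower-bound the number of preimages of each test under $j\mapsto j\bmod|U|$, and then use independence over the three runs. The paper phrases the first step as $\lfloor 2^{R_U}/|U|\rfloor\ge\frac12\,2^{R_U}/|U|$, so every event keeps at least half its uniform probability. Your finer count $k|F|/M>k/(2(k+1))\ge\frac13$ gives a slightly stronger single-run rejection bound than the $\frac14$ the statement needs.
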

\begin{proof}
With $N_U=2^{R_U}\ge2|U|$ each test has at least $\lfloor N_U/|U|\rfloor\ge\frac12N_U/|U|$ preimages, so
every event has $V'$-probability at least half its uniform probability. A NO-instance proof is rejected by
$V'$ with probability $\ge\frac14$, so $V''$ accepts it with probability $\le(\frac34)^3$. Zero knowledge
concerns arbitrary verifiers and the proof distribution only.
\end{proof}

\paragraph{Zero knowledge.}
\begin{definition}[black-box perfect zero knowledge; after {\cite[Defs.~2.9--2.11]{GOS25}}]\label{main:def:ZK}
Let $(P,V)$ be a PCP for a relation $R$ and $b=b(|x|)$. A verifier strategy $V^*$ is an oracle algorithm
with its own coins $\rho$, possibly adaptive, malicious and computationally unbounded; it is
\emph{$b$-bounded} if on every execution and against every oracle it makes at most $b$ \emph{distinct}
queries. $\mathrm{View}_{V^*,P}(x,w)$ is $(\rho,\text{the sequence of query--answer pairs})$ when $V^*$ runs
against $\pi\leftarrow P(x,w)$. The PCP is \emph{black-box perfect zero knowledge with query bound $b$} if
there is one algorithm $\mathrm{Sim}$ (universal) with oracle access to $V^*$ such that
$\mathrm{Sim}^{V^*}(x)\equiv\mathrm{View}_{V^*,P}(x,w)$ for every $b$-bounded $V^*$ and every
$(x,w)\in R$; it is \emph{$\varepsilon$-statistical} if the two laws are within statistical distance
$\varepsilon$. $\mathrm{Sim}$ is \emph{straight-line} if it runs $V^*$ at most once, answering each query when it
is made, and outputs the view of that run or a failure symbol (a fixed value that is not a view, such as $\bot$).
\emph{Cost} (memoised interface): a call to $V^*$ returns $V^*$'s next query that has not been asked
before (repeated queries are answered from the record without a call) and costs one step; equivalently,
if repeated queries are passed to $\mathrm{Sim}$, its running time other than the $|\rho|$ coin-tossing steps below
is multiplied by $(1+Q^*)$ with $Q^*$
the number of queries counted with multiplicity, which is how \cite[Defs.~2.10--2.11]{GOS25} count
(``$q^*$-query-bounded''). Tossing $V^*$'s coins $\rho$, which are part of the output, costs one step per coin, and
every time bound below, including those of the classes, bounds the other steps and carries these $|\rho|$ steps
additively (Definition~\ref{def:zk}).
\emph{Simulator classes:} (EP) exact, expected time polynomial in $(|x|,b)$;
(GB) strict polynomial time, outputs $\bot$ with probability at most $\frac12$, exact conditioned on not
$\bot$ (\cite[Def.~4.3.1]{Gol01}, as restated in \cite{GT14}); (SP) strict polynomial time and exact;
($\varepsilon$-SP) strict polynomial time and $\varepsilon$-statistical.
\end{definition}
\noindent Definition~\ref{main:def:ZK} is Definition~\ref{def:zk} with a cost convention and a list of simulator
classes. In the terms of Section~\ref{sec:defs-zk}, (SP) with a universal simulator is notion (Z1), (EP) is (Z3),
($\varepsilon$-SP) with negligible $\varepsilon$ is (Z4), and (GB) is the failure-symbol form of
\cite[Def.~4.3.1]{Gol01}.

\begin{definition}[the uniform-field-oracle model]\label{main:def:PPTF}
An algorithm is $\PPTF$ if it runs in strict probabilistic polynomial time with fair coins and, in addition, has an
oracle that returns independent uniform elements of $\F$, and the number of oracle calls on every run is at most a
fixed polynomial $N_{\rm samp}$. For a simulator, time is counted as in Definition~\ref{main:def:ZK}, and its fair
coins include the coins $\rho$ it tosses for $V^*$.
\end{definition}
\begin{itemize}[nosep]
\item[(F-a)] Implementing each oracle call by rejection sampling on $\lambda_q$ fresh fair bits (accept if the value is
$<q$; success probability $q/2^{\lambda_q}>\frac12$) gives the same output law and expected polynomial time (EP).
\item[(F-b)] Drawing a \emph{pool} of $N_{\rm samp}$ elements up front, each by at most $k$ rejection trials,
outputting $\bot$ if some trial budget is exhausted, and then running on the pool and the other fair coins, gives a
strict polynomial-time algorithm that outputs $\bot$ with probability at most $N_{\rm samp}2^{-k}$ and is
exact conditioned on not $\bot$: conditioned on success the pool is i.i.d.\ uniform and independent of every
other coin. With $k=\lceil\log_2N_{\rm samp}\rceil+1$ this is (GB). (Sampling on demand would be wrong: the
number of draws could depend on the path, and conditioning on success would reweight paths.)
\item[(F-c)] For every polynomially bounded $k=k(|x|)$, the algorithm of (F-b), with $\bot$ kept as its failure symbol, is ($\varepsilon$-SP) with
$\varepsilon=N_{\rm samp}2^{-k}$: its output law is within statistical distance $\Pr[\bot]\le\varepsilon$ of the exact
one.
\item[(F-d)] Exactness statements are distributional and do not depend on running time; sequential
composition of polynomially many $\PPTF$ algorithms is $\PPTF$. This is what lets the composition theorems of
\cite{VW26} (stated for PPT simulators) apply here (Lemma~\ref{main:lem:2}).
\item[(F-e)] The honest provers of this section, the RS PCPP prover of Theorem~\ref{main:thm:M} and the shell prover
$P_A$ of Theorem~\ref{main:thm:A}, toss no fair coins: their random choices (the node masks $\hat r$, and
$u'\leftarrow C'$ drawn through a uniform $R$) are uniform coefficient vectors over $\F$ drawn from the oracle, so
their output is a deterministic function of their oracle samples.
\end{itemize}

\begin{definition}[oracle-robust lazy sampler]\label{main:def:lazy}
Let $P(y;\omega)$ be a randomised proof map on inputs $y\in\mathcal Y\subseteq\F^D$. A
\emph{$(\hat b,\bar\kappa)$-lazy sampler} for $P$ on $\mathcal Y$ is a stateful $\PPTF$ oracle algorithm
$\mathsf{Samp}^{y}$ answering proof-position queries such that: \emph{(exactness)} for every $y\in\mathcal Y$
and every strategy (adaptive, unbounded, with coins independent of the proof and of the sampler) making at
most $\hat b$ distinct queries, the transcript against $\mathsf{Samp}^y$ is distributed identically to the
transcript against $\Pi\leftarrow P(y)$; \emph{(oracle-robust amortised reads)} for every oracle
$y'\in\F^D$ and every query sequence, after $k$ distinct queries it has read at most $\bar\kappa k$
coordinates of $y'$, and it halts within its time bound. $\hat b=\infty$ means exactness for every strategy
making finitely many queries.
\end{definition}

\subsection{The domain family and the field rule}\label{main:sec:family}

\begin{lemma}[smooth orders with ratio-2 coprime factor trees]\label{main:lem:SM2}
Let $\NN$ be an integer with $\LL:=\ln\NN\ge7$. For $m\ge1$ put $\phi_\NN(m):=\LL-3m\ln\LL$ and
$\epsilon_\NN:=1/(16\NN^6)$, let $\tilde\phi_\NN(m)$ be the output of a fixed deterministic evaluation of $\phi_\NN(m)$
to within $\epsilon_\NN$, and say that $m$ \emph{passes} if $\tilde\phi_\NN(m)\ge\epsilon_\NN$ (then $\phi_\NN(m)\ge0$,
i.e.\ $m\le\LL/(3\ln\LL)$). Let $m_\#=2^d$ be the largest power of two that passes,
$x_1=\lfloor\NN^{1/m_\#}\rfloor$, $P=\lfloor x_1/2\rfloor$. For an integer $x$ let
$\varpi(x)$ be the set of the $m_\#$ smallest primes $>x$ and $\Pi(x)=\prod\varpi(x)$. Let $x^*$ be the least
integer in $[P,x_1]$ with $\Pi(x^*)\ge\NN$, and $n=\Pi(x^*)$ with primes $p_1<\dots<p_{m_\#}$. Then:
\begin{enumerate}[nosep,label=(\arabic*)]
\item $x^*$ exists and $\NN\le n<2\NN$;
\item every $p_i\in(x^*,2x^*]$; $\LL^3/2-1\le P<\LL^6/2+1$; every $p_i>P$ and $p_i\le2\LL^6$; $d\le\log_2\LL$;
\item the pairing tree $\mathcal T(n)$ (below) is a complete binary tree of depth $d$ with leaves $p_1,\dots,p_{m_\#}$,
and every internal split $v=v_1v_2$ ($v_1<v_2$) has $\gcd(v_1,v_2)=1$, equal prime counts and $v_2/v_1<2$;
\item every node at depth $j$ has order $<2n^{2^{-j}}$;
\item $m_\#$, $x^*$, $n$ and $\mathcal T(n)$ are computable in time $\mathrm{poly}(\log\NN)$ with a polynomial-time primality
test \cite{AKS04}.
\end{enumerate}
\emph{Pairing tree} (a folklore balanced partition; \cite[\S7]{BS08} uses a smooth order found by search):
to split $2j$ distinct primes, sort them, pair consecutive ones $(p_{2i-1},p_{2i})$, and put the larger
member of pair $i$ into side $\mathcal B$ if $\prod\mathcal B\le\prod\mathcal A$ so far and into side
$\mathcal A$ otherwise (the smaller member goes to the other side); recurse on both sides.
\end{lemma}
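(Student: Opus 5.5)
The plan is to prove the five items in order, using only explicit prime-counting estimates (Rosser--Schoenfeld type bounds, e.g.\ $\pi(2x)-\pi(x)\ge x/(2\ln x)$ for $x\ge 11$, or Bertrand's postulate in strengthened forms). Everything else is elementary.

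\emph{Existence of $m_\#$ and the bounds on $P$ and $d$.} Since $\LL\ge7$, $m=1$ passes: $\phi_\NN(1)=\LL-3\ln\LL>0$ comfortably, and the evaluation error $\epsilon_\NN$ is negligible. So $m_\#=2^d\ge1$ exists, and passing gives $m_\#\le\LL/(3\ln\LL)$, hence $d\le\log_2\LL$. Maximality gives $2m_\#$ failing, i.e.\ $\phi_\NN(2m_\#)<2\epsilon_\NN$, so $m_\#>\LL/(6\ln\LL)-\text{tiny}$. Then $\NN^{1/m_\#}=e^{\LL/m_\#}$ lies between $e^{3\ln\LL}=\LL^3$ and about $e^{6\ln\LL}=\LL^6$. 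This yields $\LL^3-1\le x_1\le\LL^6(1+o(1))$, and so $\LL^3/2-1\le P<\LL^6/2+1$. The rounding slack must be handled with care, and I expect to spend effort here choosing $\epsilon_\NN$-sensitive inequalities.

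\emph{Item (1).} $\Pi(x)$ is nondecreasing in $x$. At $x=P$ all primes exceed $P\ge\lfloor x_1/2\rfloor$, and I need $\Pi(P)<\NN$ or else $x^*=P$ trivially. At $x=x_1$ every prime exceeds $x_1\ge\NN^{1/m_\#}-1$. To get $\Pi(x_1)\ge\NN$ I plan to use that the primes are $\ge x_1+1>\NN^{1/m_\#}$, so the product of $m_\#$ of them exceeds $\NN$. Hence $x^*$ exists.

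For $n<2\NN$, minimality gives $\Pi(x^*-1)<\NN$ when $x^*>P$. Moving from $x^*-1$ to $x^*$ changes $\varpi$ by at most dropping the prime $x^*$ (if prime) and adding the next prime $p'$. Thus $n/\Pi(x^*-1)\le p'/x^*$. This ratio is $<2$ by Bertrand, but I need it to be at most $2$ times something below $\NN$; the stated bound follows because $\Pi(x^*-1)<\NN$ and the ratio is $<2$. The case $x^*=P$ needs $\Pi(P)<2\NN$, which I would show from $P\le x_1/2$: then $\Pi(P)$ is close to $(x_1/2)^{m_\#}$ times a small factor.

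\emph{Item (2), the main obstacle.} The key claim is that the $m_\#$ primes after $x^*$ all lie in $(x^*,2x^*]$. This requires $\pi(2x^*)-\pi(x^*)\ge m_\#$. Since $x^*\ge P\ge\LL^3/2-1$ while $m_\#\le\LL/(3\ln\LL)$, the explicit bound $x/(2\ln x)$ on primes in $(x,2x]$ dominates easily. The remaining parts of (2) follow: $p_i\le 2x^*\le 2x_1\le2\LL^6$.

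\emph{Items (3)--(5).} I will argue by induction on the pairing tree. Each split halves a set of $2^j$ primes from $(x^*,2x^*]$, so the two sides have equal counts and are disjoint sets of distinct primes, which makes them coprime. For the ratio bound, the greedy rule keeps $|\log\prod\mathcal A-\log\prod\mathcal B|$ bounded by the largest single pair-log-ratio $\log(p_{2i}/p_{2i-1})$, since each step assigns the larger element to the currently lighter side. The largest pair ratio is below $\ln 2$ because all primes lie in an interval of ratio $<2$. A sharper accounting is needed for strictness, and I will use the fact that the first pair ratio is $<2$ while later pairs only decrease the imbalance or keep it under that of one pair. Item (4) follows because a depth-$j$ node is a product of $m_\#/2^j$ primes, and the ratio bound compounds to give order $<2n^{2^{-j}}$; this also needs care with the constant $2$, possibly via the interval $(x^*,2x^*]$ directly. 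Item (5) follows because the search runs over $x\le x_1\le 2\LL^6=\mathrm{poly}(\log\NN)$ integers, each handled with AKS primality tests.
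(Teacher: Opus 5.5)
Your route is the paper's: the explicit count $\pi(2x)-\pi(x)\ge x/(2\ln x)$, the bounds on $m_\#$, $x_1$, $P$ from ``$m_\#$ passes and $2m_\#$ fails'', the one-prime exchange between $\varpi(x^*-1)$ and $\varpi(x^*)$, greedy balancing of the pairing tree, and compounding of the split ratio. Several of your written justifications do not go through as stated. Each one is repaired with ingredients you already have.

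(i) In item (1), $p'\le 2x^*$ does not follow from Bertrand. Here $p'$ is the \emph{largest} element of $\varpi(x^*)$, so you need $\pi(2x^*)-\pi(x^*)\ge m_\#$, which is your item-(2) estimate. Prove $\varpi(x)\subseteq(x,2x]$ for all $x\ge P$ before item (1).

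(ii) For $x^*=P$ you have no estimate making $\Pi(P)/P^{m_\#}$ small. The inclusion $\varpi(P)\subseteq(P,2P]$ only bounds this factor by $2^{m_\#}$, but that suffices: $\Pi(P)\le(2P)^{m_\#}\le x_1^{m_\#}\le\mathsf N$, so $x^*=P$ forces $n=\mathsf N$.

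(iii) Failure of $2m_\#$ gives $\mathsf N<e^{2\epsilon_{\mathsf N}}\mathsf L^{6m_\#}$, hence only $x_1<\mathsf L^6+\frac14$. The integer $x_1$ may therefore exceed $\mathsf L^6$. Replace $p_i\le2x_1\le2\mathsf L^6$ by the parity bound $p_i\le2x^*-1\le2x_1-1<2\mathsf L^6$, which holds because $p_i$ is odd and $2x^*$ is even.

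(iv) Strictness in (3) needs nothing sharper. Let $\vartheta_i=\ln(\prod\mathcal B/\prod\mathcal A)$ after $i$ pairs. Then $|\vartheta_i|=\bigl|\,|\vartheta_{i-1}|-\ln r_i\bigr|\le\max(|\vartheta_{i-1}|,\ln r_i)$, and every pair ratio satisfies $r_i<2$ since $p_{2i}\le2x^*<2p_{2i-1}$.

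(v) In (4), drop the ``interval directly'' fallback. It only bounds a node of $k$ primes by $(2x^*)^k<2^kn^{2^{-j}}$. The compounding $v_2=\sqrt{v\cdot v_2/v_1}<\sqrt{2v}$ gives $2^{1-2^{-j}}n^{2^{-j}}<2n^{2^{-j}}$ at depth $j$.
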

\begin{proof}
We use \cite{RS62}: $\pi(x)>x/\ln x$ for $x\ge17$ and $\pi(x)<1.25506\,x/\ln x$ for $x>1$.
(i) For $x\ge144$, $\pi(2x)-\pi(x)>\frac{2x}{\ln 2x}-\frac{1.25506x}{\ln x}\ge\frac{x}{2\ln x}$, since the
last inequality is $2\ln x\ge1.75506(\ln x+\ln2)$, i.e.\ $\ln x\ge4.967$; and $x/(2\ln x)$ increases for
$x\ge e$.
(ii) $m=1$ passes, as $\phi_\NN(1)=\LL-3\ln\LL\ge7-3\ln7>2\epsilon_\NN$, and every passing $m$ has
$m\le\LL/(3\ln\LL)$; so $m_\#$ exists. As $m_\#$ passes, $\phi_\NN(m_\#)\ge0$, i.e.\ $\NN^{1/m_\#}\ge e^{3\ln\LL}=\LL^3$.
As $2m_\#$ does not pass, $\phi_\NN(2m_\#)<2\epsilon_\NN$, i.e.\ $\NN<e^{2\epsilon_\NN}\LL^{6m_\#}$, so
$\NN^{1/m_\#}<e^{2\epsilon_\NN}\LL^6\le\LL^6+4\epsilon_\NN\LL^6<\LL^6+\frac14$ (as $\LL<\NN$). Hence
$x_1\in[\LL^3-1,\LL^6+\frac14)$ and $P\in[\LL^3/2-1,\LL^6/2+\frac18)$; with $\LL\ge7$, $P\ge170>144$.
(iii) For every $x\ge P$: by (i) and monotonicity of $x/(2\ln x)$,
$\pi(2x)-\pi(x)\ge\frac{P}{2\ln P}\ge\frac{P_0}{2\ln P_0}$ with $P_0:=\LL^3/2-1\le P$, and $\ln P_0\le3\ln\LL$;
so $\pi(2x)-\pi(x)\ge\frac{\LL^3/2-1}{6\ln\LL}\ge\frac{\LL}{3\ln\LL}\ge m_\#$, using $\LL^3/2-1\ge2\LL$. So
$\varpi(x)\subseteq(x,2x]$.
(iv) $\Pi$ is nondecreasing, $\Pi(P)\le(2P)^{m_\#}\le x_1^{m_\#}\le\NN$, and every prime $>x_1$ exceeds
$\NN^{1/m_\#}$, so $\Pi(x_1)>\NN$: $x^*$ exists. If $x^*=P$ then $\Pi(P)=\NN$. Otherwise $\Pi(x^*-1)<\NN$ and
$\varpi(x^*-1)\ne\varpi(x^*)$, which forces $x^*$ to be prime and
$\varpi(x^*)=(\varpi(x^*-1)\setminus\{x^*\})\cup\{p_{\rm new}\}$ with $p_{\rm new}\le2x^*$ by (iii); so
$n=\Pi(x^*-1)p_{\rm new}/x^*<2\NN$. This proves (1) and (2) (with $p_i\le2x^*-1\le2x_1-1<2\LL^6$, as the
prime $p_i>P$ is odd and $2x^*$ is even, and $d=\log_2m_\#\le\log_2\LL$).
(v) For (3): all primes lie in $(x^*,2x^*]$, so each pair ratio $r_i\in(1,2)$; with
$\vartheta_i=\ln(\prod\mathcal B/\prod\mathcal A)$ after $i$ pairs, the rule gives
$|\vartheta_i|=\big||\vartheta_{i-1}|-\ln r_i\big|<\ln2$ by induction. Counts are equal, $m_\#=2^d$ keeps them
even down to single primes, and the prime sets are disjoint.
(vi) For (4): $v_2=\sqrt{v\cdot v_2/v_1}<\sqrt{2v}$; iterating, the order at depth $j$ is
$<2^{\sum_{i\le j}2^{-i}}n^{2^{-j}}<2n^{2^{-j}}$.
(vii) For (5): $m_\#$ needs at most $\log_2\LL+2$ tests, each evaluating $\ln\NN$ and $\ln\LL$ to
$O(\log\NN)$ bits; then binary search for $x^*$ ($O(\log\NN)$ evaluations of $\Pi$), each needing $m_\#$ primes in a
window of length $\le\LL^6+1$.
\end{proof}

\begin{remark}[size of the primes]\label{main:rem:2455}
If $m_\#\ge2$, that is, if the tree $\mathcal T(n)$ has an internal node, then $\LL/(3\ln\LL)\ge2$, i.e.\
$\LL\ge6\ln\LL$, which forces $\LL>16.99888$; so $P\ge\LL^3/2-1>2455$ by (2), and every prime factor $p_i$ of $n$
exceeds $2455$. In particular every recursion node (Definition~\ref{main:def:can}) has $m_1>2455$.
\end{remark}

\begin{definition}[field rule]\label{main:def:fieldrule}
For an output $n$ of Lemma~\ref{main:lem:SM2}, $q$ is the least prime $\equiv1\pmod{32n}$, found by testing
$1+32nj$ for $j=1,2,\dots$ with \cite{AKS04}.
\end{definition}

\begin{lemma}[field]\label{main:lem:field}
The rule is well defined, runs in time $\mathrm{poly}(n)$, and gives $q\le c_{\rm Lin}(32n)^{L_{\rm Lin}}$ for
absolute constants $c_{\rm Lin},L_{\rm Lin}$ (Linnik's theorem \cite{Lin44}; any absolute exponent suffices,
e.g.\ \cite{Xyl11}). Moreover $q>32n$, so $m<q$ and $(q-1)/m\ge32$ for every $m\mid n$.
\end{lemma}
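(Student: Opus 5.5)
The field rule will be handled in three steps: show that a suitable prime exists, bound its size by Linnik's theorem, and then bound the running time of the search. Put $M:=32n$. Since $\gcd(1,M)=1$, Dirichlet's theorem gives infinitely many primes $\equiv1\pmod M$, so the search over $j=1,2,\dots$ terminates and the least such prime $q$ is well defined. It is odd, because $q=1+Mj$ with $M$ even, which matches the convention of Section~\ref{main:sec:conv} that $q$ is an odd prime. Linnik's theorem \cite{Lin44}, in an effective form with absolute constants (e.g.\ \cite{Xyl11}, which gives exponent $5$), bounds the least prime in the progression $1\bmod M$ by $c_{\rm Lin}M^{L_{\rm Lin}}$. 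This is exactly the claimed bound $q\le c_{\rm Lin}(32n)^{L_{\rm Lin}}$.

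For the running time, the search tests at most $q/M\le c_{\rm Lin}(32n)^{L_{\rm Lin}-1}$ candidates. Each candidate has $O(\log q)=O(\log n)$ bits, so the AKS test \cite{AKS04} costs $\poly(\log n)$ per candidate. The total time is therefore $\poly(n)$, since $L_{\rm Lin}$ is an absolute constant. The rule is applied only to outputs $n$ of Lemma~\ref{main:lem:SM2}, which are computed separately, so nothing more is needed.

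For the last claims, note that $j\ge1$ gives $q\ge1+32n>32n$. Every $m\mid n$ satisfies $m\le n<q$. Moreover $32n\mid q-1$ and $m\mid n$ give $32m\mid q-1$, so $(q-1)/m$ is a positive multiple of $32$ and hence at least $32$. In particular $m\mid q-1$, so $\mu_m$ has order $m$ by Lemma~\ref{main:lem:mu}.

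The only step that goes beyond elementary reasoning is the appeal to an effective Linnik constant. The main care point is to cite a version whose constants $c_{\rm Lin},L_{\rm Lin}$ are absolute and explicit, rather than merely ineffective; this matters because the statement says any absolute exponent suffices, and \cite{Xyl11} supplies such a version.
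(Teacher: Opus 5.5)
Your proof is correct and follows the same route as the paper's: Linnik's theorem gives the bound, and $j\ge1$ gives $q\ge32n+1$. You also fill in the running-time count and the step $32m\mid q-1$, which the paper leaves implicit. Two minor quibbles, neither of which affects the argument: the statement (and its later use) needs only the existence of absolute constants $c_{\rm Lin},L_{\rm Lin}$, not effective or explicit ones; and the 2011 exponent is about $5.2$ rather than $5$.
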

\begin{proof}
Linnik's theorem gives a prime $\equiv1\pmod{32n}$ below $c_{\rm Lin}(32n)^{L_{\rm Lin}}$; every such prime is
$\ge32n+1$.
\end{proof}

\begin{definition}[the domain family]\label{main:def:Wa}
$\Wa$ consists of the triples $(n,q,\Hh)$ where $n$ is an output of Lemma~\ref{main:lem:SM2} with its tree
$\mathcal T(n)$, $q$ is given by Definition~\ref{main:def:fieldrule}, and $\Hh=g\mu_{8n}\subseteq\F^\times$ with $g$ the
least integer $\ge2$ such that $g^{8n}\not\equiv1\pmod q$. Put $s:=|\Hh|=8n$.
\end{definition}

\begin{lemma}[root domain]\label{main:lem:root}
For $(n,q,\Hh)\in\Wa$, $\Hh$ is the disjoint union of the eight $\mu_n$-cosets $g\zeta^i\mu_n$ ($i<8$, $\zeta$
of order $8n$), none equal to $\mu_n$; $\mu_h\Hh=\Hh$ for every $h\mid n$; and $\Hh\cap\mu_n=\emptyset$. An
element $\zeta$ of order $8n$ is $z^{(q-1)/8n}$ for the least $z\ge2$ with $(z^{(q-1)/8n})^{8n/p}\ne1$ for every
prime $p\mid8n$ (the primes of $8n$ are known from $\mathcal T(n)$), found in time $\mathrm{poly}(n)$.
\end{lemma}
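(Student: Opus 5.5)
We work throughout in the cyclic group $\F^\times$ of order $q-1$ and use Lemma~\ref{main:lem:mu}. By Lemma~\ref{main:lem:field}, $q\equiv1\pmod{32n}$, so $8n\mid q-1$ and $\mu_{8n}$ is a subgroup of order $8n$.

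\textbf{Coset decomposition.} Fix $\zeta$ of order $8n$. Then $\mu_{8n}=\bigsqcup_{i<8}\zeta^i\mu_n$, because $\mu_n$ has index $8$ in the cyclic group $\mu_{8n}$ and $\zeta^i\mu_n=\zeta^j\mu_n$ iff $\zeta^{n(i-j)}=1$ iff $8\mid i-j$. Multiplying by $g$ gives $\Hh=\bigsqcup_{i<8}g\zeta^i\mu_n$.

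\textbf{No coset equals $\mu_n$, and $\Hh\cap\mu_n=\emptyset$.} Suppose $g\zeta^i\mu_n=\mu_n$. Then $g\in\mu_{8n}$, so $g^{8n}=1$, contradicting the choice of $g$. The same argument proves the stronger claim $\Hh\cap\mu_n=\emptyset$: if $g\zeta^iu=1$ with $u\in\mu_n$, then $g\in\mu_{8n}$, again a contradiction. Such a $g$ exists because $\mu_{8n}$ is a proper subgroup, as $q-1\ge32n>8n$. The least such $g\ge2$ is found by scanning integers. By pigeonhole, at most $8n+1$ candidates are needed, since the integers $2,\dots,8n+2$ are $8n+1$ distinct nonzero residues, as $q>8n+2$.

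\textbf{Invariance $\mu_h\Hh=\Hh$.} Let $h\mid n$. Then $\mu_h\subseteq\mu_{8n}$, so $\mu_h\cdot g\mu_{8n}=g\mu_{8n}$.

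\textbf{Finding $\zeta$.} For any $z\in\F^\times$, the element $w=z^{(q-1)/8n}$ satisfies $w^{8n}=1$. Its order equals $8n$ iff $w^{8n/p}\ne1$ for every prime $p\mid8n$. The primes of $8n$ are $2$ and the leaves of $\mathcal T(n)$, all known. The generators of $\F^\times$ are exactly the $z$ for which $w$ has full order. Hence the least good $z$ is at most the least primitive root, and a valid $z$ always exists.

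\textbf{Running time.} This is the main obstacle. A time bound $\mathrm{poly}(n)$ is what is claimed, not $\mathrm{poly}(\log q)$, so we may simply search $z=2,3,\dots$ up to $q-1$. Here $q\le c_{\rm Lin}(32n)^{L_{\rm Lin}}$ is polynomial in $n$. Each test costs $O(m_\#+1)$ modular exponentiations, so the search runs in time $\mathrm{poly}(n)$ with no unconditional bound on primitive roots needed. The search for $g$ is handled in the same way.
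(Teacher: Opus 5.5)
Your proof is correct and follows the paper's argument: the coset computation $g\zeta^i\mu_n=g\zeta^j\mu_n\iff 8\mid i-j$, the observation that $g\zeta^i\in\mu_n$ would force $g\in\mu_{8n}$, and cyclicity of $\F^\times$ for the rest. You also justify the $\mathrm{poly}(n)$ running time by brute-force search over the $q-1=\mathrm{poly}(n)$ candidates, which the paper leaves implicit. One small correction: the generators of $\F^\times$ form only a subset of the $z$ with $z^{(q-1)/8n}$ of order $8n$ (the two sets coincide only when every prime factor of $q-1$ divides $8n$), but the inclusion is all your existence argument uses.
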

\begin{proof}
$g\zeta^i\mu_n=g\zeta^j\mu_n$ iff $\zeta^{n(i-j)}=1$ iff $8\mid i-j$. If $g\zeta^i\in\mu_n\le\mu_{8n}$ then
$g\in\mu_{8n}$, contrary to the choice of $g$. The rest is Lemma~\ref{main:lem:mu}.
\end{proof}

\subsection{Nodes, the canonical rule and the recursion}\label{main:sec:nodes}

The recursion uses the \emph{same-radix configuration}: at a node of order $m=m_1m_2$, the input and the mask are
both lifted to bivariate polynomials with the same radix $m_1$.

\noindent In Sections~\ref{main:sec:nodes}--\ref{main:sec:zk} and in Section~\ref{sec:keylemma} the letter $b$
denotes an element of the row set $B$ of a node (in the overview such a row is written $\beta$); in this
section the query bound $b$ appears only in Section~\ref{main:sec:main}. In the same sections $s$ denotes the size of the root domain $\Hh$;
from Section~\ref{main:sec:shell} on it is written $\ell$.

\begin{definition}[node; the same-radix configuration]\label{main:def:node}
A \emph{node} is $\nu=(m_1,m_2;D;\mathcal A^{y},\mathcal A^{r})$ with:
\begin{itemize}[nosep]
\item $m=m_1m_2\mid q-1$, $\gcd(m_1,m_2)=1$ and $m_1<m_2<2m_1$;
\item $D=\bigsqcup_{i<8}d_i\mu_m\subseteq\F^\times$, eight distinct $\mu_m$-cosets; the node's code is
$\RS[D,m]$ (degree bound $=$ group order);
\item two lists of seven representatives $\mathcal A^{y}=(e_0,\dots,e_6)$, $\mathcal A^{r}=(e'_0,\dots,e'_6)$
such that the 22 cosets $d_i\mu_m$, $e_j\mu_m$, $e'_j\mu_m$ are pairwise distinct.
\end{itemize}
Its column sets are $A_y=\bigsqcup_je_j\mu_{m_1}$ and $A_r=\bigsqcup_je'_j\mu_{m_1}$, its row set is
$B:=\{x^{m_1}:x\in D\}$, and for $b\in B$ its fibre is $F_b:=\{x\in D:x^{m_1}=b\}$.
\end{definition}

For an input coefficient vector $\hat g\in\F^m$, a uniform mask $\hat r\in\F^m$ and $\hat y=\hat g+\hat r$
put, for $f\in\{y,r\}$,
\[
Q_f(X,Y)=\sum_{i<m_1,\ j<m_2}\hat f_{i+m_1j}X^iY^j,
\]
so that $Q_f(x,x^{m_1})=f(x)$ and $Q_f(a,b)=\big(f\bmod(X^{m_1}-b)\big)(a)$.
The node's \emph{own symbols} $\Sym(\nu)$ are $y|_D$, $r|_D$, $Q_y|_{A_y\times B}$ and $Q_r|_{A_r\times B}$. A
\emph{view} is a set $V\subseteq\Sym(\nu)$ of symbols (positions); it splits into $y$-side functionals $\Phi$
on $\F^m$ (reads of $y$ and of $Q_y$, as functions of $\hat y$) and $r$-side functionals $\Psi$ (reads of $r$
and of $Q_r$, as functions of $\hat r$). Its \emph{leak} is $\Lambda(V)=\spn\Phi\cap\spn\Psi\subseteq(\F^m)^*$.
Every symbol is the evaluation of a polynomial of bidegree $<(m_1,m_2)$ at a \emph{point}: a lift symbol
$Q_f(a,b)$ at $(a,b)$, a domain symbol $f(x)$ at $(x,x^{m_1})$, which lies in \emph{row} $x^{m_1}$.

\begin{lemma}[coset structure, closure, real symbols]\label{main:lem:CS}
For a node:
\begin{enumerate}[nosep,label=(\alph*)]
\item for $b\in B$, $F_b=x_b\mu_{m_1}$ for any $x_b\in F_b$, $|F_b|=m_1$, and the fibres partition $D$;
\item $B=\bigsqcup_{i<8}d_i^{m_1}\mu_{m_2}$, eight distinct $\mu_{m_2}$-cosets, so $|B|=8m_2$;
\item $A_y$, $A_r$ and $D$ are pairwise disjoint; $A_y\sqcup A_r$ is the union of 14 distinct
$\mu_{m_1}$-cosets; every row domain $F_b\cup A_y$ or $F_b\cup A_r$ consists of eight distinct
$\mu_{m_1}$-cosets, and the column domain $B$ of eight distinct $\mu_{m_2}$-cosets;
\item \emph{closure}: the children of $\nu$ are $8m_2$ $y$-rows (order $m_1$, domain $F_b\cup A_y$), $8m_2$
$r$-rows (order $m_1$, domain $F_b\cup A_r$), $7m_1$ $y$-columns and $7m_1$ $r$-columns (order $m_2$, domain
$B$); every child's code is $\RS[\text{domain},\text{order}]$ on eight cosets of its order, and its split is the
split of its order in $\mathcal T(n)$;
\item \emph{honest lines are codewords}: for every value of the mask, each row line ($Q_f(\cdot,b)$ on $A_f$,
extended by $f|_{F_b}$) is the evaluation of $Q_f(X,b)$, of degree $<m_1$, and each column line
$Q_f(a,\cdot)|_B$ has degree $<m_2$;
\item \emph{real symbols}: every point of a child's domain carries exactly one own symbol of $\nu$ ($x\in F_b$
carries $f(x)$, $a\in A_f$ carries $Q_f(a,b)$, $b\in B$ carries $Q_f(a,b)$), and the map $\iota_w$ from a child
$w$'s domain into $\Sym(\nu)$ is injective.
\end{enumerate}
\end{lemma}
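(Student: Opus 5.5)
The plan is to verify the six items directly from Definition~\ref{main:def:node} and Lemma~\ref{main:lem:mu}, using only the group structure of $\F^\times$ (cyclic of order $q-1$) and the identity $Q_f(a,b)=(f\bmod(X^{m_1}-b))(a)$.

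For (a), fix $x_b\in F_b$. A point $x$ lies in $F_b$ iff $x^{m_1}=x_b^{m_1}$, iff $x\in x_b\mu_{m_1}$ by Lemma~\ref{main:lem:mu}. This coset lies in $D$ because $D$ is a union of $\mu_m$-cosets and $\mu_{m_1}\le\mu_m$. Fibres of the map $x\mapsto x^{m_1}$ partition $D$.

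For (b), the map $x\mapsto x^{m_1}$ sends $d_i\mu_m$ onto $d_i^{m_1}\mu_m^{m_1}$. Since $\mu_m$ is cyclic of order $m_1m_2$, its image under this power map is $\mu_{m_2}$. If two images coincide, $d_i^{m_1}\mu_{m_2}=d_j^{m_1}\mu_{m_2}$, then $d_i^{m}=d_j^{m}$, so $d_i\mu_m=d_j\mu_m$ by Lemma~\ref{main:lem:mu} and $i=j$. Hence $|B|=8m_2$.

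For (c), a point $x\in e_j\mu_{m_1}\cap d_i\mu_m$ would force $e_j\mu_m=d_i\mu_m$, which the distinctness of the 22 cosets forbids. The same argument applies to all other pairs, giving pairwise disjointness of $A_y$, $A_r$ and $D$. The $\mu_{m_1}$-cosets $e_j\mu_{m_1}$ are distinct because they lie in distinct $\mu_m$-cosets, so $A_y\sqcup A_r$ consists of 14 distinct $\mu_{m_1}$-cosets. A row domain is one fibre (a $\mu_{m_1}$-coset by (a)) plus seven cosets disjoint from it, hence eight distinct cosets. The column domain $B$ is covered by (b).

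For (d), I count children. There are $|B|=8m_2$ rows per side and $|A_f|=7m_1$ columns per side. The orders $m_1,m_2$ are children of $m$ in $\mathcal T(n)$, so each has its split there, and each child domain is a union of eight cosets of its order by (c). One point to check is that each child's order divides $q-1$; this holds since $m_i\mid m\mid q-1$. Its coprime ratio-$<2$ split is inherited from Lemma~\ref{main:lem:SM2}(3). Leaves of the tree, being primes, are not recursed, so "split" is meant only for internal nodes.

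For (e), $Q_f(X,b)$ has degree $<m_1$ in $X$. At $x\in F_b$ it satisfies $Q_f(x,b)=Q_f(x,x^{m_1})=f(x)$, so the extended row line is its evaluation. Similarly $Q_f(a,Y)$ has degree $<m_2$.

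For (f), each child domain point carries its own symbol by the definitions above. Injectivity of $\iota_w$ follows from three facts: the disjointness in (c), the fact that distinct $x\in F_b$ give distinct domain symbols, and the fact that distinct $a$ (respectively $b$) give distinct grid positions.

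I expect no real obstacle. The only delicate step is (b), the image of $\mu_m$ under $x\mapsto x^{m_1}$ and the distinctness of the image cosets, which comes down to the cyclic-group facts in Lemma~\ref{main:lem:mu}.
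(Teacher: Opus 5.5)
Your proposal is correct and takes essentially the same approach as the paper's proof: each item is checked directly from the definition of a node and the cyclic-group facts of Lemma~\ref{main:lem:mu}. As in the paper, (c) reduces to the distinctness of the 22 $\mu_m$-cosets and (f) to the disjointness in (c). The only cosmetic difference is that you read off $|F_b|=m_1$ directly from $F_b=x_b\mu_{m_1}$, whereas the paper re-derives it in (b) from the disjointness of the images $d_i^{m_1}\mu_{m_2}$ and the fact that $x\mapsto x^{m_1}$ is $m_1$-to-one on each $d_i\mu_m$.
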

\begin{proof}
(a) $x^{m_1}=x'^{m_1}$ iff $x'/x\in\mu_{m_1}$ (Lemma~\ref{main:lem:mu}), and
$x\mu_{m_1}\subseteq x\mu_m\subseteq D$. (b) On the cyclic group $\mu_m$ the map $z\mapsto z^{m_1}$ has image
of order $m/\gcd(m,m_1)=m_2$, so $(d_i\mu_m)^{m_1}=d_i^{m_1}\mu_{m_2}$; and
$d_i^{m_1}\mu_{m_2}=d_j^{m_1}\mu_{m_2}$ implies $(d_i/d_j)^{m_1m_2}=1$, i.e.\ $d_i\mu_m=d_j\mu_m$. Since the eight
images are disjoint, each $F_b$ lies in one $d_i\mu_m$, on which $x\mapsto x^{m_1}$ is $m_1$-to-one; so
$|F_b|=m_1$. (c) Every $\mu_{m_1}$-coset used lies in one of the 22 pairwise distinct $\mu_m$-cosets, and
$F_b=x_b\mu_{m_1}\subseteq x_b\mu_m\subseteq D$. (d) From (a)--(c); both lifts have radix $m_1$, so all rows
have order $m_1$ and all columns order $m_2$ and domain $B$. (e) $x^{i+m_1j}=x^i(x^{m_1})^j$ gives
$Q_f(x,x^{m_1})=f(x)$, hence $Q_f(x,b)=f(x)$ on $F_b$; $\deg_X<m_1$ and $\deg_Y<m_2$ by definition.
(f) From (e) and the definition of the proof (Section~\ref{main:sec:pcpp}); distinct points of a child domain
are distinct points of $D$ or of a column set, which are disjoint by (c).
\end{proof}

\begin{definition}[canonical rule; recursion nodes]\label{main:def:can}
The \emph{canonical rule} $\rcan$ takes the order $m$ (with its split $m=m_1m_2$ in $\mathcal T(n)$) and eight
representatives $d_0,\dots,d_7$ of distinct $\mu_m$-cosets, and runs the following exact integer loop: set
$\mathcal U:=\{d_i^m\bmod q\}$ and an empty list; for $z=2,3,\dots,q-1$ in increasing order, if $z^m\bmod q\notin\mathcal U$,
append $z$ to the list and add $z^m\bmod q$ to $\mathcal U$; stop when the list has 14 entries. It outputs the
first seven entries as $\mathcal A^y$ and the next seven as $\mathcal A^r$ (FAIL if the loop ends early). The
set $\mathcal N(\rcan)$ of \emph{recursion nodes} is the set of nodes reached from the root
$(n,q,\Hh)\in\Wa$ (order $n$ with the top split of $\mathcal T(n)$, domain $\Hh$) by repeatedly taking the children of
Lemma~\ref{main:lem:CS}(d) of composite order and designing them by $\rcan$. Children of prime order are
\emph{leaves}.
\end{definition}

\begin{proposition}[totality]\label{main:prop:total}
$\rcan$ never outputs FAIL on a recursion node, and runs in time $O(q\log q)$.
\end{proposition}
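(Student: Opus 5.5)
The plan is a counting argument on the cosets of $\mu_m$ in $\F^\times$, with an induction over the recursion showing that the rule is only ever called on legitimate inputs. First I would fix what a call of $\rcan$ at a recursion node receives.

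\emph{The inputs are legitimate.} At every recursion node:
\begin{itemize}[nosep]
\item the order $m$ is a composite node of $\mathcal T(n)$, so $m\mid n\mid q-1$ (Definition~\ref{main:def:fieldrule});
\item its split satisfies $\gcd(m_1,m_2)=1$ and $m_1<m_2<2m_1$ (Lemma~\ref{main:lem:SM2}(3));
\item the eight given representatives $d_0,\dots,d_7$ lie in eight distinct $\mu_m$-cosets.
\end{itemize}
The last point holds at the root by Lemma~\ref{main:lem:root}. For a child, Lemma~\ref{main:lem:CS}(c)--(d) says that its domain consists of eight distinct cosets of its order, and this applies to any node produced by $\rcan$ that satisfies Definition~\ref{main:def:node}. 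So I would argue by induction on depth: if $\rcan$ succeeds at a node, the output is a node in the sense of Definition~\ref{main:def:node}, and its composite-order children again supply valid inputs. The output is a node because the 22 cosets are pairwise distinct by construction, which the next paragraph checks. The base case is the root. It therefore suffices to show that a single call with valid input never fails.

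\emph{A single call does not fail.} By Lemma~\ref{main:lem:mu}, $z\mu_m=z'\mu_m$ iff $z^m=z'^m$. The set $\mathcal U$ is therefore exactly the set of $\mu_m$-cosets already used. It starts with 8 distinct elements, and each appended $z$ opens a coset not used before, so the 22 cosets are pairwise distinct. Since $m\mid n$, Lemma~\ref{main:lem:field} gives $(q-1)/m\ge32$ cosets of $\mu_m$ in $\F^\times$.

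The loop runs over $z=2,\dots,q-1$, that is, over all of $\F^\times$ except $1$. Every coset has $m\ge2$ elements; indeed $m$ is composite, and in fact $m_1>2455$ by Remark~\ref{main:rem:2455}. So every coset contains some $z\ge2$ and is met by the loop. Hence at least $32-8=24\ge14$ fresh cosets are encountered, and the list reaches 14 entries before the loop ends, so the rule never outputs FAIL.

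\emph{Running time.} There are at most $q-2$ iterations. Each computes $z^m\bmod q$ by repeated squaring, which takes $O(\log m)\le O(\log q)$ multiplications modulo $q$. Each then tests membership in $\mathcal U$, a set of at most 22 elements, at constant cost. Computing the initial $\mathcal U$ costs $O(\log q)$. This gives $O(q\log q)$ arithmetic operations in total.

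I expect no real obstacle. The only point needing care is that every coset, including $\mu_m$ itself, contains an element $z\ge2$, so that skipping $z=1$ loses nothing. Everything else is the single inequality $(q-1)/m\ge32>22$.
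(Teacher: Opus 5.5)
Your proposal is correct and follows the paper's own argument. There are at least $(q-1)/m\ge32$ cosets of $\mu_m$, and each of them, including $\mu_m$ itself since $m>1$, contains an integer in $[2,q-1]$, so the loop meets them all and finds at least $32-8\ge14$ fresh ones. Your induction on legitimate inputs and your running-time count in arithmetic operations are details the paper leaves implicit; the non-failure count itself needs only that at most eight cosets are used at the start.
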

\begin{proof}
Every recursion node has order $m\mid n$, so $(q-1)/m\ge32$ (Lemma~\ref{main:lem:field}) and $\F^\times$ has at
least 32 $\mu_m$-cosets. Every element of $\F^\times\setminus\{1\}$ is an integer in $[2,q-1]$, and every coset
other than $\mu_m$ contains such an element ($\mu_m\ne\{1\}$ contains one too, as $m>1$), so the loop meets at
least 32 cosets and finds 14 outside the 8 given ones.
\end{proof}

\begin{lemma}[recursion nodes satisfy the hypotheses of the locator lemma]\label{main:lem:N}
Let $(n,q,\Hh)\in\Wa$ and $\nu\in\mathcal N(\rcan)$ with split $m=m_1m_2$. Then, with $n_\nu:=m_1$,
$n'_\nu:=m_2$:
\begin{itemize}[nosep]
\item[(N1)] $\gcd(m_1,m_2)=1$, $m_1<m_2\le Km_1$ with $K=2$, $m\mid q-1$ and $m<q=\mathrm{char}\,\F$; so
$\mu_{m_1},\mu_{m_2},\mu_m\subseteq\F^\times$ and $q\nmid m$;
\item[(N2)] $D$ is the union of $L_2=8$ distinct $\mu_m$-cosets and the node's code is $\F[X]_{<m}$ on $D$;
\item[(N3)] $B=D^{m_1}$ is the union of $L_2=8$ distinct cosets of $\mu_{m_2}$, and $|F_b|=m_1$ for every
$b\in B$;
\item[(N4)] $A=A_y\sqcup A_r\subseteq\F^\times\setminus D$ is the union of $L_1=14$ distinct cosets of
$\mu_{m_1}$, with $A_y$ and $A_r$ unions of seven of them each.
\end{itemize}
\end{lemma}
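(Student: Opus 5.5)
The plan is an induction along the recursion, checking each of (N1)--(N4) from the definitions together with Lemmas~\ref{main:lem:SM2}, \ref{main:lem:field}, \ref{main:lem:root} and~\ref{main:lem:CS}. The invariant to carry down is this: every recursion node $\nu$ has an order $m$ that is a node of the pairing tree $\mathcal T(n)$ with a composite split $m=m_1m_2$. Its domain $D$ is a union of eight distinct $\mu_m$-cosets, each given by a representative, and its column lists come from $\rcan$. Nothing beyond this invariant is needed, because (N1)--(N4) depend only on the order, the domain and the output of $\rcan$.

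\emph{Base case and inductive step.} At the root, the order is $n$ with its top split, and the domain is $\Hh$. By Lemma~\ref{main:lem:root}, $\Hh$ is eight distinct $\mu_n$-cosets. For the inductive step, Lemma~\ref{main:lem:CS}(c),(d) says every child of composite order has as domain either $F_b\cup A_f$, which is eight distinct $\mu_{m_1}$-cosets, or $B$, which is eight distinct $\mu_{m_2}$-cosets. Its order $m_1$ or $m_2$ is a child in $\mathcal T(n)$, and that is the split it is given. By Proposition~\ref{main:prop:total}, $\rcan$ does not fail on such a node. So the invariant holds at every node, and each recursion node is a node in the sense of Definition~\ref{main:def:node}.

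\emph{(N1).} Lemma~\ref{main:lem:SM2}(3) gives $\gcd(m_1,m_2)=1$, with equal prime counts and $v_2/v_1<2$, so $m_1<m_2<2m_1$. Since $m\mid n\mid 32n\mid q-1$, Lemma~\ref{main:lem:field} gives $m\le n<32n<q$. In particular $q\nmid m$, and the three groups $\mu_{m_1},\mu_{m_2},\mu_m$ lie in $\F^\times$ with their full orders by Lemma~\ref{main:lem:mu}.

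\emph{(N2) and (N3).} (N2) is the invariant itself; by Definition~\ref{main:def:node} the node's code is $\RS[D,m]$, which is $\F[X]_{<m}$ restricted to $D$. For (N3), the facts that $B$ is eight distinct $\mu_{m_2}$-cosets and that $|F_b|=m_1$ are Lemma~\ref{main:lem:CS}(a),(b).

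\emph{(N4) and the main obstacle.} The only real work is (N4), where I must confirm that $\rcan$ produces a valid node. Its loop keeps the set $\mathcal U$ of $m$-th powers. By Lemma~\ref{main:lem:mu}, two elements have the same $m$-th power exactly when they lie in the same $\mu_m$-coset. So the 14 chosen representatives give $\mu_m$-cosets that are distinct from each other and from the eight cosets of $D$. Their $\mu_{m_1}$-subcosets $e_j\mu_{m_1}$ and $e'_j\mu_{m_1}$ are then 14 distinct $\mu_{m_1}$-cosets that avoid $D$. The representatives are nonzero, so these cosets lie in $\F^\times\setminus D$, and the split into seven for $A_y$ and seven for $A_r$ is exactly the output convention of $\rcan$. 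This is also Lemma~\ref{main:lem:CS}(c).
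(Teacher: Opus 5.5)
Your proof is correct and follows the paper's route. You induct from the root using Lemma~\ref{main:lem:root} and Lemma~\ref{main:lem:CS}(c)--(d) for (N2), Lemma~\ref{main:lem:SM2}(3) with the field rule and Lemma~\ref{main:lem:field} for (N1), Lemma~\ref{main:lem:CS}(a)--(b) for (N3), and the canonical rule's output (Proposition~\ref{main:prop:total}) with Lemma~\ref{main:lem:CS}(c) for (N4); your explicit check that tracking $m$-th powers gives cosets distinct from each other and from those of $D$ (via Lemma~\ref{main:lem:mu}) just unpacks what the paper cites.
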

\begin{proof}
The root has order $n$ and domain $\Hh$, eight distinct $\mu_n$-cosets (Lemma~\ref{main:lem:root}); by
induction and Lemma~\ref{main:lem:CS}(c)--(d) every recursion node has eight distinct cosets of its order and
code $\RS[D,m]$: this is (N2). Its order is an internal node of $\mathcal T(n)$ with split $m_1m_2$, so (N1)'s
divisibility and ratio conditions are Lemma~\ref{main:lem:SM2}(3), $m\mid n\mid q-1$ is
Definition~\ref{main:def:fieldrule}, and $m\le n<32n<q$ is Lemma~\ref{main:lem:field}. (N3) is
Lemma~\ref{main:lem:CS}(a)--(b). (N4): $\rcan$ outputs 14 representatives of $\mu_m$-cosets distinct from
each other and from the $d_i\mu_m$ (Proposition~\ref{main:prop:total}), so Lemma~\ref{main:lem:CS}(c) applies.
\end{proof}
\noindent Items (N1)--(N4) are conditions (N1)--(N4) of Definition~\ref{kl:def:node} for the parameters
$(L_1,L_2,K)=(14,8,2)$, with $n=m_1$, $n'=m_2$ and $p=q$ (for (N3), Lemma~\ref{kl:lem:rows} gives the stated
properties of $B$ and of the fibres). So every recursion node is a node with parameters $(14,8,2)$ in the sense of
Definition~\ref{kl:def:node}.

\subsection{The locator lemma at recursion nodes}\label{main:sec:KL}

Let $L_1=14$, $L_2=8$, $K=2$, and let $c_*:=10^{-13}$. Then $c_*\le1/(16L_2)$ and
$L_1L_2(3\cdot2^{-1/3}K^{1/3}c_*^{1/3}+6c_*)<\frac1{8L_2}$ (for $K=2$ the condition reads
$112(3c^{1/3}+6c)<\frac1{64}$); since $c_*<\frac1{2(L_2+1)}$ and $\frac1{L_2+1}-2c_*\ge\frac1{8L_2}$, $c_*$ is admissible for $(L_1,L_2,K)$ in
the sense of Definition~\ref{kl:def:adm}. Put
\[
\aKL:=1+\tfrac{16}{15}+K/c_*^2\ \le\ 2.0001\cdot10^{26},\qquad \log_2\aKL\le87.39 ;
\]
as $\frac1{1-c_*}\le\frac{16}{15}$, $\aKL$ bounds the constant $a(c_*)$ of Theorem~\ref{kl:thm:KL}(c).

\begin{theorem}[the locator lemma at recursion nodes; proved in Section~\ref{sec:keylemma} as Theorem~\ref{kl:thm:KL}]\label{main:thm:KL}
Let $\nu$ be a node over a field of characteristic $p$ satisfying (N1)--(N4) of Lemma~\ref{main:lem:N}
(with $q$ replaced by $p$, $L_1=14$, $L_2=8$, $K=2$). For a view $V\subseteq\Sym(\nu)$ let $D_V$ be the
set of $x\in D$ with $y(x)$ or $r(x)$ in $V$, let $|V\cap\text{row }b|$ be the number of distinct points of
$V$ in row $b$, let $\tau:=\lfloor c_*m_1\rfloor$ and $H_\tau(V):=\{b\in B:|V\cap\text{row }b|\ge m_1+1-\tau\}$,
and
\[
W_\nu(V):=D_V\ \cup\bigcup_{b\in H_\tau(V)}F_b\ \cup\ \begin{cases}d_0\mu_m,&|V|\ge(\tau+1)^2,\\
\emptyset,&\text{otherwise.}\end{cases}
\]
Then $W_\nu(V)\subseteq D$; $W_\nu$ depends on $V$ as a set of positions only, is monotone
($V\subseteq V'\Rightarrow W_\nu(V)\subseteq W_\nu(V')$) and computable in time $\mathrm{poly}(|\F|)$;
$|W_\nu(V)|\le\aKL|V|$; and $W_\nu$ is a \emph{locator}:
$\Lambda(V)\subseteq\spn\{\ev_x|_{\F^m}:x\in W_\nu(V)\}$.
\end{theorem}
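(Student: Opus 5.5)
The easy properties come first, and the locator property is the real content. That $W_\nu(V)\subseteq D$ is immediate from the definition. Monotonicity also follows directly. $D_V$ grows with $V$. The count $|V\cap\text{row }b|$ of distinct points in a row is monotone, so $H_\tau(V)$ is monotone. The coset clause switches on once $|V|\ge(\tau+1)^2$ and stays on. Computability in time $\poly(|\F|)$ follows by enumerating rows and cosets.

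For the size bound, $|D_V|\le|V|$. Each heavy row carries at least $m_1+1-\tau\ge(1-c_*)m_1$ points of $V$, and distinct rows carry disjoint points. Hence the heavy fibres contribute at most $\frac1{1-c_*}|V|\le\frac{16}{15}|V|$. The coset $d_0\mu_m$ has $m=m_1m_2<2m_1^2<(K/c_*^2)(\tau+1)^2\le(K/c_*^2)|V|$ elements, using $\tau+1>c_*m_1$. Summing gives $\aKL|V|$.

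For the locator property, I first dispose of the large case $|V|\ge(\tau+1)^2$. Evaluations on one $\mu_m$-coset already span $(\F^m)^*$, since a polynomial of degree $<m$ vanishing on $m$ points is zero. So $\Lambda(V)$ is trivially covered there.

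In the small case I work in the tensor picture of Section~\ref{sec:overview-key}. Each symbol is an evaluation of a bidegree-$<(m_1,m_2)$ polynomial at a point of $(A\sqcup D_V)\times B$. An element of $\Lambda(V)$ is $\phi=\sum\alpha_v\phi_v=\sum\beta_u\psi_u$. This yields a dependency matrix $M\in C_1\otimes\F^B+\F^S\otimes C_2$ supported on $V$, and the leak is the $y$-side part of its $C_1\otimes\F^B$ component. Since $|V|<(\tau+1)^2$, I take one nonzero $E$ of bidegree $\le(\tau,\tau)$ vanishing on all points of $V$. I then run the Polishchuk--Spielman divisibility argument in the coprime-order form of Bafna--Vyas, with $\prod_{x\in D_V}(X-x)$ divided out to normalise on the domain columns. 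This expresses the components of $M$ through $E$.

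The main obstacle is to rule out irreducible factors $F$ of $E$ that are neither row nor column factors. The zero/pole count of the induced rational function on $F=0$ forces at least $(m_1+1-2\tau)e'+(m_2+1-2\tau)e$ incidences with the lift grid or with the view's domain points. Corvaja--Zannier bounds the grid incidences by $O(m^{1/3}(ee')^{2/3}+ee')$ per pair of cosets, over $14\cdot 8$ pairs. The domain points are bounded by $\min(e+m_1e',8m_2e)$. Admissibility of $c_*$ is meant to make these incompatible, and I expect the delicate part to be the case analysis on $(e,e')$ for the domain-point term.

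Once only row factors survive, every nonzero row of the $C_1\otimes\F^B$ component is a dual RS word of weight $\ge m_1+1$. It differs from $M$ on at most $\tau$ columns, so it lies in a heavy row $b\in H_\tau(V)$. Its $y$-side contribution is then a functional in the span of $\ev_x$ for $x\in F_b$ (Lemma~\ref{kl:lem:cheap}). The remaining parts are carried by domain points in $D_V$. Together these show $\phi\in\spn\{\ev_x:x\in W_\nu(V)\}$.
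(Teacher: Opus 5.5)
Your plan is the paper's proof. The easy properties and the size bound are argued exactly as in Theorem~\ref{kl:thm:KL}(a),(c). The small-view case is Theorem~\ref{kl:thm:rows}: one curve $E$ of bidegree $\le(\tau,\tau)$ through the view, Steps~0--1 with division by $\prod_{x\in D'}(X-x)$, Lemma~\ref{kl:lem:div} with the Corvaja--Zannier count and the domain-loss bounds of Claim~\ref{kl:clm:delta}, and finally the weight-$\ge m_1+1$ argument off the at most $\tau$ columns on which $E$ vanishes.

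Two steps you leave implicit are done explicitly there. First, domain points read on both sides are removed beforehand by the modular law (Lemma~\ref{kl:lem:shared}), so the $y$- and $r$-supports of $M$ are disjoint and the leak is read off the decomposition $M_1=A_0-Q$ produced by the divisibility lemma. Second, the case analysis you anticipate is Claim~\ref{kl:clm:delta}(c$'$), combined with the check that $t=\tau$ satisfies \eqref{kl:eq:43}; that check uses $m\le Km_1^2$ and the admissibility of $c_*$.
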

Section~\ref{sec:keylemma} credits the ingredients of the proof. The non-trivial content of the theorem covers views
of size below $(\tau+1)^2\approx10^{-26}m_1^2$, and it is vacuous for $m_1<1/c_*=10^{13}$, where $\tau=0$ and every
non-empty view is located by a whole $\mu_m$-coset.

\begin{corollary}[the locator lemma applies at every recursion node]\label{main:cor:HDL}
For every $(n,q,\Hh)\in\Wa$ and every $\nu\in\mathcal N(\rcan)$, the map $W_\nu$ of Theorem~\ref{main:thm:KL}
is defined, contained in $D$, monotone, position-only, computable in time $\mathrm{poly}(q)$, of size
$\le\aKL|V|$, and a locator for every view $V\subseteq\Sym(\nu)$.
\end{corollary}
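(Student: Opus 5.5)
The corollary follows by combining Lemma~\ref{main:lem:N} with Theorem~\ref{main:thm:KL}; the only work is to check the hypotheses and the cost claims. Fix $(n,q,\Hh)\in\Wa$ and $\nu\in\mathcal N(\rcan)$. Lemma~\ref{main:lem:N} shows that $\nu$ satisfies (N1)--(N4) with $p=q$, $L_1=14$, $L_2=8$ and $K=2$, where $n_\nu=m_1$ and $n'_\nu=m_2$. Since $q$ is prime, the characteristic of $\F_q$ is $q$, and (N1) gives $m<q$. Section~\ref{main:sec:KL} has already checked that $c_*=10^{-13}$ is admissible for $(14,8,2)$ and that $\aKL$ bounds $a(c_*)$. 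So Theorem~\ref{main:thm:KL} applies to $\nu$. It gives every listed property except the running time: $W_\nu(V)\subseteq D$, position-only dependence, monotonicity, the size bound $|W_\nu(V)|\le\aKL|V|$, and the locator property $\Lambda(V)\subseteq\spn\{\ev_x:x\in W_\nu(V)\}$ for every $V\subseteq\Sym(\nu)$.

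Next I check that $W_\nu$ is well defined. The representative $d_0$ is fixed by the node description: the root's cosets are listed as $g\zeta^i\mu_n$ in Lemma~\ref{main:lem:root}, and at every descendant the domain cosets are inherited through Lemma~\ref{main:lem:CS}(d). I will fix this enumeration explicitly so that $d_0\mu_m$ is canonical. The canonical rule never fails at a recursion node, by Proposition~\ref{main:prop:total}. The threshold $\tau=\lfloor c_*m_1\rfloor$ and the row counts $|V\cap\text{row }b|$ are determined by $V$ as a set of positions.

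For the running time, note that $|\F|=q$ and every recursion node has $|D|=8m\le 8n<q$. So $W_\nu(V)$ can be computed in $\mathrm{poly}(q)$ time in four steps. First, compute the points of $V$ and the rows they lie in; this needs the powers $x^{m_1}$, and it is the same map used in the definition of $\Sym(\nu)$. Second, count the distinct points of $V$ in each row $b\in B$. Third, enumerate each fibre $F_b=x_b\mu_{m_1}$ using a generator of $\mu_{m_1}$, obtained as in Lemma~\ref{main:lem:root}. Fourth, enumerate $d_0\mu_m$ when $|V|\ge(\tau+1)^2$. Building the node itself through $\rcan$ costs $O(q\log q)$.

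The main obstacle is keeping the ``well-defined'' claim honest. Theorem~\ref{main:thm:KL} refers to a specific representative $d_0$, so the plan must specify how the ordering of a child's eight domain cosets is inherited. I would fix that ordering by sorting the coset representatives by their least element in $[2,q-1]$. Since the locator property holds for any single coset of $\mu_m$ in $D$, this choice affects only determinism and not correctness.
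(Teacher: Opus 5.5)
Your proposal is correct and follows the paper's proof: Lemma~\ref{main:lem:N} gives (N1)--(N4) with $p=q$ and parameters $(14,8,2)$, so Theorem~\ref{main:thm:KL} applies, and since $\F=\F_q$ its $\mathrm{poly}(|\F|)$ bound is $\mathrm{poly}(q)$. Your separate running-time argument and your discussion of how $d_0$ is fixed are harmless but unnecessary. Theorem~\ref{main:thm:KL} already asserts $\mathrm{poly}(|\F|)$-time computability, so it gives every listed property, not ``every property except the running time''. The coset $d_0\mu_m$ only has to be some fixed coset of the node's domain, as you yourself observe.
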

\begin{proof}
Lemma~\ref{main:lem:N} gives (N1)--(N4) with $p=q$, and Theorem~\ref{main:thm:KL} applies; the field is $\F_q$, so
$\mathrm{poly}(|\F|)=\mathrm{poly}(q)$.
\end{proof}

\subsection{The RS PCPP in the same-radix configuration}\label{main:sec:pcpp}

Fix $(n,q,\Hh)\in\Wa$ and its tree $\mathcal T(n)$.

\paragraph{The PCPP $\mathcal P(v)$} for a node $v$ (input $g:D_v\to\F$, claim $g\in\RS[D_v,m_v]$):
\begin{itemize}[nosep]
\item \emph{Leaf} (prime order $p$): empty proof; one test reads all of $g$ ($8p$ symbols) and accepts iff
$g\in\RS[D_v,p]$.
\item \emph{Internal node}: the proof is $\pi_v=(\Sym(v),(\pi_w)_w)$ with $w$ over the $16m_2+14m_1$ children of
Lemma~\ref{main:lem:CS}(d), designed by $\rcan$, and $\pi_w$ a proof for the child line $g_w$ (read from $\pi_v$
via $\iota_w$) in $\mathcal P(w)$.
\item \emph{Honest prover}: sample $\hat r$ uniformly, put $\hat y=\hat g+\hat r$, compute the lifts, and
recurse with fresh independent masks at every node.
\item \emph{Tests} (five families): $U_{\rm cons}=\{x\in D\}$ checks $g(x)=y(x)-r(x)$; four families run the
children's verifiers on the $y$-rows, $y$-columns, $r$-rows and $r$-columns.
\item \emph{Coin multiset}: with $N_\phi=|U_\phi|$ for the five families $\phi$, $N_{\max}=\max_\phi N_\phi$,
$U_v=\biguplus_\phi\lceil N_{\max}/N_\phi\rceil\cdot U_\phi$; each family has total weight in
$[N_{\max},2N_{\max})$, hence probability $\ge\frac1{10}$; within a family all children have the same order and
eight cosets, so the child is uniform and its coin is uniform given the child.
\end{itemize}
\paragraph{The root (degree adjustment, the device of \cite{BS08}).} For $f$ claimed in $\RS[\Hh,u]$, $u\le n$: if
$u=n$ run $\mathcal P(\text{root})$; if $u<n$ run two independent copies, on $f$ and on $f'(x)=x^{n-u}f(x)$ (a
read of $f'(x)$ is a read of $f(x)$ times a public constant), with coin multiset $U_A\uplus U_B$.

\paragraph{Soundness.} The multiplicative smooth RS PCPP with ``subgroup $\cup$ coset'' lines is \cite[\S7]{BS08},
which proves soundness for 2-smooth orders (Thm.~3.4) and indicates in Remark~3.5 that the proof extends to
polylog-smooth orders. Every order here has odd coprime factors, so we give a complete soundness proof with the
ingredients of \cite{BS08}.

\begin{lemma}[bivariate step; via {\cite[Lemma~6.13]{BS08}}]\label{main:lem:S}
Let $D$ be eight cosets of $\mu_m$, $m=m_1m_2$ coprime, $A$ seven cosets of $\mu_{m_1}$ in distinct
$\mu_m$-cosets disjoint from $D$, $B=D^{m_1}$. For $z:D\to\F$ and $Q:A\times B\to\F$ let
$R_b=z|_{F_b}\cup Q(\cdot,b)|_A$, $C_a=Q(a,\cdot)$,
$\varepsilon_{\rm row}=\mathbb E_{b\in B}\delta(R_b,\RS[F_b\cup A,m_1])$ and
$\varepsilon_{\rm col}=\mathbb E_{a\in A}\delta(C_a,\RS[B,m_2])$. Then
$\delta(z,\RS[D,m])\le C_{\rm biv}(\varepsilon_{\rm row}+\varepsilon_{\rm col})$ with $C_{\rm biv}=10+2c_0=266$.
\end{lemma}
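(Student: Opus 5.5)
The plan is to reduce to the bivariate proximity lemma of Ben-Sasson and Sudan \cite[Lemma~6.13]{BS08} (the Polishchuk--Spielman test on a product set), applied to a product grid on which $z$ and $Q$ together define a single bivariate word. Set $S:=D\cup A$, viewed as a union of fifteen $\mu_{m_1}$-cosets. Every row domain $F_b\cup A$ is a subset of $S$, but only its own fibre is filled by $z$. To get a genuine product grid I would use the coordinates $(X,Y)=(x,x^{m_1})$ and consider the grid $A\times B$ together with the ``diagonal'' points $\{(x,x^{m_1}):x\in D\}$. Row $b$ is then $A\cup F_b$ at height $b$.

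The step to use \cite[Lemma~6.13]{BS08} is the standard one. Let $\hat Q$ be the bivariate polynomial of bidegree $<(m_1,m_2)$ that $Q$ is close to on $A\times B$. Find it by applying the Polishchuk--Spielman bivariate testing theorem (the constant $c_0$, presumably $128$, comes from there) to the restriction $Q|_{A\times B}$. Its row distances are at most $\frac{|F_b\cup A|}{|A|}\delta(R_b,\cdot)=\frac87\delta(R_b,\cdot)$, and its column distances are $\varepsilon_{\rm col}$. The product $A\times B$ has $|A|=7m_1\ge m_1$ and $|B|=8m_2\ge m_2$, so the degree ratios are at most $1/7$ and $1/8$, which is small enough for that theorem. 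This gives
\[
\Delta\big(Q,\hat Q|_{A\times B}\big)\le c_0\big(\tfrac87\varepsilon_{\rm row}+\varepsilon_{\rm col}\big).
\]

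Next, transfer to $D$. Let $p(x):=\hat Q(x,x^{m_1})$, which has degree $<m$ because $i+m_1j<m$. For each row $b$, let $\rho_b$ be the codeword of degree $<m_1$ nearest to $R_b$. The polynomial $\hat Q(\cdot,b)$ is also of degree $<m_1$. If both $\rho_b$ and $\hat Q(\cdot,b)$ agree with $R_b$ on more than half of $A$, then by distance (degree $<m_1$ with agreement on more than $m_1$ points, since $7m_1/2>m_1$ after subtracting errors) they coincide. In that case $z|_{F_b}$ disagrees with $p$ on at most $|F_b\cup A|\,\delta(R_b)$ points.

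Call the remaining rows bad: those where $\delta(R_b)$ is large, or where $Q(\cdot,b)$ is far from $\hat Q(\cdot,b)$ on $A$. By Markov, the fraction of bad rows is at most a constant times $\varepsilon_{\rm row}+\Delta$. Summing over fibres, which partition $D$ with $|F_b|=m_1$, gives
\[
\delta(z,p|_D)\le 8\,\varepsilon_{\rm row}+O(\varepsilon_{\rm row}+c_0\varepsilon_{\rm col}).
\]
Here the factor $8=|F_b\cup A|/|F_b|$ converts row-relative distance to fibre-relative distance. Careful bookkeeping of the Markov thresholds (take ``half of $A$'' as the cutoff, costing a factor $2$) should yield $10+2c_0$. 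I would choose thresholds so that the constants come out exactly as $C_{\rm biv}$.

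The main obstacle is the first step: the domain points are not on the grid $A\times B$, so Lemma~6.13 must be applied to $A\times B$ alone while still controlling rows by $R_b$, whose majority part is $A$. I must check that $R_b$ being close forces $Q(\cdot,b)|_A$ close with only a $\frac87$ loss. I must also check that coprimality (so that $B$ is eight $\mu_{m_2}$-cosets and $x\mapsto(x,x^{m_1})$ is injective on $D$, by Lemma~\ref{main:lem:CS}) is all that is needed for the Ben-Sasson--Sudan lemma's hypotheses on an arbitrary product set.
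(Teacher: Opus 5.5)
Your outline follows the paper's proof. You restrict $R_b$ to $A$, which bounds the row distances of $Q$ on $A\times B$ by $\frac{8}{7}\varepsilon_{\rm row}$. Lemma~6.13 of Ben-Sasson and Sudan, with $m_1-1\le|A|/4$ and $m_2-1\le|B|/8$, then gives $P$ of bidegree $<(m_1,m_2)$ with $\Delta(Q,P)\le\eta_0:=c_0(\frac{8}{7}\varepsilon_{\rm row}+\varepsilon_{\rm col})$. You then compare the nearest row codeword $p_b$ with $P(\cdot,b)$ and pull back along $F(X)=P(X,X^{m_1})$.

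\textbf{The coincidence criterion is wrong.} Suppose $p_b$ and $P(\cdot,b)$ each agree with $Q(\cdot,b)$ on more than half of $A$. The two agreement sets each have size $>7m_1/2$ inside a set of size $7m_1$, so they are only guaranteed to share one or two points. Two polynomials of degree $<m_1$ are forced to coincide only by $m_1$ common agreements. The correct condition is on the sum. Let $\alpha_b$ and $\beta_b$ be the disagreement fractions on $A$ of $p_b$ and of $P(\cdot,b)$ with $Q(\cdot,b)$. Then $p_b\ne P(\cdot,b)$ forces $\alpha_b+\beta_b>\frac{6}{7}$, because distinct polynomials of degree $<m_1$ agree on at most $m_1-1$ of the $7m_1$ points of $A$.

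\textbf{The fix, and the constant.} The paper calls $b$ bad iff $p_b\ne P(\cdot,b)$ and applies Markov to $\alpha_b+\beta_b$ at threshold $\frac{6}{7}$. It uses $\mathbb E_b\alpha_b\le\frac{8}{7}\varepsilon_{\rm row}$ and $\mathbb E_b\beta_b=\Delta(Q,P)\le\eta_0$, which gives $\Pr_b[\text{bad}]\le\frac{7}{6}(\frac{8}{7}\varepsilon_{\rm row}+\eta_0)$. Adding the $8\varepsilon_{\rm row}$ contributed by good rows yields $\delta(z,F)\le(\frac{28}{3}+\frac{4}{3}c_0)\varepsilon_{\rm row}+\frac{7}{6}c_0\varepsilon_{\rm col}\le(10+2c_0)(\varepsilon_{\rm row}+\varepsilon_{\rm col})$.

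The lemma asserts the specific constant $266$, so the bookkeeping cannot be left for tuning afterwards. The choices you indicate overshoot with $c_0=128$:
\begin{itemize}
\item Markov at threshold $\frac{1}{2}$ on the sum gives an $\varepsilon_{\rm row}$-coefficient $8+\frac{16}{7}(1+c_0)\approx303$.
\item A threshold $\frac{3}{7}$ applied to each polynomial separately gives about $352$.
\end{itemize}
Markov on the sum at the sharp threshold $\frac{6}{7}$ is the clean way to land inside $10+2c_0$.
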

\begin{proof}
\cite[Lemma~6.13]{BS08} (proved in its \S6.6 with $c_0=128$ from \cite{PS94} via \cite[Thm.~4.2.19]{Spi95},
which holds over any field): for all $\mathcal A,\mathcal B\subseteq\F$, $d_1\le|\mathcal A|/4$,
$d_2\le|\mathcal B|/8$ and $f:\mathcal A\times\mathcal B\to\F$,
$\delta_{(d_1,d_2)}(f)\le c_0(\delta_{(d_1,*)}(f)+\delta_{(*,d_2)}(f))$.
(1) Let $p_b$ be a nearest codeword to $R_b$; its disagreements inside $A$ number $\le8m_1\delta(R_b)$ of
$|A|=7m_1$, so $f':=Q$ has $\delta_{(m_1-1,*)}(f')\le\frac87\varepsilon_{\rm row}$ and
$\delta_{(*,m_2-1)}(f')=\varepsilon_{\rm col}$.
(2) Lemma~6.13 applies with $(\mathcal A,\mathcal B)=(A,B)$ ($m_1-1\le7m_1/4$, $m_2-1\le8m_2/8$) and gives $P$ with
$\deg_XP<m_1$, $\deg_YP<m_2$, $\delta(f',P)\le\eta_0:=c_0(\frac87\varepsilon_{\rm row}+\varepsilon_{\rm col})$.
(3) Call $b$ bad if $p_b\ne P(\cdot,b)$; two distinct polynomials of degree $<m_1$ agree on $\le m_1-1$ points of
$A$, so on a bad row the two disagreement fractions of $Q(\cdot,b)|_A$ sum to $\ge6/7$; by Markov,
$\Pr_b[\text{bad}]\le\frac76(\frac87\varepsilon_{\rm row}+\eta_0)$.
(4) $F(X):=P(X,X^{m_1})$ has degree $\le m-1$ and $F=P(\cdot,b)$ on $F_b$; on a good row $z|_{F_b}$ and $F|_{F_b}$
differ on $\le8m_1\delta(R_b)$ points; fibres have equal size, so $\delta(z,F)\le\Pr[\text{bad}]+8\varepsilon_{\rm row}$.
(5) In total $\delta(z,F)\le(\frac{28}3+\frac43c_0)\varepsilon_{\rm row}+\frac76c_0\varepsilon_{\rm col}\le
C_{\rm biv}(\varepsilon_{\rm row}+\varepsilon_{\rm col})$.
\end{proof}

\begin{theorem}[strong soundness, perfect completeness]\label{main:thm:SND}
For a node of height $h$, every input $g$ and every proof,
$\Pr[\text{reject}]\ge\varrho_h\,\delta(g,\RS[D,m])$ with $\varrho_h=(10C_{\rm biv})^{-h}=2660^{-h}$. At the root
with $u<n$, $\Pr[\text{reject}]\ge\frac38\varrho_d\,\delta(f,\RS[\Hh,u])$. Completeness is perfect.
\end{theorem}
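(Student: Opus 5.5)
Completeness is direct: by Lemma~\ref{main:lem:CS}(e), for every value of the mask each honest child line is a codeword of the child's code. The consistency test $g(x)=y(x)-r(x)$ holds by construction. Leaves receive codewords and accept. Induction on height then gives that every test of $U_v$ accepts. At the root with $u<n$, $f'=x^{n-u}f$ has degree $<n$ whenever $\deg f<u$, so both copies accept.

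For strong soundness I would induct on the height $h$. At a leaf ($h=0$) the single test rejects with probability $1$ whenever $\delta(g,\RS)>0$, so $\varrho_0=1$ works. For an internal node, fix an input $g$ and a proof. Let $\delta=\delta(g,\RS[D,m])$, $\delta_{\rm cons}=\Pr_x[g(x)\ne y(x)-r(x)]$, and let $\varepsilon^y_{\rm row},\varepsilon^y_{\rm col},\varepsilon^r_{\rm row},\varepsilon^r_{\rm col}$ be the expected distances of the $y$- and $r$-lines to their codes, as in Lemma~\ref{main:lem:S}. Apply Lemma~\ref{main:lem:S} twice: with $(z,Q)=(y|_D,Q_y)$ and $A=A_y$, and with $(r|_D,Q_r)$ and $A=A_r$. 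Both are legitimate by Lemma~\ref{main:lem:CS}(c). This gives $\delta(y,\RS)\le C_{\rm biv}(\varepsilon^y_{\rm row}+\varepsilon^y_{\rm col})$, and likewise for $r$. Since $\RS$ is linear, $\delta\le\delta_{\rm cons}+\delta(y,\RS)+\delta(r,\RS)$. Within each family the child is uniform and its coin is uniform given the child. So the induction hypothesis gives rejection probability at least $\varrho_{h-1}\varepsilon^f_{\rm row}$ conditioned on the $f$-row family, and similarly for columns. Children have height at most $h-1$, and $\varrho$ is decreasing, so we may use $\varrho_{h-1}$ for all of them. The fixed test $U_{\rm cons}$ rejects with probability $\delta_{\rm cons}$.

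Each of the five families has probability at least $\frac1{10}$. Hence
\[
\Pr[\text{reject}]\ \ge\ \tfrac1{10}\bigl(\delta_{\rm cons}+\varrho_{h-1}\textstyle\sum\varepsilon\bigr)\ \ge\ \tfrac{\varrho_{h-1}}{10}\bigl(\delta_{\rm cons}+\textstyle\sum\varepsilon\bigr).
\]
Combining this with $\delta\le\delta_{\rm cons}+C_{\rm biv}\sum\varepsilon\le C_{\rm biv}(\delta_{\rm cons}+\sum\varepsilon)$ gives $\Pr[\text{reject}]\ge\delta\,\varrho_{h-1}/(10C_{\rm biv})=\varrho_h\delta$.

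The main care point is that the row lines read $y|_{F_b}$ from the proof, not $g$. This is exactly why Lemma~\ref{main:lem:S} is applied to $y$ and $r$ rather than to $g$, with the consistency test linking them. A second point is that the distances in Lemma~\ref{main:lem:S} are averaged over uniform $b\in B$ and $a\in A$. The family's child distribution must therefore match this: rows are indexed by $B$ with equal-size domains, and columns by $A_f$. All of these statements are made on $\mathrm{dec}(\pi)$, so arbitrary binary proofs are covered.

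For the root with $u<n$, let $\delta_f=\delta(f,\RS[\Hh,u])$. I would show that $\max(\delta(f,\RS[\Hh,n]),\delta(f',\RS[\Hh,n]))\ge\frac34\delta_f$. Suppose $f$ is $\alpha$-close to $p$ and $f'$ is $\alpha'$-close to $p'$, both of degree $<n$. Then $x^{n-u}p$ and $p'$ agree on at least $(1-\alpha-\alpha')s$ points of $\Hh$, while $x^{n-u}p$ has degree $<2n$ and $s=8n$. If $\alpha+\alpha'<\frac34$, the two polynomials agree on more than $2n$ points and so coincide. Then $p$ has degree $<u$, because $x^{n-u}p=p'$ forces the coefficients of $p$ of degree $\ge u$ to vanish, and so $\delta_f\le\alpha$. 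In either case $\max(\alpha,\alpha')\ge\min(\frac38,\frac12\delta_f)$, and since $\delta_f\le1$ this exceeds $\frac38\delta_f$ when $\delta_f\le\frac34$; the case of larger $\delta_f$ follows similarly. Each copy is drawn with probability at least $\frac12$, since $U_A$ and $U_B$ have equal size. This gives rejection probability at least $\frac12\varrho_d\cdot\frac34\delta_f=\frac38\varrho_d\delta_f$, where $d$ is the height of the root.
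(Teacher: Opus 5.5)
Your completeness argument and the internal-node induction match the paper's: two applications of Lemma~\ref{main:lem:S}, the triangle inequality through nearest codewords of $y$ and $r$, and the family-weight bound $\frac{\varrho_{h-1}}{10}(\delta_{\rm cons}+\sum\varepsilon)$.

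\textbf{The gap is in the degree-adjustment step at the root.}

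\emph{What you actually prove.} You announce $\max(\alpha,\alpha')\ge\frac34\delta_f$ and use it in the last line as $\Pr[\text{reject}]\ge\frac12\varrho_d\max(\alpha,\alpha')\ge\frac38\varrho_d\delta_f$. Your case analysis does not give this. It yields only $\max(\alpha,\alpha')\ge\min(\frac38,\delta_f)\ge\frac38\delta_f$; you yourself write $\min(\frac38,\frac12\delta_f)$ and conclude that it exceeds $\frac38\delta_f$. Through the maximum this gives only $\frac3{16}\varrho_d\delta_f$.

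\emph{The announced inequality is false in general.} Take $u=1$, $n\ge2$, and $S_3\subseteq\hat H$ with $|S_3|=2n-2$. Write $\prod_{s\in S_3}(X-s)=X^{n-1}p-p'$ with $\deg p=n-1$ and $\deg p'<n-1$; then $p'(0)\ne0$ because $0\notin\hat H$. Split $\hat H\setminus S_3$ into halves $S_1,S_2$ of size $3n+1$. Put $f=p$ on $S_2\cup S_3$ and $f=X^{-(n-1)}p'$ on $S_1$. Then $f'=X^{n-1}f$ equals $p'$ on $S_1\cup S_3$, so $\alpha,\alpha'\le\frac38+\frac1{8n}\le\frac7{16}$. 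On the other hand, a constant $q$ agrees with $f$ on at most $n-1$ points of $S_2\cup S_3$ and at most $n-1$ points of $S_1$, since $X^{n-1}q-p'\ne0$. Hence $\delta_f>\frac34$ and $\frac34\delta_f>\frac9{16}$.

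\emph{The repair.} Keep the sum instead of the maximum. Each copy is drawn with probability exactly $\frac12$, so $\Pr[\text{reject}]\ge\frac12\varrho_d(\alpha+\alpha')$. Your two cases then give $\alpha+\alpha'\ge\frac34\delta_f$:
\begin{itemize}
\item if $\alpha+\alpha'\le\frac34$, then $X^{n-u}p$ and $p'$ agree on at least $2n$ points, so they coincide, and $\delta_f\le\alpha\le\alpha+\alpha'$;
\item otherwise $\alpha+\alpha'>\frac34\ge\frac34\delta_f$.
\end{itemize}
This gives $\frac38\varrho_d\,\delta(f,\mathrm{RS}[\hat H,u])$, and it is exactly how the paper argues.
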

\begin{proof}
Completeness: Lemma~\ref{main:lem:CS}(e) and $y-r=g$. Leaves reject every non-codeword. At an internal node let
$\rho=\Pr_x[g(x)\ne y(x)-r(x)]$ and let the four $\varepsilon$'s be the average distances of the child inputs of
the four families. By induction and the family weights, $\Pr[\text{reject}]\ge\frac{\varrho_{h-1}}{10}(\rho+\sum\varepsilon)$.
Lemma~\ref{main:lem:S} applies to the $y$-lift ($z=y$, $A=A_y$) and to the $r$-lift ($z=r$, $A=A_r$), giving
$\delta(y),\delta(r)\le C_{\rm biv}(\cdots)$; with nearest codewords $Y,R$, $Y-R$ is a codeword and
$\delta(g)\le\rho+\delta(y)+\delta(r)\le C_{\rm biv}(\rho+\sum\varepsilon)$. Root with $u<n$: let $F,G$ be nearest to
$f,f'$ in $\RS[\Hh,n]$ at distances $\delta_1,\delta_2$; $X^{n-u}F$ and $G$ have degree $<2n$ and agree on
$\ge(1-\delta_1-\delta_2)8n$ points. If $\delta_1+\delta_2\le\frac34$ they are equal, so $\deg F<u$ and
$\delta(f,\RS[\Hh,u])\le\delta_1$; otherwise $\delta(f,\RS[\Hh,u])\le1<\frac43(\delta_1+\delta_2)$. Either way
$\Pr[\text{reject}]\ge\frac12\varrho_d(\delta_1+\delta_2)\ge\frac38\varrho_d\,\delta(f,\RS[\Hh,u])$.
\end{proof}

\begin{proposition}[length, coins, queries, residual circuits]\label{main:prop:LEN}
Let $s=8n$ and $d$ the depth of $\mathcal T(n)$, so $2^d\le\ln s$. A node of order $m$ has $128m$ own symbols, and the
proof length $\mathcal L$ (input excluded) is $8m\lambda_h$ with $\lambda_h=16+30\lambda_{h-1}$, $\lambda_0=0$;
so $\mathcal L=\frac{16}{29}(30^d-1)s\le s(\ln s)^{\log_230}$, doubled at the root when $u<n$. The coin multiset
has $|U(m)|\le8m\cdot80^h$. A test reads either three symbols or a whole leaf line of $8p\le16(\ln s)^6$
symbols; the leaf predicate is a parity check with a circuit of at most $C_{\rm leaf}\,p^2\log_2^2q\le
4C_{\rm leaf}(\ln s)^{12}\log_2^2q$ gates for an absolute $C_{\rm leaf}$, computable from the coin in time
$\mathrm{poly}(s)$. The verifier is non-adaptive; the prover is $\PPTF$.
\end{proposition}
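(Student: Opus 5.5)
The plan is a direct count by induction on the height $h$ of a node, using the closure structure of Lemma~\ref{main:lem:CS}(d) and the balanced tree of Lemma~\ref{main:lem:SM2}(3). Since $\mathcal T(n)$ is a complete binary tree and every child of a node takes the split of its own order in $\mathcal T(n)$, all children of a node of height $h$ have height $h-1$. This is what makes the recursions closed-form.

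\emph{Own symbols and length.} The domain symbols $y|_D$ and $r|_D$ contribute $2\cdot8m$. Each lift lives on $7m_1\times8m_2$ points by Lemma~\ref{main:lem:CS}(b),(c), so the two lifts contribute $2\cdot56m$. This gives $128m$ own symbols. The children are of two kinds:
\begin{itemize}[nosep]
\item $16m_2$ rows of order $m_1$ on $8m_1$ points;
\item $14m_1$ columns of order $m_2$ on $8m_2$ points.
\end{itemize}
Write the subtree length as $\mathcal L(m)=8m\lambda_h$. Then
\[
8m\lambda_h=128m+16m_2\cdot8m_1\lambda_{h-1}+14m_1\cdot8m_2\lambda_{h-1}=8m(16+30\lambda_{h-1}),
\]
with $\lambda_0=0$ because leaves have empty proofs. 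Solving the recursion gives $\lambda_d=\frac{16}{29}(30^d-1)$. Lemma~\ref{main:lem:SM2}(2) gives $2^d\le\LL=\ln\NN\le\ln s$, so $30^d=(2^d)^{\log_230}\le(\ln s)^{\log_230}$. At the root with $u<n$ there are two copies, which doubles the length.

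\emph{Coins.} I will prove $|U(m)|\le8m\cdot80^h$ by induction. The base case is a leaf, where $|U|=1\le8p$. At an internal node, the consistency family has $8m$ tests. By the induction hypothesis, the two row families each have at most $8m_2\cdot8m_1\cdot80^{h-1}=64m\cdot80^{h-1}$ tests. The two column families each have at most $7m_1\cdot8m_2\cdot80^{h-1}\le64m\cdot80^{h-1}$ tests. Hence $N_{\max}\le64m\cdot80^{h-1}$. In the weighted multiset each family has weight $<2N_{\max}$, so
\[
|U_v|<5\cdot2N_{\max}\le640m\cdot80^{h-1}=8m\cdot80^h.
\]

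\emph{Tests and circuits.} Unwinding the recursion, every test is either a consistency test at some node or a leaf test. A consistency test reads $g(x)$, $y(x)$ and $r(x)$, and $g(x)$ is itself a symbol (an input position or a parent symbol via $\iota_w$). A leaf test reads the whole leaf line. A leaf prime satisfies $p\le2\LL^6\le2(\ln s)^6$ by Lemma~\ref{main:lem:SM2}(2), so a leaf line has $8p\le16(\ln s)^6$ symbols. The leaf predicate is membership in the linear code $\RS[D_v,p]$. A fixed parity-check matrix of size $7p\times8p$ for this code is computable by interpolation in time $\mathrm{poly}(p,\log q)$. The circuit first decodes each $\lambda_q$-bit block with $\mathrm{dec}$, then evaluates $7p$ inner products mod $q$ with schoolbook arithmetic. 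This costs $O(p^2\lambda_q^2)$ gates, which is the bound $C_{\rm leaf}p^2\log_2^2q\le4C_{\rm leaf}(\ln s)^{12}\log_2^2q$.

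The map from a coin to its test is computed by walking down the tree. At each level the walker picks a family, then a child, then the child's coin. Each step is an arithmetic operation on the coset representatives produced by $\rcan$, which runs in time $O(q\log q)$ by Proposition~\ref{main:prop:total}. Since $q=\mathrm{poly}(n)$ by Lemma~\ref{main:lem:field}, the walk takes time $\mathrm{poly}(s)$.

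\emph{Non-adaptivity and the prover.} All query positions are determined by the coin, so the verifier is non-adaptive. The prover draws one uniform $\hat r\in\F^m$ per node from the field oracle and otherwise does polynomial-time interpolation and evaluation, over a polynomial-size tree. So the number of oracle calls is bounded by a fixed polynomial, and the prover is $\PPTF$.

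The only delicate points are bookkeeping: that all children of a node share one height (the complete-tree property), and the factor $10$ in the family weighting, which is exactly what makes $80^h$ close the induction.
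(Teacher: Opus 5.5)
Your proposal is correct and follows the paper's proof essentially line by line: the own-symbol count $16m+112m=128m$, the child contribution $(16m_2\cdot8m_1+14m_1\cdot8m_2)\lambda_{h-1}=240m\lambda_{h-1}$, the coin induction through $N_{\max}\le64m\cdot80^{h-1}$ with the factor $10$ from the family weights, and the leaf bound $p\le2(\ln s)^6$ with a $7p\times8p$ parity check. You also make explicit several points the paper leaves implicit: siblings have equal height because $\mathcal T(n)$ is complete, the coin-to-test walk, the cost of $\mathrm{dec}$, and why the prover lies in the uniform-field-oracle model. None of these changes the argument.
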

\begin{proof}
Own symbols: $2\cdot8m$ domain values and $2\cdot7m_1\cdot8m_2$ lift values. Children contribute
$(16m_2\cdot8m_1+14m_1\cdot8m_2)\lambda_{h-1}=240m\lambda_{h-1}$. Coins: $N_{\rm cons}=8m$, and each child family
has $N_\phi\le64m\cdot80^{h-1}$ by induction, so $|U|\le10N_{\max}\le8m\cdot80^h$. Leaves have $8p$ symbols with
$p\le2\LL^6\le2(\ln s)^6$ (Lemma~\ref{main:lem:SM2}); the parity check multiplies a $7p\times8p$ matrix by the
read vector, and multiplication in $\F$ has circuits of size $O(\log^2q)$. $2^d=m_\#\le\LL\le\ln s$.
\end{proof}

\subsection{Zero knowledge of the PCPP}\label{main:sec:zk}

\begin{lemma}[leak linear algebra; cf.\ {\cite[Lemma~4.3, Claim~7.7]{BCFGRS17}}]\label{main:lem:DL}
At a node, the law of the answers to $V\subseteq\Sym$ under input $\hat g$ equals that under $\hat g'$ iff
$\lambda(\hat g-\hat g')=0$ for all $\lambda\in\Lambda(V)$.
\end{lemma}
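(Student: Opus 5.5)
The plan is to write the answers to $V$ as a linear image of the pair $(\hat g,\hat r)$ and compare Gaussian-style pushforwards of the uniform mask. Write $V$ as the concatenation of its $y$-side functionals $\Phi=(\phi_1,\dots,\phi_k)$ and $r$-side functionals $\Psi=(\psi_1,\dots,\psi_l)$; the answer vector is
\[
\big(\Phi(\hat g+\hat r),\ \Psi(\hat r)\big)=\big(\Phi\hat g,\,0\big)+M\hat r,\qquad M:=\begin{pmatrix}\Phi\\ \Psi\end{pmatrix}:\F^m\to\F^{k+l}.
\]
Since $\hat r$ is uniform on $\F^m$, $M\hat r$ is uniform on the subspace $\operatorname{im}M$. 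Hence the law of the answers under $\hat g$ is the uniform law on the affine subspace $(\Phi\hat g,0)+\operatorname{im}M$, and two such laws coincide iff the two affine subspaces coincide, i.e.\ iff $(\Phi(\hat g-\hat g'),0)\in\operatorname{im}M$. So I will reduce the statement to showing that, for $u:=\hat g-\hat g'$, we have $(\Phi u,0)\in\operatorname{im}M$ iff $\lambda(u)=0$ for all $\lambda\in\spn\Phi\cap\spn\Psi$.

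For this, I will use duality: a vector $v$ lies in $\operatorname{im}M$ iff every $(\alpha,\beta)\in\F^k\times\F^l$ with $\alpha^{\top}\Phi+\beta^{\top}\Psi=0$ (the left kernel of $M$) satisfies $\alpha\cdot v_1+\beta\cdot v_2=0$. Applied to $v=(\Phi u,0)$, the condition becomes $(\alpha^{\top}\Phi)(u)=0$ for all such $(\alpha,\beta)$. Now, the set of functionals $\alpha^{\top}\Phi$ arising from pairs with $\alpha^{\top}\Phi=-\beta^{\top}\Psi$ is exactly $\spn\Phi\cap\spn\Psi=\Lambda(V)$: one inclusion is immediate, and conversely any $\lambda$ in the intersection can be written as $\alpha^{\top}\Phi=-\beta^{\top}\Psi$. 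Both directions of the lemma then follow.

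One minor care point concerns repeated positions and the case where a symbol is both a domain and a lift read. Since $V$ is a set of positions and each position is a single functional on one side, there is no issue; the formal treatment is just the lists $\Phi$ and $\Psi$. I do not expect a real obstacle. The only step needing a line of justification is that the pushforward of the uniform distribution under a linear map is uniform on its image, since the fibres are cosets of $\ker M$ and so have equal size. The identification of the left kernel with $\Lambda(V)$ is the heart of the argument but is routine linear algebra over $\F$.
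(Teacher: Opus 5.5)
Your proof is correct and follows the paper's argument: the answers are uniform on the coset $(\Phi\hat g,0)+\operatorname{im}M$, two such laws agree iff $(\Phi(\hat g-\hat g'),0)\in\operatorname{im}M$, and linear duality turns this into vanishing on $\spn\Phi\cap\spn\Psi$. The only cosmetic difference is where you dualise. The paper works on the input side, rewriting the membership as $\hat g-\hat g'\in\ker\Phi+\ker\Psi$ and using $(\ker\Phi+\ker\Psi)^\perp=\spn\Phi\cap\spn\Psi$. You work on the answer side through the left kernel of $M$, as the paper's appendix version does (Lemma~\ref{neg:lem:DL}).
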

\begin{proof}
The answers are $(\Phi(\hat g+\hat r),\Psi\hat r)$ with $\hat r$ uniform: the uniform law on
$(\Phi\hat g,0)+\mathcal M$, $\mathcal M=\{(\Phi\hat r,\Psi\hat r)\}$. Two such laws agree iff
$(\Phi(\hat g-\hat g'),0)\in\mathcal M$ iff $\hat g-\hat g'\in\ker\Phi+\ker\Psi$, and
$(\ker\Phi+\ker\Psi)^\perp=\spn\Phi\cap\spn\Psi$.
\end{proof}
So a locator $W_\nu(V)$ is a \emph{dependence set}: codewords agreeing on $W_\nu(V)$ induce the same law on $V$.

\begin{lemma}[tree dependence]\label{main:lem:C}
For a node $v$ and a set $S$ of positions of its subtree proof, write $S_v=S\cap\Sym(v)$, $S_w=S\cap\pi_w$, and
define recursively ($W^T_{\rm leaf}(\emptyset)=\emptyset$)
\[V^{\rm eff}_v(S)=S_v\cup\textstyle\bigcup_w\iota_w\big(W^T_w(S_w)\big),\qquad W^T_v(S)=W_v\big(V^{\rm eff}_v(S)\big),\]
with $W_v$ the map of Theorem~\ref{main:thm:KL}. Then (i) $W^T_v(S)\subseteq D_v$ is a dependence set for $S$ (for
inputs in $\RS[D_v,m_v]$); (ii) it is monotone in $S$; (iii) $|W^T_v(S)|\le\aKL^{\,d-\mathrm{dep}(v)}|S|$;
(iv) it depends on the positions $S$ only and is computable in time $\mathrm{poly}(q)\cdot|S|$.
\end{lemma}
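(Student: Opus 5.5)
We argue by induction on the height of $v$; all four claims are proved together. At a leaf the proof is empty, so $S=\emptyset$ and $W^T=\emptyset$, and every claim is trivial. For an internal node $v$, assume (i)--(iv) for every child $w$.

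\emph{Monotonicity and computability.} Claims (ii) and (iv) are immediate. If $S\subseteq S'$, then $S_v\subseteq S'_v$ and $S_w\subseteq S'_w$. By induction $W^T_w(S_w)\subseteq W^T_w(S'_w)$, so $V^{\rm eff}_v(S)\subseteq V^{\rm eff}_v(S')$. Theorem~\ref{main:thm:KL} then gives $W_v(V^{\rm eff}_v(S))\subseteq W_v(V^{\rm eff}_v(S'))$. Every ingredient depends on positions only: $\iota_w$ is a fixed injective map (Lemma~\ref{main:lem:CS}(f)), and $W_v$ is position-only. For computability, only children $w$ with $S_w\neq\emptyset$ need to be visited, since $W^T_w(\emptyset)=\emptyset$; this follows by induction, because $W_v(\emptyset)=\emptyset$ (the coset term appears only when $|V|\ge(\tau+1)^2\ge1$). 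Each position of $S$ therefore lies on one root-to-node path. The work is $\mathrm{poly}(q)$ per node on these paths, with depth at most $d$, giving $\mathrm{poly}(q)\cdot|S|$ in total.

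\emph{Size.} Because $\iota_w$ is injective,
\[
|V^{\rm eff}_v(S)|\le|S_v|+\sum_w\aKL^{\,d-\mathrm{dep}(v)-1}|S_w|\le\aKL^{\,d-\mathrm{dep}(v)-1}|S|,
\]
where the $S_v,S_w$ partition $S$ and $\aKL\ge1$. Applying $|W_v(V)|\le\aKL|V|$ gives (iii). Here I take $\mathrm{dep}(\text{leaf})\le d$, and I interpret a leaf at depth $d$ as having exponent $0$, which is harmless since leaves contribute nothing.

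\emph{Dependence (the main step).} Fix inputs $\hat g,\hat g'\in\RS[D_v,m_v]$ that agree on $W^T_v(S)$. By Theorem~\ref{main:thm:KL}, $W_v$ is a locator, so by Lemma~\ref{main:lem:DL} the joint law of the own symbols in $V^{\rm eff}:=V^{\rm eff}_v(S)$ is the same under $\hat g$ and $\hat g'$. Fix any such joint value $\sigma$ of those symbols. Conditional on $\sigma$ (a condition on the mask $\hat r$ of $v$), each child input $g_w$ is a codeword for every value of the mask (Lemma~\ref{main:lem:CS}(e)). Moreover, $g_w$ restricted to $W^T_w(S_w)$ is read off from $\sigma$, because $\iota_w(W^T_w(S_w))\subseteq V^{\rm eff}$. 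The child masks are independent of $\hat r$ and of one another. So, conditional on the full mask $\hat r$, the answers on $S_w$ have a law that depends, by the induction hypothesis (i) for $w$, only on $g_w|_{W^T_w(S_w)}$, hence only on $\sigma$, and these laws are independent across $w$.

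The joint law of the answers to $S$ is therefore a mixture: draw $\sigma$ from its law, then draw independent child answers from kernels $K_w(\cdot\mid\sigma)$. The own answers $S_v\subseteq V^{\rm eff}$ are a function of $\sigma$. Neither the law of $\sigma$ nor the kernels depend on whether the input is $\hat g$ or $\hat g'$. The delicate point is that the kernels are functions of $\sigma$ alone, not of the rest of $\hat r$; this holds because the law of $\pi_w$ given $g_w$ depends on $\hat r$ only through $g_w$, and the relevant part of $g_w$ is determined by $\sigma$. This gives (i). The same argument covers the degree-adjusted root, where a read of $f'$ is a constant multiple of a read of $f$.
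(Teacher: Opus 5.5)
Your proof is correct and follows the paper's argument. You use induction on height, get monotonicity and the size bound from the corresponding properties of $W_v$ (with $a_*\ge1$), and prove dependence by conditioning on $\sigma=\mathrm{Sym}(v)|_{V^{\rm eff}}$. The law of $\sigma$ is the same under $g$ and $g'$ by the locator property and Lemma~\ref{main:lem:DL}, and the child answers are drawn from product kernels of $\sigma$ that do not depend on $g$.

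The one point you leave implicit is that $V^{\rm eff}_v(S)\subseteq\mathrm{Sym}(v)$. This follows from $W^T_w(S_w)\subseteq D_w$ by induction and from Lemma~\ref{main:lem:CS}(f), and it is what makes $W_v$ applicable. The paper records it as Lemma~\ref{main:lem:V}.
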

\begin{proof}
Induction on height; well defined by Lemma~\ref{main:lem:V} below. (ii) and (iv): $W_v$ is monotone and
position-only, and only the touched nodes and their ancestors are visited. (iii)
$|W^T_v|\le\aKL(|S_v|+\sum_w|W^T_w(S_w)|)$ ($\iota_w$ is injective) and $\aKL\ge1$. (i) Let $g,g'$ be codewords
agreeing on $W^T_v(S)$ and condition on $\sigma=\Sym(v)|_{V^{\rm eff}}$. Every child input is a codeword for every
value of $v$'s mask (Lemma~\ref{main:lem:CS}(e)); $g_w|_{W^T_w(S_w)}$ is a sub-vector of $\sigma$
(Lemma~\ref{main:lem:CS}(f)); child masks are independent of each other and of $\Sym(v)$. By induction the
conditional law of $(S_w)_w$ given all of $\Sym(v)$ is $\bigotimes_wK_w(\sigma|_{\iota_wW^T_w})$ with kernels not
depending on $g$. By Lemma~\ref{main:lem:DL} and Corollary~\ref{main:cor:HDL} at $v$, $\sigma$ has the same law
under $g$ and $g'$; hence so do $(\sigma,(S_w)_w)$ and its marginal on $S$.
\end{proof}
At the root with $u<n$, $W^T:=W^T_A(S_A)\cup W^T_B(S_B)$ is a dependence set with respect to $f\in\RS[\Hh,u]$
(independent masks; the law of $S_B$ depends on $(x^{n-u}f(x))_{x\in W^T_B}$).

\begin{lemma}[every invocation is covered by the locator lemma]\label{main:lem:V}
In Lemma~\ref{main:lem:C}, every set to which a node map $W_v$ is applied is a view of $v$ in the sense of
Theorem~\ref{main:thm:KL}, i.e.\ a subset of $\Sym(v)$, and $v\in\mathcal N(\rcan)$. In detail:
\begin{enumerate}[nosep,label=(\roman*)]
\item the verifier's own reads at $v$ ($y(x)$, $r(x)$ in $U_{\rm cons}$, and lift symbols) are elements of
$\Sym(v)$;
\item a read of the input $g_v(x)$ in $U_{\rm cons}$ at a non-root node $v$ is a read of the parent symbol
$\iota_v(x)$, so it belongs to $S_{\rm parent}\subseteq\Sym(\mathrm{parent})$ and never to a view of $v$; at the
root it is a read of the PCPP input, handled by the shell (Proposition~\ref{main:prop:3M}, Lemma~\ref{main:lem:2});
\item child-induced reads $\iota_w(W^T_w(S_w))$ are subsets of $\Sym(v)$, because $W^T_w(S_w)\subseteq D_w$ by
induction and $\iota_w$ maps $D_w$ injectively into $\Sym(v)$ (Lemma~\ref{main:lem:CS}(f)); the same symbol may be
induced by several children (e.g.\ $Q_y(a,b)$ by the $y$-row $b$ and the $y$-column $a$), and $V^{\rm eff}$ is
their union;
\item shared domain reads ($y(x)$ and $r(x)$ for the same $x$) are two distinct symbols of $\Sym(v)$ at the same
point; Theorem~\ref{main:thm:KL} covers them (Lemma~\ref{kl:lem:shared} removes them by the modular law of subspaces
and puts $x\in D_V$);
\item $W_v(V)\subseteq D_v$, depends on positions only, is monotone and computable in time $\mathrm{poly}(q)$
(Corollary~\ref{main:cor:HDL}); hence so is $W^T$ (Lemma~\ref{main:lem:C}(iv)), which is what makes the sampler
of Theorem~\ref{main:thm:Z} oracle-robust.
\end{enumerate}
\end{lemma}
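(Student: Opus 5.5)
The plan is to prove Lemma~\ref{main:lem:V} by structural induction on the recursion, following the recursive definition of $W^T_v$ in Lemma~\ref{main:lem:C}. Every set $V^{\rm eff}_v(S)$ to which $W_v$ is applied is a union of two kinds of sets: the reads $S_v=S\cap\Sym(v)$ made directly at $v$, and the child-induced sets $\iota_w(W^T_w(S_w))$. I will show that each kind lies in $\Sym(v)$. I will also check that $v$ itself lies in $\mathcal N(\rcan)$. Then Corollary~\ref{main:cor:HDL} applies at $v$, and the conclusions (v) and Lemma~\ref{main:lem:C}(iv) follow from it.

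\emph{Items (i) and (ii).} Item (i) follows from the definition of the proof $\pi_v=(\Sym(v),(\pi_w)_w)$ and of the test families in Section~\ref{main:sec:pcpp}. A $U_{\rm cons}$ test reads $y(x)$ and $r(x)$, which are entries of $y|_D$ and $r|_D$. Row and column tests of a child run the child's verifier, and the positions that verifier reads inside the child's input are positions $\iota_w(\cdot)$ in $\Sym(v)$.

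For (ii), consider a non-root node $v$. Its input $g_v$ is by definition the child line read from the parent's proof via $\iota_v$. Positions in $S$ are therefore positions of the parent's proof, and they fall in $S_{\rm parent}=S\cap\Sym(\mathrm{parent})$. They are never counted as positions of $\Sym(v)$, since the input is excluded from $\pi_v$.

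\emph{Item (iii).} This is the inductive step. For a leaf, $W^T=\emptyset$. For an internal child $w$, the induction hypothesis gives that $W_w$ is applied only to subsets of $\Sym(w)$. Corollary~\ref{main:cor:HDL} then gives $W_w(\cdot)\subseteq D_w$, so $W^T_w(S_w)\subseteq D_w$. Lemma~\ref{main:lem:CS}(f) says $\iota_w$ maps $D_w$ injectively into $\Sym(v)$. Overlaps between the images of different children, for example a $y$-row and a $y$-column meeting at $Q_y(a,b)$, cause no difficulty, because $V^{\rm eff}$ is a set union. Membership $v\in\mathcal N(\rcan)$ holds by definition: the children of composite order of a recursion node are designed by $\rcan$, which never fails by Proposition~\ref{main:prop:total}.

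\emph{Item (iv).} This is the step I expect to be the only subtle one: a view may contain both $y(x)$ and $r(x)$ for the same $x$. These are two distinct positions of $\Sym(v)$ at the same point $(x,x^{m_1})$. The statement of Theorem~\ref{main:thm:KL} already allows this, since $V$ is any subset of $\Sym(\nu)$ and $D_V$ collects such $x$. So I only need to point to Lemma~\ref{kl:lem:shared}, which handles shared points by the modular law of subspaces.

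\emph{Item (v).} This is Corollary~\ref{main:cor:HDL}, transported through the recursion by Lemma~\ref{main:lem:C}(ii),(iv).
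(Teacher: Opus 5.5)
Your proposal is correct and follows essentially the same route as the paper: (i) comes from the definition of $\pi_v$; (ii) from Lemma~\ref{main:lem:CS}(f) and the fact that the proof is the disjoint union of the $\mathrm{Sym}(v)$; (iii)--(iv) from Lemma~\ref{main:lem:CS}(c),(f), with the height induction implicit in Lemma~\ref{main:lem:C}; and (v) from Corollary~\ref{main:cor:HDL}. The paper obtains $v\in\mathcal N(\mathfrak r^{\mathrm{can}})$ by noting that $W_v$ is only ever applied at internal recursion nodes, since leaves have no own symbols. Your appeal to the definition and Proposition~\ref{main:prop:total} amounts to the same thing.
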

\begin{proof}
(i) is the definition of the proof $\pi_v$. (ii) By Lemma~\ref{main:lem:CS}(f) the input of a child $w$ is the
line $z\mapsto(\text{value of the symbol }\iota_w(z))$, so its positions are positions of $\pi_v$; the PCPP proof is the disjoint union of
the $\Sym(v)$ over all nodes, and $S$ is partitioned accordingly. (iii) and (iv) are restatements of
Lemma~\ref{main:lem:CS}(c),(f) and of the definition of a view as a set of symbols. (v) is
Corollary~\ref{main:cor:HDL}. Finally, every node to which Lemma~\ref{main:lem:C} applies $W_v$ is an internal node
of the recursion, hence in $\mathcal N(\rcan)$; leaves have no own symbols.
\end{proof}

\begin{theorem}[exact lazy sampler]\label{main:thm:Z}
The root PCPP for $\RS[\Hh,u]$ ($u\le n$) has a straight-line $(\infty,\bar\kappa)$-lazy sampler with
$\bar\kappa=\aKL^{\,d}\le(\ln s)^{\log_2\aKL}$; it is $\PPTF$ and oracle-robust.
\end{theorem}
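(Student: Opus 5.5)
We build the sampler directly from the tree dependence sets of Lemma~\ref{main:lem:C}, following the outline of Section~\ref{sec:overview-tree}. The sampler keeps the set $S$ of positions answered so far, their answers, and the set $R\subseteq\Hh$ of input coordinates read so far, with the invariant $R=W^T(S)$. Here $W^T$ is the root map, or $W^T_A(S_A)\cup W^T_B(S_B)$ when $u<n$.

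On a new query $\alpha$, the sampler computes $W'=W^T(S\cup\{\alpha\})$. By monotonicity (Lemma~\ref{main:lem:C}(ii)), $W'\supseteq R$, so it only has to read $y'$ on $W'\setminus R$. It then does three things:
\begin{enumerate}[nosep]
\item It picks, by a fixed deterministic rule (Gaussian elimination), a codeword $f^*\in\RS[\Hh,u]$ agreeing with the reads on $W'$. If none exists, which can only happen for an oracle outside $\mathcal Y$, it answers arbitrarily, say $0$.
\item It samples all node masks uniformly, conditioned on the proof $P(f^*)$ reproducing the earlier answers on $S$.
\item It answers $\alpha$ from the resulting proof.
\end{enumerate}
The conditioning in step 2 is a uniform sample from an affine subspace. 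Every symbol in $S$ is an affine function of $f^*$ and the stacked mask vector, so the condition is a linear system. A uniform solution is obtained with polynomially many oracle field samples and linear algebra. Only the nodes touched by $S\cup\{\alpha\}$ and their ancestors need to be instantiated; untouched subtrees have independent masks that never affect the answers. So the sampler is $\PPTF$ with polynomially many oracle calls per query.

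\emph{Read bound and oracle robustness.} These follow at once. The sampler reads exactly $W^T(S)$ after $|S|=k$ distinct queries. Lemma~\ref{main:lem:C}(iii) gives $|W^T(S)|\le\aKL^{\,d}k$; at the root with $u<n$ the bound is applied to $S_A$ and $S_B$ separately and the two add. Then $2^d\le\ln s$ gives $\aKL^{\,d}\le(\ln s)^{\log_2\aKL}$. The computation of $W^T$ depends on positions only (Lemma~\ref{main:lem:C}(iv)), so the bound holds for every oracle $y'$, not just codewords.

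\emph{Exactness.} This is the main step. Fix $f\in\RS[\Hh,u]$ and an adaptive strategy. Since the strategy's coins are independent of the proof and of the sampler, it suffices to show the following for each query sequence: for every $k$, the conditional law of the answer to $\alpha_{k+1}$, given the first $k$ answers, is the same under the sampler as under $\Pi\leftarrow P(f)$. Under $P(f)$, the joint law of the answers on $S\cup\{\alpha\}$ depends on the input only through $f|_{W'}$, by the dependence property Lemma~\ref{main:lem:C}(i). Because $f^*$ agrees with $f$ on $W'$, that joint law equals the one under $P(f^*)$. Conditioning it on the answers to $S$ therefore gives the true conditional law of $\alpha$, and that conditional is exactly what step 2 samples.

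The subtle point is that $f^*$ changes from query to query. The argument above handles this, because each step only uses the joint law on $S\cup\{\alpha\}$ under the current $f^*$, and that law coincides with the one under $f$. One should also check that conditioning on the earlier answers is always possible, that is, the linear system is consistent. This holds because the earlier answers have positive probability under $P(f)$, hence under $P(f^*)$. An induction on $k$ then gives equality of the full transcript laws, for every finite number of queries, which is the case $\hat b=\infty$. Finally, the sampler answers each query when it is made, so it is straight-line.
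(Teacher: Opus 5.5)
Your proposal is correct and follows the paper's proof essentially step for step. It uses the same state, the Gaussian-elimination choice of $f^*$ on $W^T(S\cup\{\alpha\})$, a fresh uniform mask conditioned on the earlier answers, exactness by induction from the dependence property of Lemma~\ref{main:lem:C}(i), and the read bound from Lemma~\ref{main:lem:C}(iii)--(iv) with $2^d\le\ln s$; the one detail to add is a fallback when the mask system is inconsistent (possible only for a non-codeword oracle), where, as the paper does, you answer $0$ so that the sampler halts on every oracle.
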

\begin{proof}
State: the answered set $S_t$, the answers $\mathbf a_t$, $W_t=W^T(S_t)$ and the oracle values on $W_t$. On a new
query $\alpha$: (1) $S'=S_t\cup\{\alpha\}$, $W'=W^T(S')\supseteq W_t$; read the oracle on $W'\setminus W_t$.
(2) Find $f^*\in\RS[\Hh,u]$ with $f^*|_{W'}=$ oracle$|_{W'}$ by Gaussian elimination; if none exists answer $0$.
(3) Sample $\omega$ uniformly from the affine space $\{\omega:\Pi(f^*;\omega)|_{S_t}=\mathbf a_t\}$ of masks (at
most the number of mask coordinates many uniform $\F$-samples); if it is empty answer $0$. (4) Answer
$\Pi(f^*;\omega)_\alpha$. Repeated queries are answered from the state.
\emph{Reads, for every oracle}: $W^T$ depends on positions only, so after $k$ distinct queries at most
$|W^T(S_k)|\le\bar\kappa k$ coordinates have been read. \emph{Exactness for codeword oracles}: fix the
strategy's coins and induct on $t$. Given a history of positive probability, the next query $\alpha$ is
determined; in the real run the next answer has law $\mathcal L_f(\Pi_\alpha\mid\Pi|_{S_t}=\mathbf a_t)$. $W'$ is a
dependence set for $S'$ and $f^*|_{W'}=f|_{W'}$, so the joint laws of $\Pi|_{S'}$ under $f$ and $f^*$ coincide;
in particular steps (2)--(3) succeed and the conditionals coincide. As $\Pi$ is affine in $\omega$ and $\omega$
is uniform, steps (3)--(4) sample exactly $\mathcal L_{f^*}(\Pi_\alpha\mid\Pi|_{S_t}=\mathbf a_t)$. The bound on
$\bar\kappa$: $\aKL^d=(2^d)^{\log_2\aKL}$ and $2^d\le\ln s$ (Proposition~\ref{main:prop:LEN}).
\end{proof}

\begin{theorem}[the zero-knowledge RS PCPP on $\mathcal W$]\label{main:thm:M}
For $(n,q,\Hh)\in\Wa$ and every $u\le n$ the PCPP for $\RS[\Hh,u]$ has: a non-adaptive verifier with
$\le16(\ln s)^6$ queries; coin multiset $\le2\cdot80^ds\le2s(\ln s)^{\log_280}$; length
$\le\frac{32}{29}30^ds$; perfect completeness; $\Pr[\text{reject}]\ge\frac38\cdot2660^{-d}\delta(f,\RS[\Hh,u])$
with $2660^d\le(\ln s)^{\log_22660}$; residual circuits of size $\le4C_{\rm leaf}(\ln s)^{12}\log_2^2q+O(\log^2q)$
computable in time $\mathrm{poly}(s)$; a $\PPTF$ prover; and an oracle-robust $\PPTF$ $(\infty,\bar\kappa)$-lazy
sampler with $\bar\kappa\le(\ln s)^{\log_2\aKL}$. Here $\log_280<6.33$, $\log_22660<11.38$, $\log_2\aKL<87.39$.
\end{theorem}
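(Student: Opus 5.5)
Theorem~\ref{main:thm:M} is essentially an assembly of results already proved in this section, so the plan is to collect each property from its source and check the root-level bookkeeping for $u<n$.

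\emph{Structural parameters.} The query bound comes from Proposition~\ref{main:prop:LEN}: a test reads either three symbols or a whole leaf line of $8p\le16(\ln s)^6$ symbols. At the root with $u<n$ the multiset $U_A\uplus U_B$ draws a test from one copy only, so the query count does not double. For coins, $|U(n)|\le8n\cdot80^d=80^ds$, at most doubled at the root. Using $2^d\le\ln s$ gives $80^d=(2^d)^{\log_280}\le(\ln s)^{\log_280}$. For length, Proposition~\ref{main:prop:LEN} gives $\mathcal L=\frac{16}{29}(30^d-1)s$, doubled for $u<n$, hence at most $\frac{32}{29}30^ds$. The residual circuits are the leaf parity checks bounded there, plus $O(\log^2q)$ for the three-symbol consistency test $g(x)=y(x)-r(x)$. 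At the root the copy-$B$ read $f'(x)=x^{n-u}f(x)$ adds one field multiplication by a public constant, which also fits in $O(\log^2q)$. Non-adaptivity and the $\PPTF$ prover are again from Proposition~\ref{main:prop:LEN} and item (F-e).

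\emph{Soundness and completeness.} Both are read off Theorem~\ref{main:thm:SND} at height $d$, since the root order $n$ is the root of $\mathcal T(n)$, whose depth is $d$. This gives rejection probability $\ge\frac38\varrho_d\,\delta(f,\RS[\Hh,u])$ with $\varrho_d=2660^{-d}$, and $2660^d\le(\ln s)^{\log_22660}$ follows as before. For $u=n$ the bound $\varrho_d\delta$ is stronger, so the single stated bound covers both cases.

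\emph{Zero knowledge.} This is Theorem~\ref{main:thm:Z}, which already treats the case $u\le n$. The one point to double-check is the case $u<n$. There the dependence set $W^T_A(S_A)\cup W^T_B(S_B)$ noted after Lemma~\ref{main:lem:C} has size at most $\aKL^d(|S_A|+|S_B|)$. Reads of $f'$ are reads of $f$ at the same points, so $\bar\kappa=\aKL^d\le(\ln s)^{\log_2\aKL}$ is unchanged.

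\emph{Constants.} The numerical claims are direct: $\log_280=4+\log_25<6.33$, $\log_22660<11.38$ since $2^{11.38}>2660$, and $\log_2\aKL<87.39$ from Section~\ref{main:sec:KL}.

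The only step needing care is making the $u<n$ root doubling consistent across all bounds: coins double, length doubles, queries do not, and the lazy sampler's locator union stays within the amortised rate. I expect no real obstacle.
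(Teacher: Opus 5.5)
Your proposal is correct and is the paper's own argument: the proof there cites Theorem~\ref{main:thm:SND}, Theorem~\ref{main:thm:Z} and Proposition~\ref{main:prop:LEN}, uses $c^d=(2^d)^{\log_2c}\le(\ln s)^{\log_2c}$, and attributes the extra $O(\log^2q)$ term to the multiplication by the public constant $x^{n-u}$ at the root, as you do. One case your bookkeeping glosses over is a root that is itself a leaf ($d=0$) with $u<n$, where copy~$B$'s leaf test reads all of $f'$ rather than one symbol; there the constants $x^{n-u}$ can be folded into the $7p\times8p$ parity-check matrix, so the leaf-circuit bound of Proposition~\ref{main:prop:LEN} still applies.
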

\begin{proof}
Theorems~\ref{main:thm:SND}, \ref{main:thm:Z}, Proposition~\ref{main:prop:LEN}, and $c^d=(2^d)^{\log_2c}\le
(\ln s)^{\log_2c}$. The $O(\log^2q)$ term is the multiplication by the public constant $x^{n-u}$ at the root.
\end{proof}

\subsection{The outer shell}\label{main:sec:shell}

Ingredients. \textbf{(CL)} Quasilinear Cook--Levin \cite{Coo88} (via \cite{PF79}): for
$R\in\mathrm{NTIME}(T)$ (multitape machines), $x\mapsto$ a CNF $\Phi_x$ of size $O(T\log T)$, computable in time
$\mathrm{poly}(T)$, satisfiable iff $x\in L(R)$, with a satisfying assignment computable from a witness. We
normalise to \textbf{3SAT[4]} (clauses of 1--3 literals on distinct variables, each variable in $\le4$ clauses):
first delete repeated literals and tautological clauses and split clauses longer than 3 with fresh variables;
then replace each variable with $k\ge5$ occurrences by copies $y_1,\dots,y_k$ (one per occurrence) and the
2-clauses $(\neg y_i\vee y_{i+1})$ cyclically \cite{Tov84}. Size stays linear; satisfiability and witnesses are
preserved. \textbf{(BW)} The Bene\v{s} network on $2^\beta$ rows with $2\beta-1$ switch stages pairing rows that
differ in bit $b_j$, $(b_j)=(0,1,\dots,\beta-1,\dots,1,0)$, is rearrangeable, and the looping algorithm computes the
settings for a given permutation in time $O(\beta2^\beta)$ \cite{Ben64,Wak68}.

\begin{lemma}[multiplicative routing instance]\label{main:lem:MR}
Let $\Phi$ be a 3SAT[4] formula with variables $X_0,\dots,X_{v-1}$ and clauses $C_0,\dots,C_{M_c-1}$. Put
$\beta=\lceil\log_2\max(4v,3M_c,2)\rceil$, $\mathsf R=2^\beta$, $N_{\rm pts}=(2\beta+1)\mathsf R$, and points
$(r,j)\in[\mathsf R]\times\{0,\dots,2\beta\}$.
\begin{itemize}[nosep]
\item \emph{Routing.} Source row $4i+c$ is the $c$-th occurrence of $X_i$ ($c<4$), target row $3k+t$ the $t$-th
literal of $C_k$. Let $\pi$ be a permutation of $[\mathsf R]$ mapping each used source to its target, realised by
Bene\v{s} settings; column $j+1$ ($0\le j<2\beta-1$) is obtained from column $j$ by stage $j$.
\item \emph{Point sets} (pairwise disjoint): $\mathrm{Var}=\{(4i,0)\}$; $\mathrm{Copy}=\{(4i+c,0):c\in\{1,2,3\}\}$
with shift $c$; $\mathrm{Net}=[\mathsf R]\times\{1,\dots,2\beta-1\}$, where $(r,j+1)$ has predecessor $(r',j)$ with
$r'\in\{r,r\oplus2^{b_j}\}$ and shift $\delta=r-r'\in\{0,\pm2^{b_j}\}$; $\mathrm{Cl}=\{(3k,2\beta)\}$ with clause
$C_k$.
\item \emph{Field data.} Let $h\mid q-1$ with $h\ge N_{\rm pts}$, $\omega\in\F$ of order $h$,
$p(r,j)=\omega^{r+\mathsf Rj}$ (injective), labels
$\Psi_{\rm lab}=\{\mathrm{id}\}\cup\{\sigma_c\}_{c\in\{1,2,3\}}\cup\{\beta_\delta\}_{\delta\in\{0,\pm2^{b}\}}\cup\{\gamma_t\}_{t<3}$
with multipliers $\chi_{\sigma_c}=\omega^{-c}$, $\chi_{\beta_\delta}=\omega^{-(\mathsf R+\delta)}$,
$\chi_{\gamma_t}=\omega^{t-\mathsf R}$ ($|\Psi_{\rm lab}|=2\beta+8$ labels; distinct labels may have equal
multipliers), Lagrange selectors $L_z(X)=\frac zh\frac{X^h-1}{X-z}$ on $\mu_h$, and
\begin{multline*}\mathcal Q=\sum_{z\in p(\mathrm{Var})}L_zY_{\rm id}(Y_{\rm id}-1)+\sum_{z\in p(\mathrm{Copy})}L_z(Y_{\rm id}-Y_{\sigma_{c(z)}})\\
+\sum_{z\in p(\mathrm{Net})}L_z(Y_{\rm id}-Y_{\beta_{\delta(z)}})+\sum_{z\in p(\mathrm{Cl})}L_z\prod_{t<|C_{k(z)}|}(Y_{\gamma_t}-c_t(z)),\end{multline*}
where $c_t(z)\in\{0,1\}$ is the value satisfying the $t$-th literal.
\end{itemize}
Then $\deg_X\mathcal Q<h$, $\mathcal Q$ has total degree $\le3$ in the $Y$'s, and $\Phi$ is satisfiable iff some table
$A:\mu_h\to\F$ satisfies $\mathcal Q(x,(A(\chi_\psi x))_\psi)=0$ for all $x\in\mu_h$; from a satisfying assignment
such an $A$ (with values in $\{0,1\}$) is computed in time $\mathrm{poly}(v+M_c)$, and all data in time
$\mathrm{poly}(v+M_c,\log q)$.
\end{lemma}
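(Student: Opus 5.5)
The proof splits into the degree bounds, the two directions of the equivalence, and the complexity claims. For the degrees: each Lagrange selector $L_z$ has degree $h-1<h$ in $X$, and each bracketed factor is free of $X$. The constants $c_t(z)$ and the selector choice depend only on $z$, so $\deg_X\mathcal Q<h$. In the $Y$'s, the Var term has degree $2$, the Copy and Net terms degree $1$, and the clause term degree $|C_k|\le3$, which gives total degree $\le3$.

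Next I will record the structural fact the evaluation relies on: since $L_z(x)=[x=z]$ on $\mu_h$ and the point sets are pairwise disjoint, the identity $\mathcal Q(x,(A(\chi_\psi x))_\psi)=0$ at $x=p(r,j)$ reduces to exactly one local constraint, and no constraint at all if $(r,j)$ lies in no set. Injectivity of $p$ (which needs $N_{\rm pts}\le h$) makes "the point $(r,j)$" well defined. I then check what each multiplier reads. For $x=\omega^{r+\mathsf Rj}$:
\[
\chi_{\sigma_c}x=p(r-c,0)\ \text{on Copy},\qquad \chi_{\beta_\delta}x=p(r-\delta,j-1)=p(r',j-1)\ \text{on Net},\qquad \chi_{\gamma_t}x=p(3k+t,2\beta-1)\ \text{on Cl}.
\]
On Copy, $r=4i+c$ with $c\le 3$, so $r-c=4i$ stays in range. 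On Net, $r'\in[\mathsf R]$ is in range. On Cl, the target rows $3k+t<\mathsf R$ because $\mathsf R\ge 3M_c$. The exponent arithmetic is taken modulo $h$, and I will need a short remark that the exponents $r+\mathsf Rj$ for in-range points are distinct below $N_{\rm pts}\le h$, so there is no wraparound aliasing. With this, the constraints say the following. $A(p(4i,0))\in\{0,1\}$. Each copy $(4i+c,0)$ equals its variable. Each network cell $(r,j)$, $1\le j\le 2\beta-1$, equals its predecessor in column $j-1$ under the Beneš switch of stage $j-1$. Each clause point checks $\prod_t(A(p(3k+t,2\beta-1))-c_t)=0$, i.e.\ some literal slot carries its satisfying value.

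I will adjust the indexing convention slightly so that column $2\beta-1$ is the output of all $2\beta-1$ stages. The Net set starting at column $1$ with predecessors in column $j-1$ matches "column $j+1$ from column $j$ by stage $j$" after a shift. I expect this off-by-one bookkeeping of stages versus columns to be the main place to be careful.

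For completeness, given a satisfying assignment, I fill column $0$ with rows $4i+c\mapsto X_i$ and unused rows with $0$. I propagate the values through the Beneš stages with the looping-algorithm settings, which realise $\pi$ by (BW). I set all other points of $\mu_h$ to $0$. Every constraint then holds; in particular, at target row $3k+t$ the final column carries the value of the source variable. That value satisfies the literal exactly when the literal is true. The literal sign is encoded in $c_t$ ($c_t=1$ for a positive literal, $0$ for a negated one), so some factor vanishes because $C_k$ is satisfied.

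For soundness, given a table $A$ satisfying all constraints, set $X_i:=A(p(4i,0))\in\{0,1\}$. The copy constraints propagate $X_i$ to every occurrence row. The network constraints carry each source row's value to $\pi$(source) in column $2\beta-1$. So each literal slot of $C_k$ holds the value of its variable, and the clause constraint forces one literal to be true. Hence $\Phi$ is satisfiable. Note the Net constraints enforce only the *given* settings, not a permutation chosen by the prover; this suffices because the settings are part of the public instance.

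For running time, the looping algorithm takes $O(\beta 2^\beta)$, and the table has $h$ entries. Strictly, the "$\mathrm{poly}(v+M_c)$" bound for $A$ should be read as computing $A$ on the $N_{\rm pts}$ points, with the zero default elsewhere. $\omega$ and the multipliers are obtained by exponentiation in $\F$, and writing $\mathcal Q$ as a sum of $O(N_{\rm pts})$ selector terms costs $\mathrm{poly}(v+M_c,\log q)$.
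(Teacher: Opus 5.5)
Your proposal is correct and follows the paper's proof. The selectors $L_z$ reduce the identity at each $x\in\mu_h$ to the single constraint of $x$'s point set, the multipliers send each point to its partner, completeness propagates the assignment through the fixed Bene\v{s} settings, and soundness goes by induction on the column; the off-by-one adjustment you anticipate is not needed, since the statement's indexing (stages $0,\dots,2\beta-2$ producing columns $1,\dots,2\beta-1$) already matches the Net constraints you wrote down.
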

\begin{proof}
$L_z(z)=z^h=1$ and $L_z$ vanishes on $\mu_h\setminus\{z\}$, so at $x\in\mu_h$ the sum reduces to the single term of
$x$'s point set, or to $0$. The multipliers send each point to its partner: $p(4i+c,0)\chi_{\sigma_c}=p(4i,0)$;
$p(r,j+1)\chi_{\beta_{r-r'}}=\omega^{r'+\mathsf Rj}=p(r',j)$; $p(3k,2\beta)\chi_{\gamma_t}=p(3k+t,2\beta-1)$.
($\Rightarrow$) For a satisfying $\alpha$ put $A(p(4i+c,0))=\alpha_i$, propagate column by column through the switch
settings, and set $A=0$ elsewhere: Var holds as $\alpha_i\in\{0,1\}$, Copy and Net hold by construction, and the
value at $p(3k+t,2\beta-1)$ is the value entering at the source $\pi^{-1}(3k+t)$, i.e.\ the variable of that
literal, so a satisfied clause has a vanishing factor. ($\Leftarrow$) If $A$ satisfies all constraints,
$\alpha_i:=A(p(4i,0))$ is a root of $Y(Y-1)$, hence in $\{0,1\}$; Copy points equal their home; by induction on the
column, $A(p(r,j))$ is the column-0 value at the source routed to $(r,j)$; so each clause constraint says some
literal of $C_k$ is satisfied by $\alpha$.
\end{proof}
\emph{Credit.} Arithmetising Bene\v{s} routing inside an algebraic CSP, via De Bruijn graphs, is \cite[\S8, App.~A]{BCGT13r} (additive
setting); routing networks in PCP arithmetisation go back to \cite{BFLS91,PS94}; the selector pattern and
neighbour maps as group translations are \cite[\S\S5, 7]{BS08}, and group-preserving instances are
\cite[Def.~7.2, Lemma~7.5]{BCGV16}. Lemma~\ref{main:lem:MR} is a routine multiplicative variant; the only new point is the grid
embedding $p(r,j)=\omega^{r+\mathsf Rj}$, which makes every stage a constant shift.

\begin{proposition}[multiplicative randomisable algebraic CSP for NTIME]\label{main:prop:6M}
Let $R\in\mathrm{NTIME}(T)$, $x$ an instance, $t\ge1$, $\Phi_x$ the 3SAT[4] formula of (CL) and
$N_{\rm pts}=O(T\log^2T)$ its Lemma~\ref{main:lem:MR} size. Put $Y:=\max(N_{\rm pts},t)$,
$\NN:=(\lceil A_Y\rceil+1)Y$, where $A_Y$ is the output of a fixed deterministic evaluation of $2^{13}(\ln Y)^6$
to within $\frac12$ (so $2^{13}(\ln Y)^6Y<\NN<(2^{13}(\ln Y)^6+3)Y$); let $n$ (with $\mathcal T(n)$) be
Lemma~\ref{main:lem:SM2}'s output on $\NN$, $h$ the
least prefix product of $n$'s sorted primes with $h\ge N_{\rm pts}$, $(n,q,\Hh)\in\Wa$, $\omega$ a generator of
$\mu_h$, $\ell:=|\Hh|=8n$ (the $s$ of Definition~\ref{main:def:Wa}). Let $\mathcal Q,\Psi_{\rm lab}$ be
Lemma~\ref{main:lem:MR}'s data, $Z_H:=X^h-1$,
$C_0:=\RS[\Hh,h+t]$, $C_1:=\{(Z_HP)|_{\Hh}:\deg P<3(h+t)\}$, $g(w)[\omega']:=\mathcal Q(\omega',(w(\chi_\psi\omega'))_\psi)$,
$C':=\{(Z_HR)|_{\Hh}:\deg R<t\}$ and $\eta:=|\Psi_{\rm lab}|=2\beta+8$. If $\ln Y\ge30$:
\begin{enumerate}[nosep,label=(\alph*)]
\item \emph{locality}: each $\chi_\psi\in\mu_h\le\mu_n$ permutes $\Hh$; $g$ is $\eta$-local and every input
coordinate is read by at most $\eta$ output coordinates, so $\Delta(g(w),g(w'))\le\eta\Delta(w,w')$;
\item \emph{randomisability}: $Z_H\ne0$ on $\Hh$ and $C'$ is $t$-wise independent on $\Hh$; if $(x,w)\in R$, the
interpolant $A_0$ ($\deg A_0<h$) of Lemma~\ref{main:lem:MR}'s table is computable in time $\mathrm{poly}(T+t)$,
and every $w'_0\in A_0|_{\Hh}+C'$ has $w'_0\in C_0$ and $g(w'_0)\in C_1$;
\item \emph{soundness}: if some $w_0\in C_0$ has $g(w_0)\in C_1$ then $x\in L(R)$;
\item \emph{distance}: $C_1\cup g(C_0)\subseteq\RS[\Hh,4(h+t)]$, of relative distance $\tau_0\ge63/64$;
\item \emph{sizes}: $32(h+t)\le n$; $\ell<2^{18}Y(\ln Y)^6$; $\eta\le2\log_2\ell+8$; $\log q=O(\log\ell)$; the
residual predicate ``$w_1[\omega']=g(w_0)[\omega']$'' is a circuit with $\le2\beta+14$ monomials of degree $\le3$,
of size $O(\beta\log^2q)$, whose constants are computed with $O(h)$ field operations.
\end{enumerate}
\end{proposition}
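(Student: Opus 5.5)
The plan is to verify items (a)--(e) in the order (e), (a), (b), (d), (c), because the size bounds in (e) are what make the other items meaningful: they guarantee that $\mu_h\le\mu_n$, that $\Hh$ is disjoint from $\mu_h$, and that the degrees fit.

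\emph{Step (e): sizes.} By the choice of $\NN$ and Lemma~\ref{main:lem:SM2}(1), $n\ge\NN>2^{13}(\ln Y)^6Y$. Since $h$ is the least prefix product of the sorted primes with $h\ge N_{\rm pts}$, and every prime is at most $2\LL^6$ by Lemma~\ref{main:lem:SM2}(2), we get $h<2\LL^6N_{\rm pts}$. Here $\LL=\ln\NN=O(\ln Y)$ because $\ln Y\ge30$. The precise constant must give $32(h+t)\le n$; I would check this by bounding $\LL\le\ln Y+\ln(2^{13}(\ln Y)^6+3)\le 2\ln Y$ and comparing $2^{6}\cdot 2^{5}(\ln Y)^6Y$ with $2^{13}(\ln Y)^6Y$. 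This calculation is the main obstacle: the constant $2^{13}$ was presumably chosen exactly for it, so each factor of $2$ has to be tracked. The bound $\ell=8n<16\NN<2^{18}Y(\ln Y)^6$ follows directly. For $\eta=2\beta+8$, note that $\mathsf R\le N_{\rm pts}\le\ell$, so $\beta\le\log_2\ell$. The bound $\log q=O(\log\ell)$ follows from Lemma~\ref{main:lem:field}. The circuit bound comes from reading off $\mathcal Q$: at a fixed $\omega'$, each selector $L_z(\omega')$ is a constant, so only the terms of at most one point class survive.

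\emph{Step (a): locality.} Since $h$ divides $n$ (it is a prefix product of primes of $n$), $\chi_\psi\in\mu_h\le\mu_n$. Lemma~\ref{main:lem:root} gives $\mu_h\Hh=\Hh$, so multiplication by $\chi_\psi$ permutes $\Hh$. Output coordinate $\omega'$ reads the $\eta$ coordinates $\chi_\psi\omega'$. Each input coordinate $z$ is read only at $\omega'=\chi_\psi^{-1}z$, once per label, hence by at most $\eta$ outputs. The distance bound then follows by a union bound.

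\emph{Step (b): randomisability.} $Z_H$ is nonzero on $\Hh$ because $\Hh\cap\mu_n=\emptyset$ and $\mu_h\le\mu_n$. $C'$ is $t$-wise independent because on any $t$ points the map $R\mapsto Z_HR$ is an invertible diagonal scaling composed with the bijective interpolation of degree $<t$ polynomials. For $w'_0=A_0+Z_HR$, the degree is $<h+t$, so $w'_0\in C_0$. The polynomial $\mathcal Q(X,(w'_0(\chi_\psi X))_\psi)$ agrees with the honest table on $\mu_h$, since $Z_H(\chi_\psi x)=0$ there. By Lemma~\ref{main:lem:MR} it therefore vanishes on $\mu_h$ and is divisible by $Z_H$. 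Its degree is $<h+3(h+t)$ (with $\deg_X\mathcal Q<h$ and $Y$-degree $\le3$), so the quotient has degree $<3(h+t)$. The evaluation on $\Hh$ equals the polynomial's because $\chi_\psi$ preserves $\Hh$.

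\emph{Steps (d) and (c).} For (d), both $C_1$ and $g(C_0)$ have degree $<4(h+t)\le n/8=\ell/64$, so the relative distance is at least $1-4(h+t)/\ell\ge63/64$. For (c), take $w_0\in C_0$ with $g(w_0)\in C_1$. Both $g(w_0)$ and $Z_HP$ are low-degree polynomials of degree $<8n$ that agree on $\Hh$ (with $|\Hh|=8n$), so they are equal as polynomials. Hence $g(w_0)$ vanishes on $\mu_h$, and the table $A=w_0|_{\mu_h}$ satisfies the constraints. Lemma~\ref{main:lem:MR} then gives that $\Phi_x$ is satisfiable, and (CL) gives $x\in L(R)$.
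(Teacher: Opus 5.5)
Your plan follows the paper's proof item by item. For (e) you use $h<2\mathsf L^6N_{\rm pts}$ with $\mathsf L=\ln\mathsf N\le2\ln Y$. For (a) you use the action of $\mu_h\le\mu_n$ on $\hat H$. For (b) you show that $B_P(X):=\mathcal Q(X,(P(\chi_\psi X))_\psi)$ is divisible by $Z_H$, with $\deg B_P\le(h-1)+3(h+t-1)$. For (c)--(d) you use that every polynomial in play has degree $<4(h+t)\le\ell/64$.

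One step is wrong: your justification of the residual-predicate bound in (e). The predicate is evaluated at $\omega'\in\hat H$, and $\hat H\cap\mu_h\subseteq\hat H\cap\mu_n=\emptyset$. So every selector $L_z(\omega')=\frac zh\cdot\frac{\omega'^h-1}{\omega'-z}$ is nonzero. The collapse to the terms of a single point class happens only at points of $\mu_h$. Indeed, no point class meets $\hat H$, so your reasoning would make $\mathcal Q(\omega',\cdot)$ vanish identically. The correct count collects all of $\mathcal Q(\omega',\cdot)$ by monomial:
\begin{itemize}
\item $Y_{\rm id}^2$ and $Y_{\rm id}$;
\item the three $Y_{\sigma_c}$;
\item the $2\beta+1$ variables $Y_{\beta_\delta}$;
\item the eight products of subsets of $\{Y_{\gamma_t}\}_{t<3}$, the empty product included.
\end{itemize}
That is $2\beta+14$ monomials in all. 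Each coefficient is a sum, over all points $z$ of the relevant classes, of $L_z(\omega')$ times public constants. This is why the statement says the constants take $O(h)$ field operations rather than $O(1)$, and it gives a circuit of size $O(\beta\log^2q)$.

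There are also two smaller slips. In (e) you dropped the factor $2$ in $p_i<2\mathsf L^6$ and the term $32t$. The correct comparison is $32(h+t)<2^{12}(\ln Y)^6Y+32Y<2^{13}(\ln Y)^6Y<\mathsf N\le n$, which still holds with room. The bound $\mathsf L\le2\ln Y$ needs $2^{13}(\ln Y)^6+3\le Y$, and this is exactly where $\ln Y\ge30$ is used.

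In (c), the table cannot be $w_0|_{\mu_h}$, because $w_0$ is defined only on $\hat H$, which misses $\mu_h$. Instead:
\begin{itemize}
\item take the unique $P$ with $\deg P<h+t$ and $w_0=P|_{\hat H}$ (unique since $h+t\le|\hat H|$);
\item compare $B_P$, not ``$g(w_0)$'', with $Z_HP_1$ as polynomials;
\item set $A:=P|_{\mu_h}$.
\end{itemize}
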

\begin{proof}
(e) Consecutive prefix products differ by a prime factor $\le2(\ln\NN)^6$ (Lemma~\ref{main:lem:SM2}), so
$h<2(\ln\NN)^6N_{\rm pts}$. For $\ln Y\ge30$, $2^{13}(\ln Y)^6+3\le Y$, so $\ln\NN\le2\ln Y$; then
$h+t\le(128(\ln Y)^6+1)Y$ and $32(h+t)\le4128(\ln Y)^6Y<\NN\le n$. Also $\ell=8n<16\NN\le2^{18}Y(\ln Y)^6$,
$2^\beta\le N_{\rm pts}\le h\le\ell$ and $\log q=O(\log\ell)$ (Lemma~\ref{main:lem:field}). The residual: $\mathcal Q(\omega',\cdot)$
has the monomials $Y_{\rm id}^2$, $Y_{\rm id}$, $Y_{\sigma_c}$, $Y_{\beta_\delta}$ and the $\le8$ products of subsets of
$\{Y_{\gamma_t}\}$.
(a) $\mu_h\le\mu_n$ and $\Hh$ is a union of $\mu_n$-cosets; the number of distinct multipliers is $\le\eta$.
(b) $\Hh\cap\mu_h\subseteq\Hh\cap\mu_n=\emptyset$ (Lemma~\ref{main:lem:root}). For distinct $e_1,\dots,e_t\in\Hh$,
$R\mapsto(Z_H(e_j)R(e_j))_j$ is a bijection $\F[X]_{<t}\to\F^t$ (interpolation times a nonzero diagonal;
\cite[Claim~2.4]{BCGV16}). For $P=A_0+Z_HR$, $\deg P<h+t$, so $P|_{\Hh}\in C_0$. Let
$B_P(X):=\mathcal Q(X,(P(\chi_\psi X))_\psi)$; then $g(P|_{\Hh})=B_P|_{\Hh}$ by (a), and $\deg B_P\le(h-1)+3(h+t-1)$. For
$x\in\mu_h$, $\chi_\psi x\in\mu_h$ and $Z_H(\chi_\psi x)=0$, so $B_P(x)=\mathcal Q(x,(A(\chi_\psi x))_\psi)=0$ by
Lemma~\ref{main:lem:MR}; hence $B_P=Z_HP_1$ with $\deg P_1\le3(h+t)-4$ (the randomisability argument of
\cite[Lemma~7.5]{BCGV16}, with the degree bound computed here).
(c) Let $w_0=P|_{\Hh}$ with $\deg P<h+t$ and $g(w_0)=(Z_HP_1)|_{\Hh}$. $B_P$ and $Z_HP_1$ have degree
$<4(h+t)\le|\Hh|$ and agree on $\Hh$, so they are equal; so $B_P$ vanishes on $\mu_h$, and the table $A:=P|_{\mu_h}$
satisfies every constraint. By Lemma~\ref{main:lem:MR} $\Phi_x$ is satisfiable, and by (CL) $x\in L(R)$.
(d) Two distinct words of degree $<4(h+t)\le\ell/64$ agree on $<\ell/64$ points.
\end{proof}
The factor $2^{13}(\ln Y)^6$ in $\NN$ is a worst-case guard for $h<2(\ln\NN)^6N_{\rm pts}$; it costs a
polylogarithmic factor only.

\begin{proposition}[PCPP for proximity to $C_0\times C_1$]\label{main:prop:3M}
For input $y=(w_0,w_1)\in\F^{2\ell}$ put $f_0:=w_0$, $f_1:=(w_1/Z_H)|_{\Hh}$ pointwise, and let the proof be
$\Pi_0\leftarrow P_{\rm RS}(\Hh,h+t;f_0)$, $\Pi_1\leftarrow P_{\rm RS}(\Hh,3(h+t);f_1)$ with independent coins (the
PCPP of Theorem~\ref{main:thm:M}), with coin multiset
$\hat U=(\{0\}\times U_0)^{\times a_0}\sqcup(\{1\}\times U_1)^{\times a_1}$, weights within a factor~2. It has a
non-adaptive verifier with $\le16(\ln\ell)^6$ queries, $|\hat U|\le8\ell(\ln\ell)^{\log_280}$, perfect completeness,
$\Pr[\text{reject}]\ge\Delta(y,C_0\times C_1)/(8\cdot2660^{d})$, residual circuits of size
$\le4C_{\rm leaf}(\ln\ell)^{12}\log_2^2q+O(\log^2q)$, and an oracle-robust $\PPTF$
$(\infty,\max(1,\bar\kappa))$-lazy sampler on $C_0\times C_1$ that, for every oracle $y$, reads at most
$\#y\text{-queries}+\bar\kappa\cdot\#\text{proof queries}$ coordinates of $y$.
\end{proposition}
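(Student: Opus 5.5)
The plan is to obtain every property of the two-copy PCPP directly from Theorem~\ref{main:thm:M}, applied once with $u=h+t$ and once with $u=3(h+t)$; both are $\le n$ by Proposition~\ref{main:prop:6M}(e). The one new ingredient is the pointwise map $w_1\mapsto f_1=(w_1/Z_H)|_{\Hh}$. It is well defined and coordinatewise bijective because $Z_H\ne0$ on $\Hh$ (Proposition~\ref{main:prop:6M}(b)). It sends $C_1$ onto $\RS[\Hh,3(h+t)]$, and it preserves Hamming distance exactly. A read of $f_1(x)$ is therefore a read of $w_1(x)$ times the public constant $Z_H(x)^{-1}$. Residual circuits grow only by this multiplication, the $O(\log^2q)$ term.

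Next I would check the bookkeeping. For queries, each test is a test of one copy, so the count is at most $16(\ln\ell)^6$. For the coin multiset, choose $a_0,a_1$ as the ceiling of $\max(|U_0|,|U_1|)/|U_i|$, which gives weights within a factor $2$. Then $|\hat U|\le 2\cdot 2\max|U_i|\le 4\cdot 2\ell(\ln\ell)^{\log_280}$ by Theorem~\ref{main:thm:M}. Perfect completeness is inherited from both copies. For soundness, each copy is chosen with probability at least $\frac13$. By Theorem~\ref{main:thm:M}, copy $i$ rejects with probability at least $\frac38 2660^{-d}\delta(f_i,\cdot)$. Since $\Delta(y,C_0\times C_1)=\frac12(\delta(w_0,C_0)+\delta(w_1,C_1))$ and $\delta(w_1,C_1)=\delta(f_1,\RS[\Hh,3(h+t)])$, the rejection probability is at least $\frac13\cdot\frac38\cdot 2660^{-d}\cdot\Delta\ge\Delta/(8\cdot2660^d)$. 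If a constant is off by a small factor, I would instead use the exact family weights $\ge\frac13$.

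For the lazy sampler, run the two samplers of Theorem~\ref{main:thm:Z} side by side, on oracles $f_0=w_0$ and $f_1$, where each $f_1$ read is implemented as a $w_1$ read scaled by $Z_H(x)^{-1}$. A query to an input coordinate $y$ is answered by reading $y$ directly and charged one read. Exactness for $y\in C_0\times C_1$ follows from three facts: the two proofs use independent coins; each sampler is exact for every strategy, including strategies that also see the other copy and the input, since that extra information is either independent or a fixed function of the oracle; and the input answers are deterministic given $y$. Concretely, I would prove that the joint transcript law factors into the product of the two conditional laws by the same induction on queries as in Theorem~\ref{main:thm:Z}. The read bound for every oracle $y'$ comes from oracle-robustness: there are at most $\bar\kappa\cdot\#$ proof queries reads inside the samplers, plus one per $y$-query. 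Sampler reads that duplicate direct $y$-reads only help, so the bound holds with $\max(1,\bar\kappa)$.

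The main obstacle is making the exactness argument rigorous when the strategy interleaves queries to $y$ with queries to both proofs adaptively. A careful way to handle it is to treat the whole object $(y,\Pi_0,\Pi_1)$ as one randomised proof map with independent coins. Then, at each step, I would apply the dependence-set property of Lemma~\ref{main:lem:C} to the union of the two copies' locator sets together with the $y$-coordinates already queried. This reduces exactness to the single-sampler argument, using as dependence set the union $W^T_0\cup W^T_1\cup\{\text{queried }y\text{ coordinates}\}$.
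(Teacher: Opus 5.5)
Your proposal is correct and follows the paper's proof. Every property is read off Theorem~\ref{main:thm:M} for the two copies via the Hamming isometry $w_1\mapsto w_1/Z_H$, each copy is drawn with probability at least $\frac13$, and the sampler is the joint sampler of Lemma~\ref{main:lem:2} with $c_i=1$, whose hybrid argument already settles the adaptive-interleaving issue you flag (your union-of-dependence-sets variant also works), while the paper uses the weaker bound $\Delta(y,C_0\times C_1)\le\max(\Delta(w_0,C_0),\Delta(w_1,C_1))$ where you use the exact identity $\Delta=\frac12(\delta_0+\delta_1)$.
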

\begin{proof}
$\Delta(y,C_0\times C_1)\le\max(\Delta(w_0,C_0),\Delta(w_1,C_1))$; scaling by $1/Z_H$ is a Hamming isometry of $C_1$ onto
$\RS[\Hh,3(h+t)]$ ($Z_H\ne0$ on $\Hh$); both degrees are $\le n$ by Proposition~\ref{main:prop:6M}(e). Each part is drawn
with probability $\ge\frac13$ and rejects with probability $\ge\frac38\cdot2660^{-d}$ times its distance
(Theorem~\ref{main:thm:SND}). The query count, the coin multiset and the residual circuits are those of
Theorem~\ref{main:thm:M} for the two copies. The sampler is the joint sampler of Lemma~\ref{main:lem:2} with $c_i=1$
for the samplers of Theorem~\ref{main:thm:M}, which are lazy on $\mathcal Y_0=\RS[\Hh,h+t]$ and
$\mathcal Y_1=\RS[\Hh,3(h+t)]$ (the inputs of the honest provers; the proof of Theorem~\ref{main:thm:Z} gives
exactness for codeword oracles). For $y\in C_0\times C_1$, $f_0=w_0\in\mathcal Y_0$, and $w_1=(Z_HP)|_{\Hh}$ with
$\deg P<3(h+t)$ gives $f_1=P|_{\Hh}\in\mathcal Y_1$; so it is exact on $C_0\times C_1$. Oracle-robustness is
inherited coordinatewise.
\end{proof}

\begin{lemma}[adaptive reads of a $t$-wise independent vector; after {\cite[Lemma~6.5 of the full version]{BCGV16}} and the proof of {\cite[Thm.~8.9]{BCFGRS17}}, with a cap]\label{main:lem:1}
If $X$ is $t$-wise independent and $M$ is an oracle algorithm with coins independent of $X$ that reads $\le t$
distinct coordinates on every execution with every oracle, then $(\text{coins},M^X)\equiv(\text{coins},M^U)$ for
$U$ uniform.
\end{lemma}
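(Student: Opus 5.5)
The plan is to fix the coins $\rho$ of $M$ and argue for each fixed $\rho$, then average. With $\rho$ fixed, $M$ is a deterministic adaptive oracle algorithm, so $M^X$ is a function of the transcript $(i_1,X_{i_1}),\dots,(i_k,X_{i_k})$. Here $i_{j+1}$ is determined by $\rho$ and the earlier answers, and $k\le t$ on every execution by hypothesis. It therefore suffices to show that the transcript has the same law under $X$ and under $U$. Since $\rho$ is independent of $X$ (and of $U$), this gives $(\rho,M^X)\equiv(\rho,M^U)$. Repeated queries do not matter: they are answered identically in both worlds, so we may assume all queries in a transcript are distinct.

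The key step is to compute the probability of a fixed transcript $\tau=((i_1,a_1),\dots,(i_k,a_k))$ that $M$ can produce with coins $\rho$. Consistency with $\rho$ means that each $i_{j+1}$ is the query $M$ makes after answers $a_1,\dots,a_j$, and that $M$ halts after $k$ queries. The event that $M^X$ produces $\tau$ is then exactly the event $\{X_{i_1}=a_1,\dots,X_{i_k}=a_k\}$. This holds because the query positions are determined inductively by the answers, so the adaptive event coincides with a non-adaptive event on the fixed index set $\{i_1,\dots,i_k\}$. That set has at most $t$ distinct indices, so $t$-wise independence gives $\Pr[X_{i_1}=a_1,\dots,X_{i_k}=a_k]=|\F|^{-k}$. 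This equals the corresponding probability for $U$. Transcripts not consistent with $\rho$ have probability $0$ in both worlds.

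I would write the argument carefully as an induction on the number of queries answered, to make the identification of events airtight. The cap hypothesis ``$\le t$ distinct coordinates on every execution with every oracle'' is exactly what guarantees $k\le t$ even for answer sequences that arise only under $U$ and not under $X$. This is the one subtle point: without the cap being stated for every oracle, a transcript could exceed $t$ reads on an answer path of positive probability under $U$, and the marginal comparison would fail. I expect no real obstacle beyond stating this cleanly. Summing over transcripts and then over $\rho$ finishes the proof.
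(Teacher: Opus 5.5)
Your proposal is correct and is the paper's argument: with the coins fixed, every halting transcript is the event that at most $t$ prescribed coordinates take prescribed values, which has the same probability under $X$ and under $U$. The paper's ``cap before the $(t+1)$-st read, which never fires'' is exactly your use of the read bound.

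One small correction to your side remark. Since $X$ is uniform on every $t$ coordinates, take any oracle on which $M$ makes a $(t+1)$-st distinct read. Its first $t$ answers are matched by some point of the support of $X$, on which $M$ then makes the same $(t+1)$-st read. So the read bound on the support of $X$ already implies it for every oracle, and the point you flag as subtle is not actually delicate.
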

\begin{proof}
Cap $M$ before a $(t+1)$-st read; for fixed coins every leaf event prescribes $\le t$ coordinates, which have the
same probability under $X$ and $U$; the cap never fires.
\end{proof}

\begin{lemma}[lazy samplers sharing an input; after {\cite[Thms.~8.1--8.2]{VW26}}]\label{main:lem:2}
Let $y_i=\phi_i(y)$ ($1\le i\le m$) with each coordinate of $y_i$ depending on $\le c_i$ coordinates of $y$,
$\Pi_i\leftarrow P_i(y_i)$ independent, and $\mathsf{Samp}_i$ a $(\hat b_i,\bar\kappa_i)$-lazy sampler for $P_i$ on
$\mathcal Y_i$. The \emph{joint sampler} answers each query to $y$ by reading $y$, and the queries to $\Pi_i$ by
$\mathsf{Samp}_i^{\phi_i(y)}$, computing each coordinate of $\phi_i(y)$ that $\mathsf{Samp}_i$ reads from $\le c_i$
coordinates of $y$. For every $y$ with $\phi_i(y)\in\mathcal Y_i$ for all $i$, it is exact against every strategy
making $\le\hat b_i$ distinct queries to each $\Pi_i$; for every oracle $y$, after $k_i$ distinct queries to each
$\Pi_i$ it has read $\le\#y\text{-queries}+\sum_ic_i\bar\kappa_ik_i$ coordinates of $y$; and it is $\PPTF$ if the
$\mathsf{Samp}_i$ are.
\end{lemma}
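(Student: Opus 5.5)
We prove the three claims of Lemma~\ref{main:lem:2} separately: the read bound, the $\PPTF$ property, and exactness. Exactness is a hybrid argument over the $m$ components.

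\emph{Reads and efficiency.} A direct query to $y$ costs one read. Each coordinate of $\phi_i(y)$ requested by $\mathsf{Samp}_i$ is computed from at most $c_i$ coordinates of $y$. By oracle-robustness (Definition~\ref{main:def:lazy}), after $k_i$ distinct queries to $\Pi_i$ the sampler $\mathsf{Samp}_i$ has read at most $\bar\kappa_ik_i$ coordinates of $\phi_i(y)$, for \emph{every} oracle value $\phi_i(y)$. So the total number of reads is at most $\#y\text{-queries}+\sum_ic_i\bar\kappa_ik_i$. Memoising coordinates already read only lowers this count. The joint sampler is a sequential composition of polynomially many $\PPTF$ steps, one sampler call per query plus the computation of $\phi_i$-coordinates, so it is $\PPTF$ by (F-d).

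\emph{Exactness.} Fix $y$ with $\phi_i(y)\in\mathcal Y_i$ for all $i$. Fix a strategy $V^*$ that makes at most $\hat b_i$ distinct queries to each $\Pi_i$. Its coins are independent of everything else, so we may condition on them and treat $V^*$ as deterministic. Define hybrids $H_j$ for $0\le j\le m$: in $H_j$, queries to $\Pi_i$ with $i\le j$ are answered by $\mathsf{Samp}_i^{\phi_i(y)}$, and queries with $i>j$ are answered from independently drawn real proofs $\Pi_i\leftarrow P_i(\phi_i(y))$. Queries to $y$ are answered by $y$ itself in every hybrid. Then $H_0$ is the real view and $H_m$ is the joint sampler.

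To compare $H_{j-1}$ with $H_j$, build a strategy $V^*_j$ that interacts only with $\Pi_j$. It internally runs $V^*$. It answers $y$-queries from the fixed $y$, which is allowed because $y$ is a fixed input, not part of the proof. It answers queries to $\Pi_i$ with $i<j$ by running $\mathsf{Samp}_i$ internally, and queries with $i>j$ by sampling $P_i(\phi_i(y))$ internally. It forwards queries to $\Pi_j$ to its own oracle. The coins of $V^*_j$ are $V^*$'s coins together with the internal samplers' and provers' randomness, all independent of $\Pi_j$ and of $\mathsf{Samp}_j$. Since $V^*_j$ makes at most $\hat b_j$ distinct $\Pi_j$-queries, exactness of $\mathsf{Samp}_j$ gives identical transcripts against $\mathsf{Samp}_j$ and against $\Pi_j$. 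The full view of $V^*$ is a deterministic function of these coins and that transcript, so $H_{j-1}\equiv H_j$.

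The step to check carefully is the independence requirement in Definition~\ref{main:def:lazy}: the internal $\mathsf{Samp}_i$ must not share state with $\mathsf{Samp}_j$. This holds because the samplers keep separate states and share only the fixed input $y$.
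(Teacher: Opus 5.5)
Your proposal is correct and is essentially the paper's proof: the same hybrids $H_0,\dots,H_m$, with the strategy that runs everything except $\Pi_j$ internally (coins independent of $\Pi_j$, at most $\hat b_j$ distinct queries to it) invoking exactness of $\mathsf{Samp}_j$ on $\phi_j(y)\in\mathcal Y_j$, and the read bound obtained from the oracle-robust bound $\bar\kappa_ik_i$ on reads of $\phi_i(y)$ times the locality $c_i$. Your treatment of the $\mathrm{PPT}^{\mathbb F}$ clause via (F-d) spells out a point the paper leaves implicit.
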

\begin{proof}
Fix $y$ with $\phi_i(y)\in\mathcal Y_i$ for all $i$. Hybrids $H_0,\dots,H_m$: in $H_j$ the first $j$ proofs are
answered by their samplers; the strategy that simulates everything except $\Pi_j$ internally has coins independent
of $\Pi_j$ and makes $\le\hat b_j$ distinct queries to it, and exactness of $\mathsf{Samp}_j$ on
$\phi_j(y)\in\mathcal Y_j$ gives $H_{j-1}\equiv H_j$. The read bound holds for every oracle, since each
$\mathsf{Samp}_i$ reads $\le\bar\kappa_ik_i$ coordinates of $\phi_i(y)$ for every oracle
(Definition~\ref{main:def:lazy}).
\end{proof}

\begin{theorem}[the outer shell; the construction and argument of {\cite[Thm.~8.9]{BCFGRS17}} in the PCP form of {\cite[Thm.~8.8]{ZOGS26}}]\label{main:thm:A}
Let $(\hat P,\hat V)$ be Proposition~\ref{main:prop:3M}'s PCPP, $\bar\kappa'=\max(1,\bar\kappa)$,
$\hat\delta=\tau_0/(4(\eta+1))$. The \emph{shell PCP} $(P_A,V_A)$: the prover $P_A$ samples $u'\leftarrow C'$, puts
$w'_0=A_0|_{\Hh}+u'$, $w'_1=g(w'_0)$, and outputs $(w'_0,w'_1,\hat\pi\leftarrow\hat P(w'_0,w'_1))$; the verifier $V_A$
uses $U_A=\hat U^{\times\lambda_0}\sqcup\Hh^{\times\lambda_1}$ (weights within a factor 2), runs $\hat V$ on a coin of
$\hat U$, and on
$\omega'\in\Hh$ checks $w'_1[\omega']=\mathcal Q(\omega',(w'_0(\chi_\psi\omega'))_\psi)$. Then (i) perfect completeness;
(ii) gap $\gamma_A\ge21/(2048(\eta+1)\cdot2660^d)$; (iii) black-box perfect zero knowledge against adaptive malicious
verifiers with query bound $b_A=\lfloor t/(\eta\bar\kappa')\rfloor$ (distinct symbol queries), via a universal
straight-line $\PPTF$ simulator; (iv) non-adaptive, $q_A:=\max(16(\ln\ell)^6,\eta+1)$ queries,
$|U_A|\le4\max(|\hat U|,\ell)$.
\end{theorem}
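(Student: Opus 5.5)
We prove the four items in turn, with zero knowledge (iii) the substantive one.

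\emph{Completeness (i).} For $(x,w)\in R$, Proposition~\ref{main:prop:6M}(b) gives $w'_0\in C_0$ and $w'_1=g(w'_0)\in C_1$. Hence the identity test on every $\omega'\in\Hh$ passes by construction. The PCPP tests pass by the perfect completeness of Proposition~\ref{main:prop:3M}.

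\emph{Gap (ii).} Take a NO instance and any proof $(w_0,w_1,\hat\pi)$. Let $\delta:=\Delta((w_0,w_1),C_0\times C_1)$.
\begin{itemize}[nosep]
\item If $\delta\ge\hat\delta$, the PCPP part rejects with probability $\ge\hat\delta/(8\cdot2660^d)$.
\item Otherwise let $(c_0,c_1)\in C_0\times C_1$ be nearest. By Proposition~\ref{main:prop:6M}(c), $g(c_0)\ne c_1$, since $x\notin L(R)$. Both $g(c_0)$ and $c_1$ lie in $\RS[\Hh,4(h+t)]$ by Proposition~\ref{main:prop:6M}(d), so they differ on at least a $\tau_0$ fraction of $\Hh$.
\item The locality bound of Proposition~\ref{main:prop:6M}(a) gives $\Delta(g(w_0),g(c_0))\le\eta\delta$, and also $\Delta(w_1,c_1)\le\delta$.
\item Therefore $w_1$ and $g(w_0)$ differ on at least a $\tau_0-(\eta+1)\delta\ge\frac34\tau_0$ fraction, and the identity test catches this.
\end{itemize}
Since the two families are drawn with probability at least $\frac13$ each, combining the cases with $\tau_0\ge63/64$ gives the stated constant $21/(2048(\eta+1)2660^d)$ after arithmetic.

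\emph{Zero knowledge (iii).} The simulator runs $V^*$ straight-line, tossing its coins. It answers proof-part queries with the joint lazy sampler of Proposition~\ref{main:prop:3M}/Lemma~\ref{main:lem:2}, and queries to $w'_0,w'_1$ by reading an oracle $y=(w'_0,g(w'_0))$. Each coordinate of $w'_1$ is computed from at most $\eta$ coordinates of $w'_0$. The simulator realises the oracle $w'_0$ lazily, answering each new coordinate with a fresh uniform field element.

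The argument is a hybrid. First, fix $w'_0$ (so $y\in C_0\times C_1$). The real view equals the view with the PCPP proof replaced by the sampler, by exactness for $\hat b=\infty$. Second, consider the composite algorithm $M$: it runs $V^*$ together with the sampler and reads $w'_0$. Its read count is at most $\eta(\#y\text{-queries}+\bar\kappa'\#\text{proof queries})\le\eta\bar\kappa' b_A\le t$, and this holds for every oracle thanks to oracle-robustness. Its coins are independent of $w'_0$. So Lemma~\ref{main:lem:1} lets us replace the $t$-wise independent $w'_0=A_0|_\Hh+u'$ by a uniform vector. We apply it to the shifted vector $u'$, since a shift of a $t$-wise independent vector is $t$-wise independent. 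The resulting simulator never uses $w$. It is $\PPTF$ by (F-d), since the sampler is $\PPTF$ and each read costs polynomial time.

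\emph{Query and coin counts (iv).} The identity test reads $w'_1[\omega']$ plus $\eta$ values of $w'_0$, and the PCPP part reads at most $16(\ln\ell)^6$ symbols. Balancing weights within a factor 2 gives $|U_A|\le4\max(|\hat U|,\ell)$.

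The main obstacle is the read accounting in (iii). It must hold for every oracle, not only codewords, because the uniform replacement is not a codeword. Oracle-robustness of the lazy sampler from Definition~\ref{main:def:lazy} is exactly what supplies this.
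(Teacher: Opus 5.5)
This is essentially the paper's proof. You use the same two-case split in (ii). In (iii) you use the same hybrid: exactness of the joint lazy sampler for each fixed $w'_0$, then Lemma~\ref{main:lem:1} applied to the composite reader of $w'_0$, whose bound $\eta\bar\kappa' b_A\le t$ holds for every oracle by oracle-robustness.

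There is one slip in (ii). $\Delta$ of a pair is the relative distance over all $2\ell$ coordinates, so each half of a $\delta$-close pair is only $2\delta$-close. Your bounds $\Delta(w_1,c_1)\le\delta$ and $\Delta(g(w_0),g(c_0))\le\eta\delta$ should therefore read $2\delta$ and $2\eta\delta$. With that correction the consistency test rejects with probability at least $\tau_0-2(\eta+1)\delta>\tau_0/2$, not $\frac34\tau_0$.

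The slip does not affect the result. Since $\hat\delta/(8\cdot 2660^d)<\tau_0/2$, the PCPP branch is the binding one. It gives $\frac13\cdot\hat\delta/(8\cdot2660^d)\ge 21/(2048(\eta+1)2660^d)$, as stated.
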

\begin{proof}
(i) Proposition~\ref{main:prop:6M}(b). (ii) For a NO instance and any $(\tilde w_0,\tilde w_1,\tilde\pi)$: if
$\Delta((\tilde w_0,\tilde w_1),C_0\times C_1)\ge\hat\delta$, $\hat V$ rejects with probability
$\ge\hat\delta/(8\cdot2660^d)$; otherwise a $\hat\delta$-close $(w_0,w_1)\in C_0\times C_1$ has each half $2\hat\delta$-close,
$g(w_0)\ne w_1$ by Proposition~\ref{main:prop:6M}(c), both lie in $C_1\cup g(C_0)$ so $\Delta(g(w_0),w_1)\ge\tau_0$,
$\Delta(g(w_0),g(\tilde w_0))\le2\hat\delta\eta$, and the consistency test rejects with probability
$\ge\tau_0-2\hat\delta(\eta+1)=\tau_0/2$. Each part is drawn with probability $\ge\frac13$, so the gap is at least
$\frac13\cdot\frac{\hat\delta}{8\cdot2660^d}\ge\frac{21}{2048(\eta+1)2660^d}$ as $\tau_0\ge\frac{63}{64}$ (and $\frac{\hat\delta}{8\cdot2660^d}<\frac{\tau_0}2$).
(iii) $S_A$ runs $V^*$ and the sampler $\hat S$ of Proposition~\ref{main:prop:3M}, answering $\hat S$'s reads of
$\hat w_0[i]$ with fresh uniform elements of $\F$ (stored) and of $\hat w_1[j]$ by computing $g(\hat w_0)[j]$ after
defining $\hat w_0$ on the $\le\eta$ coordinates it reads. Fix $(x,w)$ and a $b_A$-bounded $V^*$. The real view is a
mixture over $w'$ of views against $\hat P(w')$ with $w'=(w'_0,g(w'_0))\in C_0\times C_1$
(Proposition~\ref{main:prop:6M}(b)); for each such $w'$,
Proposition~\ref{main:prop:3M} gives equality with $\hat S^{V^*,w'}$. The oracle algorithm
$\Psi^Z:=\hat S^{V^*,(Z,g(Z))}$ reads $\le\eta\bar\kappa'b_A\le t$ coordinates of $Z$ on every execution and for
every oracle, with coins independent of $Z$; $w'_0$ is $t$-wise independent, so Lemma~\ref{main:lem:1} gives
$\Psi^{w'_0}\equiv\Psi^{U}=S_A$. (iv) By construction.
\end{proof}

\subsection{Zero-knowledge-preserving gap amplification (statement as used)}\label{main:sec:D}

Fix Dinur's instantiation $\mathcal D$ of one amplification round on the alphabet $\Sigma_0=\{0,1\}^3$
(\cite[proof of Thm.~1.5]{Din07}), with the long-code tester's free-variable set $L'$ fixed canonically and an
always-rejecting gadget for unsatisfiable constraints (the canonical folding of Section~\ref{thmD:sec:inst}). Let
$d_{\mathcal D}$ be the degree of the preprocessed graph, $t_{\mathcal D}$ the powering parameter (twice the square of a large integer,
Remark~\ref{thmD:rem:const}), $C_{\mathcal D}$ and
$\alpha_{\mathcal D}\in(0,1)$ the constants of \cite[Thm.~1.5]{Din07} for $|\Sigma|=8$, and
\[c_{\rm it}:=2d_{\mathcal D}^{\lceil t_{\mathcal D}/2\rceil},\qquad E_L:=\log_2c_{\rm it},\qquad E_C:=\log_2C_{\mathcal D}.\]
These are fixed numbers of one explicit algorithm; only lower bounds on them are known from the analysis of
\cite{Din07} with the repairs of Section~\ref{sec:thmD} ($E_L\ge3.958\cdot10^{13}$ and $Q_D\ge10^{13}$ below, and $E_C$
doubly exponential in $t_{\mathcal D}$).

\begin{theorem}[zero-knowledge-preserving gap amplification; Theorem~\ref{thmD:thm}]\label{main:thm:D}
Let $(P_0,V_0)$ be a non-adaptive PCP with coin multiset $U$ ($|U|\le\mathrm{poly}(|x|)$) for a promise relation,
with alphabet $\{0,1\}^{a_0}$, at most $q_0$ queries, residual circuits of at most $s_0$ gates computable with the
query positions in polynomial time, perfect completeness and gap $\gamma\in(0,1]$. Let $m^*\le4s_0+1$ be the number of
clauses per test of the Tseitin front end (Section~\ref{thmD:sec:T}), $\hat\gamma:=\max\{\gamma,1/|U|\}$ and
$k:=\lceil\log_2(3m^*/\hat\gamma)\rceil$. Then there is a PCP $(P',V')$ for the same promise relation with:
\begin{enumerate}[nosep,label=(\alph*)]
\item binary alphabet; non-adaptive $V'$ with $Q_D:=6\lceil1/\alpha_{\mathcal D}\rceil$ queries and a coin multiset of
polynomial size; perfect completeness; soundness error $\le\frac12$;
\item length $\le27(4s_0+1)|U|\,C_{\mathcal D}(3(4s_0+1)/\gamma)^{E_C}$;
\item $P'(x,w)=f_x(P_0(x,w))$ with $f_x$ deterministic, polynomial-time, $x$-only, total, and every output bit
depending on at most $L_D:=q_0c_{\rm it}^{\,k}\le q_0c_{\rm it}(3(4s_0+1)/\gamma)^{E_L}$ symbols of the old proof;
\item if $(P_0,V_0)$ is black-box perfect (resp.\ $\varepsilon$-statistical) zero knowledge with query bound $b_0$
against adaptive malicious verifiers via a universal simulator $\mathrm{Sim}_0$, then $(P',V')$ is black-box perfect
(resp.\ $\varepsilon$-statistical) zero knowledge with query bound $\lfloor b_0/L_D\rfloor$ via a universal simulator
that is straight-line (in the sense of Definition~\ref{thmD:def:zk}) if $\mathrm{Sim}_0$ is;
\item the classes are preserved: a strict-PPT or Las Vegas prover stays so; if $b_0\le\mathrm{poly}(|x|)$, a simulator
of class (SP), (EP) or (GB) stays so, with time $\le|\rho|+p(|x|)+(N_0+1)(2+Q^*+3Q^*L_Dp(|x|))$, where $p$ is the polynomial of
Lemma~\ref{thmD:lem:DP} (construction time and time per dependency-graph node), $N_0$ is
$\mathrm{Sim}_0$'s step count without the $|\rho|$ steps that toss $V^*$'s coins and $Q^*$ bounds the queries of $V^*$ (at most $\lfloor b_0/L_D\rfloor$ under the
memoised interface).
\end{enumerate}
\end{theorem}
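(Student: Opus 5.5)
The plan is to assemble the statement from one round of Dinur's amplification, made local and deterministic, iterated $k$ times, with a Tseitin front end and the local-computability lemma in cache-and-cap form.

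\emph{Front end.} Convert each test of $V_0$ (a tuple of $q_0$ positions over $\{0,1\}^{a_0}$ and a residual circuit of at most $s_0$ gates) into a constraint graph over Dinur's alphabet $\Sigma_0=\{0,1\}^3$. The Tseitin transform of the residual circuit yields $m^*\le4s_0+1$ clauses of width $\le3$ on the input bits and on auxiliary gate variables. Each auxiliary variable is a deterministic function of the $q_0$ symbols read by its test, so the new labelling is computed from $P_0(x,w)$ by an $x$-only, total map of locality $q_0$. Perfect completeness is preserved, and the gap drops to at least $\hat\gamma/m^*$. Replacing $\gamma$ by $\hat\gamma$ is harmless, since every positive gap is at least $1/|U|$.

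\emph{Amplification rounds.} Each round of $\mathcal D$ (degree reduction and expanderisation, powering by walks of length $t_{\mathcal D}$, composition with the long-code assignment tester under canonical folding) will be written explicitly as a map on labellings. The key claim is that every new label is a fixed function of at most $c_{\rm it}$ old labels. For powering, a vertex label is the tuple of its ball of radius $\lceil t_{\mathcal D}/2\rceil$, which has at most $2d_{\mathcal D}^{\lceil t_{\mathcal D}/2\rceil}$ vertices. For composition, the long-code table of a constraint depends on that constraint's label and on the auxiliary assignment, which is itself a function of the same labels. These maps use no randomness and no witness, and they are defined on every string; that totality is what the canonical folding and the always-rejecting gadget provide.

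Dinur's theorem gives $\mathrm{gap}'\ge\min(2\,\mathrm{gap},\alpha_{\mathcal D})$ and size growth by a factor $C_{\mathcal D}$ per round. So $k=\lceil\log_2(3m^*/\hat\gamma)\rceil$ rounds reach gap $\alpha_{\mathcal D}$, with size $\le|U|m^*C_{\mathcal D}^{k}$, which gives (b), and locality $L_D=q_0c_{\rm it}^k$, which gives (c). Sampling $6\lceil1/\alpha_{\mathcal D}\rceil$ edges of the final binary constraint graph then yields soundness $\le\frac12$, hence (a).

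\emph{Zero knowledge, (d)--(e).} Let $V^*$ be $\lfloor b_0/L_D\rfloor$-bounded. The simulator runs $V^*$ straight-line. On each new query to a bit of $P'$ it computes the dependency set of that bit, of size $\le L_D$, in the old proof. It asks $\mathrm{Sim}_0$ for the uncached old symbols through an internal old-verifier that issues exactly those queries, caches the answers, and evaluates $f_x$.

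The internal verifier makes at most $b_0$ distinct queries, but it must be a legitimate $b_0$-bounded strategy on \emph{every} oracle. For this I cap it: it halts before a query that would exceed $b_0$. The cap never fires, because $V^*$ is bounded on every oracle. The view of $V^*$ is then a deterministic function of the internal strategy's view, so exactness (or statistical distance $\le\varepsilon$) transfers from $\mathrm{Sim}_0$, and straight-lineness is inherited. Since the composed strategy is one fixed universal wrapper, universality is preserved. The running time is the bookkeeping stated in (e): each of the $Q^*$ queries triggers at most $L_D$ dependency evaluations, each costing $p(|x|)$, on top of $\mathrm{Sim}_0$'s steps. Expected-time or $\bot$-failure classes are preserved because the wrapper is deterministic polynomial overhead per step.

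\emph{Main obstacle.} The hardest step is the locality and totality of Dinur's composition step, which needs the repairs of his printed proofs. The assignment tester's proof must be a canonical, deterministic function of the constraint's input labels, including on non-satisfying labels; these need a fixed, always-rejecting or folded encoding. Its soundness constants must also survive the alphabet fixing. I would first fix the tester of \cite{Din07} with the canonical $L'$, then verify that its honest proof is a deterministic long-code encoding of the local assignment plus a deterministic Tseitin extension, and only then rerun Dinur's gap analysis to confirm that $\alpha_{\mathcal D}$ and $C_{\mathcal D}$ are unchanged.
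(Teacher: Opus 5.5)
Your proposal follows the paper's proof: a Tseitin front end of locality $q_0$; Dinur's rounds rewritten as explicit deterministic, $x$-only, total honest maps of locality $c_{\rm it}$ (with canonical folding), iterated $k$ times; and the cache-and-cap local-computability lemma, in which $V^*$'s view is recovered by a deterministic replay from the output of $\mathrm{Sim}_0$ run on the capped wrapper. Only the bookkeeping needs fixing: the front-end constraint graph has gap $\hat\gamma/(3m^*)$, not $\hat\gamma/m^*$, since a clause vertex may have only one violated edge of its three (this is the $3$ in $k$); a ball of radius $\lceil t_{\mathcal D}/2\rceil$ has at most $d_{\mathcal D}^{\lceil t_{\mathcal D}/2\rceil}$ vertices, and the factor $2$ in $c_{\rm it}$ comes from the two endpoints of a powered edge; $V'$ samples $\lceil1/\alpha_{\mathcal D}\rceil$ edges and reads six bits of each; and for a rewinding $\mathrm{Sim}_0$ each interaction with the wrapper replays $V^*$, which is why the time bound in (e) carries the factor $N_0+1$ on the per-run cost instead of an additive overhead.
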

\emph{Credit.} Theorem~\ref{main:thm:D} restates, in the form used here, items (a)--(e) of
Theorem~\ref{thmD:thm}, which is proved in Section~\ref{sec:thmD}. It is a complete write-up of the sketch in
\cite[App.~A, Lemma~A.1]{ZOGS26}: Dinur's gap amplification \cite{Din07} with a Tseitin front end \cite{Tse68}, and
zero knowledge through a re-proof of \cite[Lemma~3.2]{GOS25} with a cache and a cap (against adaptive malicious
verifiers; only non-adaptive local post-processing is used). The proof also corrects six points of the printed
presentation of \cite{Din07}: the domination step in the soundness proofs of \cite[Lemmas~1.8 and~7.2]{Din07}
(Remark~\ref{thmD:rem:dom}), the folding of the long-code tester \cite[\S7]{Din07}, the case $\delta>\frac12$ in
the proof of \cite[Prop.~7.3]{Din07}, and three steps in the proof of \cite[Lemma~6.1]{Din07}, whose repair
restricts the powering parameter to $t_{\mathcal D}=2r^2$ with $r$ large (Remark~\ref{thmD:rem:L61}).

The following table checks the hypotheses of Theorem~\ref{main:thm:D} for the shell PCP $(P_A,V_A)$ of
Theorem~\ref{main:thm:A}.

\begin{center}\small
\begin{tabular}{p{5.0cm}p{6.3cm}p{3.6cm}}
\toprule
hypothesis of Theorem~\ref{main:thm:D} & $(P_A,V_A)$ & where\\\midrule
non-adaptive, coin multiset & yes, $|U_A|\le32\ell(\ln\ell)^{\log_280}$ & Thm.~\ref{main:thm:A}(iv), Prop.~\ref{main:prop:3M}\\
alphabet $\{0,1\}^{a_0}$ & $a_0=\lambda_q=O(\log\ell)$, decoded by $\mathrm{dec}$ & \S\ref{main:sec:conv}\\
queries $q_0$ & $q_A=\max(16(\ln\ell)^6,\eta+1)$ & Thm.~\ref{main:thm:A}(iv)\\
residual circuits $s_0$, poly-time & $s_A=O((\ln\ell)^{12}\log^2q+\beta\log^2q+q_A\lambda_q)$ & Props.~\ref{main:prop:6M}(e), \ref{main:prop:3M}\\
perfect completeness, gap & $\gamma_A\ge21/(2048(\eta+1)2660^d)$ & Thm.~\ref{main:thm:A}\\
ZK, universal, adaptive, bound $b_0$ & $b_A=\lfloor t/(\eta\bar\kappa')\rfloor$, straight-line, $\PPTF$ & Thm.~\ref{main:thm:A}(iii)\\
simulator class & (EP) by (F-a); (GB) by (F-b) & \S\ref{main:sec:conv}\\
prover class & $\PPTF$, hence Las Vegas & Props.~\ref{main:prop:LEN}, \ref{main:prop:6M}\\\bottomrule
\end{tabular}
\end{center}

\subsection{The main theorem}\label{main:sec:main}

\begin{theorem}[main theorem]\label{main:thm:main}
For every time-constructible polynomial $T(n)\ge n$, every time-constructible polynomial $b(n)\ge1$ (including
$b\gg T$) and every $R\in\mathrm{NTIME}(T)$, there is a PCP for $R$ with:
\begin{enumerate}[nosep,label=(\arabic*)]
\item alphabet $\{0,1\}$, a verifier running in time $\mathrm{poly}(|x|)$, and
$Q_{\rm fin}=3Q_D=18\lceil1/\alpha_{\mathcal D}\rceil$ non-adaptive queries (an absolute constant);
\item perfect completeness and soundness error $\le\frac{27}{64}$, using $O(\log(T+b))$ fair coins;
\item length $\le c_{\rm len}\,(T+b)\,(\ln(T+b))^{C}$ with a constant $c_{\rm len}$ that may depend on $R$, $T$ and
$b$ but not on the input (the constant in $N_{\rm pts}=O(T\log^2T)$ depends on $R$ and $T$), and
\[C:=26.38\,(E_C+E_L)+122,\]
which dominates the exact exponent $(15+\log_22660)(E_C+E_L)+\log_2\aKL+27+\log_280+o(1)$, where
$15+\log_22660<26.378$, $\log_2\aKL<87.39$ and $\log_280<6.33$;
\item perfect zero knowledge (Definition~\ref{main:def:ZK}) against every adaptive, malicious, computationally
unbounded verifier making at most $b$ distinct bit queries, via one universal straight-line black-box simulator that
never sees the witness and is exact in expected time $\mathrm{poly}(|x|,T,b)$ (class (EP); here and below every time
bound carries the additive $O(|\rho|)$ for $V^*$'s coins $\rho$ of Definition~\ref{main:def:ZK}); the same simulator run
on a pool (F-b) is of class (GB), and truncated it is of class ($\varepsilon$-SP) with
$\varepsilon=N_{\rm samp}2^{-k}$ for every polynomially bounded $k$;
\item an honest prover running in expected polynomial time (Las Vegas) given a witness.
\end{enumerate}
The amortised read factor of the RS PCPP is $\bar\kappa\le(\ln\ell)^{\log_2\aKL}$ with $\log_2\aKL<87.39$, the
function $\Xi$ of Step~1 of the proof satisfies $\Xi(\ell)=(\ln\ell)^{(15+\log_22660)E_L+\log_2\aKL+7+o(1)}$, and the
root domain of the shell PCP has size $\ell=|\Hh|=O\big((T\log^2T+b\,\Xi(Y_0^2))(\log(T+b))^6\big)$, with $Y_0$ as in
Step~2.
\end{theorem}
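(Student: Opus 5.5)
The plan is to assemble the main theorem from the shell (Theorem~\ref{main:thm:A}) and zero-knowledge-preserving gap amplification (Theorem~\ref{main:thm:D}), and then convert coins (Lemma~\ref{main:lem:CC}). All the work is in fixing the independence parameter $t$ so that, after amplification divides the query bound by $L_D$, at least $b$ queries remain. The length must still come out as $(T+b)\polylog(T+b)$.

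\emph{Step 1 (the budget function $\Xi$).} For the shell on a root domain of size $\ell$, we have $\eta\le2\log_2\ell+8$ and $\bar\kappa'\le(\ln\ell)^{\log_2\aKL}$ (Theorem~\ref{main:thm:M}). The shell has $q_A=O((\ln\ell)^6)$ queries, circuits $s_A=O((\ln\ell)^{12}\log^2q)$ with $\log q=O(\log\ell)$, and gap $\gamma_A\ge c/((\eta+1)2660^d)$ with $2660^d\le(\ln\ell)^{\log_22660}$. Theorem~\ref{main:thm:D}(c) then gives $L_D\le q_Ac_{\rm it}(3(4s_A+1)/\gamma_A)^{E_L}=(\ln\ell)^{(15+\log_22660)E_L+6+o(1)}$. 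I define $\Xi(\ell)$ as an explicit, monotone, efficiently computable upper bound for $2L_D\eta\bar\kappa'$, in the form stated in the theorem.

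\emph{Step 2 (choosing $t$, the circularity).} The obstacle is that $\ell$ depends on $t$ through $Y=\max(N_{\rm pts},t)$ (Proposition~\ref{main:prop:6M}), while $t$ should be $\lceil b\,\Xi(\ell)\rceil$. I break the cycle with an a priori bound. Set $Y_0:=N_{\rm pts}+b\cdot\Xi_0$, where $\Xi_0$ is a crude polylogarithmic bound, and put $t:=\lceil b\,\Xi(Y_0^2)\rceil$. Proposition~\ref{main:prop:6M}(e) gives $\ell<2^{18}Y(\ln Y)^6$. A short calculation shows $\ell\le Y_0^2$ for all large inputs, so monotonicity gives $t\ge b\,\Xi(\ell)\ge2bL_D\eta\bar\kappa'$. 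Hence $b_A=\lfloor t/(\eta\bar\kappa')\rfloor\ge bL_D$, and therefore $\lfloor b_A/L_D\rfloor\ge b$. Small inputs, including those with $\ln Y<30$, are finitely many and are absorbed into $c_{\rm len}$ by a trivial padding of $Y$ from below. Since $\Xi(Y_0^2)=\polylog(T+b)$, this yields $\ell=O((T\log^2T+b\,\Xi(Y_0^2))(\log(T+b))^6)$.

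\emph{Step 3 (amplify, count, and check the classes).} I apply Theorem~\ref{main:thm:D} to $(P_A,V_A)$; the hypothesis table in Section~\ref{main:sec:D} verifies its assumptions. This gives a binary PCP with $Q_D$ queries, soundness $\frac12$ and zero knowledge against $b$ queries. The length bound (b), $27(4s_A+1)|U_A|C_{\mathcal D}(3(4s_A+1)/\gamma_A)^{E_C}$ with $|U_A|=O(\ell(\ln\ell)^{\log_280})$, gives $(T+b)(\ln(T+b))^{C}$. The exponent is obtained by collecting $\log_2\aKL$ and $(15+\log_22660)E_L$ from $\Xi$, $(15+\log_22660)E_C$ from the amplification, and the remaining $6+\log_280+\dots$ terms, with slack absorbed into $26.38(E_C+E_L)+122$. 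Lemma~\ref{main:lem:CC} then gives $3Q_D$ queries, $O(\log(T+b))$ coins and soundness $\frac{27}{64}$, without changing zero knowledge. The simulator classes follow from Theorem~\ref{main:thm:A}(iii) (straight-line $\PPTF$), then (F-a) to (F-c) for (EP), (GB) and ($\varepsilon$-SP), transported through Theorem~\ref{main:thm:D}(d),(e) since $b_A$ is polynomial. The prover is Las Vegas by (F-e) and Theorem~\ref{main:thm:D}(e).

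I expect Step 2 to be the delicate point, specifically getting the self-referential choice of $t$ and the monotonicity of $\Xi$ right while keeping everything computable in polynomial time. The rest is bookkeeping over results already proved.
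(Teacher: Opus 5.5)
Your proposal is correct and follows the paper's proof. The shared steps are the explicit monotone envelope $\Xi\ge 2L_D\eta\bar\kappa'$, the choice $t=\lceil b\,\Xi(Y_0^2)\rceil$ with $\ell\le Y_0^2$ above an input-size threshold so that $b_A\ge bL_D$, then amplification by Theorem~\ref{main:thm:D}, coin conversion by Lemma~\ref{main:lem:CC}, and the simulator classes via (F-a)--(F-c).

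The differences are cosmetic:
\begin{itemize}
\item The paper takes $Y_0=\max(N_{\rm pts},b,n_{\rm in})$. The square $Y_0^2$ already absorbs the polylogarithmic factor, so your extra $\Xi_0$ is unnecessary, and including $n_{\rm in}$ makes the threshold independent of $b$.
\item The paper handles the finitely many small inputs by a hard-coded table with an empty proof rather than by padding.
\item The paper obtains the ($\varepsilon$-SP) class by applying (F-c) directly to the final $\mathrm{PPT}^{\F}$ simulator, whereas you transport it through Theorem~\ref{main:thm:D}(d),(e); your route also works.
\end{itemize}
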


\begin{proof}
\emph{1. Envelopes.} For $\ell\ge3$ define the explicit functions
$\bar\eta(\ell)=2\log_2\ell+9$, $\bar\kappa(\ell)=(\ln\ell)^{\log_2\aKL}$, $\bar q(\ell)=16(\ln\ell)^6+\bar\eta(\ell)$,
$\bar s(\ell)=C_s(\ln\ell)^{14}(\ln\ln\ell)^2$, $\bar\gamma(\ell)^{-1}=\frac{2048}{21}\bar\eta(\ell)(\ln\ell)^{\log_22660}$,
$\bar L_D(\ell)=\bar q(\ell)c_{\rm it}(3(4\bar s(\ell)+1)\bar\gamma(\ell)^{-1})^{E_L}$ and
$X(\ell):=2\bar L_D(\ell)\bar\eta(\ell)\bar\kappa(\ell)$, all nondecreasing on $[3,\infty)$ (where $\ln\ln\ell>0$), and
let $\Xi(\ell):=\lceil\tilde X(\ell)\rceil+1$, where $\tilde X(\ell)$ is the output of a fixed deterministic evaluation
of $X(\ell)$ to within $\frac12$. As $X(\ell)=(\ln\ell)^{O(1)}$ and $C_s,c_{\rm it},\aKL$ are fixed numbers, $\Xi(\ell)$
is computable in time $\mathrm{poly}(\log\ell)$, and $X(\ell)\le\Xi(\ell)<X(\ell)+3$. Here $C_s$ is an absolute constant
chosen so that $s_A\le\bar s(\ell)$ whenever $\log_2q\le\ln\ell\cdot\ln\ln\ell$ ($C_s$ is explicit: it comes from the
leaf circuits of Proposition~\ref{main:prop:LEN}, the consistency circuit of Proposition~\ref{main:prop:6M}(e) and
$\mathrm{dec}$); by Lemma~\ref{main:lem:field}, $\log_2q=O(\log\ell)$, so this holds for all $\ell\ge\ell_{\rm Lin}$, an
absolute threshold (we take $\ell_{\rm Lin}\ge3$) whose existence is all we use. On $\mathcal W$ with
$\ell\ge\ell_{\rm Lin}$: $\eta+1\le\bar\eta$,
$\max(1,\bar\kappa)\le\bar\kappa(\ell)$ (Theorem~\ref{main:thm:Z}), $q_A\le\bar q$, $s_A\le\bar s$,
$\gamma_A\ge\bar\gamma$ (Theorem~\ref{main:thm:A}, as $2660^d\le(\ln\ell)^{\log_22660}$), hence
$L_D\le\bar L_D(\ell)$ for the shell PCP by Theorem~\ref{main:thm:D}(c), and $X(\ell)\ge2L_D\eta\bar\kappa'$.
\emph{2. Choice of $t$.} Given $x$ with $|x|=n_{\rm in}$, let $Y_0:=\max(N_{\rm pts}(\Phi_x),b(n_{\rm in}),n_{\rm in})$ and
$t:=\lceil b\,\Xi(Y_0^2)\rceil$ ($b$ is time-constructible and $\Xi(Y_0^2)$ is computable in time
$\mathrm{poly}(\log Y_0)$). Run
Proposition~\ref{main:prop:6M} with this $t$; then $Y\le Y_0\Xi(Y_0^2)+1$ and $\ell<2^{18}Y(\ln Y)^6$.
\emph{3. Thresholds.} There is $Y_{\rm thr}$ such that for $Y_0\ge Y_{\rm thr}$: $\ln Y\ge30$; $2^{18}Y(\ln Y)^6\le Y_0^2$,
hence $\ell\le Y_0^2$ and $X(\ell)\le X(Y_0^2)\le\Xi(Y_0^2)$; $\ell\ge\ell_{\rm Lin}$; and $\ln\NN\ge7$. Since
$Y_0\ge n_{\rm in}$
there is $n_{\rm thr}$ with $Y_0\ge Y_{\rm thr}$ for all $n_{\rm in}\ge n_{\rm thr}$, independently of $b$. (No threshold on
node sizes is needed: Corollary~\ref{main:cor:HDL} holds at every recursion node.)
\emph{4. Zero knowledge.} $t/(\eta\bar\kappa')\ge b\,X(\ell)/(\eta\bar\kappa')\ge2bL_D$, so
$b_A=\lfloor t/(\eta\bar\kappa')\rfloor\ge bL_D$, and Theorem~\ref{main:thm:D}(d) applied to the shell PCP
(Theorem~\ref{main:thm:A}) gives perfect zero knowledge against $\lfloor b_A/L_D\rfloor\ge b$ distinct bit queries.
The simulator of Theorem~\ref{main:thm:A}(iii) is universal and straight-line in the sense of
Definition~\ref{thmD:def:zk}, as it outputs the view of its single run; it is $\PPTF$, so it is (EP) by (F-a) and
(GB) by (F-b), with $N_{\rm samp}\le b_A(M_{\rm mask}+1)+t$ samples on every run against every $b_A$-bounded strategy
($M_{\rm mask}$ the number of mask coordinates of the RS PCPP); as $b_A\le t=\mathrm{poly}(|x|)$,
Theorem~\ref{main:thm:D}(e) preserves both classes
and (d) straight-line simulation, and Lemma~\ref{main:lem:CC} below does not touch the proof. Its time is
$\mathrm{poly}(|x|,T,b)$ under the memoised interface plus the $|\rho|$ steps that toss $V^*$'s coins: in the bound of
Theorem~\ref{main:thm:D}(e) these steps are added once, not multiplied, and the number $N_0$ of the other steps of $S_A$
is polynomial (in expectation for (EP)). Given uniform field elements instead (Definition~\ref{main:def:PPTF}), the
final simulator is $\PPTF$ and exact: it calls the oracle only inside $S_A$, which runs on the verifier $A'$ of
Lemma~\ref{thmD:lem:Z}, $b_A$-bounded on every oracle by its cap, so at most $N_{\rm samp}$ times; the step bound of
Lemma~\ref{thmD:lem:DP} (for any $\mathrm{Sim}_0$) holds on every run, with $N_0$ polynomial on every run as $S_A$ is $\PPTF$; and exactness
follows from Theorem~\ref{main:thm:D}(d), which concerns laws only (F-d). Hence (F-a)--(F-c) also apply to the final
simulator directly, and (F-c) gives the class ($\varepsilon$-SP) of item~(4) with $\varepsilon=N_{\rm samp}2^{-k}$.
\emph{5. Length.} Theorem~\ref{main:thm:D}(b) gives length $\le27(4s_A+1)|U_A|C_{\mathcal D}(3(4s_A+1)/\gamma_A)^{E_C}$. With
$|U_A|\le32\ell(\ln\ell)^{\log_280}$, $s_A\le\bar s(\ell)$ and $\gamma_A^{-1}\le\bar\gamma(\ell)^{-1}$ this is
$\le c\,\ell(\ln\ell)^{14+\log_280+(15+\log_22660)E_C+o(1)}$, and
$\ell<2^{18}Y(\ln Y)^6$ with $Y\le\max(N_{\rm pts},b\,\Xi(Y_0^2))+1$, $N_{\rm pts}=O(T\log^2T)$ and
$\Xi(Y_0^2)\le c'(\ln(T+b))^{(15+\log_22660)E_L+\log_2\aKL+7+o(1)}$. Since $N_{\rm pts}=O(T\log^2T)$ and
$\Xi(Y_0^2)\ge(\ln Y_0)^2$, we have $Y\le c''(T+b)(\ln(T+b))^{e_\Xi}$ with
$e_\Xi:=(15+\log_22660)E_L+\log_2\aKL+7+o(1)$, and $\ln\ell=O(\ln(T+b))$. So the length is at most
$c\,(T+b)(\ln(T+b))^{e_\Xi+6+14+\log_280+(15+\log_22660)E_C+o(1)}$, and
$e_\Xi+20+\log_280+(15+\log_22660)E_C=(15+\log_22660)(E_C+E_L)+\log_2\aKL+27+\log_280+o(1)$, the exact exponent
stated in item~(3). The $o(1)$ terms (from $(\ln\ln\ell)^2$ in $\bar s$) are absorbed by the slack in $C$ above an
absolute threshold, below which $c_{\rm len}$ absorbs everything.
\emph{6. Fair coins and queries.} Lemma~\ref{main:lem:CC} applied to Theorem~\ref{main:thm:D}'s output gives
$3(\lceil\log_2|U'|\rceil+1)=O(\log(T+b))$ coins, $3Q_D$ queries and soundness error $\le\frac{27}{64}$. The
verifier runs in time $\mathrm{poly}(|x|)$: $V'$ does by Theorem~\ref{thmD:thm}\ref{thmD:it:a}, whose hypotheses
(a coin multiset of polynomial size and tests computable in polynomial time) the table above checks for the shell
PCP; the conversion of Lemma~\ref{main:lem:CC} adds only the computation of $|U'|$ and of $j\bmod|U'|$ and three
repetitions; and the table of Step~7 is finite.
\emph{7. Small inputs.} For the finitely many $x$ with $|x|<n_{\rm thr}$ the verifier decides membership from a
hard-coded table, the proof is empty and zero knowledge is trivial (the view is $V^*$'s coins).
\end{proof}

\begin{remark}[which reading of Open Problem~1 is met]\label{main:rem:gos25}
Section~\ref{sec:defs-gos} compares Theorem~\ref{main:thm:main} with the definitions of \cite[Defs.~2.1 and~2.9--2.12]{GOS25}
item by item. Open Problem~1 asks for ``$O(1)$-query ZK-PCPs for NP with proof length $\tilde O(n)$''; reading $n$ as
the verification time $T$ of the NP relation (as for the quasilinear PCPs \cite{BS08,Din07} it cites) and ``ZK'' as
zero knowledge against every polynomial query bound $b$ with length $\tilde O(T+b)$, Theorem~\ref{main:thm:main}
answers it with perfect zero knowledge (an exact simulator in
expected polynomial time, or in strict polynomial time
with a failure symbol in the form of \cite[Def.~4.3.1]{Gol01}) and with statistical zero knowledge of negligible error
via a strict polynomial-time simulator. These replace the strict polynomial-time perfect simulator on fair coins
without a failure symbol that \cite[Defs.~2.10--2.11]{GOS25} require, which this construction cannot have
(Remark~\ref{main:rem:strict}). O'Connor's restatement \cite[\S6.2]{OCo26} (boolean alphabet, constant
soundness error, query bound $\Omega(n^c)$ for some $0<c<1$) is met for every $c\le1$ with $b=n$, length
$\tilde O(T)$.
\end{remark}

\begin{remark}[strict-PPT perfect simulation, for this construction]\label{main:rem:strict}
All randomness of the honest prover consists of uniform elements of $\F_q$ (item (F-e) of Section~\ref{main:sec:conv}) and $f_x$ is deterministic, so for every
deterministic $V^*$ (one that tosses no coins) every view probability lies in $q^{-K}\mathbb Z$ for some $K$; a
strict-PPT simulator with fair coins and running time $\le r$
outputs every view with probability in $2^{-r}\mathbb Z$, and a rational in $(0,1)\cap q^{-K}\mathbb Z$ is never dyadic
($q$ odd). So no strict-PPT simulator is exact without $\bot$ for this construction as soon as some bit of the final
proof is non-constant under the honest prover (the deterministic verifier that reads that one bit has a non-dyadic view
probability). Such a bit exists for every $x$ with $|x|\ge n_{\rm thr}$ and every $b\ge1$. (i) The PCPP $\Pi_0$ of
Proposition~\ref{main:prop:3M} for $f_0=w'_0$ has a test reading an input symbol $w'_0(x_0)$, $x_0\in\Hh$: a test
$g(x_0)=y(x_0)-r(x_0)$ of $U_{\rm cons}$ at the root of copy $A$ if the root is internal, and the leaf test of copy $A$,
which reads all of $f_0$, if it is a leaf. As a Boolean function of the bits it reads, the residual predicate of that
test depends on the lowest bit $\beta$ of the block of $w'_0(x_0)$: with the blocks of $y(x_0)$ and $r(x_0)$ encoding
$1$ and $0$, $U_{\rm cons}$ accepts the encoding of $1$ and rejects that of $0$; and the leaf test accepts the zero word
and rejects the word that is $1$ at $x_0$ and $0$ elsewhere, since a nonzero polynomial of degree $<p$ cannot vanish on
the other $8p-1$ points. So $\beta$ is an input wire of that residual circuit, i.e.\ it lies in the set $B(x)$ of
Theorem~\ref{thmD:thm}\ref{thmD:it:f}. (ii) Under the honest prover $w'_0(x_0)$ is uniform on $\F_q$ ($C'$ is
$t$-wise independent with $t\ge1$), so $\Pr[\beta=1]=\frac{q-1}{2q}\in(0,1)$. (iii) By
Theorem~\ref{thmD:thm}\ref{thmD:it:f}, $f_x(P_0(x,w))$ is therefore non-constant, and Lemma~\ref{main:lem:CC} does not
change the proof. The remark is not used by Theorem~\ref{main:thm:main}.
\end{remark}

\begin{remark}[asymptotic regime]\label{main:rem:regime}
The depth $d=\log_2m_\#$ of the domain family depends only on $\NN$ and is nondecreasing in it: if a power of two
$m$ passes the test of Lemma~\ref{main:lem:SM2} at $\NN$, then $3m\le\LL/\ln\LL$, so $\phi_\NN(m)=\LL-3m\ln\LL$ grows
with slope $\ge1-1/\ln7>0.48$ in $\LL$, hence by more than $0.24/\NN>2\epsilon_{\NN'}$ from $\NN$ to any integer
$\NN'>\NN$, and $m$ passes at $\NN'$. Every passing $m$ has $m\le\LL/(3\ln\LL)$, so $d=0$ (the root is a leaf
and the PCPP reads its
whole input) for $\NN<e^{16.99}\approx2^{24.5}$, $d\le1$ for $\NN<e^{45.92}\approx2^{66.2}$, and $d\le2$ for
$\NN<e^{113.58}\approx2^{163.8}$; as the root domain has size $s=8n\ge8\NN$, the depth is at most~$2$ whenever
$s<2^{160}$. The locator lemma is non-vacuous only at nodes with $m_1\ge1/c_*=10^{13}$; the thresholds $Y_{\rm thr}$ of step~3 satisfy
$\log_2Y_{\rm thr}\approx E\log_2E$ with $E\approx26.4E_L$. The query bound $16(\ln s)^6$ and every polylogarithmic factor
are polylogarithmic only asymptotically. The theorem is a feasibility statement; it does not suggest concrete
efficiency.
\end{remark}

\begin{remark}[alternatives not used]\label{main:rem:alt}
(1) A variant in which the second lift has radix $m_2$, so that the two lifts use the coprime radices $m_1$ and $m_2$,
appears harder to analyse: the two sides then lie in different tensor products, and the divisibility argument behind
the locator lemma has no direct analogue. (2) The small-view locator $W_1$ of Section~\ref{kl:sec:practical}
(ratio~$1$ for views of at most $m_1+m_2-1$ reads by Theorem~\ref{kl:thm:small}, and of at most
$m_1+\lfloor3m_2/2\rfloor-1$ reads on nodes with $m\ge m_0$ and $m_2<2m_1$ by the discussion after
Theorem~\ref{kl:thm:mob}) is a
locator only for small views; it is defined, monotone and position-only on all views, so its union with
$W_\nu$ is again a monotone locator (Proposition~\ref{kl:prop:natural}). It is not used, and
Theorem~\ref{main:thm:main} relies on $W_\nu$ alone.
\end{remark}

\endgroup

\begingroup
\providecommand{\F}{\mathbb{F}}
\providecommand{\RS}{\mathrm{RS}}
\providecommand{\ev}{\mathrm{ev}}
\providecommand{\spn}{\mathrm{span}}
\providecommand{\poly}{\mathrm{poly}}
\providecommand{\klK}{\overline{\mathbb F}_p}
\providecommand{\klX}{\mathcal X}
\providecommand{\klord}{\mathrm{ord}}
\providecommand{\klsupp}{\mathrm{supp}}

\section{The locator lemma}\label{sec:keylemma}

\noindent This section proves the locator lemma for a single node of the same-radix configuration
(Definition~\ref{kl:def:node}): every view of a node has a monotone, explicitly computable dependence set (a
\emph{locator}) of size at most an absolute constant times the view (Theorem~\ref{kl:thm:KL}).
Section~\ref{sec:main} uses it through Theorem~\ref{main:thm:KL} and Corollary~\ref{main:cor:HDL}, and
Section~\ref{sec:overview-key} sketches its proof. Section~\ref{kl:sec:setting} states the lemma,
Sections~\ref{kl:sec:tensor}--\ref{kl:sec:div} develop the tools (rows, the tensor form and leaks; the incidence
input; the divisibility lemma with domain columns), and Section~\ref{kl:sec:proof} completes the proof.
Section~\ref{kl:sec:practical} gives a small-view locator that the main theorem does not use.

\paragraph{Credits.} The divisibility argument (with the decomposition) is due to Polishchuk and Spielman
\cite[Lemma~8, Theorem~9]{PS94}. Its version for multiplicative subgroups of coprime orders, with the
Corvaja--Zannier gcd estimate, is \cite[\S4.4: Theorems~4.11, 4.12, Lemma~4.13]{BV26}; we credit it and use none of
its results: every step is proved below. Collective product expansion and the monotone closure locator are
\cite[Lemma~5.13, Prop.~5.14]{DLV24} and \cite[Lemmas~17--18]{KP25} (reproduced as \cite[Def.~4.7, Lemma~4.8]{BV26});
the single-polynomial route of Theorem~\ref{kl:thm:rows} is an alternative to that closure argument. Product expansion
is due to Panteleev--Kalachev \cite{PK22}, and ``product expansion implies privacy'' to \cite[Lemma~4.9]{GG25}. The
incidence input is \cite[Theorem~2]{CZ13}; related bounds for polynomial equations in subgroups are in
\cite{Vyu19,MV19,KMSV17,KSV22}. The tensor reshaping is \cite[\S6]{BS08}; the leak linear algebra is
elementary (Lemma~\ref{main:lem:DL}; cf.\ \cite[Lemma~4.3, Claim~7.7]{BCFGRS17}). The extension to unions of cosets is routine (the incidence count $\Gamma$ of
Section~\ref{kl:sec:cz} is additive over coset pairs). The
\emph{domain-column refinement} (Step~0, the division by $\Pi$, Claim~\ref{kl:clm:delta}) is, as far as we know, new; so
is the sharper margin of Claim~\ref{kl:clm:delta}(c$'$).

\subsection{Setting and statement}\label{kl:sec:setting}

Let $\F$ be a finite field of characteristic $p$ and $K_0=\klK$. In this section $b$ denotes an element of the row set
$B$ of a node, never the query bound.

\begin{definition}[node]\label{kl:def:node}
A \emph{node with parameters $(L_1,L_2,K)$} (of the same-radix double lift: both lifted tables $Q_y$, $Q_r$ below
use the same radix $n$) consists of:
\begin{enumerate}[label=(N\arabic*),nosep]
\item integers $n:=m_1\ge1$, $n':=m_2$ with $\gcd(n,n')=1$, $n<n'\le Kn$, $m:=nn'<p$ and $\mu_m\subseteq\F^\times$;
\item $D=\bigsqcup_{i<L_2}d_i\mu_m\subseteq\F^\times$, a union of $L_2$ \emph{distinct} $\mu_m$-cosets; the code is
$C=\F[X]_{<m}$ (evaluated on $D$);
\item the rows $B:=\{x^n:x\in D\}$ and fibres $F_b:=\{x\in D:x^n=b\}$ (Lemma~\ref{kl:lem:rows});
\item column sets $A=A_y\sqcup A_r\subseteq\F^\times\setminus D$, together a union of $L_1\ge1$ distinct
$\mu_n$-cosets.
\end{enumerate}
For $g\in C$, a uniform mask $r\in C$ and $y=g+r$, the node's symbols are $y|_D$, $r|_D$, $Q_y|_{A_y\times B}$,
$Q_r|_{A_r\times B}$ with $Q_f(a,b)=(f\bmod(X^n-b))(a)$; if a view may also contain reads $g(x)$, $x\in G\subseteq D$,
of the input itself, these are symbols too (they do not occur in the recursion, Lemma~\ref{main:lem:V}(ii), but the
statement allows them). A \emph{view} $V$ is a set of symbols. Its \emph{leak} is
$\Lambda(V)=\spn\Phi\cap\spn\Psi\,(+\spn\{\ev_x:x\in G\})$, where $\Phi,\Psi$ are the functionals on $C$ of the $y$-side
and $r$-side reads (Lemma~\ref{main:lem:DL}; cf.\ \cite[Lemma~4.3, Claim~7.7]{BCFGRS17}): the answers to $V$ have the same law under inputs $g,g'$ iff
$\lambda(g-g')=0$ for all $\lambda\in\Lambda(V)$. A set $W\subseteq D$ is a \emph{locator} for $V$ if
$\Lambda(V)\subseteq\spn\{\ev_x:x\in W\}$.
\end{definition}

For a view $V$ let $D_y$ and $D_r$ be the sets of $x\in D$ with $y(x)\in V$, resp.\ $r(x)\in V$, and let
$D_V:=D_y\cup D_r\cup G\subseteq D$ be the set of all domain abscissae read, input reads included; an input read $g(x)$
has the point $\hat x$ (Lemma~\ref{kl:lem:tensor}). Let $\mathrm{Pts}(V)$ be the set of points of $V$
(Lemma~\ref{kl:lem:tensor}), $|V\cap\text{row }b|$ the number of points of $\mathrm{Pts}(V)$ with second coordinate $b$, and
$H_t(V):=\{b\in B:|V\cap\text{row }b|\ge n+1-t\}$.

\begin{definition}[admissible constants]\label{kl:def:adm}
$c>0$ is \emph{admissible} for $(L_1,L_2,K)$ if $c<\frac1{2(L_2+1)}$ and
\begin{equation}\label{kl:eq:adm}
L_1L_2\bigl(3\cdot2^{-1/3}K^{1/3}c^{1/3}+6c\bigr)<\frac1{L_2+1}-2c .
\end{equation}
\end{definition}

\begin{theorem}[locator lemma]\label{kl:thm:KL}
Let $\nu$ be a node with parameters $(L_1,L_2,K)$, $c$ admissible and $\tau:=\lfloor cn\rfloor$. For every view
$V$ put
\[
W(V):=D_V\ \cup\bigcup_{b\in H_\tau(V)}F_b\ \cup\
\begin{cases} d_0\mu_m,& |V|\ge(\tau+1)^2,\\ \emptyset,&\text{otherwise.}\end{cases}
\]
Then (a) $W$ is monotone, depends only on the positions in $V$, satisfies $W(V)\subseteq D$ and is computable in time
$\poly(|\F|)$; (b) $W(V)$ is a locator for every view $V$; (c) $|W(V)|\le a(c)|V|$ with
$a(c):=1+\frac1{1-c}+\frac K{c^2}$.
\end{theorem}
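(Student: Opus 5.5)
Parts (a) and (c) are bookkeeping; (b) carries all the content. For (a), each of the three pieces of $W(V)$ is monotone in $V$. Adding symbols can only enlarge $D_V$. It can only increase the number of distinct points in a row, so $H_\tau(V)$ only grows. It can only raise $|V|$ past $(\tau+1)^2$. All three pieces are defined from positions alone and lie in $D$. They are computed by enumerating rows, which takes $\poly(|\F|)$ time. For (c), $|D_V|\le|V|$. Each heavy row $b\in H_\tau(V)$ carries at least $n+1-\tau\ge(1-c)n$ points of $V$. Distinct rows carry disjoint points, so the fibres cost at most $\sum_b n\le\frac{1}{1-c}|V|$. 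The coset $d_0\mu_m$ is charged only when $|V|\ge(\tau+1)^2>c^2n^2$, and it costs $m=nn'\le Kn^2<\frac{K}{c^2}|V|$.

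For (b), the large case $|V|\ge(\tau+1)^2$ reduces to the statement that $\{\ev_x:x\in d_0\mu_m\}$ spans $(\F^m)^*$. This holds because $|d_0\mu_m|=m$ and evaluations at $m$ distinct points are independent on $\F[X]_{<m}$. So I assume $|V|<(\tau+1)^2$.

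First, remove the input reads $g(x)$ and the shared domain reads, where both $y(x)$ and $r(x)$ are in $V$. For a shared read, the difference $y(x)-r(x)=\ev_x(\hat g)$ lies in the leak, and by the modular law $\Lambda(V)\subseteq\Lambda(V')+\spn\{\ev_x:x\in D_V\}$ for the reduced view $V'$. So it suffices to show that $\Lambda(V')$ is spanned by evaluations on the fibres of heavy rows of $V$.

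Next, use the tensor reshaping. Through $\hat f\mapsto Q_f$, every symbol is an evaluation at a point of $\F[X]_{<n}\otimes\F[Y]_{<n'}$. Lift symbols sit at grid points $(a,b)\in A\times B$, and domain symbols sit at $(x,x^n)$. Put $S=A\sqcup D'$ with $D'=D_V$. A leak element is the $y$-side part of a dependency $M$ supported on $\mathrm{Pts}(V)$, written as $M=M_1+M_2$ with $M_1\in C_1\otimes\F^B$ and $M_2\in\F^S\otimes C_2$. Here $C_1=\RS[S,n]^\perp$ and $C_2=\RS[B,n']^\perp$. Since $|V|<(\tau+1)^2$, counting coefficients gives a nonzero $E$ of bidegree $\le(\tau,\tau)$ vanishing on all of $\mathrm{Pts}(V)$. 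This single $E$ serves every $M$.

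Then run the Polishchuk--Spielman divisibility argument in the Bafna--Vyas coprime-group form. Normalise $M_1,M_2$ on the domain columns $D'$ and divide by $\prod_{x\in D'}(X-x)$. The goal is to express the row and column components through $E$, and to show that every irreducible factor $F$ of $E$ which is neither a row factor $Y-b$ nor a column factor $X-a$ contributes nothing. For such an $F$ of bidegree $(e,e')$, comparing zeros and poles of the associated rational function on $F=0$ forces at least $(n+1-2\tau)e'+(n'+1-2\tau)e$ points of the grid, or of the view's domain points, on the curve. The domain points contribute at most $\min(e+ne',8n'e)$.

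Against this, apply Corvaja--Zannier \cite[Thm.~2]{CZ13} to each of the $L_1L_2$ pairs of a column coset of $\mu_n$ and a row coset of $\mu_{n'}$. This is where $m<p$ is used. It bounds the grid points on the curve by $3(m(ee')^2)^{1/3}+O(ee')$ per pair. Inequality \eqref{kl:eq:adm} is calibrated precisely so that the sum over all pairs, together with the domain-point allowance, falls short of the lower bound. I expect this contradiction step to be the main obstacle. The difficulty is the case analysis over $e$ versus $e'$ when absorbing the domain-point term, and the normalisation on $D'$ that makes the zero count correct. I would handle the two regimes $ne'\lessgtr 8n'e$ separately, using $n<n'\le Kn$.

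Once only row and column factors survive, the $C_1\otimes\F^B$ component is supported on rows $b$ with $Y-b\mid E$, or is unaffected off the at most $\tau$ columns where $E$ vanishes identically. A nonzero row of it is a dual Reed--Solomon word of weight $\ge n+1$, and it agrees with $M$ outside at most $\tau$ columns. So that row carries at least $n+1-\tau$ points of $V$, which means $b\in H_\tau(V)$. The leak functional restricted to such a row is a combination of $\ev_x$ for $x\in F_b$ plus domain reads (Lemma~\ref{kl:lem:cheap}-style). Summing over rows then gives $\Lambda(V)\subseteq\spn\{\ev_x:x\in W(V)\}$.
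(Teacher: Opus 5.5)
Your proposal is essentially the paper's proof. It has the same bookkeeping for (a) and (c), the same removal of input reads and shared reads via Lemma~\ref{kl:lem:shared}, and one polynomial $E$ of bidegree $\le(\tau,\tau)$ through $\mathrm{Pts}(V)$. It then uses the domain-column normalisation with division by $\prod_{x\in D'}(X-x)$, the zero count against Corvaja--Zannier on the $L_1L_2$ coset pairs with domain loss $\min(e+ne',L_2n'e)$ (you wrote $8$ for $L_2$), and the heavy-row conclusion from the weight $n+1$ of $\RS[S,n]^\perp$.

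Two details to fix when you write it out:
\begin{itemize}
\item \cite[Thm.~2]{CZ13} requires $u=(x/e)^n$ and $v=(y/f)^{n'}$ to be multiplicatively independent. So the binomial curves $X^aY^{b'}-\lambda$ and $X^a-\lambda Y^{b'}$ need the separate elementary count of at most $ab'$ points per pair, as in Lemma~\ref{kl:lem:pair}.
\item The per-pair constant is $3\cdot2^{2/3}$, not $3$. This is exactly what produces the factor $3\cdot2^{-1/3}$ in \eqref{kl:eq:adm}.
\end{itemize}
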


\begin{remark}[constants]\label{kl:rem:const}
For the parameters $(L_1,L_2,K)=(14,8,2)$ of the recursion (Remark~\ref{kl:rem:route}) we fix, throughout the paper,
\[
c_*:=10^{-13}.
\]
It is admissible: $c_*<\frac1{18}$, and the left side of \eqref{kl:eq:adm} is $112(3c_*^{1/3}+6c_*)<0.0156<\frac19-2c_*$.
(Already the weaker margin $\frac n{8L_2}$ of Claim~\ref{kl:clm:delta}(c) suffices: $\frac1{L_2+1}-2c\ge\frac1{8L_2}$ for
$c\le\frac1{16L_2}$, and $112(3c_*^{1/3}+6c_*)<\frac1{64}$.) It gives $a(c_*)=1+\frac1{1-c_*}+2\cdot10^{26}\le2.0001\cdot10^{26}$
and $\log_2a(c_*)\approx87.37$. Through the tree-dependence lemma (Lemma~\ref{main:lem:C}) the per-node constant enters
the sampler's read bound as $\bar\kappa\le(\ln s)^{\log_2a}$ (Theorem~\ref{main:thm:Z}), so $\log_2a$ is added to the
absolute polylog exponent of the main theorem.

The sharper margin of Claim~\ref{kl:clm:delta}(c$'$), which is what Definition~\ref{kl:def:adm} uses, admits larger
constants: the largest admissible $c$ for $(14,8,2)$ is about $3.6\cdot10^{-11}$, and for instance $c=3.58\cdot10^{-11}$
is admissible and gives $a\le1.561\cdot10^{21}$ and $\log_2a\approx70.40$. This proved sharpening would lower the
exponent of the main theorem by about $17$; we do not use it, and keep $c_*$ throughout.

The non-trivial content of (b) is the regime $|V|<(\tau+1)^2$, i.e.\ views of up to about $c^2n^2\ge(c^2/K)m$ reads
($5\cdot10^{-27}m$ for $c_*$); larger views are located by a whole $\mu_m$-coset. Small views at nodes of every size
are located with ratio about~$2$ by Proposition~\ref{kl:prop:natural}.
\end{remark}

\begin{remark}[recursion nodes]\label{kl:rem:route}
Every node of the recursion over $\F_q$ ($q$ prime, canonical rule; Definition~\ref{main:def:can}) satisfies
(N1)--(N4) with $(L_1,L_2,K)=(14,8,2)$ (Lemma~\ref{main:lem:N}): $m\mid q-1$ gives $m<q=p$ and $\mu_m\le\F_q^\times$; the domain is
eight distinct $\mu_m$-cosets; the two column families are seven $\mu_{m_1}$-cosets each, lying in fourteen further
distinct $\mu_m$-cosets (so $A\cap D=\emptyset$); and $m_1<m_2<2m_1$. Theorem~\ref{kl:thm:KL} is stated for all subsets
of the node's symbols, with no size threshold, and depends on positions only, as the tree-dependence lemma
(Lemma~\ref{main:lem:C}) requires.
\end{remark}

\subsection{Rows, tensor form and leaks}\label{kl:sec:tensor}

\begin{lemma}[rows]\label{kl:lem:rows}
Under (N1)--(N2), $B=\bigsqcup_{i<L_2}d_i^n\mu_{n'}$ with $|B|=L_2n'$, and $F_b=x_b\mu_n$ for any $x_b\in F_b$; in
particular $|F_b|=n$ and the $F_b$ partition $D$.
\end{lemma}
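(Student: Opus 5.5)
The plan is to work entirely inside the cyclic group $\F^\times$ and use Lemma~\ref{main:lem:mu}; the argument is the same as that of Lemma~\ref{main:lem:CS}(a)--(b), but with the parameter $L_2$ in place of $8$.

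\emph{The row set.} First I would show that each coset maps onto a coset: $(d_i\mu_m)^n=d_i^n\mu_{n'}$. On the cyclic group $\mu_m$ the map $z\mapsto z^n$ is a homomorphism. Its kernel is $\mu_m\cap\mu_n=\mu_{\gcd(m,n)}=\mu_n$, of order $n$, since $n\mid m$. Hence its image has order $m/n=n'$, and so the image is $\mu_{n'}$. Next I would show that the $L_2$ image cosets are distinct. If $d_i^n\mu_{n'}=d_j^n\mu_{n'}$, then $(d_i/d_j)^{nn'}=1$, so $d_i/d_j\in\mu_m$, which contradicts the distinctness of the $\mu_m$-cosets in (N2). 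Two cosets of $\mu_{n'}$ are either equal or disjoint, so the union is disjoint and $|B|=L_2n'$.

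\emph{The fibres.} For $x,x'\in\F^\times$ we have $x^n=x'^n$ if and only if $x'/x\in\mu_n$. Given any $x_b\in F_b$, this gives $F_b=x_b\mu_n\cap D$. Since $\mu_n\le\mu_m$, the coset $x_b\mu_n$ lies inside $x_b\mu_m\subseteq D$, so $F_b=x_b\mu_n$. Because $\mu_n\subseteq\F^\times$ by (N1), this set has exactly $n$ elements. Finally, the fibres are the level sets of the map $x\mapsto x^n$ on $D$, and every $b\in B$ is attained, so they partition $D$.

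No step is a real obstacle. The only point that needs care is the computation of the kernel order, which uses $n\mid m$. Coprimality of $n$ and $n'$ is not needed for this lemma.
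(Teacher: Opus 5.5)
Your proof is correct and is essentially the paper's argument: the paper also uses that $z\mapsto z^n$ maps $\mu_m$ onto $\mu_{n'}$ with kernel $\mu_n$ to get $(d_i\mu_m)^n=d_i^n\mu_{n'}$ and fibres $x\mu_n$, and it proves distinctness of the row cosets by exactly your step $(d_i/d_j)^{nn'}=1\Rightarrow d_i/d_j\in\mu_m$. You are also right that coprimality of $n$ and $n'$ is not used here, only $n\mid m$ and $m\mid|\F^\times|$.
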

\begin{proof}
$z\mapsto z^n$ maps the cyclic group $\mu_m$ onto $\mu_{n'}$ with kernel $\mu_n$, so $(d_i\mu_m)^n=d_i^n\mu_{n'}$ and
the map is $n$-to-one on $d_i\mu_m$ with fibres $x\mu_n$. If $d_i^n\mu_{n'}=d_j^n\mu_{n'}$ then $(d_i/d_j)^{nn'}=1$,
so $d_i/d_j\in\mu_m$ and $i=j$.
\end{proof}

\begin{lemma}[{tensor form; {\cite[\S6]{BS08}}}]\label{kl:lem:tensor}
Let $U=\F[X]_{<n}$, $U'=\F[Z]_{<n'}$, $v(x)=\ev_x\in U^*$, $w(b)=\ev_b\in U'^*$. The map
$f=\sum_{t<n}X^tf_t(X^n)\mapsto P_f=\sum_tX^t\otimes f_t$ is an isomorphism $C\to U\otimes U'$ with
$Q_f(a,b)=P_f(a,b)$ and $f(x)=P_f(x,x^n)$. Hence a lift read is the point $(a,b)\in A\times B$, a domain read is the
point $\hat x:=(x,x^n)$, and the functional of a point $(x,b)$ is $v(x)\otimes w(b)$.
\end{lemma}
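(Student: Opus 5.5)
The plan is to verify the isomorphism by coefficient bookkeeping, then read off the point evaluations. Write an element of $C=\F[X]_{<m}$ by its coefficients $\hat f_k$, $k<m=nn'$, and split each index uniquely as $k=t+nj$ with $t<n$, $j<n'$ (Euclidean division by $n$). Put $f_t(Z):=\sum_{j<n'}\hat f_{t+nj}Z^j\in U'$. Then $f=\sum_{t<n}X^tf_t(X^n)$, and this decomposition is unique because the monomials $X^{t+nj}$ are distinct for distinct pairs $(t,j)$. The map $f\mapsto P_f=\sum_tX^t\otimes f_t$ sends the monomial basis $\{X^{t+nj}\}$ of $C$ bijectively to the basis $\{X^t\otimes Z^j\}$ of $U\otimes U'$. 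It is therefore a linear isomorphism, and both sides have dimension $nn'=m$.

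Next I identify $U\otimes U'$ with bivariate polynomials of bidegree $<(n,n')$, so that $P_f(X,Y)=\sum_{t,j}\hat f_{t+nj}X^tY^j$. This is exactly $Q_f$ as defined before Lemma~\ref{main:lem:CS}, with $m_1=n$. Two evaluation identities then have to be checked.
\begin{itemize}
\item Substituting $Y=X^n$ gives $\sum_{t,j}\hat f_{t+nj}X^{t+nj}=f(X)$, so $f(x)=P_f(x,x^n)$.
\item Reducing modulo $X^n-b$ replaces each $X^{nj}$ by $b^j$. The resulting polynomial $\sum_tX^t f_t(b)$ has degree $<n$, so it is the remainder $f\bmod(X^n-b)$, and its value at $a$ is $P_f(a,b)$. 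This is the definition of $Q_f(a,b)$ in Definition~\ref{kl:def:node}.
\end{itemize}

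Finally, the evaluation at a point $(x,b)$ of a pure tensor $u\otimes u'$ is $u(x)u'(b)=v(x)(u)\,w(b)(u')$. By linearity, the functional $P\mapsto P(x,b)$ on $U\otimes U'$ is therefore $v(x)\otimes w(b)$. Combining this with the two identities, a lift read $Q_f(a,b)$ is the point $(a,b)$, and a domain read $f(x)$ is the point $\hat x=(x,x^n)$. An input read $g(x)$ also sits at $\hat x$, since $g=y-r$ and the map is linear.

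There is no real obstacle. The only care needed is that the reduction modulo $X^n-b$ genuinely yields a polynomial of degree $<n$, so that it is the remainder, and that $a$ and $b$ are arbitrary field elements. Neither uses coprimality or the coset structure of the node.
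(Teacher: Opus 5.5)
Your proof is correct and follows the paper's route. The paper's entire proof is the identity $f\bmod(X^n-b)=\sum_tX^tf_t(b)$, which is your second bullet; your monomial-basis bijection and your evaluation of pure tensors spell out the isomorphism and the point-functional claim that the paper treats as immediate.
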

\begin{proof}
$f\bmod(X^n-b)=\sum_tX^tf_t(b)$.
\end{proof}

\begin{lemma}[rows are cheap]\label{kl:lem:cheap}
For $b\in B$, $U^*\otimes w(b)=\spn\{\ev_x:x\in F_b\}$.
\end{lemma}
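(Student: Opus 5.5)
The plan is to transport the question through the tensor isomorphism of Lemma~\ref{kl:lem:tensor} and then do linear algebra in $U^*\otimes U'^*$. By that lemma, $\ev_x$ on $C$ corresponds to $v(x)\otimes w(x^n)$. For $x\in F_b$ we have $x^n=b$, so $\ev_x=v(x)\otimes w(b)$. Hence $\spn\{\ev_x:x\in F_b\}=\spn\{v(x):x\in F_b\}\otimes w(b)$, and this lies in $U^*\otimes w(b)$. That gives the inclusion $\supseteq$ for free.

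For the reverse inclusion it suffices to show that $\{v(x):x\in F_b\}$ spans $U^*$. By Lemma~\ref{kl:lem:rows}, $F_b=x_b\mu_n$ has exactly $n$ distinct points, and $U=\F[X]_{<n}$ has dimension $n$. The functionals $v(x)$, $x\in F_b$, are evaluations at $n$ distinct points on polynomials of degree $<n$. Their matrix in the monomial basis is a Vandermonde matrix with distinct nodes, hence invertible, so they form a basis of $U^*$. (Distinctness uses $\mu_n\subseteq\F^\times$ of order exactly $n$, which holds by (N1).) Tensoring with $w(b)$ then gives $U^*\otimes w(b)=\spn\{v(x)\otimes w(b):x\in F_b\}=\spn\{\ev_x:x\in F_b\}$.

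There is no real obstacle here. The only point to be careful about is the identification of functionals: the functional $v(x)\otimes w(b)$ on $U\otimes U'$ pulls back along $f\mapsto P_f$ to $f\mapsto P_f(x,b)=Q_f(x,b)$. When $b=x^n$ this equals $f(x)$, so it is exactly $\ev_x$ restricted to $C$. I would state this identification explicitly and then conclude as above.
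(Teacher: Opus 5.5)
Your proposal is correct and matches the paper's proof: for $x\in F_b$ the tensor form gives $\ev_x=v(x)\otimes w(b)$, and the $n$ distinct points of $F_b$ give a basis of $U^*$ by Vandermonde. Your explicit check of the pullback along $f\mapsto P_f$ and of $|F_b|=n$ only spells out what the paper's one-line proof leaves implicit.
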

\begin{proof}
$\ev_x=v(x)\otimes w(b)$ for $x\in F_b$, and the $n$ distinct points of $F_b$ give a basis of $U^*$.
\end{proof}

\begin{lemma}[shared domain points]\label{kl:lem:shared}
Let $\mathrm{Sh}=D_y\cap D_r$ and let $V'$ be $V$ without the $r$-side domain reads at $\mathrm{Sh}$. Then
$\Lambda(V)=\Lambda(V')+\spn\{\ev_x:x\in\mathrm{Sh}\}$.
\end{lemma}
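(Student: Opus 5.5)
The identity is pure linear algebra on $C^*$, and I will prove it by writing down the three spans and comparing them. Put $E:=\spn\{\ev_x:x\in\mathrm{Sh}\}$. Let $\Phi$ be the $y$-side functionals of $V$. The $r$-side functionals of $V$ are $\Psi=\Psi'\cup\{\ev_x:x\in\mathrm{Sh}\}$, where $\Psi'$ are the $r$-side functionals of $V'$. Since $\mathrm{Sh}\subseteq D_y$, we have $E\subseteq\spn\Phi$, and $V'$ keeps the same $y$-side reads. If input reads $g(x)$, $x\in G$, are present, the term $\spn\{\ev_x:x\in G\}$ is added identically on both sides, so I will set it aside and add it back at the end. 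With these choices,
\[
\Lambda(V)=\spn\Phi\cap\bigl(\spn\Psi'+E\bigr),\qquad \Lambda(V')=\spn\Phi\cap\spn\Psi'.
\]

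The key step is the modular law of subspaces. If $E\subseteq X$, then $X\cap(Y+E)=(X\cap Y)+E$. I will verify this directly. The inclusion $\supseteq$ is clear. For $\subseteq$, write $u=y+e\in X$ with $y\in Y$ and $e\in E$. Then $y=u-e\in X\cap Y$, so $u\in(X\cap Y)+E$. Applying the law with $X=\spn\Phi$, $Y=\spn\Psi'$ and $E$ as above gives $\Lambda(V)=\Lambda(V')+E$, which is the claim. Adding back the input-read span leaves the identity unchanged.

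There is no real obstacle. The one point to be careful about is that functionals are attached to symbols, not to points. The $y$-read and the $r$-read at the same $x\in\mathrm{Sh}$ are both $\ev_x$, but they lie on different sides, in $\Phi$ and in $\Psi$ respectively. The hypothesis $E\subseteq\spn\Phi$ therefore needs exactly $\mathrm{Sh}\subseteq D_y$, which holds by definition of $\mathrm{Sh}$.
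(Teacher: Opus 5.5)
Your proposal is correct and follows the paper's proof: $\spn\{\ev_x:x\in\mathrm{Sh}\}\subseteq\spn\Phi$ because $\mathrm{Sh}\subseteq D_y$, then $\spn\Psi=\spn\Psi'+\spn\{\ev_x:x\in\mathrm{Sh}\}$, and then the modular law $X\cap(Y+\Sigma)=(X\cap Y)+\Sigma$ for $\Sigma\subseteq X$. Your explicit handling of the input-read term, which the paper leaves implicit, is a harmless addition.
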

\begin{proof}
$\Sigma=\spn\ev(\mathrm{Sh})\subseteq\spn\Phi$ and $\spn\Psi=\spn\Psi'+\Sigma$; apply the modular law
$X\cap(Y+\Sigma)=(X\cap Y)+\Sigma$ for $\Sigma\subseteq X$.
\end{proof}

So it suffices to treat views whose point sets $P_y,P_r$ are disjoint; input reads only add $\spn\{\ev_x:x\in G\}$ to
the leak, and we also assume $G=\emptyset$ (see the proof of Theorem~\ref{kl:thm:KL}). Let $D'=D'_y\sqcup D'_r\subseteq D$
be the read domain abscissae, $D'_y$ ($D'_r$) those read on the $y$-side ($r$-side), and $S:=A\sqcup D'$. A column $s\in A_y\cup D'_y$ carries only $y$-points, a column $s\in A_r\cup D'_r$ only
$r$-points, and a domain column $x_0$ carries only $\hat x_0$. Let $C_1=\RS[S,n]^\perp$ and $C_2=\RS[B,n']^\perp$
(generalised RS codes, MDS of distances $n+1$, $n'+1$).

\begin{lemma}[dependencies]\label{kl:lem:dual}
$\sum_{s,b}M(s,b)v(s)\otimes w(b)=0$ iff $M=M_1+M_2$ with every column $M_1(\cdot,b)\in C_1$ and every row
$M_2(s,\cdot)\in C_2$.
\end{lemma}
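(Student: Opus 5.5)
The plan is to reduce the statement to a standard fact about the dual of a tensor product code, after identifying the functionals through Lemma~\ref{kl:lem:tensor}. The point functionals $v(s)\otimes w(b)$ for $(s,b)\in S\times B$ are all elements of $U^*\otimes U'^*$. Write $\mathrm{ev}_S:U\to\F^S$ and $\mathrm{ev}_B:U'\to\F^B$ for the evaluation maps. Both are injective, because $|S|\ge|A|\ge n$ (as $L_1\ge1$) and $|B|=L_2n'\ge n'$ (Lemma~\ref{kl:lem:rows}). Their images are $\RS[S,n]$ and $\RS[B,n']$. A matrix $M\in\F^{S\times B}$ defines the functional $P\mapsto\sum_{s,b}M(s,b)P(s,b)$ on $U\otimes U'$. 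This functional is $\langle M,(\mathrm{ev}_S\otimes\mathrm{ev}_B)(P)\rangle$, the standard pairing on $\F^S\otimes\F^B$. So $\sum M(s,b)\,v(s)\otimes w(b)=0$ holds iff $M$ is orthogonal to the image $\RS[S,n]\otimes\RS[B,n']$.

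It therefore suffices to prove the linear-algebra identity
\[(C_1'\otimes C_2')^\perp=C_1'^{\perp}\otimes\F^B+\F^S\otimes C_2'^{\perp}\]
for subspaces $C_1'\subseteq\F^S$ and $C_2'\subseteq\F^B$, applied with $C_1'=\RS[S,n]$ and $C_2'=\RS[B,n']$. The element $M_1\in C_1\otimes\F^B$ is exactly a matrix all of whose columns $M_1(\cdot,b)$ lie in $C_1$. Likewise $M_2\in\F^S\otimes C_2$ means all rows $M_2(s,\cdot)$ lie in $C_2$. This gives the form stated in the lemma.

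The inclusion $\supseteq$ is immediate, since each summand pairs to zero with every pure tensor $c\otimes c'$ where $c\in C_1'$ and $c'\in C_2'$. For $\subseteq$ I will count dimensions. Put $k_1=\dim C_1'=n$, $k_2=\dim C_2'=n'$, $N_1=|S|$ and $N_2=|B|$. The left side has dimension $N_1N_2-k_1k_2$. On the right, the intersection of the two summands is $C_1'^\perp\otimes C_2'^\perp$. Choosing bases adapted to the decompositions $\F^S=C_1'^\perp\oplus$(a complement) and likewise for $\F^B$ shows this. The dimension of the sum is then
\[(N_1-k_1)N_2+N_1(N_2-k_2)-(N_1-k_1)(N_2-k_2)=N_1N_2-k_1k_2.\]
The two sides have equal dimension, so they coincide.

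There is no real obstacle here; the only care needed is in the identification of the point functionals. Each domain column $x\in D'$ is a single element of $S$ carrying the point $(x,x^n)$, and $A$ and $D'$ are disjoint by (N4), so $S$ is a genuine set of distinct abscissae and $\RS[S,n]$ is well defined. $M$ is allowed to be an arbitrary matrix on $S\times B$; the statement does not require it to be supported on the view. The decomposition is usually not unique, and the lemma does not claim uniqueness.
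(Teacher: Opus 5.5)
Your proposal is correct and follows the paper's own proof: you identify the vanishing of the functional with $M\perp\RS[S,n]\otimes\RS[B,n']$, note that $C_1\otimes\F^B+\F^S\otimes C_2$ lies in this dual, and conclude by the dimension count $|S||B|-nn'$, using that the two summands meet in $C_1\otimes C_2$. Your additional checks (injectivity of the evaluation maps from $|S|\ge n$ and $|B|\ge n'$, and $A\cap D'=\emptyset$) are details the paper's proof leaves implicit.
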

\begin{proof}
The left side vanishes iff $M\perp\RS[S,n]\otimes\RS[B,n']$, a code of dimension $nn'$. The sum
$C_1\otimes\F^B+\F^S\otimes C_2$ is contained in the dual and, as the two summands meet in $C_1\otimes C_2$, has
dimension $(|S|-n)|B|+|S|(|B|-n')-(|S|-n)(|B|-n')=|S||B|-nn'$.
\end{proof}

\begin{lemma}[leaks]\label{kl:lem:leak}
Let $V$ have no input reads ($G=\emptyset$) and $P_y\cap P_r=\emptyset$. Then $\ell\in\Lambda(V)$ iff
$\ell=\sum_{\pi\in P_y}M(\pi)\ev_\pi$ for a dependency $M$ supported on $P_y\cup P_r$; and for any decomposition
$M=M_1+M_2$ as in Lemma~\ref{kl:lem:dual},
$\ell\in U^*\otimes\spn\{w(b):M_1(\cdot,b)\ne0\}$.
\end{lemma}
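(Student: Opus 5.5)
The plan has three steps: write a leak as the $y$-side part of a dependency, read off the linear algebra, and then use the decomposition of Lemma~\ref{kl:lem:dual} row by row.

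\emph{Step 1: the leak is the $y$-side part of a dependency.} Since $P_y\cap P_r=\emptyset$ and $G=\emptyset$, $\Phi$ consists of the functionals $\ev_\pi$ (via Lemma~\ref{kl:lem:tensor}, $v(s)\otimes w(b)$) for $\pi\in P_y$ acting on $\hat y$, and $\Psi$ of those for $\pi\in P_r$ acting on $\hat r$. Both sides are functionals on the same space $C\cong U\otimes U'$. Then $\ell\in\spn\Phi\cap\spn\Psi$ iff there are coefficients with
\[
\ell=\sum_{\pi\in P_y}\alpha_\pi\ev_\pi=\sum_{\pi\in P_r}\beta_\pi\ev_\pi .
\]
Equivalently, the vector $M$ equal to $\alpha$ on $P_y$ and $-\beta$ on $P_r$ satisfies $\sum_\pi M(\pi)\ev_\pi=0$. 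That is exactly a dependency supported on $P_y\cup P_r$, with $\ell$ recovered as its $y$-side sum. The converse is the same computation read backwards. Disjointness of $P_y$ and $P_r$ matters here: it makes the splitting of $M$ into its $y$ and $r$ parts well defined.

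\emph{Step 2: reduce to the columns of the $y$-side.} Fix a decomposition $M=M_1+M_2$ as in Lemma~\ref{kl:lem:dual}. The key structural fact is that each column $s\in S$ carries points of only one side: $A_y\cup D'_y$ carries $y$-points, and $A_r\cup D'_r$ carries $r$-points. So the $y$-side sum is $\sum_{s\in S_y,b}M(s,b)v(s)\otimes w(b)$ with $S_y:=A_y\sqcup D'_y$. Here $M$ vanishes off $P_y$ on these columns, so summing over all $b$ is harmless.

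\emph{Step 3: kill the $M_2$ part and read off the rows.} Split $M=M_1+M_2$ in this sum. For each $s$, the row $M_2(s,\cdot)$ lies in $C_2=\RS[B,n']^\perp$. Hence $\sum_bM_2(s,b)w(b)=0$ as a functional on $U'$: the map $Z$-polynomial $\mapsto\sum_bM_2(s,b)\,\text{value at }b$ vanishes on degree $<n'$. So the $M_2$ contribution is $\sum_s v(s)\otimes0=0$. What remains is
\[
\ell=\sum_b\Big(\sum_{s\in S_y}M_1(s,b)v(s)\Big)\otimes w(b).
\]
A term is nonzero only if $M_1(\cdot,b)\ne0$, so $\ell\in U^*\otimes\spn\{w(b):M_1(\cdot,b)\ne0\}$.

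The only delicate point I expect is Step 2, checking that restricting to $S_y$ is compatible with $M_1$ and $M_2$ individually. It is: the cancellation in Step 3 is columnwise in $s$, so it holds for any subset of columns, in particular $S_y$. I also expect the domain columns to behave correctly, since each $x_0\in D'$ is a column carrying the single point $\hat x_0$.
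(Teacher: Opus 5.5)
Your proof is correct and follows the paper's own argument. You read the first claim directly off the definition of $\spn\Phi\cap\spn\Psi$, then restrict the $y$-side sum to the columns $S_y=A_y\cup D'_y$, where the $M_2$-part $\sum_{s\in S_y}v(s)\otimes\sum_bM_2(s,b)w(b)$ vanishes because each row $M_2(s,\cdot)$ lies in $C_2$.
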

\begin{proof}
The first claim is the definition of $\spn\Phi\cap\spn\Psi$. Since all $y$-points lie in the columns
$S_y=A_y\cup D'_y$, $\ell=\sum_{s\in S_y,b}M(s,b)v(s)\otimes w(b)$, and the $M_2$-part
$\sum_{s\in S_y}v(s)\otimes\sum_bM_2(s,b)w(b)$ vanishes because $M_2(s,\cdot)\in C_2$.
\end{proof}

\subsection{The incidence input}\label{kl:sec:cz}

Let $F\in K_0[X,Y]$ be irreducible with $a=\deg_XF\ge1$, $b'=\deg_YF\ge1$ (a \emph{mixed} curve); let $\klX_F$ be the
normalisation of the closure of $\{F=0\}$ in $\mathbb P^1\times\mathbb P^1$, with coordinate functions $x,y$; for a
place $\nu$ over a finite point put $o_x(\nu)=\klord_\nu(x-x(\nu))$, $o_y(\nu)=\klord_\nu(y-y(\nu))$, and
$\Gamma_{S_0\times B_0}(F):=\sum_{\nu:(x(\nu),y(\nu))\in S_0\times B_0}\min(o_x,o_y)$. We use the textbook facts
\cite{Har77}: (T1) $\deg x=b'$, $\deg y=a$; (T2) the genus of $\klX_F$ is at most the arithmetic genus $(a-1)(b'-1)$ of
a curve of bidegree $(a,b')$; (T3) a nonzero function has as many zeros as poles; (T4) if $(X-c)\nmid F$ then
$\sum_{x(\nu)=c}o_x(\nu)=b'$.

\begin{lemma}[{\cite[Theorem~2]{CZ13}}, in the form used]\label{kl:lem:CZ}
Let $\kappa$ be a finite field of characteristic $p$, $\klX/\kappa$ a smooth projective geometrically irreducible
curve of genus $g$, and $u,v\in\kappa(\klX)$ multiplicatively independent modulo $\kappa^\times$, with nonzero
differentials, whose zeros and poles form a set $S$ of $\kappa$-rational points; $\chi=|S|+2g-2$. Then
\[
\sum_{\nu\in\klX(\kappa)\setminus S}\min\{\nu(1-u),\nu(1-v)\}\le
\max\Bigl(3\sqrt[3]2\,(\deg u\deg v\,\chi)^{1/3},\ 12\,\frac{\deg u\deg v}{p}\Bigr).
\]
\end{lemma}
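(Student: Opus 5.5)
The plan is to deduce the lemma from \cite[Theorem~2]{CZ13} by base change to $K_0=\klK$ and checking hypotheses; no new argument is intended. First I will recall the setting of \cite[Theorem~2]{CZ13}: a smooth projective irreducible curve $\tilde{\klX}$ over an algebraically closed field of characteristic $p$, of genus $g$, and functions $u,v$ that are multiplicatively independent modulo constants and are not $p$-th powers. With $S$ the set of their zeros and poles and $\chi=|S|+2g-2$, it bounds $\sum_{\nu\notin S}\min\{\nu(1-u),\nu(1-v)\}$ by $3\sqrt[3]2\,(\deg u\deg v\,\chi)^{1/3}$, under a side condition relating $\deg u\deg v$, $\chi$ and $p$. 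The $\max$ with $12\deg u\deg v/p$ in the statement is designed to absorb the regime where that side condition fails.

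Second, base change. Put $\tilde{\klX}=\klX\times_\kappa K_0$. Geometric irreducibility and smoothness give that $\tilde{\klX}$ is a smooth irreducible curve of the same genus $g$. The degrees of $u$ and $v$ are preserved. Since the zeros and poles of $u,v$ are $\kappa$-rational, the set $S$ does not change, and so neither does $\chi$. Each $\kappa$-point $\nu\notin S$ is a point of $\tilde{\klX}$ outside $S$ with the same valuations, so the left side of the lemma is at most the corresponding sum over $\tilde{\klX}$. Nonzero differentials $du,dv\neq0$ mean exactly that $u,v$ are not $p$-th powers in $K_0(\tilde{\klX})$. For multiplicative independence, suppose $u^iv^j=c\in K_0^\times$. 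Then $c\in\kappa(\klX)$ is algebraic over $\kappa$, and $\kappa$ is algebraically closed in $\kappa(\klX)$ by geometric irreducibility, so $c\in\kappa^\times$; hence independence modulo $\kappa^\times$ implies independence modulo $K_0^\times$.

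Third, the side condition. Where \cite[Theorem~2]{CZ13} applies, it gives the cube-root term. Where it fails, I will use the trivial bound
\[\sum_\nu\min\{\nu(1-u),\nu(1-v)\}\le\deg(1-u)=\deg u,\]
and symmetrically $\le\deg v$, so the sum is at most $\min(\deg u,\deg v)$. I will then show that the failure of the side condition forces $\min(\deg u,\deg v)\le 12\deg u\deg v/p$ or $\min(\deg u,\deg v)\le 3\sqrt[3]2(\deg u\deg v\chi)^{1/3}$. For this I intend to use $\chi\ge|S|-2\ge\max(\deg u,\deg v)/\text{const}$ only where it is genuinely available.

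The main obstacle is this last step: matching the exact form of the hypothesis in \cite{CZ13} (which involves $p$ and the degrees) so that the two-term maximum really covers both regimes, with the constants $3\sqrt[3]2$ and $12$ as stated. If the published side condition is of the form $\deg u\deg v\,\chi\lesssim p^{3}$ up to constants, the case split above closes directly. Otherwise I would first check whether the paper's later uses (grid points on mixed curves with $m<p$) stay inside the hypothesis, and restrict the lemma to that range.
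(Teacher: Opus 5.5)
\textbf{Gap: your Step~3.} The problem is not only that it is left conditional. The fallback to the trivial bound $\min(\deg u,\deg v)$ cannot produce the second term of the maximum.

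Write $a=\deg u\le b=\deg v$ and $Q=p^3\chi/(ab)^2$. Since $(12ab/p)^3/\bigl(54\,ab\chi\bigr)=32/Q$, the maximum equals $12ab/p$ exactly when $Q\le 32$. This is the regime the published proof of \cite[Theorem~2]{CZ13} treats separately: it splits at $Q$ versus $4$, moved to $32$ by the correction the paper makes. Your bound $\min(a,b)=a$ is at most $12ab/p$ only when $p\le 12b$.

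The regime $Q<4$, $p>12b$ contains admissible data. On $\mathbb P^1$ take $u=t^a$, $v=(1-t)^a$ with $a=10^6$, so that $du,dv\ne0$, $S=\{0,1,\infty\}$ and $\chi=1$, and take any prime $p$ with $1.2\cdot10^7<p<1.5\cdot10^8$. Then $a$ exceeds both $12a^2/p$ and $3\sqrt[3]{2}\,a^{2/3}$. So no case split that falls back on $\min(\deg u,\deg v)$ can close the argument. Restricting the lemma to a smaller range would not prove it as stated either.

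\textbf{What is missing.} The term $12\deg u\deg v/p$ comes from Corvaja and Zannier's own method, not from a trivial estimate. Their Theorem~2 is deduced from the technical bound of their Theorem~4 by choosing the auxiliary parameter $t=\sqrt[3]{4}$ when $Q$ is large and $t=(Q/8)^{1/3}$ otherwise; the second choice yields $12\deg u\deg v/p$. So the two-term bound, over a finite field and summed over $\kappa$-rational points, is the published Theorem~2 itself, with no side condition to patch. The paper cites it verbatim. Your base change to $\overline{\mathbb F}_p$ is correct but unnecessary.

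\textbf{What the paper checks instead} are two points of the published proof.
First, \cite[Prop.~2]{CZ13} assumes $\kappa(\mathcal X)=\kappa(u,v)$. The paper removes this via the separable map (as $du\ne0$) of degree $\delta$ onto the smooth model $\mathcal X'$ of $\kappa(u,v)$. The left side grows by at most the factor $\delta$ and the degrees by exactly $\delta$. Riemann--Hurwitz gives $\chi_{\mathcal X}\ge\delta\chi_{\mathcal X'}$, so the right side grows by at least $\delta$.
Second, the split must sit at $32$ rather than $4$ to meet the hypothesis of \cite[Theorem~4]{CZ13}; this gives the same two bounds.
Your hypothesis check does not address the first point.
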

This is \cite[Theorem~2]{CZ13} verbatim (rationality of $S$ only fixes the meaning of $|S|$). Two points of the
published proof need a comment; neither affects the statement. (i) \cite[Prop.~2]{CZ13} assumes
$\kappa(\klX)=\kappa(u,v)$. If $\klX\to\klX'$ is the separable map (as $du\ne0$) of degree $\delta$ to the smooth model of
$\kappa(u,v)$, then, after enlarging $\kappa$ so that all relevant places are rational, the left side on $\klX$ is at
most $\delta$ times that on $\klX'$, $\deg_{\klX}=\delta\deg_{\klX'}$, and Riemann--Hurwitz gives
$\chi_{\klX}\ge\delta\chi_{\klX'}$; so the bound on $\klX'$ implies it on $\klX$. (ii) The proof of Theorem~2 splits at
$Q=p^3\chi/(\deg u\deg v)^2$ versus $4$; the hypothesis of \cite[Theorem~4]{CZ13} requires a split at $32$
($t=\sqrt[3]4$ for $Q>32$, $t=(Q/8)^{1/3}$ otherwise), which yields the same two bounds.

\begin{lemma}[coset-pair incidences]\label{kl:lem:pair}
Assume (N1); let $e,f\in K_0^\times$ and $F$ mixed of bidegree $(a,b')$ with $a,b'<p$. Then
$\Gamma_{e\mu_n\times f\mu_{n'}}(F)\le3\cdot2^{2/3}m^{1/3}(ab')^{2/3}+12ab'$, and, if (N2)--(N4) also hold,
$\Gamma_{A\times B}(F)\le L_1L_2\bigl(3\cdot2^{2/3}m^{1/3}(ab')^{2/3}+12ab'\bigr)$.
\end{lemma}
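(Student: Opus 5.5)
The plan is to apply Lemma~\ref{kl:lem:CZ} to the pair of functions $u=x^n/e^n$ and $v=y^{n'}/f^{n'}$ on $\klX_F$. The point of this choice is that the incidences on $e\mu_n\times f\mu_{n'}$ become common zeros of $1-u$ and $1-v$.

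\emph{Matching the incidences.} A place $\nu$ with $(x(\nu),y(\nu))=(x_0,y_0)\in e\mu_n\times f\mu_{n'}$ has $x_0,y_0\neq0$ and is therefore not a zero or pole of $u,v$. Since $n,n'<p$ (from $m<p$), the map $t\mapsto t^n$ is étale at $x_0$, so $\nu(1-u)=\nu(x^n-x_0^n)=o_x(\nu)$, and likewise $\nu(1-v)=o_y(\nu)$. Hence $\Gamma_{e\mu_n\times f\mu_{n'}}(F)$ is exactly the left-hand side of Lemma~\ref{kl:lem:CZ}. To make this literal we work over a finite field $\kappa$ large enough to contain $e,f$, the coefficients of $F$, and all the relevant places.

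\emph{Checking the hypotheses and the parameters.} By (T1), $\deg u=nb'$ and $\deg v=n'a$. The differentials are nonzero: $du=nx^{n-1}dx/e^n$, with $n<p$ and $dx\ne0$ because $x$ is a separating coordinate. To see that $x$ separates, note $\deg_YF=b'\ge1$ and $F$ is irreducible; the degenerate case where $F$ is a polynomial in $Y^p$ is handled by swapping the roles of the coordinates, using that $F$ cannot lie in $K_0[X^p,Y^p]$ by irreducibility. Multiplicative independence is where the mixed hypothesis enters. Suppose $u^iv^j\in\kappa^\times$ with $(i,j)\ne(0,0)$. Then $x^{ni}y^{n'j}$ is constant on the curve, so $F$ divides a binomial $X^{\alpha}Y^{\beta}-c$ (or $X^\alpha-cY^\beta$). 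Such a curve is a translate of a subtorus, and in that case the bound would have to come from a direct count instead: on it, $x$ determines $y$ up to finitely many choices, and one counts the points on the coset grid directly, getting at most $\min(a\cdot|f\mu_{n'}|,\ldots)$ incidences. Rather than treat this as an exception, I would expect to fall back on the $12ab'$ term, or note that \cite[Thm.~2]{CZ13} covers it; this step needs care.

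The zeros and poles of $u,v$ lie over $x\in\{0,\infty\}$ or $y\in\{0,\infty\}$, so $|S|\le2(a+b')$. Combining this with (T2), $\chi\le2(a+b')+2(a-1)(b'-1)-2\le 2ab'+2$, which is at most $4ab'$.

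\emph{Combining.} The bound becomes $\max\bigl(3\sqrt[3]2(nn'ab'\cdot4ab')^{1/3},\,12nn'ab'/p\bigr)$. The first term is $3\cdot2^{2/3}\cdot m^{1/3}(ab')^{2/3}\cdot2^{1/3}$; I will need the sharper estimate $\chi\le2ab'$ for mixed curves (using $a,b'\ge1$) so that the constant becomes $3\cdot 2^{2/3}$. The second term is at most $12ab'$ because $m<p$. Bounding the max by the sum gives the first claim. For the second claim, (N2)--(N4) write $A$ as a union of $L_1$ cosets of $\mu_n$ and $B$ as a union of $L_2$ cosets of $\mu_{n'}$ (Lemma~\ref{kl:lem:rows}). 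Since $\Gamma$ is additive over the disjoint products, summing over the $L_1L_2$ coset pairs gives the result.

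The main obstacle I expect is the bookkeeping around multiplicative independence and separability, together with getting $\chi$ sharp enough to reach the exact constant.
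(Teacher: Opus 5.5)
Your treatment of the multiplicatively independent case is the paper's proof. You use the same $u=(x/e)^n$, $v=(y/f)^{n'}$ and the same identification $\nu(1-u)=o_x(\nu)$, $\nu(1-v)=o_y(\nu)$. The degrees and $|S|\le2(a+b')$ are the same, and so is the sum over the $L_1L_2$ coset pairs.

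The genuine gap is the dependent case, which you leave open, and neither of your fallbacks works. Lemma~\ref{kl:lem:CZ}, i.e.\ \cite[Thm.~2]{CZ13}, assumes that $u$ and $v$ are multiplicatively independent modulo constants, so it says nothing about binomial curves. Your direct count (fix $x$ in the column coset; then $y$ is determined up to finitely many choices) gives a bound of order $\min(an',b'n)$. For $F=XY-\lambda$ this is $n\ge\sqrt{m/K}$, far above $3\cdot2^{2/3}m^{1/3}+12$ for large $m$, whereas the true count is at most $1$. What is missing is the coprimality $\gcd(n,n')=1$ of (N1), which this lemma uses only here. The paper's argument is as follows.
\begin{itemize}
\item If $u^rv^s$ is constant with $(r,s)\ne(0,0)$, then $r,s\ne0$, since otherwise $x$ or $y$ would be constant.
\item So the irreducible $F$ divides a binomial and is itself $X^aY^{b'}-\lambda$ or $X^a-\lambda Y^{b'}$ with $\gcd(a,b')=1$.
\item Such a curve is rational, with normalisation $z\mapsto(\kappa z^{b'},\kappa'z^{\mp a})$.
\item A place over $e\mu_n\times f\mu_{n'}$ is a $z$ with $z^{b'n}=c_1$ and $z^{an'}=c_2$. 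This set is empty or a coset of $\mu_g$ with $g=\gcd(b'n,an')\le\gcd(b',n')\gcd(a,n)\le ab'$.
\item Each such place has $o_x=1$, because $p\nmid b'$.
\end{itemize}
Hence $\Gamma_{e\mu_n\times f\mu_{n'}}(F)\le ab'$ in this case, which fits under the $12ab'$ term.

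There are two smaller points. First, you need both $du\ne0$ and $dv\ne0$, hence both $dx\ne0$ and $dy\ne0$, and swapping the coordinates gives only one of them. For the mixed irreducible $F=X^p-Y$ one has $dy=0$. What excludes this is the hypothesis $a,b'<p$, which your argument never invokes. If $dx=0$ then $x$ is a $p$-th power in the function field, so $p$ divides $\deg x=b'$, which is impossible; likewise for $y$. Second, your Euler-characteristic estimate has an arithmetic slip: $2(a+b')+2(a-1)(b'-1)-2=2ab'$ exactly. So the sharp bound $\chi\le2ab'$ you say you still need is exactly what your count already gives, and it yields the constant $3\cdot2^{2/3}$.
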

\begin{proof}
\emph{Transfer.} $F$ is defined over a finite field $\kappa_1$ and absolutely irreducible, so its normalised projective
model over $\kappa_1$ is smooth and geometrically irreducible with base change $\klX_F$. Let $\kappa\supseteq\kappa_1(e,f,\mu_m)$
be finite such that all places over $e\mu_n\times f\mu_{n'}$ and all zeros and poles of $x,y$ are $\kappa$-rational. At a
rational point the completed local ring $\kappa[[t]]$ becomes $K_0[[t]]$, so valuations, degrees and the genus are the
same over $\kappa$ and $K_0$, and the left side of Lemma~\ref{kl:lem:CZ} counts every place contributing to $\Gamma$.

\emph{Setup.} Put $u=(x/e)^n$, $v=(y/f)^{n'}$. At $\nu$ with $x(\nu)\in e\mu_n$, $T^n-e^n$ has a simple root at $x(\nu)$
($p\nmid n$), so $\nu(1-u)=o_x(\nu)$; likewise $\nu(1-v)=o_y(\nu)$; such $\nu$ are not zeros or poles of $x,y$. So
$\Gamma$ is at most the left side of Lemma~\ref{kl:lem:CZ}.

\emph{Independent case.} $du=nu\,dx/x\ne0$, since $dx=0$ would make $x$ a $p$-th power and $p\mid b'$; likewise $dv\ne0$.
$\deg u=nb'$, $\deg v=n'a$, $|S|\le2(a+b')$ and, by (T2), $\chi\le2ab'$. Lemma~\ref{kl:lem:CZ} and $m<p$ give the bound.

\emph{Dependent case.} Then $x^{nr}y^{n's}$ is constant for some $r,s\ne0$, so $F$ is a binomial $X^aY^{b'}-\lambda$ or
$X^a-\lambda Y^{b'}$ with $\gcd(a,b')=1$, normalised by $z\mapsto(\kappa z^{b'},\kappa'z^{\mp a})$. Grid points satisfy
$z^{b'n}=c_1$, $z^{an'}=c_2$, a set that is empty or a coset of $\mu_{\gcd(b'n,an')}$, and
$\gcd(b'n,an')\le\gcd(b',n')\gcd(n,a)\le ab'$; each has $o_x=1$. So $\Gamma\le ab'$.

\emph{Union.} By (N4) and Lemma~\ref{kl:lem:rows}, $A\times B$ is the disjoint union of $L_1L_2$ coset pairs
$e\mu_n\times f\mu_{n'}$.
\end{proof}

\subsection{The divisibility lemma with domain columns}\label{kl:sec:div}

Fix a view with disjoint sides, $S=A\sqcup D'$, $k_1=|S|-n$, $k_2=|B|-n'$, and write $\RS(S,k)$ for evaluations of
polynomials of degree $<k$. Rescaling coordinates by nonzero scalars maps $C_1\otimes\F^B+\F^S\otimes C_2$ onto
$\RS(S,k_1)\otimes\F^B+\F^S\otimes\RS(B,k_2)$ and preserves supports, so we treat $M=A_0+B_0$ with columns of $A_0$ in
$\RS(S,k_1)$ and rows of $B_0$ in $\RS(B,k_2)$. Let $0\ne E\in\F[X,Y]$ vanish on $\klsupp M$, with $e_X=\deg_XE$,
$e_Y=\deg_YE$, $t=\max(e_X,e_Y)\le n$; assume that for $x_0\in D'$, $M(x_0,\cdot)$ is supported in row $x_0^n$; put
$\Pi=\prod_{x_0\in D'}(X-x_0)$ and $\Pi_b=\prod_{x_0\in D'\cap F_b}(X-x_0)$.

\paragraph{Step 0 (normalisation on domain columns).} Choose an information set $I\supseteq D'$ of the MDS code
$\RS(S,k_1)$ (possible: $k_1=|A|+|D'|-n\ge|D'|$ as $|A|\ge n$), let $\mathrm{Ext}:\F^{D'}\to\RS(S,k_1)$ be the linear
map giving the codeword equal to $z$ on $D'$ and $0$ on $I\setminus D'$, and $Q_0(\cdot,b):=\mathrm{Ext}(-B_0(D',b))$.
The columns of $Q_0$ lie in $\RS(S,k_1)$; its rows are fixed linear combinations of rows of $B_0$, hence lie in
$\RS(B,k_2)$. Replacing $(A_0,B_0)$ by $(A_0-Q_0,B_0+Q_0)$ we get $B_0(x_0,\cdot)=0$ and $A_0(x_0,\cdot)=M(x_0,\cdot)$
for $x_0\in D'$.

\paragraph{Step 1 (the common polynomial).} Let $f_b$ ($\deg<k_1$) and $g_s$ ($\deg<k_2$) interpolate $A_0(\cdot,b)$ and
$B_0(s,\cdot)$, and let $P$ ($\deg_X<|S|$, $\deg_Y<|B|$) interpolate $EA_0=-EB_0$ on $S\times B$. Since
$e_X+k_1\le|S|$ and $e_Y+k_2\le|B|$, both sides of each identity below have degree $<|S|$ in $X$, resp.\ $<|B|$ in
$Y$, and agree on $S$, resp.\ $B$; so, as polynomials,
\begin{equation}\label{kl:eq:41}
P(X,b)=E(X,b)f_b(X)\ (b\in B),\qquad P(s,Y)=-E(s,Y)g_s(Y)\ (s\in S),
\end{equation}
so $\deg_XP<k_1+e_X$ and $\deg_YP<k_2+e_Y$. As $g_{x_0}=0$ for $x_0\in D'$, $P=\Pi\hat P$ with $\deg_X\hat P<|A|-n+e_X$; as
$f_b=(\Pi/\Pi_b)\hat f_b$ for some $\hat f_b\in\F[X]$ ($f_b$ vanishes on $D'\setminus F_b$),
\begin{equation}\label{kl:eq:42}
\Pi_b(X)\hat P(X,b)=E(X,b)\hat f_b(X)\qquad(b\in B).
\end{equation}

\begin{lemma}[divisibility]\label{kl:lem:div}
Suppose every mixed irreducible factor $F$ of $E$ over $K_0$, of bidegree $(a,b')$, satisfies
\begin{equation}\label{kl:eq:H}
\Gamma_{A\times B}(F)+\Delta_F<(n+1-2t)b'+(n'+1-2t)a,\qquad
\Delta_F:=\sum_{\nu:\,(x(\nu),y(\nu))\in\{\hat x_0:x_0\in D'\}}\min(o_x,o_y).
\end{equation}
Then $E\mid P$. Condition \eqref{kl:eq:H} holds for every mixed $F$ whenever
\begin{equation}\label{kl:eq:43}
L_1L_2\bigl(3\cdot2^{-1/3}m^{1/3}t^{1/3}+6t\bigr)<\frac n{L_2+1}-2t .
\end{equation}
\end{lemma}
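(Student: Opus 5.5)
The plan is to run the divisibility argument of Polishchuk and Spielman factor by factor on $E$. Write $E=\prod_jF_j^{k_j}$ over $K_0$. Induct on the number of irreducible factors, counted with multiplicity. After dividing out one factor $F$ from both $E$ and $P$, the identities \eqref{kl:eq:41} and \eqref{kl:eq:42} still hold with the same $f_b,\hat f_b,g_s$, the bidegree of $E$ only drops, and hypothesis \eqref{kl:eq:H} for the remaining factors is unaffected, since $t$ decreases. So it suffices to show $F\mid P$ for one factor $F$ of $E$.

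\emph{Column and row factors.} Let $F=X-c$. If $c\in S$, then \eqref{kl:eq:41} gives $P(c,Y)=-E(c,Y)g_c(Y)=0$. If $c\notin S$, then $(X-c)\mid E(X,b)$ divides $P(X,b)$ for every $b\in B$, and $|B|>\deg_YP$. In both cases $(X-c)\mid P$. Row factors $Y-d$ are symmetric: use the columns $s\in S$ and $|S|>\deg_XP$.

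\emph{Mixed factors.} This is the heart of the argument. Suppose $F\nmid P$ and consider the nonzero function $h=\hat P|_{\klX_F}$; we use $\hat P=P/\Pi$ rather than $P$, which is the point of Step~0. Its poles lie over the infinite points, and by (T1) their number is at most
\[
\deg_X\hat P\cdot b'+\deg_YP\cdot a<(|A|-n+e_X)b'+(k_2+e_Y)a .
\]
For the zeros:
\begin{itemize}[nosep]
\item By \eqref{kl:eq:42} and (T4), along each row $b\in B$ the function $h$ vanishes wherever $E(X,b)$ does, apart from the factor $\Pi_b$. Summed over the rows, this gives $a|B|$ zeros, less the points of $\{\hat x_0\}$.
\item Along each column $s\in A$, $h$ vanishes by \eqref{kl:eq:41}, giving $b'|A|$ zeros.
\end{itemize}
At a place lying on both a row and a column these two counts overlap by at most $\min(o_x,o_y)$. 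So the overcount is at most $\Gamma_{A\times B}(F)$ from grid places and $\Delta_F$ from domain points, which the division by $\Pi$ moves onto the view's own points. By (T3), zeros equal poles, so
\[
a|B|+b'|A|-\Gamma_{A\times B}(F)-\Delta_F<(|A|-n+e_X)b'+(|B|-n'+e_Y)a ,
\]
that is, $\Gamma+\Delta_F>(n-e_X)b'+(n'-e_Y)a\ge(n+1-2t)b'+(n'+1-2t)a$ after absorbing boundary terms. This contradicts \eqref{kl:eq:H}. The delicate point is the exact bookkeeping of the $\Pi_b$ factors and the orders $o_x,o_y$ at domain points. I expect this to be the main obstacle, and would handle it with (T4) applied row by row.

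\emph{Sufficiency of \eqref{kl:eq:43}.} Apply Lemma~\ref{kl:lem:pair} with $a,b'\le t$. Using $(ab')^{2/3}\le\frac12(ab')^{1/3}t^{1/3}(a+b')/\sqrt{ab'}$-type estimates (AM--GM, $ab'\le t\min(a,b')\le\frac t2(a+b')$), bound $\Gamma_{A\times B}(F)$ by $L_1L_2\,(3\cdot2^{-1/3}m^{1/3}t^{1/3}+6t)(a+b')$. Next bound $\Delta_F\le\min(b'+na,\,L_2n'b')$ ($b'$ from column contacts, and per row at most its $n$ domain points), following the overview. A convex combination of the two bounds, with weights $\frac1{L_2+1}$ and $\frac{L_2}{L_2+1}$, gives at most $(n-\frac n{L_2+1})(a+b')$-type terms, leaving margin $\frac n{L_2+1}-2t$ per unit of $a+b'$. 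Since $n'>n$, the right side of \eqref{kl:eq:H} is at least $(n+1-2t)(a+b')$. Hence \eqref{kl:eq:43} implies \eqref{kl:eq:H}.
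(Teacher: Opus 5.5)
Your architecture is the paper's: the Polishchuk--Spielman zero/pole count on $\mathcal X_F$ for $\hat P=P/\Pi$, then Lemma~\ref{kl:lem:pair} and a bound on $\Delta_F$. However, three steps fail as written.

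\emph{Pure factors.} Dividing out a pure factor does not preserve \eqref{kl:eq:41}. Suppose $(X-s)\mid E$ with $s\in S$. Then $P'=P/(X-s)$ and $E'=E/(X-s)$ satisfy $P'(s,b)=E'(s,b)A_0(s,b)$. Here $A_0(s,\cdot)=M(s,\cdot)-B_0(s,\cdot)$ need not lie in $\RS(B,k_2)$, because $E$ imposes no condition on $M(s,\cdot)$ when $E(s,Y)\equiv0$. So the column identity at $s$ is lost, yet the mixed count needs it at every $s\in A$. You must cancel the mixed factors first. This preserves \eqref{kl:eq:41} and \eqref{kl:eq:42} because $F(s,Y),F(X,b)\not\equiv0$ and $F$ is prime to $\Pi$. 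Then, for $E=cE_1(X)E_2(Y)$, obtain $E_1\mid P$ and $E_2\mid P$ in one step each: reduce $P$ modulo $E_1$ (resp.\ $E_2$) and use $\deg_YP<|B|$ (resp.\ $\deg_XP<|S|$); $E_1E_2\mid P$ follows by coprimality. This also handles multiplicities, which your argument for $c\in S$ does not.

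\emph{The count.} The sums $\sum_{y(\nu)=b}o_y=a$ and $\sum_{x(\nu)=s}o_x=b'$ of (T4) include places with $x(\nu)=\infty$, resp.\ $y(\nu)=\infty$. At these places $\hat P$ need not vanish and may have a pole. There are at most $b'$, resp.\ $a$, of them, each costing at most $a$, resp.\ $b'$. So only $\#\text{zeros}\ge|A|b'+|B|a-2ab'-\Gamma_{A\times B}(F)-\Delta_F$ holds, and $2ab'\le t(a+b')$ is where the second $t$ in $n+1-2t$ comes from. Your intermediate inequality $\Gamma_{A\times B}(F)+\Delta_F>(n-e_X)b'+(n'-e_Y)a$ is unjustified. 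The final inequality comes out right only because you then discard slack ($n-e_X\ge n+1-2t$) that the missing $2ab'$ consumes. With the $2ab'$ in place, your strict pole bound is too weak by $a+b'-1$. You need $\#\text{poles}\le(|A|-n+e_X-1)b'+(k_2+e_Y-1)a$, which gives exactly $\Gamma_{A\times B}(F)+\Delta_F\ge(n+1-2t)b'+(n'+1-2t)a$.

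The domain step you defer is short. With $g_b=\gcd(F(X,b),\Pi_b)$, \eqref{kl:eq:42} gives $g_b(x)\hat P(x,y)=(y-b)K_b(x,y)$ on $\mathcal X_F$, with $K_b$ regular over finite points. Hence $\mathrm{ord}_\nu\hat P\ge\max(0,o_y-o_x)$ over the roots of $g_b$, and $\ge o_y$ at the other finite places of the row; if $E(X,b)\equiv0$ then $\hat P(X,b)\equiv0$. Domain columns yield no column zeros, so $\Delta_F$ is a row loss, not a row/column overlap.

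\emph{Sufficiency.} Your bounds on $\Delta_F$ have $a$ and $b'$ interchanged. By (T1), $\deg x=b'$ and $\deg y=a$. The correct bounds, as in Claim~\ref{kl:clm:delta}(a),(b), are $\Delta_F\le a+nb'$ (zeros of $y-x^n$) and $\Delta_F\le L_2n'a$ (each row contributes at most $\deg y=a$). Your bound $\Delta_F\le b'+na$ is false. Take $n'=2n-1$, $n\ge4$ and $c:=d_0^{-n'}$. The mixed irreducible $F=X-cY^2$ ($a=1$, $b'=2$) vanishes at $\hat\xi$ for every $\xi\in d_0\mu_{n'}\subseteq D$, and these points lie on distinct rows. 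So $\Delta_F\ge2n-1>n+2=b'+na$ as soon as $D'\supseteq d_0\mu_{n'}$.

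With the correct bounds your convex combination does work, and it is shorter than the paper's two-case proof of Claim~\ref{kl:clm:delta}(c$'$). From $\Delta_F\le\frac{L_2}{L_2+1}(a+nb')+\frac1{L_2+1}L_2n'a$ you get
\[(n+1-2t)b'+(n'+1-2t)a-\Delta_F\ge\Bigl(\frac n{L_2+1}+1-2t\Bigr)b'+\Bigl(\frac{n'+1}{L_2+1}-2t\Bigr)a\ge\Bigl(\frac n{L_2+1}-2t\Bigr)(a+b').\]
Use $n'>n$ only in this last step. If you first bound the right side of \eqref{kl:eq:H} below by $(n+1-2t)(a+b')$, as your last sentence suggests, the term $\frac{L_2}{L_2+1}n'a$ is left uncancelled, and it exceeds $na$ once $n'>\frac{L_2+1}{L_2}n$. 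Your final bound on $\Gamma_{A\times B}(F)$ is right, although the estimate behind its first term is $(ab')^{2/3}\le\frac{a+b'}2t^{1/3}$, not the one you wrote.
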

\begin{proof}
Work over $K_0$. \emph{(i) Mixed factors.} Let $F\mid E$ be mixed with $F\nmid P$; then $\hat P(x,y)\ne0$ on $\klX_F$.

\emph{Poles.} By (T1), (T3), $\#\text{zeros}=\#\text{poles}\le(|A|-n+e_X-1)b'+(|B|-n'+e_Y-1)a$.

\emph{Column zeros ($s\in A$).} By \eqref{kl:eq:41}, $\Pi(s)\hat P(s,Y)=-E(s,Y)g_s(Y)$ with $\Pi(s)\ne0$, and
$F(s,Y)\not\equiv0$; so $\hat P(s,Y)=F(s,Y)h(Y)$, $\hat P=Fh+(X-s)K$, and $\klord_\nu\hat P\ge o_x(\nu)$ when $x(\nu)=s$,
$y(\nu)$ finite.

\emph{Row zeros ($b\in B$).} If $E(X,b)\equiv0$ then $\hat P(X,b)\equiv0$ by \eqref{kl:eq:42} and
$\klord_\nu\hat P\ge o_y(\nu)$. Otherwise $F(X,b)\mid\Pi_b\hat P(X,b)$. Let $g_b=\gcd(F(X,b),\Pi_b)=\prod_{\xi\in G_b}(X-\xi)$,
$G_b=\{\xi\in D'\cap F_b:F(\xi,b)=0\}$ ($\Pi_b$ is squarefree). Writing $F(X,b)=g_bF'$, $\Pi_b=g_b\Pi'$, the polynomials
$F'$ and $\Pi'$ are coprime, so $F'\mid\hat P(X,b)$, i.e.\ $g_b\hat P(X,b)=F(X,b)h_b$; hence on $\klX_F$,
$g_b(x)\hat P(x,y)=(y-b)K_b(x,y)$, and the polynomials $\hat P$, $K_b$ are regular at every place over a finite point.
So, at $\nu$ with $y(\nu)=b$, $x(\nu)$ finite: $\klord_\nu\hat P\ge o_y(\nu)$ if $x(\nu)\notin G_b$, as $g_b(x)$ is
then a unit at $\nu$; if $x(\nu)=\xi\in G_b$, then $\klord_\nu g_b(x)=o_x(\nu)$ ($\xi$ is a simple root of $g_b$, and
$x-\xi'$ is a unit at $\nu$ for the other roots $\xi'$), so $\klord_\nu\hat P\ge o_y-o_x$, and $\klord_\nu\hat P\ge0$
by regularity; together, $\klord_\nu\hat P\ge\max(0,o_y-o_x)=o_y-\min(o_x,o_y)$. The places with $x(\nu)\in G_b$ lie
over the view's domain points on $F$, and $G_b\cap A=\emptyset$.

\emph{Counting.} Over $A\times B$ both bounds apply: $\klord\ge\max(o_x,o_y)=o_x+o_y-\min(o_x,o_y)$. By (T4),
$\sum_{x(\nu)\in A}o_x=|A|b'$ and $\sum_{y(\nu)\in B}o_y=|B|a$; places with $x(\nu)\in A$, $y(\nu)=\infty$ number at most
$a$, each with $o_x\le b'$, and symmetrically. Hence
$\#\text{zeros}\ge|A|b'+|B|a-2ab'-\Gamma_{A\times B}(F)-\Delta_F$. With the pole bound, $e_X,e_Y\le t$ and
$2ab'\le t(a+b')$:
\begin{equation}\label{kl:eq:44}
\Gamma_{A\times B}(F)+\Delta_F\ge(n+1-2t)b'+(n'+1-2t)a,
\end{equation}
contradicting \eqref{kl:eq:H}. So $F\mid P$. Cancelling $F$ preserves \eqref{kl:eq:41} (with the same $f_b,g_s$: the
mixed irreducible $F$ has no factor $X-s$ or $Y-b$, so $F(s,Y),F(X,b)\not\equiv0$ for $s\in S$, $b\in B$), the
vanishing on domain columns ($F$ has no factor $X-x_0$, $x_0\in D'\subseteq S$, so it is coprime to $\Pi$; hence $F\mid\hat P$, $P/F=\Pi\cdot\hat P/F$, and
\eqref{kl:eq:42} holds for $\hat P/F$, $E/F$ with the same $\hat f_b$) and the degree bounds, and does not increase $t$;
the mixed factors of $E/F$ divide $E$, so satisfy \eqref{kl:eq:H}, and (i) and (ii) use nothing else about $P$ and $E$.
Repeat: each round removes one irreducible factor, so $E$ need not be squarefree.

\emph{(ii) Pure factors.} Now $E=c\,E_1(X)E_2(Y)$, and $P$ denotes the polynomial after the mixed factors have been
cancelled. For $s\in S$, $E_2\mid P(s,Y)$ by \eqref{kl:eq:41}; the remainder of $P$ modulo $E_2$ has $X$-degree
$<k_1+e_X\le|S|$ and vanishes on $S$, so $E_2\mid P$; symmetrically $E_1\mid P$, and $E_1E_2\mid P$. Hence $E\mid P$
over $K_0$, and over $\F$.

\emph{\eqref{kl:eq:43} implies \eqref{kl:eq:H}.} \eqref{kl:eq:43} forces $2t(L_2+1)<n$, so
Claim~\ref{kl:clm:delta}(c$'$) applies: the right side of \eqref{kl:eq:H} minus $\Delta_F$ is at least
$(\frac n{L_2+1}-2t)(a+b')$. By Lemma~\ref{kl:lem:pair} ($a,b'\le t\le n<p$), with
$(ab')^{2/3}\le\frac{a+b'}2t^{1/3}$ and $ab'\le\frac{a+b'}2t$,
$\Gamma_{A\times B}(F)\le L_1L_2(a+b')(3\cdot2^{-1/3}m^{1/3}t^{1/3}+6t)<(\frac n{L_2+1}-2t)(a+b')$.
\end{proof}

\begin{claim}[the domain loss is affordable]\label{kl:clm:delta}
For mixed $F$ of bidegree $(a,b')$ with $1\le a,b'\le t<n<n'$:
(a) $\Delta_F\le a+nb'$; (b) $\Delta_F\le L_2n'a$;
(c) if $t\le n/(16L_2)$ then $(n+1-2t)b'+(n'+1-2t)a-\Delta_F\ge\frac n{8L_2}(a+b')$;
(c$'$) if $2t(L_2+1)<n$ then $(n+1-2t)b'+(n'+1-2t)a-\Delta_F\ge\bigl(\frac n{L_2+1}-2t\bigr)(a+b')$.
\end{claim}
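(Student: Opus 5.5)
The plan is to bound $\Delta_F$ in two ways, once by a degree count for one rational function on $\klX_F$ and once by a row count via (T4), and then to take a convex combination of the two bounds to get (c$'$). Item (c) will follow from (c$'$).

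\emph{(a).} I would use the function $h:=y-x^n$ on $\klX_F$. It is not identically zero, because $F$ is irreducible with $\deg_XF=a\le t<n$, so $F$ is not associate to $Y-X^n$. Take a place $\nu$ over $\hat x_0=(x_0,x_0^n)$ and write
\[
h=(y-x_0^n)-(x^n-x_0^n).
\]
The first term has order $o_y(\nu)$. In the second, $x_0$ is a simple root of $T^n-x_0^n$, since $p\nmid n$ by (N1); hence $x^n-x_0^n$ has order exactly $o_x(\nu)$. Therefore $\klord_\nu h\ge\min(o_x,o_y)$. The points $\hat x_0$ are distinct for distinct $x_0$, so summing over these places gives $\Delta_F\le\#\text{zeros}(h)$. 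The poles of $h$ lie over $y=\infty$ and over $x=\infty$. By (T1), these contribute at most $\deg y=a$ and $n\deg x=nb'$ respectively. Then (T3) gives $\Delta_F\le a+nb'$.

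\emph{(b).} Every point $\hat x_0$ lies in a row $b=x_0^n\in B$. Since $F$ is mixed irreducible, $(Y-b)\nmid F$, so the symmetric form of (T4) gives $\sum_{y(\nu)=b}o_y(\nu)=a$. Using $\min(o_x,o_y)\le o_y$ and $|B|=L_2n'$ (Lemma~\ref{kl:lem:rows}), summing over the rows yields $\Delta_F\le L_2n'a$.

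\emph{(c$'$).} Put $\lambda:=L_2/(L_2+1)$ and bound $\Delta_F\le\lambda(a+nb')+(1-\lambda)L_2n'a$. Subtract this from $(n+1-2t)b'+(n'+1-2t)a$.
\begin{itemize}[nosep]
\item The coefficient of $b'$ becomes $n(1-\lambda)+1-2t=\frac n{L_2+1}+1-2t$.
\item The coefficient of $a$ becomes $n'\bigl(1-(1-\lambda)L_2\bigr)+1-\lambda-2t=\frac{n'}{L_2+1}+\frac1{L_2+1}-2t$.
\end{itemize}
Since $n'>n$, both coefficients are at least $\frac n{L_2+1}-2t$, which proves (c$'$). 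The hypothesis $2t(L_2+1)<n$ only serves to make this lower bound positive.

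\emph{(c).} If $t\le n/(16L_2)$, then $2t(L_2+1)<n$, so (c$'$) applies. It remains to check $\frac n{L_2+1}-2t\ge\frac n{L_2+1}-\frac n{8L_2}\ge\frac n{8L_2}$. This reduces to $\frac1{L_2+1}\ge\frac1{4L_2}$, which holds for $L_2\ge1$.

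The only delicate step is (a): the multiplicity claim $\klord_\nu(y-x^n)\ge\min(o_x,o_y)$ depends on $p\nmid n$, and the pole count must charge $n$ per pole of $x$. The choice of $\lambda$ is simply the weight that balances the $n$ and $n'$ terms.
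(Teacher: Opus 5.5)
Your proof is correct. Parts (a) and (b) are the paper's own arguments:
\begin{itemize}
\item for (a), the zero count for $y-x^n$ on $\mathcal X_F$, which has order at least $\min(o_x,o_y)$ at each of the view's domain places and at most $a+nb'$ poles;
\item for (b), the row count through the $Y$-form of (T4).
\end{itemize}

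For (c) and (c$'$) you take a different route. The paper proves each of them by a case split according to which of the bounds (a), (b) is smaller. For (c) it compares $b'$ with $2L_2n'a/n$, for (c$'$) it compares $a+nb'$ with $L_2n'a$, and it manipulates ratios within each case. You instead average the two bounds once, $\Delta_F\le\frac{L_2}{L_2+1}(a+nb')+\frac1{L_2+1}L_2n'a$. From this you read off the coefficient $\frac n{L_2+1}+1-2t$ of $b'$ and $\frac{n'+1}{L_2+1}-2t$ of $a$. You then obtain (c) as a corollary of (c$'$). The paper records the inequality this needs, $\frac1{L_2+1}-2c\ge\frac1{8L_2}$ for $c\le\frac1{16L_2}$, in Remark~\ref{kl:rem:const}, but it proves (c) separately.

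Your route is shorter, and it shows slightly more. Claim (c$'$) holds with these larger coefficients and without the hypothesis $2t(L_2+1)<n$. The paper's first case needs that hypothesis only to keep its lower bound $an'(1-2\varepsilon(1+L_2))$ nonnegative before dividing by its upper bound for $a+b'$.

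Averaging loses nothing compared with splitting. Put $N=(n+1-2t)b'+(n'+1-2t)a$, and consider the best $\gamma$ with $\max\bigl(N-(a+nb'),\,N-L_2n'a\bigr)\ge\gamma(a+b')$ for all $a,b'\ge0$. By the minimax theorem, this $\gamma$ equals the best constant obtainable from a single convex combination of the two bounds. So your weight $\lambda=L_2/(L_2+1)$ is in effect a dual certificate for the paper's case split.
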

\begin{proof}
(a) At a view domain place over $(\xi,\xi^n)$, $\klord(x^n-\xi^n)=o_x$, so $\klord(y-x^n)\ge\min(o_x,o_y)$; and
$y-x^n\ne0$ on $\klX_F$ (as $Y-X^n$ is irreducible of $X$-degree $n>a$) has at most $a+nb'$ zeros.
(b) Each of the $|B|=L_2n'$ rows contributes $\le\deg y=a$.
(c) Case $b'\ge2L_2n'a/n$: by (b) the left side is $\ge(\frac n2-2t)b'\ge\frac38nb'\ge\frac14n(a+b')$. Case
$b'<2L_2n'a/n$: by (a) it is $\ge(n'-2t)a-2tb'\ge\frac58n'a\ge\frac{5n}{32L_2}(a+b')$.
(c$'$) Let $L=L_2$, $\varepsilon=t/n$. \emph{Case $a+nb'\le Ln'a$:} by (a) the left side is
$\ge(n'-2t)a-2tb'\ge an'(1-2\varepsilon(1+L))$ (using $b'\le Ln'a/n$ and $n<n'$), while $a+b'\le a(n+Ln')/n$; since
$nn'/(n+Ln')\ge n/(1+L)$, the ratio is $\ge\frac n{L+1}-2t$. \emph{Case $a+nb'>Ln'a$:} by (b) the left side is
$\ge(n-2t)b'-\kappa a$ with $\kappa=(L-1)n'+2t-1\ge0$; with $\gamma=\frac n{L+1}-2t$ and $b'>(Ln'-1)a/n$ it suffices
that $(n-2t-\gamma)(Ln'-1)\ge(\kappa+\gamma)n$, which expands to $\gamma\le\frac{nn'}{Ln'+n-1}-2t$, true because
$\frac{nn'}{Ln'+n-1}\ge\frac{nn'}{(L+1)n'}$.
\end{proof}

\begin{remark}
Without Step~0 and the division by $\Pi$ the count loses $\Gamma_{D'\times B}(F)$, which is only bounded by $|D'|b'$
in general; the refinement charges only the view's own domain points on
$F$, which lie on $Y=X^n$. Putting all of $D$ among the columns instead would make $L_1$ grow with $n'$.
\end{remark}

\subsection{Row localisation and the proof of the locator lemma}\label{kl:sec:proof}

\begin{theorem}[row localisation of all dependencies]\label{kl:thm:rows}
Let $V$ have disjoint sides and no input reads, and let $1\le t\le n$ satisfy \eqref{kl:eq:43} and $(t+1)^2>|V|$.
Then every dependency $M$ supported on $V$'s points has a decomposition $M=M_1+M_2$ with
$\{b:M_1(\cdot,b)\ne0\}\subseteq H_t(V)$; consequently $\Lambda(V)\subseteq\spn\{\ev_x:x\in F_b,\ b\in H_t(V)\}$.
\end{theorem}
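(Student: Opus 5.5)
The plan is to run the divisibility lemma with a single polynomial $E$ through all points of $V$. Then we read off the row structure of the $C_1\otimes\F^B$ component.

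\emph{Step 1: the curve.} Since $|V|<(t+1)^2$, the space of polynomials of bidegree at most $(t,t)$ has dimension $(t+1)^2>|\mathrm{Pts}(V)|$. So there is a nonzero $E\in\F[X,Y]$ with $\deg_XE,\deg_YE\le t$ vanishing on $\mathrm{Pts}(V)$, hence on $\klsupp M$ for every dependency $M$ supported on $V$. Fix such an $M$ and any decomposition $M=M_1+M_2$ from Lemma~\ref{kl:lem:dual}.

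\emph{Step 2: normalise.} We want the hypotheses of Section~\ref{kl:sec:div}. A domain column $x_0\in D'$ carries only the point $\hat x_0$, so $M(x_0,\cdot)$ is supported in row $x_0^n$. After rescaling coordinates we are in the setting $M=A_0+B_0$, and we apply Step~0, so that $B_0$ vanishes on $D'$ and the columns of $A_0$ still lie in $\RS(S,k_1)$. By \eqref{kl:eq:43} and Lemma~\ref{kl:lem:div}, $E\mid P$. Then \eqref{kl:eq:41} gives $E(X,b)f_b(X)=E(X,b)\,(P/E)(X,b)$ for each $b$. Hence, for every $b$ with $E(X,b)\not\equiv0$, we get $f_b=(P/E)(X,b)$. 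Similarly $g_s=-(P/E)(s,Y)$ for every column $s$ with $E(s,Y)\not\equiv0$.

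\emph{Step 3: move $P/E$ out of $A_0$.} Put $R:=P/E$; by the degree bounds $\deg_XR<k_1$ and $\deg_YR<k_2$. Define $A_0':=A_0-R|_{S\times B}$ and $B_0':=B_0+R|_{S\times B}$. The evaluation $R|_{S\times B}$ lies in $\RS(S,k_1)\otimes\RS(B,k_2)$, so $A_0'$ keeps columns in $\RS(S,k_1)$, $B_0'$ keeps rows in $\RS(B,k_2)$, and $M=A_0'+B_0'$. Now $A_0'(\cdot,b)=0$ for every row $b$ with $E(X,b)\not\equiv0$. So the nonzero columns of $A_0'$ lie in the set $Z$ of rows $b$ where $Y-b$ divides $E$, and $|Z|\le t$.

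\emph{Step 4: such a row is heavy.} Take $b\in Z$ with $A_0'(\cdot,b)\ne0$. This column is a nonzero word of $C_1$ after undoing the rescaling, so it has weight $\ge n+1$ on $S$. Off $\klsupp M$ it equals $-B_0'(\cdot,b)$. I expect this to be the main obstacle: bounding where $B_0'(\cdot,b)$ can be nonzero off the support of $M$. The overview says this happens on at most $t$ columns, namely those on which $E$ vanishes identically. For $s$ with $E(s,Y)\not\equiv0$, Step~2 gives $g_s=-R(s,Y)$, i.e.\ $B_0(s,\cdot)=-R(s,\cdot)$, so $B_0'(s,\cdot)=0$. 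Hence $B_0'$ is supported on the at most $t$ columns $s$ with $(X-s)\mid E$. Therefore $A_0'(\cdot,b)$ agrees with $M(\cdot,b)$ outside at most $t$ columns, and $M(\cdot,b)$ has at least $n+1-t$ nonzeros. All of these are points of $V$ in row $b$, so $b\in H_t(V)$.

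\emph{Step 5: conclude.} Undo the Step~0 and Step~3 modifications and the rescaling. Each of these changes the $C_1$-component only by words in the admissible spaces, so $M_1':=A_0'$ (rescaled back) and $M_2':=B_0'$ form a valid decomposition as in Lemma~\ref{kl:lem:dual}. Its $M_1'$-rows lie in $H_t(V)$. Lemma~\ref{kl:lem:leak} then gives $\ell\in U^*\otimes\spn\{w(b):b\in H_t(V)\}$, and Lemma~\ref{kl:lem:cheap} turns this into $\spn\{\ev_x:x\in F_b,\ b\in H_t(V)\}$. One point to double-check is that Step~0's $Q_0$ does not spoil Step~4 on the domain columns. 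It should not, because after Step~0 the domain columns of $B_0$ vanish, so they are among the columns where $B_0'=R$ is controlled as above.
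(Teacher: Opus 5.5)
Your proposal is correct and follows the paper's proof essentially step for step. The shared ingredients are: one nonzero $E$ of bidegree at most $(t,t)$ through $\mathrm{Pts}(V)$ that serves every dependency; Steps~0--1 and Lemma~\ref{kl:lem:div} to get $P=ER$; the shifted decomposition $M_1=A_0-R$, $M_2=B_0+R$, which kills every row $b$ with $E(X,b)\not\equiv0$ and every column $s$ with $(X-s)\nmid E$; and the MDS weight bound $n+1$ against the at most $t$ remaining columns. One point is left implicit, as it is in the paper: \eqref{kl:eq:43} is monotone in $t$, so it also holds for the actual $\max(\deg_XE,\deg_YE)\le t$ to which Lemma~\ref{kl:lem:div} refers.
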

\begin{proof}
As $(t+1)^2>|\mathrm{Pts}(V)|$ there is $0\ne E$ of bidegree $\le(t,t)$ vanishing on $\mathrm{Pts}(V)$; it serves every dependency. The
domain-column hypothesis holds since the only point of $V$ in a column $x_0\in D'$ is $\hat x_0$. Run Steps~0--1;
Lemma~\ref{kl:lem:div} gives $P=EQ$ with $\deg_XQ<k_1$, $\deg_YQ<k_2$. Put $M_1=A_0-Q$, $M_2=B_0+Q$. If
$E(X,b)\not\equiv0$ then $f_b=Q(X,b)$ by \eqref{kl:eq:41}, so $M_1(\cdot,b)=0$; likewise $M_2(s,\cdot)=0$ unless
$(X-s)\mid E$, so $J=\{s:M_2(s,\cdot)\ne0\}$ has $|J|\le t$. A nonzero column $M_1(\cdot,b)$ of the MDS code
$\RS(S,k_1)$ has weight $\ge n+1$, and $M=M_1$ off $J$, so row $b$ carries $\ge n+1-t$ points of $V$. Undo the scaling and
apply Lemmas~\ref{kl:lem:leak} and~\ref{kl:lem:cheap}.
\end{proof}

\begin{proof}[Proof of Theorem~\ref{kl:thm:KL}]
(a) $D_V$, $H_\tau(V)$ and $[|V|\ge(\tau+1)^2]$ are monotone and depend on positions only; computing $W$ needs row counts,
the powers $x^n$, the fibres $F_b=\{x\in D:x^n=b\}$ and $d_0\mu_m$. (b) Let $V_0\subseteq V$ be $V$ without its input
reads. Then $\Lambda(V)=\Lambda(V_0)+\spn\{\ev_x:x\in G\}$ with $G\subseteq D_V\subseteq W(V)$, and $W(V_0)\subseteq W(V)$
by (a); so it suffices that $W(V_0)$ locates $V_0$, and we may assume $G=\emptyset$. Likewise, by
Lemma~\ref{kl:lem:shared} ($\mathrm{Sh}\subseteq D_V$, $V'\subseteq V$) we may assume disjoint sides. If
$|V|<(\tau+1)^2$ and $\tau\ge1$, then $t=\tau$ satisfies
Theorem~\ref{kl:thm:rows}: $\tau\le cn\le n$, and by $m\le Kn^2$,
$L_1L_2(3\cdot2^{-1/3}m^{1/3}\tau^{1/3}+6\tau)\le nL_1L_2(3\cdot2^{-1/3}K^{1/3}c^{1/3}+6c)<n(\frac1{L_2+1}-2c)
\le\frac n{L_2+1}-2\tau$ (if $\tau=0$ then $V=\emptyset$). If $|V|\ge(\tau+1)^2$, $\{\ev_x:x\in d_0\mu_m\}$ is a basis of
$C^*$. (c) $|D_V|\le|V|$; the rows of $H_\tau(V)$ are disjoint and carry $>n(1-c)$ points each, so they contribute
$\le|V|/(1-c)$ (Lemma~\ref{kl:lem:rows}); and $|V|\ge(\tau+1)^2>c^2n^2$ gives $m\le Kn^2<(K/c^2)|V|$.
\end{proof}

\begin{remark}
(i) The one-polynomial argument of Theorem~\ref{kl:thm:rows} is an alternative to the closure route
\cite{DLV24,KP25,BV26} to a monotone locator; it gives the threshold $n+1-t$ directly. (ii) Nothing is claimed for nodes
with $m_1>m_2$ (Claim~\ref{kl:clm:delta} uses $n<n'$); none occurs in the recursion. (iii) If $(\tau+1)^2>n+n'$, a set of
$n+n'$ points with one point per row and column, both sides and no domain reads, has no heavy row and hence no leak;
this needs $n\gtrsim3/c^2$ and is superseded by Theorem~\ref{kl:thm:mob}.
\end{remark}

\subsection{A small-view locator}\label{kl:sec:practical}

When $m_1<1/c$ (for $c=c_*$: whenever $m_1<10^{13}$) we have $\tau=0$, so the non-trivial part of
Theorem~\ref{kl:thm:KL} is empty; the following ratio-one locator, the \emph{small-view locator}, covers small views at
every node. It is not used in the proof of the main theorem. \emph{Credit.} The case
$e=0$ of Theorem~\ref{kl:thm:CL} is elementary, classical Cayley--Bacharach residuation \cite{EGH96}, not new; the
extension to $e\le E^*$ is elementary (residuation and Lemma~\ref{kl:lem:curve}) and may be classical (see
\cite{GMRag92,GVT15} on finite sets of points in $\mathbb P^1\times\mathbb P^1$). Theorem~\ref{kl:thm:mob} is a direct
application of \cite[Theorem~2]{CZ13}.

Write $\mathcal P_{a,b}=\F[X,Z]_{\le(a,b)}$. A finite $\Gamma\subset\F^2$ is \emph{independent} if its evaluation
functionals are independent on $\mathcal P_{a,b}$; a \emph{circuit} is a minimal dependent set. A \emph{M\"obius graph}
is the zero set of an irreducible form $cXZ+dX+eZ+f$ with $cf\ne de$ (affine maps, $c=0$, included). $\Gamma$ is
\emph{Cayley--Bacharach} (CB) for $(a,b)$ if every $f\in\mathcal P_{a,b}$ vanishing on $\Gamma\setminus\{P\}$ vanishes at $P$; circuits are
CB.

\begin{lemma}[hook separation]\label{kl:lem:H}
If $\Gamma$ is CB for $(a,b)$ and $P\in\Gamma$ has $\rho$ other points of $\Gamma$ on its row and $\sigma$ on its
column, then $\rho\ge a+1$, or $\sigma\ge b+1$, or $|\Gamma|\ge a+b+2$.
\end{lemma}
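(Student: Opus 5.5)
We argue by contrapositive: if none of the three alternatives holds, we construct $f\in\mathcal P_{a,b}$ that vanishes on $\Gamma\setminus\{P\}$ but not at $P$, contradicting the Cayley--Bacharach property. So suppose $\rho\le a$, $\sigma\le b$ and $|\Gamma|\le a+b+1$. Write $P=(x_P,z_P)$. Let $R$ be the other points of $\Gamma$ on the row of $P$ (second coordinate $z_P$), let $C$ be the other points on the column of $P$ (first coordinate $x_P$), and let $O$ be the remaining points of $\Gamma\setminus\{P\}$. We then have $|O|=|\Gamma|-1-\rho-\sigma\le a+b-\rho-\sigma=(a-\rho)+(b-\sigma)$.

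The separator is a product of linear forms, built as a ``hook''. For each point of $R$, take the factor $X-x_i$; these $\rho$ factors vanish on $R$ and are nonzero at $P$, because the points of $R$ lie off the column of $P$. For each point of $C$, take the factor $Z-z_j$; these $\sigma$ factors vanish on $C$ and are nonzero at $P$, because the points of $C$ lie off the row of $P$. So far the $X$-degree is $\rho\le a$ and the $Z$-degree is $\sigma\le b$. The points of $O$ lie on neither the row nor the column of $P$. Split $O$ into $O_1$ with $|O_1|\le a-\rho$ and $O_2$ with $|O_2|\le b-\sigma$, which is possible by the count above. For each $Q\in O_1$ add the factor $X-x_Q$, which is nonzero at $P$ since $x_Q\ne x_P$. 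For each $Q\in O_2$ add the factor $Z-z_Q$, which is nonzero at $P$ since $z_Q\ne z_P$.

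The product $f$ has bidegree at most $(\rho+|O_1|,\ \sigma+|O_2|)\le(a,b)$. It vanishes on every point of $\Gamma\setminus\{P\}$, because each such point is killed by its own factor. It does not vanish at $P$, since every factor is nonzero there. This contradicts CB, and the lemma follows.

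There is no real obstacle. The only step needing care is checking that each factor is nonzero at $P$, and that holds by how $R$, $C$ and $O$ are defined. The budget count $|O|\le(a-\rho)+(b-\sigma)$ is exactly what the negated third alternative supplies.
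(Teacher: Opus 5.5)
Your proof is correct and is the paper's argument. The paper also splits the points off the row and column of $P$ into parts of sizes at most $a-\rho$ and $b-\sigma$. It then separates $P$ by a product of at most $a$ factors $X-x$ (for the other row points and the first part) and at most $b$ factors $Z-z$ (for the other column points and the second part).
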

\begin{proof}
Otherwise split $T=\Gamma\setminus(\text{row}\cup\text{col})$ into parts of sizes $\le a-\rho$ and $\le b-\sigma$; the
product of $\le a$ column factors (other row points and the first part) and $\le b$ row factors (other column points and
the second part) lies in $\mathcal P_{a,b}$, vanishes on $\Gamma\setminus P$ and not at $P$.
\end{proof}

\begin{lemma}[residuation]\label{kl:lem:RES}
If $\Gamma$ is CB for $(a,b)$ and $0\ne G\in\mathcal P_{k,l}$, $k\le a$, $l\le b$, then $\{P\in\Gamma:G(P)\ne0\}$ is
empty or CB for $(a-k,b-l)$.
\end{lemma}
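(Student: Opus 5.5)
The plan is to prove the residuation lemma directly from the definition of the Cayley--Bacharach property, by multiplying test polynomials by $G$. Put $\Gamma':=\{P\in\Gamma:G(P)\ne0\}$ and assume $\Gamma'\ne\emptyset$. We must show that every $h\in\mathcal P_{a-k,b-l}$ vanishing on $\Gamma'\setminus\{P\}$ also vanishes at $P$, for every $P\in\Gamma'$.

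Fix such $P$ and $h$, and form the product $f:=G\,h$. Bidegrees add under multiplication. Since $G\in\mathcal P_{k,l}$ and $h\in\mathcal P_{a-k,b-l}$, we get $f\in\mathcal P_{a,b}$; this is where the hypotheses $k\le a$ and $l\le b$ are used, so that $\mathcal P_{a-k,b-l}$ makes sense.

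Next check where $f$ vanishes on $\Gamma\setminus\{P\}$. A point $Q\in\Gamma\setminus\Gamma'$ has $G(Q)=0$, so $f(Q)=0$. A point $Q\in\Gamma'\setminus\{P\}$ has $h(Q)=0$, so $f(Q)=0$. Hence $f$ vanishes on all of $\Gamma\setminus\{P\}$. Because $\Gamma$ is CB for $(a,b)$, it follows that $f(P)=0$, i.e.\ $G(P)h(P)=0$. Since $P\in\Gamma'$ means $G(P)\ne0$, we conclude $h(P)=0$, which is the CB property of $\Gamma'$ for $(a-k,b-l)$.

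There is no real obstacle here. The only points to check are the bookkeeping conventions. First, $\mathcal P_{a,b}$ is taken to be closed under multiplication with bidegrees adding, which is immediate from $\F[X,Z]_{\le(a,b)}$. Second, the CB definition quantifies over $P\in\Gamma$ with $\Gamma\setminus\{P\}$ possibly empty. That edge case is harmless: the argument needs no nonvanishing hypothesis beyond $G(P)\ne0$.
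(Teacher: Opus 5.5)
Your proposal is correct and is exactly the paper's argument: multiply $h$ by $G$, note that $Gh\in\mathcal P_{a,b}$ vanishes on $\Gamma\setminus\{P\}$, apply the Cayley--Bacharach property of $\Gamma$, and cancel $G(P)\ne0$. The paper compresses this into the single sentence that $Gh$ vanishes on $\Gamma\setminus P$ whenever $h$ vanishes on the residual set minus $P$.
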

\begin{proof}
If $h$ vanishes on the residual set minus $P$ then $Gh$ vanishes on $\Gamma\setminus P$.
\end{proof}

\begin{lemma}[curve]\label{kl:lem:curve}
If $\Gamma$ is CB for $(a,b)$, rows hold $\le a-k+1$ and columns $\le b-l+1$ of its points, and
$|\Gamma|\le a+b+kl+1$, then some $0\ne G\in\mathcal P_{k,l}$ vanishes on $\Gamma$.
\end{lemma}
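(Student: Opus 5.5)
We must prove Lemma (curve): if $\Gamma$ is CB for $(a,b)$, every row holds at most $a-k+1$ and every column at most $b-l+1$ of its points, and $|\Gamma|\le a+b+kl+1$, then some nonzero $G\in\mathcal P_{k,l}$ vanishes on $\Gamma$.

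The plan is a dimension count combined with residuation. Note that $\dim\mathcal P_{k,l}=(k+1)(l+1)=kl+k+l+1$. Choose a subset $\Gamma_0\subseteq\Gamma$ of size $\min(|\Gamma|,kl+k+l)$. Some nonzero $G\in\mathcal P_{k,l}$ vanishes on $\Gamma_0$, because there are fewer conditions than dimensions. It remains to force $G$ to vanish on the remaining points $T:=\Gamma\setminus\Gamma_0$, of which there are at most $(a+b+kl+1)-(kl+k+l)=(a-k)+(b-l)+1$.

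We argue by contradiction. Suppose the residual set $R=\{P\in\Gamma:G(P)\neq0\}$ is nonempty. By Lemma~\ref{kl:lem:RES} (residuation), $R$ is CB for $(a-k,b-l)$. Since $R\subseteq T$, we have $|R|\le(a-k)+(b-l)+1<(a-k)+(b-l)+2$.

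Now take any $P\in R$ and apply Lemma~\ref{kl:lem:H} (hook separation) with $(a-k,b-l)$. It yields one of three conditions, and each can be ruled out:
\begin{itemize}
\item The row of $P$ contains at least $a-k+1$ other points of $R$. Then that row holds at least $a-k+2$ points of $\Gamma$, contradicting the row hypothesis.
\item The column of $P$ contains at least $b-l+1$ other points of $R$. This is symmetric to the row case and contradicts the column hypothesis.
\item $|R|\ge(a-k)+(b-l)+2$. This contradicts the size bound on $R$.
\end{itemize}
Hence $R=\emptyset$, and $G$ vanishes on all of $\Gamma$.

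Two edge cases need checking. If $k>a$ or $l>b$, Lemma~\ref{kl:lem:RES} does not apply directly. But then the row (or column) bound allows at most $0$ points per row (resp.\ column), so $\Gamma$ is empty and any nonzero $G$ works. When $k=a$ and $l=b$, residuation is applied with $(0,0)$, and the hook lemma still applies as stated.

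The main obstacle I expect is making sure $\Gamma_0$ is chosen so that the count works. The plan above avoids any special choice: it takes an arbitrary $\Gamma_0$ and uses only $|R|\le|T|$. I would still recheck the arithmetic $|\Gamma|-(kl+k+l)\le a+b-k-l+1$, and check whether Lemma~\ref{kl:lem:H} needs $R$ to have a point with the hook structure, which holds for any $P\in R$.
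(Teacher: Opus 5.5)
Your proposal is correct and follows the paper's proof: choose $G\in\mathcal P_{k,l}$ through $\min(|\Gamma|,(k+1)(l+1)-1)$ points, apply residuation to get a CB set for $(a-k,b-l)$, and use hook separation with the row and column bounds to force at least $a+b-k-l+2$ residual points, contradicting $|\Gamma|\le a+b+kl+1$. Your check of the case $k>a$ or $l>b$, where the bounds force $\Gamma=\emptyset$, is a harmless addition that the paper leaves implicit.
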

\begin{proof}
Take $G$ through all of $\Gamma$ if $|\Gamma|<(k+1)(l+1)$, else through $(k+1)(l+1)-1$ of its points; a nonempty residual
set is CB for $(a-k,b-l)$ and by Lemma~\ref{kl:lem:H} has $\ge a+b-k-l+2$ points, so $|\Gamma|\ge a+b+kl+2$.
\end{proof}

\begin{lemma}[graphs]\label{kl:lem:graph}
Let $A_1,B_1\in\F[Z]$ be coprime, $A_1\ne0$, $l=\max(\deg A_1,\deg B_1)\le b+1$. A set $S$ of points of $\{A_1(z)x+B_1(z)=0\}$ (they have
distinct ordinates) has rank $\min(|S|,al+b+1)$ on $\mathcal P_{a,b}$.
\end{lemma}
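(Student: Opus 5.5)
The plan is to reduce the rank computation to a univariate Vandermonde statement. I would parametrise the graph by its ordinate and show that $\mathcal P_{a,b}$, restricted to the graph, is exactly the full space of univariate polynomials of degree $\le al+b$.

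\emph{Points of the graph.} First, every point $(x_s,z_s)$ of $\{A_1(z)x+B_1(z)=0\}$ has $A_1(z_s)\ne0$. Otherwise $B_1(z_s)=0$ as well, contradicting $\gcd(A_1,B_1)=1$. Hence $x_s=-B_1(z_s)/A_1(z_s)$ is determined by $z_s$, and the points of $S$ have distinct ordinates, as stated.

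\emph{The restriction map.} Define $\varphi:\mathcal P_{a,b}\to\F[Z]$ by
\[
\varphi(f)(Z):=A_1(Z)^a\,f\bigl(-B_1(Z)/A_1(Z),Z\bigr).
\]
This is a polynomial: it lies in the span of the products $A_1^{a-i}(-B_1)^iZ^j$ with $i\le a$, $j\le b$, so $\deg\varphi(f)\le al+b$. For every point of $S$ we have $f(x_s,z_s)=\varphi(f)(z_s)/A_1(z_s)^a$, with a nonzero scalar. So the rank of $S$ on $\mathcal P_{a,b}$ equals the rank of the evaluations at the distinct $z_s$ on the image $W:=\varphi(\mathcal P_{a,b})\subseteq\F[Z]_{\le al+b}$.

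\emph{Surjectivity, by a dimension count.} This is the only real step.
\begin{itemize}
\item \emph{The kernel.} $\varphi(f)=0$ means that $X=-B_1/A_1$ is a root of $f$ in $\F(Z)[X]$. Since $A_1X+B_1$ is primitive in $\F[Z][X]$ (coprimality), Gauss's lemma gives $f=(A_1X+B_1)g$ with $g\in\F[X,Z]$ and $\deg_Xg\le a-1$.
\item \emph{Degree of the cofactor.} View the polynomials in $Z$ with coefficients in $\F[X]$. The product of the leading coefficients is nonzero, so $\deg_Zf=l+\deg_Zg$. Thus $\deg_Zg\le b-l$, which is meaningful because $l\le b+1$.
\item \emph{Dimensions.} The kernel is $(A_1X+B_1)\cdot\mathcal P_{a-1,b-l}$, of dimension $a(b-l+1)$ (zero if $l=b+1$). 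Hence
\[
\dim W=(a+1)(b+1)-a(b-l+1)=al+b+1=\dim\F[Z]_{\le al+b},
\]
and so $W=\F[Z]_{\le al+b}$.
\end{itemize}
The main care lies in the kernel description: the Gauss-lemma step and the exact $Z$-degree bookkeeping, which is where the hypothesis $l\le b+1$ enters.

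\emph{Conclusion.} Evaluations at $|S|$ distinct points on $\F[Z]_{\le al+b}$ have rank $\min(|S|,al+b+1)$: a Vandermonde matrix, or equivalently, Lagrange interpolation through any $\min(|S|,al+b+1)$ of the points. This is the claim.
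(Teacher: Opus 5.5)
Your proposal is correct and follows the paper's own argument: the same map $f\mapsto A_1^af(-B_1/A_1,Z)$, the same Gauss-lemma description of its kernel as $(A_1X+B_1)\mathcal P_{a-1,b-l}$, and the same reduction to univariate evaluation at the distinct ordinates. The only difference is that you write out the $Z$-degree bookkeeping and the dimension count for surjectivity, which the paper's one-line proof leaves implicit.
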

\begin{proof}
$\Phi(f)=A_1^af(-B_1/A_1,Z)$ maps $\mathcal P_{a,b}$ onto $\F[Z]_{\le al+b}$: its kernel is $(A_1X+B_1)\mathcal P_{a-1,b-l}$
by Gauss. At points of the curve $A_1(z)\ne0$ and $f(P)=\Phi(f)(z_P)/A_1(z_P)^a$.
\end{proof}

\begin{theorem}[small circuits]\label{kl:thm:CL}
Let $a,b\ge1$, $E^*=\max(\min(a-1,\lfloor\frac{b-1}2\rfloor),\min(b-1,\lfloor\frac{a-1}2\rfloor))$. Every circuit $C$ for
$\mathcal P_{a,b}$ with $|C|\le a+b+2+E^*$ is $a+2$ points of a row, $b+2$ points of a column, a hook ($a+1$ points of a
row and $b+1$ of a column, corner excluded), or $a+b+2$ points of a M\"obius graph; each type is a circuit.
\end{theorem}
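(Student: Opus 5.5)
The plan is to verify the two directions: each of the four listed types is a circuit, and conversely every circuit of size at most $a+b+2+E^*$ is one of them. \emph{Each type is a circuit.} For $a+2$ points of a row, the restriction of $\mathcal P_{a,b}$ to a row $Z=z_0$ is $\F[X]_{\le a}$, whose evaluation functionals at $a+2$ distinct points are dependent while every $a+1$ of them are independent (Vandermonde). Columns are symmetric. For a M\"obius graph I will use Lemma~\ref{kl:lem:graph} with $l=1$: the rank is $\min(|S|,a+b+1)$, so $a+b+2$ points are dependent and every proper subset is independent. The hook needs a direct argument. Dependence follows from a dimension count on the union of a row and a column, where the restriction space has dimension $a+b+1$, applied to $a+b+2$ points. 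For minimality, after removing one point I will exhibit separating polynomials of the form (product of column factors)$\cdot$(product of row factors), in the style of Lemma~\ref{kl:lem:H}.

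\emph{Converse, structural part.} Let $C$ be a circuit with $|C|\le a+b+2+E^*$; then $C$ is CB for $(a,b)$. I will split according to the maximal row occupancy $\rho$ and the maximal column occupancy $\sigma$.
\begin{itemize}
\item If a row holds at least $a+2$ points, those points already form a dependent set, so by minimality $C$ is exactly $a+2$ points of that row. Columns are symmetric.
\item If a row holds exactly $a+1$ points, residuate by that row using Lemma~\ref{kl:lem:RES} with $G=Z-z_0$, so $(k,l)=(0,1)$. The residual is empty or CB for $(a,b-1)$. If it is nonempty, Lemma~\ref{kl:lem:H} applied to it forces either $b$ further points in one column, or total size $\ge a+b+1$. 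The first alternative gives the hook. The second, together with the bound $E^*$ on the excess, should lead to a contradiction via a separating polynomial.
\item Otherwise every row holds at most $a$ points and every column at most $b$ points. Apply Lemma~\ref{kl:lem:curve} with $(k,l)=(1,1)$, which is admissible because $|C|\le a+b+2+E^*$ and $E^*\le\min(a,b)-1$ suffices for $|C|\le a+b+kl+1$ when the excess is small. This gives a nonzero $G$ of bidegree $\le(1,1)$ through $C$.
\end{itemize}

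\emph{Finishing the last case.} If $G$ is irreducible and mixed, it is a M\"obius graph, and Lemma~\ref{kl:lem:graph} says the circuit has exactly $a+b+2$ points on it. If $G$ is pure or reducible, it is a row, a column, or a union of a row and a column (a product of a column factor and a row factor). The pure cases contradict the occupancy bounds. In the reducible case $C$ lies on a hook-shaped union with occupancies at most $a$ and $b$, so $|C|\le a+b$. But a CB set for $(a,b)$ needs at least $a+b+2$ points by Lemma~\ref{kl:lem:H} unless some row or column is overfull, which is a contradiction. The remaining degenerate subcase, an affine graph $c=0$, is already included among the M\"obius graphs.

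\emph{Main obstacle.} The delicate step is the row-with-$a+1$-points branch when the residual has size between $a+b+1$ and the cap. There I must iterate: residuate further, using Lemma~\ref{kl:lem:curve} with $(k,l)=(e,1)$ or $(1,e)$ for excess $e\le E^*$. The role of $E^*$ is exactly to guarantee that the hypotheses $\rho\le a-k+1$ and $|C|\le a+b+kl+1$ of Lemma~\ref{kl:lem:curve} remain satisfiable. I would first try to show that a residual CB set for $(a-e,b-1)$ lying on a graph of bidegree $(e,1)$ is impossible by Lemma~\ref{kl:lem:graph}'s rank count, and check that the two asymmetric terms in the definition of $E^*$ correspond to choosing whether the curve has bidegree $(e,1)$ or $(1,e)$.
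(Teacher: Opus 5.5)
\textbf{Gap in the main case.} Your third case (every row holds at most $a$ points and every column at most $b$) does not go through. Lemma~\ref{kl:lem:curve} with $(k,l)=(1,1)$ requires $|C|\le a+b+kl+1=a+b+2$, i.e.\ the excess $e:=|C|-a-b-2$ must be $0$. The bound $E^*\le\min(a,b)-1$ does not help: for $1\le e\le E^*$ the hypothesis simply fails. Nothing earlier in your plan shows $e=0$ here, since Lemma~\ref{kl:lem:H} only gives $e\ge0$, and excluding $e\ge1$ in this case is the real content of the theorem.

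A route that works is the following.
\begin{itemize}
\item Use the symmetry $X\leftrightarrow Z$, $a\leftrightarrow b$ to reduce to $e\le\min(a-1,\lfloor\frac{b-1}2\rfloor)$.
\item Residuate by a row or column with $t$ points and apply Lemma~\ref{kl:lem:H}; this gives $t\le e+1$.
\item Since $e+1\le b-e$, Lemma~\ref{kl:lem:curve} now applies with $(k,l)=(1,e+1)$; note that $(1,e)$ falls one point short. It gives $0\ne G=A(Z)X+B(Z)$ with $\deg A,\deg B\le e+1$.
\item If $A=0$, then $C$ lies on at most $e+1$ rows. Residuating down to one of them would require more than $e+1$ points on it, a contradiction.
\item Otherwise write $G=g(Z)(A_1X+B_1)$ with $\gcd(A_1,B_1)=1$. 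The points off the graph $A_1X+B_1=0$ lie on rows of $g$. Residuation by the graph factor again forces more than $e+1$ points on one such row, so $C$ lies on the graph.
\item Lemma~\ref{kl:lem:graph} then gives $|C|=al_1+b+2$ with $l_1\ge1$ (otherwise $C$ lies on a column). The bound $|C|\le 2a+b+1$, which follows from $e\le a-1$, forces $l_1=1$. So $C$ is on a M\"obius graph, and $e=0$ after all.
\end{itemize}
This is where the asymmetric form of $E^*$ is used. The term $\lfloor\frac{b-1}2\rfloor$ secures the column bound $b-l+1$ of Lemma~\ref{kl:lem:curve}, and the term $a-1$ forces $l_1=1$.

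\textbf{The branch you flag is vacuous.} After removing a row with $a+1$ points, the residual has $b+1+e\le a+b$ points, because $E^*\le a-1$. So the alternative ``at least $a+b+1$ points'' of Lemma~\ref{kl:lem:H} never occurs, and no separating polynomial or iterated residuation is needed there. What that branch does need is a count you left implicit. Every residual point lies on a column carrying $b+1$ residual points, since no column carries more than $b+1$ points of $C$. So the residual is $j$ such columns, with $j(b+1)=b+1+e<2(b+1)$. Hence $j=1$ and $e=0$, the corner is absent, and $C$ is a hook. You should also state the case of a column with exactly $b+1$ points and all rows of size at most $a$. It follows by symmetry, or directly from Lemma~\ref{kl:lem:H} applied to the residual, which would force $e\ge b$.
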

\begin{proof}
By symmetry let $e=|C|-a-b-2\le\min(a-1,\lfloor\frac{b-1}2\rfloor)$. Rows (columns) with $a+2$ ($b+2$) points are
dependent, so otherwise rows hold $\le a+1$, columns $\le b+1$, and Lemma~\ref{kl:lem:H} gives $e\ge0$.
\emph{A row $R_0$ with $a+1$ points:} $C\setminus R_0$ is CB for $(a,b-1)$ with $b+1+e$ points; Lemma~\ref{kl:lem:H} puts
each of its points on a column holding $b+1$ of them, so one column ($j(b+1)<2(b+1)$), $e=0$, and $C$ is a hook.
\emph{Rows $\le a$, a column with $b+1$:} its complement is CB for $(a-1,b)$ and has $\ge a+b+1$ points, so $e\ge b$,
impossible. \emph{Rows $\le a$, columns $\le b$:} removing a row (column) with $t$ points shows $t\le e+1$;
Lemma~\ref{kl:lem:curve} with $(k,l)=(1,e+1)$ gives $0\ne G=A(Z)X+B(Z)$ through $C$, $\deg A,\deg B\le e+1$. If $A=0$, $C$
lies on $\le e+1$ rows and residuation leaves one row CB for $(a,b-j+1)$ with $\ge a+b-j+3>e+1$ points, a
contradiction. If $A\ne0$, write $G=g\cdot(A_1X+B_1)$ with $\gcd(A_1,B_1)=1$; points off $A_1X+B_1=0$ lie on rows of $g$
and residuation forces $>e+1$ points on such a row, so $C$ lies on the graph; $l_1\ge1$ (else a column), and by
Lemma~\ref{kl:lem:graph} $|C|=al_1+b+2\le2a+b+1$, so $l_1=1$. Conversely, rows, columns and M\"obius sets are circuits by
Lemma~\ref{kl:lem:graph}; a hook is dependent (its row-plus-column restriction has dimension $a+b+1$) and minimal (after
deleting a point, each remaining point is separated by $\le a$ column and $\le b$ row factors).
\end{proof}

For a view $V$ let $R^*(V)$ be the rows holding $\ge n$ points of $\mathrm{Pts}(V)$, $r(V)=|R^*(V)|$, $\mathrm{Pts}''(V)$ the points off them,
and
\[
W_1(V):=D_V\cup\bigcup_{b\in R^*(V)}F_b .
\]
$W_1$ is monotone, depends on positions only, and $|W_1(V)|\le|\mathrm{Pts}(V)|\le|V|$ (each counted row has $\ge n=|F_b|$ points).

\begin{lemma}[locator criterion]\label{kl:lem:LC}
Let $R^*\subseteq B$, $r=|R^*|\le n'-1$, $W=D_V\cup\bigcup_{b\in R^*}F_b$, and split the points of $\mathrm{Pts}(V)$ off $R^*$ into
$y$-lift points $Y''$, $r$-lift points $R''$ and domain points $\mathrm{Dm}''$. If every circuit of these points for
$\mathcal P_{n-1,n'-1-r}$ lies in $Y''\cup\mathrm{Dm}''$ or in $R''\cup\mathrm{Dm}''$, then $W$ is a locator for $V$.
\end{lemma}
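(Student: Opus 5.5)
The approach is to pass to the quotient of $(U\otimes U')^*$ by the row functionals of $R^*$, where the off-$R^*$ points become evaluation functionals on $\mathcal P_{n-1,n'-1-r}$. There a matroid decomposition of dependencies into circuits applies directly, and the hypothesis on circuits controls the $y$-part of each piece.

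\textbf{Reductions.} As in the proof of Theorem~\ref{kl:thm:KL}(b), input reads add only $\spn\{\ev_x:x\in G\}$ with $G\subseteq D_V\subseteq W$. Lemma~\ref{kl:lem:shared} removes the $r$-side reads at shared abscissae, at the cost of $\spn\ev(\mathrm{Sh})$ with $\mathrm{Sh}\subseteq D_V$. So I assume $G=\emptyset$ and disjoint sides. By Lemma~\ref{kl:lem:leak}, an element of $\Lambda(V)$ is $\ell=\sum_{\pi\in P_y}M(\pi)\ev_\pi$ for a dependency $M$ supported on $\mathrm{Pts}(V)$, that is, $\sum_\pi M(\pi)\,v(s_\pi)\otimes w(b_\pi)=0$ (Lemma~\ref{kl:lem:tensor}).

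\textbf{Quotient by the heavy rows.} Let $\mathcal R:=U^*\otimes\spn\{w(b):b\in R^*\}$; by Lemma~\ref{kl:lem:cheap} this equals $\spn\{\ev_x:x\in F_b,\ b\in R^*\}\subseteq\spn\ev(W)$. The annihilator of $\mathcal R$ in $U\otimes U'$ is $U\otimes Z_{R^*}\F[Z]_{<n'-r}$, where $Z_{R^*}=\prod_{b\in R^*}(Z-b)$; this space is nonzero because $r\le n'-1$. On this annihilator, the functional of a point $(s,b)$ with $b\notin R^*$ acts on $Z_{R^*}(Z)h(X,Z)$ as $Z_{R^*}(b)\,h(s,b)$, with $Z_{R^*}(b)\ne0$. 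Hence, modulo $\mathcal R$, the functionals of the off-$R^*$ points are nonzero rescalings of evaluation functionals on $\mathcal P_{n-1,n'-1-r}$ at the same points.

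Domain points $(x,x^n)$ are ordinary points in this picture, so I see no subtlety here. I expect this identification to be the only step that needs care: checking that the rescaled functionals span exactly the dual of $\mathcal P_{n-1,n'-1-r}$ and that "dependency modulo $\mathcal R$" means the same as "dependency for $\mathcal P_{n-1,n'-1-r}$".

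\textbf{Circuit decomposition.} Restrict $M$ to the points off $R^*$, call the restriction $M''$, and rescale each coordinate by $Z_{R^*}(b)$. The points of $M$ on $R^*$ rows contribute elements of $\mathcal R$, so $M''$ is a linear dependency for $\mathcal P_{n-1,n'-1-r}$. By the standard matroid fact, every dependency vector is a sum of circuit vectors whose supports lie inside its support; I prove this by induction on the support, subtracting a multiple of a circuit vector that kills one coordinate. So $M''=\sum_jc_j$ with each $c_j$ a circuit dependency.

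For each $j$ I split the contribution of $c_j$ to the $y$-part. If $\operatorname{supp}c_j\subseteq R''\cup\mathrm{Dm}''$, its $y$-points are $y$-domain points, so its contribution lies in $\spn\ev(D_V)$. If $\operatorname{supp}c_j\subseteq Y''\cup\mathrm{Dm}''$, then, since $c_j$ is a dependency modulo $\mathcal R$, its $y$-part is congruent modulo $\mathcal R$ to minus its $r$-part. That $r$-part consists only of $r$-domain points $\ev_x$ with $x\in D_V$. By hypothesis every circuit is of one of these two kinds.

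\textbf{Conclusion.} Summing over $j$ and adding back the on-$R^*$ terms, which lie in $\mathcal R$, gives
\[
\ell\in\spn\ev(D_V)+\mathcal R\subseteq\spn\{\ev_x:x\in W\}.
\]
Hence $W$ is a locator for $V$.
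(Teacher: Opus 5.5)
Your proof is correct and is the dual form of the paper's argument. The paper fixes $f\in\mathcal P_{n-1,n'-1}$ vanishing at $\hat x$ for all $x\in W$ and writes $f=\Pi g$ with $\Pi=\prod_{b\in R^*}(Z-b)$ and $g\in\mathcal P_{n-1,n'-1-r}$, which corresponds to your quotient by $\mathcal R$. Using the same fact that circuits span all dependencies, it interpolates some $h\in\mathcal P_{n-1,n'-1-r}$ equal to $0$ on $Y''\cup\mathrm{Dm}''$ and to $g$ on $R''$. It then splits $f=\Pi h+\Pi(g-h)$ into a part vanishing at all $y$-points and a part vanishing at all $r$-points, so every $\ell\in\spn\Phi\cap\spn\Psi$ kills $f$.

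Working on the primal side lets the paper skip your reductions to no input reads and disjoint sides, since $f$ already vanishes on all of $\hat D_V$. Your version has the compensating advantage that it exhibits each $\ell$ explicitly in $\spn\{\ev_x:x\in D_V\}+\mathcal R$.
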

\begin{proof}
Let $f\in\mathcal P_{n-1,n'-1}$ vanish at $\hat x$ for all $x\in W$; we show $\ell(f)=0$ for $\ell\in\Lambda(V)$. For
$b\in R^*$, $f(X,b)$ has $n$ roots and degree $<n$, so $f=\Pi g$ with $\Pi=\prod_{R^*}(Z-b)$ and
$g\in\mathcal P':=\mathcal P_{n-1,n'-1-r}$. Some $h\in\mathcal P'$ equals $0$ on $Y''\cup\mathrm{Dm}''$ and $g$ on $R''$:
the data are orthogonal to every circuit (trivially for circuits in $Y''\cup\mathrm{Dm}''$; for circuits in
$R''\cup\mathrm{Dm}''$ because $g=f/\Pi$ vanishes on $\mathrm{Dm}''\subseteq\hat D_V$), and circuits span dependencies.
Then $f_y=\Pi h$ vanishes at every $y$-point, $f_r=f-f_y$ at every $r$-point, and $\ell(f)=\ell(f_y)+\ell(f_r)=0$.
\end{proof}

\begin{theorem}[small views]\label{kl:thm:small}
On every node: (i) if $V$ touches $\le n'$ rows, or $|V|\le n+n'-1$, then $W_1(V)$ is a locator; (ii) if $n\ge2$,
$r=r(V)\le n'-2$,
\[
|\mathrm{Pts}''(V)|\le n+n'-r+E^*(n-1,n'-1-r)\tag{LP}
\]
and no M\"obius graph contains $n+n'-r$ points of $\mathrm{Pts}''(V)$, then $W_1(V)$ is a locator.
\end{theorem}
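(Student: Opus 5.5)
Both parts will go through the locator criterion (Lemma~\ref{kl:lem:LC}) with $R^*=R^*(V)$ and $W=W_1(V)$. First I would reduce, as in the proof of Theorem~\ref{kl:thm:KL}(b), to views with no input reads and disjoint sides, using Lemma~\ref{kl:lem:shared}; $W_1$ is monotone and contains $D_V$, so this is harmless. Every point of $\mathrm{Pts}''(V)$ lies on a row holding at most $n-1$ points. The task is then to show that every circuit $C\subseteq\mathrm{Pts}''(V)$ for $\mathcal P':=\mathcal P_{n-1,n'-1-r}$ lies on one side (with domain points allowed on either side). To do this I would use the classification of small circuits in Theorem~\ref{kl:thm:CL} with $(a,b)=(n-1,n'-1-r)$, which covers circuits of size at most $a+b+2+E^*=n+n'-r+E^*$.

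For (ii), (LP) bounds $|C|\le|\mathrm{Pts}''(V)|$, so $C$ is of one of the four types. I would exclude each type. A row type would need $a+2=n+1$ points on a row, but rows of $\mathrm{Pts}''$ hold at most $n-1$ points. A hook would need $a+1=n$ points on a row, which is also impossible. The Möbius type is excluded directly by hypothesis, since it would need $a+b+2=n+n'-r$ points on a graph. A column type has $b+2=n'+1-r$ points in one column $s$. A lift column carries points of only one side, and a domain column carries only the single point $\hat x_0$, so such a circuit is automatically one-sided. Hence every circuit is one-sided and Lemma~\ref{kl:lem:LC} applies. The hypotheses $n\ge2$ and $r\le n'-2$ ensure $a,b\ge1$, as Theorem~\ref{kl:thm:CL} requires.

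For (i) the argument is the same, but the size bound is replaced by a direct one. If $|V|\le n+n'-1$, I would take $r=r(V)$ and note that $|\mathrm{Pts}''|\le n+n'-1-rn\le a+b+1+\dots$, so every circuit has at most $a+b+2$ points, and Lemma~\ref{kl:lem:H} with Theorem~\ref{kl:thm:CL} (case $e=0$) leaves only rows, columns, hooks and Möbius graphs. Rows and hooks are excluded as before and columns are one-sided. A Möbius circuit needs $n+n'-r$ points, which exceeds $|\mathrm{Pts}''|\le n+n'-1-rn$ when $r\ge1$. When $r=0$ it needs $n+n'>|V|$ points, so it cannot occur either. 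If instead $V$ touches at most $n'$ rows, then in $\mathcal P'$ the points lie on at most $n'-r$ rows beyond $R^*$, each with at most $n-1$ points, and every row is one of at most $b+1$ rows. I would argue that the restriction of $\mathcal P'$ to those rows is surjective, by interpolation in $Z$ of degree $\le b$ at at most $b+1$ ordinates, with rows independent within degree $<n$. In that case there are no circuits at all, and the criterion holds vacuously.

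The main obstacle I expect is making this last row-count step precise. I need to confirm that the points off $R^*$ lie on at most $n'-r$ distinct rows, which requires the rows of $R^*$ to be among the $\le n'$ touched rows, and that $r\le n'-1$ as Lemma~\ref{kl:lem:LC} requires. I also need to handle the edge case $r=n'$, where $W_1$ contains $n'$ full fibres spanning $C^*$ by Lemma~\ref{kl:lem:cheap}.
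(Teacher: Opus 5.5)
Your argument for (ii), and for the clause $|V|\le n+n'-1$ of (i), is the paper's. Light rows exclude row and hook circuits, the hypothesis (resp.\ the size bound) excludes M\"obius circuits, and column circuits are one-sided because a lift column carries one side and a domain column a single point; Lemma~\ref{kl:lem:LC} then finishes.

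For the clause ``$V$ touches at most $n'$ rows'' you take a different route, which is also correct. You show Lemma~\ref{kl:lem:LC} holds vacuously: the points off $R^*$ lie on at most $n'-r$ light rows, so they impose independent conditions on $\mathcal P_{n-1,n'-1-r}$ (interpolate in $X$ on each row, then Lagrange in $Z$). The case $r=n'$ is settled by the $n'$ fibres spanning $C^*$.

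The paper argues directly. The $w(b_j)$ of at most $n'$ distinct rows are independent, so the spans split as $\Lambda(V)=\bigoplus_j(\Phi_j\cap\Psi_j)\otimes w(b_j)$. Heavy rows are paid by their fibres (Lemma~\ref{kl:lem:cheap}). On a light row the abscissae are independent in $U^*$, so only points read on both sides contribute, and these are domain points of $D_V$. This needs no bound on $r$ and no edge case, shows the leak is row-local, and is what the paper uses for $n=1$ in the other clause. Your route keeps every case inside one criterion.

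Three corrections follow.

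First, skip the preliminary reduction. For (ii), proving that $W_1(V_0)$ locates $V_0$ as in Theorem~\ref{kl:thm:KL}(b) is not harmless. Delete an input read $g(x)$ with $y(x),r(x)\notin V$ from a row carrying exactly $n$ points. That row leaves $R^*$, so $r$ drops by one and the right side of (LP) grows by at most $2$, while $n-1$ points join $\mathrm{Pts}''$. So (LP) and the M\"obius hypothesis need not pass to $V_0$. Lemma~\ref{kl:lem:LC} already handles input-read and doubly read domain points as points of $\mathrm{Dm}''$, so apply it to $V$ itself.

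Second, in the clause $|V|\le n+n'-1$ your bound is really $|\mathrm{Pts}''|\le n+n'-1-rn\le a+b+1$. So Lemma~\ref{kl:lem:H} alone gives column circuits, and you do not need Theorem~\ref{kl:thm:CL}. That theorem needs $a,b\ge1$, which fails for $n=1$ (allowed by (N1) when $n'\le K$) and for $(n,n',r)=(2,3,2)$, although $\mathrm{Pts}''$ is empty in those cases.

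Third, the bound $r\le n'-1$ required by Lemma~\ref{kl:lem:LC} must be checked there. For $n\ge2$ it follows from $rn\le n+n'-1$ and $n'\ge3$. The case $r=n'$ occurs only for $n=1$, where your fibre remark applies.
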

\begin{proof}
(i) Let $V$ touch the rows $b_1,\dots,b_k$, $k\le n'$. The $w(b_j)$ are linearly independent on $U'$ (distinct points,
$k\le n'=\dim U'$), so $U^*\otimes\spn\{w(b_j)\}_j=\bigoplus_jU^*\otimes w(b_j)$, and the functional of each point of
row $b_j$ lies in the summand $U^*\otimes w(b_j)$ (Lemma~\ref{kl:lem:tensor}). Hence
$\spn\Phi=\bigoplus_j\Phi_j\otimes w(b_j)$ and $\spn\Psi=\bigoplus_j\Psi_j\otimes w(b_j)$, where $\Phi_j$ ($\Psi_j$) is
spanned by the $v(s)$ of the $y$-side ($r$-side) points of row $b_j$, and the leak splits row by row:
$\Lambda(V)=\bigoplus_j(\Phi_j\cap\Psi_j)\otimes w(b_j)$ (plus the $\ev_x$, $x\in G\subseteq D_V$, of input reads). A row
with $\ge n$ points lies in $R^*(V)$, and its part lies in $U^*\otimes w(b_j)=\spn\{\ev_x:x\in F_{b_j}\}$
(Lemma~\ref{kl:lem:cheap}). A row with $\le n-1$ points has distinct abscissae, whose $v(s)$ are linearly independent in
$U^*$, so $\Phi_j\cap\Psi_j$ is spanned by the $v(s)$ of the points read on both sides; as $A_y\cap A_r=\emptyset$,
these are domain points $\hat x$ with $y(x)$ and $r(x)$ both read, so $x\in D_V$ and $v(x)\otimes w(b_j)=\ev_x$. Thus
$W_1(V)$ is a locator. Now let $|V|\le n+n'-1$. If $n=1$, $V$ touches at most $|\mathrm{Pts}(V)|\le|V|\le n'$ rows and
the first case applies. If $n\ge2$, then $n'\ge3$, so $rn\le|V|\le n+n'-1$ gives $r\le n'-1$, and
$|\mathrm{Pts}''(V)|\le(n-1)+(n'-1-r)+1$, so by Lemma~\ref{kl:lem:H} every circuit is a column
circuit, which is one-sided (a domain column carries one point); apply Lemma~\ref{kl:lem:LC}. (ii) By
Theorem~\ref{kl:thm:CL} the circuits of $\mathrm{Pts}''(V)$ are column circuits (one-sided) or M\"obius circuits (excluded); light
rows hold $\le n-1$ points, so there are no row circuits or hooks; apply Lemma~\ref{kl:lem:LC}.
\end{proof}

\begin{theorem}[M\"obius incidences]\label{kl:thm:mob}
On a node with $n\ge2$, every M\"obius graph contains at most $n+1+L_1L_2M(m)$ points of
$(A\times B)\cup\{\hat x:x\in D\}$, where $M(m)=\max(3\cdot2^{2/3}m^{1/3},12m/p)$.
\end{theorem}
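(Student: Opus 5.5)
The bound splits the point set into two disjoint parts, since $A\cap D=\emptyset$: grid points in $A\times B$, and domain points $\hat x=(x,x^n)$ with $x\in D$. The plan is to bound the domain points by $n+1$ with an elementary root count, and the grid points by $L_1L_2M(m)$ by applying Lemma~\ref{kl:lem:CZ} directly to each coset pair. The direct application, rather than Lemma~\ref{kl:lem:pair} as stated, is what produces the term $12m/p$ in place of $12ab'$.

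\emph{Setup.} Let $F=cXZ+dX+eZ+f$ with $cf\ne de$.
\begin{itemize}
\item If $c=0$, then $de\ne0$, so in every case $F$ is absolutely irreducible of bidegree exactly $(1,1)$, i.e.\ mixed in the sense of Section~\ref{kl:sec:cz}.
\item Its zero set is the graph of the M\"obius map $z=-(dx+f)/(cx+e)$, a smooth rational curve. Hence $\klX_F\cong\mathbb P^1$, of genus $0$.
\item Distinct finite points of the curve are distinct places, and $o_x,o_y\ge1$ at each of them. So the number of points of $A\times B$ on the curve is at most $\Gamma_{A\times B}(F)$.
\end{itemize}

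\emph{Domain points.} A point $\hat x$ lies on the curve iff $x$ is a root of
\[
h(X):=(cX+e)X^n+dX+f .
\]
I check that $h\ne0$.
\begin{itemize}
\item If $c\ne0$, the coefficient of $X^{n+1}$ is $c\ne0$.
\item If $c=0$, the coefficient of $X^n$ is $e\ne0$, and this term cannot cancel against $dX+f$ because $n\ge2$.
\end{itemize}
Since $\deg h\le n+1$, at most $n+1$ domain points lie on the graph.

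\emph{Grid points.} By (N4) and Lemma~\ref{kl:lem:rows}, $A\times B$ is a disjoint union of $L_1L_2$ coset pairs $e'\mu_n\times f'\mu_{n'}$. For one pair, I follow the transfer and setup of Lemma~\ref{kl:lem:pair}: put $u=(x/e')^n$ and $v=(y/f')^{n'}$, so that $\Gamma$ is bounded by the left side of Lemma~\ref{kl:lem:CZ}.
\begin{itemize}
\item \emph{Independent case.} Here $\deg u\deg v=(n\cdot1)(n'\cdot1)=m$. The zeros and poles of $x$ and $y$ form at most $4$ points, and $g=0$, so $\chi\le 2$. Lemma~\ref{kl:lem:CZ} then gives
\[
\max\bigl(3\sqrt[3]2\,(2m)^{1/3},\,12m/p\bigr)=\max\bigl(3\cdot2^{2/3}m^{1/3},\,12m/p\bigr)=M(m).
\]
\item \emph{Dependent case.} The dependent-case argument of Lemma~\ref{kl:lem:pair} with $a=b'=1$ gives $\Gamma\le ab'=1\le M(m)$.
\end{itemize}
Summing over the $L_1L_2$ pairs bounds the grid points by $L_1L_2M(m)$, and adding the $n+1$ domain points gives the stated bound.

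\emph{Main obstacle.} The step I expect to need the most care is checking that the differential hypothesis $du,dv\ne0$ holds, and that $p\nmid n,n'$ gives simple roots. Both follow from $m<p$ in (N1), as in Lemma~\ref{kl:lem:pair}; the nondegeneracy $dx\ne0$ holds because $x$ has degree $1$ on the rational curve.
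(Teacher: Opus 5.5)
Your proposal is correct and follows the paper's proof. Both arguments make the same split: at most $n+1$ domain points, counted as roots of the nonzero polynomial $(cX+e)X^n+dX+f$ (you also supply the check, using $n\ge2$, that it is nonzero), and at most $M(m)$ grid points per coset pair, where monomial maps contribute at most one point per pair by $\gcd(n,n')=1$ and all other maps are handled by a direct application of Lemma~\ref{kl:lem:CZ} with $\deg u\deg v=m$ and $\chi\le2$. The only difference is cosmetic. The paper parametrises by $s$ on $\mathbb P^1$ and notes $[\kappa(s):\kappa(u,v)]\mid\gcd(n,n')=1$, so the descent remark after Lemma~\ref{kl:lem:CZ} is not needed; you use that lemma in its stated generality, which is equally valid.
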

\begin{proof}
\emph{Domain points:} $x^n(cx+e)+dx+f=0$ is a nonzero equation of degree $\le n+1$. \emph{Monomial maps
$\lambda x^{\pm1}$:} for each row coset the admissible $a$ form one $\mu_{n'}$-coset, which meets each of the $L_1$
column cosets (cosets of $\mu_n$) in $\le1$ point, as $\gcd(n,n')=1$: $\le L_1L_2\le L_1L_2M(m)$ in all.
\emph{Other maps:} for a column coset $e\mu_n$ and a row coset $\beta\mu_{n'}$, apply Lemma~\ref{kl:lem:CZ} on
$\mathbb P^1_s$ to $u=s^n$, $v=\psi'(s)^{n'}$, $\psi'(s)=\beta^{-1}\psi(es)$, where $\psi$ is the (non-monomial) M\"obius
map of the graph: $\deg u=n$, $\deg v=n'$, $\chi\le2$, $du,dv\ne0$, independence because $\psi'$ is not monomial, and $[\kappa(s):\kappa(u,v)]$ divides
$\gcd(n,n')=1$; this gives $\le M(m)$ points per pair.
\end{proof}

Put $C_L:=L_1L_2\cdot3\cdot2^{2/3}$ and $m_0:=(C_L+1)^6$. On a node with $n\ge2$ and $m\ge m_0$ we have
$M(m)=3\cdot2^{2/3}m^{1/3}$ (as $12m/p<12\le3\cdot2^{2/3}m^{1/3}$), so Theorem~\ref{kl:thm:mob} excludes M\"obius graphs
with $n+n'-r$ available points for every $r\le r_{\max}:=n'-2-\lceil C_L\,m^{1/3}\rceil$; and as
$n'>\sqrt m\ge(C_L+1)m^{1/3}$, $r_{\max}>m^{1/3}-3>3$. For $(L_1,L_2)=(14,8)$, $C_L=533.366\ldots$ and
$m_0\approx2.33\cdot10^{16}$, so $m\ge m_0$ holds once $m_1\ge1.53\cdot10^8$ (as $m>m_1^2$). Hence on nodes with $m\ge m_0$ and
$n'<2n$ (so $n>\sqrt{m/2}\ge2$), $W_1$ locates every view with at most $N_0:=n+\lfloor3n'/2\rfloor-1$ reads, and every
view satisfying (LP) with $r(V)\le r_{\max}$.

\begin{proposition}[combined locator]\label{kl:prop:natural}
Let $N_\nu=n+n'-1$, or $N_\nu=N_0$ if $m\ge m_0$ and $n'<2n$; let $c$ be admissible and $\tau=\lfloor cn\rfloor$; put
\[
W^\natural(V):=D_V\cup\bigcup_{b\in R^*(V)\cup H_\tau(V)}F_b\ \cup\
\bigl(d_0\mu_m\text{ if }|V|\ge\max((\tau+1)^2,N_\nu+1)\bigr).
\]
Then $W^\natural$ is monotone, position-only, $\poly(|\F|)$-time, a locator for every view, of size
$\le(1+\frac1{1-c})|V|$ when $|V|\le\max((\tau+1)^2-1,N_\nu)$ and $\le a(c)|V|$ always. Moreover
$W^\natural\subseteq W\cup W_1$, and $W\cup W_1$ is a monotone locator of size $\le(a(c)+1)|V|$.
\end{proposition}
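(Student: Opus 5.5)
The plan is to verify the four parts of Proposition~\ref{kl:prop:natural} in turn: monotonicity and cost, the locator property, the size bounds, and containment in $W\cup W_1$. Each part reduces to results already proved: Theorem~\ref{kl:thm:KL} for $W$, Theorem~\ref{kl:thm:small} together with the discussion after Theorem~\ref{kl:thm:mob} for $W_1$, and Lemma~\ref{kl:lem:rows} for the size count.

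\emph{Monotonicity and cost.} Every ingredient of $W^\natural$ is monotone in $V$ and depends only on positions: $D_V$, the row sets $R^*(V)$ and $H_\tau(V)$ (defined by row counts of $\mathrm{Pts}(V)$), and the indicator $[|V|\ge\max((\tau+1)^2,N_\nu+1)]$. Computing them needs only the row counts, the fibres and $d_0\mu_m$, exactly as in Theorem~\ref{kl:thm:KL}(a), so the running time is $\poly(|\F|)$.

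\emph{Locator property.} I split on $|V|$.
\begin{itemize}
\item If $|V|\ge\max((\tau+1)^2,N_\nu+1)$, the coset $d_0\mu_m$ is included. Its evaluations form a basis of $C^*$, so $W^\natural(V)$ locates $V$ trivially.
\item If $|V|\le N_\nu$, then $W_1(V)\subseteq W^\natural(V)$, because $R^*(V)$ is part of the union. $W_1(V)$ is a locator:
  \begin{itemize}
  \item for $N_\nu=n+n'-1$ this is Theorem~\ref{kl:thm:small}(i);
  \item for $N_\nu=N_0$ (the case $m\ge m_0$, $n'<2n$) it is the consequence of Theorem~\ref{kl:thm:small}(ii) and Theorem~\ref{kl:thm:mob} stated after Theorem~\ref{kl:thm:mob}.
  \end{itemize}
  A superset of a locator is again a locator, since the spanning condition is monotone in $W$.
\item If $|V|\le(\tau+1)^2-1$, then $W(V)=D_V\cup\bigcup_{b\in H_\tau(V)}F_b\subseteq W^\natural(V)$, and $W(V)$ is a locator by Theorem~\ref{kl:thm:KL}(b).
\end{itemize}
These cases cover every $V$.

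\emph{Size bounds.} Below the threshold there is no coset term, and I would count as follows.
\begin{itemize}
\item $|D_V|\le|V|$.
\item The rows in $R^*(V)\cup H_\tau(V)$ are distinct, so their fibres are disjoint, and each such row carries at least $n+1-\tau>(1-c)n$ points of $V$ (and $\ge n$ for $R^*$). Since $|F_b|=n$ by Lemma~\ref{kl:lem:rows}, the fibres contribute at most $|V|/(1-c)$.
\end{itemize}
This gives $(1+\frac1{1-c})|V|$. When the coset is added, $|V|\ge(\tau+1)^2>c^2n^2$, so $m\le Kn^2<(K/c^2)|V|$, and the total is at most $a(c)|V|$.

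\emph{Containment.} Each piece of $W^\natural$ lies in $W$ or in $W_1$: $D_V$ and the $R^*$ fibres lie in $W_1$, the $H_\tau$ fibres lie in $W$, and the coset is added only when $|V|\ge(\tau+1)^2$, in which case it lies in $W$. Hence $W^\natural\subseteq W\cup W_1$. The union $W\cup W_1$ is monotone because both parts are, and it is a locator because it contains the locator $W$. Its size is at most $a(c)|V|+|V|$, using $|W_1(V)|\le|V|$.

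The only delicate step is that the $R^*$ and $H_\tau$ fibres are not double-counted, which the disjointness of fibres settles.
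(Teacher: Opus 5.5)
Your proposal is correct and follows the paper's proof essentially step by step: the same three-way case split for the locator property (the coset branch, $W_1$ when $|V|\le N_\nu$, and $W$ when $|V|<(\tau+1)^2$), the same charging of each fibre to the more than $(1-c)n$ points on its row, and the same piece-by-piece containment in $W\cup W_1$. The only wording to tighten is that a row of $R^*(V)$ is guaranteed only $n$ points, which is fewer than $n+1-\tau$ when $\tau=0$, so the per-row bound to state is $\min(n,n+1-\tau)>(1-c)n$; your parenthetical effectively says this.
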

\begin{proof}
Monotonicity is that of each piece. If $|V|<(\tau+1)^2$, $W^\natural\supseteq W(V)$ (Theorem~\ref{kl:thm:KL}); if
$|V|\le N_\nu$, $W^\natural\supseteq W_1(V)$ (Theorem~\ref{kl:thm:small} and the paragraph above); otherwise
$W^\natural\supseteq d_0\mu_m$. Each counted row carries $\ge\min(n,n+1-\tau)>n(1-c)$ points; the coset appears only
when $|V|\ge(\tau+1)^2$, where $m\le(K/c^2)|V|$. Finally, the fibres $F_b$ with $b\in R^*(V)$ lie in $W_1(V)$, those
with $b\in H_\tau(V)$ lie in $W(V)$, and the coset term of $W^\natural$ appears only when $|V|\ge(\tau+1)^2$, when
$d_0\mu_m\subseteq W(V)$; so
$W^\natural\subseteq W\cup W_1$. The map $W_1$ is defined on all views and is monotone and position-only, as is $W$; so
$W\cup W_1$ is monotone and position-only, it is a locator because it contains $W$ (Theorem~\ref{kl:thm:KL}), and
$|W\cup W_1|\le|W|+|W_1|\le(a(c)+1)|V|$.
\end{proof}

\endgroup

\begingroup
\providecommand{\Sim}{\mathsf{Sim}}
\providecommand{\View}{\mathrm{View}}
\providecommand{\UNSAT}{\mathrm{UNSAT}}
\providecommand{\SAT}{\mathrm{SAT}}
\providecommand{\size}{\mathrm{size}}
\providecommand{\poly}{\mathrm{poly}}
\providecommand{\rdist}{\mathrm{rdist}}
\providecommand{\cit}{c_{\mathrm{it}}}
\providecommand{\Mem}{\mathsf{Mem}}
\providecommand{\Pcan}{\mathcal P_{\mathrm{can}}}
\providecommand{\Lcan}{L'_{\mathrm{can}}}

\section{Zero-knowledge-preserving gap amplification}\label{sec:thmD}\label{thmD:sec}

This section proves Theorem~\ref{thmD:thm}, the zero-knowledge-preserving gap amplification that
Section~\ref{sec:main} uses (as Theorem~\ref{main:thm:D}) to reduce the polylogarithmic query count of the shell PCP to
a constant. Section~\ref{thmD:sec:defs} gives the definitions it needs, Section~\ref{thmD:sec:inst} fixes Dinur's
instantiation with the canonical folding of the long-code tester, and Sections~\ref{thmD:sec:stmt}
and~\ref{thmD:sec:proof} state and prove the theorem. Section~\ref{thmD:sec:cor} packages it as a black-box
corollary, which Section~\ref{sec:main} does not use.

\paragraph{Credit.} Gap amplification and all of its components (preprocessing, powering, composition with an
assignment tester, the long-code assignment tester and its test) are as in Dinur \cite{Din07}, whose composition
uses the modular notion of composition of \cite{BGHSV06,DR06} and whose long-code test is a reworking of
H{\aa}stad's \cite{Has01}; expander replacement
is due to Papadimitriou and Yannakakis \cite{PY91}. The long code and folding go back to Bellare, Goldreich and
Sudan \cite{BGS98} and H{\aa}stad \cite{Has01}. The local-computability lemma is \cite[Def.~3.1,
Lemma~3.2]{GOS25}. The observation that Dinur's amplification is locally computable and therefore preserves
zero knowledge is sketched in \cite[App.~A, Lemma~A.1]{ZOGS26}. The Tseitin encoding is
\cite{Tse68}. The straight-line convention adapts \cite[Def.~3.5]{VW26}. Theorem~\ref{thmD:thm} is a rigorous
write-up of Dinur's construction in which we correct six points of the printed presentation of \cite{Din07}:
the domination step in the soundness proofs of \cite[Lemmas~1.8 and~7.2]{Din07} (Remark~\ref{thmD:rem:dom}), the
folding of the long-code tester \cite[\S7]{Din07} (Definition~\ref{thmD:def:tester}, via the canonical folding of
Lemma~\ref{thmD:lem:F}), the case $\delta>\frac12$ in the proof of \cite[Prop.~7.3]{Din07} (in the proof of
Lemma~\ref{thmD:lem:P}), and three steps in the proof of \cite[Lemma~6.1]{Din07}, from which \cite[Lemma~1.6]{Din07}
follows: comparisons with opinions outside $\Gamma(w)$, an empty edge set $F$ when $|E|<t$, and the hypotheses of
\cite[Lemma~6.4]{Din07} at the middle indices, whose repair restricts the powering parameter to $t=2r^2$ with $r$ large
(Remark~\ref{thmD:rem:L61}). What this section adds
is the complete statement with explicit hypotheses, a front end that keeps every round on Dinur's alphabet,
the explicit honest labels with their locality (Lemma~\ref{thmD:lem:H}), the information property of
Theorem~\ref{thmD:thm}\ref{thmD:it:f}, and the cache-and-cap form of the
local-computability lemma (Lemma~\ref{thmD:lem:Z}). No new technique is claimed. Numbering of \cite{Din07} follows
its revision of 13 February 2007.

\subsection{Definitions}\label{thmD:sec:defs}

\begin{definition}[PCPs with coin multisets]\label{thmD:def:pcp}
A \emph{promise relation} is a pair $(R^{\rm YES},L^{\rm NO})$ with $L^{\rm NO}\cap L(R^{\rm YES})=\emptyset$.
A \emph{non-adaptive PCP with coin multisets} $(P_0,V_0)$ consists of a randomised prover $P_0$ that on
$(x,w)\in R^{\rm YES}$ outputs $\pi_0\in(\{0,1\}^a)^{\ell_0}$ ($a,\ell_0$ depending on $x$ only) and, for each
$x$, a finite multiset $U=U(x)$ of \emph{tests} $u=(Q_u,D_u)$ with $Q_u\in[\ell_0]^{q_u}$, $q_u\le q$, and
$D_u$ a fan-in-two Boolean circuit (gates $\wedge,\vee,\oplus$ with two inputs, $\neg$ with one) on the
$aq_u$ bits of $\pi_0|_{Q_u}$, with at most $s$ gates and whose output wire is a gate. The list of tests is
computable in time $\poly(|x|)$. $V_0$ draws $u\in U$ uniformly (with multiplicity) and accepts iff
$D_u(\pi_0|_{Q_u})=1$. The PCP has \emph{perfect completeness} if every $\pi_0$ in the support of
$P_0(x,w)$ passes every test, and \emph{gap} $\gamma\in(0,1]$ if for every $x\in L^{\rm NO}$ and every
$\pi_0\in(\{0,1\}^a)^{\ell_0}$ at least a $\gamma$ fraction of $U$ rejects.
\end{definition}

\noindent A circuit whose output is an input wire is normalised by one gate $z=y\wedge y$ ($s$ grows by at
most one). A gap may be smaller than $1/|U|$: a PCP with $|U|=1$ and gap $1$ also has gap $2^{-|x|}$. But for
$x\in L^{\rm NO}$ every proof is rejected by a positive fraction of $U$, hence by at least one test, so
$\hat\gamma:=\max\{\gamma,1/|U|\}\in[\gamma,1]$ is also a gap. Let $B(x)\subseteq[\ell_0]\times[a]$ be the set of proof
bits that are input wires of some gate of some $D_u$; by normalisation, $D_u(\pi_0|_{Q_u})$ depends only on
$\pi_0|_{B(x)}$.

\begin{definition}[views, zero knowledge, query counting]\label{thmD:def:zk}
A verifier strategy $V^*$ is an oracle algorithm with coins $\rho$, possibly adaptive and computationally
unbounded. It is \emph{$b$-bounded} if on every execution, for every oracle, it makes at most $b$
\emph{distinct} queries. $\View_{V^*,P}(x,w)$ consists of $\rho$ and the answers to $V^*$'s queries against
$\pi\leftarrow P(x,w)$ \cite[Def.~2.9]{GOS25}; since $V^*$ is a deterministic function of $\rho$ and the answers,
the view is a deterministic function of $\rho$ and the answers to its distinct queries. $(P,V)$ is
\emph{black-box perfect zero knowledge with query bound $b$} \cite[Def.~2.11]{GOS25} if one algorithm $\Sim$
(\emph{universal}) with oracle access to $V^*$ satisfies $\Sim^{V^*}(x)\equiv\View_{V^*,P}(x,w)$ for every
$b$-bounded $V^*$ and every $(x,w)\in R^{\rm YES}$; \emph{$\varepsilon$-statistical} if the statistical
distance is at most $\varepsilon$. $\Sim$ is \emph{straight-line} if it runs $V^*$ at most once, answering each
query when it is made, and outputs the view of that run or a \emph{failure symbol} (a fixed value that is not a
view, such as $\bot$).

\emph{Query-counting convention for running time.} Simulators access $V^*$ through a memoising interface
$\Mem(V^*)$: $\Mem$ runs $V^*$, answers every repeated query from its own record, and passes only first
occurrences of queries to the simulator. One interaction with $\Mem$ (receiving a new query or the output of
$V^*$, and returning an answer) is one step. Under this convention a $b$-bounded $V^*$ costs at most $b+1$
interactions per run, and $Q^*$, the number of distinct queries of $V^*$ on a run, is at most the proof
length. Without the interface, every running-time bound below holds with its terms other than $|\rho|$ multiplied
by $1+Q^*_{\rm mult}$ (tossing $V^*$'s coins does not depend on repeated queries),
where $Q^*_{\rm mult}$ counts the queries of $V^*$ with multiplicity, as \cite[Defs.~2.10--2.11]{GOS25} do. The
zero-knowledge identities do not depend on the convention.

\emph{Efficiency classes.} Simulators are strict PPT (SP), expected polynomial time (EP), or of Goldreich's
$\bot$ type (GB: strict polynomial time, output $\bot$ with probability at most $\frac12$, exact conditioned on
not $\bot$; \cite[Def.~4.3.1]{Gol01}). Provers are strict PPT (SP) or Las Vegas (LV: expected polynomial
time, exact output law).
\end{definition}

\noindent Definition~\ref{thmD:def:zk} restates Definitions~\ref{def:view} and~\ref{def:zk} for promise relations,
with the memoised interface of Definition~\ref{main:def:ZK} as its cost convention. Its labels (SP) and (EP)
name efficiency classes only; in Definition~\ref{main:def:ZK} they also require exactness.

\begin{definition}[local maps]\label{thmD:def:local}
A \emph{local map} for an instance $x$ is a family $f_x=(f_x(\alpha))_{\alpha\in[N]}$ of functions on
$(\{0,1\}^a)^{\ell_0}$ with sets $S_x(\alpha)\subseteq[\ell_0]$ such that $f_x(\alpha)(\pi)$ depends only on
$\pi|_{S_x(\alpha)}$. It is \emph{$L$-local} if $|S_x(\alpha)|\le L$ for all $\alpha$; \emph{$x$-only} if $f_x$
and $S_x$ are determined by $x$ (never by a witness); \emph{total}, as it is defined on every $\pi$; and
\emph{efficient} if $S_x(\alpha)$, and $f_x(\alpha)(\pi)$ given $\pi|_{S_x(\alpha)}$, are computable in time
$\poly(|x|)$.
\end{definition}

\noindent This is \cite[Def.~3.1]{GOS25} restricted to non-adaptive reads, with the two hypotheses that its
Lemma~3.2 uses made explicit ($x$-only and total).

\begin{fact}\label{thmD:fact:comp}
If $f$ is $L_1$-local in labels $\lambda$ and every label $\lambda(\beta)$ is an $L_2$-local function of $\pi$,
then $f\circ\lambda$ is $L_1L_2$-local with $S(\alpha)=\bigcup_{\beta\in S^{(1)}(\alpha)}S^{(2)}(\beta)$;
determinism, the $x$-only property and totality are preserved.
\end{fact}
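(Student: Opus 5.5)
The plan is to argue directly from Definition~\ref{thmD:def:local}, unwinding the two layers of locality. Fix $x$ and an output index $\alpha$ of $f$. Since $f$ is $L_1$-local in the labels $\lambda$, the value $f(\alpha)(\lambda)$ depends only on the labels $\lambda(\beta)$ with $\beta\in S^{(1)}(\alpha)$, and there are at most $L_1$ of these. Each such label $\lambda(\beta)$, viewed as a function of $\pi$, depends only on $\pi|_{S^{(2)}(\beta)}$, a set of at most $L_2$ positions.

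Put $S(\alpha):=\bigcup_{\beta\in S^{(1)}(\alpha)}S^{(2)}(\beta)$. If two proofs $\pi,\pi'$ agree on $S(\alpha)$, then they agree on every $S^{(2)}(\beta)$ with $\beta\in S^{(1)}(\alpha)$. Hence $\lambda(\beta)(\pi)=\lambda(\beta)(\pi')$ for all these $\beta$, so the two label vectors agree on $S^{(1)}(\alpha)$, and therefore $(f\circ\lambda)(\alpha)(\pi)=(f\circ\lambda)(\alpha)(\pi')$. This shows that $f\circ\lambda$ is local with the sets $S(\alpha)$. The union bound then gives $|S(\alpha)|\le\sum_{\beta\in S^{(1)}(\alpha)}|S^{(2)}(\beta)|\le L_1L_2$.

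It remains to transfer the side properties. The composite $f\circ\lambda$ and the sets $S(\alpha)$ are built only from $f$, $\lambda$, $S^{(1)}$ and $S^{(2)}$. So if all of these are deterministic and determined by $x$ alone, then so is the composite, and this gives determinism and the $x$-only property. Totality holds because $\lambda$ is defined on every $\pi$ and produces some label vector, and $f$ is defined on every label vector; the composite is therefore defined everywhere. If efficiency is wanted as well, we compute $S(\alpha)$ by listing $S^{(1)}(\alpha)$ and then each $S^{(2)}(\beta)$; we evaluate $f(\alpha)$ by first computing each needed label from $\pi|_{S^{(2)}(\beta)}$. This takes at most $1+L_1$ polynomial-time calls, which is polynomial whenever $L_1$ is.

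There is no real obstacle here. The only point requiring care is that locality is defined through ``depends only on'' rather than through an explicit formula, so the argument must go via agreement of the inputs on $S(\alpha)$, as done above.
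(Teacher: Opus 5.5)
Your proof is correct. It is the direct unwinding of Definition~\ref{thmD:def:local} that the paper leaves implicit, since the Fact is stated without proof: agreement of $\pi,\pi'$ on $S(\alpha)$ forces agreement of the labels indexed by $S^{(1)}(\alpha)$, hence of $(f\circ\lambda)(\alpha)$, and the union bound gives $|S(\alpha)|\le L_1L_2$. Your efficiency remark goes beyond what the Fact claims; the paper argues efficiency of the iterated composite separately, by evaluating each output bit through its dependency graph of at most $3L_D$ nodes.
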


\noindent \emph{Constraint graphs} are as in \cite[Def.~1.2]{Din07}, with the edge set a finite multiset
(parallel edges and self-loops allowed, as \cite{Din07} produces them). $\UNSAT_\sigma(G)$ is the fraction of
edges violated by $\sigma$, $\UNSAT(G)=\min_\sigma\UNSAT_\sigma(G)$ and $\size(G)=|V|+|E|$. Let
$\Sigma_0:=\{0,1\}^3$. For $Y\subseteq\{0,1\}^n$ and $a\in\{0,1\}^n$, $\rdist(a,Y)$ is the minimum relative
Hamming distance from $a$ to $Y$, with $\rdist(a,\emptyset):=1$.

\subsection{The instantiation: Dinur's tester with canonical folding}\label{thmD:sec:inst}

Let $\psi:\{0,1\}^n\to\{0,1\}$ be a Boolean function on the variables $X=\{x_1,\dots,x_n\}$, $L$ the set of
all Boolean functions on $\{0,1\}^n$ and $A_a(f):=f(a)$ the long code of $a$. A string $A:L\to\{0,1\}$ is
\emph{folded} if $A(\neg f)=\neg A(f)$ and $A(f)=A(f\wedge\psi)$ for all $f$ \cite[\S7]{Din07}. Dinur takes
$L'$ to contain one of each pair $\{f,\neg f\}$, puts $L'_\psi=\{f\in L':f=f\wedge\psi\}$, and asserts that every
assignment to $L'_\psi$ expands to a folded string. As printed this depends on the choice of $L'$: for
$\psi(x)=x_1$ and $L'\ni 1,\neg x_1$ one gets $L'_\psi=\emptyset$ although folded strings are not unique, and
for $\psi\equiv0$ no folded string exists. We fix the choice as follows.

\begin{lemma}[canonical folding]\label{thmD:lem:F}
Let $\psi\not\equiv0$, let $a^*$ be the lexicographically least element of $\SAT(\psi)$, and put
$\Lcan(\psi):=\{g\in L: g\le\psi,\ g(a^*)=0\}$. For $b:\Lcan(\psi)\to\{0,1\}$ define $E(b):L\to\{0,1\}$ by
$E(b)(f):=b(g)$ if $g(a^*)=0$ and $E(b)(f):=1\oplus b(\psi\oplus g)$ if $g(a^*)=1$, where $g:=f\wedge\psi$.
Then $E$ is a bijection from $\{0,1\}^{\Lcan(\psi)}$ onto the set of folded strings, $E(b)$ restricts to $b$
on $\Lcan(\psi)$, and $|\Lcan(\psi)|=2^{|\SAT(\psi)|-1}$. Moreover $A_a$ is folded iff $\psi(a)=1$.
\end{lemma}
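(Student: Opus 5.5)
The statement to prove is Lemma~\ref{thmD:lem:F}: canonical folding. The approach is to show that a folded string $A$ is determined by its values on $\Lcan(\psi)$, and that every assignment there extends. I will use the two folding identities in turn.

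First, $A(f)=A(f\wedge\psi)$ says that $A$ factors through the map $f\mapsto g:=f\wedge\psi$. This map sends $L$ onto $\{g:g\le\psi\}$, a set that can be identified with the Boolean functions on $\SAT(\psi)$. Second, $A(\neg f)=\neg A(f)$ combined with the first identity gives $A(\psi\oplus g)=A(\neg g\wedge\psi)=A(\neg g)=1\oplus A(g)$ for $g\le\psi$. The map $g\mapsto\psi\oplus g$ is an involution on $\{g\le\psi\}$ with no fixed points, since $\psi\not\equiv0$. It flips the value at $a^*$, so each orbit $\{g,\psi\oplus g\}$ contains exactly one $g$ with $g(a^*)=0$, i.e.\ exactly one element of $\Lcan(\psi)$. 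Hence every folded $A$ equals $E(A|_{\Lcan})$, which gives injectivity, and the restriction of $E(b)$ to $\Lcan(\psi)$ is $b$, as read off from the definition. The count follows: $|\Lcan(\psi)|$ is half of $2^{|\SAT(\psi)|}$.

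Next I check that $E(b)$ is folded for every $b$. The identity $E(b)(f\wedge\psi)=E(b)(f)$ holds because $E(b)(f)$ depends only on $g=f\wedge\psi$, and $(f\wedge\psi)\wedge\psi=g$. For negation, $(\neg f)\wedge\psi=\psi\oplus g$, and its value at $a^*$ is $1\oplus g(a^*)$ because $\psi(a^*)=1$. If $g(a^*)=0$, then $E(b)(\neg f)=1\oplus b(\psi\oplus(\psi\oplus g))=1\oplus b(g)$. If $g(a^*)=1$, then $E(b)(\neg f)=b(\psi\oplus g)=1\oplus E(b)(f)$. Together with injectivity this gives the bijection.

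Finally, for the long code: $A_a(\neg f)=\neg f(a)$ always holds. The identity $A_a(f\wedge\psi)=f(a)\wedge\psi(a)$ equals $f(a)$ for all $f$ exactly when $\psi(a)=1$; if $\psi(a)=0$, take $f\equiv1$ to see that it fails. No step is a real obstacle; the only care needed is the orbit argument, which uses $\psi(a^*)=1$ to make $\Lcan$ a transversal.
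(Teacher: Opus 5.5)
Your proof is correct and follows essentially the same route as the paper: the fixed-point-free involution $g\mapsto\psi\oplus g$ on $\{g\le\psi\}$ with $L'_{\mathrm{can}}(\psi)$ as a transversal, the reconstruction $A=E(A|_{L'_{\mathrm{can}}(\psi)})$ for folded $A$, the two-case check that each $E(b)$ is folded, and the same long-code criterion. One small mislabel: the reconstruction identity shows that $E$ is \emph{onto} the folded strings (once each $E(b)$ is known to be folded), while injectivity of $E$ comes from the restriction property $E(b)|_{L'_{\mathrm{can}}(\psi)}=b$; since you establish both facts, the bijection follows.
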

\begin{proof}
On $L_\psi:=\{g:g\le\psi\}$ the map $\iota(g)=\psi\oplus g=\psi\wedge\neg g$ is an involution without fixed
points (as $\psi\ne0$), and $(\psi\oplus g)(a^*)=1\oplus g(a^*)$, so $\Lcan(\psi)$ contains exactly one element
of each orbit $\{g,\iota(g)\}$; this gives the size. If $A$ is folded then for $g\le\psi$,
$A(\iota(g))=A(\neg g\wedge\psi)=A(\neg g)=\neg A(g)$, and $A(f)=A(f\wedge\psi)$; hence $A=E(A|_{\Lcan})$, so
$E$ is onto the folded strings once we show that every $E(b)$ is folded. Folding over $\psi$ holds because
$E(b)(f)$ depends only on $f\wedge\psi$. For folding over true, let $g=f\wedge\psi$; then
$\neg f\wedge\psi=\iota(g)$ and $\iota(g)(a^*)=1\oplus g(a^*)$. If $g(a^*)=0$, then
$E(b)(\neg f)=1\oplus b(\psi\oplus\iota(g))=1\oplus b(g)=\neg E(b)(f)$; if $g(a^*)=1$, then
$E(b)(\neg f)=b(\iota(g))=\neg E(b)(f)$. The restriction property is immediate, so $E$ is injective. Finally,
$A_a(\neg f)=\neg f(a)$, and $A_a(f\wedge\psi)=f(a)\psi(a)$ equals $f(a)$ for all $f$ iff $\psi(a)=1$.
\end{proof}

\begin{definition}[the tester $\Pcan$]\label{thmD:def:tester}
Let $T$ be the long-code test of \cite[Thm.~7.1, App.~B]{Din07} and $T'$ the modified test of
\cite[\S7, item~1]{Din07} (with probability $\frac12$ run $T$; otherwise pick $i\in[n]$ and $f\in L$ uniformly and
check $x_i=A(f)\oplus A(f\oplus x_i)$, where $x_i$ also denotes the $i$-th dictator function). On input a
circuit $\psi$ on $X$:
\begin{enumerate}[label=(\roman*)]
\item if $\psi$ is satisfiable, $\Pcan(\psi)$ is Dinur's constraint graph \cite[\S7, item~2]{Din07} with
vertices $X\cup\Lcan(\psi)\cup Z$, where a location $f\in L$ read by $T'$ is answered by $E(b)(f)$
(Lemma~\ref{thmD:lem:F}), that is, by the variable $g$ or by the negation of the variable $\psi\oplus g$; there
is one variable $z_r\in\Sigma_0$ per outcome $r$ of $T'$'s coins and, for each of the three locations read on
$r$, an edge between $z_r$ and the variable answering that location, whose constraint checks that $z_r$
satisfies $T'$'s predicate on $r$ and agrees with that variable (negated if so indicated); edge multiplicities
are proportional to the probabilities of the outcomes; Boolean variables take values in $\{000,111\}$ and any
constraint touching another value rejects;
\item if $\psi$ is unsatisfiable, $\Pcan(\psi)$ has vertex set $X$ and the edges $\{x_i,x_{i+1}\}$ (indices
mod $n$), each with the empty constraint (every constraint rejects);
\item every output graph is replicated edge-wise to a common edge count $N_E$, the least common multiple
of the edge counts of the graphs of (i) and (ii) over the finitely many $\psi$ on $n$ inputs.
\end{enumerate}
\end{definition}

\begin{lemma}\label{thmD:lem:P}
Let $c_T>0$ be a constant for which \cite[Thm.~7.1]{Din07} holds (e.g.\ $c$ of \cite[Prop.~B.2]{Din07}): for every
$\psi$ and $\delta\in[0,1]$, $T$ rejects every folded $A$ that is $\delta$-far from all $A_a$ with $a\in\SAT(\psi)$
with probability at least $c_T\delta$. Then $\Pcan$ is an assignment tester in the sense of \cite[Def.~2.2]{Din07}
with alphabet $\Sigma_0$ and rejection probability $\varepsilon:=c_T/24\le\frac1{24}$, with the convention
$\rdist(a,\emptyset)=1$; this is half the constant $\varepsilon_{\rm pr}:=c_T/12$ that the printed proof of
\cite[Lemma~7.2]{Din07} gives. Consequently \cite[Lemma~1.8, Thm.~1.5]{Din07} hold with the tester $\Pcan$, with
$\beta_3=\varepsilon\rho/4$ \cite[proof of Lemma~1.8]{Din07} for this $\varepsilon$, where $\rho$ is the relative
distance of the code used in composition.
\end{lemma}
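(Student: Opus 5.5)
We must establish that $\Pcan$ is an assignment tester with rejection constant $\varepsilon=c_T/24$, and then feed it into Dinur's composition. The plan follows the printed proof of \cite[Lemma~7.2]{Din07}, with folding done through the canonical bijection $E$ of Lemma~\ref{thmD:lem:F}. We also need two repairs: a domination step whose loss costs the factor $2$ relative to $\varepsilon_{\rm pr}$, and the case $\delta>\frac12$.

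\emph{Completeness and the unsatisfiable case.} Let $\psi$ be satisfiable and $a\in\SAT(\psi)$. The long code $A_a$ is folded (Lemma~\ref{thmD:lem:F}), so $A_a=E(A_a|_{\Lcan})$. Labelling $X$ by $a$ (encoded in $\{000,111\}$), $\Lcan(\psi)$ by $A_a|_{\Lcan}$, and each $z_r$ by the triple of bits read on coins $r$ satisfies every edge, because $T'$ accepts genuine long codes of satisfying assignments. If $\psi$ is unsatisfiable, every edge rejects. Hence the rejection fraction is $1\ge\varepsilon\cdot\rdist(a,\emptyset)$ for every assignment $a$, which is where the convention $\rdist(a,\emptyset)=1$ enters. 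Replication to $N_E$ edges in item (iii) changes no fractions.

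\emph{Soundness for satisfiable $\psi$.} Fix a labelling $\sigma$ and put $\delta:=\rdist(\sigma|_X,\SAT(\psi))$, treating non-Boolean labels on $X$ as arbitrary bits. The analysis has three steps.

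\begin{enumerate}[label=(\arabic*),nosep]
\item \emph{Reduce to $T'$.} If a $\delta_0$ fraction of the $z_r$ disagree with their neighbours or fail the predicate, then at least $\delta_0/3$ of the edges reject, since each $z_r$ has three equally weighted edges. Otherwise the $T'$-rejection probability on the folded string $A:=E(\sigma|_{\Lcan})$ is at most $3$ times the edge rejection fraction. Here folding is automatic: every string the graph can represent is $E(b)$ for some $b$, and this is exactly the repair over Dinur's $L'_\psi$.
\item \emph{Split on distance to long codes.} Let $\delta':=\min_{a\in\SAT(\psi)}\rdist(A,A_a)$. If $\delta'\ge\delta/4$, then the $T$ half of $T'$ rejects with probability at least $\frac12c_T\delta/4$. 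If $\delta'<\delta/4$ for some $a$, then $\rdist(\sigma|_X,a)\ge\delta$. The dictator-consistency half then tests $x_i=A(f)\oplus A(f\oplus x_i)$. Since $A$ is $\delta'$-close to $A_a$, for each $i$ with $\sigma(x_i)\ne a_i$ this check fails with probability at least $1-2\delta'$ over $f$. The overall rejection probability is therefore at least $\frac12\delta(1-2\delta')\ge\frac12\delta(1-\delta/2)$.
\item \emph{The case $\delta>\frac12$.} Here the printed bound breaks. I would cap the argument by replacing $\delta$ with $\min(\delta,\frac12)$, which costs at most a factor $2$ since $\delta\le1$. This is the domination repair: one must compare with the \emph{nearest} satisfying assignment to $A$, not to $\sigma|_X$, and the triangle inequality is what loses the constant.
\end{enumerate}

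Collecting the factors $\frac13$, $\frac12$, $\frac14$ and the cap, and using $c_T\le1$, gives at least $\frac{c_T}{24}\,\delta$. I expect the main obstacle to be making this constant bookkeeping match exactly $c_T/24$ while handling $\delta'$ versus $\delta$ cleanly; I would first try the cruder case split at $\delta'=\delta/4$ and adjust the thresholds if they do not land.

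\emph{Consequence.} With $\Pcan$ an assignment tester of rejection probability $\varepsilon$ over $\Sigma_0$, \cite[Lemma~1.8]{Din07} goes through verbatim, with $\varepsilon$ in place of $\varepsilon_{\rm pr}$ and with the domination correction of Remark~\ref{thmD:rem:dom}. Its proof yields $\beta_3=\varepsilon\rho/4$. Dinur's main amplification theorem \cite[Thm.~1.5]{Din07} then follows as printed.
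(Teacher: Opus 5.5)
Your step (2) is already a complete repair, and it is essentially the paper's argument. The paper runs Dinur's case split with threshold $\delta'/2$, where $\delta'=\min(\delta,\frac12)$; you use the uniform threshold $\delta/4$. In your second case $1-2\delta'>1-\delta/2\ge\frac12$ for every $\delta\le1$, so $T'$ rejects with probability at least $\frac12\cdot\delta\cdot\frac12\ge c_T\delta/8$ with no restriction on $\delta$. Your first case also gives $c_T\delta/8$, and the factor $\frac13$ from the three edges at $z_r$ then gives exactly $c_T\delta/24$. (The paper's threshold agrees with Dinur's $\delta/2$ for $\delta\le\frac12$, so it alters the printed proof only for large $\delta$; yours is a little simpler and gives the same uniform bound.)

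Step (3) should therefore be deleted. Applied on top of step (2), the cap would cost a further factor $2$ and give $c_T/48$; your tally $\frac13\cdot\frac12\cdot\frac14=\frac1{24}$ is correct only because the cap contributes nothing. Its description is also wrong. The factor $2$ relative to $\varepsilon_{\rm pr}=c_T/12$ comes from lowering Dinur's threshold $\delta/2$, which gives $c_T\delta/4$ for $T'$ but only for $\delta\le\frac12$; it does not come from domination. The domination step of Remark~\ref{thmD:rem:dom} rounds non-Boolean labels on $X$ and on the long-code variables to bits, and it loses nothing, because every constraint touching a non-Boolean label already rejects. Comparing with the satisfying assignment whose long code is nearest to $A$ involves no triangle inequality either.

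Two smaller points. First, you use $c_T\le1$ without justification, and this is also what gives $\varepsilon\le\frac1{24}$. It holds because for $\mathrm{SAT}(\psi)=\{a\}$ the string $\neg A_a$ is folded and $1$-far from $A_a$. Second, Dinur's Thm.~1.5 does not follow ``as printed''. Besides Lemma~1.8 with the new tester, it needs Lemma~1.6, whose printed proof (through Lemma~6.1) the paper repairs only for $t=2r^2$ with $r$ large (Remark~\ref{thmD:rem:L61}). The paper's proof also checks that the soundness step of Lemma~1.8, $\mathrm{rdist}(\sigma'|_{[u]\cup[v]},\mathrm{SAT}(\tilde c(e)))\ge\rho/4$, survives unsatisfiable $\tilde c(e)$ through the convention $\mathrm{rdist}(\cdot,\emptyset)=1$.
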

\begin{proof}
For satisfiable $\psi$, $\Pcan(\psi)$ is Dinur's construction for the choice
$L'=\{f:f(a^*)=0\}$, for which $L'_\psi=\Lcan(\psi)$. The proof of \cite[Lemma~7.2]{Din07} uses the representation
only through the claim that every assignment to $L'_\psi$ expands to a folded string answering all locations
consistently (\cite[Prop.~7.3]{Din07}: ``$A$ is folded''), which Lemma~\ref{thmD:lem:F} proves for this choice.
One step of the printed proof of \cite[Prop.~7.3]{Din07} needs a correction: its second case uses
$(1-\delta)\delta\ge\delta/2$, which is false for $\delta>\frac12$, and $\delta=\rdist(\sigma,\SAT(\psi))$ can exceed
$\frac12$ (take $\SAT(\psi)=\{0^n\}$ and $\sigma=1^n$). Run Dinur's case split with $\delta':=\min(\delta,\frac12)$ in
place of $\delta$. If the folded $A$ is $\delta'/2$-far from $A_a$ for every $a\in\SAT(\psi)$, then $T$ rejects with
probability at least $c_T\delta'/2\ge c_T\delta/4$ by \cite[Thm.~7.1]{Din07} (as $\delta'\ge\delta/2$), and $T'$ with at
least $c_T\delta/8$. Otherwise $A$ is $\delta'/2$-close to $A_{a'}$ for some $a'\in\SAT(\psi)$; then
$\Pr_f[A(f)\oplus A(f\oplus x_i)=a'_i]\ge1-\delta'\ge\frac12$ for every $i$, $\sigma$ and $a'$ differ on at least a
$\delta$ fraction of the indices, and $T'$ rejects with probability at least $\frac12\cdot\delta\cdot\frac12=\delta/4$.
As $c_T\le1$ (for $\SAT(\psi)=\{a\}$ the string $\neg A_a$ is folded and $1$-far from $A_a$), $T'$ rejects with
probability at least $c_T\delta/8=3\varepsilon\delta$ for all $\delta\in(0,1]$, while the printed case split gives
$c_T\delta/4=3\varepsilon_{\rm pr}\delta$ only for $\delta\le\frac12$. As in \cite[proof of Lemma~7.2]{Din07}, for each
outcome $r$ on which $T'$ rejects, at least one of the three constraints at $z_r$ is violated, and each of them
carries a $\frac13\Pr[r]$ fraction of the edges; so at least a $\frac13\cdot3\varepsilon\delta=\varepsilon\delta$
fraction of the constraints is violated, which is soundness. Completeness is as in Dinur's proof, as
$E(A_a|_{\Lcan(\psi)})=A_a$ by Lemma~\ref{thmD:lem:F}. For unsatisfiable $\psi$ completeness is
vacuous, and every assignment violates every constraint, so $\UNSAT=1\ge\varepsilon\cdot\rdist(a,\emptyset)$.
Replication does not change any $\UNSAT$ value. \cite[Lemma~1.8]{Din07} assumes only an assignment tester
with constant rejection probability over a constant alphabet; its soundness step uses
$\rdist(\sigma'|_{[u]\cup[v]},\SAT(\tilde c(e)))\ge\rho/4$, which holds with $\rdist(\cdot,\emptyset)=1$; and
\cite[Thm.~1.5]{Din07} is proved from Lemma~1.8 with the tester of \cite[Thm.~5.1]{Din07}, for which $\Pcan$ may be
substituted, and from \cite[Lemma~1.6]{Din07} at the $t$ of Remark~\ref{thmD:rem:const} (Remark~\ref{thmD:rem:L61}).
\end{proof}

\begin{remark}[domination]\label{thmD:rem:dom}
In the soundness proofs of \cite[Lemma~1.8, Lemma~7.2]{Din07} the block variables $[v]$ and the long-code
variables are read as bits, while in $G'$ they carry $\Sigma_0$-labels that may lie outside $\{000,111\}$.
Rounding each such label to a bit (say its first coordinate) can only lower the number of violated
constraints: every constraint touching a non-Boolean label rejects before rounding, and every other constraint
evaluates identically. So $\UNSAT_{\sigma'}\ge\UNSAT_{\rm rounded}$ and Dinur's bounds, proved for Boolean
values, apply.
\end{remark}

\begin{remark}[three steps in the proof of {\cite[Lemma~6.1]{Din07}}]\label{thmD:rem:L61}
\cite[Thm.~1.5]{Din07} rests on \cite[Lemma~1.6]{Din07}, which follows from \cite[Lemma~6.1]{Din07}. Lemma~6.1 is
stated for every $t$; repairing three steps of its printed proof, we prove it for $t=2r^2$ with $r\ge r_0$ ($r_0$ as
in (c)), which suffices for the proof of \cite[Thm.~1.5]{Din07} (Remark~\ref{thmD:rem:const}). We use the notation
of \cite[\S6]{Din07} ($t$ even, $\vec\sigma$ a labelling of $G^t$, $\sigma$ its popular-opinion labelling, $I$, $N$,
$Z$, $X_{u,\ell}$, $X'_{u,k}$, $\tau$; equation numbers are Dinur's) and write $\theta:=|F|/|E|$.
(a) \emph{Opinions outside $\Gamma(w)$.} A label $\vec\sigma(w)$ gives values only on $\Gamma(w)$, the endpoints
of $t/2$-step walks from $w$, but \cite[Def.~6.1]{Din07} and $X_{u,\ell}$ also compare $\vec\sigma(w)_u$ with a
value when $u$ is reached from $w$ by a walk of length $\ell$ with $t/2<\ell\le t/2+\sqrt{t/2}$, where $u\notin\Gamma(w)$
is possible (for $\ell\le t/2$ the self-loops of $G$ put $u$ in $\Gamma(w)$); a hit edge then need not lie in $\Gamma(v_0)\times\Gamma(v_t)$, and ``$N(e)>0$ implies that $e$
rejects'' is not justified. Read every comparison $\vec\sigma(w)_u=a$ with $u\notin\Gamma(w)$ as false. A hit
edge $(v_{i-1},v_i)$ then lies in $F$ with $v_{i-1}\in\Gamma(v_0)$ and $v_i\in\Gamma(v_t)$, so every assignment
to $\Gamma(v_0)\cup\Gamma(v_t)$ that agrees with the labels of $v_0$ and $v_t$ takes the values
$\sigma(v_{i-1}),\sigma(v_i)$ there and violates this edge; hence $e$ rejects. $X_{u,\ell}$ becomes a sub-distribution; (8) and the decomposition (10) hold verbatim, as the event
depends only on the endpoint of the walk; $X_{u,t/2}$ is still a distribution (a $t/2$-step walk from $u$ to $w$,
reversed, is one from $w$ to $u$), so (5) holds; the derivation of (9) goes through, as of $X'_{u,k}$ it needs
only $\Pr[X'_{u,k}=a]\le1$; and \cite[Lemma~6.3]{Din07} is unaffected, as $N\le Z$.
(b) \emph{An empty $F$.} For $\UNSAT_\sigma(G)\ge1/t$ the proof takes $|F|=\lfloor|E|/t\rfloor$, which is $0$
when $|E|<t$, and then uses (6) as an equality. Take instead $F$ to be the set of all edges violated by
$\sigma$ in both cases, so $\theta=\UNSAT_\sigma(G)$, and let $\theta>0$; the proof of \cite[Lemma~6.2]{Din07} does not
use $\theta\le1/t$; (c) completes the argument, with $\xi_0=|I_r|\min(\theta,1/t)$ in place of (6).
(c) \emph{The middle indices.} That proof uses (9) for $\ell\in\{i-1,t-i\}$, $i\in I$, derived from
\cite[Lemma~6.4]{Din07} with $\ell_1=t/2$ and $\ell_2=\ell$, which needs
$\ell_{\rm thr}<\ell_1-\sqrt{\ell_1}\le\ell_2<\ell_1+\sqrt{\ell_1}$ for a threshold $\ell_{\rm thr}$ (Dinur's $\ell_0$)
that depends only on $d$ and $|\Sigma|$. This fails for small $t$, and for $\ell=i-1$ with $i=\min I$ whenever $t/2$ is
not a square; $\mathbb E[N]\ge c|I|\theta$, $c:=(\tau/(2|\Sigma|))^2$ (not the $c$ of \cite[Lemma~6.4]{Din07}), is then unproved, and for $t=2$ ($I=\{1,2\}$)
false: let $G$ be the complete graph on five vertices with a null self-loop at every vertex and every other constraint
rejecting $(a,a)$, and $\vec\sigma(w)_u:=b\ne a$ for $u=w$ and $a$ otherwise; then $\sigma\equiv a$ (a one-step walk from $u$ returns to $u$ with probability only $\frac15$) and $\theta>0$, but
no walk is hit, as $\vec\sigma(v_0)_{v_0}=\vec\sigma(v_2)_{v_2}=b$. So let $t=2r^2$ with $r\ge r_0$, where $r_0$ is
least with $\ell_{\rm thr}<r_0^2-r_0$ (so $r_0\ge2$), and replace $I$ by $I_r:=\{r^2-r+2,\dots,r^2+r-1\}$ in $N$ and
$Z$. For $i\in I_r$, $i-1$ and $t-i$ lie in $[r^2-r+1,r^2+r-2]$, strictly inside the window for $\ell_1=r^2$ and, above
$t/2$, in the range of (a); so (9) holds for them, and the proof of \cite[Lemma~6.2]{Din07} gives
$\mathbb E[N]\ge c|I_r|\theta$. By (11) and \cite[Prop.~6.5]{Din07}, which hold for any set of indices,
$\mathbb E[N^2]\le\mathbb E[Z^2]\le C_Z|I_r|\theta+|I_r|^2\theta^2$ with $C_Z:=1+2/(1-\lambda/d)$, so
\cite[Fact~2.6]{Din07} gives $\Pr[N>0]\ge c^2\xi/(C_Z+\xi)$ for $\xi:=|I_r|\theta$. This is increasing in $\xi$, and
$\xi\ge\xi_0:=|I_r|\min(\theta,1/t)$ with $\xi_0\le1$, since $\sqrt{t/2}=r\le|I_r|=2r-2<t$. Hence
$\Pr[N>0]\ge c^2\xi_0/(C_Z+1)\ge\beta_2\sqrt t\,\min(\UNSAT_\sigma(G),1/t)$ with $\beta_2:=c^2/(\sqrt2(C_Z+1))$, which
by (a) is the lemma. As $\beta_2<c^2<2^{-16}$ ($\tau<1$, $|\Sigma|=8$), and the lower bounds of
Remark~\ref{thmD:rem:const} need of $\beta_2$ only $\beta_2\le2^{-12}$ (a smaller $\beta_2$ forces a larger $t$), they
remain valid.
\end{remark}

Fix once and for all the instantiation $\mathcal D$ of \cite[proof of Thm.~1.5]{Din07} on input alphabet
$\Sigma_0$: explicit expander families for the preprocessing \cite[Lemma~2.1, Cor.~2.4]{Din07}, the even
powering parameter $t=2r^2$ chosen in Remark~\ref{thmD:rem:const}, an explicit code
$e:\Sigma_0^{d^{\lceil t/2\rceil}}\to\{0,1\}^{\ell}$ of linear dimension and relative distance at least $\rho>0$
\cite[Def.~5.1]{Din07}, taken from an explicit code family with this distance at every length (so
$\beta_3=\varepsilon\rho/4$ does not depend on $t$), and the tester $\Pcan$. Let $d$ be the degree
of the preprocessed graph, $C_{\mathcal D}\ge1$ and $\alpha_{\mathcal D}\in(0,1)$ the constants of
\cite[Thm.~1.5]{Din07} for $\mathcal D$ and $|\Sigma|=8$ (computed with the $\beta_3$ of Lemma~\ref{thmD:lem:P}), and
\[
\cit:=2d^{\lceil t/2\rceil},\qquad E_L:=\log_2\cit,\qquad E_C:=\log_2C_{\mathcal D}.
\]
These are fixed numbers of one explicit algorithm, so all algorithms below are uniform; only lower bounds on
their values are known (Remark~\ref{thmD:rem:const}). In Section~\ref{sec:main} the degree and the powering parameter
are written $d_{\mathcal D}$ and $t_{\mathcal D}$; in this section we write $d$ and $t$.

\subsection{The statement}\label{thmD:sec:stmt}

\begin{theorem}[zero-knowledge-preserving gap amplification]\label{thmD:thm}
Let $(P_0,V_0)$ be a non-adaptive PCP with coin multisets (Definition~\ref{thmD:def:pcp}) for
$(R^{\rm YES},L^{\rm NO})$ with alphabet $\{0,1\}^a$, at most $q$ queries, residual circuits of at most
$s\ge1$ gates, $|U|\le\poly(|x|)$, perfect completeness and gap $\gamma$. Let $m^*\le4s+1$ be as in
Section~\ref{thmD:sec:T}, $\hat\gamma:=\max\{\gamma,1/|U|\}$ and $k:=\lceil\log_2(3m^*/\hat\gamma)\rceil$. Then there is
a PCP $(P',V')$ for $(R^{\rm YES},L^{\rm NO})$ such that:
\begin{enumerate}[label=(\alph*)]
\item\label{thmD:it:a} \emph{format}: the alphabet is binary; $V'$ is non-adaptive, runs in time
$\poly(|x|)$, draws its test from the coin multiset $E(G_k)^{m}$ with $m:=\lceil1/\alpha_{\mathcal D}\rceil$, makes
$Q_D:=6m$ queries, and has perfect completeness and soundness error at most $\frac12$;
\item\label{thmD:it:b} \emph{length}: at most $27m^*|U|C_{\mathcal D}^k\le27(4s+1)|U|\,C_{\mathcal D}\,(3(4s+1)/\gamma)^{E_C}$;
\item\label{thmD:it:c} \emph{honest prover}: $P'(x,w)=f_x(P_0(x,w))$ for a deterministic, efficient,
$x$-only, total, $L_D$-local map $f_x$, where $L_D:=q\,\cit^{\,k}\le q\,\cit\,(3(4s+1)/\gamma)^{E_L}$;
\item\label{thmD:it:d} \emph{zero knowledge}: if $(P_0,V_0)$ is black-box perfect (resp.\
$\varepsilon$-statistical) zero knowledge with query bound $b$ via a universal simulator $\Sim_0$, then
$(P',V')$ is black-box perfect (resp.\ $\varepsilon$-statistical) zero knowledge with query bound
$\lfloor b/L_D\rfloor$ via a universal simulator $\Sim$, straight-line if $\Sim_0$ is; more strongly, $\Sim$
is exact (resp.\ $\varepsilon$-close) against every $V^*$ such that on every execution against every
$\pi_0$ in the support of $P_0(x,w)$ the union of the sets $S_x(\alpha)$ over its queries $\alpha$ has at most
$b$ elements;
\item\label{thmD:it:e} \emph{efficiency}: if $P_0$ is (SP) resp.\ (LV), so is $P'$; if $b\le\poly(|x|)$ and
$\Sim_0$ is (SP), (EP) or (GB), so is $\Sim$, under the convention of Definition~\ref{thmD:def:zk} and with the
time bound of Lemma~\ref{thmD:lem:DP};
\item\label{thmD:it:f} \emph{information}: $f_x(\pi_0)$ depends on $\pi_0$ only through $\pi_0|_{B(x)}$, and
$\pi_0|_{B(x)}$ is a deterministic function of $f_x(\pi_0)$; so $f_x(P_0(x,w))$ is non-constant iff
$\pi_0|_{B(x)}$ is non-constant on the support of $P_0(x,w)$.
\end{enumerate}
\end{theorem}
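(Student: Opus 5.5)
The plan is to build $(P',V')$ as a front end followed by $k$ rounds of Dinur's amplification $\mathcal D$, and then to read off (a)--(f) from one bookkeeping invariant: every label of every intermediate graph is a deterministic, $x$-only, total, local function of $\pi_0$.

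\emph{Front end.} First turn $(P_0,V_0)$ into a constraint graph $G_0$ over $\Sigma_0=\{0,1\}^3$ by the Tseitin encoding. For each test $u$, introduce one variable per gate of $D_u$, write at most $m^*\le4s+1$ clauses of width at most $3$ per test, and force the output gate to $1$. Proof bits are repeated into $\{000,111\}$, and each clause becomes a vertex whose label is the satisfying triple of its variables. Constraints check this label against the variables.
\begin{itemize}
\item \emph{Completeness.} Honest labels, namely proof bits together with the gate values computed from $\pi_0|_{Q_u}$, satisfy every constraint.
\item \emph{Soundness.} A rejected test forces at least one violated clause among its $m^*$ clauses, so $\UNSAT(G_0)\ge\hat\gamma/(3m^*)$.
\item \emph{Locality.} Each front-end label depends on at most $q$ old symbols.
\item \emph{Size.} The size is at most $27m^*|U|$.
\end{itemize}

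\emph{Amplification rounds and the local-map invariant.} Apply $\mathcal D$ (preprocessing, powering with $t=2r^2$, composition with $\Pcan$) $k$ times. By \cite[Thm.~1.5]{Din07}, in the corrected form of Lemma~\ref{thmD:lem:P} and Remark~\ref{thmD:rem:L61}, each round at least doubles $\min(\UNSAT,\alpha_{\mathcal D})$ and multiplies the size by $C_{\mathcal D}$. Since $2^k\hat\gamma/(3m^*)\ge1$, after $k$ rounds $\UNSAT(G_k)\ge\alpha_{\mathcal D}$; this gives (b).

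The key new lemma (Lemma~\ref{thmD:lem:H}) gives explicit honest labels for each round, each computed from at most $\cit$ old labels:
\begin{itemize}
\item \emph{Preprocessing.} A new vertex copies an old label.
\item \emph{Powering.} A new label lists the labels in $\Gamma(w)$, which has at most $d^{\lceil t/2\rceil}$ elements.
\item \emph{Composition.} New labels are codeword bits of $e$ on two such blocks, plus long-code bits $E(b)(f)$ and the auxiliary $z_r$. Each depends on at most $2d^{\lceil t/2\rceil}$ old labels. The canonical folding of Lemma~\ref{thmD:lem:F} makes these labels well defined and total even off the honest support; unsatisfiable $\psi$ get the gadget.
\end{itemize}
By Fact~\ref{thmD:fact:comp}, the composed map $f_x$ is $q\cit^k$-local, deterministic, $x$-only and total, which is (c). Perfect completeness propagates because honest labels satisfy every constraint of each round.

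\emph{Binary format, soundness $\tfrac12$ and efficiency.} Let $V'$ draw $m=\lceil1/\alpha_{\mathcal D}\rceil$ independent edges of $G_k$ and read the $2\cdot3=6$ bits of each endpoint label, for $Q_D=6m$ queries. Its soundness error is at most $(1-\alpha_{\mathcal D})^m\le e^{-1}<\tfrac12$, giving (a). Efficiency of $V'$ and of $f_x$ holds because $k=O(\log|x|)$ rounds of constant blow-up keep everything polynomial (Lemma~\ref{thmD:lem:DP}). The prover $P'$ runs $P_0$ and then $f_x$, so its class is inherited, which gives the prover part of (e).

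\emph{Zero knowledge (d).} The simulator $\Sim$ runs $\Sim_0$ against a wrapper verifier $A'$ (Lemma~\ref{thmD:lem:Z}, cache and cap). On each new query $\alpha$ of $V^*$, the wrapper asks $\Sim_0$ for the not-yet-cached positions in $S_x(\alpha)$. It answers $V^*$ with $f_x(\alpha)$ applied to the cached values. It aborts before exceeding $b$ distinct reads; the abort never fires on the relevant executions. Because $A'$ has coins independent of $\pi_0$ and is $b$-bounded on every oracle, the guarantee for $\Sim_0$ gives equality (or $\varepsilon$-closeness) of the joint law of the view. Totality and $x$-onlyness ensure that the simulated answers are computed exactly as the real ones. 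Straight-line operation and the classes (SP), (EP) and (GB) transfer with the stated additive time, completing (e).

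\emph{Information (f).} Proof bits in $B(x)$ feed Tseitin gate variables, which are themselves labels. Their repetition code survives each round, since powering lists them and composition encodes them with the injective $e$ and long codes. So they can be decoded deterministically back from $f_x(\pi_0)$. Conversely, bits outside $B(x)$ are never used by any label.

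The main obstacle is the locality lemma for composition. I must check that every long-code and $z_r$ label is a well-defined total function of boundedly many old labels, including on non-Boolean or unsatisfiable inputs. This is exactly where canonical folding and the rejecting gadget are needed.
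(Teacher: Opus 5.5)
Your proposal is correct and follows the paper's proof essentially step for step: the Tseitin front end with $\mathrm{UNSAT}(G_0)\ge\hat\gamma/(3m^*)$, $k$ rounds with explicit honest round maps of locality $2d^{\lceil t/2\rceil}$ composed by Fact~\ref{thmD:fact:comp}, a verifier reading $m$ independent edges, the cache-and-cap form of the local-computability lemma, and recovery of $\pi_0|_{B(x)}$ through the injective code $e$ on the blocks. Two points to tighten: the front end has size at most $3|E_T|=9m^*|U|$ (not $27m^*|U|$), which with three bits per label gives exactly $3|V(G_k)|\le27m^*|U|C_{\mathcal D}^k$; and for (f) you must check that every vertex you decode from lies on an edge, because preprocessing replaces a vertex by one cloud vertex per incident edge and would erase an isolated one, and this holds because every block variable lies on an edge of its $G_\varepsilon$ ($T'$ reads every $x_i$, and the unsatisfiable gadget is a cycle on $X$).
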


\begin{remark}\label{thmD:rem:variants}
(i) \emph{Fair coins.} If instead each of $m':=\lceil2/\alpha_{\mathcal D}\rceil$ samples draws
$j\in\{0,1\}^{\lceil\log_2|E(G_k)|\rceil}$ and uses edge $j\bmod|E(G_k)|$, every edge has one or two
preimages, a violated set of $\ge\alpha_{\mathcal D}|E(G_k)|$ edges is hit with probability $\ge\alpha_{\mathcal D}/2$,
and the soundness error is $(1-\alpha_{\mathcal D}/2)^{m'}<\frac12$; this uses $6m'$ queries and $m'\lceil\log_2|E(G_k)|\rceil=O(\log|x|)$
coins. Item~\ref{thmD:it:a} refers to the coin-multiset form; where the main theorem needs fair coins it converts
afterwards. Neither form touches the proof, so \ref{thmD:it:c}--\ref{thmD:it:f} are unaffected. (Error reduction
via expander neighbourhoods \cite[Lemma~2.11]{BGHSV06}, applied through expected robustness, would give
$\log_2|E(G_k)|+O(1)$ coins; it is not needed and not used.)
(ii) Theorem~\ref{thmD:thm} needs neither $s$ nor $1/\gamma$ small: the construction runs in polynomial time since
$3m^*/\hat\gamma\le3m^*|U|\le\poly(|x|)$, the list of tests being computable in time $\poly(|x|)$. Small $s$ and
$1/\gamma$ are what make $L_D$ and the length factor polylogarithmic.
(iii) If $V^*$ is non-adaptive, so is the induced reader of $\pi_0$ in Lemma~\ref{thmD:lem:Z}. Hence if
$(P_0,V_0)$ is zero knowledge only against non-adaptive verifiers, the theorem certifies zero knowledge of
$(P',V')$ against non-adaptive verifiers with bound $\lfloor b/L_D\rfloor$; it says nothing about adaptive ones.
\end{remark}

\subsection{Proof}\label{thmD:sec:proof}

\subsubsection{The Tseitin front end}\label{thmD:sec:T}

For a gate $z=g(y,y')$ the Tseitin clauses \cite{Tse68} are: for $\oplus$, $\bar z\vee y\vee y'$,
$\bar z\vee\bar y\vee\bar y'$, $z\vee\bar y\vee y'$, $z\vee y\vee\bar y'$; for $\wedge$, $\bar z\vee y$,
$\bar z\vee y'$, $z\vee\bar y\vee\bar y'$; for $\vee$, $z\vee\bar y$, $z\vee\bar y'$, $\bar z\vee y\vee y'$; for
$z=\neg y$, $z\vee y$, $\bar z\vee\bar y$. For every gate and all values, all of its clauses hold iff
$z=g(y,y')$.

For a test $u$ let $\mathcal K_u$ consist of the clauses of all gates of $D_u$ and the unit clause on its output
gate, so $|\mathcal K_u|\le4s+1$. Pad clauses to width three by repeating their last literal and every list to
length $m^*:=\max_u|\mathcal K_u|$ with tautologies $y\vee\bar y\vee y$ on the output gate $y$ of that test. The
constraint graph $G_T$ over $\Sigma_0$ has a \emph{proof-bit vertex} for every bit in $B(x)$, a \emph{gate
vertex} $g_{u,j}$ for every element $u$ of the multiset $U$ (duplicates get separate copies) and every gate
$j$ of $D_u$, and a \emph{clause vertex} $y_{u,c}$ for every $u$ and $c\in[m^*]$. For each clause vertex and
each occurrence $\kappa\in\{1,2,3\}$ of its clause there is an edge to the occurrence's variable vertex
$w_\kappa$, with the constraint: \emph{the label $z$ of $y_{u,c}$, read as values of the three occurrences,
satisfies the clause, and $z_\kappa$ equals the first coordinate of the label of $w_\kappa$}. So
$|E_T|=3m^*|U|$, $|V_T|\le2|E_T|$, and every vertex lies on an edge. The honest map $f_T$ labels a proof bit
$\beta$ by $(\beta,0,0)$, a gate vertex by (its value on $\pi_0|_{Q_u}$, $0$, $0$), and a clause vertex by the
values of its three occurrences; it is $q$-local, deterministic, efficient, $x$-only and total.

\begin{lemma}\label{thmD:lem:T}
(a) If $\pi_0$ passes every test then $f_T(\pi_0)$ satisfies every edge of $G_T$. (b) If $x\in L^{\rm NO}$
then $\UNSAT(G_T)\ge\hat\gamma/(3m^*)$.
\end{lemma}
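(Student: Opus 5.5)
Part (a) is a direct check of each edge under the honest labelling. Suppose $\pi_0$ passes every test, and take any clause vertex $y_{u,c}$. Its label is the triple of values of its three occurrences: a proof bit is labelled $\beta$, a gate vertex the value of its gate on $\pi_0|_{Q_u}$. Every Tseitin clause of a gate holds when all gate values are the true ones, and the unit clause on the output gate holds because $D_u(\pi_0|_{Q_u})=1$. The padding repeats a literal or adds the tautology $y\vee\bar y\vee y$, so it changes nothing. Hence the label of $y_{u,c}$ satisfies its clause. Each coordinate $z_\kappa$ equals, by construction, the first coordinate of the label of $w_\kappa$. So every edge is satisfied.

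For (b), I would convert a labelling of $G_T$ into a proof string and count violated edges per test. Fix $x\in L^{\rm NO}$ and any labelling $\sigma$ of $G_T$. Define $\pi_0$ by setting each bit in $B(x)$ to the first coordinate of the label of its proof-bit vertex, and each bit outside $B(x)$ arbitrarily. Each $D_u$ depends only on $\pi_0|_{B(x)}$, so by the gap, together with the remark that $\hat\gamma$ is also a gap, at least $\hat\gamma|U|$ elements $u$ of the multiset $U$ satisfy $D_u(\pi_0|_{Q_u})=0$.

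The key claim is that each such rejecting $u$ has at least one violated edge among the $3m^*$ edges of its own clause vertices. Read the first coordinates of the labels of the gate vertices $g_{u,j}$, together with the proof-bit values, as an assignment to the variables of $\mathcal K_u$. This assignment cannot satisfy all clauses of $\mathcal K_u$. If it did, the Tseitin equivalence (all clauses of a gate hold iff $z=g(y,y')$) would force, gate by gate in topological order, every gate value to equal its true value on $\pi_0|_{Q_u}$. The unit clause would then force the output to be $1$, contradicting rejection. So some clause of $\mathcal K_u$, which appears as some $y_{u,c}$ since the padding keeps all original clauses, is false under the first coordinates. Now suppose all three edges at $y_{u,c}$ were satisfied. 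Then the label $z$ of $y_{u,c}$ would satisfy the clause, and each $z_\kappa$ would equal the first coordinate of the label of $w_\kappa$. Hence the clause would be true under the first coordinates, a contradiction. So some edge at $y_{u,c}$ is violated.

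Distinct multiset copies $u$ have disjoint clause vertices and hence disjoint edges, since duplicates receive separate vertices. Therefore at least $\hat\gamma|U|$ edges are violated, out of $|E_T|=3m^*|U|$. This gives $\UNSAT_\sigma(G_T)\ge\hat\gamma/(3m^*)$, and minimising over $\sigma$ gives (b). The only delicate step is the topological induction through the gates. It needs the fact that gate vertices are separate per copy of $u$ and that proof-bit vertices are shared across tests but read through the same first coordinate everywhere, so that $\pi_0$ is well defined. Both facts hold by construction.
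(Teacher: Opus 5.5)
Your proposal is correct and is essentially the paper's proof: in (b) you read a proof string and gate values off the first coordinates of the labels, apply the gap $\hat\gamma$, use the Tseitin property along a topological order to find, for each rejecting test, a clause violated by these first coordinates, and conclude that one of the three edges at that clause vertex rejects. Your contrapositive (``if all three edges held, the clause would hold under the first coordinates'') is the paper's two-case split in another form, and the disjointness of clause vertices across copies gives the count $\hat\gamma|U|/(3m^*|U|)$ exactly as in the paper.
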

\begin{proof}
(a) Gate values satisfy the gate clauses, the output clause holds as $D_u(\pi_0|_{Q_u})=1$, tautologies hold,
and a clause vertex carries the actual values. (b) Fix a labelling $\sigma$. Let $\hat\pi$ take on $B(x)$ the
first coordinates of the proof-bit labels (arbitrary elsewhere) and let $\hat g$ be the first coordinates of
the gate labels. As $\hat\gamma$ is a gap, at least $\hat\gamma|U|$ elements $u$ of $U$ have $D_u(\hat\pi|_{Q_u})=0$.
For such $u$, if
$(\hat\pi,\hat g)$ satisfied all of $\mathcal K_u$, then by the gate property along a topological order $\hat g$
would be the true gate values and the output clause would fail; so some clause $c$ of $u$ is violated by
$(\hat\pi,\hat g)$. If $\sigma(y_{u,c})$ does not satisfy $c$, its three edges reject; otherwise it differs from
$(\hat\pi,\hat g)$ on some occurrence, whose edge rejects. So each such $u$ owns a rejecting edge among its
$3m^*$, and $\UNSAT_\sigma(G_T)\ge\hat\gamma|U|/(3m^*|U|)$.
\end{proof}

\subsubsection{One round and its honest map}\label{thmD:sec:round}

One round is $G\mapsto G':=(\mathrm{prep}(G))^t\circ\Pcan$ \cite[proof of Thm.~1.5]{Din07}. Here
$\mathrm{prep}=\mathrm{prep}_2\circ\mathrm{prep}_1$ \cite[Defs.~4.1, 4.2]{Din07}: $\mathrm{prep}_1$ replaces
each vertex $v$ by a cloud of vertices $(v,e)$, one per incident edge $e$, joined by expander edges with
equality constraints, and keeps each constraint $c(e)$ on the edge $\{(v,e),(v',e)\}$; $\mathrm{prep}_2$ adds
the edges of an expander and a self-loop at every vertex, with null constraints; the result $H_1$ is
$d$-regular. The power $H_2:=H_1^t$ has the same vertices and one edge per $t$-step walk; its labels assign
values to $\Gamma(u)$, the set of endpoints of $\lceil t/2\rceil$-step walks from $u$ in $H_1$, so
$|\Gamma(u)|\le d^{\lceil t/2\rceil}$ and, by the self-loops, $u\in\Gamma(u)$; an edge $(u,v)$ accepts $(a,b)$ iff
some assignment to $\Gamma(u)\cup\Gamma(v)$ agrees with $a$ and $b$ and satisfies every constraint of $H_1$
between them. Composition $\circ\,\Pcan$ \cite[Def.~5.1]{Din07} creates a block $[u]$ of $\ell$ Boolean
variables for each vertex $u$ of $H_2$ and, for each edge $\varepsilon=(u,v)$ of $H_2$, the graph
$G_\varepsilon:=\Pcan(\psi_\varepsilon)$ with $\psi_\varepsilon:=\tilde c(\varepsilon)$, the circuit on
$[u]\cup[v]$ that accepts iff its input is $(e(a),e(b))$ with $(a,b)$ satisfying the constraint of
$\varepsilon$; the blocks are shared and all other vertices of $G_\varepsilon$ are fresh for each
$\varepsilon$; $G'$ is the union. By Lemma~\ref{thmD:lem:P}, Remark~\ref{thmD:rem:L61} and \cite[Thm.~1.5]{Din07}:
\begin{enumerate}[label=(Q\arabic*)]
\item\label{thmD:Q1} $G'$ has alphabet $\Sigma_0$, is constructed in time $\poly(\size(G))$, satisfies
$\size(G')\le C_{\mathcal D}\,\size(G)$, and $\UNSAT(G')\ge\min(2\,\UNSAT(G),\alpha_{\mathcal D})$.
\end{enumerate}

\noindent For a labelling $\sigma$ of $G$ define the labelling $\phi(\sigma)$ of $G'$ by:
\begin{enumerate}[label=(\arabic*)]
\item $\sigma_1(v,e):=\sigma(v)$ for every vertex of $H_1$;
\item $\sigma_2(u):=(\sigma_1(u'))_{u'\in\Gamma(u)}$ in a canonical order, padded with $000$ to length
$d^{\lceil t/2\rceil}$;
\item the $j$-th variable of $[u]$ gets $\beta\beta\beta$ with $\beta:=e(\sigma_2(u))_j$;
\item for an edge $\varepsilon=(u,v)$ put $a:=(e(\sigma_2(u)),e(\sigma_2(v)))\in\{0,1\}^{2\ell}$. If
$\psi_\varepsilon$ is satisfiable, the variable $g\in\Lcan(\psi_\varepsilon)$ gets $g(a)g(a)g(a)$, and $z_r$ gets
$(v_1,v_2,v_3)$, where $v_j:=a_i$ if the $j$-th location read by $T'$ on $r$ is $x_i$ and $v_j:=f(a)$ if it is
$f\in L$. If $\psi_\varepsilon$ is unsatisfiable, $G_\varepsilon$ has no further vertices.
\end{enumerate}

\begin{lemma}[honest round map]\label{thmD:lem:H}
(i) If $\sigma$ satisfies $G$, then $\phi(\sigma)$ satisfies $G'$. (ii) Every label of $\phi(\sigma)$ depends only
on $\sigma|_S$ for a set $S\subseteq V(G)$ determined by $G$ and the label, with $|S|\le\cit$. (iii) $\phi$ is
total, deterministic, determined by $G$, and computable in time $\poly(\size(G))$; one label is computable from
$\sigma|_S$ in time $O(\cit)$ plus a constant. (iv) For every vertex $v$ of $G$ that lies on an edge,
$\sigma(v)$ is a deterministic function, determined by $G$, of the first coordinates of the labels of
$\phi(\sigma)$ on vertices of $G'$ that lie on edges.
\end{lemma}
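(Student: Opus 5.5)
We check the four items of Lemma~\ref{thmD:lem:H} separately, following the four layers (1)--(4) of the definition of $\phi$ and the three stages of a round ($\mathrm{prep}$, powering, composition with $\Pcan$).

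\emph{(i) Completeness.} Suppose $\sigma$ satisfies $G$. Then $\sigma_1$ satisfies $H_1$, because:
\begin{itemize}[nosep]
\item cloud edges carry equality constraints and $\sigma_1(v,e)=\sigma(v)$ on the whole cloud;
\item the kept constraint $c(e)$ on $\{(v,e),(v',e)\}$ sees exactly $(\sigma(v),\sigma(v'))$;
\item the expander edges and self-loops added by $\mathrm{prep}_2$ are null.
\end{itemize}
In $H_2$, an edge $(u,v)$ accepts $(\sigma_2(u),\sigma_2(v))$ because $\sigma_1$ restricted to $\Gamma(u)\cup\Gamma(v)$ is an assignment that agrees with both labels and satisfies every $H_1$-constraint. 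The padding with $000$ lies in coordinates outside $\Gamma$ and is not read by this check. So $a=(e(\sigma_2(u)),e(\sigma_2(v)))$ satisfies $\psi_\varepsilon$; in particular $\psi_\varepsilon$ is satisfiable and case (ii) of $\Pcan$ never arises. The labels in (3)--(4) are then exactly the honest assignment of Dinur's tester for the long code $A_a$, restricted to $\Lcan$ and to the $z_r$. By Lemma~\ref{thmD:lem:F}, $A_a$ is folded and $E(A_a|_{\Lcan})=A_a$, so every location read by $T'$ is answered consistently. Completeness of $T'$ on $A_a$ then shows that each $z_r$ satisfies $T'$'s predicate and agrees with its neighbours. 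Boolean variables take values in $\{000,111\}$, so no constraint rejects because of an illegal label.

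\emph{(ii) Locality.} A label of type (1) reads one vertex of $G$. Types (2) and (3) read $\{v:(v,e)\in\Gamma(u)\}$, which has at most $d^{\lceil t/2\rceil}$ elements. Type (4) reads $\Gamma(u)\cup\Gamma(v)$, which has at most $2d^{\lceil t/2\rceil}=\cit$ elements. Each of these sets is determined by $G$ and the label position alone.

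\emph{(iii) Totality and cost.} $\phi$ is defined on every labelling, including ones with non-Boolean or unsatisfying values: step (4) evaluates $g(a)$ and $f(a)$ for an arbitrary $a$, and the choice between case (i) and case (ii) depends only on $\psi_\varepsilon$, hence only on $G$. The running time follows from (Q1). A single label needs the encoding $e$ of a string of length $O(\cit)$ and the evaluation of one function at $a$; with the constants of $\mathcal D$ fixed, this costs $O(\cit)$ plus a constant.

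\emph{(iv) Recoverability, the main obstacle.} For a vertex $v$ of $G$ on an edge $e$, the cloud vertex $(v,e)$ exists. It belongs to $\Gamma((v,e))$ thanks to the self-loop. The block $[(v,e)]$ consists of variables that lie on edges of some $G_\varepsilon$ with $\varepsilon$ incident to $(v,e)$; such an $\varepsilon$ exists, since $H_2$ has a self-loop walk at every vertex. The first coordinates of the block labels spell $e(\sigma_2((v,e)))$. Since $e$ is injective (it has positive distance), this string determines $\sigma_2((v,e))$, and hence the coordinate $\sigma_1((v,e))=\sigma(v)$ at the canonical position of $(v,e)$.

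The delicate point is to check that every block variable actually lies on an edge of $G'$. In case (i) of $\Pcan$ the $x_i$ are read by $T'$: the dictator branch reads $x_i$ for every $i$ with positive probability. In case (ii) the cycle uses all of $X$. The argument uses no other property of $\phi$.
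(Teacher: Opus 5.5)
Your proposal is correct and follows the paper's proof essentially step by step. Completeness goes through $\sigma_1$, $\sigma_2$ and the canonical folding of Lemma~\ref{thmD:lem:F}, locality through $|\Gamma(u)|\le d^{\lceil t/2\rceil}$, and recoverability by decoding the injective code $e$ on a block $[(v,e)]$, whose variables lie on edges because $T'$ reads every $x_i$ (or, in case (ii), because of the cycle).

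The only slip is cosmetic: in (ii), the set for a type-(4) label should be the projection $\{w:(w,e)\in\Gamma(u)\cup\Gamma(v)\}\subseteq V(G)$, not $\Gamma(u)\cup\Gamma(v)$ itself, and the size bound is unchanged. Your self-loop walk is an equally valid substitute for the paper's appeal to the regularity of $H_2$ when producing an edge $\varepsilon$ at $(v,e)$.
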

\begin{proof}
(ii), (iii). Steps (1) and (3) copy or encode one earlier label; step (2) reads $\Gamma(u)$, hence $\sigma$ on
$S_2(u):=\{v:(v,e)\in\Gamma(u)\}$, and $|S_2(u)|\le|\Gamma(u)|\le d^{\lceil t/2\rceil}$ because there are
$d^{\lceil t/2\rceil}$ walks of length $\lceil t/2\rceil$ from $u$; step (4) reads $\sigma$ on
$S_2(u)\cup S_2(v)$, at most $2d^{\lceil t/2\rceil}=\cit$ vertices. Every formula is defined for every input,
whether $\psi_\varepsilon$ is satisfiable is decided by brute force over $2^{2\ell}$ inputs (a constant of
$\mathcal D$), and the sizes $|L|$, $|Z|$ are constants of $\mathcal D$.

(i). Copies satisfy the equality constraints and the inherited constraints, and the constraints added by
$\mathrm{prep}_2$ are null, so $\sigma_1$ satisfies $H_1$. For an edge $(u,v)$ of $H_2$ the assignment
$\sigma_1|_{\Gamma(u)\cup\Gamma(v)}$ witnesses its constraint. For an edge $\varepsilon$ of $H_2$, $a$ encodes
a pair satisfying the constraint of $\varepsilon$, so $\psi_\varepsilon(a)=1$; in particular
$\psi_\varepsilon$ is satisfiable. By Lemma~\ref{thmD:lem:F}, $A_a$ is folded and equals $E$ of its restriction,
which is the labelling of step (4) on $\Lcan(\psi_\varepsilon)$; so every location read by $T'$ is answered by
the value $A_a(f)=f(a)$ or $a_i$, which is $v_j$. By perfect completeness of $T$ \cite[Thm.~7.1]{Din07} and
$A_a(f)\oplus A_a(f\oplus x_i)=a_i$, the triple $z_r$ satisfies $T'$'s predicate on every $r$, and it agrees with
every variable it is joined to. All Boolean labels lie in $\{000,111\}$. So every constraint of $G'$ holds.

(iv). Let $v$ lie on an edge $e$ of $G$. Then $(v,e)$ is a vertex of $H_1$ and of $H_2$; $H_2$ is regular of
positive degree, so some edge $\varepsilon$ of $H_2$ contains $(v,e)$ and $[(v,e)]\subseteq V(G_\varepsilon)$.
Every block variable lies on an edge of $G_\varepsilon$: in case (i) of
Definition~\ref{thmD:def:tester} because $T'$ reads every $x_i$ with positive probability, in case (ii) by
construction. The first coordinates of the labels on $[(v,e)]$ form $e(\sigma_2((v,e)))$; $e$ is injective (its
relative distance is positive), so they determine $\sigma_2((v,e))$, whose coordinate at
$(v,e)\in\Gamma((v,e))$ is $\sigma_1(v,e)=\sigma(v)$.
\end{proof}

\noindent Lemma~\ref{thmD:lem:H}(i) is needed because \cite[Thm.~1.5]{Din07} only asserts that a satisfying
labelling exists, and the proof of \cite[Lemma~1.8]{Din07} chooses the tester's labels by the completeness
property of the tester, an existential step; Lemma~\ref{thmD:lem:H} makes the witness a local deterministic
function of the old labelling, with $z_r$ defined directly.

\subsubsection{Iteration and parameters}\label{thmD:sec:iter}

Let $G_0:=G_T$ and $G_{j+1}:=(\mathrm{prep}(G_j))^t\circ\Pcan$ for $j<k$ (here $k\ge2$, as $m^*\ge1\ge\hat\gamma$);
every $G_j$ has alphabet $\Sigma_0$, so one instantiation serves all rounds. Let
$f_x:=\mathrm{enc}\circ\phi_{G_{k-1}}\circ\cdots\circ\phi_{G_0}\circ f_T$, where $\mathrm{enc}$ writes each label
of $G_k$ as its three bits in a fixed vertex order, and $P'(x,w):=f_x(P_0(x,w))$. The verifier computes $G_k$,
draws $m=\lceil1/\alpha_{\mathcal D}\rceil$ edges independently and uniformly, reads the six bits of each and accepts iff
all $m$ constraints hold; every bit string of the right length is a labelling, as $|\Sigma_0|=2^3$.

\emph{Completeness} follows from Lemma~\ref{thmD:lem:T}(a) and Lemma~\ref{thmD:lem:H}(i). \emph{Soundness}:
by Lemma~\ref{thmD:lem:T}(b) and \ref{thmD:Q1}, $\UNSAT(G_k)\ge\min(2^k\hat\gamma/(3m^*),\alpha_{\mathcal D})
=\alpha_{\mathcal D}$, since $2^k\ge3m^*/\hat\gamma$; so the acceptance probability is at most
$(1-\alpha_{\mathcal D})^{\lceil1/\alpha_{\mathcal D}\rceil}\le e^{-1}<\frac12$. \emph{Length}:
$3|V(G_k)|\le3C_{\mathcal D}^k\size(G_0)\le3C_{\mathcal D}^k\cdot3|E_T|=27m^*|U|C_{\mathcal D}^k$, and
$2^{k-1}<3m^*/\hat\gamma\le3m^*/\gamma$ gives $C_{\mathcal D}^k=C_{\mathcal D}\cdot(2^{k-1})^{E_C}\le
C_{\mathcal D}(3m^*/\hat\gamma)^{E_C}\le C_{\mathcal D}(3m^*/\gamma)^{E_C}$.
\emph{Locality}: by Fact~\ref{thmD:fact:comp}, $f_x$ is $q\cit^{\,k}$-local, and likewise
$\cit^{\,k}\le\cit(3m^*/\hat\gamma)^{E_L}\le\cit(3m^*/\gamma)^{E_L}$. \emph{Time}: the list of tests is computed in
time $\poly(|x|)$, so $m^*,|U|\le\poly(|x|)$; as $1/\hat\gamma\le|U|$,
$C_{\mathcal D}^k\le C_{\mathcal D}(3m^*|U|)^{E_C}$ and $\cit^{\,k}\le\cit(3m^*|U|)^{E_L}$ are $\poly(|x|)$, so
$G_0,\dots,G_k$ are built in polynomial time. One output bit is evaluated through its dependency graph: one
label of $G_k$, at most $\cit$ labels of $G_{k-1}$, \dots, at most $\cit^{\,k}$ labels of $G_0$, and at most
$q\cit^{\,k}$ symbols of $\pi_0$; since $\cit\ge2$, the number of nodes is at most
$\sum_{j\le k}\cit^{\,j}+q\cit^{\,k}\le2\cit^{\,k}+q\cit^{\,k}\le3L_D$, each computed in time $\poly(|x|)$.
\emph{Information} (item~\ref{thmD:it:f}): $f_T(\pi_0)$ depends only on $\pi_0|_{B(x)}$, since proof-bit
vertices exist only for $B(x)$ and gate values depend only on gate inputs; conversely the proof-bit vertices
of $G_T$ lie on edges, so by Lemma~\ref{thmD:lem:H}(iv) applied $k$ times, $\pi_0|_{B(x)}$ is a function of the
first coordinates of the labels of $G_k$, which $\mathrm{enc}$ writes out. This proves
\ref{thmD:it:a}--\ref{thmD:it:c} and \ref{thmD:it:f}.

\subsubsection{Zero knowledge}\label{thmD:sec:zk}

\begin{lemma}[{local computability; {\cite[Lemma~3.2]{GOS25}} with a cache and a cap}]\label{thmD:lem:Z}
Let $(P_0,V_0)$ be black-box perfect (resp.\ $\varepsilon$-statistical) zero knowledge with query bound $b$ via a
universal $\Sim_0$, and let $P'=f_x\circ P_0$ with $f_x$ deterministic, efficient, $x$-only, total and $L$-local.
For a verifier $V^*$ of $P'$ let $A$ run $V^*$ and answer each query $\alpha$ by querying $\pi_0$ on the positions
of $S_x(\alpha)$ not yet cached and evaluating $f_x(\alpha)$ on the cache, and let $A'$ be $A$ halted just
before its $(b+1)$-st distinct query. Let $g$ map a record $(\rho,(\text{query},\text{answer})_i)$ of $A'$ to:
the view of $V^*$ obtained by replaying $V^*$ on $\rho$ and answering each $\alpha$ by $f_x(\alpha)$ on the
recorded answers, if the replay ends with every needed position recorded; and a fixed symbol $\bot_{\rm tr}$
otherwise. Let $\Sim^{V^*}(x)$ run $\Sim_0^{A'}(x)$ and output $\bot$ if it does, and $g$ of its output otherwise.
If $\Sim_0$ is straight-line, $\Sim$ applies no $g$: it outputs $\Sim_0$'s output if this is a failure symbol,
$\bot_{\rm tr}$ if the cap fires in the single run of $A'$ inside $\Sim_0$, and otherwise $\rho$ with the
query--answer pairs of $V^*$ recorded during that run.
Then $\Sim$ is universal and black-box, and $\Sim^{V^*}(x)\equiv\View_{V^*,P'}(x,w)$ (resp.\ within
$\varepsilon$) for every $(x,w)\in R^{\rm YES}$ and every $V^*$ for which $A$ makes at most $b$ distinct queries
on every execution against the support of $P_0(x,w)$, in particular for every $\lfloor b/L\rfloor$-bounded
$V^*$. If $\Sim_0$ is straight-line, so is $\Sim$.
\end{lemma}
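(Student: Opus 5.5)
The plan is to view $A$ (and its capped version $A'$) as a verifier strategy against $\pi_0$ and to reduce the claim to the zero-knowledge property of $(P_0,V_0)$ applied to $A'$. First I check that $A'$ is a legitimate $b$-bounded strategy. Its coins are $\rho$ and it runs $V^*$ as a subroutine. By construction it halts just before a $(b+1)$-st distinct query, so it makes at most $b$ distinct queries on every execution and against every oracle. It is defined on every $\pi_0$ because $f_x$ is total, and it uses neither $w$ nor $P_0$'s coins because $f_x$ and $S_x$ are $x$-only. The hypothesis on $\Sim_0$ (universal, black-box) therefore gives $\Sim_0^{A'}(x)\equiv\View_{A',P_0}(x,w)$, or $\varepsilon$-closeness in the statistical case. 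Black-box access to $A'$ is implemented by black-box access to $V^*$ together with evaluations of $f_x$, which are efficient. Hence $\Sim$ is universal and black-box.

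Next I identify $g(\View_{A',P_0}(x,w))$ with $\View_{V^*,P'}(x,w)$. Fix $\rho$ and $\pi_0$ in the support of $P_0(x,w)$. Under the stated hypothesis, $A$ makes at most $b$ distinct queries on this execution, so the cap never fires and $A'=A$. Each answer $A$ gives to $V^*$ is $f_x(\alpha)(\pi_0|_{S_x(\alpha)})=P'(\ldots)_\alpha$ with $\pi_0$ evaluated on the cache, and locality guarantees this equals $f_x(\pi_0)_\alpha$. So the run of $V^*$ inside $A$ is exactly its run against $\pi'=f_x(\pi_0)$. Replaying from the record reproduces that run, every needed position is recorded, and $g$ returns the true view and never $\bot_{\rm tr}$.

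Because $P'=f_x\circ P_0$, pushing forward the joint law of $(\rho,\pi_0)$ gives $g(\View_{A',P_0})\equiv\View_{V^*,P'}$. Applying the deterministic map $g$, and sending $\bot$ to $\bot$, preserves equality of laws and does not increase statistical distance. This yields the perfect and the $\varepsilon$-statistical claims. The bound $\lfloor b/L\rfloor$ follows because each distinct query $\alpha$ of $V^*$ adds at most $|S_x(\alpha)|\le L$ new cached positions, so $A$ makes at most $L\lfloor b/L\rfloor\le b$ distinct queries on every execution against every oracle.

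The delicate step, which I expect to be the main obstacle, is the straight-line case and the subtle point that the cap may fire off the support of $P_0(x,w)$. If $\Sim_0$ is straight-line, it runs $A'$ once and answers each query when it is made, so $\Sim$ runs $V^*$ once inside that run. The recorded $V^*$ query–answer pairs, together with $\rho$, are exactly what $g$ would compute by replay, since the replay is deterministic given $\rho$ and the answers. Outputting them directly is therefore the same function of $\Sim_0$'s output, and $\Sim$ is straight-line.

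Where the simulated answers fall outside the support, the cap makes $A'$ a valid $b$-bounded verifier, so the guarantee for $\Sim_0$ still applies. In the exact case such runs have probability $0$, because the laws coincide and off-support events, including $\bot_{\rm tr}$, occur with probability $0$ under the real view. In the statistical case they are absorbed into the $\varepsilon$ distance. Care is needed to check that this applies the guarantee of $\Sim_0$ only to $A'$, and never to the uncapped $A$.
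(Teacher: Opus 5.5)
Your proposal is correct and follows the paper's proof essentially step by step. You apply $\Sim_0$'s guarantee only to the capped, cached $A'$, use the hypothesis that the cap never fires on the support of $P_0(x,w)$ so that $g$ recovers $\View_{V^*,P'}(x,w)$ from $\View_{A',P_0}(x,w)$, push forward through the deterministic map $g$, and in the straight-line case identify the output with a deterministic function of $\Sim_0$'s output. The only point the paper makes explicit that you leave implicit is that on a record where the cap fires, the replay in $g$ also returns $\bot_{\rm tr}$, because the query at which $A'$ halts needs an unrecorded position. So the straight-line output is exactly $h(\Sim_0^{A'}(x))$, with $h$ fixing failure symbols and agreeing with $g$ elsewhere, and this is what makes your ``absorbed into the $\varepsilon$ distance'' step an instance of the data-processing inequality.
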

\begin{proof}
Against $\pi_0\leftarrow P_0(x,w)$, $A$ answers each $\alpha$ by $f_x(\alpha)(\pi_0)$, because the cache holds
true values and $f_x(\alpha)$ depends only on $S_x(\alpha)$; since $P'(x,w)$ is $f_x(P_0(x,w))$ and $f_x$ does not
depend on $w$, the coins and answers of $V^*$ inside $A$ are distributed as $\View_{V^*,P'}(x,w)$. Because of the
cache $A'$ never repeats a query, so it is $b$-bounded on every oracle and is a legitimate verifier for
$(P_0,V_0)$ (no computational bound is required); hence $\Sim_0^{A'}(x)\equiv\View_{A',P_0}(x,w)$ (resp.\ within
$\varepsilon$). On the support of $P_0(x,w)$ the cap never fires, by hypothesis, so
$\View_{A',P_0}=\View_{A,P_0}$, and on these records the replay finds every needed position: $g$ reproduces
$V^*$'s queries and, as $f_x$ is total and local, its answers. So $g(\View_{A,P_0}(x,w))=\View_{V^*,P'}(x,w)$,
and applying the deterministic total map $g$ to both sides gives the claim, since a deterministic map does not
increase statistical distance. In the perfect case the record has the law of $\View_{A,P_0}$, so $\bot_{\rm tr}$
has probability $0$; in the $\varepsilon$-statistical case it has probability at most $\varepsilon$. A
$\lfloor b/L\rfloor$-bounded $V^*$ makes $A$ read at most $L\lfloor b/L\rfloor\le b$ positions on every execution
for every oracle. If $\Sim_0$ is straight-line, each of its outputs other than a failure symbol is the view of
its single run of $A'$: $\rho$ and the answers given to $A'$. As $V^*$ is a deterministic function of $\rho$ and
its answers, and inside $A'$ each answer is $f_x(\alpha)$ evaluated on these answers, the replay in $g$ repeats
that run; it returns the pairs recorded there if the run ends normally, and $\bot_{\rm tr}$ if the cap fires,
since the query at which $A'$ halts needs a position not recorded. So the straight-line $\Sim$ outputs
$h(\Sim_0^{A'}(x))$, where $h$ fixes failure symbols and agrees with $g$ elsewhere; as $\View_{A',P_0}(x,w)$
never is a failure symbol, $h$ and $g$ agree on it, the argument above applies to $h$, and $V^*$ runs at most once (exactly once unless $\Sim_0$ outputs a failure
symbol without running $A'$).
\end{proof}

\begin{lemma}[efficiency]\label{thmD:lem:DP}
If $P_0$ is (SP) or (LV), so is $P'$. Let $p(|x|)$ bound the time to build $G_0,\dots,G_k$ and the time per
node of a dependency graph, let $Q^*$ be the number of distinct queries of $V^*$ on a run (at most the
length of $P'$, and at most $\lfloor b/L_D\rfloor$ if $V^*$ is $\lfloor b/L_D\rfloor$-bounded), and let $N_0$ be
the number of steps of $\Sim_0^{A'}$ with each interaction with $A'$ counted as one step, not counting the $|\rho|$
steps in which it tosses the coins $\rho$ of $A'$, which are those of $V^*$ (Definition~\ref{def:zk}). Under the
convention of Definition~\ref{thmD:def:zk}, $\Sim^{V^*}$ takes at most $|\rho|+p(|x|)+(N_0+1)(2+Q^*+3Q^*L_D\,p(|x|))$
steps, and at most $|\rho|+p(|x|)+N_0+Q^*+1+3Q^*L_D\,p(|x|)$ if $\Sim_0$ is straight-line. If $\Sim_0$ is of class
(SP), (EP) or (GB), then $N_0$ is bounded by a polynomial in $(|x|,b)$ (in expectation for (EP)) by
Definition~\ref{def:zk}; hence (SP), (EP) and (GB) are preserved when $b\le\poly(|x|)$.
\end{lemma}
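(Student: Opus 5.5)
The plan is to treat the prover and simulator claims separately, and to obtain the step bound by accounting for the three phases of $\Sim$.

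\emph{Prover.} By Theorem~\ref{thmD:thm}\ref{thmD:it:c}, $P'=f_x\circ P_0$. Computing $f_x$ on a whole proof $\pi_0$ is deterministic and takes polynomial time. We first build $G_0,\dots,G_k$ in time $p(|x|)$. We then evaluate every output bit through its dependency graph, which has at most $3L_D$ nodes at $p(|x|)$ each. The number of output bits is the length, which is polynomial by item~\ref{thmD:it:b} together with Remark~\ref{thmD:rem:variants}(ii). So $P'$ costs $P_0$'s time plus a fixed polynomial. A strict bound therefore stays strict, an expected bound stays expected, and exactness of the output law is untouched because $f_x$ is deterministic.

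\emph{Simulator.} $\Sim$ has three phases.
\begin{enumerate}[nosep]
\item It tosses $\rho$, at $|\rho|$ steps.
\item It builds $G_0,\dots,G_k$ once, at $p(|x|)$ steps.
\item It runs $\Sim_0^{A'}$ and charges each interaction of $\Sim_0$ with $A'$ the internal cost of $A'$.
\end{enumerate}
One interaction of $A'$ delivers at most one query of $\Sim_0$'s oracle or returns an answer. To produce its next query, $A'$ resumes $V^*$ through $\Mem$, which takes one step. It may then receive a new query $\alpha$. It computes $S_x(\alpha)$ and evaluates $f_x(\alpha)$ on the cache through the dependency graph, at most $3L_Dp(|x|)$ steps. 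Over a run of $A'$ there are at most $Q^*$ such evaluations, plus $Q^*+1$ interactions with $\Mem$ (queries and the final output) and the bookkeeping for the cap.

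In the general (rewinding) case, $\Sim_0$ may restart $A'$ at each of its $N_0$ steps. We bound the work of each restart by a full run of $A'$, which costs at most $2+Q^*+3Q^*L_Dp(|x|)$. This gives the factor $(N_0+1)$, with the extra $+1$ covering the final application of $g$, which replays $V^*$ once at the same cost. In the straight-line case $A'$ runs exactly once, so these costs are added to $N_0$ rather than multiplied, and $g$ is not applied. That yields $|\rho|+p(|x|)+N_0+Q^*+1+3Q^*L_Dp(|x|)$.

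The main obstacle is to get this charging exactly right. Two points need care. First, the coins $\rho$ must be counted only once: $\Sim_0$'s coin steps for $A'$ are exactly the coin steps for $V^*$, so they are excluded from $N_0$ and from the per-run cost. Second, the cache must be maintained consistently across rewinds. I would handle rewinds by letting each invocation of $A'$ start from a fresh cache keyed by $\Sim_0$'s transcript.

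\emph{Preservation of classes.} If $\Sim_0$ is (SP), (EP) or (GB), then $N_0\le\poly(|x|,b)$, strictly or in expectation. Also $Q^*\le\lfloor b/L_D\rfloor\le b\le\poly(|x|)$ and $L_D\le\poly(|x|)$. The bound above is affine in $N_0$ with polynomial coefficients, so linearity of expectation handles (EP). For (GB), failure events occur only where $\Sim_0$ outputs $\bot$, so the failure probability is at most $\frac12$. Exactness conditioned on non-failure follows from Lemma~\ref{thmD:lem:Z}, because $\bot_{\rm tr}$ has probability $0$ in the perfect case.
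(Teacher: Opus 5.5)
Your proposal is correct and follows the paper's own accounting. Each interaction of $\Sim_0$ with $A'$ is charged one re-run of $V^*$ along the given history: at most $Q^*+1$ interactions with the memoising interface and at most $Q^*$ dependency-graph evaluations of at most $3L_D$ nodes each. The replay in $g$ adds one more run, and in the straight-line case these costs are added to $N_0$ rather than multiplied. Because the multiplier is bounded deterministically, strict and expected polynomial bounds on $N_0$ carry over, and $\Sim$ fails exactly when $\Sim_0$ does.

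One cosmetic point: $\rho$ is not tossed in a separate phase but by $\Sim_0$ as the coins of $A'$. Your remark about counting the coins only once already covers this.
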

\begin{proof}
$f_x$ is a deterministic polynomial-time map applied after sampling, so the prover keeps its exact output law
and its running-time class. An interaction of $\Sim_0$ with $A'$ (a next-message call on a given history)
re-runs $V^*$ through $\Mem$ along that history, at a cost of at most $Q^*+1$ interactions with $\Mem$; it answers
at most $Q^*$ distinct queries of $V^*$, each by evaluating a dependency graph with at most $3L_D$ nodes; the replay in
$g$ is one more such run. In the straight-line case there is at most one run and no replay. The
multiplier is deterministic, so expectations of $N_0$ stay polynomial, and $\Sim$ outputs $\bot$ exactly when
$\Sim_0$ does.
\end{proof}

\noindent Item~\ref{thmD:it:d} of Theorem~\ref{thmD:thm} is Lemma~\ref{thmD:lem:Z} with $L=L_D$, and
item~\ref{thmD:it:e} is Lemma~\ref{thmD:lem:DP}. \qed

\subsection{When an input PCP suffices}\label{thmD:sec:cor}

The following corollary packages Theorem~\ref{thmD:thm} as a black box: any polylogarithmic-query zero-knowledge PCP
with the parameters below yields the conclusion of our main theorem. Section~\ref{sec:main} does not use it; it applies
Theorem~\ref{thmD:thm} directly.

\begin{corollary}\label{thmD:cor:in}
Let $R\in\mathrm{NTIME}(T)$ with $T$ a time-constructible polynomial and $T(n)\ge n$. Suppose there are
absolute constants $e_1,e_2\ge1$, with $e_2$ rational,
and a family $(\Pi_B)_{B\ge1}$ of PCPs for $R$, uniform in $B$ (one prover, one verifier and one simulator
algorithm, each given $1^B$), such that with $N:=T(|x|)+B$, logarithms to base two and $N\ge4$:
\begin{enumerate}[label=(\arabic*)]
\item the honest verifier is non-adaptive with a coin multiset of size $|U|\le N(\log N)^{e_1}$;
\item the residual circuits have at most $s\le(\log N)^{e_2}$ gates and are computable in polynomial time,
and $q\le(\log N)^{e_2}$;
\item perfect completeness and gap $\gamma\ge(\log N)^{-e_2}$;
\item black-box perfect (resp.\ $\varepsilon$-statistical) zero knowledge with query bound $B$ against adaptive
verifiers, via a universal simulator of class (SP), (EP) or (GB) in time $\poly(|x|,B)$, and a prover of class
(SP) or (LV).
\end{enumerate}
Then for every time-constructible polynomial $b\ge1$, $R$ has a binary PCP with at most $Q_D$ non-adaptive
queries, perfect
completeness, soundness error at most $\frac12$, length $O\big((T+b)(\log(T+b))^{e_1+2e_2(1+E_C+E_L)}\big)$,
and perfect (resp.\ $\varepsilon$-statistical) zero knowledge against every adaptive $b$-bounded verifier,
with a simulator of the same class.
\end{corollary}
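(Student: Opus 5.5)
The plan is to instantiate the family at a query bound $B$ that pays in advance for the locality loss $L_D$ of Theorem~\ref{thmD:thm}, and then apply that theorem once. Given $x$ and the target bound $b=b(|x|)$, put $N_0:=T(|x|)+b$. Choose $B:=\lceil b\,\Lambda(N_0)\rceil$, where $\Lambda$ is an explicit polylogarithmic envelope, with $\Lambda(N)\ge2L_D$ for the PCP $\Pi_B$. Then run Theorem~\ref{thmD:thm} on $\Pi_B$. Item~(d) gives zero knowledge against $\lfloor B/L_D\rfloor\ge b$ queries. Items~(a), (e) give the format, the soundness error $\frac12$, the $Q_D$ queries and the preservation of the simulator and prover classes, since $B\le\poly(|x|)$. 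The family is uniform in $B$, so prover, verifier and simulator remain single algorithms.

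The first step is the envelope. With $N:=T+B$, hypotheses (2)--(3) give $s,q\le(\log N)^{e_2}$ and $1/\gamma\le(\log N)^{e_2}$. By Theorem~\ref{thmD:thm}(c),
\[L_D\le q\,c_{\rm it}\bigl(3(4s+1)/\gamma\bigr)^{E_L}\le c\,(\log N)^{e_2+2e_2E_L}.\]
There is a circularity: $N$ depends on $B$, and $B$ depends on $L_D$. I break it as in Step~3 of the proof of Theorem~\ref{main:thm:main}. Evaluate the envelope at $N_0^2$ rather than at $N$. Then for $N_0$ above a threshold, $N\le N_0+b\Lambda(N_0^2)+1\le N_0^2$, so $\Lambda(N)\le\Lambda(N_0^2)$ by monotonicity. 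For this I define $\Lambda(N):=2c(\log N)^{e_2(1+2E_L)}$, computed to within $\frac12$ and rounded up. Rationality of $e_2$ ensures it is computable in polynomial time. Inputs below the threshold are handled by a hard-coded table, as in Step~7 of that proof.

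The second step is the length. Theorem~\ref{thmD:thm}(b) bounds it by $27(4s+1)|U|C_{\mathcal D}(3(4s+1)/\gamma)^{E_C}$. Substituting hypothesis~(1) gives
\[O\bigl(N(\log N)^{e_1+e_2+2e_2E_C}\bigr).\]
Next, $N\le T+b\Lambda(N_0^2)+1=O\bigl((T+b)(\log(T+b))^{e_2(1+2E_L)}\bigr)$, and $\log N=O(\log(T+b))$. Together these give exponent $e_1+2e_2+2e_2(E_C+E_L)$, which is within the stated $e_1+2e_2(1+E_C+E_L)$. Here $(4s+1)\le5(\log N)^{e_2}$ supplies one factor of $e_2$, and the base $3(4s+1)/\gamma\le15(\log N)^{2e_2}$ supplies the $2e_2$ multiplying $E_C$. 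I will track these constants, and absorb the $O(1)$ factors from $\lceil\cdot\rceil$ in $k$ into the leading constant.

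The main obstacle is making the fixed-point choice of $B$ rigorous, so that $B\ge b\cdot L_D(\Pi_B)$ holds genuinely. Two facts need care: $L_D$ is bounded through the parameters of $\Pi_B$ itself, and the bounds of (1)--(3) are stated in terms of $N=T+B$. I would verify monotonicity of every envelope in $N$ and carry out the $N\le N_0^2$ threshold argument explicitly. The remaining items, namely soundness, completeness, straight-line simulation and the class of the simulator (with $b\le\poly$ making the time bound of Lemma~\ref{thmD:lem:DP} polynomial), follow directly from Theorem~\ref{thmD:thm}.
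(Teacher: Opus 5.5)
Your proposal is correct and follows the paper's proof essentially step for step. The paper likewise takes $B:=b\cdot\Xi(Y^2)$ with $Y=T+b$ and an explicit polylogarithmic envelope $\Xi\ge c_{\rm it}15^{E_L}(\log N)^{e_2(1+2E_L)}\ge L_D$. It breaks the circularity via $N\le Y^2$ above an absolute threshold, handles small inputs by table look-up, and arrives at the same exponent $e_1+2e_2(1+E_C+E_L)$.

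Your extra factor $2$ is unnecessary in the paper's version, which instead adds $+1$ after rounding to get $\Xi\ge X$. In your version, though, the factor $2$ is what makes ``computed to within $\tfrac12$ and rounded up'' still dominate $L_D$.
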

\begin{proof}
Put $X(N):=\cit\,15^{E_L}(\log N)^{e_2(1+2E_L)}$, a nondecreasing function, and $\Xi(N):=\lceil\tilde X(N)\rceil+1$,
where $\tilde X(N)$ is the output of a fixed deterministic evaluation of $X(N)$ to within $\frac12$ (possible as
$e_2$ is rational and $\cit$, $E_L$ are fixed constants of one explicit algorithm, so the $O(\log\log N)$ bits needed
are obtained in time $\mathrm{poly}(\log N)$), so that $X\le\Xi<X+3$ and $\Xi(N)$ is computable in time $\mathrm{poly}(\log N)$. Since
$4s+1\le5s$ and $1/\gamma\le(\log N)^{e_2}$, Theorem~\ref{thmD:thm}\ref{thmD:it:c} gives
$L_D(\Pi_B)\le q\cit(15(\log N)^{2e_2})^{E_L}\le X(N)\le\Xi(N')$ for every $N'\ge N$. Given $x$, let $Y:=T(|x|)+b(|x|)$ and
$B:=b(|x|)\cdot\Xi(Y^2)$, computable in polynomial time. There is an absolute $Y_{\rm thr}$ such that for
$Y\ge Y_{\rm thr}$ we have $\Xi(Y^2)\le Y$, hence $N=T+B\le Y\,\Xi(Y^2)\le Y^2$, so
$L_D(\Pi_B)\le\Xi(Y^2)$ and $\lfloor B/L_D\rfloor\ge b$; Theorem~\ref{thmD:thm}\ref{thmD:it:d} then gives zero
knowledge against $b$-bounded verifiers, and \ref{thmD:it:e} the simulator class (as $B\le\poly(|x|)$). By
\ref{thmD:it:b} the length is at most
$27\cdot5(\log N)^{e_2}\cdot N(\log N)^{e_1}\cdot C_{\mathcal D}(15(\log N)^{2e_2})^{E_C}$, and with
$N\le Y\,\Xi(Y^2)$ and $\log N\le2\log Y$ this is $O(Y(\log Y)^{e_1+2e_2(1+E_C+E_L)})$. For $Y<Y_{\rm thr}$
we have $|x|\le T(|x|)<Y_{\rm thr}$: the verifier decides $x$ by table look-up and the proof is empty, which is trivially
zero knowledge.
\end{proof}

\begin{remark}[why each hypothesis is needed for this route]\label{thmD:rem:hyp}
\emph{Coins, not length:} the front end creates one gadget per element of $U$ and no round shrinks a graph, so
the output length is at least $|U|$; deterministic reduction of an arbitrary coin multiset to
$\log n+O(\log\log n)$ coins is not known in general. \emph{Residual size, not queries:} with alphabet
$\{0,1\}^a$ a polynomial-time residual predicate can need circuits of polynomial size, and $L_D$ grows as
$s^{E_L}$. \emph{Gap:} $L_D$ grows as $\gamma^{-E_L}$. \emph{Perfect completeness:}
\cite[Thm.~1.5]{Din07} is a statement about $\UNSAT=0$. \emph{Non-adaptive honest verifier:}
the front end needs fixed query tuples. \emph{Adaptive zero knowledge of the input:} with only non-adaptive
zero knowledge of the input, Theorem~\ref{thmD:thm} certifies only non-adaptive zero knowledge of the output
(Remark~\ref{thmD:rem:variants}(iii)). \emph{``$b\cdot\mathrm{polylog}$'':} the input must support query bounds
larger than $b$ by the factor $\Xi=(\log)^{\Theta(e_2E_L)}$, with exponents independent of $B$.
\end{remark}

\subsection{Remarks}\label{thmD:sec:rem}

\begin{remark}[constants]\label{thmD:rem:const}
Correctness does not depend on the values of $C_{\mathcal D}$, $\alpha_{\mathcal D}$ or $\cit$. The proof of
\cite[Thm.~1.5]{Din07} sets $\alpha_{\mathcal D}=\beta_2\beta_3/\sqrt t$ and needs of $t$ only
$\beta_1\beta_2\beta_3\sqrt t\ge2$ (it takes $t=\lceil(2/(\beta_1\beta_2\beta_3))^2\rceil$). As Remark~\ref{thmD:rem:L61}
proves \cite[Lemma~6.1]{Din07} only for $t=2r^2$ with $r\ge r_0$, $\mathcal D$ takes $t:=2r^2$ with
$r:=\max(r_0,\lceil\sqrt2/(\beta_1\beta_2\beta_3)\rceil)$, so $\sqrt t\ge2/(\beta_1\beta_2\beta_3)$. Lower bounds on what
Dinur's analysis, with the repairs of this section, yields are $t\ge9\cdot2^{42}$, $E_L\ge3.958\cdot10^{13}$,
$Q_D\ge10^{13}$ and $E_C\ge2^{6\cdot2^t}$. They only grow as $\beta_1,\beta_2,\beta_3$ decrease (with $\beta_2$ as
repaired in Remark~\ref{thmD:rem:L61}(c)) and as $r_0$ grows: $t$ and $Q_D=6\lceil\sqrt t/(\beta_2\beta_3)\rceil$
directly, $E_L=1+\lceil t/2\rceil\log_2d_{\mathcal D}\ge1+t$ (as $d_{\mathcal D}\ge4$) and $E_C$ through $t$, the degree
$d_{\mathcal D}$ being fixed and the composition blowup nondecreasing in the alphabet size
$8^{d_{\mathcal D}^{\lceil t/2\rceil}}$. The stated values use $\beta_1\le\frac18$, $\beta_2\le2^{-12}$ and
$\beta_3\le\frac1{96}$; these
bound the analysis, not gap amplification itself. The canonical tester $\Pcan$ changes only the constant size
factor of composition (through $N_E$), not the rejection probability. The repair of the case $\delta>\frac12$ in the
proof of Lemma~\ref{thmD:lem:P} changes the analysis, not the tester, and halves the guaranteed rejection probability,
to $\varepsilon=\frac12\varepsilon_{\rm pr}$; the constant $\beta_3=\varepsilon\rho/4$ of Lemma~\ref{thmD:lem:P}, and
through it $t$, $\alpha_{\mathcal D}$ and $Q_D$, are computed from this $\varepsilon$. This only lowers $\beta_3$, which
stays at most $\frac1{96}$ as $\varepsilon\le\frac1{24}$ and $\rho\le1$, so the stated lower bounds remain valid.
\end{remark}

\begin{remark}[a known gap elsewhere in \cite{Din07}]\label{thmD:rem:GM}
Goldreich and Meir \cite{GM20} identify and bridge a gap in the extension of gap amplification to assignment
testers (\cite[\S9]{Din07}); by their account it does not affect PCPs. Theorem~\ref{thmD:thm} uses
\cite[\S\S1--7 and Apps.~A--B]{Din07}, with the corrections listed at the start of this section, not \S9.
\end{remark}

\endgroup

\begingroup
\providecommand{\arxivid}[1]{arXiv:~\texttt{\detokenize{#1}}}

\section{Related work and credits}\label{sec:related}

Sections~\ref{sec:intro-problem} and~\ref{sec:intro-compare} compare our results with the closest work
\cite{GOS25,ZOGS26,BV26,BCGV16,BCFGRS17}. This section surveys the wider literature, states the two guarantees of
Bafna and Vyas \cite{BV26} precisely, and ends with a table of every ingredient taken from prior work
(Table~\ref{tab:credits}).

\paragraph{Zero-knowledge PCPs.} Zero-knowledge PCPs go back to Dwork, Feige, Kilian, Naor and Safra \cite{DFKNS92}, whose soundness error is
constant; Kilian, Petrank and Tardos \cite{KPT97} gave the first with polynomially small soundness error.
Ishai and Weiss \cite{IW14} gave zero-knowledge PCPs of proximity. Ishai, Weiss and Yang \cite{IWY16} and Hazay,
Venkitasubramaniam and Weiss \cite{HVW22} gave non-adaptive constructions from leakage-resilient circuits and secret
sharing; Weiss \cite{Wei22} surveys this line of work. Goyal, Ishai, Mahmoody and Sahai \cite{GIMS10} related
zero-knowledge PCPs to interactive locking and unconditional cryptography, and Ishai, Mahmoody, Sahai and Xiao
\cite{IMSX15} studied their limitations. Mahmoody and Xiao \cite{MX13} showed that languages with efficient
zero-knowledge PCPs are in SZK. Perfect zero knowledge with an efficient simulator and constant queries is due to Gur,
O'Connor and Spooner \cite{GOS25} (see also \cite{OCo26}). This construction builds on earlier work
\cite{BCFGRS17,CFS17,CFGS22} and on the perfect zero-knowledge PCPs for $\#\mathrm P$ of the same authors
\cite{GOS24}, which make polynomially many queries. Zeilberger, O'Connor, Gur and Spooner \cite{ZOGS26} obtained
quasilinear zero-knowledge PCPs, of length $\tilde O(T+b^2)$, the first whose length does not grow multiplicatively
with the query bound. Ron-Zewi and Weiss \cite{RZW26} relate zero-knowledge codes to quantum CSS codes.

Bafna and Vyas \cite{BV26} construct \emph{private} PCPs from product expansion of tensor Reed--Solomon codes on
multiplicative grids of coprime orders: for $k=2$ unconditionally (and for larger $k$ under their product-expansion
conjecture) they obtain PCPs of quasilinear length $N$ over an alphabet of polynomial size, with $O(kN^{1/k})$
queries. In these PCPs, for every assignment $w$, satisfying or not, the honest proof restricted to any set of
positions of measure less than $\nu\ge\exp(-O(k^3+k^2\log(1/\rho)))/\log n$ has a law independent of $w$ ($\rho$ is
the product-expansion constant), where the measure is the one induced by the verifier's queries
\cite[Def.~8.2, Cor.~8.3]{BV26}. This measure is far from uniform (the output wire alone has measure at least
$\frac13$ \cite[Remark~8.4]{BV26}), so this guarantee is not a bound on the number of symbols a view may contain: a
view of the output wire alone is not covered. They also give a separate PCP of length at most $N$ that is perfect
zero knowledge, a guarantee for satisfying assignments only, against fewer than
$N/(\log n\cdot\exp(O(k^3+k^2\log(1/\rho))))$ queries, with no efficient simulator \cite[Cor.~8.6]{BV26}. In their
words, the construction ``also implies perfect-zero-knowledge PCPs with the same asymptotic parameters although we
do not obtain an efficient simulator'' \cite[\S1]{BV26}.

\paragraph{Zero knowledge in interactive models.} In the interactive-oracle-proof model, quasilinear-size zero knowledge
is due to Ben-Sasson, Chiesa, Gabizon and Virza \cite{BCGV16}. Ben-Sasson, Chiesa, Forbes, Gabizon, Riabzev and
Spooner \cite{BCFGRS17} developed the succinct-constraint-detection method, including the simulation of random and masked codewords
\cite[Lemma~4.3, Claim~7.7]{BCFGRS17} that our leak linear algebra (Lemma~\ref{main:lem:DL}) adapts. Ron-Zewi and Weiss \cite{RW24} construct zero-knowledge IOPs approaching witness
length, and Vashishth and Weiss \cite{VW26} prove composition theorems for black-box straight-line simulators, whose paradigm we follow.
Zero-knowledge sumcheck and the sumcheck commitment are due to Chiesa, Forbes and Spooner \cite{CFS17} and to Chiesa,
Forbes, Gur and Spooner \cite{CFGS22}.

\paragraph{Short PCPs.} Short PCPs were developed by Polishchuk and Spielman \cite{PS94} (nearly-linear size) and by
Ben-Sasson, Goldreich, Harsha, Sudan and Vadhan \cite{BGHSV05,BGHSV06}, who also introduced PCPs of proximity and robust
composition. The Reed--Solomon PCPP of Ben-Sasson and Sudan \cite{BS08} with Dinur's gap amplification \cite{Din07}
gives constant queries and length $T\polylog T$. Anshu, Breuckmann, Golowich, Nguyen and Vazirani \cite{ABGNV26} give a
different construction of PCPs for NP with polylogarithmic query complexity, from classical fault tolerance; their PCPs
are not zero knowledge.

\paragraph{Product expansion.} Product expansion was introduced by Panteleev and Kalachev \cite{PK22} (see also
\cite{KP22}) in the construction of good quantum LDPC codes. Golowich and Guruswami \cite[Lemma~4.9]{GG25} and Bafna
and Vyas \cite[\S2.1]{BV26} connect it to privacy. Collective product expansion and monotone closure arguments are in
\cite{DLV24,KP25} and \cite[\S4.3]{BV26}. Product expansion for codes on multiplicative grids of coprime orders,
through the Corvaja--Zannier bound, is the subject of \cite[\S4.4]{BV26}. Our proof of the locator lemma
(Theorem~\ref{kl:thm:KL}) follows the structure of the divisibility argument there, and we re-prove every step we
need in Section~\ref{sec:keylemma}, from \cite{PS94} and \cite{CZ13}. Random codes over large fields are treated in
\cite{KP25,CRS26}, and recent constructions of good quantum locally testable codes use product expansion in other
forms \cite{GGJ26,CHLT26,BLN26} (the last conditionally on a variant of a product-expansion conjecture of
\cite{BV26}). Apart from \cite{BV26}, none of these works constructs zero-knowledge PCPs.

\paragraph{Polynomial equations in subgroups.} We use the bound of Corvaja and Zannier \cite[Thm.~2]{CZ13} on points of
curves in products of cosets of multiplicative groups directly. Konyagin, Makarychev, Shparlinski and Vyugin
\cite[Thm.~1.2]{KMSV17} gave a bound for the number of solutions of polynomial equations in cosets of multiplicative
subgroups that generalises \cite{CZ13}; a version with a weaker constant appears as \cite[Thm.~1.2]{KSV22}. Related bounds are in \cite{MV19,Vyu19}.
The bounds of \cite{KMSV17,KSV22,MV19}, obtained by Stepanov's method, assume that the subgroup is small compared with
the characteristic (order $\ll p^{3/4}$) and use one subgroup on both axes. Our nodes only guarantee $m<p$ and use two
subgroups of different orders; this is why we use \cite{CZ13}.

\paragraph{Credits.} Table~\ref{tab:credits} lists every ingredient that is not ours, with the place where we use it.
Results of other authors that we re-prove are marked ``re-proved''; we do not present any of them as ours.

\begin{longtable}{>{\raggedright\arraybackslash}p{5.4cm}>{\raggedright\arraybackslash}p{5.6cm}>{\raggedright\arraybackslash}p{4.2cm}}
\caption{Credits.}\label{tab:credits}\\
\toprule
ingredient & source & use\\\midrule\endfirsthead
\toprule ingredient & source & use\\\midrule\endhead
\bottomrule\endfoot
masking by a random codeword; $t$-wise independence of the masked witness & \cite{BCGV16,BCFGRS17} (Claim~2.4 of
\cite{BCGV16}); two-point form: random self-reducibility \cite{GMR85,TW87} & shell; every node\\
leak linear algebra $\Lambda(V)=\spn\Phi\cap\spn\Psi$ & elementary; cf.\ \cite[Lemma~4.3, Claim~7.7]{BCFGRS17} & re-proved, \S\ref{sec:main}\\
the outer shell (randomisable constraint system, RS PCPP, lazy sampler) & \cite[Thms.~8.1, 8.9]{BCFGRS17};
\cite[\S8, Thm.~8.8]{ZOGS26} & re-proved in PCP form, \S\ref{sec:main}\\
lazy-sampler composition & \cite[Thms.~8.1, 8.2]{VW26} & re-proved, \S\ref{sec:main}\\
group-preserving constraint systems & \cite[Def.~7.2, Lemma~7.5]{BCGV16} & \S\ref{sec:main}\\
Bene\v{s}-network arithmetisation; routing in PCPs & \cite[\S8, App.~A]{BCGT13r}; \cite{BFLS91,PS94};
\cite{Ben64,Wak68} & routine multiplicative variant, \S\ref{sec:main}\\
quasilinear Cook--Levin; bounded occurrences & \cite{Coo88} (via \cite{PF79}); \cite{Tov84} & \S\ref{sec:main}\\
bivariate lifts, recursive RS PCPP, degree adjustment & \cite[\S\S6--7]{BS08} (multiplicative smooth case:
\cite[Thm.~3.4, Remark~3.5]{BS08}) & \S\ref{sec:main}\\
bivariate low-degree test & \cite[Lemma~6.13]{BS08} with $c_0=128$, from \cite{PS94}, \cite[Thm.~4.2.19]{Spi95} &
soundness, \S\ref{sec:main}\\
least prime in an arithmetic progression; bounds on $\pi(x)$; deterministic primality test & Linnik \cite{Lin44}
(any absolute exponent, e.g.\ \cite{Xyl11}); \cite{RS62}; \cite{AKS04} & node orders and field rule, \S\ref{sec:main}\\
CRT grid $\mu_m\cong\mu_{m_1}\times\mu_{m_2}$ & Good \cite{Goo58}; \cite{BW65,Imai77} & \S\ref{sec:main}\\
divisibility argument for product expansion & \cite[Lemma~8, Thm.~9]{PS94}; for multiplicative groups of coprime
orders \cite[\S4.4: Thms.~4.11, 4.12, Lemma~4.13]{BV26} (we follow this argument and re-prove every step we use) & \S\ref{sec:keylemma}\\
collective product expansion; monotone closure locator & \cite[Lemma~5.13, Prop.~5.14]{DLV24}; \cite[Lemmas~17--18]{KP25};
\cite[Def.~4.7, Lemma~4.8]{BV26} & our single-curve argument is an alternative, \S\ref{sec:keylemma}\\
product expansion; product expansion implies privacy & \cite{PK22}; \cite[Lemma~4.9]{GG25} & context\\
points of curves on products of cosets & \cite[Thm.~2]{CZ13} (cited directly); for unions of cosets see
\cite[Thm.~1.2]{KMSV17} and \cite[Thm.~1.2]{KSV22}; see also \cite{MV19,Vyu19} & \S\ref{sec:keylemma}\\
small circuits on $\mathbb P^1\times\mathbb P^1$ & Cayley--Bacharach residuation \cite{EGH96}; for equal bidegrees
\cite[Thm.~3.8, \S4.2]{Cou12} via the Segre embedding & small-view locator, \S\ref{sec:keylemma}\\
Riemann--Hurwitz & classical & Lemma~\ref{kl:lem:CZ}, Lemma~\ref{prm:lem-DC}\\
zero-knowledge-preserving gap amplification & idea and sketch \cite[App.~A, Lemma~A.1]{ZOGS26}; gap amplification \cite{Din07};
long-code test \cite{Has01} (as reworked in \cite[App.~B]{Din07}); folding \cite{BGS98,Has01}; expander replacement
\cite{PY91}; Tseitin encoding \cite{Tse68}; a gap in \cite[\S9]{Din07}, which we do not use, is bridged in \cite{GM20} &
rigorous statement and proof, \S\ref{sec:thmD}\\
local computability & \cite[Def.~3.1, Lemma~3.2]{GOS25} & \S\ref{sec:thmD}\\
virtual-column and diagonal attacks & \cite[Lemma~4.15]{RW24}; \cite[\S2]{BV26} & \S\ref{sec:negatives}\\
Berlekamp--Massey, BCH bound, interpolation & \cite{Mas69}; textbook
\cite{MS77} & \S\ref{sec:negatives}\\
\end{longtable}

\endgroup

\appendix

\begingroup
\providecommand{\F}{\mathbb{F}}
\providecommand{\RS}{\mathrm{RS}}
\providecommand{\prmlg}{\log_2}
\section{Parameters and constants}\label{app:params}

This appendix collects the parameters of the construction. Section~\ref{prm:sec-table} lists them;
Section~\ref{prm:sec-exp} gives the length bookkeeping of Theorem~\ref{main:thm:main} in detail
(Proposition~\ref{prm:prop-exp}); Section~\ref{prm:sec-KL} records a further reduction of the locator
constant, which is not used elsewhere in the paper; Section~\ref{prm:sec-read} comments on the size of the
constants. Neither Proposition~\ref{prm:prop-exp} nor Section~\ref{prm:sec-KL} changes the form of the main theorem.

\subsection{The parameter table}\label{prm:sec-table}

Notation: $N:=T+b$; $\ell:=|\hat H|=8n$ is the size of the root domain of the shell PCP $(P_A,V_A)$
(Theorem~\ref{main:thm:A}) to which gap amplification is applied; $d$ is the depth of the factor tree; $\eta$ the
number of labels of the routing instance; $s_A,q_A,\gamma_A,U_A$ the residual-circuit size, query count, gap and coin
multiset of $(P_A,V_A)$; $\bar\kappa$ the amortised read factor; $a$ the constant of the locator lemma
(Theorem~\ref{kl:thm:KL}). Dinur's constants $d_{\mathcal D}$ (degree after preprocessing), $t_{\mathcal D}$ (powering
parameter), $C_{\mathcal D}$ (size factor per round) and $\alpha_{\mathcal D}$ (gap ceiling) are fixed numbers of one explicit
instantiation of \cite{Din07}; their values are known only by lower bounds on what its analysis, with the repairs
of Section~\ref{sec:thmD}, gives.

\begin{center}
\small
\begin{tabular}{@{}p{3.6cm}p{6.6cm}p{4.9cm}@{}}
\toprule
quantity & exact formula & value / source\\\midrule
\multicolumn{3}{@{}l}{\emph{Final PCP (Theorem~\ref{main:thm:main})}}\\
alphabet, adaptivity & binary; non-adaptive verifier & \\
queries $Q_{\rm fin}$ & $3Q_D=18\lceil1/\alpha_{\mathcal D}\rceil$ & $\ge3\cdot10^{13}$ (Remark~\ref{thmD:rem:const})\\
completeness, soundness error & perfect; $\le(3/4)^3=27/64$ & Lemma~\ref{main:lem:CC} (three runs)\\
randomness & $3(\lceil\prmlg|U'|\rceil+1)$, $|U'|=|E(G_k)|^{\lceil1/\alpha_{\mathcal D}\rceil}$ & $\le\frac{Q_{\rm fin}}6\prmlg N\,(1+o(1))$\\
zero knowledge & perfect in expected polynomial time, adaptive malicious, $\le b$ distinct bit queries & universal straight-line simulator\\
length & $\le c'_{\rm len}\,N(\ln N)^{C_{\rm exact}}(\ln\ln N)^{2E_L}$, and $\le N(\prmlg N)^{C_{\rm exact}+1}$ for $N\ge N_{\rm thr}$ & Prop.~\ref{prm:prop-exp}\\
length exponent $C_{\rm exact}$ & $27+\prmlg80+\prmlg a+(15+\prmlg2660)(E_C+E_L)$ & $=120.692+26.3772(E_C+E_L)$\\
\midrule
\multicolumn{3}{@{}l}{\emph{Zero-knowledge-preserving gap amplification (Theorem~\ref{thmD:thm})}}\\
$c_{\rm it}$, $E_L$, $E_C$ & $2d_{\mathcal D}^{\lceil t_{\mathcal D}/2\rceil}$, $\prmlg c_{\rm it}$, $\prmlg C_{\mathcal D}$ & $E_L\ge t_{\mathcal D}+1$ ($d_{\mathcal D}\ge4$)\\
$t_{\mathcal D}$, $E_L$, $E_C$ (lower bounds) & --- & $t_{\mathcal D}\ge9\cdot2^{42}$; $E_L\ge3.958\cdot10^{13}$; $E_C\ge2^{6\cdot2^{t_{\mathcal D}}}$ (Remark~\ref{thmD:rem:const})\\
rounds $k$; $m^*$ & $\lceil\prmlg(3m^*/\hat\gamma)\rceil$; $m^*\le4s+1$ & $\hat\gamma=\max\{\gamma,1/|U|\}$\\
locality $L_D$ & $q\,c_{\rm it}^{\,k}\le q\,c_{\rm it}(3(4s+1)/\gamma)^{E_L}$ & ZK bound $\lfloor b_0/L_D\rfloor$\\
length & $\le27\,m^*|U|\,C_{\mathcal D}^{\,k}\le27(4s+1)|U|C_{\mathcal D}(3(4s+1)/\gamma)^{E_C}$ & factor 27\\
$Q_D$; DAG size & $6\lceil1/\alpha_{\mathcal D}\rceil$; $\le3L_D$ nodes per output bit & simulator time only\\
\midrule
\multicolumn{3}{@{}l}{\emph{The shell PCP $(P_A,V_A)$ before gap amplification (Theorem~\ref{main:thm:A})}}\\
BS08 constants & $c_0=128$, $C_{\rm biv}=10+2c_0=266$ & \cite[Lemma~6.13, \S6.6]{BS08}\\
soundness per level & $\varrho_h=(10C_{\rm biv})^{-h}=2660^{-h}$; root $\frac38\varrho_d$ & $\prmlg2660=11.37721\ldots$\\
depth $d$ & $\le\prmlg\ln\mathsf N\le\prmlg\ln\ell$ & Lemma~\ref{main:lem:SM2}\\
node proof length & $\frac{16}{29}(30^d-1)s$, own symbols $128m$ & $\prmlg30=4.90689\ldots$\\
coins of the RS PCPP & $|U(m)|\le8m\cdot80^h$; $|U_A|\le32\cdot80^d\ell$ & $\prmlg80=6.32192\ldots$\\
queries $q_A$ & $\le\max(16(\ln\ell)^6,\eta+1)$ (leaf $8p$, $p\le2\mathsf L^6$) & exponent $6$\\
residual size $s_A$ & $O(p^2\log^2q)=O((\ln\ell)^{12}\log^2q)$ & exponent $14$ (proof of Prop.~\ref{prm:prop-exp})\\
gap $\gamma_A$ & $\ge\frac{21}{2048(\eta+1)2660^d}$ (Thm.~\ref{main:thm:A}(ii)); Prop.~\ref{prm:prop-exp} uses the weaker $\frac{7}{288(\eta+1)\lambda_s}$, $\lambda_s=\frac83\,2660^d$ & exponent of $1/\gamma_A$: $1+\prmlg2660$; $\tau_0\ge\frac{63}{64}$\\
labels $\eta$ & $2\beta+8\le2\prmlg\ell+8$ & Lemma~\ref{main:lem:MR}\\
ZK bound of $(P_A,V_A)$ & $b_A=\lfloor t/(\eta\bar\kappa')\rfloor$, $\bar\kappa'=\max(1,\bar\kappa)$ & \\
$\bar\kappa$ & product of per-node constants, $\le a^d\le(\ln\ell)^{\prmlg a}$ & Lemma~\ref{main:lem:C}\\
size chain & $2^{13}(\ln Y)^6Y<\mathsf N<(2^{13}(\ln Y)^6+3)Y$, $\ell<2^{18}Y(\ln Y)^6$, $Y=\max(N_{\rm pts},t)$ & $N_{\rm pts}=O(T\log^2T)$\\
fixed point & $t=\lceil b\,\Xi(Y_0^2)\rceil$, $\Xi\ge2L_D\eta\bar\kappa'$ & exponent of $\Xi$: Prop.~\ref{prm:prop-exp}\\
field & $q$ least prime $\equiv1\pmod{32n}$ & $q\le c_{\rm Lin}(32n)^{L_{\rm Lin}}$ (Lemma~\ref{main:lem:field})\\
\midrule
\multicolumn{3}{@{}l}{\emph{Locator lemma (Theorem~\ref{kl:thm:KL}), same-radix configuration}}\\
node shape & $L_1=14$ column cosets, $L_2\le8$ row cosets, $m_1<m_2<2m_1$ ($K=2$) & recursion nodes: $m_1>2455$ (Remark~\ref{main:rem:2455})\\
constant $c_*$ & $c_*:=10^{-13}$, admissible (Definition~\ref{kl:def:adm}) & used throughout the paper\\
$\tau$, $a$ & $\tau=\lfloor c_*m_1\rfloor$; $a=a_*\ge a(c_*)=1+\frac1{1-c_*}+K/c_*^2$ & $a\le2\cdot10^{26}+3$, $\prmlg a=87.3701$\\
sharpening & $c_*'=3.58\cdot10^{-11}$, admissible; $a(c_*')$ & $\prmlg a(c_*')=70.4025$ (Remark~\ref{kl:rem:const})\\
further reduction & $\tilde c=1/23{,}400{,}000$; $\tilde a=1+\frac1{1-\tilde c}+K/\tilde c^{\,2}$ & $\tilde a<1.0952\cdot10^{15}$, $\prmlg\tilde a<49.961$ (Remark~\ref{prm:prop-KLsharp})\\
non-trivial regime & views with $|V|<(\tau+1)^2$; $\tau\ge1$ iff $m_1\ge1/c$ & $m_1\ge10^{13}$ (resp.\ $2.79\cdot10^{10}$, $2.34\cdot10^7$)\\
\bottomrule
\end{tabular}
\end{center}

\noindent The locator constant $c_*:=10^{-13}$ is the single value used throughout the paper. It is admissible: for
$(L_1,L_2,K)=(14,8,2)$ condition~\eqref{kl:eq:adm} reads $112(3c^{1/3}+6c)<\frac19-2c$, and at $c=10^{-13}$ the left
side is below $0.0156$. The value $c_*'=3.58\cdot10^{-11}$ of Remark~\ref{kl:rem:const} is the sharpening that the
locator lemma proves; it changes only the additive constant of $C_{\rm exact}$. The value
$C_{\rm exact}=120.692\ldots+26.3772\ldots(E_C+E_L)$ uses $a=a_*$ (the construction's value; $\prmlg a_*$ and
$\prmlg a(c_*)$ agree to $27$ digits); with $a(c_*')$ it becomes
$103.724\ldots+26.3772\ldots(E_C+E_L)$, and with Remark~\ref{prm:prop-KLsharp} it becomes
$83.281\ldots+26.3772\ldots(E_C+E_L)$. The brackets $\prmlg2660\in[11.37721,11.37722]$,
$\prmlg80\in[6.32192,6.32193]$, $\prmlg a\in[87.3701,87.3702]$, $\prmlg a(c_*')\in[70.4024,70.4025]$ and
$\prmlg\tilde a\in[49.9600,49.9601]$ are certified by integer comparisons such as
$2^{1137721}\le2660^{10^5}<2^{1137722}$.

\subsection{The length exponent}\label{prm:sec-exp}

Theorem~\ref{main:thm:main} states the exact length exponent, and Step~5 of its proof collects it briefly. This
subsection carries out the bookkeeping quantity by quantity. It also shows that the locator constant enters the
exponent only additively.

\begin{proposition}[explicit length exponent]\label{prm:prop-exp}
Under the hypotheses of Theorem~\ref{main:thm:main} (locator lemma with an absolute constant $a\ge1$), the final PCP
has length at most $c'_{\rm len}\,(T+b)(\ln(T+b))^{C_{\rm exact}}(\ln\ln(T+b))^{2E_L}=(T+b)(\ln(T+b))^{C_{\rm exact}+o(1)}$
for all $T+b\ge N_1$, with constants $c'_{\rm len},N_1$
that may depend on $R$, $T$ and $b$ (like the constant $c_{\rm len}$ of Theorem~\ref{main:thm:main}) but not on the
input, and with the absolute exponent
\[
C_{\rm exact}=27+\prmlg80+\prmlg a+(15+\prmlg2660)\,(E_C+E_L),
\]
where $E_C=\prmlg C_{\mathcal D}$ and $E_L=\prmlg c_{\rm it}$ are the constants of Theorem~\ref{thmD:thm}. Consequently the
length is at most $(T+b)(\prmlg(T+b))^{C_{\rm exact}+1}$ for all $T+b\ge N_{\rm thr}$. The $T$-term alone has exponent
$22+\prmlg80+(15+\prmlg2660)E_C<C_{\rm exact}$. The locator constant enters only additively, through $\prmlg a$.
\end{proposition}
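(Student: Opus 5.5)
The plan is to redo Step~5 of the proof of Theorem~\ref{main:thm:main} with every polylogarithmic factor tracked separately, starting from the length bound of Theorem~\ref{main:thm:D}(b) applied to the shell PCP,
\[
\mathcal L\le27(4s_A+1)|U_A|\,C_{\mathcal D}\bigl(3(4s_A+1)/\gamma_A\bigr)^{E_C}.
\]
I will express each factor as a power of $\ln\ell$ and then convert powers of $\ln\ell$ into powers of $\ln(T+b)$.

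\emph{Factors of the length.} From Proposition~\ref{main:prop:LEN} and Theorem~\ref{main:thm:A}(iv) we have $|U_A|\le32\ell(\ln\ell)^{\log_280}$. By the table, $s_A=O((\ln\ell)^{12}\log^2q)$. Lemma~\ref{main:lem:field} gives $\log q=O(\log\ell)$, so $s_A\le C_s(\ln\ell)^{14}$. Only the envelope $\bar s$ carries a $(\ln\ln\ell)^2$. I will use the cruder $\log q\le\ln\ell\cdot\ln\ln\ell$, which is the source of the stated $(\ln\ln)^{2E_L}$ slack. For the gap, Theorem~\ref{main:thm:A}(ii), $\eta\le2\log_2\ell+8$ and $2660^d\le(\ln\ell)^{\log_22660}$ give $1/\gamma_A=O((\ln\ell)^{1+\log_22660})$. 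Hence the base $3(4s_A+1)/\gamma_A$ is $O((\ln\ell)^{15+\log_22660})$ up to $\ln\ln$ factors. The length is therefore at most $\ell(\ln\ell)^{14+\log_280+(15+\log_22660)E_C}$ times lower-order factors.

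\emph{Size of $\ell$.} Next I bound $\ell$ through the size chain $\ell<2^{18}Y(\ln Y)^6$ with $Y\le\max(N_{\rm pts},\,b\,\Xi(Y_0^2))+1$. Here $\Xi(\ell')=O(X(\ell'))$ and $X=2\bar L_D\bar\eta\bar\kappa$. By Theorem~\ref{main:thm:D}(c), $\bar L_D\le\bar q\,c_{\rm it}\,(3(4\bar s+1)/\bar\gamma)^{E_L}$, which is $(\ln\ell)^{6+(15+\log_22660)E_L}$ up to $\ln\ln$ factors. Also $\bar\eta=O(\ln\ell)$ and $\bar\kappa=(\ln\ell)^{\log_2a}$. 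Summing the exponents gives
\[
e_\Xi=(15+\log_22660)E_L+\log_2a+7 .
\]
Since $Y_0\le T+b$ up to the polynomial $N_{\rm pts}=O(T\log^2T)$ and $\ln Y_0^2=O(\ln(T+b))$, we get $Y\le c(T+b)(\ln(T+b))^{e_\Xi}$. The $T$-term alone instead gets $(\ln)^2$ from $N_{\rm pts}$, and it avoids $e_\Xi$ as long as $\Xi\ge(\ln)^2$ absorbs it.

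\emph{Collecting exponents.} The final exponent is
\[
e_\Xi+6+14+\log_280+(15+\log_22660)E_C=C_{\rm exact}.
\]
For the $T$-term it is $2+6+14+\log_280+(15+\log_22660)E_C$, which is the stated smaller value. To get the simplified bound $(T+b)(\log_2(T+b))^{C_{\rm exact}+1}$ above a threshold $N_{\rm thr}$, I absorb $c'_{\rm len}$ and $(\ln\ln)^{2E_L}$ into one extra power of the logarithm, using $\ln x<\log_2x$. The claim that the locator constant enters only additively is immediate, since $a$ appears solely in $\bar\kappa$, and so solely through $\log_2a$ in $e_\Xi$.

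The main obstacle I expect is bookkeeping the $\ln\ln$ factors correctly. They come from $\log_2q$ in $\bar s$ and from rounding $d$, and after raising to $E_L$ and $E_C$ they must contribute only $(\ln\ln)^{2E_L}$, not $(\ln\ln)^{2(E_C+E_L)}$. I would first try to bound $\log q$ by $O(\ln\ell)$ directly (via Lemma~\ref{main:lem:field}) inside the length factor, so that $E_C$ sees no $\ln\ln$. In that case only the envelope $\bar L_D$, which must be monotone and explicit for $\Xi$, carries the $(\ln\ln)^2$ raised to $E_L$.
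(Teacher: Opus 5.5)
Your proposal is correct and follows the paper's own proof essentially step by step. You bound the length factor of Theorem~\ref{main:thm:D}(b) with $s_A=O((\ln\ell)^{14})$ via $\log q=O(\log\ell)$, so that $E_C$ sees no $\ln\ln$ factor; you take the only $(\ln\ln)^{2E_L}$ from the explicit envelope $\bar s$ inside $\Xi$, whose exponent is $e_\Xi=7+\log_2a+(15+\log_22660)E_L$; and you assemble through $\ell<2^{18}Y(\ln Y)^6$, with the $T$-term picking up only the $(\ln)^2$ of $N_{\rm pts}$. One minor point: rounding $d$ contributes no $\ln\ln$ factor, since $2^d\le\ln\ell$ gives $80^d\le(\ln\ell)^{\log_280}$ and $2660^d\le(\ln\ell)^{\log_22660}$ exactly.
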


\begin{proof}
All logarithms of $\ell$ are natural; changing the base, or replacing $\ln\ell$ by $\ln(T+b)$ (below), changes
only constant factors, which go into $c'_{\rm len}$. Write $f\preceq g$ if $f\le c\,g$ for an absolute $c$ and all
large $\ell$.

\emph{Step 1: quantities of the shell PCP that do not involve $a$.}
(i) \emph{Coins.} The node PCPP at the root has coin multiset $\le2\cdot80^d\ell$ (two copies), the
product-code PCPP $\le8\cdot80^d\ell$ (weights within a factor 2), and the shell $|U_A|\le4\max(8\cdot80^d\ell,\ell)
=32\cdot80^d\ell$. Since $d\le\prmlg\ln\ell$, $80^d\le(\ln\ell)^{\prmlg80}$.
(ii) \emph{Queries.} A test reads three symbols, a leaf line of $8p\le16(\ln\ell)^6$ symbols, or the $\eta+1$
symbols of the consistency test; so $q_A\preceq(\ln\ell)^6$.
(iii) \emph{Residual circuits.} The leaf test checks membership of $8p$ symbols in $\RS[D_v,p]$ by a parity-check
matrix, $O(p^2)$ multiplications in $\F$ of $O(\log^2q)$ gates each; the other tests have circuits of size
$O(\eta\log^2q)$. With $p\le2(\ln\ell)^6$ and $\log q\preceq\log\ell$ (Lemma~\ref{main:lem:field}),
$s_A\preceq(\ln\ell)^{14}$.
(iv) \emph{Gap.} The node PCPP rejects with probability $\ge\frac38\,2660^{-d}\delta$; the product-code PCPP
draws one of its two PCPPs with probability $\ge\frac13$, so it rejects with probability $\ge\Delta/(3\lambda_s)$,
$\lambda_s=\frac83\,2660^d$; the shell gives $\gamma_A\ge7/(288(\eta+1)\lambda_s)$ (this is weaker than the bound
$21/(2048(\eta+1)2660^d)$ of Theorem~\ref{main:thm:A}(ii) and suffices here). With
$2660^d\le(\ln\ell)^{\prmlg2660}$ and $\eta\le2\prmlg\ell+8$, $1/\gamma_A\preceq(\ln\ell)^{1+\prmlg2660}$.
None of (i)--(iv) depends on $a$: the locator lemma enters the shell PCP only through the sampler's read factor
$\bar\kappa$, which bounds the zero-knowledge parameter $b_A=\lfloor t/(\eta\bar\kappa')\rfloor$ and nothing
else.

\emph{Step 2: gap amplification.} By (iii) and (iv),
$3(4s_A+1)/\gamma_A\preceq(\ln\ell)^{e}$ with $e:=14+1+\prmlg2660=15+\prmlg2660$. Theorem~\ref{thmD:thm} gives
\[
L_D\le q_Ac_{\rm it}\big(3(4s_A+1)/\gamma_A\big)^{E_L}\preceq(\ln\ell)^{6+eE_L}
\]
up to the constant factor $c_{\rm it}\cdot(\text{const})^{E_L}$, and length
\[
\ell'\le27(4s_A+1)|U_A|\,C_{\mathcal D}\big(3(4s_A+1)/\gamma_A\big)^{E_C}\preceq\ell\,(\ln\ell)^{14+\prmlg80+eE_C}.
\]
Lemma~\ref{main:lem:CC} (three runs of a fair-coin verifier) does not change the proof.

\emph{Step 3: the fixed point.} Since $\bar\kappa\le a^d\le(\ln\ell)^{\prmlg a}$, the function
\[
\Xi_\sigma(\ell):=2\cdot16(\ln\ell)^6\,c_{\rm it}\Big(3\big(4\sigma(\ln\ell)^{14}+1\big)\tfrac{768}{7}(2\prmlg\ell+9)
(\ln\ell)^{\prmlg2660}\Big)^{E_L}(2\prmlg\ell+8)(\ln\ell)^{\prmlg a}
\]
(with $\sigma$ the absolute constant of (iii)) is nondecreasing and $\ge2L_D\eta\bar\kappa'$, and
$\Xi_\sigma(\ell)\preceq(\ln\ell)^{e_\Xi}$ with $e_\Xi=7+\prmlg a+eE_L$. The function $\Xi$ in the proof of
Theorem~\ref{main:thm:main}, which fixes $t$, must be computable, so it uses the explicit envelope
$\bar s=C_s(\ln\ell)^{14}(\ln\ln\ell)^2$ in place of $\sigma(\ln\ell)^{14}$ ($\sigma$ depends on the constants of
Lemma~\ref{main:lem:field}) and the constant $\frac{2048}{21}$ of Theorem~\ref{main:thm:A}(ii) in place of $\frac{768}7$. For a locator constant $a$ it uses $\bar\kappa(\ell)=(\ln\ell)^{\prmlg a}$; the proof
of Theorem~\ref{main:thm:main} takes $a=a_*$, which is valid as $a(c_*)\le a_*$, and
$\prmlg a_*-\prmlg a(c_*)<10^{-27}$.
Comparing factor by factor, $\Xi(\ell)\preceq\Xi_\sigma(\ell)\,(c\,(\ln\ln\ell)^2)^{E_L}$ for an absolute $c$, so
$\Xi(\ell)\preceq(\ln\ell)^{e_\Xi}(\ln\ln\ell)^{2E_L}$. Theorem~\ref{main:thm:main} takes
$t=\lceil b\,\Xi(Y_0^2)\rceil$ with $Y_0=\max(N_{\rm pts},b,|x|)$, $Y=\max(N_{\rm pts},t)$ and
$\ell<2^{18}Y(\ln Y)^6$; above its threshold $\ell\le Y_0^2$, so $\ln Y\le\ln\ell\le2\ln Y_0$. Since
$N_{\rm pts}=O(T\log^2T)$ (with a constant depending on $R$ and $T$) and $Y_0\le N_{\rm pts}+b$, also
$\ln Y_0\le2\ln(T+b)$ for large inputs.

\emph{Step 4: assembly.} Combining, with $N=T+b$ and constant factors absorbed,
\[
\ell'\preceq\big(T(\ln N)^{2}+b(\ln N)^{e_\Xi}(\ln\ln N)^{2E_L}\big)(\ln N)^{6}\,(\ln N)^{14+\prmlg80+eE_C}.
\]
The exponent of $\ln N$ in the $b$-term is $6+e_\Xi+14+\prmlg80+eE_C=27+\prmlg80+\prmlg a+e(E_C+E_L)=C_{\rm exact}$;
that of the $T$-term is $22+\prmlg80+eE_C$, smaller by $5+\prmlg a+eE_L>0$. For the second form,
$c'_{\rm len}(\ln\ln N)^{2E_L}$ is at most $\prmlg N$ once $N\ge N_{\rm thr}:=2^{k_0}$ for a suitable constant
$k_0$; finitely many
smaller inputs are handled by brute force as in Theorem~\ref{main:thm:main}.
\end{proof}

\begin{remark}[size of the constants]\label{prm:rem-size}
$e=15+\prmlg2660=26.37721\ldots$. With $E_L\ge3.958\cdot10^{13}$ (Remark~\ref{thmD:rem:const}) alone,
$C_{\rm exact}\ge1.04\cdot10^{15}$; with $E_C\ge2^{6\cdot2^{t_{\mathcal D}}}$, $C_{\rm exact}$ is doubly exponential in $t_{\mathcal D}$. The
locator lemma's contribution $\prmlg a\approx87.4$ (or $70.4$ with $c_*'$, or $50.0$ after
Remark~\ref{prm:prop-KLsharp}) is negligible beside $e(E_C+E_L)$. The threshold $N_{\rm thr}$ is likewise galactic:
the fixed point needs $\prmlg Y_0\gtrsim e_\Xi\prmlg e_\Xi$. The lower bounds on $t_{\mathcal D}$, $E_L$, $E_C$ and
$Q_D$ are our readings of the analysis of \cite{Din07} with the repairs of Section~\ref{sec:thmD}
(Remark~\ref{thmD:rem:const}); they are not limits of
gap amplification, and they are not re-derived here. The only implication used here is $E_L=1+\lceil t_{\mathcal D}/2\rceil\prmlg d_{\mathcal D}\ge1+t_{\mathcal D}$
for $d_{\mathcal D}\ge4$, which holds because in the $d_{\mathcal D}$-regular graph $H_1$ every vertex has its constraint
edge and its self-loop, and the expanders added by $\mathrm{prep}_1$ and $\mathrm{prep}_2$ have positive degree
(Section~\ref{thmD:sec:round}; \cite[Defs.~4.1, 4.2]{Din07}).
\end{remark}

\subsection{A further reduction of the locator constant}\label{prm:sec-KL}

This subsection is not used elsewhere in the paper; we record it as a remark on how far the locator constant can be
lowered by a second application of \cite[Theorem~2]{CZ13}. Notation is that of Section~\ref{sec:keylemma}: a node
with parameters $(L_1,L_2,K)$ (Definition~\ref{kl:def:node}, the same-radix configuration) over $\F$ of
characteristic $p>m$, with $n:=m_1<n':=m_2$, $m=nn'<Kn^2$, column set $A$ a union of $L_1$ cosets of $\mu_n$, rows
$B=\bigsqcup_{i<L_2}f_i\mu_{n'}$ (distinct cosets, so $|F_b|=n$); $K_0=\overline{\mathbb F}_p$; for an irreducible
$F\in K_0[X,Y]$ of bidegree $(a,b')$ with $a,b'\ge1$, $\mathcal X_F$ is the normalisation of its closure in
$\mathbb P^1\times\mathbb P^1$, $o_x(\nu)=\mathrm{ord}_\nu(x-x(\nu))$, $o_y(\nu)=\mathrm{ord}_\nu(y-y(\nu))$, and
$\Gamma_{A\times B}(F)$ and $\Delta_F$ are as in Section~\ref{kl:sec:cz} and Lemma~\ref{kl:lem:div} ($\Delta_F$ sums
$\min(o_x,o_y)$ over the places above the view's domain points $(\xi,\xi^n)$ on $F$). Call $F$ \emph{binomial} if
it is a unit multiple of $X^aY^{b'}-\lambda$ or of $X^a-\lambda Y^{b'}$ with $\lambda\in K_0^\times$.

\begin{lemma}[incidences with the domain curve]\label{prm:lem-DC}
Let $0<c\le\frac18$, let $F$ be irreducible, mixed and not binomial, with $a,b'\le cn$, and let $f\in K_0^\times$.
Let $\Delta_f(F)$ be the sum of $\min(o_x(\nu),o_y(\nu))$ over the places $\nu$ of $\mathcal X_F$ with $x(\nu)\in
K_0^\times$, $y(\nu)=x(\nu)^n$ and $y(\nu)\in f\mu_{n'}$. Then
\[
\Delta_f(F)\le3\cdot2^{2/3}(1+c)^{1/3}\,m^{1/3}(ab')^{2/3}.
\]
In particular $\Delta_F\le L_2\cdot3\cdot2^{2/3}(1+c)^{1/3}m^{1/3}(ab')^{2/3}$.
\end{lemma}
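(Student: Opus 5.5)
We apply Lemma~\ref{kl:lem:CZ} once more, now on $\mathcal X_F$ with the two functions $u:=(y/f)^{n'}$ and $v:=y/x^n$, the latter encoding the domain curve $Y=X^n$. At a place $\nu$ counted in $\Delta_f(F)$ we have $x(\nu)\in K_0^\times$, $y(\nu)=x(\nu)^n\in f\mu_{n'}$ and $v(\nu)=1$. Since $p\nmid n'$, the value $y(\nu)$ is a simple root of $T^{n'}-f^{n'}$, so $\nu(1-u)=o_y(\nu)$. For $v$ we write $1-v=(x^n-y)/x^n$. As in Claim~\ref{kl:clm:delta}(a), $\nu(x^n-y)\ge\min(n\cdot\text{(order of $x^n-x(\nu)^n$)},o_y)\ge\min(o_x,o_y)$, because $x^n-x(\nu)^n$ vanishes to order exactly $o_x$ ($p\nmid n$). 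Hence $\min(\nu(1-u),\nu(1-v))\ge\min(o_x,o_y)$. Such $\nu$ are not zeros or poles of $x$ or $y$, so they lie outside the set $S$ of zeros and poles of $u,v$. Therefore $\Delta_f(F)$ is at most the left side of Lemma~\ref{kl:lem:CZ}.

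Next we bound the data in Lemma~\ref{kl:lem:CZ}, using (T1)--(T2).
\begin{itemize}
\item $\deg u=n'a$.
\item $\deg v\le\deg(y)+n\deg(x)=a+nb'$.
\item $|S|\le2(a+b')$, since all zeros and poles of $u,v$ lie over $x,y\in\{0,\infty\}$.
\item The genus is at most $(a-1)(b'-1)$, so $\chi\le2ab'$.
\end{itemize}
Then $\deg u\deg v\,\chi\le n'a(a+nb')\cdot2ab'=2ab'\,n'a\,nb'(1+a/(nb'))\le2(1+c)m(ab')^2$, using $a\le cn$. This gives the cube-root term $3\sqrt[3]2\,(2(1+c)m)^{1/3}(ab')^{2/3}=3\cdot2^{2/3}(1+c)^{1/3}m^{1/3}(ab')^{2/3}$, as claimed. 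The second term $12\deg u\deg v/p$ must also be dominated. With $m<p$, $\deg u\deg v/p\le n'a(a+nb')/m\le a(a/n+b')\le(1+c)ab'$. Since $ab'\le c^2n^2\le c^2m$, we get $ab'\le c^{2/3}m^{1/3}(ab')^{2/3}$. With $c\le\frac18$, $12(1+c)c^{2/3}\le 12\cdot\frac98\cdot\frac14<3\cdot2^{2/3}$, so the cube-root term is the maximum.

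The hypotheses of Lemma~\ref{kl:lem:CZ} still need checking; this is the main obstacle.
\begin{itemize}
\item \emph{Rationality.} Pass to a finite field $\kappa$ over which $F$, $f$, $\mu_m$ and all relevant places are rational. This is the transfer step of Lemma~\ref{kl:lem:pair}.
\item \emph{Nonzero differentials.} $du=n'u\,dy/y\ne0$, since $dy=0$ would force $p\mid a$, impossible as $a<p$. For $v$, we have $dv=0$ only if $v$ is a $p$-th power, which has to be excluded. $v$ is not constant, because $F$ does not divide $Y-\lambda X^n$ ($a<n$). I would argue that $v=y/x^n$ can be a $p$-th power only if $F$ is a polynomial in $X^p,Y^p$ up to the monomial twist, which is excluded by $a,b'<p$.
\item \emph{Multiplicative independence.} Suppose $u^r v^s$ is constant with $(r,s)\ne(0,0)$. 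Then $y^{n'r+s}x^{-ns}$ is constant on $\mathcal X_F$. If the exponents are not both zero, $F$ divides a binomial, so $F$ is binomial or monomial. Mixed irreducible $F$ is not monomial, and the binomial case is excluded by hypothesis. If both exponents vanish, then $s=0$ and $r=0$, since $n\ne0$.
\end{itemize}
The subtle point is that independence must be modulo constants, which is exactly what the non-binomial hypothesis gives. The remark in Lemma~\ref{kl:lem:CZ}(i) already handles the case where $\kappa(u,v)\ne\kappa(\mathcal X_F)$.

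Finally, summing $\Delta_f(F)$ over the $L_2$ row cosets $f_i\mu_{n'}$ of $B$ covers every view domain point $(\xi,\xi^n)$, since $\xi^n\in B$. This gives the bound on $\Delta_F$.
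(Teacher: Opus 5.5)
Your route is the paper's: the same functions $u=(y/f)^{n'}$ and $v=y/x^n$, the same local comparison $\min(o_x,o_y)\le\min(\nu(1-u),\nu(1-v))$, the same data $\deg u=n'a$, $\deg v\le a+nb'$, $|S|\le2(a+b')$, $\chi\le2ab'$, the same domination of $12\deg u\deg v/p$ (your bound $ab'\le c^{2/3}m^{1/3}(ab')^{2/3}$ is equivalent to the paper's $ab'(1+c)^2\le m/16$), and the same sum over the $L_2$ row cosets. There are two small slips. First, the intermediate bound $\nu(x^n-y)\ge\min(n\cdot o_x,o_y)$ is false when $o_y>o_x$. Write instead $x^n-y=(x^n-\beta)-(y-\beta)$ with $\beta=y(\nu)$, which gives $\nu(x^n-y)\ge\min(o_x,o_y)$ directly. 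Second, in the independence step, if exactly one of the exponents $n'r+s$, $-ns$ vanishes, then $x$ or $y$ is constant. That case is excluded by mixedness, not by ``binomial or monomial''.

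The genuine gap is $dv\ne0$, which you leave as ``I would argue\ldots''. The criterion you propose is not the right one, and $a,b'<p$ does not suffice. Let $G(X,V)$ be $F(X,VX^n)$ with the largest power of $X$ removed; it is irreducible. Since $dx\ne0$, $dv=0$ holds iff $\partial_XG=0$, i.e.\ $G\in K_0[X^p,V]$: only $X$ has to occur through $X^p$. Moreover $\deg_XG$ can be as large as $a+nb'$, not $a$.

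For instance, take characteristic $5$, $n=2$, and $F=\alpha+\beta XY^2+\gamma X^3Y+\delta X^4Y^3$ with $\alpha\delta\ne\beta\gamma$. It is irreducible, since its Newton polygon is a parallelogram with primitive edges; it is mixed and not binomial, and $a,b'<5$. Yet $F(X,VX^2)=\alpha+(\gamma V+\beta V^2)X^5+\delta V^3X^{10}$, so $y/x^2$ is a fifth power.

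What rules this out in the lemma is the hypothesis $a,b'\le cn$ together with $m<p$, which your argument does not use at this point. Since $v$ is nonconstant (as you showed),
\[
1\le\deg v\le a+nb'\le cn(n+1)<nn'=m<p .
\]
A $p$-th power of a nonconstant function has degree divisible by $p$, so $v\notin K_0(\mathcal X_F)^p$ and $dv\ne0$. This is the paper's argument; with it inserted, your proof is complete.
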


\begin{proof}
Put $u:=(y/f)^{n'}$ and $v:=y\,x^{-n}$ in $K_0(\mathcal X_F)$.

\emph{Local orders.} Let $\nu$ be a place in the sum, $\xi=x(\nu)$, $\beta=y(\nu)=\xi^n$. As $p\nmid n'$, the
polynomial $T^{n'}-f^{n'}$ has simple roots, one of them $\beta$, so $\nu(1-u)=\nu(y-\beta)=o_y(\nu)$. As
$\xi\ne0,\infty$, $\nu(1-v)=\nu(x^n-y)$, and $x^n-y=(x^n-\beta)-(y-\beta)$ with $\nu(x^n-\beta)=\nu(x-\xi)=o_x(\nu)$
($p\nmid n$); so $\nu(1-v)\ge\min(o_x(\nu),o_y(\nu))$. Hence $\min(o_x,o_y)\le\min(\nu(1-u),\nu(1-v))$, and $\nu$
is neither a zero nor a pole of $x$ or $y$, hence not of $u$ or $v$.

\emph{Hypotheses of \cite[Theorem~2]{CZ13}.} (a) \emph{Independence.} If $u^rv^s\in K_0^\times$ with
$(r,s)\ne(0,0)$, then $y^{n'r+s}x^{-ns}\in K_0^\times$. If $s=0$ then $y$ is constant; if $n'r+s=0$ then $x$ is
constant; both are impossible for a mixed curve. Otherwise $x^\alpha y^\beta\in K_0^\times$ with
$\alpha,\beta\ne0$, and the argument of the dependent case in the proof of Lemma~\ref{kl:lem:pair} shows that $F$ is
binomial, which is excluded. (b) \emph{Differentials.} $du=n'u\,dy/y\ne0$, since $dy=0$ would make $y$ a
$p$-th power and $p\mid\deg y=a$, while $a\le cn<p$. The function $v$ is not constant (otherwise $F$ divides
$Y-\lambda X^n$, which is irreducible of $X$-degree $n>a$), and $\deg v\le\deg y+n\deg x=a+nb'\le cn(n+1)<nn'<p$;
so $p\nmid\deg v$, $v$ is not a $p$-th power, and $dv\ne0$. (c) \emph{Degrees and Euler characteristic.}
$\deg u=n'a$; the zeros and poles of $u$ and $v$ are among those of $x$ and $y$, at most $2a+2b'$ places; the
genus is at most the arithmetic genus $(a-1)(b'-1)$; so $\chi\le2ab'$.

\emph{Applying the theorem.} Over a finite field of definition, with the same transfer as in the proof of
Lemma~\ref{kl:lem:pair} (descent to the curve with function field $\kappa(u,v)$ if needed, as in the discussion after
Lemma~\ref{kl:lem:CZ}: the left side grows at most by the index, the right side at least by it, by
Riemann--Hurwitz), \cite[Theorem~2]{CZ13} gives
\[
\Delta_f(F)\le\max\Big(3\sqrt[3]2\,\big(n'a\,(a+nb')\,2ab'\big)^{1/3},\ \frac{12\,n'a\,(a+nb')}{p}\Big).
\]
Since $n'a^2b'(a+nb')=m(ab')^2(1+a/(nb'))$ and $a/(nb')\le c$, the first term is at most
$3\cdot2^{2/3}(1+c)^{1/3}m^{1/3}(ab')^{2/3}$. As $p>m$, the second term is less than $12ab'(1+c)$, which is at most
the same bound because $ab'(1+c)^2\le c^2(1+c)^2n^2\le m/16$. The last claim sums over the $L_2$ row cosets.
\end{proof}

\begin{remark}[a smaller locator constant]\label{prm:prop-KLsharp}
Let $c\in(0,1)$ satisfy
\[
\text{(R1) }c<\frac1{4L_1L_2^2},\qquad\text{(R2) }c\le\frac18,\qquad
\text{(R3) }3\cdot2^{2/3}\big(L_1L_2+L_2(1+c)^{1/3}\big)c^{1/3}\le2(1-2c).
\]
Then the conclusion of Theorem~\ref{kl:thm:KL} holds for this $c$ (in place of an admissible one), with
$\tau:=\lfloor cn\rfloor$ in the definition of $W$ and with $a=1+\frac1{1-c}+K/c^2$. For $L_1=14$, $L_2=8$ the
supremum of the $c$ satisfying (R1)--(R3) lies in $[4.28669\cdot10^{-8},4.28670\cdot10^{-8}]$;
$\tilde c=1/23{,}400{,}000$ satisfies them, and for $K=2$ it gives $\tilde a<1.0952\cdot10^{15}$,
$49.9600\le\prmlg\tilde a<49.9601$ (against $\prmlg a=87.370$ for $c_*$ and $\prmlg a(c_*')=70.402$).
\end{remark}

\begin{proof}
The proof of Theorem~\ref{kl:thm:KL} uses the constant only through the divisibility lemma
(Lemma~\ref{kl:lem:div}: ``$E\mid P$ whenever $E\ne0$ has $\deg_XE,\deg_YE\le t$''), applied with $t=\tau$. We show
that divisibility holds for every $t\le cn$. The argument for pure factors is unchanged; for a mixed irreducible
factor $F\nmid P$ of bidegree $(a,b')$, $1\le a,b'\le t$, the zero count in the proof of Lemma~\ref{kl:lem:div}
gives, before any symmetrisation,
\[
\Gamma_{A\times B}(F)+\Delta_F\ \ge\ (n+1-t)b'+(n'+1-t)a-2ab'\ \ge\ (n-2t)b'+(n'-2t)a,\tag{$\ast$}
\]
using $e_X,e_Y\le t$ and $2ab'\le t(a+b')$. We show that the left side is smaller, a contradiction.

\emph{Case B ($F$ binomial).} Each coset pair has the dependent-case bound $\le ab'$, so
$\Gamma_{A\times B}(F)\le L_1L_2ab'\le\frac12L_1L_2t(a+b')$. Claim~\ref{kl:clm:delta}(c) (valid as
$t\le cn\le n/(16L_2)$) gives $(n+1-2t)b'+(n'+1-2t)a-\Delta_F\ge\frac n{8L_2}(a+b')$. By (R1),
$\frac12L_1L_2t<\frac n{8L_2}$, so $\Gamma_{A\times B}(F)+\Delta_F<(n+1-2t)b'+(n'+1-2t)a$, which is at most the middle
expression of $(\ast)$.

\emph{Case N ($F$ not binomial).} For each of the $L_1L_2$ coset pairs, \cite[Theorem~2]{CZ13} (as in the independent
case of the proof of Lemma~\ref{kl:lem:pair}) gives the bound
$\max(3\cdot2^{2/3}m^{1/3}(ab')^{2/3},12mab'/p)=3\cdot2^{2/3}m^{1/3}(ab')^{2/3}$, because
$12ab'\le3\cdot2^{2/3}m^{1/3}(ab')^{2/3}$ when $ab'\le m/16$. With Lemma~\ref{prm:lem-DC},
\[
\Gamma_{A\times B}(F)+\Delta_F\le\Omega\,m^{1/3}(ab')^{2/3},\qquad\Omega:=3\cdot2^{2/3}\big(L_1L_2+L_2(1+c)^{1/3}\big).
\]
Now $(ab')^{2/3}\le(ab')^{1/2}t^{1/3}\le(ab')^{1/2}(cn)^{1/3}$, and, by the arithmetic--geometric mean
inequality and $t\le cn<cn'$, the right side of $(\ast)$ is at least
$2\sqrt{(n-2t)(n'-2t)ab'}\ge2(1-2c)(nn')^{1/2}(ab')^{1/2}$. So it suffices that
$\Omega\,c^{1/3}n^{1/3}(nn')^{1/3}<2(1-2c)(nn')^{1/2}$, i.e.\ $\Omega c^{1/3}n^{1/3}<2(1-2c)(nn')^{1/6}$; this
follows from (R3) because $(nn')^{1/6}>n^{1/3}$.

So $F\mid P$ in both cases, and the induction in the proof of Lemma~\ref{kl:lem:div} goes through ($t$ does not
increase when $F$ is cancelled). The size bound: $|D_V|\le|V|$; $n|H_\tau(V)|\le n|V|/(n+1-\tau)\le|V|/(1-c)$; and if
$|V|\ge(\tau+1)^2>(cn)^2$ the coset branch costs $m<Kn^2<(K/c^2)|V|$. Monotonicity and polynomial time are as in the
proof of Theorem~\ref{kl:thm:KL}. The numerical claims are exact computations: (R3) is certified through
$(1+c)^{1/3}\le1+c/3$ and cubing, $108(112+8(1+c/3))^3c\le8(1-2c)^3$ ($7.9754\le7.99999$ at $\tilde c$); (R1), (R2)
are immediate.
\end{proof}

\begin{remark}[what the reduction does and does not do]\label{prm:rem-KL}
(i) The gain is $37.41$ bits in $\prmlg a$ against $c_*$ ($20.44$ against $c_*'$), hence the same amount in $C_{\rm exact}$,
against $e(E_C+E_L)\ge10^{15}$. (ii) The chain of inequalities in Case~N is tight at $a=b'=t$, $n'\to n$. (iii) It
cannot bring $\prmlg a$ near $\prmlg2660=11.378$: that would need $c\ge\sqrt{2/2660}\approx0.027$, while any argument
of this shape needs $\Omega c^{1/3}<2$, i.e.\ $c<4.3\cdot10^{-8}$. (iv) Balancing $t$ against the large-view branch
gains nothing: $a$ decreases in $c$, so the largest $c$ satisfying (R1)--(R3) is optimal. (v) The small-view locator
(ratio $1$ up to $m_1+\lfloor3m_2/2\rfloor-1$ reads at large nodes; Proposition~\ref{kl:prop:natural}) and the bound
$m_1>2455$ on recursion nodes (Remark~\ref{main:rem:2455}) change only the per-node factors of small nodes, which are
$O(1)$ levels of the tree; the exponent $\prmlg a$ in $\bar\kappa\le(\ln\ell)^{\prmlg a}$ is set by the large nodes,
where the coset branch costs $K/c^2$. (vi) Choosing the bidegree bound of $E$ as $(\lfloor cn\rfloor,\lfloor cn'\rfloor)$
should remove the factor $K$ (one bit); we do not pursue this. (vii) The non-trivial regime still starts at $m_1\ge2.34\cdot10^7$,
$m\ge5\cdot10^{14}$, far beyond the sizes at which the leak can be computed exactly.
\end{remark}

\subsection{How to read the constants}\label{prm:sec-read}

\emph{Where the size comes from.} In $C_{\rm exact}$ the dominant terms are $eE_C$ and $eE_L$ with $e=15+\prmlg2660$. Both
come from the constant-query step: Theorem~\ref{thmD:thm} follows the gap amplification of \cite{Din07}, and the
factor $e$ is the polylog exponent of $3(4s_A+1)/\gamma_A$ (residual size times inverse gap of the
\cite{BS08}-style PCPP). The zero-knowledge-specific contributions are $\prmlg a$ and the additive $E_L$ of the
fixed point.

\emph{Comparison with \cite{ZOGS26}.} \cite{ZOGS26} also obtains constant query complexity through Dinur's
transformation (its Lemma~A.1, a sketch: length $\ell q^{c_1}$, query bound $b/q^{c_2}$ for absolute constants
$c_1,c_2$), and its $\tilde O(T+b^2)$ hides those exponents; it does not state them. So both results inherit
galactic exponents of the same origin, and neither is a practical construction. The difference claimed by our main
theorem is the dependence on $b$ (linear rather than quadratic), not the size of the constants.

\emph{Why a feasibility result is meaningful.} Open Problem~1 of \cite{GOS25} asks whether $O(1)$-query
zero-knowledge PCPs of quasilinear length exist at all; it is a question about asymptotic exponents, like the
existence of quasilinear constant-query PCPs answered by combining \cite{BS08} with \cite{Din07}, whose printed
analysis also gives galactic constants. Our main theorem answers the existence question for every polynomial query bound, with the simulators of
Theorem~\ref{thm:intro-main}(v), and with absolute exponents given by explicit formulas in
Dinur's constants. Lowering them, e.g.\ by a constant-query composition with small constants in
place of gap amplification, is a separate question that this paper does not address.
\endgroup

\begingroup
\providecommand{\F}{\mathbb F}
\providecommand{\Fb}{\overline{\mathbb F}}
\providecommand{\RS}{\mathrm{RS}}
\providecommand{\ev}{\mathrm{ev}}
\providecommand{\spn}{\mathrm{span}}
\providecommand{\LC}{\mathrm{LC}}
\providecommand{\poly}{\mathrm{poly}}

\section{Necessity of the hypotheses}\label{sec:negatives}\label{neg:sec}

This appendix explains why the strong reading of \cite[Open Problem~1]{GOS25} needed new ideas, and which
hypotheses of the locator lemma (Theorem~\ref{kl:thm:KL}) are necessary. The locator lemma concerns two masked lifts
of the same radix on a union of cosets (the \emph{same-radix configuration}; a \emph{node} is one instance of it,
Definition~\ref{kl:def:node}). Every result below is a statement about \emph{one simulation paradigm}: simulating a
fixed view of a masked Reed--Solomon proof of proximity by reading few input values (\emph{point charging}). Every
bound is \emph{per fixed view}; since views are fixed non-adaptively, each lower bound holds a fortiori against
adaptive verifiers. \textbf{None of the results is a lower bound for zero-knowledge PCPs as such}: none shows that a
composed PCP fails to be zero knowledge, and none contradicts the locator lemma: every family below either violates
one of its hypotheses or is located by its locator (Proposition~\ref{neg:prop:consistent}). By a \emph{design law} we
mean an additional hypothesis that restricts the incidences between the grid of lift symbols and low-degree curves
(for instance, one forbidding many grid points on the graph of a M{\"o}bius map); the locator lemma uses none. Some
statements are illustrated by examples whose values were obtained by exact computer computation over the stated
fields; the examples are not used elsewhere, and no proof in this paper relies on them.

\subsection{Charges, samplers and basic lemmas}\label{neg:basic}

\paragraph{Notation.} $\RS(u)=\F[X]_{<u}$ with coefficient vectors $(c_j)_{j<u}$; a functional is a vector
$h\in\F^u$ with $h(C)=\sum_jh_jc_j$. We write $\ev_x=(x^j)_{j<u}$ and $\partial_x=(jx^{j-1})_{j<u}$
($C\mapsto C'(x)$). For $\mathrm{char}\,\F\nmid k$, $\mu_k\le\F^\times$ is the subgroup of order $k$. The radix-$k$
lift of $f\in\F[X]$ is $Q^{(k)}_f(X,Y)=\sum_{t<k}X^tf_t(Y)$, where $f=\sum_{t<k}X^tf_t(X^k)$. Thus
$Q^{(k)}_f(X,X^k)=f$, $\deg_XQ^{(k)}_f<k$, $\deg_YQ^{(k)}_f<\lceil u/k\rceil$ for $f\in\RS(u)$, and
$Q^{(k)}_f(a,b)=(f\bmod(X^k-b))(a)=\sum_jc_ja^{j\bmod k}b^{\lfloor j/k\rfloor}$ \cite[\S6]{BS08}. A \emph{view} is a
fixed set of proof symbols. We write $s=|D|$ for the size of the evaluation domain.

\paragraph{Charges, locators and samplers.} For a randomised linear prover and a view $V$, let $\Lambda(V)$ be the space of input
functionals on which the law of the view depends (Lemma~\ref{neg:lem:DL}). A set $W\subseteq D$ is a \emph{valid
charge} for $V$ if $\Lambda(V)\subseteq\spn\{\ev_x:x\in W\}$. A \emph{locator} is a map $V\mapsto W(V)$ that is
monotone, computable in polynomial time, a valid charge for every view, and of size $|W(V)|\le a|V|$. An \emph{exact
point-reading sampler} is a randomised oracle algorithm that queries values $C(x)$, $x\in D$, adaptively, halts with
probability~$1$, and outputs exactly the law of the view for every input; it has no running-time bound. A
straight-line simulator restricted to a fixed view is such a sampler. \emph{Statistical} samplers have output error
$\varepsilon$. Samplers that query other functionals of $C$ (for example Hermite jets) are not covered by the
point-charging bounds below.

\begin{lemma}[{Leak of a masked linear proof; cf.\ the masking argument of {\cite[Lemma~4.3, Claim~7.7]{BCFGRS17}}}]\label{neg:lem:DL}
Let the proof be $(\Pi_1(C+r),\Pi_2(r))$ with $\Pi_1,\Pi_2$ linear and $r$ uniform in $\RS(u)$, and let a view read the
functionals $\Phi$ of $C+r$ and $\Psi$ of $r$. Then the law of the view is uniform on an affine space whose
direction does not depend on $C$ and whose position depends on $C$ only through $\Lambda(V):=\spn\Phi\cap\spn\Psi$.
For every $h\in\Lambda(V)$ there is a fixed linear form $\lambda$ with $\lambda\cdot\pi|_V=h(C)$ for every $C$ and
$r$. Given $(h(C))_{h\in\Lambda(V)}$, the view is sampled exactly by Gaussian elimination.
\end{lemma}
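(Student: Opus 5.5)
The plan is to reduce everything to the linear-algebra picture already used in Lemma~\ref{main:lem:DL}, now phrased with general linear maps $\Pi_1,\Pi_2$. Identify the view with the vector $(\Phi(C+r),\Psi r)$, where $\Phi$ and $\Psi$ are the matrices of the read functionals. Since $r$ is uniform on $\RS(u)$, the image of $r\mapsto(\Phi r,\Psi r)$ is a linear subspace $\mathcal M$, independent of $C$, and the view is uniform on the coset $(\Phi C,0)+\mathcal M$. This gives the affine-space claim directly: a linear image of a uniform vector is uniform on the image, and translating by $(\Phi C,0)$ gives the coset.

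Next I would show that the coset depends on $C$ only through $\Lambda(V)=\spn\Phi\cap\spn\Psi$. Two inputs give the same coset iff $(\Phi(C-C'),0)\in\mathcal M$. That holds iff there is $r$ with $\Phi r=\Phi(C-C')$ and $\Psi r=0$, that is, iff $C-C'\in\ker\Psi+\ker\Phi$. Taking annihilators, $(\ker\Phi+\ker\Psi)^\perp=\spn\Phi\cap\spn\Psi$. So the coset is a function of $(h(C))_{h\in\Lambda(V)}$ exactly.

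For the fixed linear form, take $h\in\Lambda(V)$ and write $h=\alpha^\top\Phi=\beta^\top\Psi$. Set $\lambda\cdot\pi|_V:=\alpha^\top\Phi(C+r)-\beta^\top\Psi r$. This equals $h(C)+h(r)-h(r)=h(C)$ for all $C,r$. The one point to check is that $\lambda$ is well defined on symbols when the same symbol is read both as a $\Phi$-row and a $\Psi$-row; in that case one just adds the coefficients.

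For sampling, fix a basis $h_1,\dots,h_k$ of $\Lambda(V)$ and choose any $\tilde C$ with $h_i(\tilde C)=h_i(C)$; Gaussian elimination finds one, since the $h_i$ are independent. By the previous step, $\tilde C$ induces the same coset. Then draw $r$ uniform and output the view computed from $\tilde C$ and $r$. Equivalently, output $(\Phi\tilde C,0)$ plus a uniform element of $\mathcal M$, obtained from a uniform combination of a basis of $\mathcal M$.

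I expect no real obstacle. The only care needed is the annihilator identity $(\ker\Phi+\ker\Psi)^\perp=\spn\Phi\cap\spn\Psi$ in finite dimension, and the bookkeeping for symbols read on both sides.
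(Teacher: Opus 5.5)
Your proof is correct and is essentially the paper's argument. The paper identifies the annihilators of $\mathcal M$ with the pairs $(\alpha,\beta)$ such that $h=\alpha^\top\Phi=-\beta^\top\Psi\in\Lambda(V)$, and uses that $\mathcal M$ is cut out by its annihilators; this is the dual form of your $\ker\Phi+\ker\Psi$ computation (the one in Lemma~\ref{main:lem:DL}), and your Gaussian-elimination sampler spells out what the paper leaves implicit. One remark is unneeded: a symbol can never be read on both sides here, because the positions of $\Pi_1(C+r)$ and $\Pi_2(r)$ are distinct.
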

\begin{proof}
For fixed $C$ the view is uniform on $(\Phi(C),0)+U$ with $U=\{(\Phi(r),\Psi(r)):r\in\RS(u)\}$. A form
$(\alpha,\beta)$ annihilates $U$ iff $\sum_i\alpha_i\phi_i=-\sum_j\beta_j\psi_j=:h$, that is, iff $h\in\Lambda(V)$;
then $(\alpha,\beta)\cdot(\Phi(C),0)=h(C)$. Since $U$ is the intersection of the kernels of its annihilators, the
position of the coset is determined by these values. Finally $(\alpha,\beta)\cdot\pi|_V=h(C+r)-h(r)=h(C)$.
\end{proof}

\begin{lemma}[Determined functionals force queries]\label{neg:lem:samp}
Let $\lambda\cdot\pi|_V=h(C)$ for every input $C\in\RS(u)$ and every outcome of the prover's coins, and let
$\mathcal F\subseteq\RS(u)^*$ be the family of functionals a sampler may query ($\mathcal F=\{\ev_x:x\in D\}$ for
point-reading samplers).
\begin{enumerate}
\item[(a)] For every exact sampler querying $\mathcal F$, every input $C$ and every halting run of positive
probability with queried set $J$, we have $h\in\spn J$. The same holds for every halting run with non-$\bot$ output
of a sampler that may output $\bot$ and is exact conditioned on not outputting $\bot$ (the convention of
\cite[Def.~4.3.1]{Gol01}).
\item[(b)] Over a finite field, for every $\varepsilon$-statistical sampler and $C$ uniform in $\RS(u)$, the
probability that the queried set $J$ fails $h\in\spn J$ is at most $\varepsilon|\F|/(|\F|-1)\le2\varepsilon$.
\end{enumerate}
\end{lemma}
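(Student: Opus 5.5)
For part~(a) I would argue by contradiction, using the fact that the sampler only ever learns values of functionals in $J$. Fix an input $C$ and a halting run of positive probability. Condition on the sampler's coins and on the answers it received; call the queried set $J$. Suppose $h\notin\spn J$. Then there is a direction $\Delta\in\RS(u)$ with $g(\Delta)=0$ for all $g\in J$ but $h(\Delta)\ne0$.

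Set $C'=C+\Delta$. On the same coins, the sampler run on $C'$ receives exactly the same answers, because every query is a functional in $J$, all of which vanish on $\Delta$. So the run on $C'$ is identical to the run on $C$, has the same positive probability, and produces the same output view $\pi|_V$.

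Exactness now forces a contradiction. By exactness, this output is in the support of the true view law both under $C$ and under $C'$. The hypothesis $\lambda\cdot\pi|_V=h(\cdot)$ holds on every outcome of the prover's coins. It therefore gives $\lambda\cdot\pi|_V=h(C)$ and also $\lambda\cdot\pi|_V=h(C')=h(C)+h(\Delta)\ne h(C)$, which is impossible.

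For the $\bot$ variant the same argument applies to any halting run with non-$\bot$ output. Exactness conditioned on not outputting $\bot$ still forces every non-$\bot$ output of positive probability into the true support. The run on $C'$ yields the same non-$\bot$ output with the same positive probability.

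For part~(b) I would use a random translation along the unqueried direction. Let $C$ be uniform, and let $B$ be the event $h\notin\spn J$. On $B$, choose $\Delta$ as above as a deterministic function of $J$; we can normalise so that $h(\Delta)=1$.

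Next, let $c\in\F$ be uniform and independent of everything. Because $C$ is uniform, $C+c\Delta$ is also uniform. On $B$, the translation $C\mapsto C+c\Delta$ leaves the whole transcript unchanged, including the output. The value $h$ of the input, however, becomes uniform given the transcript.

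So, on $B$, the output $o$ satisfies $\lambda\cdot o=h(\text{input})$ with probability only $1/|\F|$. Under the real view law, the identity $\lambda\cdot\pi|_V=h(C)$ holds with probability~$1$.

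To compare the two laws, I would work with the joint distribution of the input and the output, which is where the $\varepsilon$-closeness applies. Let $E$ be the event $\lambda\cdot o\ne h(C)$. It has real probability $0$ and sampled probability at least $\Pr[B](1-1/|\F|)$. Hence $\varepsilon\ge\Pr[B](|\F|-1)/|\F|$, which rearranges to $\Pr[B]\le\varepsilon|\F|/(|\F|-1)$.

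The main obstacle is making the translation argument rigorous when $J$ is chosen adaptively and is therefore random. I would handle it by fixing the sampler's coins and working leaf by leaf of the adaptive query tree. At each leaf, $J$ and $\Delta$ are determined. Inputs in the coset $C+\langle\Delta\rangle$ all reach the same leaf, since at every node along the path the query functional vanishes on $\Delta$. This gives the uniformity of $h$ within the leaf, and summing over leaves gives the bound.
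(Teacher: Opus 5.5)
Your proposal is correct and follows the paper's proof. Part (a) is the same contradiction with exactness on $C+\Delta$, including the $\bot$ variant. Part (b) is the same comparison of the event $\lambda\cdot o\ne h(C)$ under the $\varepsilon$-close joint laws of (input, output). The only difference is cosmetic: the paper gets the conditional uniformity of $h(C)$ directly, since given the transcript $C$ is uniform on an affine space on which $h$ is non-constant, while you reach the same fact through translation invariance of each leaf along $\Delta$.
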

\begin{proof}
(a) Exactness and $\lambda\cdot\pi|_V=h(C)$ give $\lambda\cdot o=h(C)$ for every (non-$\bot$) output $o$ in the
support. If $h\notin\spn J$, there is $\delta\in\RS(u)$ with $\theta(\delta)=0$ for all $\theta\in J$ and
$h(\delta)\ne0$. On input $C+\delta$ with the same coins, every query lies in $J$ and receives the same answer, so the
run, its positive probability and its output $o$ are unchanged; but $\lambda\cdot o=h(C)\ne h(C+\delta)$,
contradicting exactness on $C+\delta$.
(b) A transcript $\tau$ fixes $J_\tau$ and the output $o_\tau$. Given $\tau$, the input $C$ is uniform on an affine
space on which $h$ is non-constant when $h\notin\spn J_\tau$, so $\Pr[\lambda\cdot o_\tau=h(C)\mid\tau]=1/|\F|$. The
real pair (input, view) always satisfies $\lambda\cdot\pi|_V=h(C)$, and the joint laws are $\varepsilon$-close.
\end{proof}

By Lemma~\ref{neg:lem:samp}, a lower bound on the size of valid charges for some $h\in\Lambda(V)$ is a lower bound
on the number of input reads of every exact point-reading sampler for $V$, on every input and every halting run.

\begin{lemma}[Trivial and large views]\label{neg:lem:triv}
(a) Any $u$ distinct points of $D$ form a valid charge for every view.
(b) Let $W$ be monotone, polynomial-time, of size $\le a|V|$ and a valid charge on every view with $|V|<u/a'$. Put
$W'(V)=W(V)$ if $|V|<u/a'$ and $W'(V)=D$ otherwise. Then $W'$ is a locator with constant $\max(a,a'|D|/u)$.
\end{lemma}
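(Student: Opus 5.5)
For (a), I will show that any $u$ distinct points $x_1,\dots,x_u\in D$ form a valid charge by proving that $\ev_{x_1},\dots,\ev_{x_u}$ span the whole dual space $\RS(u)^*=\F^u$. The matrix $(x_i^j)_{i\le u,\,j<u}$ is a Vandermonde matrix with distinct nodes, so it is invertible. Its rows are therefore a basis of $\F^u$, and every $\Lambda(V)\subseteq\F^u$ lies in their span, whatever the view. The only hypothesis used is that the points are distinct, so $|D|\ge u$ is all we need.

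For (b), I will check the four defining properties of a locator for $W'$ in turn. \emph{Validity} splits into two cases. On views with $|V|<u/a'$, $W'=W$ is valid by hypothesis. Otherwise $W'(V)=D$, and $D$ contains $u$ distinct points, so it is valid by (a). \emph{Polynomial time} is immediate: compute $|V|$, compare it with $u/a'$, and output either $W(V)$ or $D$.

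\emph{Monotonicity} needs a short case check. Take $V\subseteq V'$. If both views are small, monotonicity of $W$ applies. If both are large, both images equal $D$. If $V$ is small and $V'$ is large, then $W'(V)=W(V)\subseteq D=W'(V')$, using that $W$ maps into $D$. The case $V$ large and $V'$ small cannot occur, since $|V|\le|V'|$.

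\emph{Size} is the step that produces the constant $\max(a,a'|D|/u)$. On small views $|W'(V)|=|W(V)|\le a|V|$. On large views $|V|\ge u/a'$, so $|D|=(a'|D|/u)\cdot(u/a')\le(a'|D|/u)|V|$.

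I expect no real obstacle here. The only points to watch are that $W$ maps into $D$, which monotonicity across the threshold uses, and that the size constant $a'|D|/u$ may depend on the node, which is exactly why the lemma states it explicitly.
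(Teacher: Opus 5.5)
Your proposal is correct and follows the paper's proof: (a) is the Vandermonde basis argument, and (b) gets monotonicity from $W(V)\subseteq D$ together with the upward-closed threshold, validity of $D$ from (a), and the size bound $|D|\le(a'|D|/u)|V|$ on large views. Your case-by-case monotonicity check and your explicit use of the standing assumption $|D|\ge u$ simply spell out steps the paper leaves implicit.
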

\begin{proof}
(a) Their evaluations form a basis of $\RS(u)^*$ (Vandermonde). (b) Monotone because $W(V)\subseteq D$ and the
condition $|V|\ge u/a'$ is upward closed; valid by (a); and $|D|=(|D|/u)\,u\le(a'|D|/u)|V|$ when $|V|\ge u/a'$.
\end{proof}

\begin{lemma}[Lift reads are Cauchy vectors]\label{neg:lem:CR}
Let $b=x_0^{k}$, $F_b=x_0\mu_{k}$ and $a\notin F_b$. Then
$(C\bmod(X^{k}-b))(a)=\sum_{y\in F_b}\frac{(a^{k}-b)\,y}{k\,b\,(a-y)}\,C(y)$; with $y=x_0\omega$ and $a=x_0\nu$ the
weight is $\frac{\nu^{k}-1}{k}\cdot\frac{\omega}{\nu-\omega}\ne0$.
\end{lemma}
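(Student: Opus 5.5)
We will prove the identity by reducing $C$ modulo $X^k-b$ and then interpolating on the fibre. First, $C\bmod(X^k-b)$ is the unique polynomial $R$ of degree $<k$ that agrees with $C$ on the $k$ roots of $X^k-b$. These roots are exactly $F_b=x_0\mu_k$, and they are distinct because $\mathrm{char}\,\F\nmid k$. So $R(a)=\sum_{y\in F_b}C(y)L_y(a)$, where $L_y$ are the Lagrange basis polynomials on $F_b$. We then compute $L_y$ in closed form: $L_y(X)=\frac{X^k-b}{(X-y)\,Z'(y)}$ with $Z(X)=X^k-b$, so $Z'(y)=ky^{k-1}=kb/y$, using $y^k=b$. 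This gives
\[
L_y(a)=\frac{(a^k-b)\,y}{k\,b\,(a-y)},
\]
which is the claimed weight. The expression is well defined because $a\notin F_b$ gives $a\ne y$, and $b\ne0$ since $x_0\in\F^\times$.

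For the second form, we substitute $y=x_0\omega$ and $a=x_0\nu$ with $\omega\in\mu_k$. Then $b=x_0^k$, $a^k-b=x_0^k(\nu^k-1)$ and $a-y=x_0(\nu-\omega)$. The powers of $x_0$ cancel, leaving $\frac{\nu^k-1}{k}\cdot\frac{\omega}{\nu-\omega}$.

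It remains to show the weight is nonzero. Since $a\notin x_0\mu_k$, we have $\nu\notin\mu_k$, hence $\nu^k\ne1$. Also $\omega\ne0$ and $k\ne0$ in $\F$. So every factor is nonzero.

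There is no real obstacle here. The only points to state explicitly are that $X^k-b$ is separable, which uses $\mathrm{char}\,\F\nmid k$, and the identity $Z'(y)=kb/y$.
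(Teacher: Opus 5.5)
Your proposal is correct and follows the paper's proof: the remainder $C\bmod(X^k-b)$ is the Lagrange interpolant of $C$ on the fibre $F_b$, whose weight at $y$ evaluated at $a$ is $\frac{a^k-b}{a-y}\cdot\frac{1}{ky^{k-1}}$, and $y^{k-1}=b/y$. Your substitution $y=x_0\omega$, $a=x_0\nu$ and the nonvanishing check ($\nu\notin\mu_k$, $\nu\ne\omega$, $k\ne0$ in $\F$) spell out the details that the paper leaves implicit.
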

\begin{proof}
The remainder is the Lagrange interpolant of $C$ on $F_b$; its weight at $y$, evaluated at $a$, is
$\frac{a^{k}-b}{a-y}\cdot\frac{1}{k\,y^{k-1}}$, and $y^{k-1}=b/y$.
\end{proof}

\begin{lemma}[Linear-complexity certificates; \cite{Mas69}]\label{neg:lem:LC}
Let $\LC(h)$ be the linear complexity of the sequence $(h_0,\dots,h_{u-1})$ and $P_h$ the characteristic polynomial
of its shortest linear recurrence (as computed by the Berlekamp--Massey algorithm).
\begin{enumerate}
\item[(a)] If $h\in\spn\{\ev_w:w\in W\}$ for a finite $W\subseteq\Fb$, then $|W|\ge\LC(h)$.
\item[(b)] If $h\in\spn\{\ev_w:w\in W\}$ for some $W\subseteq\F^\times$ with $|W|\le u-\LC(h)$, then $P_h$ has
$\LC(h)$ distinct roots, all in $W$. Hence, if $P_h$ has a repeated root or a root outside $D$, every valid charge
$W\subseteq D$ has $|W|\ge u+1-\LC(h)$.
\end{enumerate}
\end{lemma}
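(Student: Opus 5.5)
The plan is to translate both parts into statements about linear recurrences of the finite sequence $(h_j)_{j<u}$ and then use a uniqueness property of short recurrences. Throughout, a polynomial $P=X^L-\sum_{i=1}^{L}p_iX^{L-i}$ \emph{generates} $(h_j)_{j<u}$ if $h_j=\sum_{i=1}^{L}p_ih_{j-i}$ for $L\le j<u$; $\LC(h)$ is the least such $L$, and $P_h$ is the generator returned by Berlekamp--Massey \cite{Mas69}.

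\emph{Part (a).} Write $h=\sum_{w\in W'}\alpha_w\ev_w$, where $W'\subseteq W$ is the set of $w$ with $\alpha_w\ne0$. Put $Q=\prod_{w\in W'}(X-w)$, of degree $|W'|$. For each $w$, the geometric sequence $(w^j)_j$ satisfies $Q(w)=0$, so $Q$ generates it. By linearity $Q$ generates $h$, hence $\LC(h)\le|W'|\le|W|$. This part also works over $\Fb$ without change.

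\emph{Part (b).} Let $W'$ and $Q$ be as in (a). Now $W'\subseteq\F^\times$, so every root of $Q$ is nonzero, and $k':=\deg Q\le u-L$ with $L:=\LC(h)$. The key step is the classical uniqueness fact: if $P$ and $Q$ both generate a sequence of length $u\ge\deg P+\deg Q$, then $\gcd(P,Q)$ also generates it. I would prove this with reciprocal generating functions. Put $H(Z)=\sum_{j<u}h_jZ^j$ and $P^*(Z)=Z^{\deg P}P(1/Z)$, so $P^*(0)=1$.

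\begin{itemize}
\item Since $P$ generates $h$, we have $H\,P^*\equiv R_P\pmod{Z^u}$ with $\deg R_P<\deg P$; similarly $H\,Q^*\equiv R_Q\pmod{Z^u}$ with $\deg R_Q<\deg Q$.
\item Cross-multiplying gives $R_PQ^*\equiv R_QP^*\pmod{Z^u}$. Both sides have degree $<\deg P+\deg Q\le u$, so the two sides are equal as polynomials.
\item Hence $R_P/P^*=R_Q/Q^*$ as rational functions, and $H\equiv R_P/P^*\pmod{Z^u}$, since $P^*$ is invertible in $\F[[Z]]$.
\item Reduce this fraction to lowest terms. Its denominator divides both $P^*$ and $Q^*$, and the reversal of that denominator generates $h$ with degree at most $\deg\gcd(P,Q)$.
\end{itemize}

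Apply this fact with $P=P_h$. Minimality of $L$ forces $\deg\gcd(P_h,Q)\ge L=\deg P_h$, so $P_h\mid Q$. Therefore $P_h$ has exactly $L$ distinct roots, all lying in $W'\subseteq W$. The main obstacle is the bookkeeping in the reversal. Because $W\subseteq\F^\times$, the polynomial $Q$ has no root at $0$, so $Q^*$ has degree exactly $\deg Q$. This keeps the degree count $\deg P+\deg Q\le u$ exact. It also ensures that divisibility of the reversed polynomials passes back to divisibility of the original polynomials.

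\emph{The consequence.} Suppose $W\subseteq D$ is a valid charge with $|W|\le u-\LC(h)$. By (b), every root of $P_h$ is simple and lies in $W\subseteq D$. So if $P_h$ has a repeated root or a root outside $D$, every valid charge $W\subseteq D$ must have $|W|\ge u+1-\LC(h)$.
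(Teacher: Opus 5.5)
Your proof is correct and is essentially the paper's argument: both pass to generating functions, turn the congruence modulo $Z^u$ into an exact polynomial identity using the degree bound $<u$, and reduce to lowest terms to show that the minimal denominator divides $\prod_{w}(1-wZ)$. You package this as a gcd lemma for two generators, whereas the paper works directly with the minimal fraction $A/K$, showing $\deg K=L$ and $\gcd(A,K)=1$; note that the reversal of your reduced denominator $S$ is a generator only because $\deg R<\deg S$, which follows from $Q(0)\ne0$ (as you flag), and in the general form of your gcd fact, with both polynomials vanishing at $0$, one must pad by a power of $X$, as the paper's $\max(\deg K,\deg A+1)$ does.
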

\begin{proof}
(a) $h_j=\sum_wc_ww^j$ satisfies the recurrence with characteristic polynomial $\prod_w(Z-w)$.
(b) Work with generating functions. Let $L=\LC(h)$ and let $h(Z)\equiv A(Z)/K(Z)\pmod{Z^u}$ be the shortest
recurrence, with $K(0)=1$, $\deg K\le L$ and $\deg A<L$; by minimality $L=\max(\deg K,\deg A+1)$, since otherwise
$(A,K)$ would define a recurrence of length $L-1$. The hypothesis gives
$h(Z)\equiv B(Z)/\tilde Q(Z)\pmod{Z^u}$ with $\tilde Q=\prod_{w\in W}(1-wZ)$ and $\deg B<|W|$. Then
$A\tilde Q-BK\equiv0\pmod{Z^u}$ and $\deg(A\tilde Q-BK)<L+|W|\le u$, so $A\tilde Q=BK$. If $\deg K<L$ then
$\deg A=L-1\ge\deg K$, and $A\tilde Q=BK$ gives $\deg A+|W|=\deg B+\deg K<|W|+\deg K$, a contradiction; so
$\deg K=L$. If $G=\gcd(A,K)$ were non-constant, $A/G$ over $K/G$ would be a shorter recurrence; so $K$ divides
$\tilde Q$, and $P_h(X)=X^LK(1/X)$ is a product of $L$ distinct factors $X-w$, $w\in W$. The last sentence is the
contrapositive with $W\subseteq D\subseteq\F^\times$.
\end{proof}

\subsection{The square-root threshold}\label{neg:sqrt}

The quasilinear ZK-PCPs of \cite{ZOGS26} are built from locally simulatable sheaf codes: their Reed--Solomon PZK-PCPP
is $\sqrt n$-local with zero-knowledge query bound $\sqrt n/8-2$ \cite[Thm.~7.18]{ZOGS26}, which leads to length
$\tilde O(T+b^2)$ for $b$-query zero knowledge \cite[Thm.~1.1/8.2]{ZOGS26}. The remaining subsections show that the
same square-root threshold appears, on balanced splits, when the natural masked-lift configurations leave out a
hypothesis of the locator lemma marked P (proved necessary) in Section~\ref{neg:table}; for its other hypotheses,
nothing stronger than per-instance certificates is known. They also show that in the same-radix configuration itself some views have global leaks (leaks not confined to
single rows), which the locator of Theorem~\ref{kl:thm:KL} nevertheless pays for (Section~\ref{neg:SR}).

\subsection{A single lift is not enough}\label{neg:single}

\begin{definition}[Single lift]
Let $\varphi\in\F[X]$ have degree $n_1$, $D_X\ge n_1$, $D_Y\ge2$, $\mathcal B=\{Q:\deg_XQ<D_X,\ \deg_YQ<D_Y\}$ and
$n_1<u\le D_X+n_1(D_Y-1)$. A lift of $C\in\RS(u)$ is any $Q\in\mathcal B$ with $Q(X,\varphi(X))=C$; the prover writes a
(possibly random) lift on a grid $G_X\times G_Y$. For each row $b$ used, $\varphi-b$ has $n_1$ distinct roots
$F_b\subseteq D$. Examples: $\varphi=X^{n_1}$ on $D=\mu_n$, and the subspace polynomial of an $\F_2$-subspace of
$\mathrm{GF}(2^k)$ \cite{BS08}.
\end{definition}

\begin{theorem}[Tangent leak]\label{neg:thm:T}
Let $x_0\in D$, $b=\varphi(x_0)$, and let $V$ consist of $D_X$ points of row $b$ with distinct abscissae and $D_Y$
points of column $x_0$ with distinct ordinates (or the same with $n_1$ of the row points replaced by the input reads
$F_b$). Put $R_b=C\bmod(\varphi-b)$ and $\tau(C)=(C'(x_0)-R_b'(x_0))/\varphi'(x_0)$. Then the functionals $\ev_z$
($z\in F_b$) and $\tau$ are fixed linear functions of the view; every finite $W\subseteq\Fb$ whose evaluations span
them has $|W|\ge u$; and every exact point-reading sampler reads at least $u$ input points. The ratio is at least
$u/(D_X+D_Y)$, which is $\Omega(\sqrt s)$ on balanced splits.
\end{theorem}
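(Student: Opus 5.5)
The plan is to show directly that the view computes $C(z)$ for $z\in F_b$ and the tangent functional $\tau(C)$ by fixed linear forms, whatever lift the prover chooses. Then a product-of-linear-factors argument gives the lower bound on charges, and Lemma~\ref{neg:lem:samp}(a) gives the bound on samplers.

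\emph{Step 1: the view determines the functionals.} Since $\deg_XQ<D_X$, the $D_X$ row points with distinct abscissae determine the row polynomial $Q(X,b)$ by interpolation, as a fixed linear function of the view. Since $\deg_YQ<D_Y$, the $D_Y$ column points determine $Q(x_0,Y)$ in the same way. The identity $Q(X,\varphi(X))=C$ shows that $Q(X,\varphi(X))-Q(X,b)$ is divisible by $\varphi-b$. Hence $Q(z,b)=C(z)$ for $z\in F_b$, so each $\ev_z$ is a fixed linear form of the view. This also covers the variant in which $n_1$ row reads are replaced by the input reads $F_b$: those reads supply the missing row values $Q(z,b)=C(z)$.

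Differentiating $C(X)=Q(X,\varphi(X))$ at $x_0$ gives
\[C'(x_0)=Q_X(x_0,b)+Q_Y(x_0,b)\,\varphi'(x_0).\]
The first term is read off the row polynomial and the second off the column polynomial. Since $\varphi-b$ has $n_1$ distinct roots, $x_0$ is a simple root and $\varphi'(x_0)\ne0$. The remainder $R_b$ is the interpolant of $C$ on $F_b$, so $R_b'(x_0)$ is a fixed combination of the $\ev_z$. Therefore $\tau$ is a fixed linear form of the view. All of these identities hold for every lift in $\mathcal B$ and every mask, so it does not matter that the lift is random.

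\emph{Step 2: charges have size $\ge u$.} I argue by duality. If $|W|<u$, I exhibit $C\in\RS(u)\otimes\Fb$ vanishing on $W$ on which one of the target functionals is nonzero; a functional in $\spn\{\ev_w:w\in W\}$ must vanish on such a $C$. Take $C_W:=\prod_{w\in W}(X-w)$, which has degree $|W|<u$.
\begin{itemize}
\item If some $z\in F_b$ lies outside $W$, then $C_W(z)\ne0$, and $\ev_z$ is not in the span.
\item Otherwise $F_b\subseteq W$. Then $C_W$ vanishes at the $n_1$ distinct roots of $\varphi-b$, so $(\varphi-b)\mid C_W$ and $R_b=0$. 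The point $x_0$ is a simple root of $C_W$, so $C_W'(x_0)\ne0$ and $\tau(C_W)=C_W'(x_0)/\varphi'(x_0)\ne0$.
\end{itemize}
The functionals extend $\Fb$-linearly, so working over $\Fb$ is harmless.

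\emph{Step 3: samplers and the ratio.} Apply Lemma~\ref{neg:lem:samp}(a) to each determined functional $h\in\{\ev_z\}_{z\in F_b}\cup\{\tau\}$, with $\mathcal F=\{\ev_x:x\in D\}$. On every halting run, the queried set $J$ spans all of these functionals, so by Step~2 we get $|J|\ge u$. The view has $|V|=D_X+D_Y$ symbols, which gives the ratio $u/(D_X+D_Y)$. On balanced splits $D_X,D_Y=\Theta(\sqrt u)$ and $u=\Theta(s)$, so the ratio is $\Omega(\sqrt s)$.

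I expect the main obstacle to be Step~1: checking carefully that $\tau$ is a \emph{fixed} linear function of the view for every admissible lift, which relies on the divisibility identity and on $\varphi'(x_0)\ne0$. The counting in Step~2 is short once the right test polynomial $C_W$ is chosen.
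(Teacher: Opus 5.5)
Your proposal is correct and follows the paper's proof essentially step for step: you interpolate $Q(X,b)$ and $Q(x_0,Y)$ from the row and column reads, differentiate $C(X)=Q(X,\varphi(X))$ at $x_0$ using $\varphi'(x_0)\ne0$, test against $\prod_{w\in W}(X-w)$ with the dichotomy $F_b\not\subseteq W$ versus $F_b\subseteq W$, and finish with Lemma~\ref{neg:lem:samp}. The differences are cosmetic: the paper writes $Q(X,b)=R_b+(\varphi-b)S$ and identifies $\tau=S(x_0)+Q_Y(x_0,b)$, whereas you subtract $R_b'(x_0)$ directly; also, in general only $|V|\le D_X+D_Y$ holds, since the row and the column may share the point $(x_0,b)$, which only helps the ratio.
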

\begin{proof}
$Y-\varphi(X)$ divides $Q(X,Y)-Q(X,\varphi(X))$, so $Q(X,b)\equiv C\pmod{\varphi-b}$; the row determines
$Q(X,b)=R_b+(\varphi-b)S$, hence $R_b$ (so every $C(z)=R_b(z)$, $z\in F_b$) and $S$. The column determines
$Q_Y(x_0,b)$. Differentiating $C(X)=Q(X,\varphi(X))$ gives $C'(x_0)=Q_X(x_0,b)+\varphi'(x_0)Q_Y(x_0,b)$, and
$Q_X(x_0,b)=R_b'(x_0)+\varphi'(x_0)S(x_0)$; so $S(x_0)+Q_Y(x_0,b)=\tau(C)$, for every lift. Let $|W|\le u-1$. As
$|W\cup\{z\}|\le u$, Vandermonde forces $F_b\subseteq W$. Then $C^*=\prod_{w\in W}(X-w)\in\RS(u)$ kills $\ev_W$, is
divisible by $\varphi-b$ (so $R_b(C^*)=0$), and $\tau(C^*)=\prod_{w\ne x_0}(x_0-w)/\varphi'(x_0)\ne0$. The sampler bound
is Lemma~\ref{neg:lem:samp}.
\end{proof}

The argument is characteristic-free: $\varphi'(x_0)\ne0$ because the roots of $\varphi-b$ are simple. It needs a
readable grid row through the curve point of an in-domain column, which the standard layout $G_Y=\varphi(D)$ of
\cite{BS08} has. In the complementary layout ($G_X$ a union of fibres, $G_Y=\varphi(D)\setminus H$) the same conclusion
holds only if the proof determines $n_1$ values of the row off the fibre; we do not know whether it does
(Section~\ref{neg:place}).

\begin{remark}[What Theorem~\ref{neg:thm:T} does not show]\label{neg:rem:BV}
Every functional exhibited in Theorem~\ref{neg:thm:T} lies in $\spn\{\ev_z:z\in F_b\}+\F\,\partial_{x_0}$, which is
the space of $n_1+1$ Hermite conditions (order two at $x_0\in F_b$, order one elsewhere on $F_b$). If $0\notin F_b$ and
$n_1+2\le u$, then $\ev_0$ is not in this space: $(X-x_0)\prod_{z\in F_b}(X-z)$ has degree $n_1+1<u$, satisfies all
the conditions and does not vanish at $0$. So Theorem~\ref{neg:thm:T} exhibits no leak of a Shamir-type secret
$C(0)$. It shows that \emph{point-charged simulation} of a grid-plus-curve view costs $u$ reads; it does \emph{not}
decide whether such joint views are private, which is the question of \cite[\S2.5]{BV26} (``We do not know how to
prove privacy for their joint local views'').
\end{remark}

\begin{proposition}[{Virtual column; the mechanism of {\cite[Lemma~4.15]{RW24}} and of the diagonal attack of \cite{BV26}}]\label{neg:prop:VC}
If $x_0\notin D$ and $V$ consists of $D_Y$ points of column $x_0$, then $\ev_{x_0}$ is a fixed linear function of the
view, and every valid $W\subseteq D$ has $|W|\ge u$.
\end{proposition}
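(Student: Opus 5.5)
The plan mirrors the column half of the proof of Theorem~\ref{neg:thm:T}, but now the column sits off the domain, so the derivative term disappears. We show that $D_Y$ reads of column $x_0$ determine $\ev_{x_0}$, and then that any point-charge for $\ev_{x_0}$ inside $D$ is large.

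First we use the column. For every lift $Q\in\mathcal B$ of $C$, the univariate polynomial $Q(x_0,Y)$ has degree $<D_Y$. The view reads it at $D_Y$ distinct ordinates, so Lagrange interpolation recovers $Q(x_0,\cdot)$ as a fixed linear function of the view. Evaluating at $Y=\varphi(x_0)$ gives $Q(x_0,\varphi(x_0))=C(x_0)=\ev_{x_0}(C)$, since $Q(X,\varphi(X))=C$ holds identically. The weights are the Lagrange coefficients for the column points at $\varphi(x_0)$. They do not depend on the lift or on the prover's coins, so the identity $\lambda\cdot\pi|_V=\ev_{x_0}(C)$ holds for every $C$ and every coin outcome.

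In the masked setting of Lemma~\ref{neg:lem:DL} the same reasoning applies to whichever side the column belongs to. The masked case needs a separate remark: if only $y$-side column points are read, then $\ev_{x_0}$ of $C+r$ is not a leak about $C$. So I will state the proposition for the prover writing a lift of the input itself, as in the definition of a single lift. In that setting the fixed-linear-form conclusion is exactly the hypothesis of Lemma~\ref{neg:lem:samp}.

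Second, the charge bound. Suppose $W\subseteq D$ has $|W|\le u-1$ and $\ev_{x_0}\in\spn\{\ev_w:w\in W\}$. Take $C^*=\prod_{w\in W}(X-w)$. It has degree $<u$, so $C^*\in\RS(u)$. It vanishes on $W$, and $C^*(x_0)\ne0$ because $x_0\notin D\supseteq W$. This contradicts the assumed span, hence $|W|\ge u$. This is just the Vandermonde argument, and Lemma~\ref{neg:lem:LC}(a) gives the same conclusion, since the single-term sequence $h_j=x_0^j$ has linear complexity $1$. Combined with Lemma~\ref{neg:lem:samp}(a), every exact point-reading sampler then reads at least $u$ points, which is the per-view consequence.

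The only real obstacle is the first step: making sure the interpolation weights are independent of the randomness in the choice of lift, and that the column really carries a polynomial of $Y$-degree $<D_Y$. Both follow from the definition of $\mathcal B$, so I expect the proof to be a few lines.
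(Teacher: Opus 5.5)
Your proof is correct and follows the paper's argument: the $D_Y$ column reads interpolate $Q(x_0,\cdot)$ with weights that do not depend on the lift, evaluation at $\varphi(x_0)$ gives $C(x_0)$, and $\prod_{w\in W}(X-w)$ rules out $|W|\le u-1$ (equivalently, the $|W|+1\le u$ evaluations on $W\cup\{x_0\}$ are independent). One aside is wrong, though you never use it: Lemma~\ref{neg:lem:LC}(a) only gives $|W|\ge\mathrm{LC}(\ev_{x_0})=1$, and it is part~(b), applied to $P_h=X-x_0$ with root $x_0\notin D$ (for $D\subseteq\F^\times$), that would reproduce $|W|\ge u$.
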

\begin{proof}
$C(x_0)=Q(x_0,\varphi(x_0))$, and the column determines $Q(x_0,\cdot)$. If $|W|\le u-1$, the $|W|+1\le u$
evaluations at $W\cup\{x_0\}$ are independent (Vandermonde, $x_0\notin W$).
\end{proof}

\subsection{Two masked lifts must not share a column}\label{neg:shared}

From here on the proof is that of Lemma~\ref{neg:lem:DL} with $\Pi_1$ the radix-$k_1$ lift of $y=C+r$ on
$A_1\times\{x^{k_1}:x\in D\}$ and $\Pi_2$ the radix-$k_2$ lift of $r$ on $A_2\times\{x^{k_2}:x\in D\}$, together with
the domain values $y|_D$ and $r|_D$. The same-radix configuration of the main theorem has $k_1=k_2$; we also consider
the variant with coprime radices $k_1\ne k_2$, which we call \emph{transversal}. The statements of this subsection
allow both.

\begin{proposition}[Shared off-domain column]\label{neg:prop:SC}
If $c\in(A_1\cap A_2)\setminus D$, reading column $c$ on $\lceil u/k_1\rceil$ distinct lift-1 rows and on
$\lceil u/k_2\rceil$ distinct lift-2 rows puts $\ev_c$ into $\Lambda(V)$, and every valid $W\subseteq D$ has
$|W|\ge u$. The ratio is $u/(\lceil u/k_1\rceil+\lceil u/k_2\rceil)$, about $k_1k_2/(k_1+k_2)$.
\end{proposition}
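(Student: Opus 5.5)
The plan has two parts. First I show that $\ev_c$ is a fixed linear function of the view, hence lies in $\Lambda(V)$. Then I show that every valid charge inside $D$ has at least $u$ points, using a Vandermonde argument as in Proposition~\ref{neg:prop:VC}.

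\emph{Step 1: the leak.} Each lift-$i$ column read is $Q^{(k_i)}_f(c,b)$ for $f=y$ ($i=1$) or $f=r$ ($i=2$). As a function of $b$, $Q^{(k_i)}_f(c,Y)=\sum_{t<k_i}c^tf_t(Y)$ has $Y$-degree $<\lceil u/k_i\rceil$. So reading $\lceil u/k_1\rceil$ distinct rows of column $c$ in lift~1 determines the whole univariate polynomial $Q^{(k_1)}_y(c,\cdot)$ by interpolation. In particular it determines its value at the (possibly virtual) row $c^{k_1}$, namely $Q^{(k_1)}_y(c,c^{k_1})=y(c)$. This value is a fixed linear combination of the read symbols, so $\ev_c\in\spn\Phi$.

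Likewise the lift-2 reads of column $c$ determine $r(c)$, so $\ev_c\in\spn\Psi$. Hence $\ev_c\in\spn\Phi\cap\spn\Psi=\Lambda(V)$. By Lemma~\ref{neg:lem:DL}, $y(c)-r(c)=C(c)$ is then a fixed linear form of the view. The only care needed is that the row sets $\{x^{k_i}:x\in D\}$ contain enough distinct rows; the hypothesis supplies them.

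\emph{Step 2: the charge bound.} Suppose $W\subseteq D$ satisfies $\ev_c\in\spn\{\ev_x:x\in W\}$ with $|W|\le u-1$. Since $c\notin D$, the $|W|+1\le u$ points of $W\cup\{c\}$ are distinct. Their evaluation functionals on $\RS(u)$ are therefore linearly independent (Vandermonde), which is a contradiction. So $|W|\ge u$. By Lemma~\ref{neg:lem:samp}, the same bound holds for the number of reads of every exact point-reading sampler.

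\emph{Step 3: the ratio.} The view has $\lceil u/k_1\rceil+\lceil u/k_2\rceil$ symbols, so the charge ratio is $u/(\lceil u/k_1\rceil+\lceil u/k_2\rceil)\approx k_1k_2/(k_1+k_2)$.

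The only delicate point, and the main obstacle, is Step~1 in the transversal case. There the two lifts have different radices, so their column polynomials differ. That does not matter, because each side separately recovers the value of its own word at $c$. Nothing beyond interpolation in $Y$ is used, so I expect no genuine difficulty.
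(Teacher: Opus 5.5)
Your proposal is correct and follows essentially the same route as the paper. Interpolating each column polynomial in $Y$ from its $\lceil u/k_i\rceil$ reads and evaluating at $c^{k_i}$ puts $\ev_c$ in both $\spn\Phi$ and $\spn\Psi$, and the charge bound is the Vandermonde argument of Proposition~\ref{neg:prop:VC} applied to $W\cup\{c\}$ with $c\notin D$.
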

\begin{proof}
The column polynomial $Q^{(k_1)}_y(c,\cdot)$ has degree $<\lceil u/k_1\rceil$, so it is determined by the reads, and
$y(c)=Q^{(k_1)}_y(c,c^{k_1})$; so $\ev_c\in\spn\Phi$. Symmetrically $\ev_c\in\spn\Psi$. Conclude as in
Proposition~\ref{neg:prop:VC}.
\end{proof}

\begin{corollary}[Shared in-domain column]\label{neg:cor:T2}
Let $x_0\in D$ be a column of both lifts, and let the view contain, for $i=1,2$, $\lceil u/k_i\rceil$ reads of lift
$i$ at abscissa $x_0$ on distinct rows, together with values determining $Q^{(k_i)}(\cdot,x_0^{k_i})$ ($k_i$ reads on
that row at distinct abscissae, or the $k_i$ domain reads of the same masked word on $x_0\mu_{k_i}$). Then $\ev_{x_0}$
and $\partial_{x_0}$ lie in $\Lambda(V)$, and every finite $W\subseteq\Fb$ with $\Lambda(V)\subseteq\spn\ev_W$ has
$|W|\ge u$. At $u=n=k_1k_2$ the ratio is at least $n/(2(k_1+k_2))$.
\end{corollary}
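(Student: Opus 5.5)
The plan is to reduce Corollary~\ref{neg:cor:T2} to two membership statements and one spanning lower bound, mirroring Theorem~\ref{neg:thm:T} but on each side separately.

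\emph{Step 1: both functionals are fixed functions of each side.} For lift $i$, the $\lceil u/k_i\rceil$ reads at abscissa $x_0$ on distinct rows determine the column polynomial $Q^{(k_i)}(x_0,\cdot)$, since it has $Y$-degree $<\lceil u/k_i\rceil$. In particular they give its value and its derivative at $b_i:=x_0^{k_i}$. The additional data determine the row polynomial $Q^{(k_i)}(\cdot,b_i)$, which has $X$-degree $<k_i$. Either $k_i$ reads at distinct abscissae interpolate it, or the domain reads on $x_0\mu_{k_i}=F_{b_i}$ do, because $Q^{(k_i)}(z,b_i)=f(z)$ there. Hence the side determines $f(x_0)=Q^{(k_i)}(x_0,b_i)$ and $Q^{(k_i)}_X(x_0,b_i)$. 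Differentiating $f(X)=Q^{(k_i)}(X,X^{k_i})$ gives
\[
f'(x_0)=Q^{(k_i)}_X(x_0,b_i)+k_ix_0^{k_i-1}Q^{(k_i)}_Y(x_0,b_i),
\]
so $\partial_{x_0}f$ is also a fixed linear function of that side's reads. Applying this with $f=y$ (side~1, functionals $\Phi$) and $f=r$ (side~2, functionals $\Psi$) puts both $\ev_{x_0}$ and $\partial_{x_0}$ into $\spn\Phi\cap\spn\Psi=\Lambda(V)$. Only the functional identities are needed here, by Lemma~\ref{neg:lem:DL}. Note that $k_i\ne0$ in $\F$, as required for $\mu_{k_i}$.

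\emph{Step 2: the spanning lower bound.} I argue by contradiction. Suppose $W\subseteq\Fb$ is finite, $|W|\le u-1$, and $\ev_{x_0},\partial_{x_0}\in\spn\ev_W$. First, $x_0\in W$: otherwise the $|W|+1\le u$ evaluations at $W\cup\{x_0\}$ would be independent by Vandermonde. Then take $C^*=(X-x_0)\prod_{w\in W}(X-w)$, which has degree $|W|+1\le u$. If $|W|\le u-2$, then $C^*\in\RS(u)$, it kills $\ev_W$, and $C^{*\prime}(x_0)=\prod_{w\ne x_0}(x_0-w)\ne0$, a contradiction. In the boundary case $|W|=u-1$, $C^*$ has degree $u$, so it is not in $\RS(u)$. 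Instead I use $\prod_{w\in W}(X-w)$, which has degree $u-1$, vanishes on $W$, and has nonzero derivative at $x_0$ because $x_0$ is a simple root. This polynomial covers both cases, so $|W|\ge u$.

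\emph{Step 3: ratio.} At $u=n=k_1k_2$, the view has at most $k_2+k_1$ column reads plus $k_1+k_2$ row or domain reads, so $|V|\le2(k_1+k_2)$. The ratio is therefore at least $n/(2(k_1+k_2))$. I expect the only delicate points to be two details. The first is that $Q^{(k_i)}(\cdot,b_i)$ really restricts to $f$ on $F_{b_i}$, which needs $x_0\mu_{k_i}$ to be the fibre of row $b_i$. The second is the degree bookkeeping in Step~2; the simple-root polynomial removes the need for any case split.
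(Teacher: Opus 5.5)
Your argument is correct and is essentially the paper's proof: on each side the column reads give $f(x_0)$ and $\partial_Y Q^{(k_i)}(x_0,x_0^{k_i})$, the row or fibre data give $Q^{(k_i)}_X(x_0,x_0^{k_i})=R'(x_0)$, and the chain rule then gives $f'(x_0)$; Vandermonde forces $x_0\in W$, and $\prod_{w\in W}(X-w)$ kills $\ev_W$ but not $\partial_{x_0}$. One slip should be deleted: once $x_0\in W$, your first choice $(X-x_0)\prod_{w\in W}(X-w)$ has a double root at $x_0$, so its derivative there is $0$, not $\prod_{w\ne x_0}(x_0-w)$; the simple-root polynomial $\prod_{w\in W}(X-w)$, of degree $|W|\le u-1$, is the one that works in every case, as you eventually use.
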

\begin{proof}
For a masked word $f$, the column determines $Q^{(k)}_f(x_0,Y)$, hence $f(x_0)$ and $\partial_YQ^{(k)}_f(x_0,x_0^k)$;
the row data determine $R=f\bmod(X^k-x_0^k)$. Differentiating $f(X)=Q^{(k)}_f(X,X^k)$ at $x_0$ gives
$f'(x_0)=R'(x_0)+kx_0^{k-1}\partial_YQ^{(k)}_f(x_0,x_0^k)$, a linear function of the view. Applied to $y$ with $k_1$
and to $r$ with $k_2$, this puts $\ev_{x_0}$ and $\partial_{x_0}$ into $\spn\Phi\cap\spn\Psi$. If $|W|\le u-1$,
Vandermonde forces $x_0\in W$, and $C^*=\prod_{w\in W}(X-w)$ kills $\ev_W$ while
$C^{*\prime}(x_0)=\prod_{w\ne x_0}(x_0-w)\ne0$.
\end{proof}

No hypothesis on the characteristic, coprimality or divisibility is used.

\begin{proposition}[Same position on both proofs]\label{neg:prop:SM}
Let both lifts have radix $k_1=k_2=n_1\le u$, let $a\in(A_1\cap A_2)\setminus D$ be a shared column, and let $V$ be
the two reads at one position $(a,b)$, $b=x_0^{n_1}$, $x_0\mu_{n_1}\subseteq D$. Then
$\phi(C)=(C\bmod(X^{n_1}-b))(a)$ lies in $\Lambda(V)$, $\phi\ne0$, and every $W\subseteq D$ with $\phi\in\spn\ev_W$
has $|W|\ge n_1$; the fibre attains this.
\end{proposition}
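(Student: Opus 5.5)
The plan has three steps: show $\phi\in\Lambda(V)$, show $\phi\ne0$, and lower-bound the size of any charge $W\subseteq D$ with $\phi\in\spn\ev_W$.

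\emph{Step 1: $\phi$ lies in the leak.} With both radices equal to $n_1$, the read of lift~1 at $(a,b)$ is the functional $C+r\mapsto((C+r)\bmod(X^{n_1}-b))(a)$ of $y=C+r$. So $\phi\in\spn\Phi$, the span being taken over the coefficient vector of the masked word. The read of lift~2 at the same position is the same functional $\phi$, now applied to $r$, so $\phi\in\spn\Psi$. Hence $\phi\in\spn\Phi\cap\spn\Psi=\Lambda(V)$ by Lemma~\ref{neg:lem:DL}. Indeed the difference of the two answers equals $\phi(C)$ for every mask. This step is immediate.

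\emph{Step 2: $\phi\ne0$ via its fibre expansion.} Lemma~\ref{neg:lem:CR} applies because $a\notin D\supseteq F_b=x_0\mu_{n_1}$. It writes
\[
\phi=\sum_{y\in F_b}w_y\,\ev_y,\qquad w_y=\tfrac{\nu^{n_1}-1}{n_1}\cdot\tfrac{\omega}{\nu-\omega},\quad a=x_0\nu,\ y=x_0\omega.
\]
Every weight is nonzero. The factor $\nu^{n_1}\ne1$ holds because $a\notin F_b$ means $a^{n_1}\ne b$. The factor $\omega\ne0$ is clear, and $n_1\ne0$ in $\F$ is the standing assumption that $\mu_{n_1}$ exists. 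Since $n_1\le u$, the $n_1$ functionals $\ev_y$ with $y\in F_b$ are linearly independent on $\RS(u)$ by Vandermonde. Therefore $\phi\ne0$, and $W=F_b$ is a valid charge of size exactly $n_1$.

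\emph{Step 3: any charge $W\subseteq D$ has $|W|\ge n_1$.} Suppose $|W|\le n_1-1$ and $\phi\in\spn\ev_W$. Then
\[
\sum_{y\in F_b}w_y\ev_y-\sum_{z\in W}c_z\ev_z=0
\]
for some scalars $c_z$. The set $F_b\cup W$ has at most $2n_1-1$ points, which may exceed $u$, so Vandermonde independence cannot be invoked for the whole union. Instead I exhibit a test polynomial. Take
\[
C^*=\prod_{z\in W\setminus F_b}(X-z)\cdot\prod_{y\in(F_b\cap W)}(X-y)\cdot\prod_{y\in F_b\setminus W,\,y\ne y_0}(X-y),
\]
where $y_0\in F_b\setminus W$ exists since $|W|<n_1$. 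This is more simply written as $\prod_{z\in (W\cup F_b)\setminus\{y_0\}}(X-z)$. However, its degree $|W\cup F_b|-1$ may be $\ge u$, so I do not use it directly.

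The cleaner route is to reduce modulo $X^{n_1}-b$. Let $C^*=\prod_{z\in W}(X-z)$, which has degree $<n_1\le u$, so $C^*\in\RS(u)$ and $C^*$ vanishes on $W$. Since $\deg C^*<n_1$, we have $C^*\bmod(X^{n_1}-b)=C^*$, and so $\phi(C^*)=C^*(a)=\prod_{z\in W}(a-z)\ne0$ because $a\notin D\supseteq W$. This contradicts $\phi\in\spn\ev_W$, since every element of that span vanishes on $C^*$. Hence $|W|\ge n_1$.

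The main subtlety is avoiding the union-size Vandermonde trap in Step 3; the low-degree test polynomial $\prod_{z\in W}(X-z)$ resolves it, because for polynomials of degree $<n_1$ the map $\phi$ is plain evaluation at $a\notin D$.
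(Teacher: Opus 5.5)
Your proof is correct and follows the paper's argument: the two reads at $(a,b)$ are the same functional $\phi$ of $y$ and of $r$, so $\phi\in\Lambda(V)$. The fibre $F_b$ is a charge of size $n_1$ by Lemma~\ref{neg:lem:CR}, and for $|W|\le n_1-1$ the polynomial $\prod_{z\in W}(X-z)$ has degree $<n_1\le u$, equals its own remainder, kills $\ev_W$, and has $\phi$-value $\prod_{z\in W}(a-z)\ne0$. Your separate Vandermonde argument for $\phi\ne0$ is valid but redundant, since it is the case $W=\emptyset$ of Step~3. The abandoned union polynomial at the start of Step~3 can simply be dropped.
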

\begin{proof}
The two reads are the same functional $\phi$ of $y$ and of $r$, so $\phi\in\Lambda(V)$; the fibre attains the bound by
Lemma~\ref{neg:lem:CR}. If $|W|\le n_1-1$, $Z_W=\prod_W(X-x)$ has degree $<n_1\le u$, kills $\ev_W$, equals its own
remainder, and $\phi(Z_W)=Z_W(a)\ne0$ as $a\notin D$.
\end{proof}

Proposition~\ref{neg:prop:SM} gives the cheapest (two-read) view of a shared off-domain column, with ratio $n_1/2$.
For $u>n_1$ it is a related obstruction, not a special case of Proposition~\ref{neg:prop:SC}: its view is one symbol
of each lift at $(a,b)$, where Proposition~\ref{neg:prop:SC} reads $\lceil u/n_1\rceil\ge2$ symbols of column $a$ on
each, and its leak is the remainder evaluation $\phi\ne\ev_a$ ($\phi_{n_1}=b\ne a^{n_1}$), with charge $n_1<u$. For
$u=n_1$, $\phi=\ev_a$, and it is the case of Proposition~\ref{neg:prop:SC} with both reads on row $b$.

\subsection{Degrees not divisible by the radix: the top-block family}\label{neg:TB}

Put $e_i=\lceil u/k_i\rceil$ and $\rho_i=u-k_i(e_i-1)\in[1,k_i]$. A \emph{full column} of lift $i$ at $a$ is $e_i$
reads at abscissa $a$ on distinct rows.

\begin{proposition}[Top-block leak]\label{neg:prop:TB}
Say $\rho_1\ge\rho_2$. Let $V$ be $K_1$ full lift-1 and $K_2$ full lift-2 columns, at abscissae in $\F^\times$ that are
pairwise distinct (also across the two lifts), with $K_1\le\rho_1$, $K_2\le\rho_2$, $K_1\ge\rho_1-\rho_2+1$,
$K_2\ge1$ and $K_1+K_2\ge\rho_1+1$. Then $\Lambda(V)$ contains some $h\ne0$ with $h_j=0$ for $j<u-\rho_2$, and every
finite $W\subseteq\Fb$ with $h\in\spn\ev_W$ has $|W|\ge u-\rho_2+1$. This holds for $k_1\ne k_2$ and for $k_1=k_2$.
\end{proposition}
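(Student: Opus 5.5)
The plan is to produce the leak explicitly from the top $Y$-coefficients of the column polynomials, and then to bound its charge by the linear-complexity certificate of Lemma~\ref{neg:lem:LC}(a).

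\emph{Step 1: each full column yields a top-block functional.} For lift $i$ and a masked word $f$ with coefficients $(c_j)_{j<u}$, write
\[
Q^{(k_i)}_f(a,Y)=\sum_jc_ja^{j\bmod k_i}Y^{\lfloor j/k_i\rfloor}.
\]
This has degree $<e_i$ in $Y$. A full column at $a$ consists of $e_i$ reads on distinct rows, so interpolation determines $Q^{(k_i)}_f(a,\cdot)$. In particular it determines the coefficient of $Y^{e_i-1}$, which is $\sum_{s<\rho_i}c_{u-\rho_i+s}\,a^s$. Call this functional $t^{(i)}_a$. It lies in $\spn\Phi$ for $i=1$ (applied to $y$) and in $\spn\Psi$ for $i=2$ (applied to $r$). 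Its support is the top block $\{u-\rho_i,\dots,u-1\}$, and on that block its entries are $(a^s)_{s<\rho_i}$.

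\emph{Step 2: a dimension count gives a nonzero common vector.} Because $\rho_2\le\rho_1$, all the vectors $t^{(1)}_a$ and $t^{(2)}_{a'}$ lie in the $\rho_1$-dimensional space of functionals supported on the top $\rho_1$ coordinates. The abscissae are nonzero and pairwise distinct, and $K_i\le\rho_i$. Hence the $K_1$ lift-1 vectors are linearly independent (Vandermonde), and so are the $K_2$ lift-2 vectors.
\begin{itemize}
\item Grassmann's formula then gives $\dim\bigl(\spn\{t^{(1)}\}\cap\spn\{t^{(2)}\}\bigr)\ge K_1+K_2-\rho_1\ge1$.
\item Pick $0\ne h$ in this intersection. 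Then $h\in\Lambda(V)=\spn\Phi\cap\spn\Psi$.
\item Since $h$ lies in the span of the lift-2 vectors, $h_j=0$ for $j<u-\rho_2$.
\end{itemize}
The hypotheses $K_2\ge1$ and $K_1\ge\rho_1-\rho_2+1$ are only consistency conditions on the range of $(K_1,K_2)$. If the count needs them, for instance to exclude $h$ being a single lift-2 vector that happens to lie in the lift-1 span, I would check this there; I do not expect them to be essential. Nothing in Steps 1--2 uses whether $k_1=k_2$.

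\emph{Step 3: the charge lower bound.} Let $j_0\ge u-\rho_2$ be the first index with $h_{j_0}\ne0$. A sequence whose first $j_0$ terms vanish and whose next term is nonzero has linear complexity at least $j_0+1$: a recurrence of length $L\le j_0$ would force $h_{j_0}$ to be a combination of $L$ zero predecessors. So $\LC(h)\ge u-\rho_2+1$. Lemma~\ref{neg:lem:LC}(a) then gives $|W|\ge u-\rho_2+1$ for every finite $W\subseteq\Fb$ with $h\in\spn\ev_W$.

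I expect the only delicate step to be Step 2, namely making sure that the intersection vector is nonzero and supported on the last $\rho_2$ coordinates. The rest is bookkeeping.
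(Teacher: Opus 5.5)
Your proof is correct and follows the paper's argument. You take the top $Y$-coefficient of each full column, use a dimension count to get a common nonzero vector, and bound the charge by a Vandermonde-type argument; the differences are cosmetic. The paper first cuts the lift-1 span down to vectors vanishing on the lowest $\rho_1-\rho_2$ coordinates and intersects in $\F^{\rho_2}$, where you apply Grassmann directly in $\F^{\rho_1}$. It also finishes without Lemma~\ref{neg:lem:LC}(a): when $|W|\le u-\rho_2$, the equations $\sum_{w\in W}\lambda_w\ev_w=h$ at the indices $j<u-\rho_2$ form a full-column-rank Vandermonde system, which forces $\lambda=0$.

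Your suspicion about the hypotheses is right, and you can state it outright. The conditions $K_2\ge1$ and $K_1\ge\rho_1-\rho_2+1$ follow from $K_1+K_2\ge\rho_1+1$ together with $K_1\le\rho_1$ and $K_2\le\rho_2$.
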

\begin{proof}
A full lift-$i$ column at $a$ determines its column polynomial and in particular its top coefficient
$\sum_{t<\rho_i}a^tc_{u-\rho_i+t}$. The vectors $(a^t)_{t<\rho_1}$ over the $K_1$ lift-1 abscissae span a space of
dimension $K_1$; its subspace vanishing on $t<\rho_1-\rho_2$ has dimension $\ge K_1-\rho_1+\rho_2$ and, after dropping
those coordinates, lies in $\F^{\rho_2}$, where the lift-2 vectors $(a^t)_{t<\rho_2}$ span dimension $K_2$. The
dimensions sum to more than $\rho_2$, so a common nonzero $v$ exists, and $h=\sum_tv_tc_{u-\rho_2+t}\in\Lambda(V)$. If
$|W|\le u-\rho_2$, the equations $h=\sum_W\lambda_w\ev_w$ at the indices $j<u-\rho_2$ form a Vandermonde system of full
column rank, forcing $\lambda=0$.
\end{proof}

\begin{proposition}[Scope of the family]\label{neg:prop:TBscope}
Every member has ratio at most $k_{\max}/(\rho_{\max}+1)$, where $\rho_{\max}=\max(\rho_1,\rho_2)$. For coprime
$k_1\ne k_2$ and $u=k_1k_2-1$ it is at most $1$.
\end{proposition}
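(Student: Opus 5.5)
The plan is to bound the ratio of a member directly: its view size is $|V|=K_1e_1+K_2e_2$, and the best valid charge has size at most $u$ by Lemma~\ref{neg:lem:triv}(a). So every member has ratio at most $u/(K_1e_1+K_2e_2)$, and the argument consists of bounding $|V|$ from below using the constraints of Proposition~\ref{neg:prop:TB}. (I read ``ratio'' as the charge needed per symbol read, as in the earlier propositions; if the intended numerator is the guaranteed lower bound $u-\rho_2+1$, the estimate only improves.)

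First I will use $K_1+K_2\ge\rho_1+1=\rho_{\max}+1$ together with $e_i\ge u/k_i\ge u/k_{\max}$. This gives $|V|\ge(K_1+K_2)\,u/k_{\max}\ge(\rho_{\max}+1)u/k_{\max}$, and hence a ratio of at most $u/|V|\le k_{\max}/(\rho_{\max}+1)$. The assumption $\rho_1\ge\rho_2$ is made without loss of generality, so $\rho_{\max}=\rho_1$ and the first claim follows. The only care needed is to use the lower bound $e_i\ge u/k_i$ and not $e_i-1$.

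For the second claim, take coprime $k_1\ne k_2$ and $u=k_1k_2-1$. Then $e_1=k_2$ and $\rho_1=u-k_1(k_2-1)=k_1-1$; symmetrically $e_2=k_1$ and $\rho_2=k_2-1$. Hence $\rho_{\max}+1=k_{\max}$, and the first bound gives a ratio of at most $1$. Alternatively, directly, $|V|\ge(K_1+K_2)\min(e_1,e_2)\ge k_{\max}\,k_{\min}=k_1k_2>u$.

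I expect no real obstacle here. The only point to double-check is the computation of $\rho_i$ at $u=k_1k_2-1$ (that $e_i=k_{3-i}$ exactly, since $u/k_i$ is not an integer). Coprimality is not actually needed for the count; it only matters for $k_1\ne k_2$ being the transversal case.
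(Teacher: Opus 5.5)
Your proof is correct and follows essentially the paper's argument: bound $|V|\ge(K_1+K_2)\min(e_1,e_2)\ge(\rho_{\max}+1)u/k_{\max}$ using $K_1+K_2\ge\rho_1+1$, compare with the trivial charge of size $u$ from Lemma~\ref{neg:lem:triv}(a), and observe that $\rho_i=k_i-1$ at $u=k_1k_2-1$. As in the paper, the computation $e_i=k_{3-i}$ tacitly uses $k_i\ge2$, which holds throughout this context.
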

\begin{proof}
$|V|\ge(K_1+K_2)\min(e_1,e_2)\ge(\rho_{\max}+1)u/k_{\max}$, and the charge is at most $u$. At $u=k_1k_2-1$,
$\rho_i=k_i-1$.
\end{proof}

\begin{corollary}[$u\equiv1$ modulo the radices]\label{neg:cor:TB1}
(a) Transversal radices: if $u\equiv1\pmod{k_1k_2}$, one full column per lift leaks $c_{u-1}$ and every $W$ has
$|W|\ge u$; at $u=k_1k_2+1$ the ratio is $(k_1k_2+1)/(k_1+k_2+2)$.
(b) Equal radices $k_1=k_2=k$: if $u\equiv1\pmod k$, one full column per lift (at distinct abscissae) leaks $c_{u-1}$
and every $W$ has $|W|\ge u$; the ratio is $u/(2\lceil u/k\rceil)$, about $k/2$.
\end{corollary}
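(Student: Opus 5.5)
The statement is Corollary~\ref{neg:cor:TB1}. I would derive both parts from Proposition~\ref{neg:prop:TB}: fix the parameters $e_i$, $\rho_i$ in each case, check the numerical hypotheses with $K_1=K_2=1$, and then compute the ratio.

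\emph{Part (a).} Suppose $u\equiv1\pmod{k_1k_2}$. Then $u\equiv1$ modulo each $k_i$, so $u-1=k_i(e_i-1)$ and hence $\rho_i=1$ for $i=1,2$. With $\rho_1=\rho_2=1$ and $K_1=K_2=1$ I would check the hypotheses in turn:
\begin{itemize}
\item $K_1\le\rho_1$ and $K_2\le\rho_2$ hold, since both sides equal $1$;
\item $K_1\ge\rho_1-\rho_2+1=1$ holds;
\item $K_2\ge1$ holds;
\item $K_1+K_2=2\ge\rho_1+1=2$ holds.
\end{itemize}
The two abscissae must be nonzero and distinct; I take them so. Proposition~\ref{neg:prop:TB} then gives a nonzero $h\in\Lambda(V)$ with $h_j=0$ for $j<u-1$. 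So $h$ is a nonzero multiple of the functional $c_{u-1}$, and $c_{u-1}\in\Lambda(V)$. The same proposition gives $|W|\ge u-\rho_2+1=u$ for every charge $W$.

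For the ratio at $u=k_1k_2+1$: here $e_i=\lceil u/k_i\rceil=k_{3-i}+1$. The view therefore has $(k_2+1)+(k_1+1)=k_1+k_2+2$ symbols, which gives the stated ratio $(k_1k_2+1)/(k_1+k_2+2)$.

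\emph{Part (b).} With equal radices $k_1=k_2=k$ and $u\equiv1\pmod k$, again $\rho_1=\rho_2=1$, and the same choice $K_1=K_2=1$ works. Proposition~\ref{neg:prop:TB} explicitly allows $k_1=k_2$. The view consists of two full columns of $\lceil u/k\rceil$ reads each, so $|V|=2\lceil u/k\rceil$. Since the charge bound is $u$, the ratio is $u/(2\lceil u/k\rceil)$. Because $\lceil u/k\rceil=(u-1)/k+1$, this ratio is about $k/2$.

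I expect no real obstacle. The only points needing care are that $\rho_i=1$ follows from $u\equiv1\pmod{k_i}$ via $\rho_i=u-k_i(e_i-1)$, and that the identification $h\propto c_{u-1}$ is forced because only index $u-1$ survives.
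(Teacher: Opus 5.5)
Your proposal is correct and is the paper's proof: Proposition~\ref{neg:prop:TB} applied with $\rho_1=\rho_2=1$ and $K_1=K_2=1$. Your hypothesis checks and your view-size computations, $e_i=k_{3-i}+1$ at $u=k_1k_2+1$ and $|V|=2\lceil u/k\rceil$ for equal radices, supply the arithmetic that the paper leaves implicit.
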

\begin{proof}
Proposition~\ref{neg:prop:TB} with $\rho_1=\rho_2=1$ and $K_1=K_2=1$.
\end{proof}

\begin{corollary}[A bad degree below $n$ for every transversal pair]\label{neg:cor:TB1p}
Let $k_1,k_2\ge3$ be coprime, $n=k_1k_2$, $\epsilon_1,\epsilon_2\in[0,n)$ the CRT idempotents
($\epsilon_i\equiv1\bmod k_i$, $\equiv0$ modulo the other radix) and $u^*=1+\max(\epsilon_1,\epsilon_2)$. Then
$(n+3)/2\le u^*\le n-1$, neither radix divides $u^*$, $\{\rho_1,\rho_2\}=\{1,2\}$, and two full columns of the lift
with $\rho=2$ plus one full column of the other leak $c_{u^*-1}$, which forces $|W|\ge u^*$ for every finite
$W\subseteq\Fb$. The ratio is at least $k_{\min}/(3+6/k_{\max})$.
\end{corollary}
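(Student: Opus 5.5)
The plan is to reduce Corollary~\ref{neg:cor:TB1p} to Proposition~\ref{neg:prop:TB}. Three things need checking: the residues $\rho_i$ at $u^*$ are $1$ and $2$, the column counts are admissible, and the ratio bound holds.

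\emph{Arithmetic of the idempotents.} Say $\epsilon_1>\epsilon_2$; the other case is symmetric. Since $\epsilon_1+\epsilon_2\equiv1$ modulo both radices, we have $\epsilon_1+\epsilon_2\equiv1\pmod n$. Both lie in $[1,n-1]$ (neither is $0$, since each is $\equiv1$ modulo one radix), so $\epsilon_1+\epsilon_2=n+1$. Hence $\max(\epsilon_1,\epsilon_2)\ge(n+1)/2$, which gives $u^*\ge(n+3)/2$. For the upper bound, $\epsilon_1=n-1$ would mean $-1\equiv1\pmod{k_1}$, impossible for $k_1\ge3$; so $\epsilon_1\le n-2$ and $u^*\le n-1$.

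\emph{The residues.} Now $u^*=\epsilon_1+1$ is $\equiv2\pmod{k_1}$ and $\equiv1\pmod{k_2}$. As $k_i\ge3$, neither radix divides $u^*$, and $u^*\equiv\rho_i\pmod{k_i}$ with $\rho_i\in[1,k_i]$ gives $\rho_1=2$, $\rho_2=1$. We apply Proposition~\ref{neg:prop:TB} with lift~1 as the lift of larger $\rho$, taking $K_1=2$ and $K_2=1$. The hypotheses are $K_1\le\rho_1$, $K_2\le\rho_2$, $K_1\ge\rho_1-\rho_2+1=2$, $K_2\ge1$ and $K_1+K_2=3\ge\rho_1+1$, and all hold. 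Choosing three distinct abscissae in $\F^\times$, the proposition gives a nonzero $h\in\Lambda(V)$ supported on the index $u^*-\rho_2=u^*-1$. Thus $h$ is a multiple of $c_{u^*-1}$, and every spanning $W$ has $|W|\ge u^*-\rho_2+1=u^*$.

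\emph{Ratio.} The view has $|V|=2e_1+e_2$ with $e_i=\lceil u^*/k_i\rceil\le(u^*+k_i)/k_i$. Bound $|V|\le 3(u^*+k_{\max})/k_{\min}$ crudely, then use $u^*\ge(n+3)/2\ge k_1k_2/2$ to get $k_{\max}/u^*\le 2/k_{\min}$. So $|V|\le 3u^*(1+2/k_{\min})/k_{\min}$, and the ratio $u^*/|V|$ is at least $k_{\min}/(3+6/k_{\min})$. The target $k_{\min}/(3+6/k_{\max})$ is slightly stronger, so a finer count is needed. With $e_i\le u^*/k_i+1$, we have $|V|\le 2u^*/k_1+u^*/k_2+3\le 3u^*/k_{\min}+3$, and $3\le 6u^*/(k_{\min}k_{\max})$ since $u^*\ge n/2$. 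This gives $|V|\le (u^*/k_{\min})(3+6/k_{\max})$, which is exactly the stated bound.

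The only real obstacle is this last constant-chasing step, keeping track of which radix appears where. The rest is a direct substitution into Proposition~\ref{neg:prop:TB}.
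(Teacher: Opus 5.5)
Your proposal is correct and follows the paper's proof essentially step for step: the identity $\epsilon_1+\epsilon_2=n+1$, the residues $\rho=2,1$ at $u^*$, Proposition~\ref{neg:prop:TB} with $K=2$ and $K=1$, and the cost bound $|V|\le 3u^*/k_{\min}+3$ combined with $u^*\ge n/2$. The only cosmetic difference is in the charge bound: the paper exhibits the polynomial $X^{u^*-1-|W|}\prod_W(X-w)$, whereas you quote the corresponding conclusion of Proposition~\ref{neg:prop:TB}, which is equally valid.
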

\begin{proof}
$\epsilon_1+\epsilon_2\equiv1$ modulo both radices and $\epsilon_i\in[2,n-1]$, so $\epsilon_1+\epsilon_2=n+1$ and
$\epsilon_{\rm hi}\ge(n+1)/2$. If $\epsilon_{\rm hi}=\epsilon_i$ then $u^*\equiv2\pmod{k_i}$ and $u^*\equiv1$ modulo the
other radix, so $\rho_i=2$ and $\rho_j=1$, and neither radix divides $u^*$; $u^*=n$ would need $k_i\mid2$.
Proposition~\ref{neg:prop:TB} with $K_i=2$, $K_j=1$ gives $h\propto c_{u^*-1}$, and for $|W|\le u^*-1$ the polynomial
$X^{u^*-1-|W|}\prod_W(X-w)$ kills $\ev_W$ and has top coefficient $1$. The cost is at most $3u^*/k_{\min}+3$, and
$u^*\ge n/2$.
\end{proof}

\noindent\emph{Examples.} For the coprime pairs with $3\le k_1<120$, $k_1<k_2<2k_1$, the ratio of
Corollary~\ref{neg:cor:TB1p} is at least $0.311\,k_1$, attained at $(7,9)$; it is $33.25$ at $(101,103)$ and $68.37$ at
$(199,211)$.

So a locator lemma stated for every degree needs a divisibility hypothesis on $u$ that excludes these degrees. One
such hypothesis is $k_{\min}\mid u$ (hypothesis (DIV)), which caps the family (Proposition~\ref{neg:prop:TBDIV}); for
coprime radices $k_{\max}\mid u$ is another, since then $\rho_{\max}=k_{\max}$ and every member has
$|V|\ge(k_{\max}+1)u/k_{\max}>u$ (proof of Proposition~\ref{neg:prop:TBscope}). The family is not harmless at
most other degrees either: for each constant $a$, the cheapest member beats $a$ on a fraction of the non-(DIV) degrees
$u\in[n/2,n]$ that is close to $1/a^2$; at $(61,67)$ the fractions for $a=1,2,4,8$ are $0.965$, $0.224$, $0.049$, $0.009$,
and $(31,37)$, $(101,103)$, $(64,97)$ are similar (exact computation).

\begin{proposition}[Under (DIV)]\label{neg:prop:TBDIV}
If $1<k_1<k_2$ are coprime and $k_1\mid u$, every member has $|V|>(k_1+1)u/k_2$, so its ratio is $<k_2/(k_1+1)$.
\end{proposition}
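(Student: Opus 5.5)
We need to prove Proposition neg:prop:TBDIV: if $1<k_1<k_2$ are coprime and $k_1\mid u$, then every member of the top-block family has $|V|>(k_1+1)u/k_2$, so its ratio is $<k_2/(k_1+1)$.

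The plan is to read off $\rho_1,\rho_2$ under (DIV) and then redo the counting in the proof of Proposition~\ref{neg:prop:TBscope}, but tracking each lift's column length separately instead of using $\min(e_1,e_2)$.

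First, since $k_1\mid u$ we have $e_1=u/k_1$ and $\rho_1=u-k_1(e_1-1)=k_1$. Also $\rho_2\le k_2$ always holds, and $\rho_2\ge1$.

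Next, I bound $|V|$ from below using the constraints of Proposition~\ref{neg:prop:TB}. The view consists of $K_1$ full lift-1 columns of $e_1$ reads each and $K_2$ full lift-2 columns of $e_2$ reads each. These reads are at pairwise distinct abscissae, so they are distinct symbols, and
\[
|V|=K_1e_1+K_2e_2 .
\]
Since $e_2=\lceil u/k_2\rceil\ge u/k_2$ and $e_1=u/k_1>u/k_2$, we get
\[
|V|\ge (K_1+K_2)\,\frac{u}{k_2}+K_1\Bigl(\frac u{k_1}-\frac u{k_2}\Bigr).
\]

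The family's constraints split into two cases according to which $\rho$ is larger; the statement assumes $\rho_1\ge\rho_2$ after relabelling, so I must handle both orientations.

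\emph{Case $\rho_1\ge\rho_2$.} The constraint $K_1+K_2\ge\rho_1+1=k_1+1$ and $K_1\ge\rho_1-\rho_2+1\ge1$ give
\[
|V|\ge (k_1+1)\frac u{k_2}+\Bigl(\frac u{k_1}-\frac u{k_2}\Bigr)>(k_1+1)\frac u{k_2}.
\]

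\emph{Case $\rho_2>\rho_1=k_1$.} The roles of the lifts swap, and the constraint becomes $K_1+K_2\ge\rho_2+1\ge k_1+2$. Then $|V|\ge (k_1+2)u/k_2>(k_1+1)u/k_2$.

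Finally, the charge is at most $u$ by Lemma~\ref{neg:lem:triv}(a), so the ratio is $<k_2/(k_1+1)$.

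The only delicate point I expect is the strictness of the inequality together with this case orientation. In the first case, strictness comes from $K_1\ge1$ and $k_1<k_2$; in the second, it comes from the extra $+1$.
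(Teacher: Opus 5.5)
Your proof is correct and follows the paper's argument. It uses the same split on whether $\rho_2>k_1$, the same count $|V|=K_1e_1+K_2e_2$ with $e_1=u/k_1>u/k_2$ and $e_2\ge u/k_2$, and the same lower bound on $K_1+K_2$. The one difference is where strictness comes from when $\rho_2\le k_1$: the paper observes that then $k_2\nmid u$, so $e_2>u/k_2$ as well, whereas you use $K_1\ge1$ together with $e_1>u/k_2$; either works.
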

\begin{proof}
$\rho_1=k_1$, $e_1=u/k_1$ and $e_2\ge u/k_2$. If $\rho_2>k_1$ then $K_1+K_2\ge\rho_2+1>k_1+1$. Otherwise $k_2\nmid u$,
so $e_1,e_2>u/k_2$, and $K_1+K_2\ge k_1+1$.
\end{proof}

\begin{proposition}[Sharpness along integer split ratios]\label{neg:prop:TBsharp}
Let $k_1\ge2$, $\kappa\ge2$, $k_2=\kappa k_1+1$ and $u=(k_1-1)k_2+k_1+1$. Then $k_1\mid u$, and the member with one
full lift-1 column and $k_1+1$ full lift-2 columns has $|V|=k_1^2+k_1+(k_1-1)\kappa+2$ and forces
$|W|\ge(k_1-1)k_2+2$; its ratio tends to $\kappa$.
\end{proposition}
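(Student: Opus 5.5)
The plan is to realise the stated member as an instance of Proposition~\ref{neg:prop:TB}, with the roles of the two lifts exchanged, and then to do the arithmetic. The steps are: compute the block parameters $e_i,\rho_i$; check the counting hypotheses of Proposition~\ref{neg:prop:TB}; read off $|V|$, the charge bound and the ratio.

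\emph{Block parameters.} Substituting $k_2=\kappa k_1+1$ gives
\[
u=(k_1-1)(\kappa k_1+1)+k_1+1=k_1\bigl(\kappa(k_1-1)+2\bigr).
\]
So $k_1\mid u$, and for the first lift $e_1=\kappa(k_1-1)+2$ and $\rho_1=k_1$. For the second lift, $u=(k_1-1)k_2+(k_1+1)$ with $1\le k_1+1\le k_2$, because $\kappa\ge2$. Hence $e_2=k_1$ and $\rho_2=k_1+1$. Note that $\gcd(k_1,k_2)=1$, so this is a coprime pair as in Proposition~\ref{neg:prop:TBDIV}. Here $\rho_2>\rho_1$, so in Proposition~\ref{neg:prop:TB}, which is stated with $\rho_1\ge\rho_2$, lift~2 plays the role of "lift 1" and vice versa.

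\emph{Hypotheses.} After the swap, the larger-$\rho$ lift has $K=k_1+1$ full columns and the other has $K'=1$. We need four inequalities, all of which hold:
\begin{itemize}
\item $k_1+1\le\rho_2=k_1+1$;
\item $1\le\rho_1=k_1$;
\item $k_1+1\ge\rho_2-\rho_1+1=2$;
\item $K+K'=k_1+2\ge\rho_2+1$.
\end{itemize}
We also need the $k_1+2$ abscissae to be pairwise distinct elements of $\F^\times$, which is possible over any field with more than $k_1+2$ elements; we fix such a choice.

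\emph{Conclusion.} Proposition~\ref{neg:prop:TB} then gives $h\ne0$ in $\Lambda(V)$ with $h_j=0$ for $j<u-\rho_{\min}=u-k_1$. It also gives $|W|\ge u-k_1+1=(k_1-1)k_2+2$ for every finite $W\subseteq\Fb$ with $h\in\spn\ev_W$. The view has
\[
|V|=e_1+(k_1+1)e_2=\kappa(k_1-1)+2+k_1(k_1+1)
\]
reads, as claimed. The ratio is therefore
\[
\frac{(k_1-1)(\kappa k_1+1)+2}{k_1^2+k_1+\kappa(k_1-1)+2}.
\]
Numerator and denominator have leading terms $\kappa k_1^2$ and $k_1^2$, so the ratio tends to $\kappa$ as $k_1\to\infty$ with $\kappa$ fixed.

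The only delicate point is the role swap, namely noticing that here $\rho_2>\rho_1$. Getting $\rho_{\min}=k_1$ right is what makes the forced charge $(k_1-1)k_2+2$ rather than something weaker; the rest is substitution.
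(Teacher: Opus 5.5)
Your proof is correct and is the paper's argument: compute $\rho_1=k_1$, $\rho_2=k_1+1$, $e_2=k_1$ and apply Proposition~\ref{neg:prop:TB} with the roles of the two lifts exchanged (since $\rho_2>\rho_1$), and you check all of its hypotheses explicitly. One small precision: your displayed fraction is only a lower bound on the ratio, so the limit $\kappa$ also needs the trivial upper bound $u/|V|\to\kappa$, which holds because any $u$ points of $D$ form a valid charge (Lemma~\ref{neg:lem:triv}(a)).
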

\begin{proof}
$u=k_1((k_1-1)\kappa+2)$ gives $\rho_1=k_1$; $u\bmod k_2=k_1+1$ gives $\rho_2=k_1+1$ and $e_2=k_1$.
\end{proof}

The split ratio need not be an integer. For coprime $t>\ell\ge1$ and $k_1\ge\max(2,\ell)$ with $\ell\mid tk_1+1$, put
$k_2=(tk_1+1)/\ell$ and $u=k_1(k_2+1-t)$. Then $\gcd(k_1,k_2)=1$, $k_1\mid u$ and $u=(k_1-\ell)k_2+k_1+1$ with $k_1+1<k_2$, so
$\rho_1=k_1$, $\rho_2=k_1+1$ and $e_2=k_1-\ell+1$; the member with one full lift-1 column and $k_1+1$ full lift-2 columns
has $|V|=(k_1+1)(k_1-\ell+1)+k_2+1-t$ and forces $|W|\ge k_1(k_2-t)+1$, and its ratio tends to $t/\ell$ as $k_1\to\infty$
(Proposition~\ref{neg:prop:TBsharp} is the case $\ell=1$, $t=\kappa$). So (DIV) alone forces $a\ge K-o(1)$ as the split
ratio $K=k_2/k_1$ tends to any rational $t/\ell>1$, and the cap of Proposition~\ref{neg:prop:TBDIV} is attained
asymptotically. For $K<2$
the (DIV)-only ratio is $<2$. An exact scan of all coprime $3\le k_1<90$, $k_1<k_2<3k_1$, over
the degrees $u\in\{k_1,2k_1,\dots,k_1k_2\}$ ($220{,}707$ cases with $K<2$ and $367{,}845$ with $2\le K<3$)
finds best ratios $1.903$ at $(89,177,15664)$ and $2.800$ at $(89,266,23496)$, never at the cap of
Proposition~\ref{neg:prop:TBDIV}. If $k_2\mid u$ as well, every member has $|V|>u$, so the family is absent; this is the
case $u=m$ of the main theorem, for equal radices and for coprime ones.

\subsection{Transversal lifts with unbounded column sets}\label{neg:coset}

In this subsection $k_1=n_1$ and $k_2=n_2$ are coprime, $n=n_1n_2$, $u=n$, and $d\mu_n\subseteq D$. Each
$y\in d\mu_n$ is $d\omega\eta$ with $(\omega,\eta)\in\mu_{n_1}\times\mu_{n_2}$; the lift-1 row $\eta$ has fibre
$d\eta\mu_{n_1}$ and the lift-2 row $\omega$ has fibre $d\omega\mu_{n_2}$ (the CRT grid of Good \cite{Goo58}, Burton--Weldon \cite{BW65} and Imai \cite{Imai77}).
The CRT idempotents satisfy $\epsilon_1+\epsilon_2=n+1$, $z^{\epsilon_1}=\omega$ and $z^{\epsilon_2}=\eta$ for
$z=\omega\eta$; $\epsilon_{\rm lo}=\min(\epsilon_1,\epsilon_2)$. For $B(\omega,\eta)$ without zeros on the grid put
\[f_B=\sum_{\omega,\eta}\frac{\omega\eta}{B(\omega,\eta)}\,\ev_{d\omega\eta},\qquad
f_{B,R}=\sum_{\omega,\eta}\frac{\omega\eta\,R(\omega,\eta)}{B(\omega,\eta)}\,\ev_{d\omega\eta}.\]
By Lemma~\ref{neg:lem:CR}, a lift-1 read at $d\eta\nu$ is the array supported on row $\eta$ with entries
$\gamma_\nu\,\omega/(\nu-\omega)$, $\gamma_\nu\ne0$, and symmetrically for lift~2.

\begin{proposition}[Aligned product leak]\label{neg:prop:AL}
Let $\theta_1=\lambda/d$ and $\theta_2=\lambda'/d$ with $\theta_1^{n_1}\ne1\ne\theta_2^{n_2}$, and read lift-1 row $\eta$ at
$\lambda\eta$ and lift-2 row $\omega$ at $\lambda'\omega$, for all rows ($|V|=n_1+n_2$). Then
$\Lambda(V)=\langle h\rangle$ with $h_j=d^j\theta_1^{j\bmod n_1}\theta_2^{j\bmod n_2}$, and $h\propto f_B$ for
$B=(\omega-\theta_1)(\eta-\theta_2)$.
\end{proposition}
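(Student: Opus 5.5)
We compute $\spn\Phi$ and $\spn\Psi$ explicitly in coefficient coordinates and intersect them. The first observation is that the $n_1$ lift-1 reads all use the same column $\lambda$ up to the row twist: the read at $(\lambda\eta,\eta^{\,?})$ is, in the radix-$n_1$ lift, the functional $C\mapsto\sum_jc_j(\lambda\eta)^{j\bmod n_1}b_\eta^{\lfloor j/n_1\rfloor}$ with $b_\eta=(d\eta)^{n_1}$. Since $\eta\mapsto\eta^{n_1}$ permutes $\mu_{n_2}$ (coprimality), the $n_1\cdot$ rows run over all of $d^{n_1}\mu_{n_2}$.

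\textbf{Step 1: the two spans.} We write $j=i+n_1q$ with $i<n_1$ and $q<n_2$. The lift-1 reads are the vectors $(\lambda\eta)^id^{n_1q}\eta^{n_1q}$, and as $\eta$ ranges over $\mu_{n_2}$ we exploit $(\lambda\eta)^i\eta^{n_1q}=\lambda^i\eta^{i+n_1q}=\lambda^i\eta^{j}$. So each read is $\ev_{d\eta}$ reweighted coordinatewise by $w_j:=\lambda^{j\bmod n_1}d^{\,j-(j\bmod n_1)}$, i.e.\ $\Phi=\{w\odot(\eta^j)_j:\eta\in\mu_{n_2}\}$. Because $\eta^j$ depends only on $j\bmod n_2$, $\spn\Phi=w\odot\{v:v_j\text{ depends only on }j\bmod n_2\}$, which has dimension $n_2$ (characters of $\mu_{n_2}$ via Fourier). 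Symmetrically, $\spn\Psi=w'\odot\{v:v_j\text{ depends only on }j\bmod n_1\}$ with $w'_j=\lambda'^{j\bmod n_2}d^{\,j-(j\bmod n_2)}$.

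\textbf{Step 2: the intersection.} We seek $h=w\odot\alpha(j\bmod n_2)=w'\odot\beta(j\bmod n_1)$. Dividing by $d^j$ and setting $\theta_1=\lambda/d$, $\theta_2=\lambda'/d$ gives $\theta_1^{j\bmod n_1}\alpha(j\bmod n_2)=\theta_2^{j\bmod n_2}\beta(j\bmod n_1)$ for all $j<n$. By CRT the pair $(j\bmod n_1,j\bmod n_2)$ ranges over all of $[n_1]\times[n_2]$ bijectively, so this is a rank-one functional equation $\theta_1^{i}\alpha(k)=\theta_2^{k}\beta(i)$. Its solutions are exactly $\alpha(k)=c\,\theta_2^k$ and $\beta(i)=c\,\theta_1^i$: fix $i=0$ to get $\alpha=\beta(0)\theta_2^k$, then substitute back. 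Hence $\Lambda(V)$ is one-dimensional and spanned by $h_j=d^j\theta_1^{j\bmod n_1}\theta_2^{j\bmod n_2}$.

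\textbf{Step 3: identifying $h$ with $f_B$.} We compute $f_B$ in coefficients, $(f_B)_j=d^j\sum_{\omega,\eta}\omega\eta(\omega\eta)^j/((\omega-\theta_1)(\eta-\theta_2))$. Using $(\omega\eta)^j=\omega^{j\bmod n_1}\eta^{j\bmod n_2}$, this factors as $d^jS_1(j\bmod n_1)S_2(j\bmod n_2)$ with $S_1(i)=\sum_{\omega\in\mu_{n_1}}\omega^{i+1}/(\omega-\theta_1)$. The standard partial-fraction or geometric-series identity gives, for $0\le i<n_1$, $S_1(i)=n_1\theta_1^{i}/(\theta_1^{n_1}-1)\cdot(\text{const})$; concretely, $1/(\omega-\theta)=\frac{1}{\omega^{n_1}-\theta^{n_1}}\sum_{r<n_1}\omega^{n_1-1-r}\theta^{r}$, and orthogonality picks out one $r$. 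This uses $\theta_1^{n_1}\ne1$ so that $B$ has no zeros on the grid. The same holds for $S_2$, so $f_B\propto h$.

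The main obstacle is bookkeeping in Step 1, namely checking that the row/column read indexing matches the lift formula $Q(a,b)=\sum c_ja^{j\bmod k}b^{\lfloor j/k\rfloor}$ exactly, so that the weights reduce to $\theta^{j\bmod n_i}$. The sum identity in Step 3 is a routine index check.
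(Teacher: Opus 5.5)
Your proof is correct and follows the paper's route. Each lift-1 read is the twisted character $j\mapsto d^j\theta_1^{j\bmod n_1}\eta^j$, and the two spans become, under CRT, the tensor products $(\theta_1^t)\otimes\F^{\mathbb Z_{n_2}}$ and $\F^{\mathbb Z_{n_1}}\otimes(\theta_2^s)$, which meet in a line; $f_B$ is then identified through the Lagrange/geometric-series identity $\sum_{\omega\in\mu_{n_1}}\omega^{i+1}/(\omega-\theta_1)=n_1\theta_1^{i}/(1-\theta_1^{n_1})$. One slip is cosmetic only: each lift-1 read is $\ev_{d\eta}$ reweighted by $\theta_1^{j\bmod n_1}$, not by $w_j$, but your formula $\Phi=\{w\odot(\eta^j)_j\}$ is right.
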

\begin{proof}
The read on row $\eta$ is $j\mapsto d^j\theta_1^{j\bmod n_1}\eta^j$, and the characters $\eta^j$ span all $n_2$-periodic
functions of $j$. Under CRT the two spans are $(\theta_1^t)\otimes\F^{\mathbb Z_{n_2}}$ and $\F^{\mathbb Z_{n_1}}\otimes(\theta_2^s)$,
which meet in a line. Lagrange interpolation on $\mu_{n_1}$ gives
$\theta_1^t=\sum_\omega\frac{(\theta_1^{n_1}-1)\omega}{n_1(\theta_1-\omega)}\omega^t$, and likewise for $\theta_2$.
\end{proof}

\begin{proposition}[One read per line: M{\"o}bius classification]\label{neg:prop:BL}
With at most one read per lift-1 row (at $\nu_\eta$) and per lift-2 row (at $\nu'_\omega$), all off their fibres, and
no domain reads, $\Lambda(V)\ne0$ iff every row is read and some bilinear $B=\alpha\omega\eta+\beta\omega+\gamma\eta+\delta$
without grid zeros vanishes at all $(\nu_\eta,\eta)$ and $(\omega,\nu'_\omega)$; then $\Lambda(V)=\langle f_B\rangle$.
\end{proposition}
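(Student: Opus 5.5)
The plan is to work in the evaluation basis of the dual. Since $u=n$ and $d\mu_n$ has $n$ points, the functionals $\ev_{d\omega\eta}$ form a basis of $\RS(n)^*$. Every functional can therefore be written as an array $h(\omega,\eta)$ on the CRT grid $\mu_{n_1}\times\mu_{n_2}$. By Lemma~\ref{neg:lem:CR}, the lift-1 read on row $\eta$ at $\nu_\eta$ is the array supported on row $\eta$ with entries $\gamma_\eta\,\omega/(\nu_\eta-\omega)$, where $\gamma_\eta\neq0$. Symmetrically, the lift-2 read on row $\omega$ is supported on that row with entries $\gamma'_\omega\,\eta/(\nu'_\omega-\eta)$. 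Every entry of a read is nonzero, because all reads are off their fibres. With no domain reads, $\Lambda(V)=\spn\Phi\cap\spn\Psi$, so $h\in\Lambda(V)$ if and only if there are scalars $a_\eta,b_\omega$ with
\[
h(\omega,\eta)=a_\eta\frac{\omega}{\nu_\eta-\omega}=b_\omega\frac{\eta}{\nu'_\omega-\eta}\qquad\text{for all }(\omega,\eta),
\]
where $a_\eta=0$ or $b_\omega=0$ whenever the corresponding row is not read.

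\emph{Necessity.} Let $h\ne0$. If some $a_\eta=0$, then row $\eta$ of $h$ vanishes. Every lift-2 entry in that row is $b_\omega\eta/(\nu'_\omega-\eta)$ with nonzero factor, so $b_\omega=0$ for all $\omega$, which forces $h=0$. Hence every $a_\eta$ and, symmetrically, every $b_\omega$ is nonzero. In particular every row is read and $h$ has no zero on the grid. Now put $G:=\omega\eta/h$. Then $G(\cdot,\eta)=\eta a_\eta^{-1}(\nu_\eta-\omega)$ is the restriction of a polynomial of degree $\le1$ in $\omega$, and $G(\omega,\cdot)=\omega b_\omega^{-1}(\nu'_\omega-\eta)$ is the restriction of a polynomial of degree $\le1$ in $\eta$. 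The key step is the tensor identity $(\RS[\mu_{n_1},2]\otimes\F^{\mu_{n_2}})\cap(\F^{\mu_{n_1}}\otimes\RS[\mu_{n_2},2])=\RS[\mu_{n_1},2]\otimes\RS[\mu_{n_2},2]$, which holds because $n_1,n_2\ge2$ and the subspace of a tensor product is a tensor of subspaces. It gives a bilinear $B=\alpha\omega\eta+\beta\omega+\gamma\eta+\delta$ with $B=G$ on the grid. So $B$ has no grid zeros, and since $B(\cdot,\eta)$ is the degree-$\le1$ extension of $G(\cdot,\eta)$, it vanishes at $(\nu_\eta,\eta)$; likewise it vanishes at $(\omega,\nu'_\omega)$. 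Finally $h=\omega\eta/B$, that is, $h\propto f_B$.

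\emph{Sufficiency.} Let $B$ be as stated. For each $\eta$, $B(\cdot,\eta)$ is nonzero on the grid and vanishes at $\nu_\eta$, so it has degree exactly $1$ and $B(\omega,\eta)=c_\eta(\nu_\eta-\omega)$ with $c_\eta\ne0$. Then $\omega\eta/B(\omega,\eta)=(\eta/c_\eta)\,\omega/(\nu_\eta-\omega)$, which on row $\eta$ is a multiple of the lift-1 read. Hence $f_B\in\spn\Phi$, and symmetrically $f_B\in\spn\Psi$. As $f_B\ne0$, $\Lambda(V)\ne0$.

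\emph{Dimension one.} Given $h,h'\in\Lambda(V)$ with $h'\neq0$, choose $c$ so that $h-ch'$ vanishes at one grid point. Then $h-ch'$ is a leak with a zero on the grid, and the necessity argument shows that a nonzero leak has none. Hence $h=ch'$ and $\Lambda(V)=\langle f_B\rangle$.

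I expect the main obstacle to be the step from "affine along every row and every column" to "bilinear". It needs the tensor-intersection identity and care that extending $G(\cdot,\eta)$ to a polynomial is unique, which holds because there are at least $2$ points per line. The rest is bookkeeping of nonzero factors.
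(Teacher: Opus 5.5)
Your proof is correct and follows the paper's route: the leak equation forces every coefficient to be nonzero, affinity in each variable on the grid gives a bilinear $B$, and the converse is the case $i=j=1$ of Lemma~\ref{neg:lem:RK}. Your tensor-intersection identity and your subtraction argument for $\dim\Lambda(V)=1$ spell out two steps that the paper compresses into ``hence bilinear'' and leaves implicit.
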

\begin{proof}
A leak $\sum_\eta g_\eta v_\eta=\sum_\omega g'_\omega v'_\omega$ of Cauchy arrays with nonzero entries needs all
coefficients nonzero (a zero $g_\eta$ empties column $\eta$ of the left side and forces everything to vanish). Then
$(\nu'_\omega-\eta)/(\nu_\eta-\omega)=P(\omega)Q(\eta)$, so $Q(\eta)(\nu_\eta-\omega)$ is affine in each variable on
the grid, hence bilinear; it vanishes at the read points and nowhere on the grid. The converse is
Lemma~\ref{neg:lem:RK} with $i=j=1$.
\end{proof}

\begin{lemma}[Rational kernels]\label{neg:lem:RK}
Let $B$ have no grid zero, with $B(\cdot,\eta)$ of degree $i$ with $i$ distinct roots $\nu_{\eta,s}\in\F^\times$ for
every $\eta$, and $B(\omega,\cdot)$ of degree $j$ with $j$ distinct roots $\nu'_{\omega,t}\in\F^\times$ for every
$\omega$. The view $V_B$ reading lift-1 row $\eta$ at all $d\eta\nu_{\eta,s}$ and lift-2 row $\omega$ at all
$d\omega\nu'_{\omega,t}$ ($|V_B|=in_2+jn_1$) has $0\ne f_B\in\Lambda(V_B)$, and $\dim\Lambda(V_B)\ge ij$ when $i\le n_1$
and $j\le n_2$.
\end{lemma}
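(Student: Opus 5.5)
We will show that $f_B$ is a fixed linear function of both the $y$-side reads and the $r$-side reads of $V_B$; by Lemma~\ref{neg:lem:DL} this puts $f_B$ in $\Lambda(V_B)$. The dimension bound then comes from running the same construction for a whole family of numerators $R$, which gives functionals of the form $f_{B,R}$.

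\emph{Step 1: $f_B$ lies in the span of the lift-1 reads.} Fix a lift-1 row $\eta$. By Lemma~\ref{neg:lem:CR}, the read at $d\eta\nu_{\eta,s}$ is the array supported on row $\eta$ with entries $\gamma_s\,\omega/(\nu_{\eta,s}-\omega)$, where $\gamma_s\ne0$. The restriction of $f_B$ to row $\eta$ has entries $\omega\eta/B(\omega,\eta)$. Since $B(\cdot,\eta)$ has $i$ distinct roots $\nu_{\eta,s}$ and does not vanish on $\mu_{n_1}$, partial fractions give
\[
\frac1{B(\omega,\eta)}=\sum_{s}\frac{c_{\eta,s}}{\omega-\nu_{\eta,s}}\qquad(\omega\in\mu_{n_1}).
\]
This is exact because $\deg_\omega B=i\ge1$ and the numerator $1$ has degree smaller than $i$. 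Multiplying by $\omega\eta$ shows that row $\eta$ of $f_B$ is a combination of the $i$ Cauchy arrays read on that row. Summing over all rows $\eta$ gives $f_B\in\spn\Phi$. The symmetric argument on the lift-2 rows $\omega$, using the roots $\nu'_{\omega,t}$ of $B(\omega,\cdot)$, gives $f_B\in\spn\Psi$.

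We also need $f_B\ne0$. Its coefficients on the basis $\ev_{d\omega\eta}$ are $\omega\eta/B\ne0$, and these $n$ evaluations are independent because $u=n$ (Vandermonde on $d\mu_n$). Hence $f_B\ne0$.

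\emph{Step 2: a family of leaks for the dimension bound.} Let $R$ range over polynomials with $\deg_\omega R<i$ and $\deg_\eta R<j$. For each such $R$, the partial-fraction expansion of $R(\omega,\eta)/B(\omega,\eta)$ in $\omega$ is still proper, since the numerator has $\omega$-degree less than $i$ and $\eta$ is fixed. The same holds in $\eta$. So the argument of Step 1 applies verbatim and gives $f_{B,R}\in\Lambda(V_B)$.

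The map $R\mapsto f_{B,R}$ is linear. Its kernel consists of those $R$ vanishing on all of $\mu_{n_1}\times\mu_{n_2}$. When $i\le n_1$ and $j\le n_2$, a polynomial of bidegree $<(i,j)$ that vanishes on this grid is zero, by row-by-row Vandermonde. So the map is injective, and $\dim\Lambda(V_B)\ge ij$.

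\emph{Main obstacle.} The delicate point is making the partial-fraction step exact on the grid. The leading coefficient of $B(\cdot,\eta)$ in $\omega$ may depend on $\eta$, and the decomposition must respect the precise Cauchy normalisation $\gamma_\nu\,\omega/(\nu-\omega)$ of Lemma~\ref{neg:lem:CR}. We expect this to reduce to bookkeeping of nonzero scalars: the leading coefficient is nonzero by the degree hypothesis, and the $\nu_{\eta,s}$ are off the fibre because $B$ has no grid zero.
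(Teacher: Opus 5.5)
Your proof is correct and matches the paper's: row-by-row partial fractions put each row of $f_{B,R}$ (with $\deg_\omega R<i$, $\deg_\eta R<j$) in the span of that row's Cauchy reads on both sides, and injectivity of $R\mapsto f_{B,R}$ on $\mu_{n_1}\times\mu_{n_2}$ gives $\dim\Lambda(V_B)\ge ij$. The ``main obstacle'' you flag is not one, since the $\eta$-dependent leading coefficient and the scalars $\gamma_\nu$ are absorbed into the row-wise coefficients $c_{\eta,s}$, as your Step~1 already does.
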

\begin{proof}
By partial fractions the reads on row $\eta$ span $\{\omega P(\omega)/\prod_s(\nu_{\eta,s}-\omega):\deg P<i\}$ on its
fibre, and $\prod_s(\nu_{\eta,s}-\omega)$ is a constant multiple of $B(\omega,\eta)$; row $\eta$ of $f_{B,R}$ has this
form when $\deg_\omega R<i$. Symmetrically for lift~2. The map $R\mapsto f_{B,R}$ is injective for $\deg_\omega R<i\le
n_1$, $\deg_\eta R<j\le n_2$.
\end{proof}

\begin{lemma}[Spread idempotents; the BCH bound]\label{neg:lem:BLW}
Let $\tilde B(z)=B(z^{\epsilon_1},z^{\epsilon_2})=\sum_{j\in\mathbb Z_n}\tilde b_jz^j$ on $\mu_n$,
$E^*=\{j\notin\{0,1\}:\tilde b_j\ne0\}\ne\emptyset$, and $\sigma(B)$ the least cyclic gap of $E^*\cup\{0\}$. Every
finite $W\subseteq\Fb$ with $f_B\in\spn\ev_W$ has $|W|\ge\sigma(B)$.
\end{lemma}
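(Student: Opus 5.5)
Writing $h:=f_B$ as a coefficient vector $(h_j)_{j<n}$, I will compute it explicitly and then argue as in Lemma~\ref{neg:lem:LC}(a): a charge $W$ gives a linear recurrence for $h$ with characteristic polynomial $\prod_{w\in W}(Z-w)$, so $|W|\ge\LC(h)$. The plan is to show $\LC(h)\ge\sigma(B)$ by a BCH-type argument. First compute $h_j=\sum_{z\in\mu_n}\frac{z}{\tilde B(z)}(dz)^j=d^j\sum_{z}z^{j+1}/\tilde B(z)$, using $B(\omega,\eta)=\tilde B(\omega\eta)$ via the CRT idempotents. Put $G(z):=1/\tilde B(z)$ on $\mu_n$ with discrete Fourier coefficients $\hat G(k)$, so $h_j=d^j\,n\,\hat G(-(j+1))$ up to normalisation of the transform. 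Rescaling by $d^j$ multiplies the roots of any recurrence by $d$ but does not change linear complexity. So it suffices to show that the sequence $j\mapsto\hat G(-(j+1))$, $j<n$, has linear complexity at least $\sigma(B)$.

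Then use the relation $\tilde B\cdot G=1$ on $\mu_n$, which in the Fourier domain is the cyclic convolution identity $\sum_{i}\tilde b_i\hat G(k-i)=\delta_{k,0}$ on $\mathbb Z_n$. This says that $\hat G$ satisfies a cyclic linear recurrence whose taps are the support $\{0,1\}\cup E^*$ of $\tilde B$. The BCH idea runs dually. Since the cyclic sequence $\hat G$ has DFT $G$, which is nowhere zero on $\mu_n$, its full cyclic linear complexity is $n$. A window of $\hat G$ of length $n$ with small linear complexity $L$ would, after extension, give a polynomial $P$ of degree $L$ with $P\cdot(\text{generating series})$ supported on few terms. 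The plan is to combine this with the convolution relation to produce a nonzero function on $\mu_n$ of the form $\tilde B\cdot(\text{polynomial of degree}<\sigma)$ that has too many zeros.

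Concretely, and more cleanly, I will use the BCH bound in its standard form. A vector whose DFT vanishes on a run of $\sigma-1$ consecutive frequencies has Hamming weight $\ge\sigma$. Suppose $h=\sum_{w\in W}\lambda_w\ev_w$ with $|W|<\sigma$. Multiply the resulting identity of sequences by the coefficients $\tilde b_i$ and convolve. The identity $\sum_i\tilde b_i\hat G(k-i)=\delta_{k,0}$ then says that the vector $(\sum_w\lambda_w(w/d)^{-(k+1)}\tilde B(\cdot))$ agrees with $\delta$ on the indices where the window covers $k-i$ for all $i$ in the support. The cyclic gap of length $\sigma(B)$ in $E^*\cup\{0\}$ supplies $\sigma-1$ consecutive $k$ where only the taps $0,1$ contribute. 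There the relation reduces to a two-term recurrence $\hat G(k)\tilde b_0+\hat G(k-1)\tilde b_1=0$, i.e.\ a geometric stretch. A charge with fewer than $\sigma$ exponentials cannot reproduce the forced break of this stretch at the ends of the gap, by Vandermonde on $\sigma$ consecutive indices.

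The main obstacle is the bookkeeping of the last step: turning the gap in the support of $\tilde B$ into a full-rank Vandermonde system on consecutive indices of $h$. The difficulty is handling the wrap-around between the cyclic index set $\mathbb Z_n$ and the linear window $j<n$, and checking that the two special taps $0,1$ (excluded from $E^*$) are exactly what the shift $z^{j+1}$ absorbs. If the direct route stalls, I would first treat $W\subseteq\mu_n d$, where the statement is literally the BCH bound for the cyclic code with zeros on the gap. I would then reduce general finite $W\subseteq\Fb$ to it via Lemma~\ref{neg:lem:LC}(a), bounding $\LC(h)$ by the BCH bound applied to the cyclic sequence.
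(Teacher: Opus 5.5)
You correctly reduce the problem, via Lemma~\ref{neg:lem:LC}(a), to a lower bound on the linear complexity of $h=f_B$. That target is even stronger than the lemma, since a short recurrence may have repeated roots or a root at $0$, i.e.\ correspond to Hermite data. The step that is supposed to give the bound fails, however.

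The relation $\sum_i\tilde b_i\hat G(k-i)=\delta_{k,0}$ involves every tap $i\in\{0,1\}\cup E^*$ at every index $k$. A gap in the support of $(\tilde b_i)$ does not create indices at which the taps $e\in E^*$ drop out. So there is no two-term recurrence $\tilde b_0\hat G(k)+\tilde b_1\hat G(k-1)=0$ on $\sigma-1$ consecutive $k$, and no geometric stretch of $\hat G$ to contradict.

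For instance, at $(n_1,n_2)=(7,10)$ take $B=\omega-c$ with $c\ne0$ and $c^7\ne1$. Then $\epsilon_1=50$, $\tilde B=z^{50}-c$, $\sigma=20$ and $\tilde b_1=0$, so your relation would make $\hat G$ vanish on $19$ consecutive indices. But $1/\tilde B=-\sum_{r<7}c^{-r-1}\omega^r/(1-c^{-7})$ with $\omega=z^{50}$, so $\hat G$ is nonzero at every multiple of $10$. The obstacle is therefore not the wrap-around bookkeeping you flag; the mechanism itself is missing.

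The fallback does not help either. For $W\subseteq d\mu_n$ the claim is trivial: the $\ev_{dz}$, $z\in\mu_n$, form a basis and every coefficient of $f_B$ is nonzero, so $W=d\mu_n$. For general $W$, the BCH bound for the periodic sequence only bounds its periodic linear complexity. That gives no lower bound on the complexity of the window $j<n$, which is what Lemma~\ref{neg:lem:LC}(a) compares with $|W|$. Indeed, the window $(w^j)_{j<n}$ with $w^n\ne1$ has linear complexity $1$, yet all its discrete Fourier coefficients $(1-w^n)/(1-w\zeta^{-1})$, $\zeta\in\mu_n$, are nonzero.

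You quote the right form of the BCH bound but never produce a vector whose transform has a gap. The missing idea is to work with values on $d\mu_n$, where $1/\tilde B$ acts pointwise, rather than with the spectrum $\hat G$.

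Put $W'=W\setminus d\mu_n$ and $k=|W'|$; if $k\ge\sigma$ there is nothing to prove. Test $f_B=\sum_{x\in W}w_x\ev_x$ on the codewords $C_y=(X^n-d^n)/(X-y)$ with $y=dz\in d\mu_n\setminus W$ (equivalently, evaluate $\sum_jh_jy^{n-1-j}$). Since $C_y$ vanishes on $d\mu_n\setminus\{y\}$, the points of $W\cap d\mu_n$ drop out, and you get $nd^{n-1}/\tilde B(z)=\sum_{x\in W'}\hat w_x/(x-dz)$.

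Clearing denominators with $P(Y)=\prod_{x\in W'}(x-Y)$ gives $\Phi(z)=N(dz)\tilde B(z)-nd^{n-1}P(dz)$ with $\deg N\le k-1$. This $\Phi$ vanishes on $\{z\in\mu_n:dz\notin W\}$. Its coefficients modulo $z^n-1$ lie in $[0,k]\cup\bigcup_{e\in E^*}(e+[0,k-1])$.

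For $k\le\sigma-1$ these intervals are pairwise disjoint; this is where the \emph{least} gap enters. Hence $\Phi\ne0$: if $N\ne0$ an $e$-block survives, and if $N=0$ then $\Phi$ is a nonzero polynomial of degree $k<n$. Moreover its coefficients vanish on a run of $g\ge\sigma_{\max}-k-1$ consecutive residues, where $\sigma_{\max}$ is the largest gap.

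The BCH bound, applied to the value vector $(\Phi(z))_{z\in\mu_n}$ whose transform is this coefficient vector, allows at most $n-g-1$ zeros. So $|W\cap d\mu_n|\ge g+1$ and $|W|=k+|W\cap d\mu_n|\ge\sigma_{\max}\ge\sigma$.

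The function your second paragraph gestures at, $\tilde B$ times a low-degree polynomial with too many zeros on $\mu_n$, is exactly this $\Phi$. It comes from this evaluation, in which the unknown charge becomes a rational function with $k$ poles, not from the convolution identity for $\hat G$.
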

\begin{proof}
Let $\sigma_{\max}\ge\sigma$ be the largest cyclic gap of $E^*\cup\{0\}$, $W'=W\setminus d\mu_n$ and $k=|W'|$. If
$k\ge\sigma$, then $|W|\ge\sigma$; so let $k\le\sigma-1$. Testing $f_B=\sum_Ww_x\ev_x$ on
$C_y=(X^n-d^n)/(X-y)$, $y=d\omega\eta\notin W$, gives $nd^{n-1}/B(\omega,\eta)=\sum_{x\in W'}\hat w_x/(x-y)$. Clearing
denominators with $P(Y)=\prod_{W'}(x-Y)$, the function $\Phi(z)=N(dz)\tilde B(z)-nd^{n-1}P(dz)$ ($\deg N\le k-1$)
vanishes on $\{z:dz\notin W\}$, is nonzero, and has its spectrum in the pairwise disjoint intervals $[0,k]$ and
$e+[0,k-1]$ ($e\in E^*$); so it avoids a run of $g\ge\sigma_{\max}-k-1$ consecutive residues, and by the BCH bound
(see \cite{MS77}) it has at most $n-g-1$ zeros on $\mu_n$. Hence $|W\cap d\mu_n|\ge g+1$ and
$|W|=k+|W\cap d\mu_n|\ge\sigma_{\max}\ge\sigma$ in this case.
\end{proof}

\begin{corollary}[Cost of a M{\"o}bius incidence]\label{neg:cor:MOB}
For a M{\"o}bius kernel with $(\beta,\gamma)\ne(0,0)$, every $W$ with $f_B\in\spn\ev_W$ has
$|W|\ge\sigma_M=\min(\epsilon_{\rm lo}-1,n+1-2\epsilon_{\rm lo})$. If $\beta=\gamma=0$, both sides read the single
column $-d\delta/\alpha$, which is a shared column (Section~\ref{neg:shared}).
\end{corollary}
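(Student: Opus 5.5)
The plan is to derive the bound directly from Lemma~\ref{neg:lem:BLW} by computing the spectrum of $\tilde B$ for a bilinear $B$, and to treat the degenerate case $\beta=\gamma=0$ by reading off the positions of the reads, as in Proposition~\ref{neg:prop:BL}.

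\emph{Step 1: the spectrum.} Write $B=\alpha\omega\eta+\beta\omega+\gamma\eta+\delta$. Then
$$\tilde B(z)=\alpha z^{\epsilon_1+\epsilon_2}+\beta z^{\epsilon_1}+\gamma z^{\epsilon_2}+\delta .$$
On $\mu_n$ we have $z^{\epsilon_1+\epsilon_2}=z^{n+1}=z$, so, as a function on $\mu_n$,
$$\tilde B(z)=\delta+\alpha z+\beta z^{\epsilon_1}+\gamma z^{\epsilon_2}.$$
The idempotents satisfy $\epsilon_i\in[2,n-1]$ (see the proof of Corollary~\ref{neg:cor:TB1p}), so neither exponent is $0$ or $1$. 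Hence $E^*$ consists of $\epsilon_1$ if $\beta\ne0$ and $\epsilon_2$ if $\gamma\ne0$. It is nonempty because $(\beta,\gamma)\ne(0,0)$, so Lemma~\ref{neg:lem:BLW} applies. The kernel hypothesis that $B$ has no grid zero is exactly the standing assumption of that lemma.

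\emph{Step 2: the gaps.} I read ``cyclic gap'' as the distance between cyclically consecutive elements of $E^*\cup\{0\}$ in $\mathbb Z_n$, and I use $\epsilon_{\rm hi}=n+1-\epsilon_{\rm lo}$.
\begin{itemize}
\item If both $\beta,\gamma\ne0$, the set is $\{0,\epsilon_{\rm lo},\epsilon_{\rm hi}\}$. Its gaps are $\epsilon_{\rm lo}$, $n+1-2\epsilon_{\rm lo}$ and $n-\epsilon_{\rm hi}=\epsilon_{\rm lo}-1$. Their minimum is $\sigma_M$.
\item If only $\epsilon_{\rm lo}$ occurs, the gaps are $\epsilon_{\rm lo}$ and $n-\epsilon_{\rm lo}=\epsilon_{\rm hi}-1\ge\epsilon_{\rm lo}$, both at least $\epsilon_{\rm lo}-1\ge\sigma_M$.
\item If only $\epsilon_{\rm hi}$ occurs, the gaps are $\epsilon_{\rm hi}$ and $\epsilon_{\rm lo}-1$, both at least $\sigma_M$.
\end{itemize}
In every case $\sigma(B)\ge\sigma_M$, and Lemma~\ref{neg:lem:BLW} gives $|W|\ge\sigma(B)\ge\sigma_M$.

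\emph{Step 3: the case $\beta=\gamma=0$.} Irreducibility of the M\"obius form ($cf\ne de$ in the notation of Section~\ref{kl:sec:practical}) forces $\alpha\delta\ne0$.
\begin{itemize}
\item The lift-1 read on row $\eta$ sits at $d\eta\nu_\eta$ with $B(\nu_\eta,\eta)=0$, so $\nu_\eta=-\delta/(\alpha\eta)$. Its abscissa is therefore $d\eta\nu_\eta=-d\delta/\alpha$ for every $\eta$.
\item Symmetrically, every lift-2 read on row $\omega$ sits at abscissa $-d\delta/\alpha$.
\end{itemize}
Thus both lifts read the single column $-d\delta/\alpha$, which lies off $D$ because $B$ has no grid zero. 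This is the shared-column situation of Section~\ref{neg:shared}.

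The only delicate point is fixing the gap convention of Lemma~\ref{neg:lem:BLW} so that the one-exponent cases are dominated by the two-exponent value. I expect that to be the main place to check; the rest is bookkeeping.
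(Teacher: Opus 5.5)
Your proof is correct and is the paper's own argument. You reduce $\tilde B$ on $\mu_n$ via $z^{\epsilon_1+\epsilon_2}=z^{n+1}=z$, read off $E^*\subseteq\{\epsilon_1,\epsilon_2\}$, and feed the gaps of $\{0,\epsilon_{\rm lo},\epsilon_{\rm hi}\}$ (namely $\epsilon_{\rm lo}$, $n+1-2\epsilon_{\rm lo}$, $\epsilon_{\rm lo}-1$) and of $\{0,e\}$ (all $\ge\epsilon_{\rm lo}-1$) into Lemma~\ref{neg:lem:BLW}.

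Step~3 overreaches in two small ways, neither needed for the statement. First, ``no grid zero'' only places $-d\delta/\alpha$ outside the coset $d\mu_n$, not outside the other cosets of $D$. This does not matter, because the statement only claims a shared column, and Section~\ref{neg:shared} also treats in-domain ones. Second, $\alpha\ne0$ already follows from the existence of a read on each row, so you need not appeal to irreducibility.
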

\begin{proof}
$\tilde B=\alpha z+\beta z^{\epsilon_1}+\gamma z^{\epsilon_2}+\delta$; the gaps of $\{0,\epsilon_{\rm lo},\epsilon_{\rm hi}\}$
are $\epsilon_{\rm lo}$, $n+1-2\epsilon_{\rm lo}$ and $\epsilon_{\rm lo}-1$, and those of $\{0,e\}$ are at least
$\epsilon_{\rm lo}-1$.
\end{proof}

\begin{theorem}[Unbounded column sets]\label{neg:thm:F}
Let $n_1\equiv1\pmod3$, $n_2=n_1+3$, $D$ eight cosets of $\mu_n$ in $\F_p$, $u=n$, and column sets containing
$\lambda\mu_n$ and $\lambda'\mu_n$ in two further cosets. The view of Proposition~\ref{neg:prop:AL}
($|V|=2n_1+3$) forces $|W|\ge n_1(n_1+2)/3-1\ge(n_1/6-1/2)|V|$ for every $W\subseteq\Fb$. Hence, for two masked lifts
with coprime radices whose column sets may contain unboundedly many cosets, and with \emph{no} design law, no locator
has a constant ratio, nor a ratio $n^{o(1)}$.
\end{theorem}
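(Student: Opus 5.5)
Our plan is to apply Proposition~\ref{neg:prop:AL} and then bound the charge of its leak with the BCH argument of Lemma~\ref{neg:lem:BLW}. First we check the arithmetic of the split. Since $n_1\equiv1\pmod3$, the radices $n_1$ and $n_2=n_1+3$ are coprime (any common divisor divides $3$, and $3\nmid n_1$). Choosing $\lambda,\lambda'$ in cosets outside $D$ ensures $\theta_1^{n_1}\ne1\ne\theta_2^{n_2}$. The view reads one symbol per lift-1 row and one per lift-2 row, so $|V|=n_1+n_2=2n_1+3$. Proposition~\ref{neg:prop:AL} then gives $\Lambda(V)=\langle f_B\rangle$ with $B=(\omega-\theta_1)(\eta-\theta_2)$. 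This $B$ is the bilinear form $\omega\eta-\theta_2\omega-\theta_1\eta+\theta_1\theta_2$, so $\beta=-\theta_2$ and $\gamma=-\theta_1$ are both nonzero. By Lemma~\ref{neg:lem:samp} it therefore suffices to show that every $W\subseteq\Fb$ with $f_B\in\spn\ev_W$ is large.

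Next we compute the CRT idempotents. From $n_2\equiv3\pmod{n_1}$, $\epsilon_2=n_2\cdot(3^{-1}\bmod n_1)$. As $n_1\equiv1\pmod 3$, $3^{-1}\equiv(2n_1+1)/3\pmod{n_1}$, so $\epsilon_2=(n_1+3)(2n_1+1)/3$. This is about $2n/3$, and then $\epsilon_1=n+1-\epsilon_2$ is about $n/3$; precisely, $\epsilon_{\rm lo}=n+1-\epsilon_2=(n_1^2+2n_1)/3=n_1(n_1+2)/3$. We will verify this exactly, including which idempotent is smaller.

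Corollary~\ref{neg:cor:MOB} gives $|W|\ge\sigma_M=\min(\epsilon_{\rm lo}-1,\,n+1-2\epsilon_{\rm lo})$. Here $n+1-2\epsilon_{\rm lo}=n_1^2+3n_1+1-\tfrac23(n_1^2+2n_1)=(n_1^2+5n_1+3)/3$, which exceeds $\epsilon_{\rm lo}-1$. So $\sigma_M=n_1(n_1+2)/3-1$, the claimed bound. Dividing by $|V|=2n_1+3$ gives $n_1(n_1+2)/3-1\ge(n_1/6-1/2)(2n_1+3)$, since the right side is $n_1^2/3-n_1/2+\dots$. This is a routine polynomial comparison, which we check with a small case split if needed.

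For the ``hence'' part we take $n_1\to\infty$ along $n_1\equiv1\pmod3$. The charge ratio is then $\Theta(n_1)=\Theta(\sqrt n)$, so no constant $a$ and no $n^{o(1)}$ ratio can bound $|W(V)|/|V|$ for any valid charge, hence for any locator. The column sets containing full $\mu_n$-cosets are exactly what lets all $n_1+n_2$ aligned reads exist, and this is where the absence of a design law enters. We also need primes $p\equiv1\pmod{n}$ with enough cosets, which Linnik's theorem (Lemma~\ref{main:lem:field}) supplies. We expect the main obstacle to be checking that Lemma~\ref{neg:lem:BLW} applies: $B$ must have no grid zero, which holds because $\theta_i\notin\mu_{n_i}$, and the spread set $E^*$ for $\tilde B=\alpha z+\beta z^{\epsilon_1}+\gamma z^{\epsilon_2}+\delta$ must be nonempty. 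Getting the exact idempotent values right is the other point that needs care.
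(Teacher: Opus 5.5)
Your proposal is correct and follows the paper's route. You take the leak $f_B$ from Proposition~\ref{neg:prop:AL}, apply Corollary~\ref{neg:cor:MOB} with $\epsilon_{\rm lo}=n_1(n_1+2)/3$ to get $\sigma_M=(n_1^2+2n_1-3)/3$, and finish with the comparison $7n_1+3>0$. There is one harmless slip: under the convention $\epsilon_i\equiv1\pmod{n_i}$, the idempotent $n_2\cdot(3^{-1}\bmod n_1)=n_2(2n_1+1)/3$ is $\epsilon_1$, not $\epsilon_2$, but only $\epsilon_{\rm lo}$ enters $\sigma_M$, so nothing changes. The appeal to Lemma~\ref{neg:lem:samp} is also not needed for the charge bound itself.
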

\begin{proof}
$3^{-1}\equiv(2n_1+1)/3\pmod{n_1}$ gives $\epsilon_1=n_2(2n_1+1)/3>n/2$ and $\epsilon_{\rm lo}=\epsilon_2=n_1(n_1+2)/3$, so
$\sigma_M=(n_1^2+2n_1-3)/3$ (Corollary~\ref{neg:cor:MOB}), and $2(n_1^2+2n_1-3)-(n_1-3)(2n_1+3)=7n_1+3>0$.
\end{proof}

The witnesses are M{\"o}bius incidences on every row, which a design law excludes; so Theorem~\ref{neg:thm:F} shows
that bounded column-coset counts \emph{and} a design law cannot both be dropped, not that bounded counts are necessary
when a design law is kept.

\begin{corollary}[Free column sets]\label{neg:cor:Ffree}
If the column sets could be any $c_An_i$ points off $D$ ($c_A\ge2$), the view of Theorem~\ref{neg:thm:F} would be
admissible, so the conclusion of Theorem~\ref{neg:thm:F} would hold.
\end{corollary}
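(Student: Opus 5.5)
The claim is that if the column sets could be arbitrary sets of $c_An_i$ points off $D$, the view of Theorem~\ref{neg:thm:F} would be admissible and its conclusion would hold. I would reduce this to Theorem~\ref{neg:thm:F} by checking that the view used there only ever reads columns that such a free design may contain. That view is the aligned product view of Proposition~\ref{neg:prop:AL}. It reads lift-1 row $\eta$ at the abscissa $\lambda\eta$ for each $\eta\in\mu_{n_2}$, and lift-2 row $\omega$ at $\lambda'\omega$ for each $\omega\in\mu_{n_1}$. So its lift-1 columns form the set $\lambda\mu_{n_2}$, of $n_2$ points, and its lift-2 columns form $\lambda'\mu_{n_1}$, of $n_1$ points. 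These are not subsets of $\mu_{n_1}$- resp.\ $\mu_{n_2}$-cosets of bounded number, which is exactly why Theorem~\ref{neg:thm:F} needed column sets containing whole cosets $\lambda\mu_n$, $\lambda'\mu_n$.

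Under the free hypothesis the first step is to choose column sets containing these points, and for this the sizes must fit. Lift-1 is allowed $c_An_1\ge2n_1$ points, and $n_2=n_1+3\le2n_1$ once $n_1\ge3$; the smallest admissible case $n_1=4$ gives $7\le8$. Lift-2 is allowed $c_An_2\ge n_1$ points. I pad each set arbitrarily with further points off $D$ up to its allowed size. I then check that $\lambda\mu_n,\lambda'\mu_n$ can be taken in cosets of $\mu_n$ disjoint from $D$ and from each other. There are at least $32$ cosets of $\mu_n$ and $D$ uses eight, so such $\lambda,\lambda'$ exist and $\theta_1^{n_1}\ne1\ne\theta_2^{n_2}$ holds.

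The second step is to note that the leak computation of Proposition~\ref{neg:prop:AL} and the lower bound of Corollary~\ref{neg:cor:MOB} depend only on the read positions $V$. Neither uses the remaining columns of the design. Hence the arithmetic in the proof of Theorem~\ref{neg:thm:F} applies verbatim: every charge has $|W|\ge n_1(n_1+2)/3-1\ge(n_1/6-1/2)|V|$. Letting $n_1\to\infty$ along $n_1\equiv1\pmod 3$, these bounds rule out a constant ratio and a ratio of $n^{o(1)}$.

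The only step I expect to need care is the size fit of the lift-1 column set, which is where $c_A\ge2$ together with $n_2<2n_1$ is used. Everything else transfers unchanged from Theorem~\ref{neg:thm:F}.
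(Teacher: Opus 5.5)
Your proposal is correct and is essentially the paper's proof: the view of Theorem~\ref{neg:thm:F} reads only $\lambda\mu_{n_2}$ ($n_2<2n_1\le c_An_1$ points) on side~1 and $\lambda'\mu_{n_1}$ on side~2, so it fits in free column sets once $\lambda,\lambda'$ are taken in distinct $\mu_n$-cosets off $D$, and the leak and the lower bound depend only on the read positions. One small point: the count ``at least $32$ cosets'' comes from the field rule of the main construction and is not available in this appendix setting, but you only need two $\mu_n$-cosets outside $D$, which Theorem~\ref{neg:thm:F} already presupposes.
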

\begin{proof}
The view reads only $\lambda\mu_{n_2}$ ($n_2<2n_1$ points) on side~1 and $\lambda'\mu_{n_1}$ on side~2; take
$\lambda,\lambda'$ in distinct cosets of $\mu_n$ off $D$.
\end{proof}

\begin{theorem}[Three restrictions on the column sets do not suffice]\label{neg:thm:R}
For every constant $a$, the conclusion of Theorem~\ref{neg:thm:F} (no locator of ratio $a$ for coprime radices with
unbounded column-coset counts and no design law) persists when the column sets are further restricted by (ALN) (no
$\mu_n$-coset in $A_1$), (ALN$_2$) (none in either set) or (HALF) (fewer than $n_{\min}/2$ cosets of $A_i$ in each
$\mu_n$-coset). The witnesses are the views of Lemma~\ref{neg:lem:RK} for $B=\omega-c\eta^b$, $c=\gamma^b\notin\mu_n$,
$d=1$, over $\F_p$ for primes $p\equiv1\pmod{bn}$ with $(p-1)/n>13b+45$ (they exist by Dirichlet's theorem), where
$\gamma\in\F_p^\times$ is chosen so that $A_1$ and $A_2$ are disjoint and do not meet $D$ (this is possible, see the proof):
(F2) $b=1$, $n_1\equiv1\pmod6$, $n_2=n_1+3$, $A_1=c\mu_{n/2}$, $A_2=c^{-1}\mu_n$, $\sigma=n_1(n_1+2)/3-1$, $|V|=2n_1+3$;
(F3) $b=5$, $n_1\equiv38\pmod{42}$, $5\nmid n_1$, $n_2=n_1+7$, $A_1=c\mu_{n/3}$, $A_2=\gamma^{-1}\mu_{n/2}$ or
$\gamma^{-1}\mu_{5n/2}$, $\sigma=(2n-6n_2+35)/7$, $|V|=6n_1+7$;
(F4) $b=11$, $n_1\equiv39\pmod{660}$, $n_2=n_1+5$, $A_1=c\mu_{n/4}$, $A_2=\gamma^{-1}\mu_{n/3}$, $\sigma=(n+n_2)/5$,
$|V|=12n_1+5$.
\end{theorem}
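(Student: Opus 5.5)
The plan is to treat the three families (F2)--(F4) uniformly. For each, I check four things: the view is admissible, it has a nonzero leak, every valid charge is large, and the ratio $\sigma/|V|$ grows without bound. Only the parameter arithmetic changes from family to family.

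\emph{Leak.} Take $B(\omega,\eta)=\omega-c\eta^b$. For fixed $\eta$ it has degree $i=1$ in $\omega$, with root $\nu_\eta=c\eta^b$. For fixed $\omega$ it has degree $j=b$ in $\eta$, with roots the $b$-th roots of $\omega/c$. These roots are distinct because $p\nmid b$. They lie in $\F_p$ because $p\equiv1\pmod{bn}$ and $c=\gamma^b$ make $\omega/c$ a $b$-th power. $B$ has no grid zero because $c\notin\mu_n$. Lemma~\ref{neg:lem:RK} therefore gives $0\ne f_B\in\Lambda(V_B)$, with
\[
|V_B|=n_2+bn_1 .
\]
This reproduces the stated sizes $2n_1+3$, $6n_1+7$ and $12n_1+5$ once $n_2$ is substituted.

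\emph{Admissibility.} The side-1 read points are $\{c\eta^b\cdot\eta\}=c\{\eta^{b+1}\}$. When $\eta$ ranges over $\mu_{n_2}$ this set is a coset of $\mu_{n_2/\gcd(b+1,n_2)}$, so I have to check that it lies in the stated $A_1$ (for instance $c\mu_{n/2}$ in (F2)). The side-2 points $\omega\cdot(\omega/c)^{1/b}$ are handled the same way and should land in $\gamma^{-1}\mu_{\cdots}$. Then the restrictions follow from these containments:
\begin{itemize}
\item[(ALN)/(ALN$_2$):] a proper subgroup index means no full $\mu_n$-coset is contained in either set;
\item[(HALF):] count how many $\mu_{n_i}$-cosets each set meets inside one $\mu_n$-coset, using the congruences on $n_1$.
\end{itemize}
Disjointness from $D$ and from each other needs a choice of $\gamma$ whose relevant powers avoid at most $13b+45$ forbidden $\mu_n$-cosets. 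The hypothesis $(p-1)/n>13b+45$ makes this a pigeonhole count.

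\emph{Charge lower bound.} Apply Lemma~\ref{neg:lem:BLW}. With $d=1$ we get $\tilde B(z)=z^{\epsilon_1}-c\,z^{b\epsilon_2}$, so $E^*\cup\{0\}=\{0,\epsilon_1,b\epsilon_2\bmod n\}$. I compute the idempotents $\epsilon_1,\epsilon_2$ explicitly from the congruences on $n_1$ (as in the proof of Theorem~\ref{neg:thm:F} for $b=1$), reduce $b\epsilon_2$ modulo $n$, and read off the least cyclic gap $\sigma$ of these three residues. The output should be exactly the stated $\sigma$ in each family. The congruence conditions on $n_1$ were presumably chosen precisely to make these residues spread out. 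Finally $\sigma/|V|=\Theta(n_1)$, which exceeds any constant $a$ for large admissible $n_1$; infinitely many such $n_1$ exist because each congruence class is compatible with $\gcd(n_1,n_2)=1$.

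\emph{Main obstacle.} I expect the arithmetic in the admissibility and idempotent steps to be the hard part, especially in (F3) and (F4). There $\gcd(b+1,n_2)$, $\gcd(b,n)$ and the residue of $b\epsilon_2$ all depend delicately on the congruences, and $A_2$ has alternatives depending on a parity case. My first move is to verify $\sigma$ by writing $n_1=42k+38$ (and $n_1=660k+39$) and computing $\epsilon_2$ symbolically as a polynomial in $k$. I would then confirm that no smaller gap appears.
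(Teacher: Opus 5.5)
Your plan is the paper's proof: Lemma~\ref{neg:lem:RK} for the leak, Lemma~\ref{neg:lem:BLW} with $\tilde B=z^{\epsilon_1}-cz^{b\epsilon_2}$ and explicitly computed idempotents for the charge bound, the images $c\eta^{b+1}$ and $\gamma^{-1}\omega^{1+s}\zeta$ of the read abscissae ($s=b^{-1}\bmod n_1$, $\zeta\in\mu_b$) for admissibility, and a pigeonhole count in $\F_p^\times/\mu_n$ that excludes at most $8b+40+5(b+1)=13b+45$ classes of $\gamma$.

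Two small corrections apply when you execute it. First, the two choices of $A_2$ in (F3) depend on whether $5\mid n_2$ (whether $\mu_5\subseteq\mu_{n_2}$), not on a parity case. Second, not every family satisfies every restriction: in (F2), $A_2=c^{-1}\mu_n$ is a full $\mu_n$-coset, so your claim that neither set contains a $\mu_n$-coset fails there. The families witness the three restrictions separately, and only (F4) satisfies all three.
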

\begin{proof}
$B(\cdot,\eta)$ has the root $c\eta^b$ and $B(\omega,\cdot)$ the $b$ roots $\gamma^{-1}\omega^s\zeta$, $s=b^{-1}\bmod
n_1$, $\zeta\in\mu_b$. Lemma~\ref{neg:lem:RK} gives the leak, and $\tilde B=z^{\epsilon_1}-cz^{b\epsilon_2}$ with
Lemma~\ref{neg:lem:BLW} gives the bound. (F3): $n_1$ is even, $\equiv2\pmod3$ and $\equiv3\pmod7$; $7^{-1}\equiv(2n_1+1)/7$;
$\epsilon_1=(2n+n_2)/7$ and $5\epsilon_2\equiv(4n-5n_2+35)/7\pmod n$; the gaps are $(2n+n_2)/7$, $(2n-6n_2+35)/7$ and
$(3n+5n_2-35)/7$. (F4): $n_1\equiv3\pmod4$, $3\mid n_1$, $n_1\equiv4\pmod5$, $n_1\equiv6\pmod{11}$; $5^{-1}\equiv(n_1+1)/5$;
$\epsilon_1=(n+n_2)/5$ and $11\epsilon_2\equiv(4n-11n_2+55)/5$; the gaps are $(n+n_2)/5$, $(3n-12n_2+55)/5$ and
$(n+11n_2-55)/5$. The side-1 abscissae $c\eta^{b+1}$ fill $c\mu_{n_2/\gcd(b+1,n_2)}$ and the side-2 abscissae
$\gamma^{-1}\omega^{1+s}\zeta$ fill the stated sets; a union $\gamma'\mu_{n/g}$ contains a $\mu_n$-coset iff $g=1$.
A suitable $\gamma$ exists once $(p-1)/n>13b+45$, and primes $p\equiv1\pmod{bn}$ with this property exist by Dirichlet's
theorem. Indeed, in the cyclic group $\mathcal G=\F_p^\times/\mu_n$ let $x$ be the class of $\gamma$; then $A_1$ lies in the class
$x^b$ and $A_2$ in the classes $x^{-1}\mathcal Z$, where $\mathcal Z$ is the image of $\mu_{5n}$ in $\mathcal G$ if $A_2=\gamma^{-1}\mu_{5n/2}$
and $\mathcal Z=\{1\}$ otherwise, so $|\mathcal Z|\le5$. It suffices that $x^b$ and all of $x^{-1}\mathcal Z$ avoid the eight classes of $D$
and that $x^{b+1}\notin\mathcal Z$, which excludes at most $8b+40+5(b+1)$ classes $x$. (Since $\mu_n$
is a class of $D$, this also gives $c\notin\mu_n$, and $p\equiv1\pmod b$ puts the $b$ roots $\zeta\in\mu_b$ in $\F_p$.)
\end{proof}

\paragraph{Examples.} Lemma~\ref{neg:lem:LC}(a) gives per-instance bounds above $\sigma$. The values below were
computed exactly over $\F_p$.

\smallskip
{\small\centering
\begin{tabular}{lrrrl}
\toprule
instance & $|V|$ & $\sigma$ & $\LC(f_B)$ & ratio ($\sigma$ / $\LC$) \\
\midrule
Thm.~\ref{neg:thm:F}, $(7,10)$ & 17 & 20 & 30 & 1.18 / 1.76 \\
Thm.~\ref{neg:thm:F}, $(13,16)$ & 29 & 64 & 80 & 2.21 / 2.76 \\
Thm.~\ref{neg:thm:F}, $(31,34)$ & 65 & 340 & 374 & 5.23 / 5.75 \\
Thm.~\ref{neg:thm:R} (F2), $(7,10)$ & 17 & 20 & 41 & 1.18 / 2.41 \\
Thm.~\ref{neg:thm:R} (F2), $(13,16)$ & 29 & 64 & 129 & 2.21 / 4.45 \\
Thm.~\ref{neg:thm:R} (F3), $(38,45)$ & 235 & 455 & 800 & 1.94 / 3.40 \\
Thm.~\ref{neg:thm:R} (F4), $(39,44)$ & 473 & 352 & 781 & 0.74 / 1.65 \\
\bottomrule
\end{tabular}\par}

\begin{proposition}[Monomial kernels are harmless on bounded designs]\label{neg:prop:MON}
Assume $n_1\le n_2$. For $B=\omega^\alpha-c\eta^\beta$ or $B=\omega^\alpha\eta^\beta-c$ put $\ell=\alpha+\beta$ resp.\
$\ell=\alpha-\beta$. Side~1's abscissae meet at least $n_2/\gcd(\ell,n_2)$ cosets of $\mu_{n_1}$ and side~2's at least
$n_1/\gcd(\ell,n_1)$ cosets of $\mu_{n_2}$; if $\ell=0$ the sides share abscissae. Hence if each column set is a union
of at most $c_A$ cosets and $n_1>c_A^2$, the view is inadmissible or its minimal charge is smaller than $|V_B|$.
\end{proposition}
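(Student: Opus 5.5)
The plan is to handle the two kernel shapes separately, and in each to follow the read abscissae on the two sides of $V_B$. Throughout I use the CRT grid $d\mu_n\cong\mu_{n_1}\times\mu_{n_2}$ of this subsection. A lift-1 row $\eta$ is read at the points $d\eta\nu$ with $B(\nu,\eta)=0$, and a lift-2 row $\omega$ at the points $d\omega\nu'$ with $B(\omega,\nu')=0$ (Lemma~\ref{neg:lem:RK}). For $B=\omega^\alpha-c\eta^\beta$ the side-1 roots satisfy $\nu^\alpha=c\eta^\beta$. The side-1 abscissa is $x=d\eta\nu$, so $x^{\alpha}=d^\alpha c\,\eta^{\alpha+\beta}$, and $x^{\alpha n_1}=(d^\alpha c)^{n_1}\eta^{\ell n_1}$ with $\ell=\alpha+\beta$. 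As $\eta$ runs over $\mu_{n_2}$ and $\gcd(n_1,n_2)=1$, the element $\eta^{\ell n_1}$ runs over a subgroup of order $n_2/\gcd(\ell,n_2)$. The $\mu_{n_1}$-coset of $x$ is determined by $x^{n_1}$, which determines $x^{\alpha n_1}$, so distinct values of $\eta^{\ell n_1}$ force distinct cosets. This gives at least $n_2/\gcd(\ell,n_2)$ cosets of $\mu_{n_1}$ on side~1. I will first check that raising to the power $\alpha$ does not merge the cosets that matter; if it does, I replace the count by $n_2/\gcd(\ell,n_2)$ directly via the map $\eta\mapsto\eta^{\ell}$ on the quotient. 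Symmetrically, the side-2 roots satisfy $\nu'^\beta=\omega^\alpha/c$, and the side-2 abscissa $d\omega\nu'$ has $\beta$-th power proportional to $\omega^{\alpha+\beta}$; the same argument gives at least $n_1/\gcd(\ell,n_1)$ cosets of $\mu_{n_2}$. The case $B=\omega^\alpha\eta^\beta-c$ is identical with $\ell=\alpha-\beta$, because the exponent of $\eta$ in the abscissa becomes $1\cdot\alpha-\beta$ after clearing.

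Next, the degenerate case $\ell=0$. The side-1 abscissa $x$ then satisfies $x^\alpha=d^\alpha c$ (respectively $x^{\alpha}=$ const), independently of $\eta$, and the same holds on side~2 with the same constant. Hence the two sides' abscissae lie in one common finite set, namely the roots of a single binomial, so the sides share abscissae. This makes the view inadmissible, since $A_1\cap A_2=\emptyset$ is required (Section~\ref{neg:shared}).

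Finally, the bounded-design conclusion. Suppose each column set is a union of at most $c_A$ cosets and $\ell\neq0$. Admissibility forces $n_2/\gcd(\ell,n_2)\le c_A$ and $n_1/\gcd(\ell,n_1)\le c_A$. So $\gcd(\ell,n_1)\ge n_1/c_A$ and $\gcd(\ell,n_2)\ge n_2/c_A$, and by coprimality $|\ell|\ge n_1n_2/c_A^2=n/c_A^2>n_2$ once $n_1>c_A^2$. I then reduce $\ell$ modulo $n$ (and the exponents $\alpha,\beta$ modulo the group orders). With $\ell\not\equiv0$ I expect this to contradict $\alpha<n_1$, $\beta<n_2$; more carefully, $\ell\equiv0$ modulo both $n_1/\gcd$-parts reduces to the shared case.

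If instead a residual large-$\gcd$ case survives, I bound the minimal charge by a direct count. The leak $f_B$ lies in the span of the evaluations on the fibres of the few touched cosets, which costs at most $c_A$ fibres per side. Meanwhile $|V_B|=in_2+jn_1$ is at least $n_1$ times the number of rows. The comparison of these two quantities is the step I expect to be the main obstacle, because the statement's ``minimal charge smaller than $|V_B|$'' must come from exhibiting an explicit cheap charge.
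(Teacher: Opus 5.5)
Your coset count is correct and is the paper's argument. Two side-1 abscissae in one $\mu_{n_1}$-coset have equal $n_1$-th powers, hence equal $\alpha n_1$-th powers. This gives $\eta_1^{\ell n_1}=\eta_2^{\ell n_1}$, hence $\eta_1^\ell=\eta_2^\ell$ by coprimality, and side~2 is symmetric. Only this implication is used, so your worry about the $\alpha$-th power merging cosets is moot. In the case $\ell=0$ you need one more count: lying in a common finite set does not by itself make the sides meet. Each row contributes $\alpha$ distinct abscissae to the set $\{x:x^\alpha=d^\alpha c\}$, which has at most $\alpha$ elements. So each side's abscissa set is that whole set, and the two sides read the same abscissae.

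The genuine gap is the last branch, $\ell\ne0$ with both counts at most $c_A$. You correctly get $|\ell|\ge\gcd(\ell,n_1)\gcd(\ell,n_2)\ge n/c_A^2$, but then look for a contradiction that does not exist. Nothing bounds $\alpha<n_1$ or $\beta<n_2$: $\alpha$ and $\beta$ are the degrees of $B$ in $\omega$ and $\eta$, i.e.\ the number of reads on each lift-1 and lift-2 row. Nor can you reduce them modulo the group orders, since that changes the number and position of the reads, i.e.\ the view itself. Your fallback, a charge by ``fibres of the touched cosets'', is not defined: those cosets are cosets of the column sets, which lie off $D$. The missing idea is to use the bound on $|\ell|$ to show the view is large, and then charge it trivially. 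By Lemma~\ref{neg:lem:RK},
$|V_B|=\alpha n_2+\beta n_1\ge(\alpha+\beta)n_1\ge|\ell|\,n_1\ge nn_1/c_A^2>n$; this is where $n_1\le n_2$ and $n_1>c_A^2$ enter. Meanwhile any $n$ distinct points of $D$, for instance $d\mu_n$, form a valid charge for every view by Lemma~\ref{neg:lem:triv}(a). So the minimal charge is at most $n<|V_B|$, and no explicit cheap charge tied to $f_B$ is needed.
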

\begin{proof}
Two side-1 abscissae of rows $\eta_1,\eta_2$ in one $\mu_{n_1}$-coset force $\eta_1^\ell=\eta_2^\ell$. If both counts
are at most $c_A$ and $\ell\ne0$, then $|\ell|\ge n/c_A^2$, so $|V_B|\ge(\alpha+\beta)n_1\ge nn_1/c_A^2>n$, while $n$
points always suffice.
\end{proof}

\subsection{The same-radix configuration: global leaks exist, and the locator pays for them}\label{neg:SR}

We recall the setting of the locator lemma (conditions (N1)--(N4) of Definition~\ref{kl:def:node}). $n<n'$ are
coprime, $m=nn'<p$, $\mu_m\le\F^\times$, $D=\bigsqcup_{i<L_2}d_i\mu_m$, $u=m$; the rows are $B=\{x^n:x\in D\}$, with
fibres $F_b=\{x\in D:x^n=b\}$; the column set $A=A_y\sqcup A_r\subseteq\F^\times\setminus D$ is a union of $L_1$
cosets of $\mu_n$. Side~1 reads $y=g+r$ at $A_y\times B$ and at the domain, side~2 reads $r$ at $A_r\times B$ and at
the domain, both with radix $n$. Put $U=\F[X]_{<n}$, $U'=\F[Z]_{<n'}$, $v(a)=\ev_a\in U^*$, $w(b)=\ev_b\in U'^*$.

\begin{lemma}[{Tensor form; {\cite[\S6]{BS08}}}]\label{neg:lem:T}
The map $f=\sum_{t<n}X^tf_t(X^n)\mapsto\sum_tX^t\otimes f_t$ is an isomorphism $\F[X]_{<m}\to U\otimes U'$ under which
the lift read at $(a,b)$ is $v(a)\otimes w(b)$ and the domain read at $x$ is $v(x)\otimes w(x^n)$. For $b\in B$,
$U^*\otimes w(b)=\spn\{\ev_x:x\in F_b\}$ when $|F_b|\ge n$.
\end{lemma}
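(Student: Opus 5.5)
The statement to prove is Lemma~\ref{neg:lem:T}, which restates Lemma~\ref{kl:lem:tensor} together with Lemma~\ref{kl:lem:cheap} in the notation of this appendix. I will re-derive it directly rather than cite those lemmas, so that the appendix stays self-contained.

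\emph{Step 1: the map is an isomorphism.} Every $f\in\F[X]_{<m}$ has a unique decomposition $f=\sum_{t<n}X^tf_t(X^n)$ with $\deg f_t<n'$. To get it, group the monomials $X^j$, $j<m$, by $j\bmod n$: write $j=t+nk$ with $t<n$, and then $k<n'$ because $j<nn'$. So the map sends the monomial basis $\{X^{t+nk}\}$ bijectively onto the basis $\{X^t\otimes Z^k\}$ of $U\otimes U'$. It is linear, so it is an isomorphism. Both spaces have dimension $m$.

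\emph{Step 2: the read functionals.} For a lift read, $f\bmod(X^n-b)=\sum_tX^tf_t(b)$, because $X^n\equiv b$ and the right-hand side has degree $<n$. Evaluating at $a$ gives $\sum_ta^tf_t(b)$. This is the pairing of $\sum_tX^t\otimes f_t$ with $v(a)\otimes w(b)$. For a domain read, $f(x)=\sum_tx^tf_t(x^n)$, which is the same expression with $a=x$ and $b=x^n$.

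\emph{Step 3: the cheap-row identity.} First, $\ev_x=v(x)\otimes w(b)$ for each $x\in F_b$, by Step~2, so $\spn\{\ev_x:x\in F_b\}\subseteq U^*\otimes w(b)$. For the reverse inclusion, take $n$ distinct points of $F_b$. By Vandermonde their $v(x)$ are linearly independent, so they span $U^*$, which has dimension $n$. Tensoring with the fixed vector $w(b)$ then gives all of $U^*\otimes w(b)$.

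The fibres in this configuration have exactly $n$ points (Lemma~\ref{kl:lem:rows}), so the hypothesis $|F_b|\ge n$ holds and the statement covers the case that is used. I expect no real obstacle in this proof. The only care needed is the bookkeeping in Step~1: the degree bound $k<n'$ is exactly what keeps the image inside $U'$.
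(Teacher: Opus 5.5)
Your proof is correct and follows the paper's argument. The paper's one-line proof uses exactly the identities $f\bmod(X^n-b)=\sum_tX^tf_t(b)$ and $f(x)=\sum_tx^tf_t(x^n)$, together with the fact that $n$ distinct points of $F_b$ give a basis of $U^*$; the monomial-basis check of bijectivity that you spell out in Step~1 is left implicit there.
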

\begin{proof}
$f\bmod(X^n-b)=\sum_tX^tf_t(b)$ and $f(x)=\sum_tx^tf_t(x^n)$; $n$ distinct points of $F_b$ give a basis of $U^*$.
\end{proof}

\begin{proposition}[One read per row and per column; Cayley--Bacharach]\label{neg:prop:SRMOB}
Let $V$ consist of $N$ lift reads at points $(a_i,b_i)$ with distinct $a_i\in A$, distinct $b_i\in B$, at least one on
each side, and no domain reads; let $n,n'\ge2$.
(a) If $N\le n+n'-1$, the functionals $v(a_i)\otimes w(b_i)$ are linearly independent, so $\Lambda(V)=0$.
(b) If $N=n+n'$, then $\Lambda(V)\ne0$ iff there is $\psi\in\mathrm{PGL}_2(\F)$ with $b_i=\psi(a_i)$ for all $i$; then
$\dim\Lambda(V)=1$.
\end{proposition}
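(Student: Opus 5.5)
The plan is to work entirely in the tensor form of Lemma~\ref{neg:lem:T}: a read at $(a_i,b_i)$ is the functional $v(a_i)\otimes w(b_i)$ on $U\otimes U'$, and $U\otimes U'$ is the space of bivariate polynomials of bidegree $<(n,n')$. Linear dependencies among the $N$ point functionals correspond to vectors $(\lambda_i)$ with $\sum_i\lambda_iP(a_i,b_i)=0$ for all such $P$. Because the $a_i$ are distinct and the $b_i$ are distinct, the point set lies on no row and no column with more than one point. This lets me use the circuit classification of Theorem~\ref{kl:thm:CL} with $(a,b)=(n-1,n'-1)$. For that classification, row circuits need $n+1$ points on a row, column circuits need $n'+1$ points on a column, and hooks need $n$ points on a row; so all three are excluded.

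For (a), any dependent subset contains a circuit. Circuits have size at least $a+b+2=n+n'$ (Lemma~\ref{kl:lem:H}: a point with $\rho=\sigma=0$ forces $|\Gamma|\ge a+b+2$ in a CB set). So for $N\le n+n'-1$ the functionals are independent, and then $\spn\Phi\cap\spn\Psi=0$ since $\Phi$ and $\Psi$ are disjoint subsets of an independent family.

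For (b), with $N=n+n'=a+b+2$ the classification applies with $e=0\le E^*$. Any circuit inside $V$ has size exactly $n+n'$, so it is all of $V$. It must be $a+b+2$ points on a M\"obius graph $cXZ+dX+eZ+f=0$ with $cf\ne de$, i.e.\ $b_i=\psi(a_i)$ for $\psi\in\mathrm{PGL}_2$. The case $c=0$ corresponds to affine maps, which are included.

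Conversely, if all points lie on such a graph, Lemma~\ref{kl:lem:graph} with $l=1$ gives the rank $\min(N,a+b+1)=n+n'-1$. So there is exactly a one-dimensional space of dependencies, spanned by some $\lambda$. Every proper subset is independent by the size bound, so $\lambda_i\ne0$ for all $i$.

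The leak is then determined by Lemma~\ref{main:lem:DL}'s description: $\Lambda(V)=\spn\Phi\cap\spn\Psi$ consists of $\sum_{i\in y\text{-side}}\lambda_i\,v(a_i)\otimes w(b_i)$ for dependencies $\lambda$. Since both sides are nonempty and all $\lambda_i\ne0$, this functional is nonzero: its $y$-part is a nonzero combination of a proper, hence independent, subfamily. So $\dim\Lambda(V)=1$. If no M\"obius map fits, $V$ has no circuit, the family is independent, and $\Lambda(V)=0$.

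The main obstacle is checking that Theorem~\ref{kl:thm:CL} applies verbatim. Its degree conventions are $\mathcal P_{a,b}$ with degrees $\le(a,b)$, which matches bidegree $<(n,n')$. Its variable roles put rows on the $X$-coordinate and columns on the $Z$-coordinate, and I must swap these if needed so that "row" means fixed $b$. I also need Lemma~\ref{kl:lem:graph} to hold when the graph is written as $A_1(z)x+B_1(z)=0$ with $\deg\le1$ and $A_1\ne0$. If $A_1$ vanishes identically the graph is a column, which is excluded since distinct $a_i$ cannot all share one column when $N\ge2$.
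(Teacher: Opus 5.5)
Your proof is correct, but it takes a different route from the paper. You obtain the proposition as the $e=0$ case of the small-circuit classification (Theorem~\ref{kl:thm:CL}) with $(a,b)=(n-1,n'-1)$:
\begin{itemize}
\item part~(a) comes from the hook-separation Lemma~\ref{kl:lem:H}, since every point of a circuit inside $V$ has $\rho=\sigma=0$;
\item the ``only if'' of~(b) comes from the classification, because row, column and hook circuits need two points on one line;
\item the ``if'' of~(b), with the exact rank $n+n'-1$, comes from Lemma~\ref{kl:lem:graph} with $l=1$.
\end{itemize}

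The paper instead argues directly and self-containedly:
\begin{itemize}
\item For~(a) it tests a putative relation on a product $P\otimes Q$ of linear factors through all points but one. This is the argument that proves Lemma~\ref{kl:lem:H}.
\item For the ``if'' direction it notes that $(\gamma X+\delta)^{n'-1}P(X,\psi(X))$ has degree at most $n+n'-2$. So the rank is at most $n+n'-1$, and (a) gives full support and uniqueness up to scale.
\item For the ``only if'' direction it tests the full-support relation on two products that differ by exchanging one $X$-factor and one $Z$-factor, then divides. This yields $\mathrm{CR}(a_{i_0},a_{i_1};a_j,a_l)=\mathrm{CR}(b_{i_0},b_{i_1};b_j,b_l)$, so the M{\"o}bius map fixed by three of the points carries all the others.
\end{itemize}

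Your route is shorter, given the machinery, and shows where the proposition sits in the Cayley--Bacharach picture; the paper itself points to Theorem~\ref{kl:thm:CL} as related. But it rests on the residuation machinery of Section~\ref{kl:sec:practical} (Lemmas~\ref{kl:lem:RES} and~\ref{kl:lem:curve}). The paper's proof keeps the appendix independent of that section, which the main theorem does not use, and it constructs $\psi$ explicitly.

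One small slip: if $A_1\equiv0$, the locus $B_1(z)=0$ is a union of rows, not a column. The case never arises anyway. Writing $\psi$ as $(\gamma z-\alpha)x+(\delta z-\beta)=0$ gives $A_1\ne0$ and $\gcd(A_1,B_1)=1$ automatically. Your orientation worry is also moot: the hypotheses and the M{\"o}bius condition are symmetric under swapping coordinates, via $\psi\mapsto\psi^{-1}$.
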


That $n+n'$ points with distinct coordinates fail to impose independent conditions on forms of bidegree $(n-1,n'-1)$ iff
they lie on a curve of bidegree $(1,1)$ is a special case of classical Cayley--Bacharach residuation (\cite{EGH96} treats it in $\mathbb P^n$; for finite
sets of points in $\mathbb P^1\times\mathbb P^1$ see \cite{GMRag92,GVT15}); we include a short proof for completeness and claim
no novelty for it (cf.\ also the M{\"o}bius circuits of Theorem~\ref{kl:thm:CL}). A related classical fact is the
action of $\mathrm{PGL}_2$ on generalised Reed--Solomon codes (cf.\ \cite{Dur87}).

\begin{proof}
(a) Suppose $\sum_{i\in T}c_iv(a_i)\otimes w(b_i)=0$ with all $c_i\ne0$ and $|T|\le n+n'-1$. Fix $i_0\in T$ and split
$T\setminus\{i_0\}=S_P\sqcup S_Q$ with $|S_P|\le n-1$ and $|S_Q|\le n'-1$. Evaluating on $P\otimes Q$ with
$P=\prod_{S_P}(X-a_j)$ and $Q=\prod_{S_Q}(Z-b_j)$ leaves $c_{i_0}P(a_{i_0})Q(b_{i_0})\ne0$. A nonzero leak would give
such a relation.
(b) If $\psi(x)=(\alpha x+\beta)/(\gamma x+\delta)$ with $b_i=\psi(a_i)$, then $\gamma a_i+\delta\ne0$, and for every
$P\in U\otimes U'$ the function $(\gamma X+\delta)^{n'-1}P(X,\psi(X))$ is a polynomial of degree at most $n+n'-2$. So the
$N$ evaluations have rank at most $n+n'-1$, and a relation exists; by (a) it has full support and is unique up to
scale. Its side-1 part is nonzero (side~1 is nonempty) and spans $\Lambda(V)$, so $\dim\Lambda(V)=1$. Conversely, let a
relation $(c_i)$ have full support (it must, by (a), if $\Lambda(V)\ne0$). For distinct indices $i_0,i_1,j,l$ choose
$S_P\ni j$ and $S_Q\ni l$ partitioning $[N]\setminus\{i_0,i_1\}$ with $|S_P|=n-1$, $|S_Q|=n'-1$, and let $S'_P$, $S'_Q$
be the same sets with $j$ and $l$ exchanged. Testing the relation on the two corresponding products leaves two
terms each:
\begin{align*}
c_{i_0}(a_{i_0}-a_j)(b_{i_0}-b_l)K_{i_0}&=-c_{i_1}(a_{i_1}-a_j)(b_{i_1}-b_l)K_{i_1},\\
c_{i_0}(a_{i_0}-a_l)(b_{i_0}-b_j)K_{i_0}&=-c_{i_1}(a_{i_1}-a_l)(b_{i_1}-b_j)K_{i_1},
\end{align*}
with the same nonzero $K_{i_0},K_{i_1}$ (products over $S_P\setminus\{j\}$ and $S_Q\setminus\{l\}$). Dividing gives
\[\mathrm{CR}(a_{i_0},a_{i_1};a_j,a_l)=\mathrm{CR}(b_{i_0},b_{i_1};b_j,b_l),\]
where
$\mathrm{CR}(x_0,x_1;x_j,x_l)=\frac{(x_0-x_j)(x_1-x_l)}{(x_0-x_l)(x_1-x_j)}$. So the M{\"o}bius map determined by
$a_{i_0}\mapsto b_{i_0}$, $a_j\mapsto b_j$, $a_l\mapsto b_l$ sends every $a_{i_1}$ to $b_{i_1}$.
\end{proof}

\begin{proposition}[Global leaks from saturated abscissae]\label{neg:prop:SATSR}
Let $|A_y|,|A_r|\ge n$ and $|B|\ge n'+2$; fix a row $b_1\in B$ and nonempty $T_y\subseteq A_y$, $T_r\subseteq A_r$.
\begin{enumerate}
\item[(a)] \emph{One saturated abscissa.} Let $|T_y|+|T_r|=n$, read every $a\in T_y$ (side~1) and every $a\in T_r$
(side~2) on $b_1$, and read one more abscissa $a_0\in A_y\setminus T_y$ on $n'$ other rows; no domain reads, so
$|V|=n+n'$. Then $\dim\Lambda(V)=1$ and $\Lambda(V)\subseteq U^*\otimes w(b_1)=\spn\ev(F_{b_1})$, while every row
carries a set of side-1 abscissae and a set of side-2 abscissae whose spans meet in $0$ (no row has a ``local'' leak).
\item[(b)] \emph{Near-saturation.} Read the same $n$ abscissae on two rows $b_1,b_2$ and $a_0$ on $n'-1$ other rows
($|V|=2n+n'-1$). Then $0\ne\Lambda(V)\subseteq\spn\ev(F_{b_1}\cup F_{b_2})$, and again no row has a local leak.
\item[(c)] \emph{Light saturation.} Let $D$ have eight cosets, $\theta\in[\tfrac14,1]$, $n\ge9$ and $r=\lceil\theta n\rceil-1$; read $r$
abscissae on $b_1$ (split between the sides, both sides nonempty after adding the next item) and $s=n+1-r$ further
side-1 abscissae, each on $n'$ rows other than $b_1$, spread round-robin over the other $7$ row cosets. Then
$\Lambda(V)\ne0$, no row carries $\theta n$ reads, and $(1-\theta)u<|V|\le(1-\theta)u+3n'$.
\end{enumerate}
\end{proposition}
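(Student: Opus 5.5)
The plan is to work throughout in the tensor form of Lemma~\ref{neg:lem:T} and to compute the leak through Lemma~\ref{kl:lem:dual}. A dependency is a matrix $M=M_1+M_2$ with columns of $M_1$ in $C_1=\RS[S,n]^\perp$ and rows of $M_2$ in $C_2=\RS[B,n']^\perp$; the leak is the $y$-side part, as in Lemma~\ref{kl:lem:leak}.

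For (a), row $b_1$ carries exactly $n$ abscissae, so the functionals $v(a)$ for $a\in T_y\cup T_r$ form a basis of $U^*$. Column $a_0$ carries $n'$ reads on rows other than $b_1$, which gives a basis of $U'^*$ restricted to column $a_0$.
\begin{itemize}
\item First I will show $\Lambda(V)\ne0$. Column $a_0$ determines $v(a_0)\otimes w(b_1)$ as a combination of its $n'$ reads, since $w(b_1)$ lies in the span of the $n'$ other $w(b)$. Row $b_1$ lets me expand $v(a_0)=\sum_{T_y}\alpha_a v(a)+\sum_{T_r}\beta_a v(a)$.
\item For generic $a_0$ the $r$-part is nonzero. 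In fact it is nonzero for every $a_0\notin T_y$: if the $r$-part vanished, $n+1$ distinct abscissae would give a degree-$<n$ dependency supported on at most $n$ of them, which is impossible.
\item Moving the $y$-side terms to one side yields $\sum_{T_r}\beta_a v(a)\otimes w(b_1)$ both in $\spn\Phi$ and in $\spn\Psi$. This is nonzero and lies in $U^*\otimes w(b_1)=\spn\ev(F_{b_1})$.
\item For dimension one, the point set is one full row plus a column of $n'$ further points, so $|V|=n+n'$. The rank of the $n+n'$ functionals is $n+n'-1$: the only relation is the one above, because removing any point leaves a set independent by the hook argument of Lemma~\ref{kl:lem:H}. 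The relation space is therefore one-dimensional, so $\dim\Lambda\le1$.
\item No row has a local leak. On $b_1$ the side-1 and side-2 abscissae are disjoint subsets of a basis, so their spans meet in $0$. Every other row carries a single read.
\end{itemize}

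For (b), the same computation applies on each of the rows $b_1,b_2$. Column $a_0$ now has only $n'-1$ rows, but together with $b_1,b_2$ this gives $n'+1$ rows. The single relation among the $n'+1$ functionals $w(b)$ on $U'$, combined with the row expansions of $v(a_0)$ on $b_1$ and on $b_2$, produces a nonzero element supported in $U^*\otimes\spn\{w(b_1),w(b_2)\}$. This element is nonzero because the relation among $w$'s has full support; the hypothesis $|B|\ge n'+2$ guarantees that enough rows exist. The absence of local leaks follows as in (a).

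For (c), I would build the leak as a dependency $M$ in which $M_1$ is nonzero only in column $b_1$ and $M_2$ cancels $M_1$ off row $b_1$.
\begin{itemize}
\item $M_1(\cdot,b_1)$ is chosen as a codeword of $C_1$ supported on the $r$ abscissae of row $b_1$ together with the $s=n+1-r$ extra abscissae. That is $n+1$ points, so exactly a minimum-weight dual RS word exists, with full support.
\item For each extra abscissa $a$, the entry $M_1(a,b_1)$ must be cancelled by a row $M_2(a,\cdot)\in C_2$ supported on $b_1$ and the $n'$ rows read at $a$. Such a row exists since this is $n'+1$ points.
\item The $y$-part of $M$ is then nonzero, provided both sides appear on $b_1$.
\end{itemize}
The counting is direct: $|V|=r+sn'=r+(n+1-r)n'$, and $u=nn'$. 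Every row other than $b_1$ receives at most $\lceil sn'/(8n'-n')\rceil$ reads by round-robin over seven cosets, which is below $\theta n$ since $s\le(1-\theta)n+2$ and $n\ge9$. The row $b_1$ has $r<\theta n$ reads.

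The main obstacle I expect is the exact dimension statement in (a): showing that the relation space is one-dimensional needs a clean independence argument for the row-plus-column configuration.
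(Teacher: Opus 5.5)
Your parts (a) and (c) are correct, but (b) has a gap: you never prove the inclusion $\Lambda(V)\subseteq\spn\ev(F_{b_1}\cup F_{b_2})$. You exhibit one nonzero element of $\Lambda(V)$ lying in $U^*\otimes\spn\{w(b_1),w(b_2)\}$. That gives $\Lambda(V)\ne0$ and nothing more.

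In (a) you got the inclusion for free from $\dim\Lambda(V)\le1$. That came from Lemma~\ref{kl:lem:H}: every point has at most $n-1$ other view points on its row and at most $n'-1$ on its column, so every circuit has at least $n+n'$ points and must be the whole view. This does not carry over to (b). There the view has $2n+n'-1>n+n'$ points, so the hook bound alone does not exclude a second, independent dependency, with support of size between $n+n'$ and $2n+n'-2$, whose $y$-part might leave $U^*\otimes\spn\{w(b_1),w(b_2)\}$. ``The same computation'' therefore does not supply the confinement. Let $E_{a_0}:=\spn\{w(b):(a_0,b)\text{ read}\}$. You can close the gap in either of two ways.
\begin{enumerate}
\item Follow the paper. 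Write a common element on side~1 as $\sum_{a\in T_y}v(a)\otimes\Gamma_a+v(a_0)\otimes\Gamma_{a_0}$, with $\Gamma_a\in\spn\{w(b_1),w(b_2)\}$ and $\Gamma_{a_0}\in E_{a_0}$. On side~2 all second factors lie in $\spn\{w(b_1),w(b_2)\}$. Apply $\mathrm{id}\otimes\vartheta$ for every $\vartheta\in U'$ with $\vartheta(b_1)=\vartheta(b_2)=0$. This kills everything except $v(a_0)\,\Gamma_{a_0}(\vartheta)$, so $\Gamma_{a_0}\in\spn\{w(b_1),w(b_2)\}$ and the element is confined.
\item Count ranks. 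The $2n$ reads on $b_1,b_2$ span $U^*\otimes\spn\{w(b_1),w(b_2)\}$, of dimension $2n$. The $n'-1$ reads of column $a_0$ span $v(a_0)\otimes E_{a_0}$, which meets that space in $v(a_0)\otimes(E_{a_0}\cap\spn\{w(b_1),w(b_2)\})$. This is a line, because the $n'+1$ distinct $w(b)$ involved span $U'^*$. So the rank is $2n+n'-2$ and the dependency space is one-dimensional. Hence $\dim\Lambda(V)=1$, and your element spans $\Lambda(V)$.
\end{enumerate}

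For comparison, in (a) the paper proves confinement by exactly this projection, with $\vartheta(b_1)=0$. It then reads off $\dim\Lambda(V)=1$ from the single relation among the $n+1$ vectors $v(a)$. Your route instead bounds the whole dependency space first; both are valid. Your (c) is the paper's argument (the saturated abscissae carry $v(a)\otimes w(b_1)$ virtually), written as a dependency matrix via Lemma~\ref{kl:lem:dual}. Two small points there:
\begin{itemize}
\item Only side~2 needs to occur on $b_1$ for the $y$-part to be nonzero, and that is exactly what the hypothesis guarantees.
\item The size bounds you left as ``direct'' follow from $|V|=sn'+r$, with $(1-\theta)n+1<s\le(1-\theta)n+2$ and $r<n<n'$.
\end{itemize}
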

\begin{proof}
By Lemma~\ref{neg:lem:T} the side spans are $\{\sum_av(a)\otimes\Gamma_a:\Gamma_a\in E_a\}$ with
$E_a=\spn\{w(b):(a,b)\text{ read}\}$; any $n$ distinct $v(a)$ are independent and any $n'$ distinct $w(b)$ form a basis.
(a) The $n+1$ vectors $v(a)$, $a\in T_y\cup T_r\cup\{a_0\}$, satisfy one relation, which has full support; write it as
$\varphi:=\sum_{T_y\cup\{a_0\}}x_av(a)=\sum_{T_r}x_av(a)\ne0$. Since $E_{a_0}=U'^*\ni w(b_1)$, $\varphi\otimes w(b_1)$
lies in both spans. Confinement: applying $\mathrm{id}\otimes\vartheta$ for every $\vartheta\in U'$ with
$\vartheta(w(b_1))=0$ to a common element forces $\Gamma_{a_0}\in\spn w(b_1)$, so $\Lambda(V)\subseteq U^*\otimes w(b_1)$,
and the one-dimensional relation space gives $\dim\Lambda(V)=1$. Row $b_1$ carries $n$ distinct, hence independent,
abscissae split between the sides, so their spans meet in $0$; every other row carries one read.
(b) $E_{a_0}$ is spanned by $n'-1$ of the $n'+1$ vectors $w(b)$ of its rows and $b_1,b_2$, so it meets
$\spn\{w(b_1),w(b_2)\}$ in a line $\gamma=\alpha w(b_1)+\beta w(b_2)$ with $\alpha\beta\ne0$; then $\varphi\otimes\gamma$
is in both spans. Confinement and the local statement follow as in (a).
(c) The leak follows as in (a) (the $s$ saturated abscissae carry $v(a)\otimes w(b_1)$ virtually). Loads:
$\lceil s/7\rceil\le((1-\theta)n+8)/7<\theta n$ because $(8\theta-1)n>8$, and $r<\theta n$. Size: $|V|=sn'+r$ with
$(1-\theta)n+1\le s\le(1-\theta)n+2$.
\end{proof}

So the stronger statement ``every view of size at most $cu$ of a node has only local leaks'' is false at every node,
with no design law involved. It is \emph{not} what the locator lemma asserts: the locator lemma asks for a locator,
and these leaks are located.

\begin{proposition}[Consistency with the locator lemma]\label{neg:prop:consistent}
Let $\tau=\lfloor c_*n\rfloor$ and let
\[W(V)=D_V\cup\textstyle\bigcup_{b\in H_\tau(V)}F_b\cup\big(d_0\mu_m\text{ if }|V|\ge(\tau+1)^2\big)\]
be the locator of Theorem~\ref{kl:thm:KL}, where $H_\tau(V)$ is the set of rows carrying at least $n+1-\tau$ view points, $D_V$ the set of
domain points read, and $c_*=10^{-13}$. Then $W(V)$ is a valid charge for the views of
Proposition~\ref{neg:prop:SATSR}(a), (b) for every $n$, and of Proposition~\ref{neg:prop:SATSR}(c) whenever
$\theta\le\tfrac34$. Every other family of this appendix violates a hypothesis of the locator lemma: a single lift
(Section~\ref{neg:single}); a shared column, $A\cap D\ne\emptyset$ or $A_y\cap A_r\ne\emptyset$
(Section~\ref{neg:shared}); $u\ne m$ (Section~\ref{neg:TB}); transversal radices or non-coset column sets
(Section~\ref{neg:coset}). Free column sets (Examples~\ref{neg:prop:free} and~\ref{neg:prop:deg} below) violate a
hypothesis only asymptotically, through an unbounded number $L_1$ of column cosets: a view whose two sides read
nonzero abscissae off $D$, with no $\mu_n$-coset met by both sides, is a view of the node whose column set is the union
of the $\mu_n$-cosets of these abscissae. So each computed instance of this kind is a view of a node (the $12$-read
instances, $(5,7)$ in Example~\ref{neg:prop:free} and $k=2$ at $(4,5)$ in Example~\ref{neg:prop:deg}, span at most $12$
column cosets) and is located because $\tau=0$ at its size; an instance whose two sides meet a common coset, or that reads an abscissa in $D$, is a
shared-column view (Section~\ref{neg:shared}), which violates a hypothesis. The
scale-model views of Example~\ref{neg:prop:scale} below satisfy the hypotheses and are located because $\tau=0$ at
their sizes.
\end{proposition}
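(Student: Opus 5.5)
The plan is to treat the statement as a case check, one family of the appendix at a time.

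\emph{Proposition~\ref{neg:prop:SATSR}(a) and (b).} We do not need any incidence argument: Theorem~\ref{kl:thm:KL} already says that $W(V)$ is a locator for every view of a node satisfying (N1)--(N4). So the first step is to check that these views are views of a node. They read only lift symbols at abscissae in $A_y\sqcup A_r$, which is disjoint from $D$, with the two sides disjoint, so they are views in the sense of Definition~\ref{kl:def:node}. Conclusion (b) of Theorem~\ref{kl:thm:KL} then applies for every $n$.

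As a direct cross-check, independent of the theorem, we would compare the leaks with the heavy rows. The leak in (a) lies in $\spn\ev(F_{b_1})$, and row $b_1$ carries $n\ge n+1-\tau$ points as soon as $\tau\ge1$. When $\tau=0$, either $|V|\ge1=(\tau+1)^2$ holds and the whole coset $d_0\mu_m$ is charged, or $V=\emptyset$. Case (b) is the same argument with the two rows $b_1,b_2$, each carrying $n$ points.

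\emph{Proposition~\ref{neg:prop:SATSR}(c).} Again the view lives in a node, so Theorem~\ref{kl:thm:KL} locates it. The restriction $\theta\le\frac34$ only matters if we want to see the mechanism explicitly. In that case $|V|>(1-\theta)u\ge u/4\ge(\tau+1)^2$, because $(\tau+1)^2\le(c_*n+1)^2\ll m/4$. So the coset branch fires, and $d_0\mu_m$, whose evaluations form a basis of $C^*$, locates the view trivially. The one step that needs a little care is the small-$n$ regime, where $\tau=0$. There the condition $(\tau+1)^2\le|V|$ reduces to $|V|\ge1$, which holds.

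\emph{The remaining families.} For each of them we identify which hypothesis fails:
\begin{itemize}
\item a single lift (Section~\ref{neg:single}) violates the double-lift structure of Definition~\ref{kl:def:node};
\item a shared column, or columns meeting $D$ (Section~\ref{neg:shared}), violates (N4);
\item degrees $u\ne m$ (Section~\ref{neg:TB}) violate (N2), since the code there is $\F[X]_{<m}$;
\item transversal radices or non-coset column sets (Section~\ref{neg:coset}) violate the same-radix requirement or (N4).
\end{itemize}

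For free column sets we argue as follows. Take the union of the $\mu_n$-cosets of the abscissae read. If the two sides meet no common coset and avoid $D$, this union is a legitimate $A=A_y\sqcup A_r$ with some finite $L_1$. We cannot simply invoke Theorem~\ref{kl:thm:KL} here, because admissibility depends on $L_1$. Instead we note that for the computed instances $n$ is tiny, so $\tau=\lfloor c_*n\rfloor=0$. Any nonempty view then triggers the coset branch, and $d_0\mu_m$ locates it with no admissibility needed.

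If instead the two sides share a coset, or an abscissa lies in $D$, the view is a shared-column view and so violates (N4). The scale-model views are handled in the same way, by $\tau=0$ at their sizes.

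The main obstacle is the bookkeeping for free column sets: we must confirm that taking cosets of the read abscissae produces pairwise distinct, disjoint cosets. We would settle this directly from the condition that no $\mu_n$-coset is met by both sides, together with (N1) and Lemma~\ref{kl:lem:rows}.
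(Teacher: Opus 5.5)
Your direct cross-check for Proposition~\ref{neg:prop:SATSR}(a),(b), your coset-branch computation for (c) under $\theta\le\frac34$, and your list of violated hypotheses (including the $\tau=0$ argument for free column sets) are exactly the paper's proof. The difference is that you lead with Theorem~\ref{kl:thm:KL} and treat the direct argument as an optional cross-check, and that shortcut has a scope problem.

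Theorem~\ref{kl:thm:KL} with $c=c_*$ requires $c_*$ to be admissible (Definition~\ref{kl:def:adm}) for the node's parameters $(L_1,L_2,K)$. The setting of Proposition~\ref{neg:prop:SATSR}, however, assumes only (N1)--(N4) with arbitrary $L_1,L_2$. For $c_*=10^{-13}$, admissibility fails once $L_1L_2$ is large; for instance it fails for $(L_1,L_2,K)=(100,16,2)$. You flagged this dependence on $L_1$ for free column sets, but it applies equally to (a)--(c). So the theorem route covers only admissible nodes, such as the $(14,8,2)$ ones, and your remark that $\theta\le\frac34$ ``only matters'' for exhibiting the mechanism holds only there.

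The paper's argument uses nothing beyond the definition of $W$ and the confinement proved in Proposition~\ref{neg:prop:SATSR}: $\Lambda(V)\subseteq\spn\{\ev_x:x\in F_{b_1}\}$, respectively $\spn\{\ev_x:x\in F_{b_1}\cup F_{b_2}\}$. Hence it holds at every node and every $n$. It is also what gives the proposition content as a consistency check, since it verifies these families against $W$ without presupposing the lemma they are being compared with. Make your cross-check the proof, and drop the appeal to the theorem.
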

\begin{proof}
(a), (b): if $\tau\ge1$, rows $b_1$ (and $b_2$) carry $n\ge n+1-\tau$ points, so $F_{b_1}$ ($\cup F_{b_2}$) $\subseteq
W(V)$, which spans $\Lambda(V)$. If $\tau=0$ then $(\tau+1)^2=1\le|V|$ and $d_0\mu_m\subseteq W(V)$ spans $\RS(u)^*$.
(c): $|V|>(1-\theta)nn'\ge n^2/4$. Since $c_*<1/10$, $(\tau+1)^2\le(n/10+1)^2\le n^2/4$ for $n\ge3$; so the coset branch
applies. The list of violated hypotheses is read off the statements. Every nonempty view with $\tau=0$ is located, since
then $(\tau+1)^2=1\le|V|$ and $d_0\mu_m\subseteq W(V)$; the empty view has $\Lambda(V)=0$.
\end{proof}

At every computationally reachable size $\tau=0$, so the locator of Theorem~\ref{kl:thm:KL} contains a whole
$D$-coset for every nonempty view and no exact computation can test it there. The computations below therefore test
the families, not the locator lemma.

\noindent\emph{Examples.} On nodes $(5,6)$, $(7,8)$, $(8,9)$, $(9,10)$, $(11,12)$ of the same-radix configuration
(eight $D$-cosets, $7+7$ column cosets, the least prime $p\equiv1\pmod m$ with $(p-1)/m\ge22$): in (a) and (b)
$\dim\Lambda(V)=1$, $\Lambda(V)\subseteq\spn\ev(F_{b_1})$ (resp.\ $F_{b_1}\cup F_{b_2}$), the row part of $W$ with
$\tau=0$ is empty and does not locate, and with $\tau=1$ it locates; in (c) at $\theta=\tfrac12$, $\dim\Lambda(V)=1$,
the maximal row load is below $n/2$, and $|V|/u$ ranges from $0.87$ down to $0.67$.

\begin{proposition}[Product expansion forbids M{\"o}bius circuits]\label{neg:prop:PENEC}
Let $S_1,S_2\subseteq\F$ be finite, $C_1=\RS[S_1,n]^\perp$ and $C_2=\RS[S_2,n']^\perp$. If $S_1\times S_2$ contains $n+n'$
points $(a_i,\psi(a_i))$ of the graph of some $\psi\in\mathrm{PGL}_2(\F)$, with distinct coordinates, then the pair
$(C_1,C_2)$ is not $\rho$-product-expanding in the sense of \cite[Def.~1.3]{BV26} for any $\rho>(n+n')/(n|S_2|)$.
\end{proposition}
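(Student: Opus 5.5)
The plan is to exhibit a single low-weight word in the sum $C_1\otimes\F^{S_2}+\F^{S_1}\otimes C_2$ that is far, in the normalised product-expansion measure of \cite[Def.~1.3]{BV26}, from every decomposition. The natural candidate comes from Proposition~\ref{neg:prop:SRMOB}(b). By that proposition the $n+n'$ points $(a_i,\psi(a_i))$ carry a dependency $M$ among the functionals $v(a_i)\otimes w(\psi(a_i))$ on $\F[X]_{<n}\otimes\F[Z]_{<n'}$. It is unique up to scale and has full support on those $n+n'$ points. By Lemma~\ref{kl:lem:dual}, applied with $S=S_1$ and $B=S_2$ (its proof uses only that the points lie in $S_1\times S_2$), every such dependency lies in $C_1\otimes\F^{S_2}+\F^{S_1}\otimes C_2$. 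So $M$ is a codeword of the sum code of weight exactly $n+n'$.

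Next I will lower-bound the cost of every decomposition $M=M_1+M_2$, with the columns of $M_1$ in $C_1$ and the rows of $M_2$ in $C_2$. Product expansion asks for $|M|\ge\rho\,(n'_{\rm rows}(M_2)\,|S_2|+n_{\rm cols}(M_1)\,|S_1|)$ in some normalisation; I will follow \cite[Def.~1.3]{BV26} exactly. Because the points have distinct coordinates, $M$ has at most one nonzero entry per row and per column. Take a row $a$ with $M(a,\cdot)\ne0$. If $M_2(a,\cdot)=0$, then $M_1$ agrees with $M$ on that row. Any nonzero column of $M_1$ is a dual RS word of weight at least $n+1$, and these columns must cancel with $M_2$ off the support of $M$. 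I expect the clean route is a counting argument: $M_1\ne0$ or $M_2\ne0$, and a single nonzero $C_1$-column or $C_2$-row already costs a full column or row in the product-expansion norm. Since $M\ne0$, at least one nonzero row of $M_2$ or column of $M_1$ exists. The denominator is then at least $\min(|S_2|,|S_1|)$-weighted, giving $\rho\le(n+n')/(n|S_2|)$ after normalising by the definition's weights.

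The main obstacle is matching the exact normalisation of \cite[Def.~1.3]{BV26}, which I cannot see here, to obtain precisely the threshold $(n+n')/(n|S_2|)$. The factor $n|S_2|$ suggests the definition compares $|M|$ with $n$ times the number of nonzero columns of $M_1$ and nonzero rows of $M_2$, or with relative weights. The sharper step I will then need is that any decomposition has at least $n$ nonzero rows of $M_2$ or a nonzero column of $M_1$ whose cost is $|S_2|$. I would try first to argue as in Theorem~\ref{kl:thm:rows}. If only few rows of $M_2$ are nonzero, a nonzero column $M_1(\cdot,b)$ must agree with $M$ off those rows, and $M$ has at most one point per column. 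The weight bound $n+1$ for columns of $C_1$ then forces at least $n$ nonzero rows of $M_2$. Each such row, being a nonzero word of $C_2$, contributes its full row length $|S_2|$ in the norm, so the total is at least $n|S_2|$. Comparing with $|M|=n+n'$ gives the stated bound on $\rho$.
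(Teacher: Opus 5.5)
Your final paragraph is the paper's proof. A nonzero column of $M_1$ is a $C_1$-word of weight at least $n+1$ on a line where $M$ has at most one point, so $M_2$ has at least $n$ nonzero rows, each costing $|S_2|$, and hence $n+n'=|\mathrm{supp}\,M|\ge\rho\,n|S_2|$. The only step you leave implicit is $M_1\ne0$. The paper gets it from the rows of $M$ having weight $\le1<n'+1$; equivalently, if $M_1=0$ then $M_2=M$ already has $n+n'$ nonzero rows. You need this step because your earlier remark that some line of $M_1$ or $M_2$ is nonzero only gives the weaker bound with $\min(|S_1|,|S_2|)$.
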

\begin{proof}
By Proposition~\ref{neg:prop:SRMOB}(b) and its proof the points carry a relation $M$ of weight $n+n'$ with one point per
line. Take any decomposition $M=M_1+M_2$ with the columns of $M_1$ in $C_1$ and the rows of $M_2$ in $C_2$. If
$M_1\ne0$, a nonzero column of $M_1$ is a $C_1$-word of weight $\ge n+1$ on a line where $M$ has at most one point, so
$M_2$ is nonzero on at least $n$ rows, and $|\mathrm{supp}\,M|=n+n'\ge\rho|S_2|\,n$. If $M_1=0$, then $M=M_2$ has nonzero
rows of weight $\le1<n'+1$, which is impossible.
\end{proof}

The notion of product expansion is that of Panteleev and Kalachev \cite{PK22} (\cite[Def.~1.3]{BV26} is the
two-dimensional form); \cite{PS94} proved it for pairs of Reed--Solomon codes whose rates sum to less than one, and
its use for privacy is \cite[Lemma~4.9]{GG25} (see \cite[\S\S1, 2.1]{BV26}). The proposition is stated on any grid on which
a M{\"o}bius graph has $n+n'$ points. So ordinary product expansion with a constant $\rho$ contains the incidence law
``no $n+n'$ points of a M{\"o}bius graph''. On coset designs this law holds above a threshold depending only on $L_1L_2$ by
\cite[Thm.~2]{CZ13} (Theorem~\ref{kl:thm:mob} and the discussion after it, with threshold $m_0\approx2.3\cdot10^{16}$ for $(L_1,L_2)=(14,8)$),
and also as a corollary of the locator lemma for $n>3/c^2$ with $c$ admissible (e.g.\ $n>3/c_*^2$ for
$(L_1,L_2,K)=(14,8,2)$); below the threshold it can fail (the scale models
below), which the locator lemma absorbs through its coset branch.

\begin{example}[Free column sets: instance certificates]\label{neg:prop:free}
If the column set is not required to be a union of cosets, a M{\"o}bius graph through $n+n'$ grid points with both sides
present gives, by Proposition~\ref{neg:prop:SRMOB}(b), a one-dimensional leak at $n+n'$ reads. In our exact
computation (two random M{\"o}bius maps each, eight $D$-cosets, $u=m$) the certified lower bounds of
Lemma~\ref{neg:lem:LC} are $18$, $32$ and $45$ at $(n,n')=(5,7)$, $(7,9)$, $(8,11)$, against $|V|=12$, $16$, $19$ (ratios
$1.50$, $2.00$, $2.37$, about $u/(2(n+n'))$); moving one point off the graph kills the leak in every case.
\end{example}

These are per-instance certificates; no general lower bound is claimed. They indicate that the coset structure of the
column set (condition (N4)) is what the incidence input of the locator lemma uses.

\begin{example}[Degree-$k$ graphs on free designs]\label{neg:prop:deg}
On free column sets, the points $(\sigma(b),b)$ of the graph of a polynomial map $\sigma$ of degree $k$, one read per
row and per abscissa, both sides present, give in our exact computations ($k=2,3$ at $(4,5)$, $(5,6)$, $(5,7)$) a
one-dimensional leak at $k(n-1)+n'+1$ reads and none at one read fewer; every row carries a single read, so every
such leak is global.
\end{example}

These computations cover only $k\le3$ at three small nodes; we do not prove that such graphs leak for general $k$.
If they do, any ``no global leak'' form of a locator lemma for the same-radix configuration would have to exclude such
curves up to degree about $cn'$, i.e.\ impose design laws of \emph{linear} degree; this extrapolation is a conjecture,
not a result. The locator form needs none: the locator
lemma uses no design law (Theorem~\ref{kl:thm:KL}), and its incidence input (\cite[Thm.~2]{CZ13} on each pair of
cosets) is a property of every coset design with $m<p$.

\begin{example}[Small-field threshold leaks are scale-model phenomena]\label{neg:prop:scale}
On same-radix coset designs at orders and fields that the recursion of the main theorem never produces, M{\"o}bius
maps can have $n+n'$ incidences with both sides present, and the resulting $(n+n')$-read views leak. For five
instances, an exact computation gives
\[\begin{array}{lcccc}
\text{node, field} & |V| & \dim\Lambda & \LC & \text{certified }|W|\ (\text{ratio})\\
(5,7),\ \F_{3011} & 12 & 1 & 18 & 18\ (1.50)\\
(7,8),\ \F_{2017} & 15 & 1 & 28 & 29\ (1.93)\\
(9,10),\ \F_{2161} & 19 & 1 & 45 & 46\ (2.42)\\
(9,10),\ \F_{2791},\ L_2=16 & 19 & 1 & 45 & 46\ (2.42)\\
(11,12),\ \F_{4093},\ L_2=16 & 23 & 1 & 66 & 67\ (2.91)
\end{array}\]
(all points are symbols of the stated designs). These views are minimal, by
Proposition~\ref{neg:prop:SRMOB}(a).
\end{example}

\noindent\emph{Scope.} (i) These are scale models, not nodes of our construction: every split node of the
construction has all prime factors above $2455$ (Remark~\ref{main:rem:2455}), and root fields satisfy $32n\mid q-1$; at the actual root fields of
$(7,8)$, $(8,9)$, $(9,10)$, $(12,13)$ the maximal incidence counts are $9$, $9$, $11$, $17$, far below the threshold
$n+n'$. (ii) The bounds are per view: some views of the same family at $(7,8)$ over $\F_{1289}$ are located by exactly
$m/2=28$ points, so no universal bound ``$\ge m/2+1$'' holds. (iii) They are located by the locator $W$ of
Theorem~\ref{kl:thm:KL}, since $\tau=0$ at these sizes. (iv) They bound from below the constant of any locator lemma
stated for all nodes (Definition~\ref{kl:def:node}) at these sizes ($a\ge2.42$ at $L_2=8$, $a\ge2.91$ at $L_2=16$); the
constant of Theorem~\ref{kl:thm:KL} ($a\approx2\cdot10^{26}$) is consistent with this.

\begin{proposition}[No subgroup-aligned child layout]\label{neg:prop:DC}
In the child layout of our construction (Lemma~\ref{main:lem:CS}(d)) the children of a node include the row codes on
$D_b:=F_b\cup A_y$ (and $F_b\cup A_r$), $b\in B$, with the same $A_y$ for every $b$. Let $k\mid n$ with $k<n$, put
$k'=n/k$, and let $H$ be a subgroup of $\F^\times$. Then for two rows $b\ne b'$:
(i) $D_b$ and $D_{b'}$ are not both cosets of $H$;
(ii) if $A_y$ is a union of $7$ cosets of $\mu_n$ lying in distinct $\mu_m$-cosets, then $\{x^k:x\in D_b\}$ and
$\{x^k:x\in D_{b'}\}$ are not both single cosets of $\mu_{8k'}$.
\end{proposition}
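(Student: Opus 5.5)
Both parts will use the same pigeonhole idea. Two cosets of one subgroup that meet are equal. The two sets in question share the nonempty part coming from $A_y$, so they must coincide, and removing that shared part forces $F_b$ and $F_{b'}$ (or their images) to coincide. That contradicts $b\ne b'$. Throughout I use Lemma~\ref{main:lem:CS}: $F_b=x_b\mu_n$, $|F_b|=n$, the fibres of distinct rows are disjoint, and $A_y\cap D=\emptyset$. Here $n$ is the radix $m_1$ of the node.

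For (i), suppose $D_b=cH$ and $D_{b'}=c'H$. Both sets contain $A_y\ne\emptyset$, so the two cosets meet and $D_b=D_{b'}$. Since $A_y$ is disjoint from $D\supseteq F_b\cup F_{b'}$, I get $F_b=D_b\setminus A_y=D_{b'}\setminus A_y=F_{b'}$. But fibres of distinct rows are disjoint and nonempty, a contradiction. The unions in $D_b$ are disjoint by Lemma~\ref{main:lem:CS}(c), so no size bookkeeping is needed.

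For (ii), I first compute the images under $x\mapsto x^k$. Since $\mu_n$ is cyclic and $k\mid n$, $\mu_n^k=\mu_{k'}$. Hence $F_b^k=x_b^k\mu_{k'}$, and each of the seven cosets $e_j\mu_n$ of $A_y$ maps onto $e_j^k\mu_{k'}$. So $\{x^k:x\in D_b\}$ is a union of at most eight $\mu_{k'}$-cosets, and has at most $8k'$ elements. Now suppose it is a single $\mu_{8k'}$-coset (so $\mu_{8k'}\le\F^\times$). That coset has exactly $8k'$ elements, so the eight $\mu_{k'}$-cosets $x_b^k\mu_{k'}$ and $e_j^k\mu_{k'}$ must be pairwise distinct. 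In particular $x_b^k\mu_{k'}$ is disjoint from $A_y^k$. The same holds for $b'$.

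Both images contain $A_y^k\ne\emptyset$, so the two $\mu_{8k'}$-cosets meet and are equal. Removing the common part $A_y^k$ from each, which by the distinctness just shown is disjoint from the fibre part, gives $x_b^k\mu_{k'}=x_{b'}^k\mu_{k'}$. By Lemma~\ref{main:lem:mu} this means $(x_b/x_{b'})^{kk'}=1$, i.e.\ $x_b^n=x_{b'}^n$, so $b=b'$, a contradiction.

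The only delicate point I foresee is this distinctness step: I must not assume that the image of $A_y$ is disjoint from the image of $F_b$. I get it instead from the cardinality of a $\mu_{8k'}$-coset, so the hypothesis that $A_y$ lies in distinct $\mu_m$-cosets is used only to make $A_y$ nonempty and a union of $\mu_n$-cosets. If $\mu_{8k'}\not\le\F^\times$, the statement is vacuous.
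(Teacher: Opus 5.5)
Your proof is correct and follows the paper's argument: two cosets of one subgroup that both contain the nonempty set $A_y$ (in (ii), its image $A_y^{(k)}$) must coincide, and removing that set forces $F_b=F_{b'}$ (in (ii), $x_b^k\mu_{k'}=x_{b'}^k\mu_{k'}$, hence $b=x_b^n=x_{b'}^n=b'$). The one difference is how you show that the image of $F_b$ is disjoint from $A_y^{(k)}$. The paper derives this from $A$ and $D$ lying in distinct $\mu_m$-cosets, so its (ii) excludes cosets of any common subgroup, not only of $\mu_{8k'}$. You derive it by counting, which uses $|\mu_{8k'}|=8k'$, i.e.\ $8k'\mid q-1$ (true in the construction by the field rule of Definition~\ref{main:def:fieldrule}), but in exchange you do not need the $\mu_m$-separation hypothesis.
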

\begin{proof}
(i) Two cosets of $H$ that share the nonempty set $A_y$ are equal, so $F_b=D_b\setminus A_y=D_{b'}\setminus A_y=F_{b'}$;
but distinct fibres are disjoint and nonempty. (ii) $A_y^{(k)}:=\{a^k:a\in A_y\}$ is a union of $7$ cosets of
$\mu_{k'}$, pairwise distinct because $(e/e')^k\in\mu_{k'}$ forces $e/e'\in\mu_n\subseteq\mu_m$; and $F_b^{(k)}$ is one
coset of $\mu_{k'}$, disjoint from $A_y^{(k)}$ because $x^k=a^k$ would put $x\in D$ and $a\in A$ in one $\mu_m$-coset.
If both images are cosets of $\mu_{8k'}$, they share $A_y^{(k)}\ne\emptyset$ and so are equal, and
$F_b^{(k)}=F_{b'}^{(k)}$ is the single remaining $\mu_{k'}$-coset. Then $x^k=x'^k$ for some $x\in F_b$, $x'\in F_{b'}$,
so $x/x'\in\mu_k\subseteq\mu_n$ and $b=x^n=x'^n=b'$, a contradiction.
\end{proof}

Consequently, refinements of the incidence bound that require subgroup-aligned grids, and the approach through
\cite[Thm.~4.11]{BV26} on aligned grids, apply at the root only. This is one reason the locator lemma had to be proved
for arbitrary coset designs; Theorem~\ref{kl:thm:KL} does so without any alignment.

\subsection{Which hypotheses are necessary}\label{neg:table}

Categories: \textbf{P} = proved necessary (dropping it alone admits views of unbounded ratio charge$/|V|$);
\textbf{I} = per-instance certificates only; \textbf{N} = proved \emph{unnecessary}, by the locator lemma;
\textbf{M} = used by the proof, no negative known.

\smallskip
{\small\noindent
\begin{tabularx}{\textwidth}{@{}>{\raggedright\arraybackslash}p{0.27\textwidth}c>{\raggedright\arraybackslash}X@{}}
\toprule
hypothesis (of the locator lemma, or one that might be added) & cat. & evidence \\
\midrule
two masked lifts, not one & P & a single lift leaks at ratio $\ge u/(D_X+D_Y)=\Omega(\sqrt s)$ under point charging on the
standard layout (Theorem~\ref{neg:thm:T}, Proposition~\ref{neg:prop:VC}) \\
lift columns off the domain and not shared ($A\cap D=\emptyset$, $A_y\cap A_r=\emptyset$) & P & a shared column off or in
the domain forces charge $u$ at ratio about $n/2$--$n/4$ (Proposition~\ref{neg:prop:SC},
Corollary~\ref{neg:cor:T2}); for equal radices, two reads at one position of a shared column off the domain force
charge $n$, ratio $n/2$ (Proposition~\ref{neg:prop:SM}) \\
$u=m$ (radix divides the degree) & P & for degrees $u\equiv1$ modulo the radix, one full column per side forces charge
$u$, ratio about $n/2$ (Corollary~\ref{neg:cor:TB1}(b)); transversal analogue Corollary~\ref{neg:cor:TB1p} \\
column set a union of boundedly many cosets & I\,/\,P & I for equal radices, P for coprime radices. Equal radices on
free column sets: certificates of ratio about $u/(2(n+n'))$ at three sizes (Example~\ref{neg:prop:free}); coprime
radices: unbounded or free column sets fail for every constant (Theorem~\ref{neg:thm:F}, Corollary~\ref{neg:cor:Ffree},
Theorem~\ref{neg:thm:R}) \\
two lifts of coprime radices & N & not needed: the lemma uses equal radices (Theorem~\ref{kl:thm:KL}) \\
a design law (incidence restrictions) & N & not needed (Theorem~\ref{kl:thm:KL}); conjecturally, any ``no global
leak'' form would need laws of linear degree (Proposition~\ref{neg:prop:SATSR}; Example~\ref{neg:prop:deg}, computed for
$k\le3$ only) \\
a lower bound on the node size $m_1$ & N & not needed: the locator lemma has no size threshold; small nodes are absorbed
by the coset branch (Example~\ref{neg:prop:scale}) \\
$m<p$; at most $L_2$ row cosets & M & used by the incidence input and the constant; no negative known \\
\bottomrule
\end{tabularx}
}

\subsection{Summary and open questions}\label{neg:place}

Every departure from the hypotheses of the locator lemma that is proved to leak expensively lands at a
$\Theta(\sqrt s)$ ratio on balanced splits: a single lift on the standard layout (ratio at
least $u/(D_X+D_Y)$), a shared column (about $n_1/4$ in
the domain, $n_1/2$ off it), degrees not divisible by the radix (about $n/2$ for equal radices, about $n_{\min}/3$ for
transversal ones), and unbounded or free column sets for coprime radices (about $n_1/6$ down to $n_2/60$). This
matches the $\sqrt n$-locality of the sheaf simulation of \cite[Thm.~7.18]{ZOGS26} and the remark of \cite{GOS25}
that quasilinear zero-knowledge PCPs need new ideas. It also locates where one had to act: the joint privacy of two
masked lifts of the same radix on a coset-union grid, against views that do have global leaks
(Proposition~\ref{neg:prop:SATSR}). The locator lemma (Theorem~\ref{kl:thm:KL}) acts there with one polynomial
through the whole view, a domain-column refinement of the divisibility argument of \cite[Lemma~8, Thm.~9]{PS94} (in
the version of \cite[\S4.4]{BV26} for multiplicative groups of coprime orders, with the routine extension to unions of cosets), and the incidence bound \cite[Thm.~2]{CZ13}. All results of this appendix
concern one simulation paradigm; none shows that a composed PCP is not zero knowledge.

\paragraph{Open questions.} Section~\ref{sec:intro-limits} lists open questions about the main theorem; those
below concern the simulation paradigm of this appendix. (i) Whether the tangent leak of
Theorem~\ref{neg:thm:T} extends to the complementary
single-lift layout, that is, whether the proof there determines $n_1$ values of a row off its fibre. (ii) A general
lower bound for free column sets in the same-radix configuration, beyond the instances of
Example~\ref{neg:prop:free}. (iii) Whether the variant of the paradigm in which samplers may also read Hermite
data (derivatives) at domain points has a square-root barrier.

\endgroup

\section*{AI methodology}
\addcontentsline{toc}{section}{AI methodology}
We used Anthropic's Claude extensively, under the direction of the authors, to assist with developing the constructions
and the proofs, with checking the proofs, with the computations reported in Appendix~\ref{sec:negatives}, and with
drafting this paper. Its use
materially affected every section of the paper, including both appendices. Prior work is credited at each use
(Section~\ref{sec:related}).
The authors have also checked all proofs themselves. 

\begingroup
\providecommand{\arxivid}[1]{arXiv:~\texttt{\detokenize{#1}}}

\clearpage
\phantomsection

\endgroup

\end{document}